\documentclass[11pt]{report}

\usepackage{graphicx}
\graphicspath{ {images/} }

\usepackage[a4paper,height=190mm,width=142mm]{geometry}

\usepackage[utf8]{inputenc}
\usepackage{amsthm}
\usepackage{amssymb}
\usepackage{latexsym}
\usepackage{amsmath}
\usepackage{amsfonts}
\usepackage{mathrsfs}
\usepackage{amstext}

\usepackage{comment}

\usepackage{enumitem}
\setitemize{noitemsep,topsep=2pt,parsep=2pt,partopsep=2pt}
\usepackage{mathtools}

\DeclarePairedDelimiter{\floor}{\lfloor}{\rfloor}
\mathtoolsset{showonlyrefs}
\usepackage{stmaryrd}
\usepackage{tocloft}
\usepackage{cite}
\usepackage{graphicx}
\usepackage{xcolor}
\usepackage{color,hyperref}
\definecolor{darkblue}{rgb}{0.0,0.0,0.5}
\hypersetup{
  linkcolor  = darkblue,
  citecolor  = darkblue,
  urlcolor   = darkblue,
  colorlinks = true,
}

\renewcommand{\text}{\mathrm}

\usepackage{cancel}
\newcommand{\red}{\color{red}}

\usepackage{tikz}
\usetikzlibrary{calc}
\usepackage{ifthen}
\usepackage{xcolor}
\usepackage[normalem]{ulem}  

\newcommand{\hstrike}[1]{%
  \ifmmode
    \tikz[baseline=(X.base)]{
      \node[
        inner sep=0pt, outer sep=0pt, anchor=base,
        text=red
      ] (X) {\ensuremath{#1}};
      \draw[line width=0.4pt] 
        ($(X.south west)!0.5!(X.north west)$) -- 
        ($(X.south east)!0.5!(X.north east)$);
    }%
  \else
    \textcolor{red}{\sout{#1}}%
  \fi
}

\usepackage{mathtools}  

\usepackage{lipsum}

\newcommand{\Om}{\Omega}
\newcommand{\mcF}{\mathcal{F}}
\newcommand{\hil}{\mathcal{H}} 
\newcommand{\lan}{\langle}
\newcommand{\ran}{\rangle}

\newcommand{\pnew}{p'}

\newcommand{\bN}{\mathbb{N}} 
\newcommand{\bY}{\mathbb{Y}} 
\newcommand{\bE}{\mathbb{E}}
\newcommand{\bC}{\mathbb{C}}
\newcommand{\bZ}{\mathbb{Z}}
\newcommand{\bR}{\mathbb{R}}

\newcommand{\bS}{\mathbb{S}}
\newcommand{\bP}{\mathbb{P}}

\newcommand{\cA}{\mathcal{A}}
\newcommand{\cS}{\mathcal{S}}
\newcommand{\cP}{\mathcal{P}}
\newcommand{\cD}{\mathcal{D}}
\newcommand{\cR}{\mathcal{R}}
\newcommand{\cT}{\mathcal{T}}
\newcommand{\cF}{\mathcal{F}}
\newcommand{\cN}{\mathcal{N}}
\newcommand{\cK}{\mathcal{K}}
\newcommand{\cM}{\mathcal{M}}
\newcommand{\cX}{\mathcal{X}}
\newcommand{\cY}{\mathcal{Y}}
\newcommand{\cL}{\mathcal{L}}
\newcommand{\cH}{\mathcal{H}}
\newcommand{\cU}{\mathcal{U}}
\newcommand{\cZ}{\mathcal{Z}}

\newcommand{\sS}{\mathscr{S}}
\newcommand{\sD}{\mathscr{D}}

\newcommand{\ri}{\mathrm{i}}
\newcommand{\rx}{\mathrm{x}}
\newcommand{\ry}{\mathrm{y}}

\newcommand{\rz}{\mathrm{z}}
\newcommand{\rd}{\mathrm{d}}
\newcommand{\re}{\mathrm{e}}

\newcommand{\rc}{\mathrm{c}}
\newcommand{\rb}{\mathrm{b}}

\newcommand{\supp}{\mathrm{supp}}
\makeatletter
\newcommand*\botimes{{\mathpalette\botimes@{1.5}}}
\newcommand*\botimes@[2]{\mathbin{\vcenter{\hbox{\scalebox{#2}{\hspace{0.0mm}$\m@th#1\otimes$\hspace{0.0mm}}}}}}
\newcommand*\Cdot{{\mathpalette\Cdot@{.6}}}
\newcommand*\Cdot@[2]{\mathbin{\vcenter{\hbox{\scalebox{#2}{\hspace{0.5mm}$\m@th#1\bullet$\hspace{0.5mm}}}}}}
\makeatother

\usepackage[T1]{fontenc}
\usepackage[utf8]{inputenc}

\newcommand{\hf}{\hat{f}}
\newcommand{\hg}{\hat{g}}

\newcommand{\mb}{\mathbb}

\newtheorem{thm}{Theorem}[section]
\newtheorem{lem}[thm]{Lemma}
\newtheorem{prop}[thm]{Proposition}

\newtheorem{rem}[thm]{Remark}

\theoremstyle{dfn}
\newtheorem{dfn}[thm]{Definition}

\newtheorem{Remark}[thm]{Remark}
\newtheorem{Example}[thm]{Example}

\numberwithin{equation}{section}
\numberwithin{thm}{section}
\newcommand{\hK}{\hat{K}_{R,N}}
\newcommand{\ti}{\tilde}
\newcommand{\beqa}{\begin{eqnarray}}   
\newcommand{\eeqa}{\end{eqnarray}} 
\newcommand{\om}{\omega}
\newcommand{\mrm}{\mathrm}
\newcommand{\De}{\Delta}
\newcommand{\2}{\!\!\!&}
\newcommand{\one}{\mathmybb{1}}

\usepackage{chngcntr}

\usepackage{bbm}

\DeclareMathAlphabet{\mathmybb}{U}{bbold}{m}{n}
\newcommand{\1}{\mathmybb{1}}

\usepackage{tikz-cd}
\usepackage{caption}
\usepackage{subcaption}
\usepackage{titlesec}

\titleformat{\chapter}
  {\Large\bfseries} 
  {}                
  {0pt}             
  {\huge}           
\usepackage[nottoc]{tocbibind}
\usepackage{afterpage}

\newcommand\blankpage{%
    \null
    \thispagestyle{empty}%
    \addtocounter{page}{-1}%
    \newpage}

\usepackage{hyperref}
    
\begin{document}

\thispagestyle{empty}

	\begin{center}

		\large
		\textbf{Adam Mickiewicz University, Poznań}

		\large
		Faculty of Mathematics and Computer Science \\
		Poland
		
		\vspace{0.2cm}
		
		\includegraphics[width=0.25\textwidth]{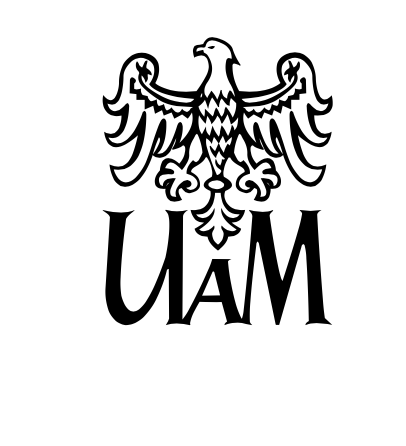}
		
		\vspace{0.3cm}
		
		\Large
		{Azam Jahandideh}
		
		\vspace{0.9cm}

		\Large
	\textbf{Stochastic quantization and Osterwalder--Schrader axioms for quantum field theory models}

		\vspace{0.4cm}
		{\normalsize A doctoral dissertation in natural sciences\\
		in the area of mathematics\\}
		\vspace{0.4cm}
		\normalsize {Advisor: prof. UAM dr hab.}~\Large {Wojciech Dybalski}\\

		\vspace{1.5cm}

	\Large
	\textbf{Kwantowanie stochastyczne i aksjomaty Osterwaldera--Schradera dla modeli kwantowej teorii pola}
		
		\vspace{0.4cm}

		\normalsize
		
		Rozprawa doktorska w naukach ścisłych i przyrodniczych
		w dyscyplinie matematyka\\
		{Promotor: prof. UAM dr hab. Wojciech Dybalski}

		\vspace{1.2cm}

		\normalsize
		Wydział Matematyki i Informatyki\\
		Uniwersytet im. Adama Mickiewicza w Poznaniu\\
	
		\vspace{0.2cm}
		October~2025
		
	\end{center}

\afterpage{\blankpage}

\chapter*{Acknowledgements}
\addcontentsline{toc}{chapter}{Acknowledgments}
\thispagestyle{empty}
I would like to thank Prof. Wojciech Dybalski and Dr. Pawe\l~Duch for our joint meetings during the completion of~\cite{DDJ}, on which this thesis is built. I would like to thank Wojciech separately for reading my thesis multiple times and for providing feedback, especially for completing the proof of Lemma~\ref{lem:prob_of_density} and for his alternative succinct proof in Appendix~\ref{sec:translation}.
\\[0.3cm]
The financial support provided by the grant ‘Sonata Bis’~$\mathrm{2019/34/E/ST1/00053}$ of the National Science Centre, Poland  between Oct\,2020\,--\,Oct\,2025 is well appreciated.

\thispagestyle{empty}

\newpage
\tableofcontents

\chapter*{Summary}
\addcontentsline{toc}{chapter}{Summary}
The purpose of this thesis is to construct the $P(\Phi)_2$ Euclidean quantum field theory, extending our discussion from~\cite{DDJ}. We first construct the finite volume measure of the $P(\Phi)_2$ model on $\sD'(\bS_R)$, where $\bS_R$ is a 2-sphere of the radius $R\in\bN_+$. To this end, we rely on the Nelson hypercontractivity estimate. Then, we utilize the parabolic stochastic quantization and the energy method with the help of the pull-back map $\jmath_R^*\,:\,\sD'(\bS_R)\to\sS'(\bR^2)$, induced by the stereographic projection $\jmath_R$, to obtain the infinite volume measure on $\sS'(\bR^2)$. To this end, we verify the global existence in time of solutions for the corresponding stochastic partial differential equations with fixed cutoffs, a topic that was only briefly addressed in~\cite{DDJ}.
\\[0.3cm]
The rest of the thesis is devoted to verifying the Osterwalder--Schrader axioms: regularity, reflection positivity and  invariance under translations, rotations, and reflections. Such a combination of axioms, which is crucial for obtaining a non-trivial local relativistic QFT on Minkowski space-time, was not obtained in earlier constructions based on the stochastic quantization, such as~\cite{AK17, AK21,GH21}.

\chapter*{Podsumowanie}
\addcontentsline{toc}{chapter}{Podsumowanie}
Celem tej pracy jest skonstruowanie euklidesowej kwantowej teorii pola  $P(\Phi)_2$, rozszerzając naszą dyskusję z~\cite{DDJ}. Najpierw konstruujemy  miarę na $\sD'(\bS_R)$ dla modelu $P(\Phi)_2$ w skończonej  objętości, gdzie $\bS_R$ jest 2-sferą o promieniu $R\in\bN_+$. W tym celu opieramy się na oszacowaniu  Nelsona dotyczącym półgrup hiperkontraktywnych. Następnie wykorzystujemy paraboliczne kwantowanie stochastyczne  i metodę energetyczną aby za pomocą pull-backu $\jmath_R^*\,:\,\sD'(\bS_R)\to\sS'(\bR^2)$, zadanego przez rzut  stereograficzny $\jmath_R$,  uzyskać miarę w nieskończonej objętości, mianowicie  na $\sS'(\bR^2)$. W tym celu sprawdzamy  istnienie globalnych w czasie rozwiązań dla odpowiednich stochastycznych równań różniczkowych cząstkowych z ustalonymi obcięciami, który to temat został jedynie krótko poruszony w~\cite{DDJ}.
\\[0.3cm]
Pozostała część pracy poświęcona jest weryfikacji aksjomatów Osterwaldera--Schradera: regularności, dodatniości względem odbicia oraz niezmienniczości względem przesunięć, obrotów i odbić. Takiej kombinacji aksjomatów, kluczowej dla uzyskania nietrywialnej lokalnej relatywistycznej kwantowej teorii pola  na czasoprzestrzeni Minkowskiego, nie uzyskano we wcześniejszych konstrukcjach opartych na kwantowaniu stochastycznym, takich jak~\cite{AK17, AK21,GH21}.

\chapter{Introduction and the main result}\label{ch:introduction}
Over the past 60 years several attempts have been made to reconcile the principles of quantum mechanics  with special relativity. The resulting theories are known as quantum field theories (QFTs). In order to make QFTs mathematically precise, using axiomatic setting, several approaches were proposed. The first satisfactory approach was the Wightman formalism on Minkowski spacetime, which assigns 
 to each subset of space-time an operator-valued distribution as a quantum observable~\cite{SW80}. Later, the Haag-Kastler setting was invented, which assigns $C^*$-algebras of observables to open bounded regions of space-time~\cite{Ha96}. Another significant endeavour is the Osterwalder--Schrader framework, which provides conditions
under which Wightman quantum field theory can be constructed from
a Euclidean QFT~\cite{OS75}. It is also the setting of this work.
\\[0.3cm]
To set the scene, we discuss in heuristic terms the Euclidean field theory using the probabilistic approach of~\cite{Le11}. Let $(M,g)$ be a $d-$dimensional Riemannian manifold with the metric tensor $g$, which might enjoy some symmetries, $\Sigma$ be another Riemannian manifold equipped with a volume measure $\sigma$, and let $\Phi(M)$ denote the set consisting of all the mappings  $\phi\,:\,M \to \Sigma$. We aim to define a measure $\mu$ on $\Phi(M)$ representing the desired 
classical field measure such that it respects the symmetries of the underlying manifold $M$. To this end, consider the functional $\cL\,:\,\Phi(M)\times M\to\bR_+\,$, which is called the Lagrangian density and formally assigns to each point $\mathrm{m}\in M$ and each $\phi\in\Phi(M)$ a real number, i.e., $[\cL(\phi)](\mathrm{m})=\mathcal{L}(\phi,\mathrm{m})$. Integrating the Lagrangian density functional over the manifold $M$, one defines the action $\cA\,:\,\Phi(M)\to \bR$ by
$$\cA(\phi):=\int_M\,[\mathcal{L}(\phi)](\mathrm{m})\,\mathrm{vol}(\rd \mathrm{m}),$$
where $\mathrm{vol}(\rd \mathrm{m})$ is the Riemannian volume measure of $M$ determined using the metric $g$. Using the action $\cA$, one defines the measure 
\begin{equation}\label{eq:functional-measure}
\mu(\rd\phi):=\frac{1}{\cZ}\,\exp\left(-\cA(\phi)\right)\nu(\rd\phi),
\end{equation}
on $\Phi(M)$, where $\nu$ is a reference measure on $\Phi(M)$ informally given by $\nu(\rd\phi):=\bigotimes_{\mathrm{m}\in M}\,\sigma(\rd\phi(\mathrm{m}))$. The partition function 
$$\cZ:=\int_{\Phi(M)}\,\mu(\rd\phi)=\int_{\Phi(M)}\,\re^{-\,\cA(\phi)}\,\nu(\rd\phi)$$
denotes the total mass of the measure $\mu$. Note that most of the time such a measure is not well-defined and calculating $\cZ$ for a given theory might not always be feasible. But if the measure~\eqref{eq:functional-measure} can be controlled, its moments give the Schwinger functions of the QFT in question.
To define the moments, we need a multiplication in $\Sigma$. As we are interested in scalar fields, we set $\Sigma=\mathbb{R}$ in the following.
\\[0.3cm]
\noindent
Parabolic stochastic quantization for a Euclidean scalar field was introduced by Parisi and Wu~\cite{parisi1981}. It is an approach to control mathematically the measure~\eqref{eq:functional-measure} using methods from the theory of stochastic partial differential equations (SPDEs). In our context of fields defined on a Riemannian manifold $M$ it  works as follows:  we start with the field $\phi(\mathrm{m})$, $\mathrm{m}\in M$, introduced above. Then, we define its counterpart, $\varPhi(t,\mathrm{m}),~(t,\mathrm{m})\in[0, T]\times M$, by introducing a fictitious time variable $t\in[0,T]$ such that $\varPhi(0, \mathrm{m}) = \phi(\mathrm{m})$. The random field $\varPhi(t,\mathrm{m})$ is coupled to an irregular Gaussian space-time white noise $\xi(t,\mathrm{m})$ whose typical realizations are distributions of negative regularity~\cite[Thm.~2.7]{CW17}. Next, we shall demand that the time evolution of $\varPhi(t,\mathrm{m})$ obeys a stochastic differential equation of Langevin type\footnote{There is a choice involved at this stage. Instead of the Langevin dynamics, one can choose the Fokker-Planck dynamics, the elliptic stochastic quantization, canonical stochastic quantization or the variational method~\cite{Gub21,DOR}. }
\begin{equation}\label{eq:Langevin}
\frac{\partial\,\varPhi(t,\mathrm{m})}{\partial t}=-\frac{1}{2}\,
 \frac{\delta \cA}{\delta \varPhi }\,(\varPhi(t,\Cdot))(\mathrm{m})
 +\xi(t,\mathrm{m}),
\end{equation}
where $\frac{\delta \cA}{\delta \varPhi}$ denotes the functional derivative of the action. Even if the action $\mathcal{A}$ describes the free field, the solution $\Phi$ inherits the distributional character of the white noise. If, subsequently, interaction is switched on, the functional derivative above becomes non-linear. Thus, evaluation  on the distribution $\varPhi$ is ill-defined, which is a manifestation of
the ultraviolet (UV) problem. It requires a modification of the equation, which is called renormalization. The simplest case, which suffices for polynomial interactions in $d=2$,  is the Wick ordering of the action.
 Higher dimensions and more complicated interactions require more advanced methods such as the theory of regularity structures~\cite{Hai14}, paracontrolled calculus~\cite{GH18}, and renormalization group techniques~\cite{Ku16, Du21}. In these particular
 frameworks it is possible to classify the equations~\eqref{eq:Langevin} into sub-critical, critical and super-critical, depending on their UV behaviour. This corresponds to the physicist's terminology of super-renormalizable, (just-)renormalizable and non-renormalizable actions, dictated by their behaviour under the Wilsonian block-averaging~\cite{Go}. In the renormalizable cases these actions move away from the massless free field action (i.e., the Gaussian fixed point) and give rise to an interacting effective  action at macroscopic scales. This mechanism is particularly robust in the super-renormalizable case, in which the coupling constants have strictly negative dimensions (in the units of length). In the non-renormalizable case, the action moves  towards the Gaussian fixed point which leads to trivialization.  
\\[0.3cm]
\noindent
If $\varPhi$ evolves according to equation~\eqref{eq:Langevin}, its law evolves according to the corresponding Fokker--Planck equation and is expected to approach the measure~\eqref{eq:functional-measure} as $t\to\infty$. Moreover, this measure is expected to be invariant under the dynamics, and hence to be the law of a stationary solution of~\eqref{eq:Langevin}. Physically, equation~\eqref{eq:Langevin} describes the stochastic dynamics of a statistical-mechanical system whose equilibrium distribution is determined by the Euclidean action~$\mathcal{A}$.
\\[0.3cm]
\noindent
Once the measure $\mu$ is constructed, one moves on to the study of its properties. In the case of the Euclidean spacetime $M=\mathbb{R}^d$, a convenient set of conditions, which guarantee the existence of an underlying Minkowskian QFT, are the Osterwalder--Schrader axioms. They concern an abstract family of distributions $S_m\in \sS'(\mathbb{R}^d)$, $m\in \mathbb{N}_0$, which are interpreted as the moments of the measure $\mu$:
\begin{equation}
S_m(f_1\otimes\ldots \otimes f_m)=\int_{\sS'(\bR^d)} \phi(f_1)\ldots \phi(f_m)\,\mu(\rd\phi), \label{Schwinger-functions-measure}
\end{equation}
$f_1,\ldots, f_m\in \sS(\bR^d)$, i.e., the Schwinger functions. The axioms encode their physically expected properties and add sufficient technical input for  
the subsequent reconstruction of a Wightman QFT on Minkowski spacetime.

\begin{dfn}\label{def:OS_axioms} (The Osterwalder-Schrader axioms)~\cite{OS75}. 
\begin{itemize}

\item[$\mathrm{OS0}$](Regularity). It holds that $S_0=1$. There is a Schwartz norm $\|\Cdot\|_{\mathrm{s}}$ on $\sS'(\mathbb{R}^d)$ and $K,\beta>0$ such that for all $m\in \bN$ 
and $f_1, \ldots, f_m\in \sS(\bR^d)$
\begin{equation}
|S_m(f_1 \otimes \ldots \otimes f_m)|\leq K^m(m!)^{\beta} \prod_{i=1}^m \|f_i\|_{\mathrm{s}}.
\end{equation}

\item[$\mathrm{OS1}$](Euclidean invariance). For each $m\in\bN$  and  all $f_1, \ldots, f_m \in\sS(\bR^{d})$, 
\begin{equation}
S_m(f_{1,(a,Q)}\otimes \ldots \otimes f_{m,(a,Q)})=S_m(f_1\otimes\ldots \otimes f_m),
\end{equation}
where $f_{j,(a, Q)}(x):=f_j( Q^{-1}(x-a))$, $Q\in SO(d)$, $a\in\bR^{d}$, and $j=1,\cdots,m$.

\item[$\mathrm{OS2}$](Reflection positivity). 
Let $\mathbb{R}^{dm}_+:=\{ (x^{(1)}, \ldots, x^{(m)})\in \mathbb{R}^{dm}\, : x_0^{(j)}>0, j=1, \ldots, m   \}$ and 
\begin{equation}
\sS(\mathbb{R}_+^{dm}; \mathbb{C}):=\{\, f\in \sS(\mathbb{R}^{dm}; \mathbb{C})\,:\, \mathrm{supp}(f)\subset  \mathbb{R}^{dm}_+ \}.
\end{equation}
For all sequences $f^{(m)}\in \sS(\mathbb{R}^{dm}_+; \mathbb{C})$, $m\in \mathbb{N}_0$, with finitely many non-zero elements
\begin{equation}
\sum_{\ell, m\in  \mathbb{N}_0} S_{\ell+m}( \overline{\Theta f^{(\ell)}} \otimes f^{(m)})\geq 0,
\end{equation}
where $(\Theta f^{(\ell)})(x_0^{(1)}, \vec{x}^{(1)}, \ldots, x^{(\ell)}_0, \vec{x}^{(\ell)}):=f^{(\ell)}( -x_0^{(1)}, \vec{x}^{(1)}, \ldots, -x^{(\ell)}_0, \vec{x}^{(\ell)}  )$.

\item[$\mathrm{OS3}$](Symmetry). For all $m \in \bN$, $f_1,\ldots, f_m\in \sS(\bR^d)$ and $\pi$ a permutation of $m$ elements 
\begin{equation}
S_m(f_{\pi(1)}\otimes\ldots \otimes f_{\pi(m)}) =  S_m(f_1\otimes\ldots \otimes f_m).
\end{equation}

\item[$\mathrm{OS4}$](Clustering).  For all sequences $f^{(m)}, g^{(m)}\in \sS(\mathbb{R}^{dm}_+; \mathbb{C})$, $m\in \mathbb{N}_0$, with finitely many non-zero elements
\begin{equation}
\lim_{\lambda \to \infty}\sum_{\ell, m \in  \mathbb{N}_0} \bigg\{ S_{\ell+m}( \overline{\Theta f^{(\ell)}} \otimes g^{(m)}_{(\lambda a,I)} )- 
S_{\ell}( \overline{\Theta f^{(\ell)}} )  S_{m}(  g^{(m)} )  \}=0,
\end{equation}
where $a=(0, \vec{a})$.
\end{itemize}
\end{dfn}
\noindent
There are several, possibly inequivalent, versions of the Osterwalder--Schrader axioms in the literature~\cite{OS75, EE79, GlimmJaffe}. We chose the above variant because the regularity condition 
OS0 is easy to check by methods of stochastic quantization (unlike the variant from the textbook of Glimm and Jaffe~\cite{GlimmJaffe}).  The reflection positivity  axiom OS2 is somewhat stronger than in the work of Osterwalder and Schrader~\cite{OS75} and in this form it was formulated by Eckmann and Epstein in~\cite{EE79} to ensure the existence of time-ordered products on the Minkowski side.   Clearly, the symmetry axiom OS3 is trivially satisfied if the distributions $S_m$ are defined by~\eqref{Schwinger-functions-measure} as moments of a measure.
\\[0.3cm]
\noindent
The axioms are satisfied, in particular, by free field theories, in which case the measure $\mu$, defining the Schwinger functions in 
\eqref{Schwinger-functions-measure} is Gaussian. Therefore, non-Gaussianity
of the measure is generally accepted as a useful criterion for non-triviality of a QFT in question. Due to the necessity of renormalization, mentioned above, non-triviality of a QFT cannot be inferred from the mere presence of interaction terms in the action $\mathcal{A}$. For example,
the $\phi^4$-interaction term in the action gives rise to a trivial theory in $d\geq 4$~\cite{Ai82, Fr82,AD21}.
\\[0.3cm]
\noindent
If the Schwinger functions, defined by \eqref{Schwinger-functions-measure}, satisfy the Osterwalder--Schrader axioms of Def.~\ref{def:OS_axioms}, then their analytic continuation from imaginary to real time $x_{0, j}\to -\ri x_{0,j}$, $j=1,\ldots,m$, is possible and gives rise to the Wightman functions of the underlying QFT on Minkowski spacetime~\cite{OS75}. 
The fact that the Osterwalder--Schrader axioms imply the existence of a corresponding Wightman QFT is a central result of constructive 
QFT and an important motivation to study Euclidean QFT.   
Let us recall, for completeness, the Wightman setting as it links our discussion with quantum theory.
\newcommand{\hphi}{\hat{\phi}}
\\[0.3cm]
\noindent
Let  $M=\bR^d$ be equipped with the Minkowski metric $\eta$ which  is invariant under  the restricted Poincar\'e group $\cP_+^\uparrow$.  Consider a quadruple $\langle H,U, \hphi, \mathrm{Dom}\rangle$, where $H$ is a separable Hilbert space consisting of physical states, $\mathrm{Dom}\subset H$ is a dense subspace, $\cP_+^\uparrow \ni  g  \mapsto U(g)\in  \mathcal{U}(H)$  is a group homomorphism, where $\mathcal{U}(H)$ is the group of unitary operators on $H$~\footnote{The Wigner theorem gives an operator $U(g)$ for each $g\in\cP_+^\uparrow $, which is unique up to phase.} and $\hphi$ is an operator-valued tempered distribution. Furthermore, a unit vector $\Omega\in H$  is called a vacuum state if $U(g)\,\Omega \subset \Omega $, $g\in \cP_+^\uparrow$, and it is called unique if $\Omega$ is the only such vector in $H$ up to multiplication by a phase.

\begin{dfn}(The Wightman axioms)\cite[IX.8]{RSII}\label{def:wightman}
A relativistic real-valued QFT on $\sS'(\bR^d)$ is a quadruple $\langle H,U,\hphi,\mathrm{Dom}\rangle$ satisfying the Wightman axioms:
\begin{itemize}

\item[(A)]
$H$ is a separable Hilbert space with a unique vacuum state $\Omega\in\mathrm{Dom}\subset H$ and $U$ is a strongly continuous unitary representation on $H$ of the restricted Poincar\'e group $\cP_+^\uparrow$ such that $U(a, \Lambda)\mathrm{Dom}\subset \mathrm{Dom}$ for all $(a, \Lambda)\in \cP_+^\uparrow$. 

\item[(B)]
The projection-valued measure $E$ on $\bR^d$ corresponding to $U(a) = \re^{ia^\mu P_\mu}$ has support in the closed forward light cone, i.e., $\overline{V}_+ = \{ (p_0,p_1)\in \bR^d\,|\, p_0\geq |p_1| \}$. 
\item[(C)]

For $f\in\sS(\bR^d)$ it holds $\mathrm{Dom}\subset \mathrm{Dom}(\hphi(f))$, $\hphi(f)\mathrm{Dom}\subset \mathrm{Dom}$ and for any fixed $\varphi\in \mathrm{Dom}$, the map $f\mapsto \hphi(f)\varphi$ is linear. Moreover,
for any  $\varphi_1,\varphi_2\in\mathrm{Dom}$ the mapping
$\sS(\bR^d)\ni f\mapsto \langle \varphi_1,\,\hphi_{\rm}(f)\,\varphi_2\rangle\in\bR$ is a tempered distribution.

\item[(D)](Locality) $[\hphi(f_1),\hphi(f_2)]=0$ if $\supp f_1$ and $\supp f_2$ spacelike separated.
(In the sense of weak commutativity on $\mathrm{Dom}$).

\item[(E)](Covariance) $U(a,\Lambda)\,\hphi(f)\,U(a,\Lambda)^*=\hphi(f_{(a,\Lambda)})$ for all $(a,\Lambda)\in\cP_+^\uparrow$ and all $f\in\sS(\bR^d)$ with $f_{(a,\Lambda)}=f(\Lambda^{-1}(x-a))$.

\item[(F)]
The dense subset $\mathrm{Dom}\subset H$ is constructed out of the vacuum via $\mathrm{Dom}=\mathrm{Span}\{\hphi(f_1),\cdots\hphi(f_m)\,\Omega\,|\, f_1,\cdots, f_m\in \sS(\bR^d),  m\in\bN_0\}$.

\end{itemize}
\end{dfn}
\noindent
The spectral condition in Def.~\ref{def:wightman},~Item~(B), determines the physical masses of the quantum fields via the point spectrum of the mass operator $M:=(P_{\mu} P^{\mu})^{1/2}$. For non-zero isolated masses Haag-Ruelle scattering theory gives 
a canonical construction of the scattering matrix~\cite{Ha96}. If it is different from the identity, we say that the QFT is interacting. This condition for non-triviality is stronger than the non-Gaussianity of the functional measure $\mu$ mentioned above. It was verified for some polynomial interactions for $d=2, 3$, cf.~\cite{GlimmJaffe}. In this thesis we will only consider non-Gaussianity.
\subsection*{Review of the literature}
There are a large number of works devoted to constructions of the $\Phi^4_d$ quantum field theory in different dimensions. Let us briefly describe the most relevant results in $d=2,3$ settings.
\\[0.1cm]
\begin{itemize}

\item[$\blacksquare$]  The finite volume measure of the $\Phi^4_2$ model was first constructed by Nelson~\cite{Nelson}. Due to the fact that finite volume $\Phi^4_2$ measure is absolutely continuous with respect to the Gaussian reference measure, the construction of this model was relatively easy to carry out.

\item[$\blacksquare$] Glimm and Jaffe constructed the infinite volume measure of the $P(\Phi)_2$ model on Minkowski space-time and verified all the Wightman axioms~\cite{GlimmJaffe}. Glimm, Jaffe, and Spencer constructed the infinite volume measure of the $P(\Phi)_2$ model and verified all of the Osterwalder--Schrader axioms including exponential decay of correlation functions for small coupling constant~\cite{GJS74}.

\item[$\blacksquare$] Glimm showed the existence of the $\Phi^4_3$ Hamiltonian in the Minkowski setting~\cite{Glim68}. Glimm and Jaffe proved the positivity of the $\Phi^4_3$ Hamiltonian~\cite{GJ73}. The first complete constructions of the model are given in~\cite{Fe74, FO76, MS76}.

\item[$\blacksquare$] Hairer studied the dynamical $\Phi^4_3$ model using the theory of regularity structures and verified the local in time well-posedness of this model~\cite{Hai14}. Mourrat and Weber investigated its long-time behaviour~\cite{MW17a}. Gubinelli and Hofmanov\'a constructed global in space and time solutions of the $\Phi^4_3$ model~\cite{GH18}. Later, Jagannath and Perkowski gave a simpler argument for the global well-posedness of the $\Phi^4_3$ model~\cite{JP23}.

\item[$\blacksquare$] Shen, Zhua, and Zhu obtained the perturbative expansion for the $k$-point correlation functions of the $\Phi^4_2$ model on the plane using the stochastic quantization. The same approach is claimed to give the perturbative expansion for the $P(\Phi)_2$ model~\cite{SZZ23}.

\item[$\blacksquare$] Albeverio and Kusuoka constructed the measure of the $\Phi^{4}_3$ model of Euclidean quantum field theory using the stochastic quantization and verified rotational invariance, non-Gaussianity and reflection positivity~\cite{AK17,AK21}. Note that based on the observation made in~\cite{BDFLL}, reflection positivity in~\cite{AK21} might not hold true.

\item[$\blacksquare$] Gubinelli and Hofmanov\'a  constructed the measure of the $\Phi^{4}_3$ model of Euclidean quantum field theory via stochastic quantization and verified non-Gaussianity, reflection positivity, translation invariance and exponential integrability. However, rotational invariance and clustering were not addressed. Their construction also includes the dynamical fractional $\Phi^4_3$ measure, which is proved to be reflection positive~\cite{GH21}.

\item[$\blacksquare$] Duch, Dybalski and Jahandideh constructed the Euclidean $P(\Phi)_2$ model using the stochastic quantization and verified the rotational and translational invariance, integrability, non-Gaussianity, and reflection positivity~\cite{DDJ}. This thesis is based on \cite{DDJ}.

\item[$\blacksquare$]
Duch, Gubinelli, Rinaldi constructed the measure of the fractional $\Phi^{4}_3$ model of Euclidean quantum field theory via stochastic quantization and verified non-Gaussianity, reflection positivity, translation invariance and exponential integrability~\cite{DGR23}.

\item[$\blacksquare$]
Bailleul, Dang, Ferdinand, and T$\hat{o}$ constructed the $\Phi^4_3$ measure on an arbitrary 3-dimensional compact Riemannian manifold without boundary using stochastic quantization~\cite{Bai23,BDFT1,BDFT2}. To this end, they used the technique introduced by Jagannath–Perkowski in~\cite{JP23} to develop a new Cole-Hopf
transform. In particular, the used counter-terms in their construction do not depend on the metric of the underlying Riemannian manifold.

\item[$\blacksquare$]
Bauerschmidt, Dagallier, Weber proved uniqueness of the invariant measure of the $\Phi^4_2\,$ SPDE up to the critical temperature using the Holley–Stroock approach, which was originally established to verify the uniqueness of invariant measures of Glauber dynamics of lattice spin systems~\cite{BDH25}.

\item[$\blacksquare$]
Duch, Hairer, Yi, and Zhao characterised the infinite-volume $\Phi^4_3$ measure at high temperature (small
coupling) as the unique
invariant measure of the associated stochastic equation, and proved that it satisfies all Osterwalder--Schrader axioms, including invariance under translations, rotations, and reflections,
as well as exponential decay of correlations~\cite{DHYZ25}.

\end{itemize}

\subsection*{$P(\Phi)_2$ model and the main result}
Let us adapt the above setting for a class of polynomial interactions known as the $P(\Phi)_2$ models, which constitute some of the simplest examples of interacting QFTs. Fix $n\in2\bN_+$, $n\geq 4$, and a real polynomial $
 P(\tau) = \sum_{m=0}^n\,a_m \tau^m,\quad\tau\in\bR$, with $a_0,\ldots,a_{n-1}\in\bR$ and $a_n=1/n\,.$
Observe that such a polynomial is bounded from below. Recall the setting introduced in the beginning of this chapter:~Let $M=\bS_R$ be the two-sphere of radius $R\in\bN_+$, equipped with the metric tensor $g_R$ induced by the Euclidean metric on $\bR^3$, and with the symmetry group $O(3)$. Moreover, $\Sigma=\bR$, replace $\Phi(M)$ by $\sD'(\bS_R)$, and $\mathrm{vol}(\rd \mathrm{m})=\rho_R(\rd\rx)$, where $\rho_R$ is invariant under the action of $SO(3)$. Consider the probability measure
\begin{equation}\label{eq:measure_mu_R}
 \mu_{R}(\rd\phi):=\frac{1}{\cZ_R}\exp\big(-\,\int_{\bS_R}\,\lambda\,:P(\phi(\rx)):\,\rho_R(\rd\rx)\big) \,\nu_{R}(\rd\phi),
\end{equation}
where $\nu_{R}:= \cN(0,G_{R})$ indicates the free field Gaussian measure with covariance 
$G_{R}:=(1-\Delta_R)^{-1}$, $(-\Delta_R)$ is the Laplace-Beltrami operator, $\lambda\in(0,\infty)$ is the coupling constant, $:\Cdot:$ refers to the Wick renormalization with respect to the measure $\nu_R$ and $\cZ_R\in (0,\infty)$ is the partition function. Observe that we used the Gaussian measure $\nu_R$ as the reference measure on $\sD'(\bS_R)$ due to the fact that the Lebesgue measure does not exist in infinite dimensional settings. Note that $(\sD'(\bS_R),\mathrm{Borel}(\sD'(\bS_R)), \nu_R)$ forms a probability space. The existence of the measure~\eqref{eq:measure_mu_R} is shown by first introducing an ultraviolet (UV) regularized measure $\mu_{R,N}$ and then controlling the 
limit $N\to \infty$ using the Nelson argument, cf. Ch.~\ref{ch:finite_volume}. Such simple treatment of the UV problem is special to polynomial interactions in $d=2$ and reflects the fact that the perturbed measure~\eqref{eq:measure_mu_R} is absolutely continuous 
with respect to $\nu_R$.
\\[0.3cm]
\noindent
On account of Eq.~\eqref{eq:Langevin}, one writes the corresponding parabolic SPDE associated to the measure~\eqref{eq:measure_mu_R}
\begin{equation}\label{eq:sq_pphi}
(\partial_t  + \frac{1}{2}Q_R)\,\varPhi_{R}(t,\rx)
=- \frac{1}{2}\lambda\,:P'(\varPhi_{R}(t,\rx)):+\,\xi(t,\rx),
\end{equation}
where $(t,\rx)\in \bR_+\times \bS_R$,  $P'(\tau) := \partial_{\tau}P(\tau)$ and $Q_R:=(1-\Delta_R)$. For $R\in\bN_+$ let $\jmath_R: \mathbb{R}^2 \to \bS_R \backslash \{(0,0, R)\}$ be the stereographic projection, and let
$$\jmath_R^*\,:\,\sD'(\bS_R)\to\sS'(\bR^2)$$ denote the corresponding pull-back of distributions. We denote by $\jmath_R^*\sharp\mu_R$ the measure on $\sS'(\bR^2)$, obtained by the push-forward of $\mu_R$ with $\jmath_R^*$. We shall make use of the SPDE~\eqref{eq:sq_pphi} and the intertwining property of the map $\jmath_R^*$ in order to prove the existence of limit points of the measures $(\jmath_R^*\sharp\mu_R)_{R\in \mathbb{R}_+}$ on $\sS'(\bR^2)$ and to verify that such limiting measures satisfy the Osterwalder-Schrader axioms. The following theorem is our main result.


\begin{thm}\label{thm:main-theorem}
The sequence of measures $(\jmath_R^*\sharp\mu_R)_{R\in\bN_+}$ on $\sS'(\bR^2)$ has a weakly convergent subsequence. Every accumulation point $\mu$ of $(\jmath_R^*\sharp\mu_R)_{R\in\bN_+}$ is invariant under the Euclidean symmetries of the plane and reflection positive. Moreover, there exists a ball $B\subset\sS(\bR^2)$ with respect to some Schwartz semi-norm centered at the origin such that for all $f\in B$, it holds that
\begin{equation}\label{eq:main_thm_bound_intr}
 \int\exp(\phi(f)^n)\,\mu(\rd\phi) 
 \leq 2\,.
\end{equation}
\end{thm}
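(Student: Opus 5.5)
The proof splits into three parts: tightness with the uniform bound~\eqref{eq:main_thm_bound_intr}, reflection positivity, and Euclidean invariance.

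\textbf{Tightness and the exponential moment bound.} For each $R$ the measure $\mu_R$ on $\sD'(\bS_R)$ is constructed via the Nelson argument as the $N\to\infty$ limit of $\mu_{R,N}$. I will realize $\mu_{R,N}$ as the law of a stationary solution of the regularized version of the SPDE~\eqref{eq:sq_pphi}, which requires global existence in time for fixed cutoffs. I then apply the Da Prato--Debussche decomposition $\varPhi_{R,N}=Z_{R,N}+\Psi_{R,N}$. Here $Z_{R,N}$ is the stationary Ornstein--Uhlenbeck process solving $(\partial_t+\frac12 Q_R)Z=\xi$, and $\Psi_{R,N}$ solves a random PDE whose nonlinearity is expressed through Wick powers $:Z^k:$. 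The next step is to pull back by $\jmath_R^*$ and use its intertwining property. The Laplace--Beltrami operator becomes a conformally weighted Laplacian on $\bR^2$, with conformal factor $\Omega_R(x)=(1+|x|^2/4R^2)^{-2}$ converging to $1$ locally uniformly.

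The energy method then tests the equation for $\Psi$ against $w^2\Psi^{n-1}$ for a polynomially decaying weight $w$ on $\bR^2$. The coercive term $\int w^2\Omega_R\,\Psi^{2n-2}$ absorbs all cross terms $:Z^k:\Psi^{n-1-k}$ via Young's inequality. This gives bounds on weighted Besov norms of $\jmath_R^*\Psi$, uniform in $R$ and $N$, and stationarity converts them into moment bounds. Combining these with Gaussian hypercontractivity for the Wick powers of $Z$ yields
\[
\sup_{R,N}\int \|\phi\|^{p}_{B^{-\epsilon}_{p,p}(w)}\,\jmath_R^*\sharp\mu_{R,N}(\rd\phi)\le C^p (p!)^{\beta}
\]
with $\beta<1/n$ after tracking the $p$-dependence. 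Tightness on $\sS'(\bR^2)$ then follows from compact embeddings of weighted Besov spaces. Summing the exponential series gives~\eqref{eq:main_thm_bound_intr} for $f$ in a sufficiently small Schwartz ball; the bound passes to limit points by Fatou and lower semicontinuity.

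\textbf{Main obstacle.} The $R$-uniformity is the hardest step. Near the image of the north pole the conformal factor degenerates, and the renormalization constant of $:\cdot:$ on $\bS_R$ differs from the flat one. I would handle the first point by choosing $w$ decaying fast enough relative to $\Omega_R$ so that the coercive term still dominates. For the second, I would show that the Wick constants differ only by a smooth bounded function, which is absorbed into lower-order polynomial terms.

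\textbf{Reflection positivity and invariance.} Reflection positivity holds for each $\mu_R$ with respect to the reflection of $\bS_R$ through a great circle. This uses the Markov property of $\nu_R$ together with the locality of the interaction, via the standard Glimm--Jaffe argument at fixed cutoff, where the cutoff must commute with the reflection, for instance a spectral cutoff of $\Delta_R$. The reflection conjugates under $\jmath_R$ to $x_0\mapsto -x_0$, so OS2 is preserved under the weak limit: the test functionals are bounded continuous after truncation, and the moment bounds control the tails.

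Rotations about the origin are exact symmetries for every $R$. Translations are not, since they correspond to Möbius-type maps of the sphere. I would use instead that the rotation of $\bS_R$ moving the south pole to $\jmath_R(a)$ acts, under the chart, as a conformal map that converges to the translation by $a$ as $R\to\infty$, with conformal factor tending to $1$ in $C^\infty_{\mathrm{loc}}$. Since $\mu_R$ is $SO(3)$-invariant, $\int F(\phi(f_{(a,I)}))\,\jmath_R^*\sharp\mu_R$ differs from $\int F(\phi(f))\,\jmath_R^*\sharp\mu_R$ by a quantity controlled by the uniform Besov moments times a Schwartz norm that tends to $0$. Passing to the limit along the subsequence gives translation invariance; reflections are handled the same way.
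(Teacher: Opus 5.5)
Your tightness and Euclidean-invariance arguments follow the paper's route in outline: stationary solutions at fixed cutoff, Da Prato--Debussche, a weighted energy estimate for $\jmath_R^*\varPsi$ uniform in $R,N$, and rotations by angle $\alpha/R$ whose stereographic images converge to translations. The exponential bound, however, cannot be obtained the way you propose. Any moment bound derived from $\varPhi=Z+\varPsi$ by Minkowski's inequality contains $\bE\|Z\|^p$. Since $Z(t)$ has the Gaussian law $\nu_{R,N}$, already $\bE|Z(h)|^p=(p-1)!!\,\sigma_h^p$ (for even $p$) grows like $(p!)^{1/2}$. Worse, the energy estimate controls $\varPsi$ only through powers of norms of $:Z^k:$, whose $p$-th moments grow like $(p!)^{k/2}$; Nelson's estimate is sharp in a chaos of order $k$. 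So tracking the $p$-dependence gives at best $\beta\ge 1/2>1/n$, i.e.\ Gaussian-type integrability, and never $\exp(\phi(f)^n)$. The super-Gaussian tail of $\phi(f)$ comes from a cancellation between $Z(f)$ and $\varPsi(f)$ that a triangle inequality cannot see.

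The missing idea is the Hairer--Steele tilt used in the paper. The measure $\mu^g_{R,N}\propto\exp(\phi(g)^n/n)\,\mu_{R,N}$ is again invariant for the SPDE, now with the extra gradient drift $\varPhi(t,g)^{n-1}g$. The same energy estimate absorbs this drift when $w_R\jmath_R^*g$ lies in a small Schwartz ball, giving $\int\|\jmath_R^*\phi\|^n_{L^{-\kappa}_n(\bR^2,v_L^{1/n})}\,\mu^g_{R,N}(\rd\phi)\le C$ uniformly in $R,N,g$. Jensen's inequality $\int e^{F}\rd\mu\le\exp(\int F\,\rd\mu^{F})$ with $F=\phi(g)^n/n$ then turns this single $n$-th moment into the bound. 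Note also that your test function $w^2\varPsi^{n-1}$ does not combine with the higher-order operator $Q_{R,N}=(1-\Delta_R)(1-\Delta_R/N^2)^2$ needed at fixed cutoff, because $\langle\Delta^2\varPsi,\varPsi^{n-1}\rangle$ has no definite sign. The paper instead tests with $v_L\varPsi$ and proves weighted coercivity of $(-w_R^{-1}\Delta)^k$, $k\le 3$ (Lemma~\ref{lem:wieght_plane_laplacian}).

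\textbf{Reflection positivity.} Commuting with $\Theta_R$ is necessary but not sufficient, and a spectral cutoff of $\Delta_R$ is precisely the choice that fails. The regularized Gaussian measure $\cN(0,K_{R,N}G_RK_{R,N})$ is not reflection positive (cf.\ \cite{ABRSS18,BDFLL}). Moreover, the kernel of $K_{R,N}$ has full support, so neither $F(K_{R,N}\phi)$ for $F\in\cF_R^+$ nor the hemisphere interaction $\int_{\bS_R^+}:P(K_{R,N}\phi):\rho_R(\rd\rx)$ is a functional of the field on $\bS_R^+$. Hence the Markov property of $\nu_R$ cannot be used at fixed cutoff.

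What is needed is a cutoff of finite range, e.g.\ $\hat K_{R,N}(\rx,\ry)=N^2h(N d_R(\rx,\ry))$, supported in geodesic distance $<1/N$. In addition, the strip $\{|\rx_1|\le 1/N\}$ must be deleted from the interaction. Then $F(\hat K_{R,N}\phi)\exp(-\tilde Y^+_{R,N})$ with $F\in\cF^+_{R,N}$ is an $L_2(\nu_R)$-limit of functionals in $\cF^+_R$, and reflection positivity of $\nu_R$ applies. One must then show that this second regularization has the same limit $\mu_R$: compare $\hat K_{R,N}$ with $K_{R,N}$ as in Lemma~\ref{lem:K_hat}, and bound the strip contribution by $O(N^{-1})$. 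With that repair, your passage to the weak limit and from Glimm--Jaffe to OS positivity goes through.
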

\begin{Remark}
By the above theorem, $\mu$ satisfies all the Osterwalder--Schrader axioms of Def.~\ref{def:OS_axioms}, possibly with the exception of the cluster property OS4. It is known that the $P(\Phi)_2$ measure on the plane is unique provided $\lambda\in(0,\infty)$ is sufficiently small~\cite{GJS74}. In general, uniqueness does not hold and the model exhibits phase transitions~\cite[Ch.~16]{GJ85}. The cluster property is only expected to hold in pure phases. Our construction of the $P(\Phi)_2$ measure does not need any smallness assumption on $\lambda$. However, it does not give any information about the uniqueness of the infinite volume limit or the decay of correlation functions. In what follows we set $\lambda=1$. 
\end{Remark}
\begin{proof}
The existence of a weakly convergent subsequence of $(\jmath_R^*\sharp\mu_R)_{R\in\bN_+}$ follows from tightness and Prokhorov's theorem. The proof of tightness is presented in Sec.~\ref{sec:tightness} and uses parabolic stochastic quantization combined with a PDE energy estimate. The invariance of $\mu$ under the Euclidean symmetries is established in Sec.~\ref{sec:Euclidean} and is based on the fact that for all $R\in\bN_+$, the measure $\mu_R$ is invariant under the action of the orthogonal group $O(3)$. The proof that $\mu$ is reflection positive is given in Sec.~\ref{sec:reflection_posivity} and is based on the fact that for all $R\in\bN_+$, the measure $\mu_R$ is reflection positive. The bound~\eqref{eq:main_thm_bound_intr} is proved in Prop.~\ref{prop:integrability} with the help of the Hairer-Steele argument~\cite{HairerSteele}. 
\end{proof}

\subsection*{Plan of the thesis}
In Ch.~\ref{ch:preliminaries}, we provide the background used throughout the sequel. Sec.~\ref{sec:spaces} is an expanded version of~\cite[App.,A]{DDJ}. Lemmas~\ref{lem:trace}, and~\ref{lem:tight_comapct_embedding} correspond to \cite[Lemmas B.1, B.4]{DDJ}, respectively.
\\[0.3cm]
In Ch.~\ref{ch:finite_volume}, we introduce the regularized $P(\Phi)_2$ measure $\mu_{R,N}$ on $\bS_R$ with a certain UV cutoff $N\in\bN_+$, in the frequency space. We shall write  $\mu_{R,N}$ as a perturbation of the regularized Gaussian measure $\nu_{R,N}$ by the $P(\Phi)_2$ interaction. In Sec.~\ref{sec:existence_UV_limit}, we verify that the measure $\mu_{R,N}$  converges weakly, as $N\to\infty$, to the probability measure $\mu_R$ formally given by Eq.~\eqref{eq:measure_mu_R}. We use the Nelson hypercontractivity estimate for this purpose. We also investigate a certain auxiliary measure $\mu_{R,N}^g$, 
$g\in \sS(\bR^2)$, which coincides with $\mu_{R,N}$ when $g=0$ and will be used in~Ch.~\ref{ch:Osterwalder-Schrader axioms}. Ch.~\ref{ch:finite_volume} extends~\cite[Sec.\,2]{DDJ}.
\\[0.3cm]
In Ch.~\ref{ch:stochastic_quantization}, we study the stochastic quantization equations of the measures $\mu_{R,N}^g$ and $\nu_{R,N}$ using the Da Prato-Debussche trick. We show the local-in-time and global-in-time existence of solutions with fixed $R,N\in\bN_+$, in~Sec.~\ref{subsec:local_solution_remainder} and in Sec.~\ref{subsec:global_solution_remainder}, respectively. Local existence is established using the Banach fixed point theorem together with an appropriate Sobolev embedding, while the global existence is shown with the help of the energy technique. The fact that the measures $\mu_{R,N}^g$ and $\nu_{R,N}$ are invariant for the associated semigroups is the subject of~Sec.~\ref{sec:invariant_measures}. This chapter is largely independent of~\cite{DDJ} as these topics were only briefly addressed in Sec.\,3 of this latter publication.
\\[0.3cm]
In Ch.~\ref{ch:infinite-measure}, we apply the energy technique to prove a certain a priori bound, which is uniform in $R,N\in\bN_+$. This a priori bound will be used in order to verify that the sequence of measures $(\jmath_R^*\sharp \mu^g_{R})_{R\in\bN_+}$ on $\sS'(\bR^2)$ is weakly converging along a subsequence. This chapter expands on~\cite[Secs.\,5,\,6]{DDJ}.
\\[0.3cm]
In Ch.~\ref{ch:Osterwalder-Schrader axioms}, we investigate the Osterwalder--Schrader axioms. Sec.~\ref{sec:regularity} is devoted to the proof of regularity, while Sec.~\ref{sec:reflection_posivity} concerns reflection positivity. In Sec.~\ref{sec:Euclidean}, we use the intertwining property of the map $\jmath_R^*$ to show that an infinite volume $P(\Phi)_2$ measure is invariant under translations, rotations and reflections of the plane. The proof relies on the invariance of the finite volume measure $\mu_R$ on $\sD'(\bS_R)$ under the action of the orthogonal group $O(3)$, combined with the fact that the symmetry groups of $\bS_R$ and $\bR^2$ have the same dimension. This chapter extends~\cite[Secs.\,7,\,8,\,9]{DDJ}.
\\[0.3cm]
Appendix~\ref{sec:bessel} provides some well-known facts on Bessel potentials. In Appendix~\ref{sec:stochastic_estimates}, which extends~\cite[Apps.\,B,\,C]{DDJ}, we prove bounds for the Wick ordered fields, which are essential for proving the local-in-time and the global-in-time existence of the stochastic random fields in Ch.~\ref{ch:stochastic_quantization}, as well as to prove tightness in Ch.~\ref{ch:infinite-measure}. In Appendices~\ref{sec:mild_regularity_time_space},~\ref{sec:proof_wieght_plane_laplacian}, and~\ref{sec:translation} we provide the complete proofs of Lemma~\ref{lem:mild_regularity_time_space}, Lemma~\ref{lem:wieght_plane_laplacian}, and Remark~\ref{rem:translation}. These proofs were largely left for the reader in~\cite{DDJ}.

\chapter{Preliminaries}\label{ch:preliminaries}
In this section, we collect various technical results on a 2-dimensional sphere of radius $R\in[1,\infty)$, i.e., $\bS_R$. We gather some necessary facts on spherical harmonics in Sec.~\ref{sec:spherical_harmonics}. In Sec.~\ref{sec:spaces}, we define function spaces on $\bR^2$ and $\bS_R$ and collect some useful Sobolev embeddings. We summarize all required functional analysis techniques in Sec.~\ref{sec:functional_analysis}. Sec.~\ref{sec:probability} contains a review of all the essential techniques in probability. Sec.~\ref{sec:stochastic_analysis} is meant to provide the necessary background on the stochastic differential equations.

\section{2-dimensional sphere}\label{sec:sphere}
Let $R\in[1,\infty)$. We define the two-dimensional sphere of radius $R$, i.e., $(\bS_R,g_R)$ as a regular surface of the Euclidean space $\bR^3$ via
\begin{equation}
 \bS_R:=\{\rx=(\rx_1,\rx_2,\rx_3)\in\bR^3\,|\, \rx_1^2+\rx_2^2+\rx_3^2=R^2\},
\end{equation}
where $g_R$ denotes for the metric tensor on $\bS_R$ induced from the metric on $\bR^3$. We let $N= (0,0,R)$ be the north pole of $\bS_R$. Note that $\bS_R$ is a Riemannian manifold with the Ricci scalar curvature $2/R^2$~\cite[Eq.~(3.158)]{Ca03}~\footnote{The sectional curvature of an $d$-sphere of radius $R$ is $1/R^2$ and its Ricci scalar curvature is $d(d-1)/R^2$.}. We let $\rho_R(\rd\rx)=\sqrt{\det g_R}\,\rd\rx$ denote the Riemannian volume measure~\cite[Thm.~3.11]{Gri09} or~\cite[Def.~1.74]{Aub98} such that $\int_{\bS_R}\rho_R(\rd\rx) = 4\pi R^2$. Moreover, $\|\Cdot\|_{L_p(\bS_R)}$ denotes the $L_p$-norm on $\bS_R$ with respect to the canonical measure $\rho_R(\rd\rx)$ for all $p\in[1,\infty)$. 
\\[0.3cm]
The symmetry group of a two-dimensional sphere, including reflections, is known as the orthogonal group $O(3)$~\cite[Prop.~18.5.2]{Berger2}. The symmetry group consisting of only rotations is the special orthogonal group $SO(3)$. $O(3)$ and $SO(3)$ act transitively on $\bS_R$\footnote{Namely, for all $\rx,\ry\in\bS_R$ there exists $Q\in SO(3)$ such that $Q\cdot\rx=\ry$~\cite[Def.~1.4.3]{Berger1}.} as groups of isometries and they give rise to groups of unitary operators on $L_2(\bS_R)$. Note that for all $R>0$, the measure $\rho_R(\rd\rx)$ is invariant under the action of $SO(3)$.
\noindent
The transitive action of $SO(3)$ makes $\bS_R$ a homogeneous space\footnote{The set $X$ is called a homogeneous space under $G$ (for the action $\phi$) if $(G,X,\phi)$ is transitive~\cite[Def.~1.5.4]{Berger1}.}, i.e., for all $R>0$ the two-dimensional sphere $\bS_R$ can be identified with the quotient space $SO(3)/SO(2)$~\footnote{
To see this, fix a unit vector in $\bR^{3}$ such that its orbit under the $SO(3)$-action gives rise to the canonical embedding $\bS_1\hookrightarrow \bR^{3}$. Moreover, the subgroup $SO(2)$, embedded inside $SO(3)$, stabilizes the axis formed by this vector. Hence, by the Orbit-Stabilizer theorem we have the identification $\bS_1\equiv SO(3)/SO(2)$.}.
\\[0.3cm]
The tangent space of $\bS_R$ at $\rx\in\bS_R$, i.e., $\mathrm{T}_{\rx}\bS_R$ is the orthogonal complement of $\rx$,
$$\mathrm{T}_{\rx}\bS_R=\rx^{\perp}=\{v\in \bR^3\,|\,\langle \rx,v\rangle =0\},$$
where $\langle \rx,v\rangle=\sum_{i=1}^3\,\rx_i\cdot v_i$ denotes the inner product on $\bR^3$. Note that if $v,w$ form an orthogonal basis of $\mathrm{T}_{\rx}\,\bS_R$, then $\rx,v,w$ form an orthogonal basis of $\bR^{3}$.

\begin{Remark}\label{rem:O(3)_splitting}
For all $n\geq 1$, the determinant on $O(n)$ yields a short exact sequence
$$
\1\to SO(n)\xrightarrow{\imath} O(n)\xrightarrow{\det} \bZ_2\to \1.
$$
Since there exists $\mathcal{R}\in O(n)$ with $\det\mathcal{R}=-1$, this sequence splits. Consequently,
$$O(n)\cong SO(n)\rtimes_{\varphi} \bZ_2,$$
where $\varphi\,:\,\bZ_2\to \mathrm{Aut}(SO(n))$ is defined by $\varphi(1)(Q)=Q$ and $\varphi(-1)(Q)= \mathcal{R}Q \mathcal{R}^{-1}$ for $Q\in SO(n)$.
\end{Remark}

\begin{Remark}\label{rem:sphere_quotient}
Let $N=(0,0,R)$ be the north pole of $\bS_R$.
Observe that 
$$N^{\perp}=\mathrm{T}_{N}\bS_R=\{(x_1,x_2,0)\in\bR^3\}.$$
This implies that the isotropy group of $N$ is isomorphic to $SO(2)$. More precisely, the subgroup of $SO(3)$ fixing $N$ is a copy of $SO(2)$, which stabilizes the $x_1x_2-$plane. In Sec.~\ref{sec:Euclidean} this fact will be used to establish a one-to-one correspondence between the action of the orthogonal group $O(3)$ on $\bS_R$ and the action of the Euclidean group $T(2)\rtimes O(2)$ on $\bR^2$, see Remark~\ref{rem:euclidean_R_two}.
\end{Remark}

\begin{Remark}\label{rem:tangent_lei_algebra}
Let $p\,:\,SO(3)\to SO(3)/SO(2)$ denote the canonical projection from the special orthogonal group $SO(3)$ onto the homogeneous space $SO(3)/SO(2)$, which is smooth. The differential of $p$ at the identity $\re\in SO(3)$ is the linear map
$$\mathrm{T}_{\re}\,p\,:\, so(3)\to \mathrm{T}_{[\re]}(SO(3)/SO(2)),\quad \text{where}\quad [\re]=p(\re).$$
 Note that both $p$ and $\mathrm{T}_{\re}\,p$ are surjective. This implies that $\mathrm{T}_{[\re]}(SO(3)/SO(2))$ is isomorphic to $so(3)/so(2)$. One similarly obtains that $\mathrm{T}_{[g]}(SO(3)/SO(2))$ is also isomorphic to $so(3)/so(2)$ for arbitrary $[g]\in SO(3)/SO(2)$ thanks to the fact that $SO(3)/SO(2)$ is a homogeneous space. 
\end{Remark}

\begin{dfn}\label{def:sphere_distance}
For all $\rx,\ry \in \bS_R$, we set the geodesic distance on $\bS_R$ by
\begin{equation}\label{eq:geode_distance}
\begin{gathered}
\mathrm{dist}(\rx,\ry) := d_R(\rx,\ry)=R\,\theta(\rx,\ry)\,,
\quad
\theta(\rx,\ry)=\arccos \frac{\rx\cdot\ry}{R^2}\in [0,\pi].
\end{gathered}
\end{equation}
In fact, the geodesic distance on the unit sphere corresponds to the arc length of the great circle connecting the points~\cite[p.~82]{Le97}. In particular, one has $0\leq d_R(\rx,\ry)\leq R\,\pi$. Moreover, 
\begin{equation}\label{eq:distance_sphere_1}
\|\rx-\ry\| = \sqrt{2R^2-2R^2\,\cos(\theta(\rx,\ry))}
= 2R\,\sin\frac{d_R(\rx,\ry)}{2R}\,,
\end{equation}
where $\|\Cdot\|$ denotes the Euclidean norm on $\bR^3$. It holds that
$$\frac{2}{\pi}\,d_R(\rx,\ry)\leq \|\rx-\ry\|\leq d_R(\rx,\ry),$$
which implies that the two distances are comparable.
\end{dfn}

\subsection*{Laplace-Beltrami operator}\label{sec:laplace_beltrami}
Let $(M,g)$ be a $d$-dimensional Riemannian manifold with local coordinates $\rx=(\rx_1,\rx_2,\cdots,\rx_d)$. The Laplace–Beltrami operator is given by~\cite[Eq.~(3.40)]{Gri09}
\begin{equation}\label{eq:manifold_laplacian}
\Delta_g = \frac{1}{\sqrt{|g|}}\,\sum_{i,j=1}^d\frac{\partial}{\partial \rx_i}\,(\sqrt{|g|}\,g^{ij}\,\frac{\partial}{\partial \rx_j})\,,
\end{equation}
where $\rd s^2 = \sum_{i,j=1}^d g_{ij}\,\rd \rx^i\,\rd\rx^j$ and $|g| = |\det (g)|$. In particular, in the Euclidean space $\bR^3$ with the Euclidean metric $g_{ij}=\delta_{ij}$, one has 
\begin{equation}\label{eq:sphere_R3_laplacian}
\Delta_{\bR^3} = \sum_{i=1}^3\frac{\partial^2 }{\partial x_i^2}\equiv \frac{\partial^2}{\partial r^2} +\frac{2}{r}\frac{\partial}{\partial r}+\frac{1}{r^2}\Delta_{\bS_1},
\end{equation}
where $\Delta_{\bS_1}$ is the Laplace-Beltrami operator on the unit sphere with the usual round metric. It is often written as $-L^2$ in physics texts. In spherical coordinates $(\theta,\varphi)$, one has
\begin{equation}
\Delta_{\bS_1}=
\frac{1}{\sin\theta}\frac{\partial}{\partial\theta}(\sin\theta\frac{\partial}{\partial\theta}) +\frac{1}{\sin^2\theta}\frac{\partial^2}{\partial\varphi^2}\,.
\end{equation}
The Laplace-Beltrami operator on $\bS_R$ satisfies $\Delta_R := \frac{1}{R^2}\,\Delta_{\bS_1}$ for all $R>0$. 

\begin{Example}\label{ex:conformal_measure}
Let $(M,g)$ be a $d$-dimensional manifold. Consider the conformal deformation $\tilde{g}$ associated to the metric $g$, i.e., $\tilde g= \re^{2\nu} g$, where $\nu\in C^{\infty}(M)$. One has
\begin{equation}
\tilde g_{ij}=  \re^{2\nu}\,g_{ij},\quad \tilde g^{ij}=  \re^{-2\nu}\,g^{ij},\quad \sqrt{\det\, \tilde g}= \re^{d\nu}\,\sqrt{ {\det g}}\,.
\end{equation}
Hence,
\begin{equation}
\begin{aligned}
\Delta_{\tilde g}
&= \frac{1}{\re^{d\nu}\sqrt{\det g}}  \sum_{i,j=1}^d\frac{\partial}{\partial \rx_i}\Big(\re^{(d-2)\nu}\sqrt{\det g} \,g^{ij}\frac{\partial}{\partial \rx_j} \Big) 
\\
&=\re^{-2\nu}\Delta_{g}+(d-2)\re^{-2\nu}\,g^{ij}\,\frac{\partial\nu}{\partial\rx_i}\,\frac{\partial}{\partial \rx_j}\,. 
\end{aligned}
\end{equation}
In particular, if $(M,\tilde g)$ is a 2-dimensional Riemannian manifold with a metric given by $(\tilde g)_{jk}(x)= \re^{2\nu(x)}\,\delta_{jk}$ it holds that~\cite[App.~C, Cor.~3.5]{TaylorII}
\begin{equation}
\Delta_{\tilde g} f=
\re^{-2\nu}\sum_{j,k=1}^2\frac{\partial}{\partial \rx_j}  ( \re^{-2\nu}\,\delta^{jk}\, \re^{2\nu}\frac{\partial}{\partial \rx_k}  f)= \re^{-2\nu}\Delta f,\,\,\,\,\,\,\,\Delta f=\frac{\partial^2 f}{\partial x_1^2} + \frac{\partial^2 f}{\partial x_2^2}\,.
\end{equation}
We shall make use of this fact once we introduce the stereographic projection.
\end{Example}

\begin{Remark}\label{rem:lap_bel_identities}
For all $R\in\bN_+$, we have $\Delta_R= \vec{\nabla}_R\cdot\vec{\nabla}_R$, where $\vec{\nabla}_R$ is the gradient on the sphere. For any $f,g\in C^2(\bS_R)$, by the Green–Beltrami identity, one has~\cite[Prop.~3.3~and~Cor.~3.4]{AH12} or~\cite[Thm.~3.16]{Gri09}
\begin{equation}
\int_{\bS_R}g(\rx)\, (\Delta_R f)(\rx)\,\rho_R(\rd \rx)
=
-\int_{\bS_R}(\vec{\nabla}_R g)(\rx)\cdot (\vec{\nabla}_R f)(\rx)\,\rho_R(\rd \rx)
= \int_{\bS_R}f(\rx)(\Delta_R g)(\rx)\,\rho_R(\rd \rx).
\end{equation}
Let $\langle\Cdot,\Cdot\rangle_{L_2(\bS_R)}$ denote the inner product on $L_2(\bS_R)$. Then
\begin{equation}
\begin{gathered}
\langle f,\Delta_R\,g\rangle_{L_2(\bS_R)} = \langle \Delta_R\,f, g\rangle_{L_2(\bS_R)}, 
\quad
\langle f,\Delta_R\,f\rangle_{L_2(\bS_R)} \leq 0\,.
\end{gathered}
\end{equation}
In general, on a complete Riemannian manifold $M$, the Laplace-Beltrami operator is essentially self-adjoint~\cite[Thm.~2.4]{St83} or~\cite[Thm.~11.5]{Gri09}, i.e., it has a unique self-adjoint extension.
\end{Remark}

\subsection*{Stereographic projection}\label{sec:stereographic}
\begin{dfn}\label{def:stereoraphic}
Let $R\in [1,\infty)$ and $N= (0,0,R)$. Define the map $\jmath_R\,:\, \bR^2\to \bS_R\setminus N\subset \bR^3$ by
\begin{equation}
\jmath_R(x_1,x_2):= \rx \equiv(\rx_1,\rx_2,\rx_3)= \frac{R(4Rx_1, 4Rx_2, x_1^2 +x_2^2 -4R^2)}{x_1^2+x_2^2 +4R^2}\,.
\end{equation}
Note that $-R\leq \rx_3<R$. We call the point $(x_1,x_2) = x\in\bR^2$ the stereographic coordinate of the point $\rx \in\bS_R\setminus N$. Note that $\jmath_R$ is always locally invertible. Let $\jmath_R^{-1}\,:\, (\bS_R\setminus N)\subset \bR^3\to \bR^2$. It holds that
\begin{equation}
\jmath_R^{-1}(\rx_1,\rx_2,\rx_3):=x \equiv (x_1,x_2) = \frac{2R(\rx_1,\rx_2)}{R-\rx_3}\,.
\end{equation}
Moreover, for $x,y\in\bR^2$ such that $\jmath_R(x)=\rx$ and $\jmath_R(y)= \ry$ one has
\begin{equation}
\sum_{i=1}^3(\rx_i-\ry_i)^2 = \|\rx-\ry\|^2 = \frac{16R^4}{(4R^2+|x|^2)(4R^2+|y|^2)}\,\|x-y\|^2\,.
\end{equation}
Observe that for all $R\in\bN_+$, the mappings $\jmath_R$ and $\jmath_R^{-1}$ are rational functions, hence smooth. Thus, $\jmath_R$ is a local diffeomorphism. 
\end{dfn}
\noindent
Recall the Riemannian volume measure was given by $\rho_R(\rd\ry)=\sqrt{\det g_R}\,\rd\ry$. One verifies that the metric tensor $g_R$ on $\bS_R$ in the stereographic coordinates is of the form
\begin{equation}\label{eq:conformal_factor}
(\jmath_R^*\,g_R)_{ij}(x) := w_R(x)\,\delta_{ij}
\end{equation}
with
$$w_R(x):=\sqrt{\det (\jmath_R^* g_R)(x)} =\frac{16R^4}{(4R^2+|x|^2)^2} \in C^{\infty}(\bR^2),$$
where $|x|^2= x_1^2+x_2^2$~\cite[p.~12]{Sch08}. This shows that $\bS_R$ is locally conformally flat\footnote{A Riemannian manifold $(M,g)$ is said to be locally conformally flat if it is locally conformal to Euclidean space. Recall that a conformal structure on $M$ is an equivalence class of metrics with $\hat{g}=\re^{2\omega}\,g$ for some function $\omega\in C^{\infty}(M)$.}. Note that for all $R\in\bN_+$ and all $x\in\bR^2$, it holds that $w_R(x)\leq 1$.

\begin{Remark}\label{rem:norm_stereo}
Let $f\in C(\bS_R)$, then $\jmath_R^*f=f\circ\jmath_R\in C(\bR^2)$. Note that for $f\in C^\infty_\rc(\bR^2)$ the function $f\circ \jmath_R^{-1}\in C_\rc(\bS_R\setminus N)$ has unique continuous
extension to $\bS_R$. 
From Def.~\ref{def:measure_preserving} Items~(A) and~(B) for all $f\in C^{\infty}(\bS_R\setminus N)$ it holds that
\begin{equation}
\int_{\bR^2}\,\jmath_R^*f(x)\,w_R(x)\,\rd x=\int_{\bR^2}\,f\circ\jmath_R(x)\,w_R(x)\,\rd x=\int_{\bS_R}\,f(\rx)\,\rho_R(\rd\rx)\,.
\end{equation}
Moreover, by Def.~\ref{def:measure_preserving} Item~(C) for $\phi\in\sD'(\bS_R)$ and $f\in C_{\rc}^{\infty}(\bR^2)$ one has
\begin{equation}
\begin{aligned}
\langle\jmath^*_R\phi,f\rangle_{L_2(\bR^2)}&=
\int_{\bR^2}\,\jmath_R^*\phi(x)\,f(x)\,\rd x
\\=&\int_{\bS_R}\,\phi(\ry)\,f(\jmath_R^{-1}(\ry))\,w^{-1}_{R}(\jmath_R^{-1}(\ry))\,\rho_R(\rd \ry)
= \langle \phi, (w_R^{-1}f)\circ \jmath^{-1}_R\rangle_{L_2(\bS_R)}\,.
\end{aligned}
\end{equation}
\end{Remark}

\begin{Remark}\label{rem:comf_laplacian}
Let $\hat\Delta:= \jmath_R^*\Delta_R$ denote the Laplacian associated to the metric $\jmath^*_R g_R$ on $\bR^2$. Using Example~\ref{ex:conformal_measure} one gets $\hat\Delta = w_R^{-1}\Delta$, where $\Delta$ denotes the standard Laplacian on $\bR^2$. Alternatively, one can use~\cite[Sec.~1]{PR87} with $g_1=\jmath_R^* g_{R}$, $g=\delta_{ij}$, $f=\frac{1}{2}\, \log(w_R(x))$ and $n=2$. It holds that $\jmath_R^*\,\Delta_R= w^{-1}_R\Delta\,\jmath_R^*$.
\end{Remark}

\begin{Remark}\label{rem:sphere_mobius_plan}
On account of Remark~\ref{rem:sphere_quotient}, under stereographic projection from the north pole, the action of $O(3)$ on $\bS_R$ induces (anti-)M\"obius transformations on $\bR^2$. As $R\to\infty$, when restricted to bounded subsets of $\bR^2$, e.g., the neighborhoods of the south pole, these transformations converge to the action of the Euclidean isometry group $E(2)=T(2)\rtimes O(2)$ on $\bR^2$.
\end{Remark}

\section{Spherical harmonics}\label{sec:spherical_harmonics}
Let $\bY_l(\bR^d)$ be the space of the homogeneous harmonics of degree $l$ in $\bR^d$, and $\bY_l^d:= \bY_l(\bR^d)|_{\bS_1^{d-1}}$ be the space of spherical harmonics of degree $l$ in a unit $(d-1)$-dimensional sphere. We set $\{Y_{l,m}\,:\,1\leq m\leq N_{l,d}\}$ to be an orthonormal basis of $\bY^d_l$, i.e.,
\begin{equation}\label{eq:ortho_harmonic}
\int_{\bS_1^{d-1}} Y_{l,j}(\eta)\,\overline{Y_{l',k}(\eta)}\,\rd\eta = \delta_{ll'}\,\delta_{jk},\quad 1\leq j,k\leq N_{l,d}\,,
\end{equation}
where $\rd\eta$ is the Riemannian volume measure on $\bS_1^{d-1}$ and $N_{l,d}= \text{dim}\bY_l^d$ such that for $l$ sufficiently large $\text{dim}\bY^d_l = O(l^{d-2})$. Given $Y_l\in \bY_l^d$ and $H_l\in \bY_l(\bR^d)$, it holds $H_l(r\rx) = r^l Y_l(\rx)$ for all $\rx\in\bS_1^{d-1}$ and $r>0$~\cite[Sec.~2.1.3]{AH12}. For $d=3$, using Eq.~\eqref{eq:sphere_R3_laplacian} one gets
\begin{equation}
\Delta H_l(r\rx) = \Delta (r^lY_l)(\rx) = 0\quad \text{iff}\quad \Delta_{\bS_1} Y_l(\rx) = -l(l+1)\,Y_l(\rx)\,.
\end{equation}
Recall that $\Delta_R=\Delta_{\bS_1}/R^2$. The above expression implies that the Laplace-Beltrami operator $(-\Delta_R)$ has non-negative discrete spectrum and an isolated simple eigenvalue at zero, i.e., $0=\lambda_0<\lambda_1\leq \lambda_2\leq \cdots\to \infty$. Moreover, the non-zero functions in the space $\bY^3_l$ are the eigenfunctions of $(-\Delta_R)$ with eigenvalue $l(l+1)/R^2$ for all $l\in\bN_0$ and with the multiplicity $N_{l,3}=\mathrm{dim}\bY^3_l = (2l+1)$~\footnote{In general, the eigenvalues of the (negative) Laplace-Beltrami operator on $\bS_1^{d-1}$ are $l^2+(d-2)l$  where $l\in\bN_0$. The multiplicity of this eigenvalue is\,$
\frac{(l+d/2 -1)\prod^{d-3}_{j=1}(l+j)}{(d/2-1)\cdots (d-3)!}$~\cite[Thm.~2.9]{STV18}.}\cite[Thm.~1.4.5]{DX13}. Note that each $\bY^3_l$ is the eigenspace of $(-\Delta_R)$, which is preserved by the action of $O(3)$.

\begin{Remark}
Observe that the first non-zero eigenvalue is $\lambda_1= 2/R^2$, which can be viewed as twice the sectorial curvature $1/R^2$. This is an instance of a more general result known as the Lichnerowicz–Obata theorem~\cite{Li58} or~\cite[Thm.~5.1]{Li09}.
\end{Remark}

\subsection*{Legendre polynomials}\label{Legendre Polynomials}
The Legendre polynomial of degree $l$ in $d$-dimensions can be determined as the restriction of the Legendre harmonic on the unit sphere via~\cite[Eq.~(2.19)]{AH12}
\begin{equation}
P_{l,d}(t) = l!\,\Gamma(\frac{d-1}{2})\,\sum_{k=0}^{\lfloor l/2\rfloor}\,(-1)^k\,\frac{(1-t^2)^k\,t^{l-2k}}{4^k\,k!\, (l-2k)!\,\Gamma(k+(d-1)/2)}\,.
\end{equation}
In particular, for all $l\in\bN_0, d\geq 2, t\in[-1,1]$ it holds~\cite[Eq.~(2.116)]{AH12}
\begin{equation}\label{eq:bound_legendre}
|P_{l,d}(t)|\leq 1 = P_{l,d}(1)\,.
\end{equation}
The Legendre polynomial $P_{l,d}$ satisfies the following differential equation~\cite[Eq.~(2.82)]{AH12} or~\cite[(15.75)]{Arfken}
\begin{equation}
(1-t^2)^{\frac{3-d}{2}}\frac{\rd}{\rd t}\left[(1-t^2)^{\frac{d-1}{2}}\,\frac{\rd}{\rd t}P_{l,d}(t)\right] + l(l+d-2)P_{l,d}(t) = 0\,.
\end{equation}
Using the preceding expression with $t=\cos\theta$ and $d=3$ one has
\begin{equation}\label{eq:legendre_1}
\frac{\rd}{\rd\theta}\left(\cos\theta\,P_{l}(\cos\theta)\right) -\,\frac{\rd^2}{\rd\theta^2}\,\left(\sin\theta\,P_{l}(\cos\theta)\right) 
=
 l(l+1)P_{l}(\cos\theta)\sin\theta,
\end{equation}
where $P_l$ denotes the $l-$th Legendre polynomial in $d=3$.

\begin{Remark}\label{rem:er_legendre}
Let $j\in\bN$. We denote the derivatives of $t \mapsto P_{l}(t)$ with respect to $t\in [-1,1]$ of order $j$ by $ t \mapsto P^{(j)}_{l}(t)$. It holds that~\cite[Eqs.~(2.118--120)]{AH12}
\begin{equation}\label{eq:legendre_2}
\mathrm{max}_{t\in[-1,1]}\,|P^{(j)}_{l}(t)| = O(l^{2j})\,.
\end{equation}
Utilizing the mean value theorem for any $t,s\in[-1,1]$, and some $\tau$ between $t$ and $s$, one obtains
\begin{equation}\label{eq:legendre_3}
|P_l(t) - P_l(s)| = |P^1_l(\tau)(t-s)|\leq c\,l^2\,|t-s|,
\end{equation}
where we used Eq.~\eqref{eq:legendre_2} with $j=1$. This implies that
\begin{equation}
|P_{l}(\frac{\rx\cdot\ry}{R^2}) - P_{l}(\frac{\rz\cdot\ry}{R^2})|\leq c\,l^2\,\frac{|\rx-\rz|}{R},\qquad \forall~\rx,\ry,\rz\in\bS_R\,.
\end{equation}
\end{Remark}

\begin{Remark}\label{rem:spherical_projection}
For all $\rx,\ry\in\bS_R$, one has~\cite[Eq.~(16.57)]{Arfken},~\cite[Thm.~2.9]{AH12},~\cite[Eq.~(1.6.7)]{DX13}
\begin{equation}\label{eq:spherical_projec}
\sum_{m=-l}^{l} Y_{l,m}(\rx)\overline{Y}_{l,m}(\ry)
= \frac{(2l+1)}{4\pi R^2}\,P_l(\frac{\rx\cdot\ry}{R^2})
=(2l+1) \cP_{R,l}(\rx,\ry),
\end{equation}
where $\cP_{R,l}\,:\,L_2(\bS_R)\to L_2(\bS_R)$ is the integral operator with kernel $\cP_{R,l}(\rx,\ry)$, so that $(2l+1)\cP_{R,l}$ is the orthogonal projection onto the eigenspace $\bY_l^3$ of $-\Delta_R$. In particular, one obtains
\begin{equation}
\sum_{m=-l}^{l}\, Y_{l,m}(\rx)\overline{Y_{l,m}}(\rx)= \frac{(2l+1)}{4\pi R^2} \,P_l(1) =  \frac{(2l+1)}{ 4\pi R^2}\,.
\end{equation}
This implies that
$$\max \{|Y_{l,m}(\rx)|\,:\, \rx\in \bS_R\,\,,\,-l\leq m \leq l \} \leq \sqrt{(2l+1)/(4\pi R^2)}.$$
\end{Remark}

\begin{Remark}\label{rem:parseval_them}
Let $f\in L_2(\bS_R)$. One has~\cite[Thm.~2.38]{AH12}
\begin{equation}
f(\rx)=\sum_{l=0}^{\infty}\sum_{m=-l}^{l} f_{lm}\,Y_{l,m}(\rx) 
= \int_{\bS_R} \sum_{l=0}^{\infty}\sum_{m=-l}^{l} f(\ry) \overline{Y_{l,m}}(\ry)\, Y_{l,m}(\rx)\,\rho_R(\rd\ry)\,.
\end{equation}
Moreover, the projection of $f\in L_2(\bS_R)$ into $\bY^3_l$ can be read off as
\begin{equation}
(2l+1)\,(\cP_{R,l}f)(\rx) := f_{l}(\rx) = \frac{2l+1}{4\pi R^2}\int_{\bS_R}f(\ry)\,P_{l}(\rx\cdot \ry/R^2)\,\rho_R(\rd\ry),
\end{equation}
where we used Eq.~\eqref{eq:spherical_projec}. As a result, we have Parseval's equality on $L_2(\bS_R)$,
$$\|f\|^2_{L_2(\bS_R)} = \sum_{l=0}^{\infty}\|f_{l}\|^2_{L_2(\bS_R)},$$
which is a consequence of the fact that $L_2(\bS_R)= \bigoplus_{l=0}^{\infty}\bY^3_l$~\cite[Eq.~(2.143)]{AH12}.
\end{Remark}

\begin{Remark}\label{rem:legendre_ortho}
It holds~\cite[Eq.~2.79]{AH12}
\begin{equation}
\int_{-1}^1 P_{l_1}(t)\,P_{l_2}(t)\,\rd t = \frac{2}{2l_1+1}\,\delta_{l_1,l_2}\,.
\end{equation}
\end{Remark}

\section{Function spaces}\label{sec:spaces}
\begin{dfn}\label{dfn:weights}\cite[Def.~6.1]{Triebel3}
We say that $w\in C^\infty(\bR^d)$ is an admissible weight iff there exist $b\in[0,\infty)$ and $c\in(0,\infty)$ such that $0<w(x)\leq c\,w(y)\,(1+|x-y|)^b$ for all $x,y\in\bR^d$ and for every $a\in\mathbb{N}_0^d$ there exists $c_a\in(0,\infty)$ such that $|\partial^a w (x)|\leq c_a\, w(x)$ for all $x\in\bR^d$.
\end{dfn}

\begin{dfn}\label{def:sobolev_norms}
\begin{itemize}
\item[(A)]
Let $w\in C^\infty(\bR^d)$ be an admissible weight, $p\in[1,\infty]$ and $\alpha\in\bR$, $n\in\bN_0$. By definition $L_p(\bR^d,w)$ is the Banach space with the norm~\cite[Eq.~6.5]{Triebel3}

$$\|f\|_{L_p(\bR^d,w)}:=\|w f\|_{L_p(\bR^d)}.$$
\item[(B)]
The weighted Bessel potential space $L^\alpha_p(\bR^d,w)$ is the Banach space with the norm
$$ \|f\|_{L^\alpha_p(\bR^d,w)} := \|(1-\Delta)^{\alpha/2}f\|_{L_p(\bR^d,w)}.$$
We also set $L^\alpha_p(\bR^d)=L^\alpha_p(\bR^d,1)$. 
\item[(C)]
The weighted Sobolev space $W^n_p(\bR^d,w)$ is the Banach space with the norm
\begin{equation}
 \|f\|_{W^n_p(\bR^d,w)} = \textstyle\sum_{\substack{a\in\bN^d,|a|\leq n}} \|\partial^a f\|_{L_p(\bR^d,w)}\,.
 \end{equation}
 \end{itemize}
\end{dfn}

\begin{dfn}
Let $\|\Cdot\|_{L_p(\bS_R)}$ denote the $L_p$-norm on $\bS_R$ defined with respect to the canonical measure $\rho_R(\rd\rx)$. Note that for all $0<q\leq p\leq \infty$ and all $R\in\bN_+$, there exists $C\in (0,\infty)$ such that $\|\Cdot\|_{L_q(\bS_R)}\leq C\, \|\Cdot\|_{L_p(\bS_R)}$. For $R\in(0,\infty)$, we define the Bessel potential space $L^\alpha_p(\bS_R)$ on $\bS_R\subset\bR^3$ as the Banach space with the norm
$$ \|f\|_{L^\alpha_p(\bS_R)} := \|(1-\Delta_R)^{\alpha/2}f\|_{L_p(\bS_R)}.$$ 
\end{dfn}

\begin{dfn}\cite[Def.~2.3.1]{Triebel1}
Let $\Phi(\bR^d)$ be a collection of functions $\{\varphi_j\}_{j\in\bN_0}\subset \sS(\bR^d)$ such that for some $A,B,C>0$, it holds that
\begin{equation}
\supp\,\varphi_0 \subset \{x\in\bR^d\,:\, |x|<A\},\quad \supp\,\varphi_j \subset \{x\in\bR^d\,:\, B\,2^{j-1}\leq |x|\leq C\,2^{j+1}\}.
\end{equation}
Moreover, for all $\alpha\in\bN_0^n$, there exists $c_{\alpha}>0$ such that for all $x\in\bR^d$, one has
\begin{equation}
\sup_{x\in\bR^d}\,\sup_{j=0,1,2\cdots}\,2^{j|\alpha|}\,|\partial^{\alpha}\,\varphi_j(x)|\leq c_{\alpha},
\quad \sum_{j=0}^{\infty}\varphi_j(x) = 1\,.
\end{equation}
We call $\{\varphi_j\}_{j\in\bN_0}$ a resolution of unity.
\end{dfn}
\begin{Remark}
Let $\{\varphi_j\}_{j\in\bN_0}\in \Phi(\bR^d)$. By construction, for all $f\in \sS'(\bR^d)$ one has
$$f=\sum_{j=0}^{\infty}\,\cF^{-1}\varphi_j\cF\,f =\sum_{j=0}^{\infty}\,\Delta_j\,f,$$
where $\Delta_j\,f:=\cF^{-1}\varphi_j\cF\,f $, and the operator $\Delta_j f$ is called Littlewood-Paley dyadic block. Observe that each $\Delta_j f$ is an analytic function, i.e., it can be evaluated at a point.
\end{Remark}

\begin{dfn}~\cite[Def.~6.3]{Triebel3}
Let $s\in\bR, 0<q\leq \infty, \{\varphi_j\}_{j\in\bN_0}\in\Phi(\bR^d)$ and $f\in \sS'(\bR^d)$.
We define the weighted Triebel-Lizorkin norm for $0<p< \infty$ and the weighted Besov norm for $0<p\leq \infty$ via
\begin{equation}
\begin{aligned}
\|f\|_{F^s_{p,q}(\bR^d,w)}:=& \| (\sum_{j=0}^{\infty}\,2^{jsq}\,|\cF^{-1}\varphi_j\cF f(\Cdot)|^q)^{1/q} \|_{L_p(\bR^d,w)}<\infty\, ,
\\
\|f\|_{B^s_{p,q}(\bR^d,w)}:=& \Big(\sum_{j=0}^{\infty}\,2^{jsq}\,\|\cF^{-1}\varphi_j\cF f\|^q_{L_p(\bR^d,w)}\Big)^{1/q}<\infty\,.
\end{aligned}
\end{equation}
\end{dfn}

\begin{Remark}\label{rem:lift_wight}
Let $\alpha\in\bR$, $w\in C^\infty(\bR^d)$ be an admissible weight, and $f\in\sS'(\bR^d)$. 
Suppose $0<p\leq \infty$ ($p<\infty$ for the Triebel-Lizorkin spaces) and $0<q\leq \infty$.
\begin{itemize}
\item[(A)]
The operator $f\mapsto w f$ is an isomorphic mapping from $B^s_{p,q}(\bR^d,w)$ onto $B^s_{p,q}(\bR^d)$ and from $F^s_{p,q}(\bR^d,w)$ onto $F^s_{p,q}(\bR^d)$~\cite[Thm.~6.5]{Triebel3}. This implies that the following norms are equivalent~\footnote{If $c\|\Cdot\|_B\leq \|\Cdot\|_A\leq C\|\Cdot\|_{B}$ for $c,C\in(0,\infty)$, then $\|\Cdot\|_A\sim\|\Cdot\|_B$.}
\begin{equation}
\|w\,f\|_{B^s_{p,q}(\bR^d)}\sim\|f\|_{B^s_{p,q}(\bR^d,w)},
\quad
\|w\,f\|_{F^s_{p,q}(\bR^d)}\sim\|f\|_{F^s_{p,q}(\bR^d,w)}.
\end{equation}
\item[(B)]
Let $I_{\alpha}$ be the operator defined by
$$I_{\alpha}\,:\, f\mapsto \cF^{-1}\left( 
(1+|\zeta|^{2})^{\alpha/2}\cF f\right),$$
called the lifting operator. It holds that~\cite[Thm.~6.9]{Triebel3}
\begin{equation}
I_{\alpha} B^{s}_{p,q}(\bR^d,w) = B^{s-\alpha}_{p,q}(\bR^d,w),
\quad
I_{\alpha} F^{s}_{p,q}(\bR^d,w) = F^{s-\alpha}_{p,q}(\bR^d,w)\,.
\end{equation}
\item[(C)]
For all $1<p<\infty$ and $m\in\bN_0$ it holds that~\cite[Eq.~(6.21)]{Triebel3}
$$W^m_p(\bR^d,w)=F^m_{p,2}(\bR^d,w).$$
\end{itemize}
\end{Remark}

\begin{lem}\label{rem:sobolev}
Let $w\in C^\infty(\bR^d)$ be an admissible weight, $\alpha\in\bR$, and $n\in\bN_0$. 
\begin{itemize}
\item[(A)]
For all $p\in(1,\infty)$, the norms $\|\Cdot\|_{L^\alpha_p(\bR^d,w)}$ and $\|w\Cdot\|_{L^\alpha_p(\bR^d)}$ are equivalent.

\item[(B)]
For all $p\in(1,\infty)$, the Bessel potential space $L^\alpha_p(\bR^d,w)$ coincides with the Triebel-Lizorkin space $F^\alpha_{p,2}(\bR^d,w)$ with equivalent norms. 

\item[(C)]
For all $p\in(1,\infty)$, the Sobolev space $W^n_p(\bR^d,w)$ coincides with the Bessel potential space $L^n_p(\bR^d,w)$ with equivalent norms.

\item[(D)]
For all $1\leq p\leq\infty$, there are continuous embeddings
$$B_{p,1}^{\alpha}(\bR^d,w) \to L_p^{\alpha}(\bR^d,w) \to B_{p,\infty}^{\alpha}(\bR^d,w).$$
\end{itemize}
\end{lem}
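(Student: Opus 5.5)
Most of this lemma is a matter of assembling the weighted Triebel theory already recorded in Remark~\ref{rem:lift_wight}. Each item reduces to the corresponding unweighted statement by conjugating with multiplication by $w$. The one point that needs an argument is that this multiplication interacts well with the Bessel lift $(1-\Delta)^{\alpha/2}$. I would therefore prove (B) first, then deduce (A), (C) and (D) from it.

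\textbf{Item (B).} The classical unweighted fact is $L^\alpha_p(\bR^d)=F^\alpha_{p,2}(\bR^d)$ for $1<p<\infty$. It follows from the Littlewood--Paley theorem $L_p=F^0_{p,2}$ combined with the lifting property of $I_{\alpha}=(1-\Delta)^{\alpha/2}$ on $F$-spaces. In the weighted setting the argument runs as follows:
\begin{itemize}
\item By Def.~\ref{def:sobolev_norms}(A), $\|f\|_{L_p(\bR^d,w)}=\|wf\|_{L_p(\bR^d)}$. By Remark~\ref{rem:lift_wight}(A) and the unweighted identity $L_p=F^0_{p,2}$, this is equivalent to $\|wf\|_{F^0_{p,2}(\bR^d)}\sim\|f\|_{F^0_{p,2}(\bR^d,w)}$, so $L_p(\bR^d,w)=F^0_{p,2}(\bR^d,w)$.
\item By definition, $\|f\|_{L^\alpha_p(\bR^d,w)}=\|I_{\alpha}f\|_{L_p(\bR^d,w)}\sim\|I_{\alpha}f\|_{F^0_{p,2}(\bR^d,w)}$.
\item Remark~\ref{rem:lift_wight}(B), applied via the open mapping theorem, makes $I_{\alpha}$ an isomorphism $F^\alpha_{p,2}(\bR^d,w)\to F^0_{p,2}(\bR^d,w)$. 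Hence $\|I_{\alpha}f\|_{F^0_{p,2}(\bR^d,w)}\sim\|f\|_{F^\alpha_{p,2}(\bR^d,w)}$.
\end{itemize}

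\textbf{Item (A).} By (B) and Remark~\ref{rem:lift_wight}(A),
$$\|f\|_{L^\alpha_p(\bR^d,w)}\sim\|f\|_{F^\alpha_{p,2}(\bR^d,w)}\sim\|wf\|_{F^\alpha_{p,2}(\bR^d)}\sim\|wf\|_{L^\alpha_p(\bR^d)},$$
where the last step is the unweighted case of (B). This is the point where the admissibility of $w$ enters: it is what makes Triebel's Thm.~6.5 applicable. In effect, the commutator of $w$ with $(1-\Delta)^{\alpha/2}$ is a lower-order operator controlled by the bounds $|\partial^a w|\le c_a w$. I expect this step to be the only real obstacle.

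\textbf{Item (C).} This is Remark~\ref{rem:lift_wight}(C), $W^n_p(\bR^d,w)=F^n_{p,2}(\bR^d,w)$, combined with (B) for $\alpha=n$.

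\textbf{Item (D).} I would reduce to the unweighted embeddings $B^\alpha_{p,1}\to L^\alpha_p\to B^\alpha_{p,\infty}$ via $f\mapsto wf$, using Remark~\ref{rem:lift_wight}(A) for the Besov norms.

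The unweighted embeddings can be checked directly by lifting to $\alpha=0$ with $I_\alpha$, using Remark~\ref{rem:lift_wight}(B) for $B$-spaces and the definition of $L^\alpha_p$. At $\alpha=0$ the two inclusions are elementary:
\begin{itemize}
\item $\|f\|_{L_p}\le\sum_j\|\Delta_j f\|_{L_p}$ by the triangle inequality, which gives $B^0_{p,1}\to L_p$.
\item $\sup_j\|\Delta_j f\|_{L_p}\le C\|f\|_{L_p}$ by Young's inequality, since $\|\cF^{-1}\varphi_j\|_{L_1}$ is bounded uniformly in $j$ by the scaling structure of the resolution of unity; this gives $L_p\to B^0_{p,\infty}$.
\end{itemize}
Both arguments are valid for all $1\le p\le\infty$.

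At the endpoints $p\in\{1,\infty\}$ the weighted reduction must avoid (A), because (A) was only established for $1<p<\infty$. I would instead work directly with $w\Delta_j f$. Writing $\Delta_j f=\cF^{-1}\varphi_j * f$ and using $w(x)\le c\,w(y)(1+|x-y|)^b$, it suffices to have a uniform-in-$j$ bound on $\|(1+|\Cdot|)^b\cF^{-1}\varphi_j\|_{L_1}$; then the unweighted argument goes through verbatim. The remaining lift $I_\alpha$ at the endpoints would be handled in the same way, via Remark~\ref{rem:lift_wight}(B).
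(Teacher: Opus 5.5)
Your proposal is correct. For (A)--(C) it follows the paper's route: the weight transfer of Remark~\ref{rem:lift_wight}(A), the unweighted identity $L^\alpha_p=F^\alpha_{p,2}$, and Remark~\ref{rem:lift_wight}(C). It also supplies a step the paper's one-line argument leaves implicit. Those two cited facts only give $\|f\|_{F^\alpha_{p,2}(\bR^d,w)}\sim\|wf\|_{L^\alpha_p(\bR^d)}$. Linking this to $\|f\|_{L^\alpha_p(\bR^d,w)}=\|w\,I_\alpha f\|_{L_p(\bR^d)}$ needs exactly your passage through $L_p(\bR^d,w)=F^0_{p,2}(\bR^d,w)$ and the weighted lift of Remark~\ref{rem:lift_wight}(B). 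One small correction: that remark only states a set identity, and to turn it into a norm equivalence it is the closed graph theorem that gives boundedness of $I_\alpha$ (via its continuity on $\sS'(\bR^d)$), before you invert. Alternatively, cite Triebel's Thm.~6.9, which states an isomorphism directly.

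For (D) the routes genuinely differ. The paper pushes the unweighted embeddings $B^0_{p,1}\to L_p\to B^0_{p,\infty}$ into the weighted setting through (B) and the $F$-space transfer, then lifts. Both of those steps require $1<p<\infty$, so the chain as written does not reach $p\in\{1,\infty\}$, even though the lemma claims them. Your argument works at $\alpha=0$ directly:
\begin{itemize}
\item $\|wf\|_{L_p}\le\sum_j\|w\Delta_jf\|_{L_p}$ gives one inclusion;
\item $w(x)\le c\,w(y)(1+|x-y|)^b$ plus Young's inequality gives $\|w\Delta_jf\|_{L_p}\le c\,\|(1+|\Cdot|)^b\cF^{-1}\varphi_j\|_{L_1}\|wf\|_{L_p}$, with a $j$-uniform kernel moment, for the other.
\end{itemize}
You then apply the Besov lift, which carries no restriction on $p$, together with the definition $\|f\|_{L^\alpha_p(\bR^d,w)}=\|I_\alpha f\|_{L_p(\bR^d,w)}$. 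This is elementary and valid for every $1\le p\le\infty$. So your endpoint argument alone proves (D) in full, and your interior-$p$ detour through (A) and the unweighted embeddings is unnecessary. The paper's version gains brevity by citation but does not cover the endpoints.
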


\begin{proof}
From Remark~\ref{rem:lift_wight} Item~(A) with $q=2$, one has $\|\Cdot\|_{F^{\alpha}_{p,2}(\bR^d,w)}\sim\|w\,\Cdot\|_{F^{\alpha}_{p,2}(\bR^d)}$. Moreover, for all $p\in(1,\infty)$, it holds that $\|\Cdot\|_{L^{\alpha}_p(\bR^d)}\sim \|\Cdot\|_{F^{\alpha}_{p,2}(\bR^d)}$~\cite[Sec.~2.5.6]{Triebel1} or~\cite[Sec.~3.1.2]{YS18}. This concludes Item~(A), and Item~(B). Evoking Item~(B) with $\alpha=n\in\bN_+$, combined with Remark~\ref{rem:lift_wight} Item~(C), one infers that both $\|\Cdot\|_{L^n_p(\bR^d,w)}$ and $\|\Cdot\|_{W^n_p(\bR^d,w)}$ are equivalent to $\|\Cdot\|_{F^n_{p,2}(\bR^d,w)}$. This concludes Item~(C). To prove Item~(D), one utilizes the fact that~\cite[Sec.~2.5.7, Prop.~1]{Triebel1}
$$B^0_{p,1}(\bR^d) \to L_p(\bR^d) \to B^0_{p,\infty}(\bR^d).$$
This yields the the continuous embeddings
$$B^0_{p,1}(\bR^d,w) \to F^0_{p,2}(\bR^d,w) \to B^0_{p,\infty}(\bR^d,w),$$
which follows from Item~(B) and Remark~\ref{rem:lift_wight} Item~(A). Applying Remark~\ref{rem:lift_wight} Item~(B) with $I_{-\alpha}$, one obtains
 $$B^{\alpha}_{p,1}(\bR^d,w) \to F^{\alpha}_{p,2}(\bR^d,w) \to B^{\alpha}_{p,\infty}(\bR^d,w).$$
Item~(B) concludes Item~(D) and finishes the proof.
\end{proof}

\begin{Remark} \label{Hoelder-remark}
For $\alpha\in\bR$ and $p,q\in[1,\infty]$ we have the following generalized H{\"o}lder inequality
\begin{equation}
 |\langle f,g\rangle_{L_2(\bR^d,w^{1/2})}| \leq C\, \|f\|_{L^\alpha_p(\bR^d,w^{1/p})}\,\|g\|_{L^{-\alpha}_q(\bR^d,w^{1/q})},
 \qquad
 \frac{1}{p}+\frac{1}{q}=1,
\end{equation}
where $\langle \Cdot,\Cdot\rangle_{L_2(\bR^d,w^{1/2})}$ is the scalar product in $L_2(\bR^d,w^{1/2})$ and the constant $C\in(0,\infty)$ depends only on the weight $w$. 
\end{Remark}
\noindent
\newcommand{\al}{\alpha}

\begin{proof} 
Suppose first that $1<p<\infty$. By the H\"older inequality, we obtain
\begin{align}
\langle f,g\rangle_{L_2(\bR^d,w^{1/2})}=& \int \rd x \,\bar{f}(x)  g(x) w(x)
 = \int \rd x\, (w^{1/p}\bar{f}) (x)\, (w^{1/q}g)(x)  \\
=& \int \rd x\, (1-\De)^{\al/2}(w^{1/p}\bar{f}) (x)\,   (1-\De)^{-\al/2}(w^{1/q}g)(x) \\
\leq \,& \|w^{1/p} f\|_{L^{\al}_p(\bR^d)}   \|w^{1/q} g\|_{L^{-\al}_q(\bR^d)}\,.
\end{align}
Now, by Lemma~\ref{rem:sobolev}, Item~(B), and Remark~\ref{rem:lift_wight}, Item~(A), we write
\begin{equation}
 \|w^{1/p} f\|_{L^{\al}_p(\bR^d)}\leq c  \|w^{1/p} f\|_{F^\alpha_{p,2}(\bR^d)} \leq c' \|f\|_{F^\alpha_{p,2}(\bR^d, w^{1/p} )} \leq c'' \|f\|_{L^{\al}_p(\bR^d, w^{1/p} )}  \label{wf} 
 \end{equation}
and analogously for $\|w^{1/q} g\|_{L^{-\al}_q(\bR^d)}$. This concludes the proof for $1<p<\infty$. 
For $p=\infty$ and $q=1$ (or vice versa)  $L^{\alpha}_{\infty}(\bR^d)$ is understood as the H\"older-Zygmund class $B^{\alpha}_{\infty,\infty}(\bR^d)$. As the restriction $p<\infty$ in Remark~\ref{rem:lift_wight} does not apply
to the Besov spaces, we conclude as in Eq.~\eqref{wf}. 
\end{proof}

\begin{thm}\label{thm:embedding}
Let $w,v$ be admissible weights and 
\begin{equation}
 -\infty<\alpha_2\leq \alpha_1 <\infty,
 \qquad
 1\leq p_1\leq p_2\leq \infty.
\end{equation}
(A) The embedding $L^{\alpha_1}_{p_1}(\bR^d,w)\to L^{\alpha_2}_{p_2}(\bR^d,v)$ is continuous if
\begin{equation}
 p_2<\infty,
 \qquad
 \alpha_1-d/p_1\geq \alpha_2-d/p_2 
 \quad\textrm{and}\quad
 \sup_{x\in\bR^d} v(x)/w(x) < \infty.
\end{equation}
(B) The embedding $L^{\alpha_1}_{p_1}(\bR^d,w)\to L^{\alpha_2}_{\infty}(\bR^d,v)$ is continuous if
\begin{equation}
 \alpha_1-d/p_1> \alpha_2
 \quad\textrm{and}\quad
 \sup_{x\in\bR^d} v(x)/w(x) < \infty.
\end{equation}
(C) The embedding $L^{\alpha_1}_{p_1}(\bR^d,w)\to L^{\alpha_2}_{p_2}(\bR^d,v)$ is compact if  
\begin{equation}
p_2<\infty,
\qquad
\alpha_1-d/p_1>\alpha_2-d/p_2
\quad\textrm{and}\quad
\lim_{|x|\to\infty} v(x)/w(x) =0.
\end{equation}
\end{thm}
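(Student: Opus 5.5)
The plan is to reduce all three statements to the unweighted case by means of the weight-lifting isomorphism. By Lemma~\ref{rem:sobolev}, Item~(A), $\|f\|_{L^{\alpha}_p(\bR^d,w)}\sim\|wf\|_{L^{\alpha}_p(\bR^d)}$ for $p\in(1,\infty)$. For the endpoint cases $p=1$ or $p=\infty$ we work with the Besov scale instead. Here we use Remark~\ref{rem:lift_wight}, Item~(A), which holds for all $p$, together with the sandwich $B^{\alpha}_{p,1}\to L^{\alpha}_p\to B^{\alpha}_{p,\infty}$ from Lemma~\ref{rem:sobolev}, Item~(D). A small loss of regularity in these endpoint cases is harmless, except at the borderline equality in (A) with $p_1=1$. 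For that case we appeal to the Jawerth--Franke type embedding $F^{\alpha_1}_{1,\infty}\to F^{\alpha_2}_{p_2,2}$ in Triebel's weighted theory~\cite{Triebel3}, which is the standard reference for these statements.

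\emph{Step 1 (reduction to $v=w$).} Since $\sup v/w<\infty$ and both weights are admissible, the multiplier $h=v/w$ is smooth and bounded, and all its derivatives are bounded: each $\partial^a h$ is a sum of terms $h\cdot$(products of $\partial^b v/v$, $\partial^b w/w$), and these are bounded by Def.~\ref{dfn:weights}. Pointwise multipliers in $C^\infty_b$ act boundedly on every $L^{\alpha}_p(\bR^d)$ and on the Besov spaces. Hence
$$\|f\|_{L^{\alpha_2}_{p_2}(v)}\sim\|h\,wf\|_{L^{\alpha_2}_{p_2}}\lesssim\|wf\|_{L^{\alpha_2}_{p_2}}\sim\|f\|_{L^{\alpha_2}_{p_2}(w)}.$$

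\emph{Step 2 (unweighted Sobolev embeddings).} Apply the classical embeddings to $g=wf$. For (A), use $L^{\alpha_1}_{p_1}(\bR^d)\to L^{\alpha_2}_{p_2}(\bR^d)$ when $\alpha_1-d/p_1\ge\alpha_2-d/p_2$ and $p_2<\infty$; this is Hardy--Littlewood--Sobolev applied to the Bessel kernel $(1-\Delta)^{-(\alpha_1-\alpha_2)/2}$, or equivalently the $F$-space embedding~\cite[Sec.~2.7.1]{Triebel1}. For (B), the strict inequality lets us pass through $B^{\alpha_1}_{p_1,\infty}\to B^{\alpha_2+\epsilon}_{\infty,\infty}\to B^{\alpha_2}_{\infty,1}$, and we interpret $L^{\alpha_2}_\infty$ via the Zygmund/Besov scale as in Remark~\ref{Hoelder-remark}.

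\emph{Step 3 (compactness in (C)).} This is the main step. Write $v=\epsilon_R\,w+(v-\epsilon_R w)$ using a smooth cutoff. More precisely, let $\chi_R$ be a smooth bump equal to $1$ on the ball $B_R$, with derivatives bounded uniformly in $R$. Split $f\mapsto\chi_R f+(1-\chi_R)f$.

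The tail part is small. By Step 1 with $h_R=(1-\chi_R)v/w$, its multiplier norm in $C^k_b$ tends to zero as $R\to\infty$. Showing this requires that the derivatives of $v/w$ also decay; they do, since $\partial^a(v/w)$ is bounded by $C_a\,v/w$ by admissibility. Hence the tail operator has norm $o(1)$ from $L^{\alpha_1}_{p_1}(w)$ to $L^{\alpha_2}_{p_2}(v)$.

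The local part is compact. On the compact support of $\chi_R$ the weights are comparable to constants, so we only need compactness of $g\mapsto\chi_R g$ from $L^{\alpha_1}_{p_1}$ to $L^{\alpha_2}_{p_2}$ under the strict inequality. This is the Rellich-type statement for bounded domains: factor through $L^{\alpha_1-\delta}_{p_1}$ with $\delta>0$ small, keeping the Sobolev inequality valid. Then use Rellich--Kondrachov for functions supported in a fixed ball, for instance via Fr\'echet--Kolmogorov together with a Littlewood--Paley truncation, where the high frequencies are uniformly small thanks to the gained $\delta$.

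A norm limit of compact operators is compact, which gives (C). The delicate points are the uniform multiplier estimates in the tail and the endpoint cases $p_1=1$; I expect the latter to be handled by quoting~\cite[Thm.~6.7, 6.31]{Triebel3} rather than proving them directly.
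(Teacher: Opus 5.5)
\textbf{Comparison with the paper.} Your route is genuinely different from the paper's and more hands-on. The paper proves nothing itself: it cites the weighted embedding theorem of \cite[Sec.~4.2.3]{EdmundsTriebel} for Triebel--Lizorkin spaces and transfers it via $L^\alpha_p(\bR^d,w)=F^\alpha_{p,2}(\bR^d,w)$ (Lemma~\ref{rem:sobolev}(B)). For (B) it uses the accompanying remark together with the Besov sandwich of Lemma~\ref{rem:sobolev}(D).

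You instead unpack what lies behind that theorem:
\begin{itemize}
\item the lift $f\mapsto wf$;
\item the observation that admissibility (bounded logarithmic derivatives) makes $v/w\in C^\infty_{\mathrm b}$ a pointwise multiplier;
\item the unweighted Bessel/Hardy--Littlewood--Sobolev embedding;
\item for (C), the splitting $f=\chi_R f+(1-\chi_R)f$, where $\|(1-\chi_R)v/w\|_{C^k_{\mathrm b}}\to 0$ makes the tail small in operator norm and the localized piece is compact by Rellich.
\end{itemize}
This is correct for (A) and (C) with $1<p_1\le p_2<\infty$. It is also correct for (B) and for the strict-inequality endpoint cases, where the detour through $B^{\alpha}_{p,1}\hookrightarrow L^{\alpha}_{p}\hookrightarrow B^{\alpha}_{p,\infty}$ costs only an $\epsilon$ of regularity. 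Your argument shows exactly which properties of admissible weights are used. The citation buys brevity and the full $B/F$ scale. As written, though, the paper's argument for (A) and (C) also needs $p_1>1$, because the identification with $F^\alpha_{p,2}$ fails at $p=1$.

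\textbf{The step that fails.} It is the borderline case of (A) with $p_1=1<p_2$ and $\alpha_1-d=\alpha_2-d/p_2$. Your appeal to $F^{\alpha_1}_{1,\infty}\hookrightarrow F^{\alpha_2}_{p_2,2}$ requires $L^{\alpha_1}_1\hookrightarrow F^{\alpha_1}_{1,\infty}$, which is false. The $F^{0}_{1,\infty}$-norm is $\|\sup_j|\Delta_j f|\|_{L_1}$, a maximal-function quantity, and it is unbounded on $L_1$-normalized approximate identities (it grows like $\log(1/\epsilon)$).

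In fact no argument can work, because the claim itself is false in this case. After lifting, the embedding says that $(1-\Delta)^{-s/2}=G_s*\,\cdot$, with $s=\alpha_1-\alpha_2=d(1-1/p_2)\in(0,d)$, maps $L_1(\bR^d,w)$ to $L_{p_2}(\bR^d,v)$. Admissible weights are bounded above and below on the unit ball $B_1$. Test on an approximate identity $g_\epsilon$ concentrated at $0$ with $\|g_\epsilon\|_{L_1}=1$. Fatou's lemma gives $\liminf_{\epsilon\to 0}\|G_s*g_\epsilon\|_{L_{p_2}(B_1)}\geq\|G_s\|_{L_{p_2}(B_1)}=\infty$, because $G_s(x)\sim c\,|x|^{-d/p_2}$ at the origin; this is the failure of Hardy--Littlewood--Sobolev at $p=1$.

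So this sub-case has to be removed from the statement. Alternatively, replace $L^{\alpha_1}_1$ by $F^{\alpha_1}_{1,2}$, the Bessel space over the local Hardy space, where Jawerth--Franke does apply. When $p_1=p_2=1$, equality forces $\alpha_1=\alpha_2$, and the claim follows directly from $v\le Cw$. The paper only invokes the theorem with $p_1>1$, so nothing downstream is affected.
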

\begin{proof} Parts (A) and (C) follow from  \cite[Sec.\,4.2.3, Thm.]{EdmundsTriebel} and the equivalence between $L^\alpha_p(\bR^d,w)$ and $F^\alpha_{p,2}(\bR^d,w)$ mentioned in Lemma~\ref{rem:sobolev} above. Part (B) is covered by~\cite[Sec. 4.2.3, Remark]{EdmundsTriebel} and the embeddings stated 
in Lemma~\ref{rem:sobolev}.
\end{proof}

\begin{thm}\label{thm:sobolev_multiplication}
Let $w$ be an admissible weight, $\alpha\in[0,\infty)$ and $p,p_1,p_2\in[1,\infty)$ be such that $1/p=1/p_1+1/p_2$. Then, there exists $C\in(0,\infty)$ such that for all $f\in L^\alpha_{p_1}(\bR^d,w^{1/p_1})$, and $g\in L^\alpha_{p_2}(\bR^d,w^{1/p_2})$
\begin{equation}
 \|fg\|_{L^\alpha_p(\bR^d,w^{1/p})}
 \leq C\,\|f\|_{L^\alpha_{p_1}(\bR^d,w^{1/p_1})} 
 \,
 \|g\|_{L^\alpha_{p_2}(\bR^d,w^{1/p_2})}.
\end{equation}
\end{thm}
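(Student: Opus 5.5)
The plan is to reduce the statement to the unweighted fractional Leibniz (Kato--Ponce) inequality on $\bR^d$. The reduction uses the weight-absorption equivalence of Lemma~\ref{rem:sobolev}, Item~(A), namely $\|h\|_{L^\alpha_q(\bR^d,v)}\sim\|v\,h\|_{L^\alpha_q(\bR^d)}$ for admissible $v$ and $q\in(1,\infty)$.

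The first step is to check that the powers $w^{1/p}$, $w^{1/p_1}$, $w^{1/p_2}$ are admissible weights. For any $s>0$, the bound $w(x)^s\le c^s w(y)^s(1+|x-y|)^{bs}$ is immediate. The derivative bounds $|\partial^a w^s|\le c_a' w^s$ follow from the chain rule: each derivative of $w^s$ is a finite sum of terms $w^{s-k}\prod_{i=1}^k\partial^{a_i}w$, and each factor satisfies $|\partial^{a_i}w|\le c\,w$.

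The second step is to factor the weight. Since $1/p=1/p_1+1/p_2$, we have $w^{1/p}fg=(w^{1/p_1}f)(w^{1/p_2}g)=:FG$. Assume first $p>1$ (hence $p_1,p_2>1$). By Lemma~\ref{rem:sobolev}, Item~(A),
\begin{equation}
\|fg\|_{L^\alpha_p(\bR^d,w^{1/p})}\le C\,\|FG\|_{L^\alpha_p(\bR^d)} .
\end{equation}
We then apply the classical Kato--Ponce / Grafakos--Oh fractional Leibniz rule:
\begin{equation}
\|FG\|_{L^\alpha_p}\lesssim \|F\|_{L^\alpha_{p_1}}\|G\|_{L_{p_2}}+\|F\|_{L_{p_1}}\|G\|_{L^\alpha_{p_2}} .
\end{equation}
Because $\alpha\ge0$, the Bessel potential norm dominates the $L_q$ norm: $(1-\Delta)^{-\alpha/2}$ is convolution with an $L_1$ kernel of mass one (cf. App.~\ref{sec:bessel}). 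This gives $\|FG\|_{L^\alpha_p}\lesssim\|F\|_{L^\alpha_{p_1}}\|G\|_{L^\alpha_{p_2}}$. Finally, Lemma~\ref{rem:sobolev}, Item~(A) is used again to return to $\|f\|_{L^\alpha_{p_1}(\bR^d,w^{1/p_1})}\|g\|_{L^\alpha_{p_2}(\bR^d,w^{1/p_2})}$.

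If one prefers to avoid citing Kato--Ponce, I would prove the unweighted product estimate via the Triebel--Lizorkin characterization $L^\alpha_q=F^\alpha_{q,2}$ (Lemma~\ref{rem:sobolev}, Item~(B)) together with a Bony paraproduct decomposition $FG=\pi_<(F,G)+\pi_>(F,G)+\pi_0(F,G)$. The paraproducts are handled by the Fefferman--Stein maximal inequality and H\"older's inequality in $L_p(\ell_2)$. The resonant term, for $\alpha>0$, is handled by the standard argument that its dyadic blocks have Fourier support in balls, which is again controlled by maximal functions.

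The main obstacle is the endpoint $p=1$ (or $p_i=1$). There Lemma~\ref{rem:sobolev}, Item~(A) is not available, and the Leibniz rule in $L_1$ for the Bessel scale can fail for non-integer $\alpha$. I would first check how the theorem is used later. If only $p>1$ is needed, I would restrict to that case. If $\alpha\in\bN_0$ is needed, then $L^n_p=W^n_p$ via Item~(C) for $p>1$, and for $p=1$ I would argue directly with the Sobolev norm $W^n_p$: apply the ordinary Leibniz rule $\partial^a(fg)=\sum\binom{a}{b}\partial^bf\,\partial^{a-b}g$, then H\"older's inequality with the weight split as $w^{1/p}=w^{1/p_1}w^{1/p_2}$, and bound lower-order derivatives by the full norm. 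For $p=1$ and non-integer $\alpha$, I would interpret the space as $F^\alpha_{1,2}$, where the paraproduct argument above still applies, since Remark~\ref{rem:lift_wight}, Item~(A) holds for all $0<p<\infty$.
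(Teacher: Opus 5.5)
This is essentially the paper's proof: you absorb the weights via Lemma~\ref{rem:sobolev}(A), after checking (as you do) that $w^{1/p}$ and $w^{1/p_i}$ are admissible, then apply the fractional Leibniz rule, and then bound the lower-order factors by the full Bessel norms; the paper uses the embedding Theorem~\ref{thm:embedding}(A) for this last step where you use the integrable Bessel kernel, which amounts to the same thing. Your endpoint discussion is not needed here, since $p\ge 1$ and $p_i<\infty$ already force $p_1,p_2>1$, and the only application (Lemma~\ref{lem:Z_Psi_n}) has $p>1$; if you do pursue $p=1$, note two things: the unweighted Kato--Ponce bound still holds with target $L_1$ when $p_1,p_2>1$, so the real obstacle is the weight transfer, which Lemma~\ref{rem:sobolev}(A) only provides for $p>1$; and $L^n_1\neq W^n_1$ in general, so your integer-$\alpha$ fallback through $W^n_1$ does not control the stated norm.
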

\begin{proof}
The statement follows from the equivalence of the norms $\|\Cdot\|_{L^\alpha_p(\bR^d,w)}$ and $\|w\Cdot\|_{L^\alpha_p(\bR^d)}$, the fractional Leibniz rule~\cite[Ch.~2]{Muscalu} and Theorem A.5 (A).  Alternatively, one can use \cite[Lemma 5]{BM01}.
\end{proof}

\begin{thm}\label{thm:sobolev_interpolation}
Let $w$ be an admissible weight, $p_1,p_2\in[1,\infty)$, $\alpha_1,\alpha_2\in\bR$, $\theta\in(0,1)$ and 
\begin{equation}
 \alpha=\theta\,\alpha_1+(1-\theta)\,\alpha_2,
 \qquad\qquad
 \frac{1}{p} = \frac{\theta}{p_1} + \frac{1-\theta}{p_2}.
\end{equation}
There exists $C\in(0,\infty)$ such that for all $f\in L^{\alpha_1}_{p_1}(\bR^d,w^{1/p_1})\cap L^{\alpha_2}_{p_2}(\bR^d,w^{1/p_2})$ it holds
\begin{equation}
 \|f\|_{L^\alpha_p(\bR^d,w^{1/p})} \leq C\,\|f\|^\theta_{L^{\alpha_1}_{p_1}(\bR^d,w^{1/p_1})}\, \|f\|^{1-\theta}_{L^{\alpha_2}_{p_2}(\bR^d,w^{1/p_2})}\,.
\end{equation}
\end{thm}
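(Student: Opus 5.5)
The plan is to reduce the weighted statement to the classical unweighted complex interpolation inequality for Bessel potential spaces. The reduction uses the equivalence $\|f\|_{L^\alpha_q(\bR^d,v)}\sim\|v f\|_{L^\alpha_q(\bR^d)}$ from Lemma~\ref{rem:sobolev}~(A), valid for admissible weights $v$ and $q\in(1,\infty)$. First observe that for admissible $w$ and any $s>0$, the power $w^{s}$ is again admissible. Indeed, $w^s(x)\le c^s w^s(y)(1+|x-y|)^{bs}$, and $\partial^a w^s$ is a finite sum of products of $w^{s-k}$ with $k$ derivatives of $w$, each bounded by $w$, so $|\partial^a w^s|\le c_a' w^s$. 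Hence $w^{1/p_1}$, $w^{1/p_2}$ and $w^{1/p}$ are admissible.

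The natural attempt is to set $g:=w^{1/p}f$. Then $w^{1/p_i}f=w^{1/p_i-1/p}g$, and the exponents do not match unless $p_1=p_2$. So I would instead work directly with complex interpolation, via Stein's analytic family. For $z$ in the strip $0\le\mathrm{Re}\,z\le1$, define
\begin{equation}
F(z):=w^{\frac{z}{p_1}+\frac{1-z}{p_2}}\,(1-\Delta)^{\frac{z\alpha_1+(1-z)\alpha_2}{2}}\,f .
\end{equation}
It is cleaner to use the abstract fact that weighted Triebel--Lizorkin spaces form a complex interpolation scale:
\begin{equation}
[F^{\alpha_2}_{p_2,2}(\bR^d,w^{1/p_2}),F^{\alpha_1}_{p_1,2}(\bR^d,w^{1/p_1})]_\theta=F^{\alpha}_{p,2}(\bR^d,w^{1/p}).
\end{equation}
For the weights this is the Stein--Weiss rule $w_0^{1-\theta}w_1^{\theta}=w^{(1-\theta)/p_2+\theta/p_1}=w^{1/p}$. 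For the smoothness and integrability indices it is the standard result \cite{Triebel1}, transported to the weighted setting through the isomorphisms of Remark~\ref{rem:lift_wight}~(A),(B), combined with the weighted $L_q$ interpolation of Stein--Weiss. The interpolation inequality $\|f\|_{[X_0,X_1]_\theta}\le\|f\|_{X_0}^{1-\theta}\|f\|_{X_1}^\theta$ then gives the claim, after identifying $F^\cdot_{q,2}$ with $L^\cdot_q$ via Lemma~\ref{rem:sobolev}~(B).

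A more elementary route is available if the abstract citation is to be avoided. Consider $\phi$ in the unit ball of $L_{p'}$ and simple functions, and apply the three-lines lemma to
\begin{equation}
\Psi(z)=\int (1-\Delta)^{\beta(z)/2}\big(w^{\gamma(z)}f\big)\,\overline{\phi_z}\,\rd x,
\end{equation}
with $\beta(z)=z\alpha_1+(1-z)\alpha_2$, $\gamma(z)=z/p_1+(1-z)/p_2$, and $\phi_z$ the usual Riesz--Thorin deformation of $\phi$. On the boundary lines one needs uniform bounds for the imaginary powers $(1-\Delta)^{\ri t}$ and $w^{\ri t}$. The first follows from the Mikhlin multiplier theorem with polynomial growth in $t$. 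The second is just multiplication by a unimodular function, but its commutation with $(1-\Delta)^{\beta/2}$ must be controlled through Lemma~\ref{rem:sobolev}~(A), applied to the weights $w^{1/p_i}$.

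The main obstacle is the commutation in this elementary route. The equivalence constants in Lemma~\ref{rem:sobolev}~(A) for the complex weight $w^{\gamma(z)}$ must be uniform, at most polynomially growing, along the boundary lines, and the endpoint $p_i=1$ (allowed in the statement) is excluded from Lemma~\ref{rem:sobolev}. I would therefore rely primarily on the Triebel--Lizorkin interpolation citation. For $p_i=1$, I would handle the endpoint either with $B$-spaces or by restricting to $p_i>1$, where the citation applies directly.
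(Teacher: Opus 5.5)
Your main route has a gap. It rests on the identity
$[F^{\alpha_2}_{p_2,2}(\bR^d,w^{1/p_2}),F^{\alpha_1}_{p_1,2}(\bR^d,w^{1/p_1})]_\theta=F^{\alpha}_{p,2}(\bR^d,w^{1/p})$; only the continuous embedding of the interpolation space into the right-hand side is actually needed. The justification you give for it does not work.

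\emph{Why the transport fails.} The isomorphisms of Remark~\ref{rem:lift_wight}~(A) that would carry Triebel's unweighted theorem over are $f\mapsto w^{1/p_1}f$ at one endpoint and $f\mapsto w^{1/p_2}f$ at the other. These are different operators, so no single map intertwines the weighted couple with the unweighted one. This is exactly the exponent mismatch you noticed a few lines earlier. Stein--Weiss only handles weighted $L_q$, not the smoothness index.

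\emph{What a repair would need.} You would have to run a Stein analytic-family argument with $z\mapsto w^{\gamma(z)}$, which needs polynomially growing bounds for $(1-\Delta)^{\ri s}$ and for $w^{\ri s}$ as a pointwise multiplier on $F^{\beta}_{q,2}(\bR^d)$. That is precisely the obstacle you leave open in the elementary route. Lemma~\ref{rem:sobolev}~(A) only controls the real factor $w^{1/p_i}$. For the unimodular factor, admissibility makes all $\partial^a\log w$ bounded, so $w^{\ri s}$ has $C^k$-norms $O((1+|s|)^k)$. So neither route is complete as written; both are repairable, but only with considerable machinery.

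\emph{The missing idea.} No interpolation theory is needed, because the Littlewood--Paley definition of the weighted $F$-norm already has a Calder\'on-product structure. Set $S_\beta f:=(\sum_j 2^{2j\beta}|\cF^{-1}\varphi_j\cF f|^2)^{1/2}$. Then
$2^{2j\alpha}|\cF^{-1}\varphi_j\cF f|^2=(2^{2j\alpha_1}|\cF^{-1}\varphi_j\cF f|^2)^{\theta}(2^{2j\alpha_2}|\cF^{-1}\varphi_j\cF f|^2)^{1-\theta}$.
H\"older for series with exponents $1/\theta$ and $1/(1-\theta)$ gives $S_\alpha f\le(S_{\alpha_1}f)^\theta(S_{\alpha_2}f)^{1-\theta}$ pointwise. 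Now write $w^{1/p}=w^{\theta/p_1}w^{(1-\theta)/p_2}$ and apply H\"older in $x$ with exponents $p_1/\theta$ and $p_2/(1-\theta)$. This yields
\[ \|f\|_{F^{\alpha}_{p,2}(\bR^d,w^{1/p})}\le\|f\|^{\theta}_{F^{\alpha_1}_{p_1,2}(\bR^d,w^{1/p_1})}\,\|f\|^{1-\theta}_{F^{\alpha_2}_{p_2,2}(\bR^d,w^{1/p_2})} \]
with constant $1$. Lemma~\ref{rem:sobolev}~(B) then converts this to Bessel potential norms. This is the paper's proof. The endpoint-dependent weights cost nothing because they are simply multiplied pointwise.

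Your observation that $w^{s}$ is admissible for $s>0$ is correct and is used implicitly by the paper. Your remark that $p_i=1$ is not covered is also correct, and it applies equally to the paper's argument, since that too rests on Lemma~\ref{rem:sobolev}~(B). Both arguments therefore prove the statement for $p_1,p_2\in(1,\infty)$. That is the only case used later, in Lemma~\ref{lem:Z_Psi_n} with $p_1=2$ and $p_2=n$.
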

\begin{proof} 
The statement is a consequence of the H{\"o}lder inequality, cf. \cite[Sec 3]{BM01}. Applying first the H\"older inequality for series with $p'=\frac{1}{\theta}$ and $q'=\frac{1}{1-\theta}$ and then the H\"older inequality for integrals with $\frac{1}{p} = \frac{\theta}{p_1} + \frac{1-\theta}{p_2}$ one has
\begin{multline}
 \|f\|_{L^\alpha_p(\bR^d,w^{1/p})}= \|f\|_{F^\alpha_{p,2}(\bR^d,w^{1/p})} =
 \| (\sum_{j=0}^{\infty}\,2^{2j\alpha }\,|\cF^{-1}\varphi_j\cF f(\Cdot)|^2)^{1/2} \|_{L_p(\bR^d,w^{1/p})}
  \\
\leq  \| w^{1/p}\,( \sum_{j=0}^{\infty}\,2^{2j\alpha_1}\,|\cF^{-1}\varphi_j\cF f(\Cdot)|^{2}\,)^{\theta/2}\,(\sum_{j=0}^{\infty}\,2^{2j \alpha_2 }\,|\cF^{-1}\varphi_j\cF f(\Cdot)|^{2}\,)^{(1-\theta)/2} \|_{L_p(\bR^d )}
\\
\leq 
 \| w^{\frac{\theta}{p_1}}(\Cdot)\,( \sum_{j=0}^{\infty}\,2^{2j  \alpha_1 }\,|\cF^{-1}\varphi_j\cF f(\Cdot)|^{2}\,)^{\theta/2}\|_{L_{p_1/\theta}(\bR^d)}
 \\
\times\,\|w^{\frac{1-\theta}{p_2}}(\Cdot)\,(\sum_{j=0}^{\infty}\,2^{2j  \alpha_2}\,|\cF^{-1}\varphi_j\cF f(\Cdot)|)^{2}\,)^{(1-\theta)/2}  \|_{L_{p_2/(1-\theta)}(\bR^d)}
\\
\leq C\,\|f\|^\theta_{ F^{\alpha_1}_{p_1,2}(\bR^d,w^{1/p_1}) }\, \|f\|^{1-\theta}_{F^{\alpha_2}_{p_2,2}(\bR^d,w^{1/p_2})}\,.
\end{multline}
Using the fact that $\|\Cdot\|_{L^{\alpha}_p(\bR^d,w^{1/p})}$ is equivalent with $\|\Cdot\|_{F^{\alpha}_{p,2}(\bR^d,w^{1/p})}$, one concludes the proof.
\end{proof}

\begin{lem}\label{lem:Z_Psi_n}
Let $w\in L_1(\bR^2)$ be an admissible weight, $n\in\{3,4,\ldots\}$, $\delta\in(0,\infty)$ and $\kappa\in(0,2/(n-1)(n-2))$. 
Then there exists $C\in(0,\infty)$ and $p\in[1,\infty)$ such that for all \mbox{$m\in\{1,\ldots,n-1\}$} and $\Psi\in L_2^1(\bR^2,w^{1/2}) \cap L_n(\bR^2,w^{1/n})$, $Z\in L^{-\kappa}_p(\bR^2,w^{1/p})$  it holds
\begin{equation}
 |\langle Z,\Psi^m \rangle_{L_2(\bR^2,w^{1/2})}| 
 \leq
 C\,\|Z\|_{L^{-\kappa}_p(\bR^2,w^{1/p})}^p+
 \delta\,\|\vec\nabla\Psi\|_{L_2(\bR^2,w^{1/2})}^2+
 \delta\,\|\Psi\|_{L_n(\bR^2,w^{1/n})}^n+\delta.
\end{equation}
\end{lem}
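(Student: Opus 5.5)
We bound the pairing by the generalized Hölder inequality of Remark~\ref{Hoelder-remark}, then interpolate the Sobolev norm of $\Psi^m$ between the energy quantities $\|\vec\nabla\Psi\|_{L_2}$ and $\|\Psi\|_{L_n}$, and finish with Young's inequality. Concretely, choose a large $p$ with dual exponent $q=p/(p-1)$ close to $1$. Remark~\ref{Hoelder-remark} with $\alpha=-\kappa$ gives
\begin{equation}
|\langle Z,\Psi^m\rangle_{L_2(\bR^2,w^{1/2})}|\le C\,\|Z\|_{L^{-\kappa}_p(\bR^2,w^{1/p})}\,\|\Psi^m\|_{L^{\kappa}_q(\bR^2,w^{1/q})}.
\end{equation}
It therefore suffices to show
\begin{equation}
\|\Psi^m\|_{L^\kappa_q(w^{1/q})}\le C\big(1+\|\Psi\|_{L^1_2(w^{1/2})}^{a}\,\|\Psi\|_{L_n(w^{1/n})}^{b}\big)
\end{equation}
for exponents $a,b\ge0$ with $a/2+b/n<1$. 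Since $\|\Psi\|_{L^1_2(w^{1/2})}\lesssim\|\vec\nabla\Psi\|_{L_2(w^{1/2})}+\|\Psi\|_{L_2(w^{1/2})}$, Lemma~\ref{rem:sobolev}(C) applies, and because $w\in L_1$ we have $\|\Psi\|_{L_2(w^{1/2})}\lesssim\|\Psi\|_{L_n(w^{1/n})}$. Young's inequality with exponents $p$, $2/a$, $n/b$ (with slack absorbed into the constant $\delta$) then yields the claim.

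To estimate $\|\Psi^m\|_{L^\kappa_q}$, we proceed as follows.
\begin{itemize}
\item Write $\Psi^m=\Psi\cdot\Psi^{m-1}$ and apply the fractional Leibniz rule, Theorem~\ref{thm:sobolev_multiplication}, iteratively. This bounds the norm by products of $m$ factors $\|\Psi\|_{L^\kappa_{r}(w^{1/r})}$ with $\sum 1/r=1/q$; it is easiest to take all $r=mq$.
\item Interpolate each factor with Theorem~\ref{thm:sobolev_interpolation}, taking $\theta=\kappa$ between $L^1_2(w^{1/2})$ and $L^0_{s}(w^{1/s})$. The exponent $s$ is fixed by $1/r=\kappa/2+(1-\kappa)/s$.
\item Since $s$ may exceed $n$, return to $L_n$ using the embedding of Theorem~\ref{thm:embedding}(A) from $L^1_2$ into $L_s$, which holds for every finite $s$ in $d=2$. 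Alternatively, interpolate $L_s$ between $L_n$ and a Sobolev-embedded $L_{s'}$.
\end{itemize}
Altogether this gives
\begin{equation}
\|\Psi^m\|_{L^\kappa_q}\lesssim\|\Psi\|_{L^1_2}^{m\kappa+\gamma}\,\|\Psi\|_{L_n}^{m(1-\kappa)-\gamma'}.
\end{equation}

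The main obstacle is the exponent bookkeeping: showing that $a/2+b/n<1$ can be achieved. Heuristically, take $q\approx1$ so that the $Z$-term carries exponent $p$. We need roughly $m\kappa/2+m(1-\kappa)/n<1$ in the worst case $m=n-1$, with a genuine loss from the $L_s$ versus $L_n$ mismatch. I would optimize by interpolating $\Psi$ directly in the scale between $H^1$ and $L_n$ at the level of $\Psi^m$, using weighted Gagliardo--Nirenberg, and check that the constraint reduces to $\kappa<2/((n-1)(n-2))$. Indeed,
\begin{equation}
(n-1)\Big(\frac{\kappa}{2}+\frac{1-\kappa}{n}\Big)<1\iff \kappa\,\frac{(n-1)(n-2)}{2n}<\frac1n\iff \kappa<\frac{2}{(n-1)(n-2)},
\end{equation}
which matches the hypothesis, so this scaling is the one to realize.

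The weights are handled throughout by taking $w^{1/r}$ consistently with the exponent $r$, as in the cited theorems. The inclusion between different $L_r(w^{1/r})$ spaces uses Hölder's inequality and $w\in L_1$.
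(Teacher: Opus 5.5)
Your architecture matches the paper's: duality via Remark~\ref{Hoelder-remark}, the iterated Leibniz rule with all exponents equal to $r=mq$, interpolation with $\theta=\kappa$ between $L^1_2(\bR^2,w^{1/2})$ and $L_s(\bR^2,w^{1/s})$, then Young. The gap is in the step you yourself call the main obstacle. It is left open, and the remedy you sketch for $s>n$ cannot work. Your own choices fix the relation
\[
\frac1p+\frac{m\kappa}{2}+\frac{m(1-\kappa)}{s}=1 ,
\]
so the Young step closes only if each power of $\|\Psi\|_{L_s}$ is charged at most $1/s$ in the exponent count. In that count, a power of $\|\Psi\|_{L^1_2}$ is charged $1/2$ and a power of $\|\Psi\|_{L_n}$ is charged $1/n$. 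If $s>n$, both exceed $1/s$. Hence any bound of the form $\|\Psi\|_{L_s}\lesssim\|\Psi\|_{L_n}^{\lambda}\|\Psi\|_{L^1_2}^{1-\lambda}$ makes the Young exponents sum to more than one. This covers the embedding $L^1_2\to L_s$ and interpolation between $L_n$ and an embedded $L_{s'}$. So your unspecified $\gamma,\gamma'$ cannot be chosen admissibly in that regime.

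What is missing is the observation that $s>n$ never has to occur. Indeed, $s\le n$ is equivalent to $1/q\ge m\big(\kappa/2+(1-\kappa)/n\big)$. Your closing computation shows that $\kappa<2/((n-1)(n-2))$ is exactly what makes the right-hand side strictly less than $1$ for every $m\le n-1$. Choose $q$ so that this holds. Then Hölder and $w\in L_1(\bR^2)$ give $\|\Psi\|_{L_s(\bR^2,w^{1/s})}\le C\|\Psi\|_{L_n(\bR^2,w^{1/n})}$, hence $a=m\kappa$, $b=m(1-\kappa)$ with no loss. No Gagliardo--Nirenberg inequality at the level of $\Psi^m$ is needed.

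The paper makes the extremal choice. It sets $1/r=\kappa/2+(1-\kappa)/n$ (so $s=n$), $q=r/m$, and $p'$ the conjugate exponent, which is finite precisely by the hypothesis on $\kappa$. Then $1/p'+m\kappa/2+m(1-\kappa)/n=1$ exactly. The $m$-dependence of $p'$ is removed by the same monotonicity of weighted norms, applied to $(1-\Delta)^{-\kappa/2}Z$, together with $C x^{p'}\le C_\delta x^{p}+\delta$. The result is $1/p=(2-\kappa(n-1)(n-2))/(2n)$, the value at $m=n-1$.

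A single $q$ close to $1$ for all $m$ also fails at $m=1$. There $r=q$, and $1/q=\kappa/2+(1-\kappa)/s$ forces $s<1$, outside the range of Theorem~\ref{thm:sobolev_interpolation}. Tying $q$ to $m$, as the paper does, avoids this. Your reduction of $\|\Psi\|_{L^1_2}^2$ to $\|\vec\nabla\Psi\|_{L_2}^2+\|\Psi\|_{L_n}^n+1$ is fine, and it spells out a step the paper leaves implicit.
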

\begin{proof}
Let $1/r=(1-\kappa)/n+\kappa/2$, $1/q = m/r$ and $1/\pnew=1-1/q$. By H{\"o}lder's inequality
\begin{equation}
 |\langle Z,\Psi^m \rangle_{L_2(\bR^2,w^{1/2})}| \leq C\,\|Z\|_{L^{-\kappa}_{\pnew}(\bR^2,w^{1/\pnew})}\,\|\Psi^m\|_{L^\kappa_q(\bR^2,w^{1/q})},
\end{equation}
for some $C\in(0,\infty)$. Theorem~\ref{thm:sobolev_multiplication} implies that
\begin{equation}
 \|\Psi^m\|_{L^\kappa_q(\bR^d,w^{1/q})} \leq C\, \|\Psi\|_{L^\kappa_{r}(\bR^d,w^{1/r})}^m
\end{equation}
and Theorem~\ref{thm:sobolev_interpolation} implies that
\begin{equation}
 \|\Psi\|_{L^\kappa_r(\bR^d,w^{1/r})} \leq C\,\|\Psi\|_{L^1_2(\bR^d,w^{1/2})}^\kappa\,\|\Psi\|_{L_n(\bR^d,w^{1/n})}^{1-\kappa}
\end{equation}
for some $C\in(0,\infty)$. Combining the above bounds we obtain
\begin{equation}
 |\langle Z,\Psi^n \rangle_{L^2(\bR^d,w^{1/2})}| 
 \leq C\,\|Z\|_{L^{-\kappa}_{\pnew}(\bR^d,w^{1/\pnew})}\,
 \|\Psi\|_{L^1_2(\bR^d,w^{1/2})}^{m\kappa}
 \,\|\Psi\|_{L_n(\bR^d,w^{1/n})}^{m(1-\kappa)}
\end{equation}
for some $C\in(0,\infty)$. Hence, by the Young inequality with $1/p' +(m\kappa)/2+(m(1-\kappa))/n =1$ for every $\delta\in(0,\infty)$ there is $C\in(0,\infty)$ such that
\begin{multline}\label{eq:Z_Psi_n_proof}
 |\langle Z,\Psi^m \rangle_{L_2(\bR^2,w^{1/2})}| 
 \\
 \leq C\,\|Z\|_{L^{-\kappa}_{\pnew}(\bR^2,w^{1/\pnew})}^{\pnew}\,
+\delta\,\|\Psi\|_{L^1_2(\bR^2,w^{1/2})}^2+\delta\,\|\Psi\|_{L_n(\bR^2,w^{1/n})}^n.
\end{multline}
Observe that by the H\"older inequality and the assumption $w\in L_1(\bR^2)$ for all $q,r\in[1,\infty)$ such that $q\leq r$ there exists $C\in(0,\infty)$ such that $\|\Cdot\|_{L_q(\bR^2,w^{1/q})}\leq C\,\|\Cdot\|_{L_r(\bR^2,w^{1/r})}$. Hence, the bound~\eqref{eq:Z_Psi_n_proof} implies the statement of the lemma with $1/p = (2-\kappa(n-1)(n-2))/2n$. 
\end{proof}

\begin{lem}\label{lem:embedding_sphere}
Let $p\in[2,\infty)$ and $\alpha=1-2/p$. Then there exists $C\in(0,\infty)$ such that $\|f\|_{L_p(\bS_R)}\leq C\,\|f\|_{L_2^\alpha(\bS_R)}$ for all $f\in L_2^\alpha(\bS_R)$ and all $R\in[1,\infty)$.
\end{lem}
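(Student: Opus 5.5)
The key point is uniformity in $R\in[1,\infty)$. I would reduce to the unit sphere by scaling, prove the embedding on $\bS_1$ with a weighted-in-frequency norm, and then check that the dilation behaves well uniformly.

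\textbf{Step 1: scaling.} For $f$ on $\bS_R$ set $f_1(\eta):=f(R\eta)$ on $\bS_1$. Since $\rho_R$ is $R^2$ times the pushforward of $\rho_1$, we have $\|f\|_{L_p(\bS_R)}=R^{2/p}\|f_1\|_{L_p(\bS_1)}$. Because $\Delta_R=R^{-2}\Delta_{\bS_1}$, the Parseval identity of Remark~\ref{rem:parseval_them} gives
\begin{equation}
\|f\|^2_{L_2^\alpha(\bS_R)}=R^2\sum_{l}\big(1+l(l+1)/R^2\big)^{\alpha}\|(f_1)_l\|^2_{L_2(\bS_1)}.
\end{equation}
The claim is therefore equivalent to
\begin{equation}
\|f_1\|_{L_p(\bS_1)}\leq C\,R^{1-2/p}\Big(\sum_l (1+l(l+1)/R^2)^{\alpha}\|(f_1)_l\|^2\Big)^{1/2}.
\end{equation}
With $\alpha=1-2/p$ we have $R^{2\alpha}(1+l(l+1)/R^2)^{\alpha}=(R^2+l(l+1))^{\alpha}\geq (1+l(l+1))^{\alpha}$, since $R\geq1$ and $\alpha\geq0$. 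Hence the right-hand side dominates $C\|f_1\|_{L_2^\alpha(\bS_1)}$.

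\textbf{Step 2: the embedding on $\bS_1$.} It remains to prove $\|g\|_{L_p(\bS_1)}\leq C\|g\|_{L^{1-2/p}_2(\bS_1)}$ for the single fixed sphere $\bS_1$. This is the standard critical Sobolev embedding $H^{1-2/p}\hookrightarrow L_p$ in dimension two. I would derive it by localization:
\begin{itemize}
\item cover $\bS_1$ by two stereographic charts, from the north and south poles;
\item take a smooth partition of unity $\chi_1,\chi_2$ subordinate to this cover;
\item pull back each $\chi_i g$ to a compactly supported function on $\bR^2$;
\item apply Theorem~\ref{thm:embedding}(A) with $p_1=2$, $p_2=p$, $\alpha_1=1-2/p$, $\alpha_2=0$ and trivial weights, where the condition $\alpha_1-1\geq -2/p$ holds with equality.
\end{itemize}

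\textbf{Main obstacle.} The hard step is comparing the spherical Bessel norm $\|(1-\Delta_{\bS_1})^{\alpha/2}(\chi_i g)\|$ with the Euclidean $L^\alpha_2(\bR^2)$ norm of the pulled-back function for fractional $\alpha$. This requires two facts:
\begin{itemize}
\item multiplication by $\chi_i$ is bounded on $L^\alpha_2(\bS_1)$;
\item the chart pullback of compactly supported functions is an isomorphism between the spherical and Euclidean $H^\alpha$ spaces.
\end{itemize}
I would prove both for integer orders $\alpha=0,1$. For $\alpha=1$ this is direct, since $\|g\|^2_{L_2^1}=\|g\|^2_{L_2}+\|\vec\nabla g\|^2_{L_2}$ (Remark~\ref{rem:lap_bel_identities}) and $w_R$ is bounded above and below on compact sets. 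I would then pass to $\alpha\in[0,1)$ by complex interpolation, using that $L^\alpha_2(\bS_1)$ is exactly the interpolation space given by the spectral calculus.

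A fallback for Step 2 avoids localization: use the spectral decomposition and the kernel bound of Remark~\ref{rem:spherical_projection} to estimate the Bessel kernel of $(1-\Delta_{\bS_1})^{-\alpha/2}$ by $C\,d(\rx,\ry)^{\alpha-2}$, then conclude with the Hardy–Littlewood–Sobolev inequality on $\bS_1$.
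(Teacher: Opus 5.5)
Your proof is correct, but it takes a different route from the paper. The paper gives no argument of its own and only cites Beckner's sharp Sobolev inequality on the sphere \cite[Thm.~6]{Be01} and the semigroup Sobolev theorem \cite[Thm.~II.2.7(ii)]{Varopoulos}.

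Beckner's inequality is a unit-sphere statement, so obtaining an $R$-uniform constant from it requires exactly your Step~1. Your computation also shows why the critical choice $\alpha=1-2/p$ makes this work: $R^{2\alpha}(1+l(l+1)/R^2)^{\alpha}=(R^2+l(l+1))^{\alpha}\geq(1+l(l+1))^{\alpha}$ for $R\geq1$.

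The Varopoulos route gets uniformity differently. It needs a small-time heat kernel bound $\sup_{\rx}p^{\bS_R}_t(\rx,\rx)\leq C/t$ for $t\in(0,1]$, uniform in $R\geq1$. That approach rests only on local geometric input, so it would also cover families of manifolds that are not dilates of one another. Your argument instead uses the exact self-similarity of the spheres, and through Step~2 it relies only on tools already in the paper: Remark~\ref{rem:parseval_them}, Remark~\ref{rem:lap_bel_identities} and Theorem~\ref{thm:embedding}(A).

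Two minor remarks.

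First, in Step~2 you only need the one-sided bound $\|(\chi_i g)\circ\jmath\|_{L^\alpha_2(\bR^2)}\leq C\|g\|_{L^\alpha_2(\bS_1)}$, with $\jmath$ the relevant stereographic chart. It is cleanest to interpolate the single operator $g\mapsto(\chi_i g)\circ\jmath$, defined on all of $L_2(\bS_1)$. It is bounded at $\alpha=0$ because $w_1$ is bounded below on the bounded set $\jmath^{-1}(\supp\,\chi_i)$. It is bounded at $\alpha=1$ by conformal invariance of the Dirichlet integral in two dimensions. Interpolating an ``isomorphism'' on a subspace of compactly supported functions, as you propose, is more delicate.

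Second, your fallback would not go through with Remark~\ref{rem:spherical_projection} alone. The bound $|P_l|\leq1$ only controls the kernel by $\sum_l(2l+1)(1+l(l+1))^{-\alpha/2}$, which diverges for $\alpha<2$. A kernel bound of the form $d(\rx,\ry)^{\alpha-2}$ would need decay estimates for $P_l(\cos\theta)$, or Gaussian heat kernel bounds combined with subordination.
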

\begin{proof} See e.g.~\cite[Thm.~6]{Be01} or~\cite[Thm.~II.2.7(ii)]{Varopoulos}.
\end{proof}

\begin{Remark}\label{rem:bessel_sobolev_psi}
Let $p\in[2,\infty)$. By Lemma~\ref{lem:embedding_sphere}, there exist $C,C' \in (0,\infty)$ such that for all $\Psi\in L_2^{2-2/p}(\bS_R)$, and all $R\in[1,\infty)$, one has
\begin{multline}
\|\vec\nabla_R\Psi\|_{L_p(\bS_R)}
\leq C\,\|\vec\nabla_R\Psi\|_{L_2^{1-2/p}(\bS_R)}
\leq C\,\|(1-\Delta_R)^{1/2}\, \Psi\|_{L_2^{1-2/p}(\bS_R)}
\leq C'\,\|\Psi\|_{L_2^{2-2/p}(\bS_R)}.
\end{multline}
\end{Remark}

\section{Functional analysis}\label{sec:functional_analysis}
Let $H$ and $E$ be two separable Hilbert spaces which are equipped with the norms $\|\Cdot\|_{H}$ and $\|\Cdot\|_E$, respectively, $\cL(H)$ denote the vector space of all bounded linear operators from $H$ to $H$ and $T\,:\,\mathrm{Dom}(T)\subset H\to H$ be a self-adjoint operator with the spectral measure $\cP_T$ such that $T=\int_{\bR}\,\lambda\,\cP_T(\rd\lambda)$.
\begin{Remark}\label{rem:commute}
The following definitions and facts are standard\,:\,\,%
\begin{itemize}
\item[(A)]
The operator $T$ commutes with a bounded operator $S\in\cL(H)$ if $R(z)= (T-z)^{-1}$ commutes with $S$.
\item[(B)]
We denote by $\Phi\,:\,B(\sigma(T),\mathbb{C}) \ni F\mapsto F(T)\in \cL(H)$ the standard functional calculus for self-adjoint   operators, where $B(\sigma(T),\mathbb{C})$ is the space of Borel functions. We recall that if $T$ commutes with some $S\in \cL(H)$ then also $F(T)$ commute with this $S$.
\item[(C)]
An operator $S\in\cL(H)$ commutes with $T$ if and only if $S$ commutes
with every spectral projection associated to $T$~\cite[Thm.~12.22]{Rudin91}.

\item[(D)] $T$ is compact if and only if its spectrum consists of eigenvalues of finite multiplicity and
forms a finite set or a sequence converging to zero.

\item[(F)]
A bounded linear operator $K\,:\, H\to E$ is called Hilbert-Schmidt operator if for some orthonormal basis $\{e_{\alpha}\}$ in $H$ it holds $\|K\|_{\mathrm{HS}}:= \left(\sum_{\alpha}\,\|K\,e_{\alpha}\|_E^2\right)^{1/2}<\infty$. We call $\|\Cdot\|_{\mathrm{HS}}$ the Hilbert-Schmidt norm of the operator $K$. Equivalently, if $K\in\mathcal{L}(H)$ is Hilbert-Schmidt, one has $\|K\|^2_{\mathrm{HS}}=\mathrm {Tr}(KK^*)<\infty$.
\end{itemize}
\end{Remark}

\begin{lem}(Schur’s test)\label{lem:schur-test}
 Let $(X,\mu)$, $(Y,\nu)$ be $\sigma-$finite measure spaces and let $K\,:\,X\times Y\to \bR$ be a measurable kernel for which there exist measurable functions $p\,:\,X\to (0,\infty)$, $q\,:\,Y\to (0,\infty)$ and constants $\alpha,\beta >0$ such that for a.e. $x\in X$ and a.e. $y\in Y$ it holds that
\begin{equation}
\begin{gathered}
\int_{Y}\,|K(x,y)|\,q(y)\,\nu(\rd y)\leq \alpha\,p(x),
\quad
\int_{X}\,|K(x,y)|\,p(x)\,\mu(\rd x)\leq \beta\,q(y)\,,
\end{gathered}
\end{equation}
respectively. Then, the mapping $T\,:\, L_2(Y,\nu)\to L_2(X,\mu)$ defined by
$$Tf(x) = \int_{Y}\,K(x,y)\,f(y)\,\nu(\rd y)$$
is a bounded operator such that its norm is bounded from above by $\sqrt{\alpha \beta}$.
\end{lem}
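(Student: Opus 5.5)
The plan is to prove Schur's test through a weighted Cauchy--Schwarz inequality. For $f\in L_2(Y,\nu)$ and a.e.\ $x\in X$, write
\[
|K(x,y)f(y)|=\bigl(|K(x,y)|\,q(y)\bigr)^{1/2}\bigl(|K(x,y)|\,q(y)^{-1}\bigr)^{1/2}|f(y)|.
\]
Applying Cauchy--Schwarz in $L_2(Y,\nu)$ and then the first hypothesis gives
\begin{equation}
|Tf(x)|^2\leq \Bigl(\int_Y |K(x,y)|\,q(y)\,\nu(\rd y)\Bigr)\Bigl(\int_Y |K(x,y)|\,\frac{|f(y)|^2}{q(y)}\,\nu(\rd y)\Bigr)
\leq \alpha\,p(x)\int_Y |K(x,y)|\,\frac{|f(y)|^2}{q(y)}\,\nu(\rd y).
\end{equation}
This bound also shows that the integral defining $Tf(x)$ converges absolutely for a.e.\ $x$ whenever the right-hand side is finite, and the next step confirms that this happens almost everywhere.

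Next we integrate over $X$. Since the integrand is non-negative and measurable and both measures are $\sigma$-finite, Tonelli's theorem lets us exchange the order of integration. The second hypothesis then yields
\begin{equation}
\|Tf\|_{L_2(X,\mu)}^2\leq \alpha\int_Y \frac{|f(y)|^2}{q(y)}\Bigl(\int_X |K(x,y)|\,p(x)\,\mu(\rd x)\Bigr)\nu(\rd y)\leq \alpha\beta\,\|f\|_{L_2(Y,\nu)}^2 .
\end{equation}
Finiteness of the right-hand side implies the inner $y$-integral in the first display is finite for $\mu$-a.e.\ $x$, so $Tf(x)$ is well defined almost everywhere.

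Linearity of $T$ is clear, and measurability of $Tf$ follows from Fubini applied on $\sigma$-finite pieces where the integrals are finite. Hence $T$ is bounded with $\|T\|\leq\sqrt{\alpha\beta}$. The only delicate point is this measure-theoretic bookkeeping, namely justifying a.e.\ absolute convergence before invoking Fubini. The ordering above handles it, because the Tonelli estimate is carried out on the non-negative majorant first.
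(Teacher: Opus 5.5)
The paper states Schur's test without proof, as a standard result (it is later used only with $p=q=1$ to bound $\hat K_{R,N}$ on $L_2(\bS_R)$), so there is no argument in the paper to compare against. Your proof is the standard one and it is correct and complete: the weighted Cauchy--Schwarz splitting $|K|=(|K|q)^{1/2}(|K|q^{-1})^{1/2}$, followed by Tonelli, gives $\|Tf\|^2_{L_2(X,\mu)}\leq\alpha\beta\|f\|^2_{L_2(Y,\nu)}$.

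Two cosmetic points. First, in your first display the left-hand side should be $\bigl(\int_Y |K(x,y)|\,|f(y)|\,\nu(\rd y)\bigr)^2$ rather than $|Tf(x)|^2$, because absolute convergence is only established afterwards. Integrating that quantity over $X$ then directly gives its $\mu$-a.e.\ finiteness. Second, measurability of $Tf$ is cleanest if you apply Tonelli to the non-negative product-measurable pieces $K^{\pm}f^{\pm}$, treating real and imaginary parts separately if $f$ is complex. Their $y$-integrals are measurable in $x$ and finite $\mu$-a.e., so $Tf$ is an a.e.\ defined measurable function.
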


\begin{Remark}
For $R\in\bN_+$, and $k\in\mathbb{Z}$, it holds that
\begin{equation}
\mathrm{Dom}\,(-\Delta_R)^k=\mathrm{Dom}(1-\Delta_R)^k= L^{2k}_2(\bS_R),
\quad
\mathrm{Dom}\,(-\Delta)^k=\mathrm{Dom}(1-\Delta)^k= L^{2k}_2(\bR^2).
\end{equation}
\end{Remark}

\begin{Remark}\label{rem:taylor_semigroup}
Let $H$ be a separable Hilbert space and $S$ denote a $C_0$-semigroup on $H$ generated by the self-adjoint operator $T$, i.e., $S(t)=\re^{-t\,T}$. Writing up the Taylor formula with remainder $R_n$ results in
\begin{equation}
S(s) f= \sum_{k=0}^n\,\frac{(s-t)^k}{k!}\,\frac{\rd^k}{\rd t^k}\,S(t) f + R_n(s,t)f,
\end{equation}
where
\begin{equation}
R_n(s,t)f = \frac{1}{n!}\,\int_t^s\,(s-r)^n\,\frac{\rd^{n+1}}{\rd r^{n+1}}\,S(r)\,f \rd r,
\end{equation}
provided $f\in \mathrm{Dom}(T^{n+1})$.
\end{Remark}
\subsection*{Properties of $(-\Delta_R)$}\label{subsec:deltaR} 
Note that  $(-\Delta_R)$ has discrete spectrum accumulating to infinity as was shown in Sec.~\ref{sec:spherical_harmonics}. By Remark~\ref{rem:commute}~Item~(D) the operator $(1-\Delta_R)^{-1}$ is compact.
\begin{Remark}\label{rem:spectral}
Evoking the spectral calculus for the self-adjoint operator $(-\Delta_R)$, for any Borel measurable function $F\,:\,\mathrm{spect}(-\Delta_R)\to \bR$, one defines the operator
\begin{equation}
\begin{aligned}
F(-\Delta_R) := \sum_{l=0}^{\infty}\,(2l+1)\,F(l(l+1)/R^2)\,\cP_{R,l}
 =  \sum_{l=0}^{\infty}\,(2l+1)\,\mathrm{Tr}( F(-\Delta_R)\cP_{R,l})\,\cP_{R,l}\,,
\end{aligned}
\end{equation}
where $\cP_{R,l}$ is given in Remark~\ref{rem:spherical_projection}. In particular, $\mathrm{spec}(F(-\Delta_R)) = F(\mathrm{spec}(-\Delta_R))$. Moreover, one has
$$\|F(-\Delta_R)\|\leq \|F\|_{\infty},
\quad\text{where}\quad\|F\|_{\infty}=\mathrm{max}\{| F(l(l+1)/R^2)|\,:\,l\in\bN_0\}.$$
\end{Remark}

\begin{Remark}\label{rem:expansion_covariance_harmo}
Let $G_{R,N}:= (1-\Delta_R)^{-1}(1-\Delta_R/N^2)^{-2}$ and $f\in L_2(\bS_R)$ such that
$$(G_{R,N} f)(\rx) = \int_{\bS_R} G_{R,N}(\rx,\rx')\,f(\rx')\,\rho_R(\rd\rx').$$
Using Remark~\ref{rem:spectral}, one has
\begin{equation}\label{eq:trace_regularized_covarince}
\begin{aligned}
G_{R,N}(\rx,\rx')&=  \frac{1}{4\pi R^2}\sum_{l=0}^{\infty}\frac{(2l+1)}{(1+l(l+1)/R^2)(1+l(l+1)/(NR)^2)^2}\,P_l(\frac{\rx\cdot\rx'}{R^2}).
\end{aligned}
\end{equation}
Observe that for all $Q\in O(3)$, and all $\rx,\rx'\in\bS_R$, one obtains
$$G_{R,N}(\rx,\rx')=G_{R,N}(Q\rx,Q\rx').$$
This is due to the invariance property of the $l-$th Legendre polynomial $P_l$, which is given in Remark~\ref{rem:spherical_projection}. Furthermore, it holds that
\begin{equation}
\begin{gathered}
\mathrm{Tr}(G_{R,N})=\sum_{l=0}^{\infty}\,\frac{(2l+1)}{(1+l(l+1)/R^2)(1+l(l+1)/R^2N^2)^2}\,,
\end{gathered}
\end{equation}
where we used the facts that $P_l(1) = 1$ and $1/(4\pi R^2)\int_{\bS_R}\rho_R(\rd\rx) = 1$\,.
\end{Remark}

\begin{Remark}\label{rem:fundamental_solution}
Let $G$ denote a fundamental solution for the positive self-adjoint operator $(1-\Delta)$, where $\Delta$ is the Laplacian on $\bR^2$, i.e., $(1-\Delta)G(x,x')=\delta (x-x')$. One writes the resolvent of the operator $(1-\Delta)$ as a convolution, i.e., $(1-\Delta)^{-1} \phi = G\ast_{\bR^2}\phi$ for all $\phi\in L_2(\bR^2)$. Using the translational invariance property of $G$ one determines the mapping $G\,:\, \bR^2\to(0,\infty)$
 \begin{equation}
 G(x) := \frac{1}{4\pi}\int_0^{\infty}\re^{-\pi|x|^2/t}\,\re^{-t/(4\pi)}\,\frac{\rd t}{t}
 \end{equation}
such that $\|G\|_{L_1(\bR^2)} = 1$. Furthermore, there exist constants $C,c\in (0,\infty)$ such that for all measurable sets $A,B\subset \bR^2$ and all $\phi\in L_2(\bR^2)$ with $\supp(\phi)\subset A$, it holds~(cf.\cite[Sec.~11.1]{ISEM27})
\begin{equation}
\|(1-\Delta)^{-1}\,\phi\|_{L_2(B)} \leq C\,\exp(-c\,\mathrm{dist}(A,B))\,\|\phi\|_{L_2(A)}\,.
\end{equation}
On account of Remark~\ref{rem:Fourier_Bessel_Fundamental}, as $\|x-x'\|\to\infty$, one infers
\begin{equation}
G(x,x')
= \frac{\re^{-\|x-x'\|}}{\sqrt{8\,\pi\,\|x-x'\|}}\,(1+o(1)).
\end{equation}
Observe that $G(x,x')$ decays exponentially at infinity. Using once again Remark~\ref{rem:Fourier_Bessel_Fundamental}, as $\|x-x'\|\to 0$, it follows that
\begin{equation}\label{eq:Fourier_Bessel_diverging}
G(x,x') = -\frac{1}{2\pi}\log \|x-x'\|\,(1+ o(1)),
\end{equation}
where $\|\Cdot\|$ denotes the Euclidean norm on $\bR^2$. The above asymptotic behaviour of the integral kernel $G$ implies that the covariance of the Gaussian measure $\cN(0,(1-\Delta)^{-1})$ in Sec.~\ref{sec:measure_on_plane} diverges near the diagonal, i.e., $G(x,x')$ is $C^{\infty}(\bR^2)$ except at $x=x'$. 
\end{Remark}

\begin{Remark}\label{rem:fundamental_solution_sphere}
Note that $\bS_R$ is a Riemannian manifold, hence it behaves locally like a Euclidean space\footnote{The $d$-dimensional sphere is locally homeomorphic to $\bR^d$ at every point~\cite[Prop.~18.2.1]{Berger2}.}. Accordingly, the fundamental solution of the Laplace equation on $\bS_R$ is expected to behave locally like that on $\bR^2$. Let $G_R$ denote the fundamental solution of the positive self-adjoint operator $(1-\Delta_R)$. Using Eq.~\eqref{eq:Fourier_Bessel_diverging} together with Remark~\ref{def:sphere_distance}, one expects that $G_R$ has a logarithmic singularity of the form $\log(d_R(\rx,\rx'))$ along the diagonal. In particular, for small $\theta$, i.e., near coinciding points, $G_R$ diverges. An explicit formula for the integral kernel of $(1-\Delta_1)$ is given in~\cite[Eq.~(1.8) and Thm.~4.1]{DGR24}, and generalized to arbitrary $R>0$ in~\cite[Sec.~4.3]{DGR24}.
\end{Remark}

\begin{lem}\label{lem:trace}\cite[Lemma\,B.1]{DDJ}
Let $\kappa\in(0,\infty)$. There exists $C\in(0,\infty)$ such that for all $R,N\in[1,\infty)$, it holds that
\begin{equation}
 -C\leq \sum_{l=0}^\infty \frac{(2l+1)}{2R^2\,(1+l(l+1)/R^2)\,(1+l(l+1)/(NR)^2)^\kappa} - \log (N+1) \leq C.
\end{equation}
\end{lem}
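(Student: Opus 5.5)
We compare the sum with an integral. Write $x=l/R$, so the summand is $\frac{1}{R}\,g(l/R)$ with
\begin{equation}
g(x)= \frac{x+1/(2R)}{(1+x^2+x/R)\,(1+(x^2+x/R)/N^2)^\kappa}.
\end{equation}
The expected main term is $\int_0^\infty \frac{x\,\rd x}{(1+x^2)(1+x^2/N^2)^\kappa}$. The plan has three steps.

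\textbf{Step 1 (replace the summand by a clean one).} Set $h(x):=\frac{x}{(1+x^2)(1+x^2/N^2)^\kappa}$. We show that $\sum_{l\ge0}\frac1R|g(l/R)-h(l/R)|\le C$ uniformly in $R,N\ge1$. The term $\frac{1}{2R}$ in the numerator contributes at most
\begin{equation}
\frac{1}{2R^2}\sum_l \frac{1}{1+l^2/R^2}\le C,
\end{equation}
because this sum is $O(R)$. Replacing $x^2+x/R$ by $x^2$ in both denominator factors changes each factor by a relative amount $O\big(\frac{x/R}{1+x^2}\big)$. This gives a difference bounded by $\frac{x}{1+x^2}\cdot\frac{x/R}{1+x^2}$, whose Riemann sum $\frac1R\sum_l$ is $O(1/R)\cdot O(R)\cdot\frac1R$ and hence bounded. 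For the $N$-factor we use $|(1+a)^{-\kappa}-(1+b)^{-\kappa}|\le \kappa|a-b|$ for $a,b\ge0$. With $a-b=x/(RN^2)$, this gives a contribution at most $\frac1R\sum_l \frac{x}{1+x^2}\cdot\frac{x}{RN^2}$. This is bounded only after also using the decay $(1+x^2/N^2)^{-\kappa}$, so we refine to $|(1+a)^{-\kappa}-(1+b)^{-\kappa}|\le \kappa\,|a-b|\,(1+b)^{-\kappa}$ for $a\ge b$. The resulting sum is then $\lesssim \frac1{R^2N^2}\sum_l (1+l^2/(RN)^2)^{-\kappa}$ restricted to the relevant range, which is $O(1)$ only if $\kappa$ is large.

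This is the delicate point. For small $\kappa$ the crude bound fails, and we instead exploit the monotone structure of the factor directly, comparing the $l$-th term with the integral over $[l,l+1]$.

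\textbf{Step 2 (sum versus integral).} The function $h$ is increasing on $[0,1]$ and then eventually decreasing, with $\sup h\le 1/2$, and it has bounded variation, uniformly in $N$. The standard estimate
\begin{equation}
\Big|\frac1R\sum_{l\ge0}h(l/R)-\int_0^\infty h\Big|\le \frac{C}{R}\,\mathrm{TV}(h)
\end{equation}
then gives an $O(1)$ error. A cleaner route that avoids Step 1's difficulty is to apply this comparison directly to $g$. For fixed $R,N$, $g$ is unimodal up to a bounded number of monotonicity changes, with total variation $O(1)$, so $\frac1R\sum_l g(l/R)=\int_0^\infty g+O(1)$. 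One then compares $\int g$ with $\int h$ via the substitution $u=x^2+x/R$, under which the terms become exactly $\frac12\frac{\rd u}{(1+u)(1+u/N^2)^\kappa}$. We adopt this route. It makes Step 1 trivial, since the numerator $x+\frac{1}{2R}=\frac12\frac{\rd u}{\rd x}$.

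\textbf{Step 3 (evaluate the integral).} We compute
\begin{equation}
I_N=\frac12\int_0^\infty \frac{\rd u}{(1+u)(1+u/N^2)^\kappa}.
\end{equation}
Split at $u=N^2$. On $[0,N^2]$ the factor $(1+u/N^2)^{-\kappa}\in[2^{-\kappa},1]$, and comparing the integrand with $\frac{1}{1+u}$ using $1-(1+t)^{-\kappa}\le\kappa t$ gives $\frac12\log(1+N^2)+O(1)$. On $[N^2,\infty)$ the substitution $u=N^2v$ gives
\begin{equation}
\int_1^\infty v^{-1}(1+v)^{-\kappa}\,\rd v=O(1).
\end{equation}
Thus $I_N=\log N+O(1)=\log(N+1)+O(1)$.

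The main obstacle is the uniform total variation bound for $g$ in Step 2, i.e. bounding the number of monotonicity intervals of $g$ independently of $R$ and $N$. We handle it by writing $g(x)=\tfrac12 u'(x)F(u(x))$, where $F(u)=\frac{1}{(1+u)(1+u/N^2)^\kappa}$ is decreasing and $u'$ is increasing and linear. Then
\begin{equation}
\mathrm{TV}(g)\le 2\sup|g|+\int |u'|\,|F'(u)|\,u'\,\rd x\lesssim 1.
\end{equation}
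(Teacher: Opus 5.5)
Your final argument (the TV-based sum-to-integral comparison of Step 2, the substitution $u=x^2+x/R$, and Step 3, with Step 1 discarded) is correct, and it is organized differently from the paper's. The paper writes the sum as $\int_0^\infty f(\lfloor l\rfloor)\,\rd l$ and rescales $l\mapsto Rl$. It then replaces $\lfloor Rl\rfloor/R$ by $l$, a shift of at most $1/R$; the resulting $O(1)$ error is asserted rather than checked. Finally it swaps $\frac{l}{1+l^2}$ for $\frac{1}{1+l}$ and compares with $\frac{1}{(1+l)(1+(1+l)/N)}$, whose integral is exactly $\log(N+1)$.

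You instead bound the discretization error by $\frac1R\mathrm{TV}(g)$ for the exact rescaled summand. Your substitution $u=x^2+x/R$ then shows $\int_0^\infty g=I_N$ exactly, independently of $R$. So all of the $R$-dependence is an $O(1/R)$ discretization error, and no pointwise comparison of the $R$-dependent integrand with a clean one is needed. This makes rigorous the step the paper only sketches, and gives slightly sharper information. The paper's version is shorter and produces $\log(N+1)$ as an exact integral rather than as $\tfrac12\log(1+N^2)+O(1)$.

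Two minor remarks. First, the TV bound deserves one explicit line. From $g'=F(u)+\tfrac12(u')^2F'(u)$, $(u')^2=4u+R^{-2}$ and $|F'(u)|\le(1+\kappa)(1+u)^{-2}$ you get $\mathrm{TV}(g)\le\int_0^\infty\frac{\rd x}{1+x^2}+\frac{1+\kappa}{2}\int_0^\infty\frac{\sqrt{4u+1}}{(1+u)^2}\,\rd u$, uniformly in $R,N\ge1$. Your extra term $2\sup|g|$ is harmless but is not what this computation produces. Second, the small-$\kappa$ obstruction you found in Step 1 comes only from the crude inequality you used. The mean value theorem gives $|(1+a)^{-\kappa}-(1+b)^{-\kappa}|\le\kappa|a-b|(1+b)^{-\kappa-1}$ for $a\ge b\ge0$, which makes that correction $O(1/(RN))$ for every $\kappa>0$.
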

\begin{proof}
Observe that the expression in the statement of the lemma coincides with
\begin{equation}
 \int_0^\infty \frac{(2\floor{l}+1)\,\rd l}{2R^2\,(1+\floor{l}(\floor{l}+1)/R^2)\,(1+\floor{l}(\floor{l}+1)/(NR)^2)^\kappa} - \int_0^\infty \frac{\rd l}{(1+l)(1+(1+l)/N)}.
\end{equation}
The absolute value of the above expression is bounded by
\begin{equation}
 \int_0^\infty \left|\frac{(2\floor{R l}+1)/R}{2(1+\floor{R l}(\floor{R l}+1)/R^2)\,(1+\floor{R l}(\floor{R l}+1)/(NR)^2)^\kappa} - \frac{1}{(1+l)(1+(1+l)/N)}\right|\rd l.
\end{equation}
Using $0 \leq l-\floor{R l}/R \leq 1$, we show that there exists $\hat C\in(0,\infty)$ such that the above expression is bounded by
\begin{equation}
 \hat C+ \int_0^\infty \left|\frac{1}{(1+l)(1+l^2/N^2)^\kappa} - \frac{1}{(1+l)(1+(1+l)/N)}\right|\rd l
 \leq C.
\end{equation}
This finishes the proof. 
\end{proof}

\begin{Remark}\label{rem:l_brackets}
For later use let
\begin{equation}
\begin{gathered}
\langle l\rangle= (1+l(l+1)/R^2)(1+l(l+1)/(N^2R^2))^2,
\quad
\ll l\gg:=(1+l(l+1)/R^2)\,.
\end{gathered}
\end{equation}
\end{Remark}

\begin{Remark}\label{rem:reproducing_kernel_sphere}
For all $R,N\in\bN_+$, we let $Q_{R,N}:=(1-\Delta_R)(1-\Delta_R/N^2)^2$, where $Q_{R,N}\,:\, L^6_2(\bS_R)\subset L_2(\bS_R)\to L_2(\bS_R)$ is a self-adjoint and positive operator. Thus, $(\re^{-t\,Q_{R,N}})_{t\in[0,\infty)}$ is a $C_0$-semigroup of contractions on $L^1_2(\bS_R)$. In particular, $\re^{-t\,Q_{R,N}}\,L^1_2(\bS_R)\subseteq L^1_2(\bS_R)$. 
\end{Remark}

\begin{Remark}\label{rem:trace_semigroup}
Using Remark~\ref{rem:spectral}, one obtains
\begin{equation}
\re^{-t\,Q_{R,N}}=\sum_{l=0}^{\infty}\,(2l+1)\,\re^{-\langle l\rangle t}\,\cP_{R,l},
\end{equation}
where $\langle l\rangle$ was introduced in Remark~\ref{rem:l_brackets}. Hence, for fixed $R,N\in\bN_+$ it holds that
\begin{equation}
\int_{0}^{\infty} \mathrm{Tr}(\re^{-2t\,Q_{R,N}})\, \rd t = \sum_{l=0}^{\infty} (2l+1)\,\int_{0}^{\infty} \re^{-2t\,\langle l\rangle }\, \rd t =\frac{1}{2}\,\mathrm{Tr}(G_{R,N})<\infty\,.
\end{equation}
The last bound follows from Remark~\ref{rem:expansion_covariance_harmo} and Lemma~\ref{lem:trace}.
\end{Remark}

\begin{lem}\label{lem:contraction_semigroup}
For all $R,N\in [1,\infty)$, and $t>0$, the operator $\re^{-tQ_{R,N}}\,:\, L_2(\bS_R)\to L^1_2(\bS_R)$
\begin{itemize}
\item[(A)] is bounded
\item[(B)] is a Hilbert--Schmidt operator.
\end{itemize}
\end{lem}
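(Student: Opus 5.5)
The plan is to diagonalize $\re^{-tQ_{R,N}}$ in the basis of spherical harmonics, as in Remark~\ref{rem:trace_semigroup}, and reduce both claims to estimates on the scalar sequence $\re^{-t\langle l\rangle}$. Let $\{Y_{l,m}\}$ be the orthonormal basis of $L_2(\bS_R)$ from Sec.~\ref{sec:spherical_harmonics}. By Remark~\ref{rem:spectral}, each $Y_{l,m}$ is an eigenvector of $\re^{-tQ_{R,N}}$ with eigenvalue $\re^{-t\langle l\rangle}$, where $\langle l\rangle$ is as in Remark~\ref{rem:l_brackets}. Each $Y_{l,m}$ is also an eigenvector of $(1-\Delta_R)^{1/2}$ with eigenvalue $\ll l\gg^{1/2}$. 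Since $\|g\|_{L^1_2(\bS_R)}=\|(1-\Delta_R)^{1/2}g\|_{L_2(\bS_R)}$, it follows that
\begin{equation}
\|\re^{-tQ_{R,N}}Y_{l,m}\|_{L^1_2(\bS_R)} = \ll l\gg^{1/2}\,\re^{-t\langle l\rangle}.
\end{equation}

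For Item~(A), take $f=\sum_{l,m} f_{lm}Y_{l,m}\in L_2(\bS_R)$. By Parseval (Remark~\ref{rem:parseval_them}),
\begin{equation}
\|\re^{-tQ_{R,N}}f\|^2_{L^1_2(\bS_R)}=\sum_{l,m}\ll l\gg\,\re^{-2t\langle l\rangle}\,|f_{lm}|^2\leq \Big(\sup_{l\in\bN_0}\ll l\gg\,\re^{-2t\langle l\rangle}\Big)\,\|f\|^2_{L_2(\bS_R)}.
\end{equation}
The supremum is finite. Indeed, $\langle l\rangle\geq \ll l\gg\geq 1$, and $s\mapsto s\,\re^{-2ts}$ is bounded on $[1,\infty)$ by $\max\{1,1/(2t\re)\}$. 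Hence the operator norm is at most $\max\{1,(2t\re)^{-1}\}^{1/2}$, uniformly in $R,N$. Alternatively, one could invoke functional calculus: $\|(1-\Delta_R)^{1/2}\re^{-tQ_{R,N}}\|\leq\|F\|_\infty$ with $F(\lambda)=(1+\lambda)^{1/2}\re^{-t(1+\lambda)(1+\lambda/N^2)^2}$.

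For Item~(B), use the definition in Remark~\ref{rem:commute} Item~(F) with the orthonormal basis $\{Y_{l,m}\}$ of $H=L_2(\bS_R)$ and $E=L^1_2(\bS_R)$:
\begin{equation}
\|\re^{-tQ_{R,N}}\|^2_{\mathrm{HS}}=\sum_{l=0}^\infty (2l+1)\,\ll l\gg\,\re^{-2t\langle l\rangle}.
\end{equation}
The only real point is convergence of this series for fixed $R,N$ and $t>0$, and it is elementary. Since $\langle l\rangle\geq \ll l\gg\geq l^2/R^2$, the summand is bounded by $(2l+1)(1+l(l+1)/R^2)\,\re^{-2tl^2/R^2}$. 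This has Gaussian decay in $l$, so the series converges. The finiteness is not uniform in $t\to0$, but the statement fixes $t>0$, so no uniformity is needed.

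The main obstacle is therefore only bookkeeping. The Hilbert--Schmidt norm must be computed into the target space $L^1_2(\bS_R)$, not $L_2(\bS_R)$; that is why the extra factor $\ll l\gg$ appears, and the Gaussian decay absorbs it. Item~(A) also follows from Item~(B), since $\|K\|\leq\|K\|_{\mathrm{HS}}$.
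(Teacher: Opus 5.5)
Your proof is correct and follows the paper's route: you diagonalize $\re^{-tQ_{R,N}}$ in the spherical harmonics, bound the operator norm by a scalar supremum via Parseval, and compute the Hilbert--Schmidt norm into $L^1_2(\bS_R)$ as $\sum_{l}(2l+1)\ll l\gg\re^{-2t\langle l\rangle}$. The only difference is in (A): the paper inserts $Q_{R,N}^{1/6}$, with an $N$-dependent constant, to get the sharper small-time rate $C\,t^{-1/6}$ of Eq.~\eqref{eq:M-bound} that Lemma~\ref{lem:lipschitz} later uses, whereas your comparison $\ll l\gg\le\langle l\rangle$ gives an $R,N$-uniform bound of order $t^{-1/2}$; either suffices for the statement as posed.
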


\begin{proof}
Let $t > 0$. Note that for each $N\in\bN_+$ there exists $C\in(0,\infty)$ such that for all $h\in L_2(\bS_R)$ it holds that
\begin{equation}
\begin{gathered}
 \|\re^{-t\,Q_{R,N}}h\|_{L^1_2(\bS_R)} = \|(1-\Delta_R)^{1/2}\,\re^{-t\,Q_{R,N}}\,h\|_{L_2(\bS_R)}
\leq C\, \|Q^{1/6}_{R,N}\,\re^{-tQ_{R,N}}\,h\|_{L_2(\bS_R)}\,.
 \end{gathered}
\end{equation}
Utilizing Remark~\ref{rem:parseval_them} one shows that for all $N\in \bN_+$
there exists $M\in (0,\infty)$ 
such that $\|\re^{-t\,Q_{R,N}}h\|_{L^1_2(\bS_R)} \leq M\,\|h\|_{L_2(\bS_R)}$
with
\begin{equation}
M:= \sup_{l\in\bN_0}\,(\langle l\rangle^{1/3}\,\re^{-2t\langle l\rangle}\,)^{1/2} \leq \sup_{x\geq 1}\,(x^{1/6}\,\re^{-tx})\leq \frac{C}{t^{1/6}}\,, \label{eq:M-bound}
\end{equation}
where $x:= \langle l \rangle$, which was introduced in Remark~\ref{rem:l_brackets} and $C\in (0,\infty)$ is a constant. The last bound follows from the fact that for all $t>0$, the map $x\mapsto x^{s}\re^{-tx}$ can be bounded uniformly for all $x>0$ by $C/t^{s}$.
This verifies that the mapping $\re^{-t\,Q_{R,N}}\,:\, L_2(\bS_R)\to L^1_2(\bS_R)$ is bounded and concludes Item~(A). To prove Item~(B) one uses Remark~\ref{rem:commute} Item~(F) with $H=L_2(\bS_R)$, $E=L_2^1(\bS_R)$ and $e_{\alpha}=Y_{lm}$. It holds
\begin{equation}
\begin{aligned}
\|\re^{-t\,Q_{R,N}}\|_{\mathrm{HS}}
=&(\sum_{l=0}^{\infty}\,\sum_{m=-l}^l\, \|\re^{-t\,Q_{R,N}}\,Y_{lm}\|^2_{L^1_2(\bS_R)})^{1/2}
\\
=&\,(\sum_{l=0}^{\infty}\,(2l+1)\,(1+\frac{l(l+1)}{R^2})\,\re^{-2t\,\langle l\rangle})^{1/2}.
\end{aligned}
\end{equation}
The sum is absolutely convergent due to exponential decay of $\re^{-2t\,\langle l\rangle}$. This finishes the proof of Item~(B) and concludes the proof.
\end{proof}
\begin{Remark}\label{rem:bounded_semigroup_L^3_2}
In a similar manner, one verifies that the mapping $\re^{-t\,Q_{R,N}}\,:\, L_2(\bS_R)\to L^3_2(\bS_R)$  is bounded, and there exists a constant $C\in (0,\infty)$ such that $\|\re^{-t\,Q_{R,N}}\|_{L_2(\bS_R)\to L^3_2(\bS_R)}\leq C/(t^{1/2})$. 
\end{Remark}

\section{Probability}\label{sec:probability}
Let $\cX$ be a complete separable metric space and $(\cX, \mathrm{Borel}(\cX))$ be the corresponding Borel measurable space. We will denote by $ \cM(\cX)$ the set of probability measures on $\mathrm{Borel}(\cX)$. Let $(\Omega,\cF,\bP)$ denote a probability space, and let $\bE$ stand for the integration with respect to the probability measure $\bP$.

\begin{dfn}\label{def:law}\cite[Eq.~(1.4)]{DaZab14}
A random variable $X$  with values  in $\cX$ is a measurable map 
$$X\,:\,(\Omega, \cF, \bP) \to (\cX, \text{Borel}(\cX))$$
i.e.,
$$X^{-1}(A) := \Big\{ \omega \in \Omega\,\,|\,\, X(\omega) \in A \Big\} \in \cF\,\,\,\mathrm{for\, \,all\,\,}A \in \mathrm{Borel}({\mathcal X}).$$
The distribution, or law, of $X$ is a probability measure $\mu$ defined on $\mathrm{Borel}(\cX)$ by
$$\mu(A) = \bP(X\in A)=\bP (X^{-1}(A)).$$
\end{dfn}

\begin{dfn}\label{def:cauchy_convergent}\cite[Def.~2.2.2]{Boga07}
Let $(f_n)_{n\in \bN_+}$ be a sequence of random variables.
\begin{itemize}
\item[(A)] The sequence $(f_n)_{n\in\bN_+}$ is called fundamental or Cauchy in probability if for every $\epsilon > 0$ one has 
\begin{equation}
\lim_{N\to \infty}\, \sup_{n,k\geq N} \, \bP\left(\omega\,:\,|f_{n}(\omega) - f_{k}(\omega)| \geq \epsilon\right) = 0\,.
\end{equation}

\item[(B)]
The sequence $(f_n)_{n\in \bN_+}$ is said to converge in probability to a random variable $f$ if for every $\epsilon > 0$ one has
\begin{equation}
\lim_{n \to \infty} \, \bP\left(\omega\,:\,|f(\omega) - f_{n}(\omega)| \geq \epsilon\right) = 0\,.
\end{equation}
\end{itemize}
\end{dfn}

\begin{rem}\label{rem:convergence}
For all $p\in[1,\infty)$ the convergence in $L_p(\Omega,\cF, \bP)$ implies convergence in probability. In particular, for $p=2$, this implies convergence in law. The former property is preserved under composition with continuous functions~\cite[Corollary~6.3.1]{Res15}~or~\cite[Corollary~2.2.6]{Boga07}. Moreover, convergence in probability implies almost sure convergence of a subsequence~\cite[Thm.~2.2.5]{Boga07}.
\end{rem}

\begin{dfn}\label{lemm:uni_inte}\cite[Def.~4.5.1]{Boga07}
A set of random variables $A\subset L_1(\Omega,\cF, \bP)$ is called uniformly integrable if
\begin{equation}
\lim_{C\to +\,\infty} \, \sup_{f\in A} \int_{\{|f(\omega)| > C\}}\,|f(\omega)|\,\bP(\rd\omega) = 0\,.
\end{equation}
\end{dfn}

\begin{Remark}\label{rem:uni_inte}
Observe that if the set $\{\,\bE\,[|f_n|^p\,]\,:\,n \in \bN_+ \}$ is bounded for some $p>1$, then $( f_n)_{n\in \bN_+}$ is uniformly integrable. To verify this claim suppose for some $p>1$ and $M>0$ that $\bE\,|f_n|^p \leq M$ for all $n \in \bN_+$\,. Note that $p-1>0$ and $t\mapsto t^{p-1}$ is increasing on $(0, \infty)$. If $|f_n(\omega)| >C$ for some $C >0$, then
$$|f_n(\omega)|^p = |f_n(\omega)|^{p -1} |f_n( \omega)| \geq C^{p -1}  |f_n(\omega)|.$$
Therefore, on the event $\{\omega\in \Omega\,:\,|f_n(\omega)| >C \}$ we obtain
$$
\bE\,\left[\,|f_n|\,\1_{\{|f_n| > C\}}\right] \leq \bE\left[|f_n|^p/C^{\,p-1}\right]\leq\,M/C^{\,p-1}\,.
$$
Observe that the last bound is independent of $n\in \bN_+$ and converges to zero as $C\to \infty$.
\end{Remark}

\begin{lem}\label{lem:E_F_X}
Let $X$ be a real-valued random variable such that $X\geq 0$. Suppose that the function $F\,:\,[0,\infty)\to[0,\infty)$ is continuously differentiable and such that $F(0)=0$ and $F'\geq 0$. Then it holds 
\begin{equation}
 \bE\,F(X)=\int_0^\infty \mathbb{P}(X>t)\, F'(t)\,\rd t.
\end{equation} 
\end{lem}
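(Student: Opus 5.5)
The plan is to reduce the statement to Tonelli's theorem applied to a representation of $F(X)$ as an integral of its derivative. Since $F$ is continuously differentiable on $[0,\infty)$ with $F(0)=0$, the fundamental theorem of calculus gives, for every $\omega\in\Omega$,
\begin{equation}
F(X(\omega))=\int_0^{X(\omega)}F'(t)\,\rd t=\int_0^\infty \1_{\{X(\omega)>t\}}\,F'(t)\,\rd t .
\end{equation}
The second equality only rewrites the upper limit as an indicator. The boundary point $t=X(\omega)$ has Lebesgue measure zero, so it does not matter whether we use $>$ or $\geq$.

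Next, I will take expectations of both sides and exchange $\bE$ with the $\rd t$-integral. The integrand $(\omega,t)\mapsto \1_{\{X(\omega)>t\}}F'(t)$ is non-negative because $F'\geq 0$. It is also jointly measurable with respect to $\cF\otimes\mathrm{Borel}([0,\infty))$: the set $\{(\omega,t)\,:\,X(\omega)-t>0\}$ is the preimage of $(0,\infty)$ under the measurable map $(\omega,t)\mapsto X(\omega)-t$, and $F'$ is continuous, hence Borel. Both $\bP$ and Lebesgue measure on $[0,\infty)$ are $\sigma$-finite. Tonelli's theorem therefore applies without any integrability assumption, and it yields
\begin{equation}
\bE\,F(X)=\int_0^\infty \bE\big[\1_{\{X>t\}}\big]\,F'(t)\,\rd t=\int_0^\infty \bP(X>t)\,F'(t)\,\rd t .
\end{equation}
This identity holds in $[0,\infty]$, so both sides may be infinite simultaneously.

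There is no substantial obstacle in this argument. The only points requiring care are the joint measurability check and the fact that non-negativity makes Tonelli applicable, so that Fubini and any finiteness hypothesis are not needed. I will also note that $F(X)$ is a genuine random variable, since it is a composition of the measurable map $X$ with the continuous function $F$.
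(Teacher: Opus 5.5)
Your proof is correct and follows the same route as the paper: you write $F(X)=\int_0^X F'(t)\,\rd t$ by the fundamental theorem of calculus and then interchange $\bE$ with the $\rd t$-integral. Your version is slightly more careful than the paper's one-line argument, since you verify joint measurability and invoke Tonelli rather than Fubini, which by non-negativity requires no integrability assumption.
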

\begin{proof} From Fubini's theorem
\begin{equation}
\bE\, F(X) = \bE\,\int_{0}^X\,F'(t)\,\rd t
= \int_{0}^{\infty}\,\int_{\Omega}\,\1_{\{X>t\}}(\omega)\,\bP(\rd\omega)\,F^{\prime}(t)\,\rd t\,.
\end{equation}
This finishes the proof.
\end{proof}

\begin{thm}(Lebesgue-Vitali's theorem)\label{thm:Lebesgue-Vitali's theorem}\cite[Thm.~4.5.4]{Boga07}
Suppose that $f$ is a random variable and $(f_n)_{n\in \mb{N}_+}$ is a sequence of integrable random variables. Then, the following assertions are equivalent:
\begin{itemize}
\item[A)] The sequence $(f_n)_{n\in \mb{N}_+}$ converges to $f$ in probability and is uniformly integrable;

\item[B)] The random variable $f$ is integrable and the sequence $(f_n)_{n\in \bN_+}$ converges to $f$ in the space $L_1(\Omega,\cF, \bP)$.

\end{itemize}
\end{thm}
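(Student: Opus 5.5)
This is the standard Lebesgue--Vitali theorem, and I would prove the two implications separately.

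\emph{(A) $\Rightarrow$ (B).} First show that $f$ is integrable. By Remark~\ref{rem:convergence}, convergence in probability gives a subsequence $f_{n_k}\to f$ almost surely. Fatou's lemma then yields $\bE|f|\leq\liminf_k\bE|f_{n_k}|$. Uniform integrability (Def.~\ref{lemm:uni_inte}) gives $\sup_n\bE|f_n|<\infty$, because $\bE|f_n|\leq C+\sup_m\bE[|f_m|\1_{\{|f_m|>C\}}]$ for a fixed large $C$. Hence $f\in L_1$.

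Next, fix $\epsilon>0$ and split
\begin{equation}
\bE|f_n-f|\leq \epsilon+\bE\big[|f_n-f|\1_{\{|f_n-f|\geq\epsilon\}}\big]
\leq \epsilon+\bE\big[(|f_n|+|f|)\1_{A_n}\big],
\end{equation}
where $A_n=\{|f_n-f|\geq\epsilon\}$ and $\bP(A_n)\to0$. The key auxiliary fact is that a uniformly integrable family is uniformly absolutely continuous: for every $\eta>0$ there is $\delta>0$ such that $\bP(A)<\delta$ implies $\sup_n\bE[|f_n|\1_A]<\eta$. This follows from $\bE[|f_n|\1_A]\leq C\,\bP(A)+\bE[|f_n|\1_{\{|f_n|>C\}}]$ with $C$ chosen first. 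The same estimate holds for the single integrable function $f$. Applying both to $A=A_n$ for large $n$ gives $\limsup_n\bE|f_n-f|\leq\epsilon+2\eta$, which proves (B).

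\emph{(B) $\Rightarrow$ (A).} Convergence in probability follows from Markov's inequality, $\bP(|f_n-f|\geq\epsilon)\leq\bE|f_n-f|/\epsilon$, as in Remark~\ref{rem:convergence}.

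For uniform integrability, choose $N$ with $\bE|f_n-f|<\eta$ for $n>N$. The finite family $\{f,f_1,\dots,f_N\}$ is uniformly integrable and uniformly absolutely continuous by dominated convergence. For $n>N$:
\begin{itemize}
\item write $\bE[|f_n|\1_{\{|f_n|>C\}}]\leq\eta+\bE[|f|\1_{\{|f_n|>C\}}]$;
\item note $\bP(|f_n|>C)\leq\sup_m\bE|f_m|/C$, which is small uniformly in $n$ for large $C$, since $L_1$-convergence bounds $\sup_m\bE|f_m|$;
\item apply the absolute continuity of $f$ to the event $\{|f_n|>C\}$.
\end{itemize}
This gives $\limsup_{C\to\infty}\sup_n\bE[|f_n|\1_{\{|f_n|>C\}}]\leq 2\eta$ with $\eta$ arbitrary.

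I expect the main obstacle to be the equivalence between the tail definition of uniform integrability and uniform absolute continuity together with $L_1$-boundedness. That lemma carries both directions, so I would isolate it first.
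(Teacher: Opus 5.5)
Your proof is correct. The paper gives no proof of its own; it only cites \cite[Thm.~4.5.4]{Boga07}, and your argument is essentially the standard proof found there. That proof uses an a.s.\ convergent subsequence plus Fatou for the integrability of $f$, and uniform absolute continuity of $\{f_n\}\cup\{f\}$ on the sets $\{|f_n-f|\geq\epsilon\}$. For the converse it uses Markov's inequality and a split at a finite index $N$.

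You planned to isolate the equivalence between uniform integrability and $L_1$-boundedness plus uniformly absolutely continuous integrals. Your argument only ever needs the easy direction, uniform integrability $\Rightarrow$ both properties, which you already prove inline. Your direct tail estimate in (B)$\Rightarrow$(A) avoids needing the converse entirely.
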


\begin{dfn}\cite[Def.~6.7]{Rudin87}\label{def:concentrate}
Let $\mu$ and $\nu$ be two regular Borel probability measures on $(\cX,\mathrm{Borel}(\cX))$. We say that $\nu$ is absolutely continuous with respect to $\mu$ if $\nu(A)=0$ for every $A\in\mathrm{Borel}( \cX)$ such that $\mu(A)=0$. Moreover, if there is a set $A\in\mathrm{Borel}(\cX)$ such that 
$$\mu(E) = \mu(A \cap E)$$
for all $E\in\mathrm{Borel}(\cX)$, we say that $\mu$ is concentrated on $A$. This is equivalent to saying that $\mu(E)=0$ whenever $E \cap  A = \emptyset.$
\end{dfn}

\begin{Remark}\label{rem:support_measure}
Let $\mu$ be a regular Borel probability measure on $(\cX,\mathrm{Borel}(\cX))$. The support of $\mu$ is the complement of the union of the $\mu$-null open sets, i.e., 
\begin{equation}
\supp\,\mu:=\left(\bigcup\, \{A\subset \cX\,:\,A\,\mathrm{open\,\,and}\,\, \mu(A)=0\}\right)^c.
\end{equation}
Equivalently, $\supp\,\mu$ is the smallest closed set on which $\mu$ is concentrated and if $\mu$ is concentrated on a non-empty set $B\subset \cX$, then $\supp\,\mu\subset \overline{B}$.
\end{Remark}

\begin{thm}\cite[Sec.~14.13]{Wil91}~or~\cite[Thm.~6.10]{Rudin87}~or~\cite[Thm.~3.2.2]{Boga07}
\label{Radon-Nikodym theorem}
Let $(\Omega,\cF)$ be a measure space with two probability measures $\mathbb{Q}$ and $\bP$. The probability measure $\mathbb{Q}$ is absolutely continuous with respect to $\mathbb{P}$ if there exists $V\in L_1(\Omega, \cF, \bP)$ such that for all $A\in \cF$ it holds that $\mathbb{Q}(A)= \int_A V(\omega)\,\bP\,(\rd\omega)\,$.
\end{thm}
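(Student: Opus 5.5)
The statement as written asserts that the existence of a density $V\in L_1(\Omega,\cF,\bP)$ implies absolute continuity. That direction is elementary, so I would prove it first and then treat the converse (the actual content of the Radon--Nikodym theorem) by von Neumann's Hilbert space argument.

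\emph{Easy direction.} Suppose $\mathbb{Q}(A)=\int_A V\,\rd\bP$ for all $A\in\cF$. Taking $A=\{V<0\}$ gives $V\geq 0$ $\bP$-a.s. If $\bP(A)=0$, then $V\1_A=0$ $\bP$-a.s., hence $\mathbb{Q}(A)=\int V\1_A\,\rd\bP=0$. This is exactly Def.~\ref{def:concentrate}.

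\emph{Converse.} Assume $\mathbb{Q}\ll\bP$ and set $\lambda:=\bP+\mathbb{Q}$, a finite measure.
\begin{itemize}
\item[(1)] Define the functional $\ell(f):=\int f\,\rd\mathbb{Q}$ on $L_2(\Omega,\cF,\lambda)$. By Cauchy--Schwarz, $|\ell(f)|\leq \int|f|\,\rd\lambda\leq \lambda(\Omega)^{1/2}\|f\|_{L_2(\lambda)}$. The Riesz representation theorem then gives $g\in L_2(\lambda)$ with
\begin{equation}
\int f\,\rd\mathbb{Q}=\int f g\,\rd\lambda \quad\text{for all } f\in L_2(\lambda).
\end{equation}
\item[(2)] Testing with $f=\1_A$ on $A=\{g<0\}$ and on $A=\{g>1\}$ shows $0\leq g\leq 1$ $\lambda$-a.s.
\item[(3)] Rewrite the identity as $\int f(1-g)\,\rd\mathbb{Q}=\int fg\,\rd\bP$. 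With $f=\1_{\{g=1\}}$ this yields $\bP(g=1)=0$. By absolute continuity, also $\mathbb{Q}(g=1)=0$.
\item[(4)] For $A\in\cF$ and $n\in\bN$, test with $f_n=\1_A(1+g+\dots+g^{n})$, which is bounded and hence in $L_2(\lambda)$. This gives
\begin{equation}
\int_A(1-g^{n+1})\,\rd\mathbb{Q}=\int_A g(1+\dots+g^n)\,\rd\bP .
\end{equation}
\end{itemize}

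\emph{Passing to the limit.} This is the only delicate step. On the left, $g^{n+1}\to 0$ $\mathbb{Q}$-a.s. because $\mathbb{Q}(g=1)=0$, so dominated convergence gives the limit $\mathbb{Q}(A)$. On the right, the integrand increases monotonically to $V:=g/(1-g)$ on $\{g<1\}$, and $\{g=1\}$ is $\bP$-null; monotone convergence gives $\int_A V\,\rd\bP$. Hence $\mathbb{Q}(A)=\int_A V\,\rd\bP$. Choosing $A=\Omega$ shows $\int V\,\rd\bP=\mathbb{Q}(\Omega)=1$, so $V\in L_1(\Omega,\cF,\bP)$, which completes the proof.
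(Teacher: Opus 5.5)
Your proof is correct: the implication as literally stated (a density $V$ forces $\mathbb{Q}$ to be absolutely continuous with respect to $\bP$) is the elementary one, and your two-line argument settles it. Your converse is von Neumann's Riesz-representation argument in $L_2(\bP+\mathbb{Q})$, which is precisely the proof behind Rudin's Theorem~6.10 that the paper cites instead of proving the statement itself, and it is a correct extra: neither the statement nor the paper's later use of it (deducing $\mu^g_R\ll\nu_R$ from an explicit density) requires it.
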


\begin{Remark}\label{rem:inv_gau_measure}
Let $\nu_1,\nu_2$ be two probability measures on $(\cX,\mathrm{Borel}(\cX))$. We define
$(\nu_1\ast\nu_2)(A):= \int_{\cX}\, \nu_1(A-a)\,\nu_2(\rd a)$ for all $A\in \mathrm{Borel}(\cX)$.
\end{Remark}

\begin{lem}\cite[Lemma~A.7]{BarashkovSinh}~or~\cite{HairerSteele} \label{lem:mu_f_bound}
Let $(\Omega,\cF,\mu)$ be a probability space, $F:\, \Omega \to \bR$ be a measurable function such that $\exp(F) \in L_1(\Omega,\mu)$ and
\begin{equation}
\mu^F(\rd\phi) := \frac{\exp(F(\phi))\,\mu(\rd\phi)}{\int \exp(F(\phi)) \,\mu(\rd\phi)}\,.
\end{equation}
It holds
\begin{equation}
 \int \exp(F(\phi))\,\mu(d\phi) \leq \exp\left(\int F(\phi)\,\mu^F(d\phi)\right)\,.
\end{equation}
\end{lem}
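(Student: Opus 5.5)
The argument is Jensen's inequality applied under the tilted measure $\mu^F$, after rewriting $\int\exp(F)\,\rd\mu$ as an integral against $\mu^F$. Set $Z:=\int \exp(F(\phi))\,\mu(\rd\phi)\in(0,\infty)$. We have $Z>0$ because $\exp(F)>0$ everywhere, and $Z<\infty$ by the integrability assumption. Hence $\mu^F$ is a well-defined probability measure, mutually absolutely continuous with respect to $\mu$, with density $\exp(F)/Z$.

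\textbf{Step 1: rewrite $Z$.} On $\Omega$ we have $\mu(\rd\phi)=Z\exp(-F(\phi))\,\mu^F(\rd\phi)$. Integrating gives
\begin{equation}
1=\int Z\exp(-F(\phi))\,\mu^F(\rd\phi),\qquad\text{i.e.}\qquad \frac{1}{Z}=\int \exp(-F(\phi))\,\mu^F(\rd\phi).
\end{equation}

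\textbf{Step 2: apply Jensen.} Since $x\mapsto \re^{-x}$ is convex, Jensen's inequality for the probability measure $\mu^F$ yields
\begin{equation}
\frac1Z=\int \re^{-F}\,\rd\mu^F \ \geq\ \exp\Big(-\int F\,\rd\mu^F\Big).
\end{equation}
Taking reciprocals gives $Z\le \exp(\int F\,\rd\mu^F)$, which is the claim.

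\textbf{Step 3: justify Jensen.} This is the only delicate point, and it concerns the meaning of $\int F\,\rd\mu^F$. The positive part is fine: $F_+\re^{F}\le$ const$\cdot(1+\re^{2F})$ need not be integrable, but we can instead use $F_+\le \re^{F}$, so $\int F_+\,\rd\mu^F\le Z^{-1}\int \re^{2F}\,\rd\mu$, which may be infinite. If $\int F_+\,\rd\mu^F=+\infty$, the right-hand side of the claim is $+\infty$ and the statement is trivial.

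For the negative part, $F_-\re^{F}\le \sup_{x\ge0}x\re^{-x}=1/\re$, so $\int F_-\,\rd\mu^F<\infty$. Thus $\int F\,\rd\mu^F$ is always well defined in $(-\infty,+\infty]$.

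When $F\in L_1(\mu^F)$, Jensen applies directly. If one prefers to avoid the general form of Jensen, the tangent-line inequality $\re^{-x}\ge \re^{-a}(1-(x-a))$ with $a=\int F\,\rd\mu^F$ gives the bound after integration against $\mu^F$.

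Equivalently, the whole argument is the nonnegativity of relative entropy, $\int \log(\rd\mu^F/\rd\mu)\,\rd\mu^F\ge0$, which unpacks to $\int F\,\rd\mu^F-\log Z\ge0$.
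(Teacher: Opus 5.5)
Your proof is correct and is the same argument as the paper's: you rewrite $\bigl(\int \exp(F)\,\rd\mu\bigr)^{-1}=\int \exp(-F)\,\rd\mu^F$ and then apply Jensen's inequality to the convex function $x\mapsto \re^{-x}$ under $\mu^F$. Your Step 3 is a justification the paper omits: it shows $\int F_-\,\rd\mu^F\le 1/(\re Z)$, so $\int F\,\rd\mu^F\in(-\infty,\infty]$ is well defined and the value $+\infty$ is the trivial case. The muddled opening remark about $F_+\re^{F}$ and $\re^{2F}$ plays no role and can simply be deleted.
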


\begin{proof}
To verify the above statement first notice that by assumption
\begin{equation}
\int \exp{(F(\phi))}\, \mu(\rd\phi) \int\, \exp{(-F(\phi))}\, \mu^{F}(\rd\phi) = \int \,\mu(\rd\phi) = 1.
\end{equation}
Thus, $
\int \exp{(F(\phi))}\, \mu(\rd\phi) = \left(\int\, \exp{(-F(\phi))} \mu^F(\rd\phi)\right)^{-1}.$
By the Jensen inequality
\begin{equation}
\exp{\left(-\int F(\phi)\mu^F(\rd\phi)\right)} \leq \int\, \exp{(-F(\phi))}\,\mu^F(\rd\phi).
\end{equation}
Hence,
\begin{equation}
\int \exp{(F(\phi))}\, \mu(\rd\phi)  \leq \exp{\Big(\int F(\phi)\mu^F(\rd\phi)\Big)}\,.
\end{equation}
This finishes the proof.
\end{proof}

\begin{dfn}\label{dfn:tightness_weak_convergence}\cite[Sec.~2.1]{DaZab14}
Let $(\mu_n)_{n \in \bN_+}$ be a sequence of probability measures defined on $(\cX,\mathrm{Borel}(\cX))$. The sequence $(\mu_{n})_{n \in \bN_+}$
is tight iff for every $\epsilon>0$ there exists a compact set $K_\epsilon\subset\cX$ such that $\mu_n(K_\epsilon)\geq 1-\epsilon$ for all $n\in\bN_+$. The sequence $(\mu_{n})_{n \in\bN_+}$ converges weakly if for every bounded $F\in C(\cX)$ the sequence of real numbers $(\mu_n(F))_{n\in\bN_+}$ converges.
\end{dfn}

\begin{thm}~\cite[Thm.~2.3]{DaZab14}(Prokhorov's theorem)
A~sequence of probability measures $(\mu_{n})_{n\in\bN_+}$ on $(\cX,\mathrm{Borel}(\cX))$ is tight  iff  there exists a diverging sequence of natural numbers $(a_n)_{n\in\bN_+}$ such that the sequence $(\mu_{a_n})_{n\in\bN_+}$ converges weakly.
\end{thm}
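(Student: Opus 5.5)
The statement should be read in its standard form. Tightness of $(\mu_n)_{n\in\bN_+}$ is equivalent to the property that \emph{every} subsequence has a further subsequence converging weakly (to a probability measure on $\cX$); in particular a tight sequence admits some diverging $(a_n)_{n\in\bN_+}$ with $(\mu_{a_n})_{n\in\bN_+}$ weakly convergent. The literal converse, with only one convergent subsequence, fails. For example, on $\cX=\bR$ take $\mu_{2n}=\delta_0$ and $\mu_{2n+1}=\delta_{n}$: the even terms converge, but the sequence is not tight. So I will prove the equivalence with relative sequential compactness, using throughout that $\cX$ is a complete separable metric space.

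\emph{Tight $\Rightarrow$ convergent subsequence.} I will compactify. Since $\cX$ is separable metric, it embeds homeomorphically into the Hilbert cube $\hat\cX:=[0,1]^{\bN}$, which is compact metric. Push each $\mu_n$ forward to $\hat\cX$. The space $C(\hat\cX)$ is separable, so a diagonal argument over a countable dense family $\{h_j\}$ yields a subsequence $(a_n)$ along which $\mu_{a_n}(h_j)$ converges for every $j$. By density, $\mu_{a_n}(h)$ then converges for every $h\in C(\hat\cX)$. The limit is a positive normalized linear functional, so by the Riesz representation theorem it equals $\nu(h)$ for a Borel probability measure $\nu$ on $\hat\cX$.

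Next I show that $\nu$ lives on $\cX$. Let $K_m\subset\cX$ be compact with $\mu_n(K_m)\ge 1-1/m$ for all $n$. Each $K_m$ is compact, hence closed, in $\hat\cX$. The portmanteau inequality for closed sets gives $\nu(K_m)\ge\limsup_n\mu_{a_n}(K_m)\ge 1-1/m$. Hence $\nu$ is concentrated on $\bigcup_m K_m\subset\cX$ and restricts to a probability measure on $\cX$.

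Finally I verify weak convergence in $\cX$ itself, via the portmanteau criterion for closed sets. Let $F\subset\cX$ be closed. Then $F\cap K_m$ is compact, so closed in $\hat\cX$, and
\begin{equation}
\limsup_n\mu_{a_n}(F)\le\limsup_n\mu_{a_n}(F\cap K_m)+\tfrac1m\le\nu(F\cap K_m)+\tfrac1m\le \nu(F)+\tfrac1m .
\end{equation}
Letting $m\to\infty$ gives $\limsup_n\mu_{a_n}(F)\le\nu(F)$, so $\mu_{a_n}\to\nu$ weakly on $\cX$. The same argument applies to any subsequence, which gives relative compactness.

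\emph{Relative compactness $\Rightarrow$ tight.} Fix $\epsilon>0$. For each $k$, cover $\cX$ by countably many open balls $B_{k,1},B_{k,2},\dots$ of radius $1/k$, using separability. I claim there is $m_k$ with $\mu_n\big(\bigcup_{i\le m_k}B_{k,i}\big)\ge 1-\epsilon 2^{-k}$ for all $n$. Suppose not. Then for each $m$ there is $n_m$ with $\mu_{n_m}(U_m)<1-\epsilon2^{-k}$, where $U_m:=\bigcup_{i\le m}B_{k,i}$. Pass to a subsequence of $(\mu_{n_m})_m$ converging weakly to some $\nu$. For fixed $j$ we have $U_j\subset U_m$ once $m\ge j$, so the open-set portmanteau inequality gives
\begin{equation}
\nu(U_j)\le\liminf_m\mu_{n_m}(U_j)\le 1-\epsilon 2^{-k}.
\end{equation}
Since $U_j\uparrow\cX$, this yields $\nu(\cX)\le 1-\epsilon 2^{-k}<1$, a contradiction. (If the $n_m$ take only finitely many values, one single $\mu_n$ would violate countable additivity directly.) Now set $K:=\bigcap_k\overline{U_{m_k}}$. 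This set is closed and totally bounded, hence compact by completeness of $\cX$, and $\mu_n(K)\ge 1-\sum_k\epsilon2^{-k}=1-\epsilon$ for every $n$.

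The main obstacle is the forward direction, specifically the step showing that the Riesz limit is a probability measure on $\cX$ rather than on the compactification, and that convergence holds against all bounded continuous functions on $\cX$, not only against restrictions of functions in $C(\hat\cX)$. Both points are resolved by combining the tightness compacts $K_m$ with the closed-set form of the portmanteau theorem, as above.
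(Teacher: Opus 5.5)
Your reading is correct. As printed, the ``if'' direction is false, and your example ($\mu_{2n}=\delta_0$, $\mu_{2n+1}=\delta_n$ on $\bR$) shows this: the constant subsequence $a_n=2n$ converges weakly, while every compact $K$ lies in some $[-M,M]$ and then $\mu_{2n+1}(K)=0$ for $n>M$. The result in \cite[Thm.~2.3]{DaZab14} is the equivalence of tightness with relative (sequential) compactness, and that is what you prove.

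The paper supplies no argument beyond the citation, so your proof is a self-contained substitute, and it is sound.
\begin{itemize}
\item Forward direction. The Hilbert-cube embedding, the diagonal/Riesz construction of $\nu$ on $[0,1]^{\bN}$, and both uses of the closed-set portmanteau inequality are correct. The first use gives $\nu$ full mass on $\bigcup_m K_m$; the second, with $F\cap K_m$, gives weak convergence on $\cX$ itself. Completeness of $\cX$ is not needed here.
\item Converse. Passing from $(\mu_{n_m})_m$ to a genuine subsequence is handled correctly, with the finitely-many-values case disposed of by continuity from below. Taking $K=\bigcap_k\overline{U_{m_k}}$, which is closed and totally bounded in a complete space, is exactly right.
\end{itemize}

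As for what each route buys: the paper only ever uses the forward implication (tight $\Rightarrow$ weakly convergent subsequence), in Remark~\ref{rem:tightness} and in the proof of Theorem~\ref{thm:main-theorem}. So the misstated converse causes no harm there, and the citation suffices.

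Your version has the advantage of making explicit that the limit is a probability measure on $\cX$. Remark~\ref{rem:tightness} actually needs this when it treats the limit $\mu^g$ as a measure, whereas Def.~\ref{dfn:tightness_weak_convergence} only asks that $\mu_n(F)$ converge for bounded continuous $F$. Conversely, if the hypothesis of the converse used that weaker notion, your contradiction $\nu(\cX)<1$ would need an additional fact: that such limits are represented by probability measures. This is true on Polish spaces but is an extra step, so your explicit convention is the right one to adopt.
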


\begin{lem}\cite[Lemma B.4]{DDJ}\label{lem:tight_comapct_embedding}
Let $\imath:\cX\to\cY$ is a compact embedding and let $(\mu_{n})_{n\in\bN_+}$ be a sequence of probability measures on $(\mathcal{X},\mathrm{Borel}(\mathcal{X}))$. Assume that there exists $M\in(0,\infty)$ such that $\int_{\mathcal{X}}\|x\|_{\mathcal{X}}\,\mu_n(\rd x)\leq M$ for all $n\in\mathbb{N}_+$. Then the sequence of measures $(\nu_n)_{n\in\mathbb{N}_+}$ on $(\mathcal{Y},\mathrm{Borel}(\mathcal{Y}))$ defined by
\begin{equation}
\nu_n(A):=\mu_n(\imath^{-1}(A)),
\qquad\quad
n \in \mathbb{N}_+,
\quad
A \in \mathrm{Borel}(\mathcal{Y}),
\end{equation}
is tight.
\end{lem}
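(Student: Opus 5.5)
We prove tightness via Prokhorov's theorem directly, using Markov's inequality together with the compactness of $\imath$. The argument assumes, as is implicit in the statement, that $\cX$ and $\cY$ are separable Banach spaces and that $\imath$ is a continuous linear injection mapping bounded sets to relatively compact sets. We also need $\|\Cdot\|_\cX$ to be Borel measurable on $\cX$.

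Fix $\epsilon>0$ and set $r:=M/\epsilon$. Let $B_r:=\{x\in\cX:\|x\|_\cX\leq r\}$ be the closed ball of radius $r$ in $\cX$. Since $\imath$ is compact, the set $K_\epsilon:=\overline{\imath(B_r)}$, the closure taken in $\cY$, is compact in $\cY$. By Markov's inequality and the uniform moment bound, for every $n\in\bN_+$,
\begin{equation}
\mu_n(\cX\setminus B_r)\leq \frac{1}{r}\int_\cX \|x\|_\cX\,\mu_n(\rd x)\leq \frac{M}{r}=\epsilon .
\end{equation}

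Since $B_r\subset \imath^{-1}(\imath(B_r))\subset\imath^{-1}(K_\epsilon)$, it follows that
\begin{equation}
\nu_n(K_\epsilon)=\mu_n(\imath^{-1}(K_\epsilon))\geq \mu_n(B_r)\geq 1-\epsilon
\end{equation}
for all $n\in\bN_+$. This is exactly tightness in the sense of Def.~\ref{dfn:tightness_weak_convergence}.

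The only delicate point is measurability. The set $K_\epsilon$ is closed, hence Borel in $\cY$. Because $\imath$ is continuous, $\imath^{-1}(K_\epsilon)$ is Borel in $\cX$, so $\nu_n$ is a well-defined Borel probability measure evaluated on a Borel set. The ball $B_r$ is closed and therefore Borel in $\cX$. If instead $\imath$ is only measurable, one notes that $K_\epsilon$ is still closed, and the argument goes through unchanged since $\nu_n$ is defined via preimages.
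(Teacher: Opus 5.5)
Your proof is correct and follows the paper's argument: take the closed ball of radius $M/\epsilon$, let $K_\epsilon$ be the closure of its image under the compact embedding, and use Markov's inequality with the uniform first-moment bound to get $\nu_n(K_\epsilon)\geq 1-\epsilon$ for all $n$. Your measurability remarks make explicit what the paper leaves implicit, and they do not change the substance.
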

\begin{proof}
Let $\epsilon>0$, $L_\epsilon := \{x\in\cX\,|\,\|x\|_{\cX}\leq M/\epsilon\}$, which is a bounded subset of $\cX$ and $K_\epsilon:=\overline{\imath(L_{\epsilon})}$. Observe that $K_\epsilon\subset\mathcal{Y}$ is compact. It holds that
\begin{equation}
\mu_{n} \Big( \|x\|_{\cX} > \frac{M}{\epsilon}\Big) \leq \,\epsilon/M\, \int_{\cX} \,\|x\|_{\cX}\,\mu_n(\rd x) 
= \epsilon/M\, \int_{\cX} \|x\|_{\cX}\,\mu_n(\rd x)\,.
\end{equation}
Hence,
\begin{equation}
 1-\nu_n(K_{\epsilon}) \leq 1-\mu_n(L_\epsilon) = \mu_n(\|x\|_{\cX}>M/\epsilon)\leq\epsilon\,.
\end{equation}
This concludes the proof.
\end{proof}

\begin{dfn}\label{def:measure_preserving}
Let $(\cX,  \mathrm{Borel}(\cX))$, $(\cY, \mathrm{Borel}(\cY))$ be measure spaces and  $\tau: \cX\to \cY$ be measurable. 
\begin{itemize}
\item[(A)] Let $\mu\in \cM(\cX)$. Then the push-forward measure $\tau_*\mu \in \cM(\cY)$ is defined by 
$\tau_*\mu(A):=\mu(\tau^{-1}(A))$, where $A\in \mathrm{Borel}(\cY)$. 
It holds that $\int_{\cX}\,f\circ\tau\,\rd\mu=\int_{\cY}\,f\,\rd(\tau_*\mu)$ for all $f\in C(\cY)$.

\item[(B)] Fix $\mu\in \cM(\cX)$ and $\nu\in \cM(\cY)$. Then we say that $\tau$ is a measure preserving map between the
resulting probability spaces  $(\cX,  \mathrm{Borel}(\cX), \mu)$, $(\cY, \mathrm{Borel}(\cY),\nu)$ if $\tau_*\mu=\nu$

\item[(C)] Suppose that $\tau$ is invertible and the inverse is also measurable. Let $\nu\in \cM(\cY)$. Then the pullback measure $\tau^*\nu \in \cM(\cX)$ is defined by $\tau^*\nu(A)=\nu(\tau(A))$ for $A\in   \mathrm{Borel}(\cX)$.  
\end{itemize}
\end{dfn}

\section{Gaussian measures, Wiener processes and Wiener chaos} \label{Sec:Gaussian+Wiener}
Let $V$ be a locally convex topological vector space. We assume that the family
of seminorms, defining the topology, separates points, so that the space is Hausdorff. 
 In this situation the topological dual space is non-trivial and the following definition 
 of a Gaussian measure is meaningful:
\begin{dfn}\label{def:gaussian_measure_integration}
A probability measure $\nu$ on $(V,\mathrm{Borel}(V))$ is said to be a Gaussian measure if and only if the law of an arbitrary continuous linear functional $F\in V'$, considered as a random variable on $(V,\mathrm{Borel}(V),\nu)$, is a Gaussian measure on $(\bR^1,\mathrm{Borel}(\bR^1))$. 
\end{dfn}
\noindent
To answer the question of existence we need more structure. In the following two subsections we 
will discuss the nuclear spaces and Hilbert spaces. These two cases are distinct in infinite dimension.

\subsection*{Gaussian measures on nuclear spaces}
Let us move on to the definition of nuclear spaces. For every continuous semi-norm $p$ on $V$ we denote by $V_p$ the completion of $V/\mathrm{ker}p$ in the topology given by $p$, which is a Banach space.  If, in addition, for every continuous semi-norm $p$ on $V$ there exists another continuous semi-norm $q$ such that $p\leq q$\,\footnote{$p \leq q$ on $V \Longleftrightarrow \quad \exists\,C>0 \text{\,\, such\,\, that\, \,} p(v)\leq C\,q(v)\ \forall v\in V.$} and the canonical map
\begin{equation}       
\iota_{p,q}\,:\, V_q\to V_p
\end{equation}
is a nuclear operator between Banach spaces, we say that $V$ is nuclear. We recall that an operator
$T\,:\, V_1\to V_2$ between Banach spaces  is nuclear, if there exist   sequences $v^*_{1,n}\in V_1^*$ and
$v_{2,n}\in V_2$, $n\in \bN$, such that 
\begin{equation}
T v=\sum_{n\in \bN} v^*_{1,n}(v)\, v_{2,n}, \quad v\in V_1
\end{equation}
with
$$\sum_{n\in \bN} \|v^*_{1,n}\|_{V_1^*}\, \|v_{2,n}\|_{V_2}<\infty.$$ 
\begin{thm} (Bochner-Minlos) \cite[Sec.~VI.4.3.]{VTC87} \label{thm:Minlos} Let $V$ be a nuclear space.
In order that a functional $S_{\nu}: V\to \mathbb{C}$ be the characteristic 
function of a probability measure $\nu$ on $V'$, i.e.,
\begin{equation} 
S_{\nu}(v):=\int_{V'} \re^{i\langle \phi, v \rangle } \nu(\rd\phi),
\end{equation}
it is  sufficient that $S_{\nu}$ be of positive type, continuous
and $S_{\nu}(0)=1$. If the space $V$ is, in addition, metrizable, the condition is also necessary.  
\end{thm}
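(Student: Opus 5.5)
The statement to be proved is the Bochner--Minlos theorem (Theorem~\ref{thm:Minlos}). I would split it into the sufficiency direction, which is the substantial one, and the necessity direction for metrizable $V$.

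\emph{Sufficiency} will be handled by building the measure from finite-dimensional pieces.
\begin{itemize}
\item[(1)] For each finite family $v_1,\dots,v_k\in V$, the function $(t_1,\dots,t_k)\mapsto S_\nu(\sum_j t_j v_j)$ on $\bR^k$ is continuous, of positive type, and equal to $1$ at the origin. By the finite-dimensional Bochner theorem it is therefore the Fourier transform of a unique probability measure $\nu_{v_1,\dots,v_k}$ on $\bR^k$.
\item[(2)] Uniqueness gives consistency of these measures under linear maps between finite families. Hence they define a cylinder-set probability measure $\nu$ on the algebraic dual $V^{*}$, and Kolmogorov's extension theorem yields a countably additive measure on $V^{*}$ with its cylinder $\sigma$-algebra.
\item[(3)] It remains to show that $\nu$ is concentrated on the topological dual $V'\subset V^*$, and that its restriction is a Borel measure there. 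This uses nuclearity together with continuity of $S_\nu$.
\end{itemize}

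Step~(3) is the main obstacle, and I would carry it out as follows.
\begin{itemize}
\item Fix $\epsilon>0$. By continuity of $S_\nu$ at $0$, choose a continuous semi-norm $p$ such that $\mathrm{Re}(1-S_\nu(v))\le\epsilon$ whenever $p(v)\le1$.
\item Positive-definiteness then gives the standard global estimate $\mathrm{Re}(1-S_\nu(v))\le\epsilon+2p(v)^2/\delta^2$, for a suitable $\delta>0$ attached to the choice of $p$.
\item Nuclearity supplies a semi-norm $q\ge p$ such that $\iota_{p,q}$ is nuclear. Passing, as is standard, to a Hilbertian semi-norm situation, I may take $\iota_{p,q}$ to be Hilbert--Schmidt, with orthonormal basis $(e_n)$ of $V_q$ and $\sum_n p(e_n)^2<\infty$.
\item Integrate $\sum_{n\le M}\langle\phi,e_n\rangle^2$ against Gaussian averaging of the characteristic function. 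Using $1-\re^{-s/2}$ and the estimate above, this bounds $\nu\big(\sum_n\langle\phi,e_n\rangle^2>r^2\big)$ by $C(\epsilon+r^{-2}\sum_n p(e_n)^2)$.
\item Consequently, up to probability $O(\epsilon)$, $\phi$ is bounded in the dual norm of $q$ (the Hilbertian dual of $V_q$), so $\phi$ is $q$-continuous and thus lies in $V'$.
\item Letting $\epsilon\to0$ shows that $\nu$ has full outer measure on the countable union of these dual balls. Since the balls are weak-$*$ compact sets, $\nu$ defines a Radon measure on $V'$ with $S_\nu$ as its characteristic function.
\end{itemize}

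\emph{Necessity} for metrizable $V$ is more routine. $S_\nu(0)=1$ is immediate. Positive type follows from
\[
\sum_{j,k}c_j\bar c_k S_\nu(v_j-v_k)=\int\Big|\sum_j c_j\re^{i\langle\phi,v_j\rangle}\Big|^2\,\rd\nu\ge0 .
\]
For continuity, metrizability reduces the question to sequential continuity: if $v_n\to v$, then $\langle\phi,v_n\rangle\to\langle\phi,v\rangle$ for every $\phi\in V'$. Dominated convergence then gives $S_\nu(v_n)\to S_\nu(v)$. For a metrizable space, sequential continuity implies continuity.
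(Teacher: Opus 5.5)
The paper does not prove this theorem. It is quoted from \cite[Sec.~VI.4.3]{VTC87} and used only to produce Gaussian measures, so your argument has to stand on its own.

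Your necessity argument is fine. So are steps (1)--(2): for (2), identify $V^*$ with $\bR^B$ via a Hamel basis $B$, so that the cylinder $\sigma$-algebra becomes the product $\sigma$-algebra. The Gaussian-averaging estimate is also the right tool.

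\textbf{The gap.} The gap is the inference from $\sum_n\langle\phi,e_n\rangle^2\le r^2$ to ``$\phi$ is bounded in the dual norm of $q$, hence $\phi\in V'$''. Your measure lives on the algebraic dual $V^*$. For a general $\phi\in V^*$ the Parseval identity $\sup_{q(v)\le 1}|\phi(v)|^2=\sum_n\phi(e_n)^2$ fails; it holds only for functionals already known to be $q$-continuous. The family $(e_n)$ is an orthonormal basis of the completion $V_q$, not a Hamel basis of $V$, so the numbers $\phi(e_n)$ do not determine $\phi$. A linear functional can be nonzero on $\ker q$, or $q$-discontinuous, while $\sum_n\phi(e_n)^2$ is as small as you like. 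The event you control is therefore far larger than the polar $K_r=\{\phi:|\phi(v)|\le r\,q(v)\ \forall v\in V\}$. Nothing in your argument shows that $\nu$ charges $V'$, which is the whole content of the theorem.

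\textbf{What is missing.} You need uniformity over all finite-dimensional directions, together with a compactness argument.
\begin{itemize}
\item Take a finite-dimensional $W\subset V$ and write $W=W_0\oplus W_1$, with $W_0=W\cap\ker q$ and $(f_j)$ a $q$-orthonormal basis of $W_1$.
\item Your averaging gives $\nu\big(\sum_j\phi(f_j)^2>r^2\big)\le C\big(\epsilon+r^{-2}\sum_j p(f_j)^2\big)\le C\big(\epsilon+r^{-2}\|\iota_{p,q}\|_{\mathrm{HS}}^2\big)$. This bound is independent of $W$, because the Hilbert--Schmidt norm does not depend on the orthonormal family.
\item Since $p=0$ on $\ker q$, averaging along $W_0$ with variance $\sigma^2\to\infty$ gives $\nu(\phi|_{W_0}\neq 0)\le\epsilon$.
\item By Hahn--Banach, the event $\{\phi|_{W_0}=0,\ \sum_j\phi(f_j)^2\le r^2\}$ is exactly the cylinder set over $W$ cut out by $K_r$.
\end{itemize}

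Now $K_r$ is compact in $\bR^B$ by Tychonoff. Every set in the product $\sigma$-algebra depends on only countably many coordinates. The projection of a compact set onto countably many coordinates is the decreasing limit of its finite-dimensional projections. Together these give $\nu^*(K_r)\ge 1-C'(\epsilon+r^{-2}\|\iota_{p,q}\|_{\mathrm{HS}}^2)$. This outer-measure bound (equivalently, Prokhorov's criterion for cylindrical measures) is what yields $\nu^*(V')=1$ and a Radon measure on $(V',\sigma(V',V))$.

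\textbf{A simpler repair for the paper's spaces.} For $\sS(\bR^2)$ and $C^\infty(\bS_R)$, expand in a basis that diagonalises all the defining Hilbert norms: Hermite functions and spherical harmonics respectively. Build the measure on sequence space and show it is carried by sequences of polynomial growth. Then identify the characteristic functional using continuity of $S_\nu$ along the partial sums $v_N\to v$, which converge in the Fr\'echet topology.
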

\noindent
If $S_{\nu}(v)=\re^{-\frac{1}{2}\mathcal{G}(v) }$, where $\mathcal{G}$ is a continuous, positive quadratic form on $V$,
then the measure $\nu$ is Gaussian in the sense of Def.~\ref{def:gaussian_measure_integration}. As the spaces of Schwartz class functions and smooth compactly  supported functions on manifolds are nuclear, using Theorem~\ref{thm:Minlos}, one can construct Gaussian measures on the corresponding spaces of distributions. However, as infinite dimensional Hilbert spaces are not nuclear, they require separate considerations.
\subsection*{Gaussian measures on Hilbert spaces}
Let $H$ be a separable Hilbert space with scalar product $\langle \,\Cdot\,,  \,\Cdot\,  \rangle_{H}$ and $G$ be a positive operator on $H$ with trivial kernel so that $G^{-1}$ is well defined
as a self-adjoint operator.
Following~\cite[p. 44]{RSII}, we define the scale of Hilbert spaces $H_{\ell}\subset H$, $\ell\in \mathbb{Z}$, as completions of $\mathrm{Dom}((G^{-1})^{\ell/2})$
in the norms coming from the scalar products $\langle \,\Cdot\,,  \,\Cdot\,  \rangle_{\ell}:=\lan (G^{-1})^{\ell/2}\,\Cdot, (G^{-1})^{\ell/2} \,  \Cdot\,     \ran_{H}$. We recall the identification $H_{-\ell}\cong H_{\ell}^*$ and set $E:=H_{-1}$
as this space will be particularly important in the following. We will identify $E^*$ with $H_1$ and denote standard pairing between $E$ and $E^*$ by $\langle\,\Cdot\, , \, \Cdot\, \rangle$. It coincides with the scalar product in $H$ on arguments from this space. We denote by $\hf:=G^{-1}f$ the isometric embedding from $E^*$ to $E$. Assuming that $G$ is trace-class on $E$, there exists a measure $\nu$ on $(E,\mathrm{Borel}(E))$ whose characteristic function
has the form
\begin{equation}\label{eq:characteristic-function-Hilbert} 
S_{\nu}( f):= \int_{E}\,\re^{i\langle \phi,  f  \rangle}\,\nu(\rd  \phi)=\int_{E}\,\re^{i\langle \phi,  \hf\rangle_E}\,\nu(\rd  \phi)=\re^{-\frac{1}{2} \langle  G  \hf,  \hf \rangle_E}=\re^{-\frac{1}{2} \|f\|^2_{H} }
\end{equation}
for $f \in E^*$. It is called the centered Gaussian measure on $E$ with covariance $G$ and it is a Gaussian measure on $E$ in the sense of Def.~\ref{def:gaussian_measure_integration}. The triple  $(H,E,\nu)$ is called an abstract Wiener space, and $H$ is called a reproducing kernel Hilbert space, cf.~\cite[Def.~1.2]{Sh04}.
\\[0.3cm]
\noindent
The existence of the measure $\nu$ above can be obtained from the following variant of the Bochner-Minlos theorem for Hilbert spaces:
\begin{thm}\cite[Thm.~2.27]{DaZab14}\label{thm:bochner}
Let $E$ be a separable Hilbert space. Let $S_{\nu}\,:\, E\to \mathbb{C}$ be the characteristic function of a probability measure $\nu$ on $(E,\mathrm{Borel}(E))$, then $S_{\nu}$ is a continuous function of positive type such that $S_{\nu}(0)=1$. Moreover, for arbitrary $\epsilon>0$ there exists a nonnegative trace-class operator $G_{\epsilon}$ such that 
$1-\mathrm{Re}\,S_{\nu}(f)\leq \epsilon$ for all $f$ such that $\langle G_{\epsilon}f, f\rangle \leq 1$.
Conversely, an arbitrary function $S_{\nu}\,:\,E\to \mathbb{C}$ satisfying the aforementioned conditions is the characteristic function of a probability measure $\nu$ on $E$.
\end{thm}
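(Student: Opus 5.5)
The statement to prove is the Hilbert-space Bochner theorem (Thm.~\ref{thm:bochner}). Necessity and sufficiency will be handled separately, with sufficiency reduced to the finite-dimensional Bochner theorem plus a tightness argument on $E$.

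\emph{Necessity.} Suppose $S_\nu$ is the characteristic function of $\nu$.
\begin{itemize}
\item $S_\nu(0)=1$ is immediate.
\item Continuity follows from dominated convergence, since $|\re^{i\langle\phi,f\rangle}|\le 1$.
\item Positive type follows from $\sum_{j,k}\bar c_j c_k S_\nu(f_k-f_j)=\int|\sum_k c_k\re^{i\langle\phi,f_k\rangle}|^2\,\nu(\rd\phi)\ge 0$.
\end{itemize}
For the Sazonov-type condition, fix $\epsilon>0$. Choose $r$ large so that $\nu(\|\phi\|_E>r)\le\epsilon/4$. Then
$$1-\mathrm{Re}\,S_\nu(f)=\int(1-\cos\langle\phi,f\rangle)\,\nu(\rd\phi)\le \tfrac12\int_{\|\phi\|\le r}\langle\phi,f\rangle^2\,\nu(\rd\phi)+2\nu(\|\phi\|>r).$$
Define $G_\epsilon$ as $\epsilon^{-1}$ times the truncated covariance operator, $\langle G f,f\rangle=\int_{\|\phi\|\le r}\langle\phi,f\rangle^2\,\nu(\rd\phi)$. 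This operator is nonnegative. It is trace class because its trace equals $\int_{\|\phi\|\le r}\|\phi\|^2\,\nu(\rd\phi)\le r^2$. With this choice, $\langle G_\epsilon f,f\rangle\le1$ implies $1-\mathrm{Re}\,S_\nu(f)\le\epsilon/2+\epsilon/2=\epsilon$.

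\emph{Sufficiency.} Fix an orthonormal basis $(e_k)$ of $E$ and let $P_n$ denote the projection onto $\mathrm{span}\{e_1,\dots,e_n\}$.
\begin{enumerate}
\item The restriction of $S_\nu$ to $P_nE\cong\bR^n$ is continuous, of positive type and normalized. The finite-dimensional Bochner theorem therefore gives measures $\nu_n$ on $\bR^n$.
\item These measures are consistent under projection, since their characteristic functions agree on the subspaces.
\item The main step is to show that the family of cylinder measures is tight on $E$, i.e.\ that it is concentrated on $E$ and $\sigma$-additive there. The key estimate is
$$\nu_n(\|P_n\phi\|>r)\lesssim \int_{\bR^n}\Big(1-\mathrm{Re}\,S_\nu\big(f/r\big)\Big)\,\gamma_n(\rd f),$$
where $\gamma_n$ is the standard Gaussian on $P_nE$. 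This holds because $\int(1-\re^{-\|\phi\|^2/(2r^2)})\,\nu_n(\rd\phi)$ equals the right-hand side, by Fubini and the Gaussian characteristic function.
\item Splitting the $f$-integral according to whether $\langle G_\epsilon f,f\rangle\le r^2$ bounds the right-hand side by
$$\epsilon+2\,\gamma_n(\langle G_\epsilon f,f\rangle> r^2)\le\epsilon+2\,\mathrm{Tr}(G_\epsilon)/r^2,$$
using Chebyshev. This bound is uniform in $n$.
\item Applying the same estimate to $(1-P_m)\phi$ gives uniform smallness of the tails, since $\mathrm{Tr}((1-P_m)G_\epsilon(1-P_m))\to0$. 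This yields tightness, via compact sets of the form $\{\sum_m a_m^{-2}\|(P_{m+1}-P_m)\phi\|^2\le C\}$ with suitable $a_m\to0$.
\item By Prokhorov's theorem the extensions of $\nu_n$ to $E$ have a weak limit $\nu$. Its characteristic function equals $S_\nu$ on $\bigcup_n P_nE$, and hence on all of $E$ by continuity of $S_\nu$.
\end{enumerate}

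\emph{Main obstacle.} I expect the main obstacle to be step 3 together with the tail control in step 5. Everything else is the finite-dimensional Bochner theorem plus Kolmogorov consistency, while getting the compact set requires the trace-class hypothesis in a quantitative form. If the tail-control argument becomes messy, I would instead cite the Sazonov/Minlos theorem \cite[Thm.~2.27]{DaZab14}.
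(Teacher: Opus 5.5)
The thesis gives no proof of this statement and only cites \cite[Thm.~2.27]{DaZab14}. Your argument is correct and is the standard Minlos--Sazonov proof: necessity via the truncated covariance $\langle Gf,f\rangle=\int_{\|\phi\|\le r}\langle\phi,f\rangle^2\,\nu(\rd\phi)$, and sufficiency via the finite-dimensional Bochner theorem, the Gaussian-averaging identity $\int(1-\re^{-\|\phi\|^2/(2r^2)})\,\nu_n(\rd\phi)=\int(1-\mathrm{Re}\,S_\nu(f/r))\,\gamma_n(\rd f)$, trace-class control of the tails, and Prokhorov's theorem. The one step you leave implicit, the compact ellipsoid, goes through as follows: take $m_k$ with $\sup_n\nu_n(\|(1-P_{m_k})\phi\|>8^{-k})\le\epsilon 2^{-k}$ and $R$ with $\sup_n\nu_n(\|\phi\|>R)\le\epsilon$, set $a_m=1$ for $m<m_1$ and $a_m=2^{-k}$ on $[m_k,m_{k+1})$, and conclude with a union bound.
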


\subsection*{Wiener processes}
Now let $(\Omega, \mathcal{F}, \mathbb{P})$ be a probability space. Let $X\in L_2(\Omega, \mathcal{F}; E)$  be a  random variable, taking values in the Hilbert space $E$, whose law is given by the Gaussian measure $\nu$ of Eq.~\eqref{eq:characteristic-function-Hilbert}. Then,
\begin{equation} \label{eq:m-and-Q}
\begin{aligned}
\mathbb{E}[ \langle X, \hf \rangle_E\, \langle X, \hg\rangle_E]=\langle G \hf, \hg\rangle_E, \quad f,g\in E^*.
\end{aligned}
\end{equation}
\noindent
Now let  $\{\cF_t  \}_{t\in [0,\infty)}$ be a filtration, that is an increasing family of $\sigma$-algebras, such that $\cF_s\subset \cF_t\subset \cF$ for $s\leq t$ and let $(\Om, \mathcal{F},  \{\cF_t  \}_{t\in [0,\infty)}, \mathbb{P})$
be the resulting filtered probability space. We assume the right-continuity and completeness of the filtration, i.e., $\mathcal{F}_t=\bigcap_{s>t} \mathcal{F}_s$ and $\mathcal{F}_0$ contains all $\mathbb{P}$-null sets in $\mathcal{F}$.
\begin{dfn}\label{def:Q-Wiener}~\cite[Sec.~5.2]{DaZab96}
Let $\mu=\cN(0,G)$ be a Gaussian measure on $E$ with mean zero and a covariance operator $G$
which is positive and trace-class. Then an $E$-valued stochastic process $\{\tilde{W}(t)\}_{t\in [0,\infty)}$ on $(\Om, \mathcal{F},  \{\cF_t  \}_{t\in [0,\infty)}, \mathbb{P})$ is called a $G$-Wiener process if:
\begin{itemize}

\item[(A)] $\tilde{W}(\Cdot)$ has continuous paths and $\tilde{W}(0)=0$. 

\item[(B)] $\tilde{W}(\Cdot)$ has independent increments.

\item[(C)] For arbitrary $t\geq 0$ and $h>0$ the law of $h^{-1/2}(\tilde{W}(t+h)-\tilde{W}(t))$
equals $\cN(0,G)$.

\end{itemize}
Furthermore, it is assumed that $\tilde{W}(\Cdot)$ is adapted to the filtration, i.e., $\tilde{W}(t)$ is $\mathcal{F}_t$-measurable for every $t$.
\end{dfn}
\noindent 
It is well known that a process and filtration satisfying the requirements of Def.~\ref{def:Q-Wiener} exist. It follows from this definition and Eq.~\eqref{eq:m-and-Q} that
\begin{equation}
\mathbb{E}[ \langle \tilde{W}(t),   \hf \rangle_E \langle  \tilde{W}(s),  \hg  \rangle_E]=\langle G\hf, \hg\rangle_E\, t\wedge s 
\end{equation}
for $f,g\in E^*$. By rewriting this formula in terms of the pairing $\langle\,\Cdot\, , \, \Cdot\, \rangle$ between $E^*$ and $E$,
we get
\begin{equation}
\mathbb{E}[ \langle \tilde{W}(t), f \rangle \langle   \tilde{W}(s),  g  \rangle]=\langle f,  g\rangle_H\, t\wedge s.
\end{equation}
\noindent
The expression $W(t,f):=\langle \tilde{W}(t), f\rangle$, $f\in E^*$, gives the cylindrical Wiener process, which is suitable for modelling spacetime white noise. This process can also be defined more abstractly, without reference to $E, G$. 

\begin{dfn}\label{def:cyl_w_proc}\cite[Def.~7.11, Lemma~7.12]{PZ07}
A cylindrical Wiener process on $H$, adapted to $\{\cF_t \}_{t\in [0,\infty)}$, is a linear (in the second variable) mapping $W\,:\, [0,T]\times H\to L_2(\Omega,\cF,\bP)$ satisfying the following conditions:
\begin{itemize}

\item[(A)] For each $f\in H$, $(W (t,f))_{t \geq 0}$ is a real-valued $\cF_t$-adapted Wiener process (as in Def. \ref{def:Q-Wiener} with $E=\mathbb{R}$, $G=1$).

\item[(B)]  
 For all $t,s\geq 0$ and $f,g\in H$ it holds that $\bE\,[W(t,f)\,W(s,g)]= \langle f,g\rangle_H \,t\wedge s$\,.

\end{itemize}
\end{dfn}

\subsection*{The stochastic integral with respect to the cylindrical Wiener process}\label{sec:Ito_stochastic_integral_W}
We now briefly explain how to define the stochastic integral with respect to the cylindrical Wiener process. More details can be found in~\cite[Ch.~2]{DS26},~\cite[Sec.~2.5]{PR07},~and~\cite[Sec.~11.5]{DaPrato06}. 
\\
\\
Let $H$ be a separable Hilbert space, $(e_i)_{i\ge1}$ be an orthonormal basis of $H$, $W$ be a cylindrical Wiener process on $H$, i.e.,
$$W(t) = \sum_{i=1}^{\infty} \beta^i(t)\, e_i,\,\,\text{ where} \,\,(\beta^i)_{i\ge1}\,\,
\text{are\, \,independent\,\, standard\,\, Brownian\, motions},$$
and $f$ be a predictable $H$-valued process such that
\begin{equation}
\mathbb{E}\int_0^t \|f(s)\|_{H}^2\,\mathrm{d}s < \infty.
\end{equation}
Then, the It\^o integral is well-defined as a real-valued random variable
\begin{equation}
\int_0^t f(s)\,\mathrm{d}W(s)
= \sum_{i=1}^\infty \int_0^t \langle f(s), e_i\rangle_H \,\mathrm{d}\beta^i(s),
\end{equation}
where the series converges in $L_2(\Omega)$. Let
\begin{equation}
L_2(\Omega \times [0,t]; H)
= \left\{ f \,:\, \mathbb{E}\int_0^t \|f(s)\|_{H}^2 \,\mathrm{d}s < \infty \right\}
\end{equation}
Equivalently, the It\^o integral defines an isometry
\begin{equation}
L_2(\Omega\times[0,t]; H) \;\longrightarrow\; L_2(\Omega,\cF,\bP)
\end{equation}
such that
\begin{equation}\label{eq:Ito_iso}
\bE\Big|\int_0^t\, f(s)\,\rd W(s)\Big|^2 = \bE\int_0^t\, \|f(s)\|_{H}^2\,\rd s.
\end{equation}
To see this, one uses Parseval’s identity:
\begin{equation}
\|f(s)\|_H^2 = \sum_{i=1}^\infty |\langle f(s), e_i\rangle_H|^2.
\end{equation}
Thus,
\begin{equation}
\begin{aligned}
\mathbb{E}\left|\sum_{i=1}^\infty \int_0^t \langle f(s), e_i\rangle_H\, \mathrm{d}\beta^i(s)\right|^2
&= \sum_{i=1}^\infty \mathbb{E}\left|\int_0^t \langle f(s), e_i\rangle_H\, \mathrm{d}\beta^i(s)\right|^2 
= \sum_{i=1}^\infty \mathbb{E}\int_0^t |\langle f(s), e_i\rangle_H|^2 \mathrm{d}s \\
&= \mathbb{E}\int_0^t \sum_{i=1}^\infty |\langle f(s), e_i\rangle_H|^2 \mathrm{d}s
= \mathbb{E}\int_0^t \|f(s)\|_H^2 \,\mathrm{d}s.
\end{aligned}
\end{equation}

\begin{Remark}\label{rem:BDG_ineq}
Note that the formal expansion $W(t) = \sum_{i=1}^{\infty} \beta^i(t)\, e_i$ does not converge in $H$, since the series diverges almost surely 
\begin{equation}
\bE[|W(t)|^2]=\sum_{i=1}^\infty \mathbb{E}|\beta^i(t)|^2=\sum_{i=1}^{\infty}\,t = \infty,\,\,\,\,\, t\geq 0.
\end{equation}
Hence, the cylindrical Wiener process is understood through duality pairings. More precisely, if $W$ is a cylindrical Wiener process on $H$, then
\begin{equation}
W(t,h) = \sum_{i=1}^\infty \langle h, e_i\rangle_H \beta^i(t), \quad h\in H,
\end{equation}
converges in $L_2(\Omega,\cF,\bP)$. Generally, an $H$-valued It\^o integral can be defined with respect to a cylindrical Wiener process on a separable Hilbert space $E$, provided the integrand takes values in the space $\mathcal L_2(E,H)$ of Hilbert–Schmidt operators  defined in Remark~\ref{rem:commute} Item~(F). One has
$$\bE\Big\|\int_0^t \,f(s)\,\rd\,W(s)\Big\|_H^2
= \bE\int_0^t\, \|f(s)\|_{HS}^2\,\rd s.$$
Moreover, we have for $p\in (0,\infty)$, by the Burkholder-Davis-Gundy theorem~\cite[Thm.~5.2.4]{DaZab96}
\begin{equation} \label{eq:Burkholder-Davis-Gundy}
\bE\,\| \int_0^t \,f(s)\,\rd W(s)\,\|^p_H \leq  c_p\,\bE\,\left(\int_0^t \|f(s)\|^2_{\mathrm{HS}}\,\rd s\right)^{p/2}.
\end{equation}
We note that for $p=2$, this follows from the It\^o isometry~\eqref{eq:Ito_iso}. 
\end{Remark}
\subsection*{Wiener chaos}
Let $x\in \bR$, $m\in\bN$. Let 
$$H_m(x):=(-1)^m\,\re^{x^2}\,\frac{\rd^m}{\rd x^m}\,\re^{-x^2}$$
denote the Hermite polynomial of degree $m$ with mean zero and unit variance~\cite[Def.~8.1.1]{PT11} or~\cite[Sec.~9.2.1]{DaPrato06}. One has
\begin{equation}
H_m(x)= \sum_{k=0}^{\floor{m/2}} \frac{(-1)^k\,m!}{(m-2k)!\,k!\,2^k}\, x^{m-2k}\,.
\end{equation}
Observe that~\cite[Eq.~(18.1)~with different coefficient]{Arfken}~or~\cite[Prop.~9.3]{DaPrato06}
$$H''_m(x) -x\,H'_m(x) = -m\,H_m(x),$$
which implies that $H_m$ is an eigenfunction of the one dimensional Ornstein-Uhlenbeck operator, i.e., $L:= -1/2\, \Delta + x\circ \vec\nabla$, with eigenvalue $-m$~\cite[Ex.~3.10]{Gri09}. Moreover, the Hermite polynomials with mean zero and variance $c$ can be determined by
\begin{equation}
H_m(x,c) := (-c)^m\,\re^{\frac{x^2}{2c}}\,\frac{\rd^m}{\rd x^m}\,\re^{\frac{-x^2}{2c}}
\end{equation}
such that $H_m(x,c) = c^{m/2}\,H_m(\frac{x}{\sqrt{c}})$. It holds that
\begin{equation}\label{eq:hermite_real}
H_m(x,c) = \sum_{k=0}^{\floor{m/2}} \frac{(-1)^k\,m!}{(m-2k)!\,k!\,2^k} c^k\,x^{m-2k}, 
 \qquad c\geq 0.
\end{equation}
\noindent
Let $(\Omega,\cF, \bP)$ denote a probability space and $H$ be a real, separable Hilbert space equipped with scalar product $\langle\Cdot\,,\, \Cdot\, \rangle_H$\,. We say that a stochastic process
$$X=\{X(h) \,|\, h\in H\}$$
defined in $(\Omega,\cF, \bP)$ is a Gaussian process on $H$ if $X$ is a centered Gaussian family of random variables such that $\bE\,[X(h) X(g)]=\langle h, g\rangle_H$ for $h,g\in H$. We let
\begin{equation}\label{eq:hermite_hilbert}
H_m(X,c):=\sum_{k=0}^{\floor{m/2}} \frac{(-1)^k\,m!}{(m-2k)!\,k!\,2^k}\,c^{k}\,X^{m-2k}.
\end{equation}
\begin{dfn}\label{def:wiener_chaos}
We denote by $\cH_m$ the closed linear subspace of $L_2(\Omega,\cF,\bP)$
generated by random variables
$$\{ H_m(X(h)), h\in H, \|h\|_H=1\}$$
and call it the Wiener chaos of order $m\in\bN$ such that due to the Wiener-It\^o decomposition~\cite[Sec.~9.3]{DaPrato06} or~\cite[Thm.~2.6]{SJ97} it holds  that 
$$L_2(\Omega,\cF,\bP)=\bigoplus_{l=0}^{\infty} \hil_l.$$
We call the subspace $\bigoplus_{l=0}^{m} \hil_l$ the inhomogeneous Wiener chaos of order $m\in\bN$.
\end{dfn}
\noindent
The Hermite polynomials form an orthogonal system with respect to the Gaussian measure in Euclidean space. 
Consequently, we have the following lemma.
\begin{lem}\label{Nualart-Lemma-1-1}
Let $X,Y$ be two random variables with joint Gaussian distribution such that $\mathbb{E}(X)=\mathbb{E}(Y)=0$ and $\mathbb{E}(X^2)=\mathbb{E}(Y^2)=1$.
Then, for all $n,m$ we have
\begin{equation}
\bE\,[H_n(X)H_m(Y)]=\delta_{n,m}\,n!\,(\bE\,[XY])^n.
\end{equation}
\end{lem}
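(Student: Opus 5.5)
The natural route is through generating functions of the Hermite polynomials $H_m$ defined above (mean zero, unit variance, i.e.\ the probabilists' normalization given by the explicit sum with $2^k$). I will use the identity
\begin{equation}
\exp\Big(tx-\tfrac{t^2}{2}\Big)=\sum_{m=0}^{\infty}\frac{t^m}{m!}\,H_m(x),\qquad t,x\in\bR .
\end{equation}
It can be checked directly from the explicit formula: expand $\re^{tx}\re^{-t^2/2}$ as a Cauchy product and collect the coefficient of $t^m$. That coefficient is $\sum_k \frac{x^{m-2k}}{(m-2k)!}\frac{(-1)^k}{2^k k!}$, which is exactly $H_m(x)/m!$. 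I take this formula as the working definition, since it matches the explicit sum and is the one consistent with unit variance.

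Now let $X,Y$ be jointly Gaussian and centered, with $\bE X^2=\bE Y^2=1$ and $\rho:=\bE[XY]$. For $s,t\in\bR$ the variable $sX+tY$ is centered Gaussian with variance $s^2+2st\rho+t^2$. Using $\bE\re^{Z}=\re^{\frac12\bE Z^2}$ for a centered Gaussian $Z$, I get
\begin{equation}
\bE\Big[\exp\big(sX-\tfrac{s^2}{2}\big)\exp\big(tY-\tfrac{t^2}{2}\big)\Big]=\exp(st\rho)=\sum_{n=0}^\infty\frac{(st)^n\rho^n}{n!}.
\end{equation}

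Substituting the generating-function expansion on the left and exchanging expectation with the double series gives
\begin{equation}
\sum_{n,m}\frac{s^nt^m}{n!\,m!}\,\bE[H_n(X)H_m(Y)].
\end{equation}
Comparing coefficients of $s^nt^m$ in the two sides, which are entire functions of $(s,t)$, yields $\bE[H_n(X)H_m(Y)]=0$ for $n\neq m$ and $\bE[H_n(X)H_n(Y)]/(n!)^2=\rho^n/n!$. That is exactly the claimed $\delta_{n,m}\,n!\,\rho^n$.

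The only step needing care is justifying the exchange of expectation and summation. I would dominate with absolute values: $\sum_m \frac{|t|^m}{m!}|H_m(x)|$ is bounded by the same series with all Hermite coefficients taken in absolute value, namely $\exp(|t||x|+t^2/2)$. Hence the double series is dominated by $\exp(|s||X|+|t||Y|+(s^2+t^2)/2)$, which is integrable because Gaussians have all exponential moments. Fubini/dominated convergence then applies. Analyticity in $(s,t)$ on both sides justifies the coefficient comparison.
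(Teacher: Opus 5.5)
Your proof is correct and follows essentially the same argument as the reference the paper cites in place of a proof (Nualart, Lemma 1.1.1). That argument expands $\exp(tx-t^2/2)$ in Hermite polynomials, uses $\bE[\exp(sX-s^2/2)\exp(tY-t^2/2)]=\exp(st\,\bE[XY])$, and compares coefficients; your bound by $\exp(|s||X|+|t||Y|+(s^2+t^2)/2)$ makes the interchange rigorous, and Nualart's $H_n$ carry an extra factor $1/n!$, which is why his result has $\frac{1}{n!}$ where yours has $n!$. You were also right to work with the explicit-sum normalization: the paper's Rodrigues-type formula with $\exp(x^2)$ defines the physicists' polynomials, for which the stated identity fails (e.g.\ $H_1(x)=2x$ gives $\bE[(2X)^2]=4\neq 1$).
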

\begin{proof} 
See~\cite[Lemma 1.1.1]{Nualart} or~\cite[Lemma~9.5~with different coefficient]{DaPrato06}. 
\end{proof}

\begin{Remark}\label{rem:two_regularization} 
Let $\delta c := c_2-c_1$. From Eq.~\eqref{eq:hermite_hilbert}, one has
\begin{equation}\label{eq:double_wick_ordered_1}
\begin{aligned}
H_n(X,c_1):=&\sum_{m=0}^{\floor{n/2}} \frac{(-1)^m\,n!}{(n-2m)!\,m!\,2^m}\,c_1^{m}\,X^{n-2m}
\\
=&\sum_{m=0}^{\floor{n/2}} \frac{(-1)^m\,n!}{(n-2m)!\,m!\,2^m}\,(c_2-\delta c)^{m}\,X^{n-2m}
\\
=&\sum_{m=0}^{\floor{n/2}} \frac{(-1)^m\,n!}{(n-2m)!\,m!\,2^m}\,\sum_{h=0}^m\,\frac{m!}{(m-h)!\,h!}\,c_2^{m-h}\,(-\delta c)^{h}\,X^{m-2k}.
\end{aligned}
\end{equation}
Observe that $0\leq h\leq m\leq \floor{n/2}$. Letting $m-h=k$ with $0\leq k\leq \floor{(n-h)/2}$, one rewrites Eq.~\eqref{eq:double_wick_ordered_1}
\begin{equation}
\begin{aligned}
H_n(X,c_1)
=&\sum_{h=0}^{\floor{n/2}}\,\sum_{m=h}^{\floor{n/2}}\,\frac{n!\,(-1)^m}{(n-2m)!\,m!\,2^m}\, \frac{m!}{(m-h!)\,h!}\, \,c_2^{m-h}\,(-\delta c)^{h}\,X^{n-2m}
\\
=&
\sum_{h=0}^{\floor{n/2}}\,\frac{n!}{(n-2h)!\,2^h}\,(\delta c)^h\,\sum_{k=0}^{\floor{(n-h)/2}}\,\frac{(-1)^k\,(n-2h)!}{(n-2h-2k)!\,k!\,2^k}\, \,c_2^{k}\,X^{n-2h-2k}.
\end{aligned}
\end{equation}
This verifies that~\cite[Eq.~(9.1.12)]{GlimmJaffe}
\begin{equation}\label{eq:re_wick_ordering}
\begin{gathered}
H_n(X,c_1) = \sum_{h=0}^{\floor{n/2}}\frac{n!}{(n-2h)!\,h!\,2^h}\,(\delta c)^h\,H_{n-2h}(X, c_2).
\end{gathered}
\end{equation}
\end{Remark}

\begin{lem}(Nelson's estimate)\label{lem:nelson}
For every random variable $X$ in an inhomogeneous Wiener chaos of order $n\in\mathbb{N}_+$, cf.  \cite{Nualart}, and every $p\in[2,\infty)$ it holds
\begin{equation}
\bE \big[ |X|^{p}\big]^{\frac{1}{p}} \leq \sqrt{ n+1}\,(p - 1)^{\frac{n}{2}}\, \bE \big[ X^{2}\big]^{\frac{1}{2}},
\qquad
\bE\exp\Bigg(\frac{n\,|X|^{2/n}}{6\,\bE \big[ X^{2}\big]^{\frac{1}{n}}}\Bigg) <\infty.
\end{equation}
\end{lem}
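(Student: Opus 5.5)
The first bound follows from Gaussian hypercontractivity; the exponential bound then follows from the first by expanding the exponential in a power series and estimating term by term. Throughout, write $\sigma:=\bE[X^2]^{1/2}$. If $\sigma=0$ there is nothing to prove, and otherwise we may normalize $\sigma=1$.

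\textbf{Moment bound.} Decompose $X=\sum_{k=0}^n X_k$ with $X_k\in\hil_k$ the orthogonal projection onto the $k$-th Wiener chaos (Def.~\ref{def:wiener_chaos}). The key input is Nelson's hypercontractivity of the Ornstein--Uhlenbeck semigroup $T_t$. The Hermite polynomials are eigenfunctions of the Ornstein--Uhlenbeck operator with eigenvalue $-m$, so $T_t$ acts on $\hil_k$ as multiplication by $\re^{-kt}$. Hypercontractivity states that $\|T_tY\|_{L_p}\le\|Y\|_{L_2}$ whenever $\re^{2t}\ge p-1$. Choose $t$ with $\re^{t}=(p-1)^{1/2}$ and apply this to $Y=T_t^{-1}X_k=\re^{kt}X_k$. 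This gives
\begin{equation}
\|X_k\|_{L_p}\le (p-1)^{k/2}\|X_k\|_{L_2}.
\end{equation}
By the triangle inequality, the bound $(p-1)^{k/2}\le (p-1)^{n/2}$ for $p\ge 2$, and Cauchy--Schwarz over $k=0,\dots,n$ together with orthogonality of the chaoses,
\begin{equation}
\|X\|_{L_p}\le (p-1)^{n/2}\sum_{k=0}^n\|X_k\|_{L_2}\le (p-1)^{n/2}\sqrt{n+1}\,\|X\|_{L_2}.
\end{equation}
This is exactly the first claim. Hypercontractivity itself I would quote from \cite{Nualart}; it is the genuinely deep step. If it has to be proved, the route would be Gross's logarithmic Sobolev inequality for the one-dimensional Gaussian measure, tensorized and then passed to the infinite-dimensional limit via finite-dimensional approximations.

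\textbf{Exponential integrability.} Set $a:=n/6$ and expand
\begin{equation}
\bE\exp(a|X|^{2/n})=\sum_{j\ge0}\frac{a^j}{j!}\bE|X|^{2j/n}.
\end{equation}
For the terms with $2j/n<2$, Jensen's inequality gives $\bE|X|^{2j/n}\le 1$, so these contribute a finite amount. For the terms with $p=2j/n\ge2$, the moment bound gives
\begin{equation}
\bE|X|^{p}\le (n+1)^{j/n}(p-1)^{j}\le (n+1)^{j/n}(2j/n)^{j}.
\end{equation}
By Stirling's formula, $j!\ge (j/\re)^j$. Hence the $j$-th term is at most
\begin{equation}
\Big(\frac{n}{6}\cdot\frac{2j}{n}\cdot\frac{\re}{j}\,(n+1)^{1/n}\Big)^j=\Big(\frac{\re\,(n+1)^{1/n}}{3}\Big)^j.
\end{equation}
The obstacle is to make sure this ratio is strictly below $1$. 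For $n\ge4$ we have $(n+1)^{1/n}\le 5^{1/4}\approx1.50$, so the ratio is about $\re\cdot1.5/3\approx 1.36>1$; the crude estimate fails. I would repair this as follows. First, drop the factor $\sqrt{n+1}$ from the $j$-th power: apply the moment bound once and keep $(n+1)^{1/2}$ outside, i.e. use $\bE|X|^p\le (n+1)^{p/2}(p-1)^{pn/2}$, which contributes only $(n+1)^{j/n}$ as above. Second, estimate more carefully, using $(p-1)^j$ rather than $p^j$ together with the sharper Stirling bound $j!\ge\sqrt{2\pi j}\,(j/\re)^j$.

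If that still leaves a ratio $\ge1$, the fallback is to treat each chaos component separately. Since $|X|^{2/n}\le C_n\sum_k|X_k|^{2/n}$, one gets
\begin{equation}
\exp(a|X|^{2/n})\le\prod_{k}\exp(aC_n|X_k|^{2/n}),
\end{equation}
and Hölder's inequality reduces the claim to homogeneous chaoses. For a homogeneous chaos $X_k$ the factor $\sqrt{n+1}$ does not appear, and the ratio becomes $\re/3<1$. The resulting constant loss is only affordable if the chaoses of order $k<n$ are handled with their better exponent $2/k$; this is what leaves room for the constant $n/6$. I expect this constant bookkeeping to be the main obstacle, while the series argument itself is routine.
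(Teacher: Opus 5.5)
Your moment bound is the paper's argument: homogeneous hypercontractivity, then the triangle inequality and Cauchy--Schwarz over the chaos components. The paper cites the homogeneous estimate from Hairer's notes instead of deriving it from the Ornstein--Uhlenbeck semigroup.

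For the exponential bound, you and the paper start the same way and reach the same geometric ratio $\re\,(n+1)^{1/n}/3$. The paper then says one may assume $n$ large, because the chaos of order $n$ is contained in the chaos of order $n'\ge n$ and $(n+1)^{1/n}\to1$. That reduction does not work:
\begin{itemize}
\item[(i)] The result for $n'$ only gives integrability of $\exp(n'|X|^{2/n'}/(6\sigma^{2/n'}))$. This is weaker than the claim for $n$, since $|X|^{2/n'}\ll|X|^{2/n}$ for large $|X|$.
\item[(ii)] Inserting the order-$n'$ moment bound into the order-$n$ series gives terms of size $j^{jn'/n}/j!$, and that series diverges.
\end{itemize}
So the paper's series argument only covers $n$ with $\re\,(n+1)^{1/n}<3$, i.e.\ $n\ge37$. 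Your observation that it fails at $n=4$ is correct. Your chaos-by-chaos fallback is what actually proves the lemma, with the stated constant $n/6$, for every $n$. (Lowering the prefactor to $n/(6(n+1)^{1/n})$ would make the crude series converge for all $n$, and that already suffices for Lemma~\ref{lem:prob_of_density}.)

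\textbf{Drop the intermediate repair.} Replacing $p^j$ by $(p-1)^j$ only multiplies the $j$-th term by $(1-n/(2j))^j\to\re^{-n/2}$. The factor $\sqrt{2\pi j}$ is subexponential. Keeping $\sqrt{n+1}$ ``outside'' is the same bound you already used. None of this changes the ratio.

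\textbf{Pin down the fallback.} The constant in front of the top chaos must be exactly $1$ (or $1+\epsilon$), and you should say why:
\begin{itemize}
\item[(a)] Normalize $\sigma=1$, so $\|X_k\|_{L_2}\le1$ for every $k$. For $n\ge2$, concavity of $t\mapsto t^{2/n}$ gives $|X|^{2/n}\le\sum_{k=0}^n|X_k|^{2/n}$ with constant one.
\item[(b)] Apply H\"older with exponent $q\in(1,3/\re)$ to $\exp(\tfrac n6|X_n|^{2/n})$. By the homogeneous bound, its series has ratio $q\re/3<1$.
\item[(c)] The remaining factors then get very large H\"older exponents $q_k$, which costs nothing. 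For $1\le k<n$ one has $q_k\tfrac n6\, s^{2/n}\le\tfrac{k}{4\re}\,s^{2/k}+C$ for all $s\ge0$. Moreover $\bE\exp(\tfrac{k}{4\re}|X_k|^{2/k})$ is controlled by the homogeneous series with ratio $1/2$. Finally, $X_0$ is a constant of modulus at most $1$.
\end{itemize}
The resulting bound depends only on $n$, which is the uniformity Lemma~\ref{lem:prob_of_density} actually uses. For $n=1$, replace subadditivity by $(u+v)^2\le(1+\delta)u^2+(1+\delta^{-1})v^2$ with $v=X_0$.
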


\begin{proof}
The first bound follows from the observation that for a homogeneous Wiener chaos it holds, cf.~\cite[Eq.~(7.2)]{Hairer-Malliavin},
\begin{equation}\label{eq:Nelson_homo}
\bE\,[\,|X|^p]^{1/p} \leq \,(p-1)^{n/2}\,\bE\,[\,|X|^2]^{1/2}\,.
\end{equation}
For an inhomogeneous Wiener chaos of order $n$ we estimate
\begin{equation}
\begin{aligned}
\bE\,[\,|\sum_{i=0}^nX_i|^p]^{1/p}  \leq \sum_{i=0}^n\bE\,[\,|X_i|^p]^{1/p}& \leq (p-1)^{n/2}\,\sum_{i=0}^n\bE\,[\,|X_i|^2]^{1/2} \\
&\leq \sqrt{n+1}\, (p-1)^{n/2}\, \bE\,[\,\sum_{i=0}^{n}|X_i|^2]^{1/2},
\end{aligned}
\end{equation}
where in the first step we used the triangle inequality for the $L_p$ norm and in the last step the Cauchy-Schwarz inequality. To prove the second relation, consider
\begin{equation}
\begin{aligned}
\bE\exp\Bigg(\frac{n|X|^{2/n}}{6\bE \big[ X^{2}\big]^{\frac{1}{n}}}\Bigg) 
&= \sum_{i=0}^{\infty}\frac{n^i\bE|X|^{2i/n}}{i!\,6^i\,[\,\bE\,|X|^2]^{i/n}}
\\
&= \sum_{i=0}^{n-1}\frac{n^i\bE\,|X|^{2i/n}}{i!\,6^i\,[\,\bE\,|X|^2]^{i/n}} + \sum_{i=n}^{\infty}\frac{n^i\bE\,|X|^{2i/n}}{i!\,6^i\,[\,\bE\,|X|^2]^{i/n}}.
\end{aligned}
\end{equation}
The first sum is finite. For the second sum using the first bound in the statement with $p=(2i/n)\in[2,\infty)$, one obtains
\begin{equation}
\begin{aligned}
R:=  \sum_{i=n}^{\infty}\,\frac{n^i\bE\,|X|^{2i/n}}{i!\,6^i\,[\,\bE\,|X|^2]^{i/n}}
&\leq \sum_{i= n}^{\infty}\, \frac{n^i\, (n+1)^{i/n}\,(2i/n)^{i}\,[\,\bE\,|X|^2]^{i/n}}{i!\,6^i\,[\,\bE\,|X|^2]^{i/n}}
\\
&\leq  \sum_{i=n}^{\infty}\, \frac{1}{i!}\,\Big(\frac{(n+1)^{1/n}\,i}{3}\Big)^i.
\end{aligned}
\end{equation}
Let us calculate the ratio of $R_{i+1}/R_i$
\begin{equation}
\begin{aligned}
\lim_{i\to \infty}\frac{R_{i+1}}{R_i} 
&= \lim_{i\to\infty}\,\frac{(i+1)^{i+1}}{(1+i)!}\,\frac{i!}{i^i}\,\Big(\frac{1}{3}\Big)^{i+1}\,3^{i}\,\frac{ ((n+1)^{1/n})^{i+1}}{((n +1)^{1/n})^{i}}
\\
&= \lim_{i\to\infty}\,\frac{1}{3}\Big(1+\frac{1}{i}\Big)^{i}\, (n+1)^{1/n}
=\frac{e}{3}\, ((n+1)^{1/n})\,.
\end{aligned}
\end{equation}
We can assume without loss that $n$ is sufficiently large, because $\mathcal{H}_{n}\subset \mathcal{H}_{n'}$, $n'\geq n$. Now we use the fact that the function $f(n)=(n+1)^{1/n}$ is decreasing and tends to 1 as $n\to \infty$.  This gives convergence of the sum by the d'Alambert criterion. 
\end{proof}

\section{Stochastic differential equations and their invariant measures}\label{sec:stochastic_analysis}
\newcommand{\tH}{ {\red \tilde{H}}}
In this section we introduce, in general terms, the class of stochastic differential equations we are interested in.  Let  $H$ be the Hilbert space carrying a cylindrical Wiener process, as in Def.~\ref{def:cyl_w_proc}, and let 
$\tilde{H}\subset H$ be a Hilbert space continuously embedded in $H$ (in our case $\tilde H=L_2^1(\bS_R)$, $H=L_2(\bS_R)$, $E=L_2^{-\kappa}(\bS_R)$, for some $\kappa>0$).
\begin{dfn}\label{def:mild_solution}\cite[Sec.~11.5.2]{DaPrato06}\label{def:stochastic_mild_solution}
We denote by $C([0,T]; L^2(\Omega,\cF, \{\cF_t\}_{t\geq 0},\bP; \tilde H))$ the space consisting of all stochastic processes $X(t)\in L^2(\Omega,\cF, \{\cF_t\}_{t\geq 0},\bP; \tilde H) $ for all $t\in[0,T]$ such that the mapping $[0,T]\to L^2(\Omega,\cF,\bP;\tilde H)$ defined by $t\mapsto X(t)$ is continuous. Moreover, it is a Banach space endowed with the
norm
\begin{equation}
\|X\|_{C([0,T];L^2(\Omega,\cF, \{\cF_t\}_{t\geq 0},\bP; \tilde H))}:= \sup_{t\in[0,T]}\,\left(\bE\,\|X(t)\|_{\tilde H}^2 \right)^{1/2}\,.
\end{equation}
\end{dfn}
\noindent
Consider the following stochastic differential equation
\begin{equation}\label{eq:nonlieaner_general}
\begin{cases}
\rd X(t)=\left(-Q\,X(t)+F(X(t))\right)\,\rd t +\sqrt{2}\, \rd W(t),
\\
X(0)= \rx\in  L^2(\Omega,\cF, \{\cF_t\}_{t\geq 0},\bP; \tilde H),
\end{cases}
\end{equation}
where $Q\,:\,\text{Dom}(Q)\subset \tilde H\to \tilde H$ is the generator of a $C_0$-semigroup $(\re^{-t\,Q})_{t\geq0}$\, of contractions on $\tilde H$, the map $F\,:\,\tilde H \to H$ be a possibly non-linear function with properties specified in Lemma~\ref{lem:Hypothesis 8.4} below and $(W(t,\Cdot))_{t\geq 0}$ be the cylindrical Wiener process on $H$ adapted to 
$\{\cF_t^W\}_{t\geq 0}$ given in Def.~\ref{def:cyl_w_proc}.  A stochastic process $X\in C([0,T]; L^2(\Omega,\cF, \{\cF_t\}_{t\geq 0},\bP; \tilde H))$ 
is said to be a mild solution of Eq.~\eqref{eq:nonlieaner_general} if for every $T\geq t\geq 0$  it satisfies the following integral equation
\begin{equation} \label{eq:integrated_SPDE}
\begin{aligned}
X(t)= \re^{-t\,Q}\,\rx+\int_0^t\re^{-(t-s)\,Q}\, F(X(s))\,\rd s +\sqrt{2} \int_0^t\re^{-(t-s)\,Q} \,\rd W(s)\,.
\end{aligned}
\end{equation}
\noindent
We are going to consider such $F\,:\,\tilde H \to H$ that the corresponding integral above takes
values in $\tilde H$.
\\[0.3cm]
\noindent
Let us recall briefly the standard construction of   $(\Omega, \cF,   \{\cF_t\}_{t\geq 0}, \mathbb{P})$ used above: Let $(\Om^0, \mathcal{F}^0, \mathbb{P}^0)$
be the original probability space of the initial conditions and we complete it if necessary. (In applications the initial conditions may be defined, to start with, e.g. on the probability space $(\cD'(\bS_R), \mathrm{Borel}(\cD'(\bS_R)),  \nu_{R})$, which is incomplete). Next, let $(\Om^{\mathrm{W}}, \mathcal{F}^{W}, 
\{\mathcal{F}^{W}\}_{t\geq 0}, \mathbb{P}^{\mathrm{W}} )$ be the canonical filtered probability space of the cylindrical Wiener process. We form the product space 
\begin{equation}
(\Om, \mathcal{F}, \bP):=(\Omega^0\times \Omega^W, \mathcal{F}^0\otimes \mathcal{F}^{W}, \mathbb{P}^0\otimes \mathbb{P}^W).
\end{equation}
Proceeding to the filtration, we first define
\begin{equation}
\mathcal{G}_t:= \mathcal{F}^0 \otimes \mathcal{F}_t^W.
\end{equation}
Now, we just have to complete it to ensure right continuity. For this purpose, we set
\begin{equation}
\bar{\mathcal{G}}_t:= \sigma(\mathcal{G}_t \cup \mathcal{N}_{\mathbb{P}}), \quad \mathcal{F}_t:=\bigcap_{s>t} \bar{\mathcal{G}}_s,
\end{equation} 
where $\sigma(\mathcal{G}_t \cup \mathcal{N}_{\mathbb{P}})$ is the smallest $\sigma$-algebra generated by $\mathcal{G}_t$ and the null sets of $\mathbb{P}$. This completes the construction of the filtered probability space, which carries  the Wiener process and the initial conditions as  independent random variables. We remark that we will sometimes
write, for brevity, $X(0)\in \tilde H$, rather than $X(0)\in L^2(\Omega,\cF,\bP; \tilde H)$, even if the initial data is random.
\\[0.3cm]
\noindent
Let us now move on to the concept of an invariant measure for the above class of stochastic differential equations.
\begin{dfn}\label{def:stochastic_Markov_transition}
Let $(X(t))_{t\geq 0}$ be a continuous stochastic process on a probability space $(\Omega,\cF, \{\cF_t\}_{t\geq 0}, \bP)$ with values in $\tilde H$. For all $A\in \mathrm{Borel}(\tilde H)$, $t\geq 0$ and $X(0)=:\rx\in \tilde H$, we denote the law of $X(t)$ by
$$P(t,\rx,A):=\bP(X(t)\in A\,|\,X(0)=\rx)\in\cM(\tilde H).$$
Moreover, for all $f\in B_{\rb}(\tilde H) $ we define its corresponding transition semigroup $\cP_t$ by
\begin{equation}
\cP_tf(\rx):= \bE\big[f(X(t))\,|\,X(0)=\rx\big]=\int_{\tilde H}\, f(\ry)\,P(t,\rx,\rd \ry) \label{eq:semigroup-of-equation}
\end{equation}
such that
$$P(t,\rx,A)=\cP_t\,\mathmybb{1}_A(\rx)=\int_{\tilde H}\,\mathmybb{1}_A(\ry)\,P(t,\rx,\rd\ry)\quad\text{for\,\, all}\quad A\in\mathrm{Borel}(\tilde H).$$ 
\end{dfn}

\begin{dfn}\label{def:measure_invariant}
A probability measure $\mu\in\cM(\tilde H)$ is invariant for the semigroup $\cP_t$ if for all $f\in B_{\rb}(\tilde H)$ and all $t\geq 0$ it holds
\begin{equation}
\int_{\tilde H}\,\cP_t\,f(\rx)\,\mu(\rd\rx)
=\int_{\tilde H}\,f(\rx)\,\mu(\rd\rx)
=\int_{\tilde H}\,\bE\,[f(X(t,\rx))]\,\mu(\rd\rx).
\end{equation}
\end{dfn}
\noindent
Let us now consider $F$ in Eq.~\eqref{eq:integrated_SPDE}, which is a gradient of a certain function $U$, as described below:
\begin{lem}\label{lem:Hypothesis 8.4}~\cite[Hypothesis~8.4]{DaZab96} 
Consider Eq.~\eqref{eq:nonlieaner_general} with $F=U'$, where $U$ is a polynomial with appropriate properties and assume that
\begin{itemize}
\item[(1)] $Q$ is a self-adjoint positive definite operator on $H$ such that $Q^{-1}$ is of trace-class.
    
 \item[(2)] For any $t \geq 0$, $\tilde H$ is an invariant subspace of $S(t)=\re^{-t\,Q}$. Moreover for any $f \in \tilde H$ the mapping $S(\Cdot)f$ is continuous on $[0,+\infty)$ with the topology of $\tilde H$.
    
 \item[(3)] The Ornstein–Uhlenbeck process
 $$W_Q(t):=\sqrt{2}\,\int_0^t \re^{-(t-s)Q}\,\rd W(s),$$
 where $W$ is the cylindrical Wiener process on $H$, has a version concentrated on $\tilde H$.
    
 \item[(4)] $U\,:\,\tilde H \to \mathbb{R}$  is continuous and there exists the directional derivative $DU(x:h)$ of $U$ at any point $x \in \tilde H$ and at any direction $h \in \tilde H$. Moreover, there exists a mapping $F\,:\,  \tilde H \to H$ such that $DU(x: h) = \langle F(x), h \rangle_{H}$. In this case we write $F=U'$.
    
\item[(5)] $U$ is bounded from above, and $F\,:\,\tilde H \to H$ is a locally Lipschitz mapping: for arbitrary $r > 0$ there exists $k_r > 0$ such that
$$\|F(x) - F(y)\|_{H} \leq k_r\, \|x - y\|_{\tilde H}.$$
    
\item[(6)] For arbitrary $x \in \tilde H$ there exists an $\tilde H$-continuous solution of the integral equation
\begin{equation}
X(t) = \re^{-t\,Q}\,x + \int_0^t\,\re^{-(t - s)\,Q}\,F(X(s))\,\rd s + W_Q(t), \quad t \geq 0. 
\end{equation}
\end{itemize}
Now let $\nu=\cN(0, Q^{-1} )$. Under the above assumptions the measure
\begin{equation}
\mu_F(\rd \phi) = \re^{U(\phi)}\, \nu(\rd \phi), \quad \phi \in \tilde H
\end{equation}
is invariant for Eq.~\eqref{eq:nonlieaner_general} with $F=U'$~\cite[Thm.~8.6.3]{DaZab96} and the random field $X$ is called a gradient process. One also shows that the measure $\mu_F$ given by the above expression, is absolutely continuous with respect to the measure $\nu$. Similar result can be found in~\cite[Sec.~17.5]{PZ07}.

\end{lem}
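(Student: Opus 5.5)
The plan is a Galerkin (finite-dimensional) approximation: prove invariance for finite-dimensional gradient systems, where the generator is symmetric with respect to the candidate measure, and then pass to the limit using hypotheses (2)--(6).

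Absolute continuity of $\mu_F$ with respect to $\nu$ is immediate. Since $U$ is bounded above by (5), $\re^{U}\in L_1(\nu)$, and after normalization Theorem~\ref{Radon-Nikodym theorem} applies. For invariance, first observe that $\nu=\cN(0,Q^{-1})$ is concentrated on $\tilde H$. By (1) and (3), $\nu$ is the law of the stationary Ornstein--Uhlenbeck process, i.e. the limit in law of $W_Q(t)$ as $t\to\infty$, and $W_Q$ has a version in $\tilde H$. So $\mu_F$ can be regarded as a measure on $\tilde H$.

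\textbf{Finite-dimensional step.} Let $(e_j)$ be the eigenbasis of $Q$, $P_k$ the orthogonal projection onto $\mathrm{span}\{e_1,\dots,e_k\}$, $U_k:=U\circ P_k$ and $F_k:=P_kF(P_k\,\Cdot)$, which is the gradient of $U_k$ on $P_kH$. On $P_kH\cong\bR^k$, consider $\rd X_k=(-QX_k+F_k(X_k))\rd t+\sqrt2\,\rd P_kW$, and on the complement the free OU dynamics. The finite-dimensional generator is
\begin{equation}
L_k=\Delta+\langle -Qx+\nabla U_k(x),\nabla\rangle .
\end{equation}
For smooth compactly supported $f,g$, integration by parts against the density $\re^{U_k(x)-\frac12\langle Qx,x\rangle}$ gives
\begin{equation}
\int g\,L_kf\,\rd\mu_k=-\int\nabla f\cdot\nabla g\,\rd\mu_k ,
\end{equation}
so $L_k$ is symmetric and $\int L_kf\,\rd\mu_k=0$. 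Here $\mu_k$ is $\re^{U_k}\nu$ normalized; on the complement of $P_kH$ it is just the Gaussian factor, which is invariant for the OU part. Next I need global existence of $X_k$ and that the transition semigroup is generated by $L_k$. I would obtain these by localizing: replace $F_k$ by a globally Lipschitz gradient on a ball of radius $r$, get invariance for the truncated system, and remove the cutoff using that exit times tend to $\infty$. This gives $\int\bE f(X_k(t,x))\,\mu_k(\rd x)=\int f\,\rd\mu_k$.

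\textbf{Limit $k\to\infty$.} Since $U_k\to U$ $\nu$-a.s. (by continuity of $U$ on $\tilde H$) and $\re^{U_k}\le\re^{\sup U}$, dominated convergence gives $\mu_k\to\mu_F$ in total variation. For the dynamics, fix $x\in\tilde H$ and let $X$ be the $\tilde H$-continuous solution from (6). Write $X_k-X$ in mild form and introduce the stopping time $\tau_r$, the first time either process leaves the $\tilde H$-ball of radius $r$. The local Lipschitz bound (5), the strong continuity (2), and $P_kW_Q\to W_Q$ in $C([0,T];\tilde H)$ then give $\sup_{t\le T\wedge\tau_r}\|X_k(t)-X(t)\|_{\tilde H}\to0$ in probability by Gronwall. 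Because $X$ is continuous, $\tau_r\to\infty$ as $r\to\infty$ uniformly in $k$ large, and therefore $X_k(t,x)\to X(t,x)$ in probability. For bounded continuous $f$ this yields $\bE f(X_k(t,x))\to\bE f(X(t,x))$. Combining this with the total variation convergence $\mu_k\to\mu_F$ and dominated convergence in $x$, the identity passes to $\int\cP_tf\,\rd\mu_F=\int f\,\rd\mu_F$. A monotone class argument then extends it to all $f\in B_{\rb}(\tilde H)$.

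\textbf{Main obstacle.} The delicate step is the Galerkin convergence $X_k(t,x)\to X(t,x)$. We have only local Lipschitz control of $F\colon\tilde H\to H$, and the mild integral has to be estimated in $\tilde H$, which requires the smoothing of $\re^{-tQ}$ from $H$ to $\tilde H$. I would first try the stopping-time/Gronwall argument above, using the a priori existence of the continuous solution from (6) to control the exit times. If that fails, I would fall back on uniqueness from (5) together with a tightness argument.
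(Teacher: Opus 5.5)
The paper does not prove this lemma; it imports it from \cite[Thm.~8.6.3]{DaZab96}. Your strategy is the standard one: reversibility for a finite-dimensional approximation, then a limit $k\to\infty$. The finite-dimensional computations are right: $L_k$ is symmetric in $L_2(\re^{U_k-\frac12\langle Qx,x\rangle}\rd x)$, and $\mu_k$ factorizes with the Gaussian on $(1-P_k)H$.

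\textbf{The gap is in the limit $k\to\infty$.} It uses structure that is not contained in (1)--(6). Those hypotheses say nothing about how the spectral projections $P_k$ of $Q$ act on $\tilde H$, and nothing about $\re^{-tQ}$ mapping $H$ into $\tilde H$. You need all of the following:
\begin{itemize}
\item[(a)] $P_kH\subset\tilde H$, simply to define $U_k$ and $F_k$.
\item[(b)] $\sup_k\|P_k\|_{\tilde H\to\tilde H}<\infty$ and $P_ky\to y$ in $\tilde H$. This is what gives $U(P_kx)\to U(x)$, hence $\mu_k\to\mu_F$, and also $\|P_kX_k-X\|_{\tilde H}\to0$ along the path.
\item[(c)] A smoothing bound $\|\re^{-tQ}\|_{H\to\tilde H}\le Ct^{-\alpha}$ with $\alpha<1$.
\end{itemize}
Point (c) is needed because hypothesis (5) controls $F(x)-F(y)$ only in the $H$-norm. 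With your Galerkin system (free Ornstein--Uhlenbeck on the complement) one has
\begin{equation}
X_k(t)-X(t)=\int_0^t\re^{-(t-s)Q}\big(P_kF(P_kX_k(s))-F(X(s))\big)\,\rd s ,
\end{equation}
and the right-hand side cannot be estimated in $\tilde H$ without (c). So the Gronwall step does not close.

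Your fallback does not help. Uniqueness in $\tilde H$ from (5) needs the same smoothing bound, and nothing supplies compactness in $C([0,T];\tilde H)$ for a tightness argument. As written, the argument proves the lemma only under (a)--(c), and these must be added as hypotheses.

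In the one place the paper uses the lemma, (a)--(c) do hold. Take $\tilde H=L^1_2(\bS_R)$ and $Q=Q_{R,N}$. The projection onto spherical harmonics of degree $\le k$ commutes with $(1-\Delta_R)^{1/2}$, so it is an orthogonal projection in $L^1_2(\bS_R)$ converging strongly there. Lemma~\ref{lem:contraction_semigroup} gives $\|\re^{-tQ_{R,N}}\|_{L_2\to L^1_2}\le Ct^{-1/6}$.

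\textbf{Two smaller points.}

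First, your argument that $\nu(\tilde H)=1$ is invalid. $\tilde H$ is not closed in $H$, and concentration on a non-closed set does not survive the weak limit $W_Q(t)\to\nu$. With (c) you can instead use the stationary process $\re^{-tQ}\xi+W_Q(t)$, with $\xi\sim\nu$ independent of $W$. In the application this fact is Lemma~\ref{lem:mu_measure_well_defined}.

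Second, global existence of the Galerkin systems is not covered by (6), which concerns only the infinite-dimensional equation. It is also not implied by $U$ being bounded above. So ``exit times tend to $\infty$'' is exactly the claim that needs proof, and you do not prove it. You can bypass it. Your localization shows that the killed semigroup is $\mu_k$-symmetric, which gives
\begin{equation}
\int\bE\big[f(X_k(t,x));\,t<\zeta_k\big]\,\mu_k(\rd x)\le\int f\,\rd\mu_k\qquad\text{for } f\ge0 .
\end{equation}
The stopping-time comparison with the global solution $X$ gives $\bP(\zeta_k\le t)\to0$. Passing to the limit yields $\int\cP_tf\,\rd\mu_F\le\int f\,\rd\mu_F$. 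Applying this to both $f$ and $\|f\|_\infty-f$ gives equality, because $X$ is conservative.
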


\chapter{Finite volume measure}\label{ch:finite_volume}
We would like to construct the dynamical $P(\Phi)_2$ model of Euclidean quantum field theory, where $P\,:\,\mathbb{R}\to\mathbb{R}$ is a polynomial of the form
\begin{equation}\label{eq:polynomial_form}
P(\tau) = \frac{1}{n}\,\tau^{n} +\sum_{m=0}^{n-1}\,a_m\,\tau^{m},\,\, n\in2\bN_+,\,\,\mathrm{and}\,\,a_m\in\bR\,\,\,\,\forall\,\, m\in\{0,\cdots,n-1\}.
\end{equation}
Observe that such polynomials are bounded from below, but may not be globally Lipschitz continuous.
\section{$P(\Phi)_2$ measure on the plane}\label{sec:measure_on_plane}
The measure of the dynamical $P(\Phi)_2$ on $\sS'(\bR^2)$ is, heuristically, given by
\begin{equation}\label{eq:un_regularized_measure}
 \mu(\rd\phi):=\frac{1}{\mathcal{Z}}\exp\left(-\int_{\bR^2} \lambda\,P(\phi(x))\,\rd x\right) \nu(\rd\phi),
\end{equation}
where $\lambda\in (0,\infty)$ is the coupling constant, $\mathcal{Z}\in (0,\infty)$ is the normalization factor, which might not be finite due to the emerging dependence on the volume of the underlying space, $\nu(\rd\phi)$ indicates the free field Gaussian measure with covariance $G:=(1 -\Delta)^{-1}$, which is a bounded operator on $L_2(\bR^2)$. Moreover, $(-\Delta)$ is the Laplacian on $\bR^2$. Observe that the operator $(-\Delta)$ does not have a compact resolvent and has a purely continuous spectrum on $[0,\infty)$\footnote{The operator $(-\Delta)^{-1}$ is unbounded as a bilinear form $\sS(\bR^2)\times\sS(\bR^2)\to\bR$, since $|k|^{-2}$ is not locally integrable.}. We define $G$ on $L^{-2}_2(\bR^2)\subset \sS'(\bR^2)$.

\begin{Remark}\label{rem:characteristic_functional}
We denote the centred Gaussian measure with the covariance $G$ by $\nu:=\cN(0,G)$. The corresponding characteristic functional, i.e., $S_{\nu}(f)$ can be determined via 
\begin{equation}
\sS(\bR^2)\ni f  \mapsto S_{\nu}(f)=\int_{\sS'(\bR^2)}\re^{i\langle\phi,f\rangle}\,\nu(\rd\phi) =\re^{-\frac{1}{2} \langle Gf,f\rangle}\,,
\end{equation}
where $f\in \sS(\bR^2)$, $\langle\Cdot,\Cdot\rangle$ denote $\sS(\bR^2)-\sS'(\bR^2)$ duality pairing. The existence of the measure follows from Theorem~\ref{thm:Minlos}. Note that $\langle Gf,f\rangle_{L_2(\bR^2)}=\|(1-\Delta)^{-1/2}f\|^2_{L_2(\bR^2)}=\|f\|^2_{L^{-1}_2(\bR^2)}$. This implies that the measure $\nu$ is supported on the space of distributions, in sense of Remark~\ref{rem:support_measure}. One may find a discussion on the support property of the Gaussian measure in~\cite{RR73}.
\end{Remark}
\noindent
On account of Remark~\ref{rem:fundamental_solution}, one infers that the measure of the dynamical $P(\Phi)_{2}$ model as given in Eq.~\eqref{eq:un_regularized_measure} is ill-defined. Firstly, a typical field $\phi$ in the support of the Gaussian measure $\nu(\rd\phi)$ lacks integrability, i.e., it does not have sufficient decay at infinity. Hence, it might contain some infrared divergences, since in Eq.~\eqref{eq:un_regularized_measure} we are performing the integral over $\bR^2$. Secondly, such a field does not have enough regularity, as oftentimes it is a distribution. Hence, the product terms inside $P(\phi(x))$ might not be well-defined. To circumvent these two problems, we first construct the measure $P(\Phi)_2$ on a two-dimensional sphere $\bS_R$ with finite volume and we shall introduce an appropriate UV cutoff to regularize the covariance of the Gaussian measure $\nu(\rd\phi)$.

\begin{Remark}\label{rem:euclidean_R_two}
The isometry group of $\bR^2$ consists of all affine transformations of the form $Q(x)=Ax+b$, where $A\in O(2)$ and $b\in\bR^2$, and it is isomorphic to the Euclidean group $E(2)=T(2)\rtimes O(2)$ with the multiplication rule $(t_1,R_1)\cdot(t_2,R_2)= (t_1+R_1t_2,R_1R_2)$ for $t_i\in T(2)$ and $R_i\in O(2)$ for $i=1,2$. The orthogonal group $O(2)$ is the symmetry group of $\bR^2$ fixing the origin with $\mathrm{dim}\,O(2) = 1$. Moreover, $T(2)$ is the two-dimensional subgroup of the group $E(2)$ consisting of translations. One has $\mathrm{dim}\, E(2)= 3$, which is equal to the dimension of the symmetry group of $\bS_R$, i.e., $O(3)$, as well as the subgroup consists of rotations $SO(3)$, i.e., $SO(3)$, see Remark~\ref{rem:sphere_quotient}. 
\end{Remark}

\section{$P(\Phi)_2$ measure on the sphere}\label{sec:measure_on_sphere}
The measure of the dynamical $P(\Phi)_2$ model on $\sD'(\bS_R)$ is given by
\begin{equation}\label{eq:regularized_measure_sphere}
 \mu_R(\rd\phi):=\frac{1}{\mathcal{Z}_R}\exp\Big(-\int_{\bS_R}\, : P(\phi(\rx)) :\,\rho_R(\rd\rx)\Big) \nu_R(\rd\phi),
\end{equation}
where $\nu_R:= \mathcal{N}(0,(1-\Delta_R)^{-1})$ indicates the free field Gaussian measure with covariance $G_R:=(1-\Delta_R)^{-1}$, $\rho_R$ is the Riemannian volume measure, and $:\Cdot:$ refers to the Wick renormalization with respect to $\nu_R$\,. For all $R\in\bN_+$, the finite volume $P(\Phi)_2$ measure, i.e., $\mu_R$, is absolutely continuous with respect to the free field measure $\nu_R$, see Remark~\ref{rem:abso_con_measure}. Its construction, which we give
below, is based on Lemma~\ref{lem:nelson} (Nelson's estimate), which in turn follows from a hypercontractivity estimate~\cite{Nelson,DiGenova},~\cite[Ch.~5]{SJ97}. In contrast to the $\Phi^4$ model in three dimensions, where mass renormalization is needed, to treat the $P(\Phi)_2$ model one only needs to take into account vacuum energy renormalization via the Wick ordering\,\footnote{See~\cite[Table.~14.1]{GlimmJaffe} for a comparison between different theories.}, see~\cite[Sec.~4]{BFS83}. To this end, we work with a regularized covariance for the free field measure, which is the resolvent of an elliptic operator of degree six by introducing the UV cut-off $N\in\bN_+$, in the frequency space. Then, we shall study the limit as $N\to \infty$. Observe that such a choice of the UV cutoff breaks the reflection positivity of the Gaussian free field measure~\cite{ABRSS18}, but preserves its rotational invariance property. We restore the reflection positivity in Sec.~\ref{sec:reflection_posivity} using a different regularization.

\begin{Remark}\label{rem:measure_sphere_rotations}
For all $R\in[1,\infty)$, the measure $\mu_R$ is invariant under the action of the orthogonal group $O(3)$, see Remark~\ref{rem:sphere_O(3)}.
\end{Remark}
\section{Regularized $P(\Phi)_2$ measure}\label{sec:regularized_measure}
For all $R,N\in\bN_+$, consider the bounded operators $G_R\,,K_{R,N}\,:\,L_2(\bS_R)\to L_2(\bS_R)$ defined by
\begin{equation}
 G_R:=(1-\Delta_R)^{-1},\qquad K_{R,N}:=(1-\Delta_R/N^2)^{-1} \,.
\end{equation}
Observe that from Remark~\ref{rem:fundamental_solution_sphere}, the integral kernel of the operator $G_R= (1-\Delta_R)^{-1}$ has a logarithmic singularity in terms of the geodesic distance at the coinciding points. For all $R,N\in[1,\infty)$, we set $G_{R,N} := K_{R,N}\,G_R\,K_{R,N}$, which is a trace-class operator, see Remark~\ref{rem:expansion_covariance_harmo}, and denote its inverse by $Q_{R,N}:=(1-\Delta_R)(1-\Delta_R/N^2)^2$, where $Q_{R,N}\,:\, L^6_2(\bS_R)\subset L_2(\bS_R)\to L_2(\bS_R)$ is a self-adjoint, positive definite operator and generates a $C_0$-semigroup of contractions on $L^1_2(\bS_R)$, see Remark~\ref{rem:reproducing_kernel_sphere}.
\noindent
\\[0.3cm]
We denote the centered Gaussian measure with the covariance $G_{R,N}$ by $\nu_{R,N} := \cN(0,G_{R,N})$. The associated characteristic functional for all $f\in \sD(\bS_R)$ has the form
\begin{equation}
f\ni \sD(\bS_R) \mapsto S_{\nu_{R,N}}(f)
=\int_{\sD'(\bS_R)}\re^{i\langle\phi,f\rangle}\,\nu_{R,N}(\rd\phi) 
= \re^{-\frac{1}{2}\langle G_{R,N}f,f\rangle }\,,
\end{equation}
where $\langle\Cdot,\Cdot\rangle$ denote $\sD(\bS_R)-\sD'(\bS_R)$ duality pairing. The existence of the measure follows from Theorem~\ref{thm:Minlos}.
\\[0.3cm]
\noindent
One writes a class of regularized probability measures on $\sD'(\bS_R)$ capturing the $P(\Phi)_2$ interaction as follows
\begin{equation}\label{eq:regularized_measure}
 \mu_{R,N}(\rd\phi):=\frac{1}{\mathcal{Z}_{R,N}}\exp\left(-\int_{\bS_R} P(\phi(\rx),c_{R,N})\,\rho_R(\rd \rx)\right) \nu_{R,N}(\rd\phi),
\end{equation}
where
\begin{equation}\label{eq:counterterm}
\begin{aligned}
c_{R,N}:=\int_{\sD'(\bS_R)}\phi(\mathrm{x})^2\,\nu_{R,N}(\rd\phi)
=&\frac{1}{4\pi R^2}\int_{\bS_R}G_{R,N}(\rx,\rx)\,\rho_R(\rd\rx)
\\=&\, \mathrm{Tr}(G_{R,N})/4\pi R^2\,.
\end{aligned}
\end{equation}
We call $c_{R,N}$ a counterterm, and it is independent of the spatial variables and the metric tensor $g_R$. For all $\rx\in\bS_R$ one has,
\begin{equation}\label{eq:poly_hermit_counter}
P(\phi(\rx),c_{R,N}) :=\sum_{m=0}^n\,a_m\,H_m(\phi(\rx),c_{R,N}):= \sum_{m=0}^n\,a_m\,\phi(\rx)^{:m:}\,,
\end{equation}
where $H_m(\phi(\rx),c)$ is the $m-$th Hermite polynomial introduced in Eq.~\eqref{eq:hermite_real}. We also have the well-known Wiener-It\^o chaos decomposition~\cite[Thm.~2.6]{SJ97}
$$L^2(\sD'(\bS_R), \nu_{R,N})=\bigoplus_{l=0}^{\infty}\cH_l.$$
In particular, one has $P(\phi ,c_{R,N}) \in \bigoplus_{l=0}^{n}\cH_l$, which coincides with the inhomogeneous Wiener chaos introduced in Def.~\ref{def:wiener_chaos}. Moreover, from Lemma~\ref{Nualart-Lemma-1-1}, one obtains
$$\int_{\sD'(\bS_R)}\,P(\phi(\rx),c_{R,N})\,\nu_{R,N}(\rd\phi) = a_0.$$
Furthermore, from Remark~\ref{rem:expansion_covariance_harmo} and Lemma~\ref{lem:trace}, one deduces the existence of some $C\in(0,\infty)$ such that for all $N,R\in\bN_+$,
\begin{equation}\label{eq:bound_counterterm}
 |c_{R,N}-\frac{1}{2\pi}\,\log N|\leq C\,.
\end{equation}
\noindent
In order to control UV divergences, it is essential to establish the local regularity properties for the regularized covariance. This is the content of the next lemma. For all $N,M\in\bN_+$, define
$$G_{R,N,M}:=(1-\Delta_R/M^2)^{-1}\,(1-\Delta_R)^{-1}\,(1-\Delta_R/N^2)^{-1}.$$

\begin{lem}\label{lem:local_covariance}
Let $R\in\bN_+$. For any $n\geq 4$, there exists $C\in (0,\infty)$ such that for all $N,M\in\bN_+$ with $N\wedge M=N$, one has
\begin{itemize}
\item[(A)]
$\sup_{\rx\in\bS_R}\|G_{R,N,N}(\rx,\Cdot)\|_{L_n(\bS_R)} \leq C$.
\item[(B)]
$\|G_{R,N,N}(\Cdot,\Cdot)- G_{R,N,M}(\Cdot,\Cdot)\|_{L_n(\bS_R^2)}\leq C\,(N^{-1/n})$.
\item[(C)]
$\sup_{\rx\in\bS_R} G_{R,N,N}(\rx,\rx) \leq \,C\,(\log N)$.
\end{itemize}
\end{lem}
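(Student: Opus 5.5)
Since $R$ is fixed and the constant may depend on $R$, the plan is to work entirely with the spectral expansion in Legendre polynomials (Remark~\ref{rem:spherical_projection} and Remark~\ref{rem:expansion_covariance_harmo}):
\begin{equation}
G_{R,N,M}(\rx,\ry)=\frac{1}{4\pi R^2}\sum_{l=0}^\infty \frac{(2l+1)\,P_l(\rx\cdot\ry/R^2)}{\ll l\gg\,(1+l(l+1)/(NR)^2)(1+l(l+1)/(MR)^2)}.
\end{equation}
By $O(3)$-invariance, all suprema over $\rx$ reduce to a single point, and $L_n$ norms in $\ry$ become one-dimensional integrals in $t=\cos\theta$ with weight $2\pi R^2\,\rd t$.

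\textbf{Item~(C).} Set $\rx=\ry$ and use $P_l(1)=1$. Then $G_{R,N,N}(\rx,\rx)=\mathrm{Tr}(\cdot)/4\pi R^2$ is exactly of the form handled by Lemma~\ref{lem:trace} with $\kappa=2$. This gives the bound $\tfrac{1}{2\pi}\log(N+1)+C\le C'\log N$ for $N\ge2$. The case $N=1$ is trivial.

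\textbf{Item~(A).} I would avoid pointwise Legendre asymptotics and use operator bounds instead. The function $G_{R,N,N}(\rx,\Cdot)=G_{R,N,N}\delta_\rx$ has $L^s_2(\bS_R)$ norm squared equal to $\sum_l (2l+1)\ll l\gg^{s-2}(1+l(l+1)/(NR)^2)^{-4}/(4\pi R^2)$. For $s<1$ this is bounded uniformly in $N$, since it is dominated by $\sum_l (2l+1)\ll l\gg^{s-2}<\infty$. Choosing $s=1-2/n$, Lemma~\ref{lem:embedding_sphere} gives $\|G_{R,N,N}(\rx,\Cdot)\|_{L_n(\bS_R)}\le C\|G_{R,N,N}(\rx,\Cdot)\|_{L^{1-2/n}_2(\bS_R)}\le C$, uniformly in $\rx$ and $N$.

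\textbf{Item~(B).} Write the difference as an operator:
\begin{equation}
G_{R,N,N}-G_{R,N,M}=G_R K_{R,N}(K_{R,N}-K_{R,M}),\qquad K_{R,N}-K_{R,M}=\frac{(1/M^2-1/N^2)\Delta_R}{(1-\Delta_R/N^2)(1-\Delta_R/M^2)}.
\end{equation}
Its multiplier on the $l$-th eigenspace is bounded by $\ll l\gg^{-1}\min(1,\,l(l+1)/(NR)^2)$, using $N\le M$. Repeating the Sobolev argument of Item~(A) at each fixed $\rx$ with $s=1-2/n$, we need to bound
\begin{equation}
\sum_l(2l+1)\ll l\gg^{s-2}\min\bigl(1,l^2/N^2\bigr)^2 .
\end{equation}
Splitting at $l\sim N$: the tail $l\gtrsim N$ contributes $\lesssim N^{-2(1-s)}=N^{-4/n}$, and the head contributes $N^{-4}\sum_{l\le N}l^{2s+1}\lesssim N^{-4/n}$. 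Taking the square root gives $\|(G_{R,N,N}-G_{R,N,M})(\rx,\Cdot)\|_{L_n}\le CN^{-2/n}$ uniformly in $\rx$. Integrating the $n$-th power over $\rx\in\bS_R$ (finite volume for fixed $R$) then yields the claimed $O(N^{-1/n})$, and in fact the stronger $O(N^{-2/n})$.

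\textbf{Main obstacle.} The delicate point is the uniformity of the embedding constant of Lemma~\ref{lem:embedding_sphere} and the identification of the kernel slice with $G\delta_\rx$ in Sobolev norms of negative order. Both are fine for fixed $R$, since Lemma~\ref{lem:embedding_sphere} is even uniform in $R$ and all the sums converge absolutely. If the embedding approach is awkward, the fallback is direct kernel asymptotics: the logarithmic singularity $|\log\theta|$ (cf. Remark~\ref{rem:fundamental_solution_sphere}) is $L_n$-integrable. That route would be more technical.
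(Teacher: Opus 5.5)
Your proposal follows essentially the same route as the paper: the embedding of Lemma~\ref{lem:embedding_sphere} with exponent $1-2/n$, the identification of $\|G(\rx,\Cdot)\|^2_{L^{1-2/n}_2(\bS_R)}$ with $(4\pi R^2)^{-1}$ times a spectral trace, rotational invariance to get uniformity in $\rx$, and Lemma~\ref{lem:trace} for (C). In (B) your sharper multiplier bound $\ll l\gg^{-1}\min(1,l(l+1)/(NR)^2)$ even gives $N^{-2/n}$, where the paper's cruder bound on $(K_{R,N}-K_{R,M})^2$ yields only $N^{-1/n}$.

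One correction: the case $N=1$ of (C) is not trivial but false as written, since $\log 1=0<G_{R,1,1}(\rx,\rx)$. So (C) must be read with $\log(N+1)$ (or $1+\log N$), which is exactly what Lemma~\ref{lem:trace} delivers.
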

\begin{proof}
Note that a similar result can be found in~\cite[Thm.~1]{DiGenova}. Our proof relies on the rotational invariance property of the covariance $G_{R,N,N}$ and Lemma~\ref{lem:embedding_sphere}. There exist $\hat C,C\in(0,\infty)$ such that for all $N\in\bN_+$,
\begin{multline}
 \|G_{R,N,N}(\rx,\Cdot)\|^2_{L_n(\bS_R)}
 \leq \hat C\, 
 \|G_{R,N,N}(\rx,\Cdot)\|^2_{L_2^{(n-2)/n}(\bS_R)}
 \\
 =
 (4\pi R^2)^{-1}\,\hat C\,\big(\mathrm{Tr}\big[G_R^{1+2/n} K_{R,N}^4\big]\big).
\end{multline}
Observe that
\begin{multline}
\mathrm{Tr}\big[G_R^{(n+2)/n} K_{R,N}^4\big]
= \sum_{l=0}^{\infty}\,\frac{(2l+1)}{(1+l(l+1)/R^2)^{1+2/n}}\,\big(\frac{1}{1+l(l+1)/(RN)^2} \big)^4
 \leq C.
\end{multline}
This finishes the proof of Item~(A). To prove Item~(B), one writes
\begin{multline}
 \|(G_{R,N,N}-G_{R,N,M})(\rx,\Cdot)\|^2_{L_n(\bS_R)}
 \leq \hat C\, 
 \|(G_{R,N,N}-G_{R,N,M})(\rx,\Cdot)\|^2_{L_2^{(n-2)/n}(\bS_R)}
 \\
 =
 (4\pi R^2)^{-1}\,\hat C\,\big(\mathrm{Tr}\big[G_R^{(n+2)/n} K_{R,N}^2 (K_{R,N}-K_{R,M})^2\big]\big)
 \leq C\,(N\wedge M)^{-2/n}.
\end{multline}
See the proof of Lemma~\ref{lem:stochastic_Y} for more details. Item~(C) follows from Lemma~\ref{lem:trace} and Eq.~\eqref{eq:bound_counterterm}. This concludes the proof.
\end{proof}

\section*{Auxiliary $P(\Phi)_2$ measure on the sphere}\label{sec:auxi_measure}
Let $g\in C^\infty(\bS_R)$ such that
\begin{equation}\label{eq:bound_on_g}
\|g\|^n_{L_{n/(n-1)}(\bS_R)}\leq 1/2,\qquad\|\Delta_R\, g\|^n_{L_{n/(n-1)}(\bS_R)}\leq 1/2\,.
\end{equation}
\noindent
From now on, we would like to work deliberately with a more general class of probability measures defined by
\begin{equation}\label{eq:regularized_measure_g}
 \mu_{R,N}^g(\rd\phi):=\frac{1}{\mathcal{Z}^g_{R,N}}\exp\left(\phi(g)^n/n\right) \mu_{R,N}(\rd\phi).
\end{equation}
Observe that if $g=0$, for all $R,N\in\bN_+$, the measure $\mu_{R,N}^g$ coincides with $\mu_{R,N}$. Later our particular choice will turn out advantageous for proving the non-Gaussianity property of a $P(\Phi)_2$ measure in Ch.~\ref{ch:Osterwalder-Schrader axioms}.

\begin{lem}\label{lem:polynomial_bound}
There exists $A\in(0,\infty)$ depending only on the coefficients of the polynomial $u \mapsto P(\tau)$ such that for all $\tau\in\bR$ and $c\in(1,\infty)$ it holds $P(\tau,c)\geq \tau^n/2n-A \,c^{n/2}$.
\end{lem}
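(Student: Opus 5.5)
Expand $P(\tau,c)=\sum_{m=0}^n a_m H_m(\tau,c)$ with $a_n=1/n$ using the explicit Hermite formula~\eqref{eq:hermite_real}. This gives
\begin{equation}
P(\tau,c)=\frac{1}{n}\tau^n+\sum_{m=0}^{n}\sum_{k}b_{m,k}\,c^k\tau^{m-2k},
\end{equation}
where the double sum runs over all monomials $c^k\tau^{j}$ with $j=m-2k$ and $(m,k)\neq(n,0)$. Every such monomial has $j\le n-1$ and $j+2k\le n$. The coefficients $b_{m,k}$ depend only on $a_0,\dots,a_{n-1}$ and $n$. So it suffices to show that each lower-order term obeys
\begin{equation}
|c^k\tau^{j}|\le \varepsilon\,\tau^n + C_\varepsilon\, c^{n/2},\qquad c>1,
\end{equation}
and then to choose $\varepsilon$ small relative to the finitely many $|b_{m,k}|$, so that the total $\varepsilon$-contribution is at most $\tau^n/2n$.

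For the monomial bound, take $j\ge1$ first. Apply Young's inequality with exponents $p=n/j$ and $q=n/(n-j)$ to the product $|\tau|^j\cdot c^k$. This gives $|\tau|^jc^k\le \varepsilon|\tau|^n+C_\varepsilon c^{kn/(n-j)}$. Since $j+2k\le n$, we have $kn/(n-j)\le n/2$, and because $c>1$ it follows that $c^{kn/(n-j)}\le c^{n/2}$. For $j=0$ the term is $c^k\le c^{n/2}$, again because $k\le n/2$ and $c>1$. Recall that $n$ is even, so $\tau^n=|\tau|^n$.

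Summing these bounds yields $P(\tau,c)\ge \tau^n/n-\tau^n/2n-A\,c^{n/2}$, with $A$ built from the $|b_{m,k}|$ and the constants $C_\varepsilon$. The constant $A$ therefore depends only on the coefficients of $P$, as the statement requires.

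There is no real obstacle. The only point needing care is the exponent bookkeeping in the second paragraph, namely that $j+2k\le n$ forces $kn/(n-j)\le n/2$. The assumption $c>1$ is exactly what lets every power $c^{s}$ with $s\le n/2$ be absorbed into $c^{n/2}$.
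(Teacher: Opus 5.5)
Your proof is correct and follows essentially the same route as the paper. Both arguments apply Young's inequality with exponents $n/j$ and $n/(n-j)$ to each lower-order monomial $c^k\tau^j$ of the Hermite expansion, absorb $c^{kn/(n-j)}\le c^{n/2}$ using $j+2k\le n$ and $c>1$, and then take $\varepsilon$ small; your separate handling of the $j=0$ terms, where the paper's Young exponents degenerate, is a small but welcome clarification.
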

\begin{proof}
Let us start with the following consideration:
Let $\delta\in (0,1 )$, and $p,q\in[1,\infty)$ such that $1/p+1/q=1$. By the Young inequality, it holds that
\begin{equation}
\alpha\, a\,b\leq |a|\,(|\alpha|\,|b|)\leq (p\,\delta)^{1/p}\,|a|\, (p\,\delta)^{-1/p}\,(|\alpha|\,|b|)
\leq \delta\,|a|^p+ (1/q)\,(p\,\delta)^{-q/p}\,(|\alpha|\,|b|)^q\,.
\end{equation}
This implies that
$$-\delta\,|a|^p- (1/q)\,(p\,\delta)^{-q/p}\,(|\alpha|\,|b|)^q\leq \alpha\, a\,b.$$ 
Assuming that $c\in (1,\infty)$, $0\leq m\leq n-1$, $0\leq k\leq \lfloor m/2\rfloor$, $\alpha_k, \tau \in\mathbb{R}$, we apply the same reasoning with
$$1/p=(m-2k)/n,\,\text{ and}\,\,1/q=(n-m+2k)/n.$$
We obtain
\begin{multline}
\alpha_k\,\tau^{m-2k} c^k\,\leq | \tau^{m-2k}|\, (|\alpha_k| c^k)\,
\\
\leq \delta\,\tau^n+ \frac{n-m+2k}{n}\, (\frac{n}{(m-2k)}\,\delta)^{-(n-2k)/(n-m+2k)}\,|\alpha_k|^{n/(n-m+2k)}\,c^{\,kn/(n-m+2k)}.
\end{multline}
Let
$$C:=(n-m+2k)/n\, (n/(m-2k)\,\delta)^{-(n-2k)/(n-m+2k)}\,|\alpha_k|^{n/(n-m+2k)}.$$
Then, since $c^{n/2}\geq c^{\,kn/(n-m+2k)}$, we have
$$-\delta\, \tau^n- C\,c^{n/2}\leq \alpha_k\, c^k\,\tau^{m-2k}.$$
Using the above argument, one bounds all terms with powers $\tau^m$, $m<n$, by $\tau^n$: 
\begin{equation}
\alpha\,\tau^{m-2k}c^{k} \geq -\delta\,\tau^n - C\,c^{\frac{n}{2}}. \label{eq:polynomial-bound}
\end{equation}
To conclude, we apply the above bound to all terms of the polynomial $P(\tau,c)$ but the term $\tau^n/n$ and choose $\delta\in (0,1)$ sufficiently small. A similar bound is obtained in~\cite[Prop. 8.6.3]{GlimmJaffe}.
\end{proof}

\begin{Remark}
We identify implicitly a function $\phi\in L_1(\bS_R)$ with a distribution $\varphi\in\sD'(\bS_R)$
defined by
$$\varphi(f)\equiv \langle\varphi,f\rangle:=\int_{\bS_R}\,\phi(\rx)f(\rx)\,\rho_R(\rd \rx).$$
\end{Remark}

\begin{lem}\label{lem:mu_measure_well_defined}
For all $R,N\in\mathbb{N}_+$ and $g\in C^\infty(\bS_R)$ such that $\|g\|^n_{L_{n/(n-1)}(\bS_R)}\leq 1/2$, the measure $\mu_{R,N}^g$ is well-defined and both $\nu_{R,N}$ and $\mu_{R,N}^g$ are concentrated on $L^1_2(\bS_R)\subset\sD'(\bS_R)$.
\end{lem}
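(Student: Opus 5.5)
First I show that $\nu_{R,N}$ is concentrated on $L^1_2(\bS_R)$; the claim for $\mu^g_{R,N}$ then follows from absolute continuity. Let $\phi\sim\nu_{R,N}$ and expand it in the spherical harmonic basis as $\phi=\sum_{l,m}\phi_{lm}Y_{l,m}$. The coefficients $\phi_{lm}$ are independent centred Gaussians with variance $\langle l\rangle^{-1}$ (Remark~\ref{rem:l_brackets}). By Parseval (Remark~\ref{rem:parseval_them}),
\begin{equation}
\int\|\phi\|^2_{L^1_2(\bS_R)}\,\nu_{R,N}(\rd\phi)=\mathrm{Tr}\big[(1-\Delta_R)G_{R,N}\big]=\sum_{l=0}^\infty\frac{2l+1}{(1+l(l+1)/(NR)^2)^2}<\infty .
\end{equation}
The sum converges because the summand decays like $l^{-3}$ for fixed $R,N$. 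Hence $\|\phi\|_{L^1_2(\bS_R)}<\infty$ $\nu_{R,N}$-a.s. The set $L^1_2(\bS_R)$ is a Borel subset of $\sD'(\bS_R)$, since it is described by countably many measurable coefficient functionals $\phi\mapsto\phi(Y_{l,m})$. Therefore $\nu_{R,N}(L^1_2(\bS_R))=1$.

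Next I show that the Radon--Nikodym density $\exp(-\int P(\phi,c_{R,N})\rho_R)\exp(\phi(g)^n/n)$ is measurable and integrable, and that the normalisation constants are strictly positive. On $L^1_2(\bS_R)$ Lemma~\ref{lem:embedding_sphere} gives $\phi\in L_n(\bS_R)$, so $\int_{\bS_R}P(\phi(\rx),c_{R,N})\rho_R(\rd\rx)$ is finite and measurable. Using Lemma~\ref{lem:polynomial_bound} with $c=c_{R,N}$ (increase $N$ or the constant so that $c>1$; for finitely many small $N$ adjust $A$), I get
\begin{equation}
-\int_{\bS_R}P(\phi,c_{R,N})\,\rho_R\le -\frac{1}{2n}\|\phi\|^n_{L_n(\bS_R)}+4\pi R^2A\,c_{R,N}^{n/2}.
\end{equation}
Hölder's inequality gives $|\phi(g)|^n\le\|g\|^n_{L_{n/(n-1)}}\|\phi\|^n_{L_n}\le\frac12\|\phi\|_{L_n}^n$. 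Combining the two bounds, the total exponent is at most $-\frac1{2n}\|\phi\|_{L_n}^n+\frac1{2n}\|\phi\|^n_{L_n}+C_{R,N}=C_{R,N}$. The integrand is therefore bounded by the constant $e^{C_{R,N}}$, so $\cZ_{R,N}<\infty$ and $\cZ_{R,N}\cZ^g_{R,N}<\infty$. This uses the hypothesis $\|g\|^n_{L_{n/(n-1)}}\le 1/2$ exactly. Positivity is automatic, because the integrand is strictly positive $\nu_{R,N}$-a.s. (the exponent is finite a.s.). Consequently $\mu_{R,N}$ and $\mu^g_{R,N}$ are well-defined probability measures, and they are absolutely continuous with respect to $\nu_{R,N}$. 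Since $\nu_{R,N}$ is concentrated on $L^1_2(\bS_R)$, so is $\mu^g_{R,N}$ (Def.~\ref{def:concentrate}).

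The main obstacle is the exact cancellation of the $\|\phi\|_{L_n}^n$ terms in the second step. It works only because Lemma~\ref{lem:polynomial_bound} gives the coefficient $1/2n$, and the assumption on $g$ gives the same $1/2n$ after dividing by $n$. I would verify this bookkeeping carefully. If it were off, I would instead use Nelson-type integrability: Lemma~\ref{lem:nelson} applied to $\phi(g)$, which lies in the first chaos, controls only moments and not $\exp(\phi(g)^n)$, so the deterministic cancellation really is needed. A second, minor point is the measurability of $L^1_2(\bS_R)$ and of the map $\phi\mapsto\|\phi\|_{L_n}$ inside $\sD'(\bS_R)$. I would handle it by writing both as limits of finitely truncated harmonic expansions.
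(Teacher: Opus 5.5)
Your proposal is correct and follows essentially the paper's argument. The paper also shows that $\nu_{R,N}$ is concentrated on $L^1_2(\bS_R)$ via a trace bound (citing Lemma~\ref{lem:stochastic_X_not_uniform}), and it also combines H\"older's inequality with Lemma~\ref{lem:polynomial_bound}, so the $\|\phi\|^n_{L_n(\bS_R)}$ terms cancel exactly and the density is bounded by a constant. The only difference is that you get $\cZ_{R,N}\cZ^g_{R,N}>0$ from almost-sure strict positivity of the integrand, whereas the paper uses Jensen's inequality and the vanishing of expectations of Wick monomials to obtain the explicit lower bound $\exp(-4\pi R^2 a_0)$.
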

\begin{proof}
From Lemma~\ref{lem:stochastic_X_not_uniform} one infers that the measure $\nu_{R,N}$ is concentrated on $L^1_2(\bS_R)$ in the sense of Def.~\ref{def:concentrate}. Moreover, using Lemma~\ref{lem:embedding_sphere} it holds that $L^1_2(\bS_R)\subset L_n(\bS_R)$ for all $n\geq 2$. From the H\"older inequality and by assumption one has
\begin{equation}
\phi(g)^n= \big(\int_{\bS_R}\,\phi(\rx)\,g(\rx)\,\rho_R(\rd \rx)\big)^n\leq \|\phi\|^n_{L_n(\bS_R)}\,\|g\|^n_{L_{n/n-1}(\bS_R)}\leq \frac{1}{2}\|\phi\|^n_{L_n(\bS_R)}\,.
\end{equation}
Recall that $P(\phi(\rx),c_{R,N})=\sum_{m=0}^n\,a_m\,\phi(\rx)^{:m:}$. Applying relation~\eqref{eq:polynomial-bound} to all terms in the sum with power $m<n$, it follows that there exists $A\in(0,\infty)$ such that 
$$P(\phi(\rx),c_{R,N})\geq -\phi^n/2n-A\,c_{R,N}^{n/2}\,.$$
Hence, by linearity one obtains
\begin{equation}\label{eq:polynomail_potentail}
\frac{1}{n}\phi(g)^n-\int_{\bS_R} P(\phi(\rx),c_{R,N})\,\rho_R(\rd \rx)
\leq A\,c_{R,N}^{n/2}\,.
\end{equation}
Consequently, the map $\mathcal{U}^g_{R,N}$ defined by
\begin{equation}
 \mathcal{U}^g_{R,N}\,:\,
 L_n(\bS_R)\ni \phi \mapsto \exp\left(\frac{1}{n}\phi(g)^n-\int_{\bS_R} P(\phi(\rx),c_{R,N})\,\rho_R(\rd \rx)\right)  \in (0, \infty)
\end{equation}
is bounded and continuous, hence is measurable. Using Jensen's inequality, Fubini's theorem, and the fact that 
$\int_{\sD'(\bS_R)}\,\phi(g)^n\,\nu_{R,N} (\rd\phi)$ is positive, one gets
\begin{equation}
\begin{aligned}
\mathcal{Z}^g_{R,N}\,\mathcal{Z}_{R,N}
&=\int_{\sD'(\bS_R)}\exp \left(\frac{1}{n}\phi(g)^n-\int_{\bS_R} P(\phi(\rx),c_{R,N})\,\rho_R(\rd \rx)\right) \nu_{R,N}(\rd\phi)
\\
&\geq \exp\Big( \int_{\sD'(\bS_R)}\,\frac{1}{n}\phi(g)^n\,\nu_{R,N}(\rd\phi)-\int_{\sD'(\bS_R)}\int_{\bS_R} P(\phi(\rx),c_{R,N})\,\rho_R(\rd \rx)\,\nu_{R,N}(\rd\phi)\Big)
\\
&
\geq \exp\Big( \int_{\sD'(\bS_R)}\,\frac{1}{n}\phi(g)^n\,\nu_{R,N}(\rd\phi)-\int_{\bS_R} \,\int_{\sD'(\bS_R)}P(\phi(\rx),c_{R,N})\,\nu_{R,N}(\rd\phi)\,\rho_R(\rd \rx)\Big)
\\&
\geq \exp\Big(-a_0\,(4\pi R^2)\Big),
\end{aligned}
\end{equation}
where we evoked the fact that expectations of Wick monomials of non-zero order vanish. Hence, $\mathcal{Z}_{R,N}\mathcal{Z}_{R,N}^g = \|\mathcal{U}^g_{R,N}\|_{L_1(\sD'(\bS_R),\nu_{R,N})}\in (0,\infty)$. This finishes the proof.
\end{proof}

\section{Existence of the UV limit}\label{sec:existence_UV_limit}
Consider the probability space $(\sD'(\bS_R),\mathrm{Borel}(\sD'(\bS_R)), \nu_R)$ and define
for all $R,N\in[1,\infty)$, the random Gaussian variables $X_R$ and $X_{R,N}:=K_{R,N}X_R$ almost surely taking values in the Hilbert space $L^{1}_2(\bS_R)$. It holds that $\mathrm{Law}(X_{R})= \nu_{R}$ with covariance
$$\bE\,X_R(\rx)\otimes X_R(\rx') = G_R(\rx,\rx'),\,\,\,\, \forall\,\,\, \rx,\rx'\in\bS_R$$
and $\mathrm{Law}(X_{R,N})=\nu_{R,N}$ with covariance
$$\bE [X_{R,N}(\rx)\otimes X_{R,N}(\rx') ]= G_{R,N}(\rx,\rx'),\,\,\, \forall\,\,\, \rx,\rx'\in\bS_R.$$
On account of Lemma~\ref{lem:stochastic_X_uniform} Item~(A), one deduces that $X_R$ is of regularity $-\kappa$ for any $\kappa >0$, i.e., takes values in $L^{-\kappa}_2(\bS_R)$, almost surely with respect to the measure $\nu_{R}$.
\\[0.3cm]
Using Eq.~\eqref{eq:hermite_hilbert} one has
$$X_{R,N}^{:m:}(\rx)= c^{m/2}_{R,N}H_m(X_{R,N}(\rx)/c^{1/2}_{R,N}).$$
Note that from Lemma~\ref{lem:stochastic_X_not_uniform} Item~(A), it follows that for all $R,N\in\bN_+$ and all $m\in\{1,2\cdots,n-1\}$, the Wick-ordered random variables $X_{R,N}^{:m:}$ can be bounded almost surely in a space with regularity $2-\kappa$ for any $\kappa>0$. Now, for all $h\in C^\infty(\bS_R)$ we set
\begin{equation}\label{eq:pairing_gaussian_process}
X_{R,N}^{:m:}(h):=\int_{\bS_R} X_{R,N}^{:m:}(\rx)\,h(\rx)\,\rho_R(\rd \rx)\,.
\end{equation}
Using the preceding expression with $h=\1_{\bS_R}$, one defines
\begin{equation}\label{dfn:X_Y}
\begin{aligned}
Y_{R,N}&:=\sum_{m=0}^n\, a_m\, X_{R,N}^{:m:}(\mathmybb{1}_{\bS_R}),
\qquad
Y_{R,N}^g:=Y_{R,N}-X_{R,N}(g)^n/n\,,
\\
Y_{R}&:=\sum_{m=0}^n\, a_m\, X_{R}^{:m:}(\mathmybb{1}_{\bS_R}),
\qquad
Y_{R}^g:=Y_{R}-X_{R}(g)^n/n\,.
\end{aligned}
\end{equation}
Combining Lemma~\ref{lem:stochastic_X_not_uniform} and Lemma~\ref{lem:embedding_sphere}, one verifies that almost surely $X_{R,N}\in L^{2-\kappa}_2(\bS_R)\subset L_n(\bS_R)$. Consequently, $Y_{R,N}$ and $Y^g_{R,N}$ are well-defined.

\begin{Remark}
Note that $\nu_{R,N},\, \mu_{R,N}\in \cM(L^1_2(\bS_R))$ in the sense explained in Sec.~\ref{sec:probability}.
This implies that for all bounded and continuous $F\,:\, \sD'(\bS_R)\to\bR$, we have
\begin{equation}
\begin{gathered}
\int F(\phi)\,\nu_{R,N}(\rd\phi)=\bE\,F(X_{R,N}),
\quad
 \int F(\phi)\, \mu^g_{R,N}(\rd\phi)
 =
 \frac{\bE\,F(X_{R,N})\exp(-Y^g_{R,N})}{\bE\exp(-Y^g_{R,N})}\,.
 \end{gathered}
\end{equation}
It follows form the fact that for all $R,N\in\bN_+$, one has $\mathrm{Law}(X_{R,N})=\nu_{R,N}$.
\end{Remark}

\begin{lem}\label{lem:prob_of_density}
For every $R\in \bN_+$, there exist constants $C,\,c\in (0,\infty)$ such that for all $N\in \bN_+$ and all $t> 0$ it holds
\begin{equation}
 \mathbb{P}(-\,Y^g_{R,N}> t) \leq C\,\exp(-\exp(c\,t^{2/n})).  \label{eq:prob_of_density}
\end{equation}
\end{lem}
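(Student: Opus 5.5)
This is the classical Nelson argument: split the field into a low-frequency part and a high-frequency remainder, use the pointwise lower bound on the low part, and control the remainder by hypercontractivity. Fix $R$ and $t>0$. Note first that $-Y^g_{R,N}=\frac1n X_{R,N}(g)^n-\int_{\bS_R}P(X_{R,N}(\rx),c_{R,N})\,\rho_R(\rd\rx)$. For an auxiliary cutoff $M\le N$, still to be chosen depending on $t$, we write
\begin{equation}
-Y^g_{R,N}=-Y^g_{R,M}+(Y^g_{R,M}-Y^g_{R,N}).
\end{equation}
The bound~\eqref{eq:polynomail_potentail} from the proof of Lemma~\ref{lem:mu_measure_well_defined} holds for every realisation in $L_n(\bS_R)$. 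It gives the deterministic estimate $-Y^g_{R,M}\le A\,4\pi R^2\,c_{R,M}^{n/2}$. By~\eqref{eq:bound_counterterm} the right-hand side is at most $C_R(\log M+1)^{n/2}$. We choose $M$ as the largest integer with $C_R(\log M+1)^{n/2}\le t/2$, i.e.\ $\log M\approx c\,t^{2/n}$. If $M\ge N$ we take $M=N$ instead; then the probability vanishes for large $t$. Small $t$ is absorbed into $C$. With this choice,
\begin{equation}
\bP(-Y^g_{R,N}>t)\le \bP\big(|Y^g_{R,N}-Y^g_{R,M}|>t/2\big).
\end{equation}

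Next we control the difference in $L_p$. Both $Y_{R,N}-Y_{R,M}$ and $X_{R,N}(g)^n-X_{R,M}(g)^n$ are random variables in the inhomogeneous Wiener chaos of order $n$ over the Gaussian field $X_R$. Hence Lemma~\ref{lem:nelson} gives, for $p\ge2$,
\begin{equation}
\bE|Y^g_{R,N}-Y^g_{R,M}|^p\le (n+1)^{p/2}(p-1)^{np/2}\big(\bE|Y^g_{R,N}-Y^g_{R,M}|^2\big)^{p/2}.
\end{equation}
The key input is a second-moment bound of the form $\bE|Y^g_{R,N}-Y^g_{R,M}|^2\le C\,M^{-\varepsilon}$ for some $\varepsilon>0$, uniform in $N\ge M$.

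For the Wick part, Lemma~\ref{Nualart-Lemma-1-1} reduces $\bE|X_{R,N}^{:m:}(1)-X_{R,M}^{:m:}(1)|^2$ to
\begin{equation}
\int\!\!\int\big(G_{R,N,N}^m-2G_{R,N,M}^m+G_{R,M,M}^m\big)(\rx,\ry)\,\rho_R(\rd\rx)\rho_R(\rd\ry).
\end{equation}
Writing $a^m-b^m=(a-b)\sum a^jb^{m-1-j}$ and applying Hölder, this is bounded by Lemma~\ref{lem:local_covariance}~(A),(B), which yields $C M^{-1/n}$. Lemma~\ref{lem:local_covariance} is stated for $G_{R,N,M}$ with the normalisation $K_{R,N}G_RK_{R,M}$; the squared cutoffs in $G_{R,N}$ only change constants.

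The term $X_{R,N}(g)^n-X_{R,M}(g)^n$ is handled the same way. Since $g$ is smooth, $\bE|X_{R,N}(g)-X_{R,M}(g)|^2=\langle g,(G_{R,N}-2G_{R,NM}+G_{R,M})g\rangle\le C\,M^{-2}\|\Delta_R g\|^2$. Combined with the uniform moment bounds on $X_{R,N}(g)$ and the factorisation $a^n-b^n$, this again gives a polynomial decay in $M$.

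Finally, Chebyshev's inequality with the $p$-th moment gives
\begin{equation}
\bP\big(|Y^g_{R,N}-Y^g_{R,M}|>t/2\big)\le \Big(\frac{2\sqrt{n+1}\,(p-1)^{n/2}C M^{-\varepsilon/2}}{t}\Big)^p.
\end{equation}
Set $p-1=M^{\varepsilon/n}$. The base then equals $2\sqrt{n+1}\,C/t$, which is at most $e^{-1}$ once $t$ is large. This yields a bound $\exp(-M^{\varepsilon/n})=\exp(-\exp(c\,t^{2/n}))$, using $\log M\approx c't^{2/n}$.

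I expect the main obstacle to be the second-moment estimate, specifically its uniformity in $N\ge M$ together with a genuinely polynomial rate in $M$. That rate is what turns the logarithmic growth of the counterterm into a double exponential. The remaining steps are routine bookkeeping: the case $M=N$, small $t$, and the integer part of $M$.
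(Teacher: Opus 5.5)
Your proof is correct and is essentially the paper's Nelson argument: you use the deterministic lower bound $Y^g_{R,M}\geq -A\,c_{R,M}^{n/2}$ at a lower cutoff $M$, together with a hypercontractive tail bound for $Y^g_{R,N}-Y^g_{R,M}$ driven by a polynomial second-moment rate $M^{-\varepsilon}$ (as in Lemma~\ref{lem:stochastic_Y}). The differences are only in presentation. The paper applies Markov's inequality to the exponential-moment form of Lemma~\ref{lem:nelson} along the discrete levels $t_M=2A\,c_{R,M}^{n/2}$, then interpolates in $t$ and treats $N<M$ separately; your choices $M=M(t)$ and $p-1=M^{\varepsilon/n}$ handle both steps at once, and you also verify the $X_{R,N}(g)^n$ contribution to the second moment, which the paper only cites.
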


\begin{Remark}\label{rem:prob_of_density}
On account of the preceding lemma, one deduces that for every $R\in\bN_+$, there exist $C,c\in(0,\infty)$ such that for all $N\in\mathbb N_+$, $p>0$, and $t>0$ it holds
\begin{equation}
\bP(-p\,Y^g_{R,N}>t) =\bP(-Y^g_{R,N}>\frac{t}{p})
\leq C\exp[-\exp(c\, p^{-2/n}t^{2/n})].
\end{equation}
\end{Remark}

\begin{proof}[Proof of Lemma~\ref{lem:prob_of_density}]
By Lemma~\ref{lem:polynomial_bound} for every $R\in\mathbb{N}_+$ there exists $A\in(0,\infty)$ such that for all $M\in\mathbb{N}_+$ it holds 
\begin{equation}
Y_{R,M}^g\geq -A\,c_{R,M}^{n/2}. \label{eq:polynomial_bound} 
\end{equation}
Consequently, one has
\begin{equation}
\begin{aligned}
 \bP(-Y^g_{R,N}> 2A\,c_{R,M}^{n/2}) 
& = \bP(-Y^g_{R,M}-(Y^g_{R,N} -Y^g_{R,M})> 2A\,c_{R,M}^{n/2}) 
\\
 &\leq  \mathbb{P}(|Y^g_{R,N}-Y^g_{R,M}|> A\, c_{R,M}^{n/2} )\,.
\end{aligned}
\end{equation}
Let $(N\wedge M)=M$ and $c\in (0,\infty)$. One writes
\begin{multline}\label{eq:uniform_uv_1}
\bP(|Y^g_{R,N}-Y^g_{R,M}|> A\, c_{R,M}^{n/2} )
=  \bP\Big(\exp(c\,|Y^g_{R,N}-Y^g_{R,M}|^{2/n})> \exp(c\,A^{2/n}\, c_{R,M})\Big)
\\
 =\bP\Big(\exp(c|Y^g_{R,N}-Y^g_{R,M}|^{2/n}\,M^{1/n^2})> \exp(c\,A^{2/n}\, c_{R,M}\,M^{1/n^2})\Big)
\\
\leq \frac{\bE\,\exp(c\,|Y^g_{R,N}-Y^g_{R,M}|^{2/n}\,M^{1/n^2}) }{\exp(c\,A^{2/n}\, c_{R,M}\,M^{1/n^2})}
\\
\leq \exp(-c\,A^{2/n}\, c_{R,M}\,M^{1/n^2})\, \bE\,\exp\Big(c\,M^{1/n^2}\,|Y^g_{R,N}-Y^g_{R,M}|^{2/n}\Big),
\end{multline}
where we used the Markov inequality in the third step. Invoking Lemma~\ref{lem:nelson} with $X=|Y^g_{R,N}-Y^g_{R,M}|$ and letting $c=n/ (6\,C^{2/n})$, where $C$ is the constant from Lemma~\ref{lem:stochastic_Y} Item~(B), for some $C'\in(0,\infty)$ depending on $R\in\bN_+$, one obtains
\begin{multline}\label{eq:uniform_uv_2}
\bE\,\exp\Big(c\,M^{1/n^2}\,|Y^g_{R,N}-Y^g_{R,M}|^{2/n}\Big)
\\
\leq\bE\,\exp\Big(\frac{n\,|Y^g_{R,N}-Y^g_{R,M}|^{2/n}\,(\bE|Y^g_{R,N} - Y^g_{R,M}|^2)^{1/n}\,M^{1/n^2}}{6\,\bE\,(|Y^g_{R,N} - Y^g_{R,M}|^2\,C^{2/n} )^{1/n}}\Big) < C'\,.
\end{multline}
We used Lemma~\ref{lem:stochastic_Y} Item~(B), which gives $\bE\,|Y^g_{R,N} - Y^g_{R,M}|^2)^{1/n}\,M^{1/n^2} \leq C^{2/n}$, where $C^{2/n}\in(0,\infty)$ does depend on $R\in\bN_+$. Now from Eq.~\eqref{eq:bound_counterterm} one has
$$-\,C+(2\pi)^{-1}\,\log M\leq c_{R,M}\leq C+(2\pi)^{-1}\,\log M.$$
For now we restrict attention to $M\geq M_0$ such that $-\,C+(2\pi)^{-1}\,\log M> 0$. This implies that
\begin{itemize}
\item[(A)]
$\log M\leq 2\pi(c_{R,M}+C)$. Hence, $M\leq \exp{(2\pi(c_{R,M}+C))}.$
\item[(B)]
$c_{R,M}\leq C+\frac{1}{2\pi}\,\log M$. Thus, $\exp{(2\pi(c_{R,M}-C))}\leq M$.
\end{itemize}
By Item~(B) above, one rewrites Eq.~\eqref{eq:uniform_uv_1} with the help of Eq.~\eqref{eq:uniform_uv_2} 
\begin{equation}\label{eq:uniform_uv_4}
\begin{aligned}
 \bP(-Y^g_{R,N}> 2A\,c_{R,M}^{n/2}) 
&\leq\, C'\,\exp(-c\,A^{2/n}\, c_{R,M}M^{1/n^2})\\
&\leq C'\,\exp(-c\,A^{2/n}\, c_{R,M}\exp(\frac{2\pi\,c_{R,M}-C}{n^2}\,))\,.
\end{aligned}
\end{equation}
Recall that $A\in (0,\infty)$ was chosen so as to ensure  $Y_{R,M}^g\geq -A\,c_{R,M}^{n/2}$. This latter bound remains valid if we increase $A$. Therefore, without loss, we can assume $cA^{2/n} \exp(-C/n^2)\geq 1$.  By increasing $M_0$, if necessary, we can also ensure $c_{R,M}\geq 1$ and write
\begin{equation}
\bP(-Y^g_{R,N}> 2A\,c_{R,M}^{n/2}) \leq C'\,\exp(- \exp(\frac{2\pi\,c_{R,M}}{n^2}\,)).
\end{equation}
Setting $t_M:=2A\,c_{R,M}^{n/2}$, we obtain for some $C''>0$, $N\geq M\geq M_0$ 
\begin{equation}
\bP(-Y^g_{R,N}> t_M )\leq   C'\,\exp(-\exp( C'' t_M^{2/n} \,)). \label{eq:central-relation}
\end{equation}
Let us now eliminate the restriction $N\geq M$.
Suppose that $(N\wedge M)=N$, which implies $\log N\leq \log M$. It holds that
\begin{multline}\label{eq:uniform_uv_5}
\{-Y^{g}_{R,N}> 2A\,c_{R,M}^{n/2}\}\subseteq \{-Y^{g}_{R,N}> 2A\,(-C+(2\pi)^{-1}\,\log M)^{n/2}\}
\\
\subseteq \{-Y^{g}_{R,N}> 2A\,(-C+(2\pi)^{-1}\,\log N)^{n/2}\}\subseteq \{-Y^{g}_{R,N}> A\,c_{R,N}^{n/2}\}
\end{multline}
To justify the last inclusion, let us first write
\begin{equation}\label{eq:uniform_uv_6}
-3C+(2\pi)^{-1}\log(N)\leq c_{R,N}-2C \leq -C+(2\pi)^{-1}\log(N).
\end{equation}
Then, by the second inequality above,
\begin{multline}\label{eq:uniform_uv_7}
 \{-Y^{g}_{R,N}> 2A\,(-C+(2\pi)^{-1}\,\log N)^{n/2}\}
 \\
 \subseteq \{-Y^{g}_{R,N}>2A(c_{R,N}-2C)^{n/2}\}
 \subseteq \{-Y^{g}_{R,N}> A\,c_{R,N}^{n/2}\},
\end{multline}
provided that $N\geq N_0$, $N_0$ sufficiently large such that $2A(c_{R,N}-2C)^{n/2}\geq A\,c_{R,N}^{n/2}$. This justifies~\eqref{eq:uniform_uv_5}. Using this relation and
 the fact that $B_1\subseteq B_2$ implies $\bP(B_1)\leq\bP(B_2)$ one infers that
\begin{equation}
\bP\,(-Y^{g}_{R,N}>2\,A\,c_{R,M}^{n/2} )\leq \bP\,(-Y^{g}_{R,N}>\,A\,c_{R,N}^{n/2})=0,\,
\end{equation}
where in the last step we used Eq.~\eqref{eq:polynomial_bound}. We conclude that Eq.~\eqref{eq:central-relation} holds for $N\geq N_0$ and all $M\in \mathbb{N}_+$, and note that 
it suffices to prove the lemma under these conditions. In fact, for a finite number of cases, i.e., $N=1,\ldots N_0-1$,
estimate~\eqref{eq:prob_of_density} trivially follows from the fact that $\bP\,(-Y^{g}_{R,N}>t )=0$ for $t$ larger than some 
$t^{(N)}$, which is a consequence of Eq.~\eqref{eq:polynomial_bound}. To extend Eq.~\eqref{eq:central-relation} from the sequence $\{t_M\}_{M\in \mathbb{N}_+}$ to all $t\in \mathbb{R}_+$\, we proceed as follows. For a given $t$, let $t_{M_-}$ be the largest element of this sequence such that $t_{M_-}\leq t$. 
Let us show that for $\ell=\lceil \exp(4\pi C) \rceil$ we have $t_{M_-} \leq t_{\ell M_-}$, hence $t \leq t_{\ell M_-}$.  This follows from
\begin{equation}
c_{R, M_-}\leq C+(2\pi)^{-1}\,\log M_-=(2\pi)^{-1}\,\log \ell M_--C+(2C- (2\pi)^{-1}\,\log \ell)\leq c_{R, \ell M_-}.
\end{equation}
Now, we can write, using Eq.~\eqref{eq:central-relation}
\begin{equation}
\begin{aligned}
\bP(-Y^g_{R,N}> t )\leq  \bP(-Y^g_{R,N}> t_{M_-}) & \leq    C'\,\exp(-\exp( C''  ({t_{M_-}} / {t_{\ell M_-}})^{2/n}  t_{\ell M_-}^{2/n}\,)) \\
& \leq C'\,\exp(-\exp( C''  ({t_{M_-}} / {t_{\ell M_-}})^{2/n}  t^{2/n}\,)).
\end{aligned}
\end{equation}
Finally, we note that
\begin{equation}
({t_{M_-}} / {t_{\ell M_-}})^{2/n}=(c_{R,M_-}/ c_{R,\ell M_-} )\geq \frac{(2\pi)^{-1}\,\log  M_- -C }{(2\pi)^{-1}\,(\log \ell +\log M_-) +C  }\,.
\end{equation}
As we restricted attention to $M\geq M_0$ such that $(2\pi)^{-1}\,\log  M_- -C>0$, the last expression can be bounded from below by 
a strictly positive constant, independent of $M_-$. Thus, we conclude the proof for $t\geq t_0$, $t_0$ independent of $N$, enforced by the condition $M_-\geq M_0$. For $t\leq t_0$ we can ensure
(\ref{eq:prob_of_density}) by adjusting the constants so that the right hand side is larger that $1$ on this set.
\end{proof}

\begin{lem}\label{lem:X_Y_covergence}
Let $F: \sD'(\bS_R)\to \mathbb{R}$ be continuous. Then, the sequence
$$(F(X_{R,N}) \exp(-\,Y_{R,N}^g))_{N\in\bN_+}$$
converges in probability to $F(X_R) \exp(-\,Y_R^g)$.
\end{lem}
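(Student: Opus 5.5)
The plan is to reduce the claim to two separate convergences in probability and then combine them through continuity. Convergence in probability is preserved under continuous maps of finitely many convergent sequences (Remark~\ref{rem:convergence}). Concretely, I will show
\begin{equation}
X_{R,N}\to X_R \ \text{in probability in } \sD'(\bS_R),
\qquad
Y^g_{R,N}\to Y^g_R \ \text{in probability in } \bR.
\end{equation}
The pair $(X_{R,N},Y^g_{R,N})$ then converges in probability in $\sD'(\bS_R)\times\bR$. Since $(\phi,y)\mapsto F(\phi)\re^{-y}$ is continuous, the claim follows. The continuous mapping step also has a subsequence version: every subsequence has a further subsequence converging almost surely (Remark~\ref{rem:convergence}), continuity of $F$ and $\exp$ transfers this to the product, and this characterizes convergence in probability.

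For the field, $X_{R,N}=K_{R,N}X_R$. I will show $\bE\|X_{R,N}-X_R\|^2_{L^{-\kappa}_2(\bS_R)}\to0$ for any $\kappa>0$. By spherical harmonics, this quantity equals
\begin{equation}
\sum_{l}(2l+1)\,\frac{\big(1-(1+l(l+1)/(NR)^2)^{-1}\big)^2}{(1+l(l+1)/R^2)^{1+\kappa}}.
\end{equation}
This tends to $0$ by dominated convergence, since the summand is dominated by $(2l+1)\ll l\gg^{-1-\kappa}$, which is summable, and tends to zero termwise. Convergence in $L_2(\Omega;L^{-\kappa}_2(\bS_R))$ implies convergence in probability in $L^{-\kappa}_2(\bS_R)$, and hence in $\sD'(\bS_R)$, because the embedding $L^{-\kappa}_2(\bS_R)\hookrightarrow\sD'(\bS_R)$ is continuous.

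For $Y^g_{R,N}=Y_{R,N}-X_{R,N}(g)^n/n$, the term $X_{R,N}(g)$ converges to $X_R(g)$ in $L_2(\Omega)$, by the same spectral computation with $g$ smooth. Both variables lie in the first chaos, so Nelson's estimate (Lemma~\ref{lem:nelson}) upgrades this to convergence in every $L_p$. Hence $X_{R,N}(g)^n\to X_R(g)^n$ in $L_1$, and in particular in probability.

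The main work is the Wick part $Y_{R,N}=\sum_m a_m X_{R,N}^{:m:}(\1_{\bS_R})$. I will show that $(Y_{R,N})_N$ is Cauchy in $L_2(\Omega)$, using Lemma~\ref{lem:stochastic_Y} Item~(B), which gives $\bE|Y^g_{R,N}-Y^g_{R,M}|^2\lesssim (N\wedge M)^{-1/n}$. If one argues directly instead, Lemma~\ref{Nualart-Lemma-1-1} gives
\begin{equation}
\bE\big[X^{:m:}_{R,N}(\rx)X^{:m:}_{R,M}(\ry)\big]=m!\,G_{R,N,M}(\rx,\ry)^m .
\end{equation}
Expanding the square then reduces the increment to integrals of $G_{R,N,N}^m-2G_{R,N,M}^m+G_{R,M,M}^m$. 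These are controlled by Hölder's inequality and Lemma~\ref{lem:local_covariance} Items~(A) and~(B).

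The limit is by definition $Y_R$, understood as the $L_2$-limit defining $X_R^{:m:}(\1)$. This identification, that $Y^g_R$ is exactly this limit, is what I expect to be the main point needing care. Having the Cauchy estimate from Lemma~\ref{lem:stochastic_Y} in hand makes it routine, so the remaining work is bookkeeping, after which $L_2$ convergence gives convergence in probability.
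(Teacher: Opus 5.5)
Your argument is correct and is essentially the paper's: you get $L_2$ convergence $X_{R,N}\to X_R$ in $L^{-\kappa}_2(\bS_R)$ and $Y_{R,N}\to Y_R$ (as in Lemmas~\ref{lem:stochastic_X_uniform} and~\ref{lem:stochastic_Y}), hence convergence in probability, and then conclude by continuity of $F$ and $\exp$. You are in fact more careful than the paper in two places: you treat the term $X_{R,N}(g)^n/n$ separately, and you justify the joint continuous-mapping step through almost surely convergent subsequences; note only that Lemma~\ref{lem:stochastic_Y}~(B) is stated for $Y_R-Y_{R,N}$ (without $g$), which is exactly what your Wick-part argument needs.
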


\begin{proof}
On account of Remark~\ref{rem:convergence}, it suffices to verify the $L_2(\Omega,\cF,\bP)$ convergence. In fact, for $\epsilon > 0$, $\kappa\in(0,\infty)$, and $\delta\in[0,2]$, one has
\begin{equation}
\bP\,(\|X_R- X_{R,N}\|^2_{L_2^{-\kappa-\delta}(\bS_R)}\geq \epsilon )
\leq\,\epsilon^{-1}\,\bE\,\|X_R- X_{R,N}\|_{L_2^{-\kappa-\delta}(\bS_R)}^2
\leq \epsilon^{-1}\,R^2\,C^2\,N^{-2\delta}.
\end{equation}
Similarly,
\begin{equation}
\mathbb{P}\,( |Y_R-Y_{R,N}|^2\geq \epsilon) \leq\,\epsilon^{-1}\,C^2\,N^{-1/n}\,,
\end{equation}
where we used Lemma~\ref{lem:stochastic_X_uniform} and Lemma~\ref{lem:stochastic_Y}.
From the above bounds one concludes that the sequence $(X_{R,N})_{N\in\bN_+}$ and $(Y_{R,N}^g)_{N\in\bN_+}$ converge in probability as $N$ approaches infinity to $X_R$ and $-Y^g_R$, respectively. The statement follows from the fact that
$$F\,:\,\mathscr{D}^{\prime}(\bS_R)\to\bR\,\, \text{and}\,\,t\mapsto\exp(t)\,\,\text{ are\,\, continuous}.$$
Hence, by Remark~\ref{rem:convergence}, the sequences $(F(X_{R,N}))_{N\in\bN_+}$ and $(\exp(-\,Y_{R,N}^g))_{N\in\bN_+}$ also converge in probability to $F(X_R)$ and $\exp(-Y^g_R)$. This concludes the proof.
\end{proof}

\begin{lem}\label{lem:uniform_integrable}
Let $F\,:\, \sD'(\bS_R)\to \mathbb{R}$ be bounded and continuous. Then,
the sequence
$(F(X_{R,N}) \,\exp(-\,Y_{R,N}^g))_{N\in\bN_+}$ is uniformly integrable.
\end{lem}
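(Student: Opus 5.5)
The plan is to use the sufficient criterion of Remark~\ref{rem:uni_inte}: I will show that for some $p>1$, e.g. $p=2$, the moments $\bE\,|F(X_{R,N})\exp(-Y^g_{R,N})|^p$ are bounded uniformly in $N\in\bN_+$. Since $F$ is bounded, $|F(X_{R,N})\exp(-Y^g_{R,N})|^p\leq \|F\|_\infty^p\exp(-p\,Y^g_{R,N})$. It therefore suffices to prove
\begin{equation}
\sup_{N\in\bN_+}\bE\exp(-p\,Y^g_{R,N})<\infty
\end{equation}
for some fixed $p>1$ (with $R$ fixed).

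To get this bound I will combine Lemma~\ref{lem:E_F_X} with the tail estimate of Lemma~\ref{lem:prob_of_density}, in the form of Remark~\ref{rem:prob_of_density}. Write $Z:=-p\,Y^g_{R,N}$. Lemma~\ref{lem:E_F_X} requires a nonnegative variable, so I split $\re^{Z}\leq 1+\re^{Z_+}$ with $Z_+=\max(Z,0)$. Then I apply Lemma~\ref{lem:E_F_X} with $F(s)=\re^{s}-1$, which satisfies $F(0)=0$ and $F'\geq0$. This gives
\begin{equation}
\bE\,\re^{Z}\leq 2+\int_0^\infty \bP(Z>t)\,\re^{t}\,\rd t\leq 2+C\int_0^\infty \exp\big(t-\exp(c\,p^{-2/n}t^{2/n})\big)\,\rd t .
\end{equation}
Here $\bP(Z_+>t)=\bP(Z>t)$ for $t>0$, and the constants $C,c$ from Lemma~\ref{lem:prob_of_density} depend only on $R$, not on $N$.

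The remaining step, and the only one needing care, is to check that this integral is finite. For large $t$, $\exp(c\,p^{-2/n}t^{2/n})$ grows faster than any polynomial in $t$, because the exponential of a positive power dominates. In particular it eventually exceeds $2t$. The integrand is therefore bounded by $\re^{-t}$ for $t\geq t_0(p,c,n)$ and is bounded on $[0,t_0]$, so the integral is finite and independent of $N$.

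Uniform integrability then follows from Remark~\ref{rem:uni_inte}, with $M=\|F\|_\infty^2\,\sup_N\bE\,\re^{-2Y^g_{R,N}}$. I do not expect a genuine obstacle: all the analytic work has been done in Lemma~\ref{lem:prob_of_density}, and the only thing to watch is that its constants are uniform in $N$, which that lemma states.
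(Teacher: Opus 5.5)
Your proposal is correct and follows the paper's proof essentially verbatim. Both arguments use Remark~\ref{rem:uni_inte} to reduce the claim to a uniform bound on $\bE\exp(-p\,Y^g_{R,N})$, split off the positive part, apply Lemma~\ref{lem:E_F_X} with $F(s)=\re^{s}-1$, and conclude with the $N$-uniform doubly-exponential tail bound of Lemma~\ref{lem:prob_of_density}. The only difference is that you write out two steps the paper leaves implicit: the bound by $\|F\|_\infty$, and the check that $t-\exp(c\,p^{-2/n}t^{2/n})\le -t$ for large $t$.
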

\begin{proof}
By Remark~\ref{rem:uni_inte}, it suffices to verify that the set
$$\{\bE\,|\exp(-Y^g_{R,N})|^p~:~N\in\bN_+\}$$
is bounded for some $p>1$. Observe that
\begin{equation}
\begin{aligned}
\bE\, \exp(-p\,Y^g_{R,N})
& = \bE\, \exp(-p\,(Y^g_{R,N}\wedge 0))\, \1_{\{Y^g_{R,N}\leq 0\}}
+\bE\, \exp(-p\,(Y^g_{R,N}\vee 0))\, \1_{\{Y^g_{R,N}> 0\}}
\\
&\leq 1+\bE\, \exp(-p\,(Y^g_{R,N}\wedge 0))\, \1_{\{Y^g_{R,N}\leq 0\}}\,.
\end{aligned}
\end{equation}
Hence, using Lemma~\ref{lem:E_F_X} with $F(X) =\exp(X)-1 $ and $X = -p\,(Y^g_{R,N}\wedge 0)$ yields
\begin{equation}\label{eq:density_bound}
\begin{aligned}
 \bE\,\exp(-p\,Y_{R,N}^g) &\leq 2+\bE\,( \exp(-p\,(Y^g_{R,N}\wedge0))-1) 
 \\
 &=2+\int_0^\infty \bP(-p\,(Y^g_{R,N}\wedge 0)>t)\,\exp(t) \,\rd t
 \\
&= 2+\int_0^\infty \bP(-p\,Y^g_{R,N}>t)\, \exp(t) \,\rd t\,.
 \end{aligned}
\end{equation}
One utilizes Lemma~\ref{lem:prob_of_density} to ensure that the expression $\bP(-p\,Y^g_{R,N}>t)$ has sufficient decay to counteract the growing behaviour of $\exp(t)$. This makes the last integral convergent and concludes the proof.
\end{proof}

\begin{prop}\label{prop:uv_limit}
Let $R\in\bN_+$ and $g\in C^\infty(\bS_R)$ satisfy the bounds~\eqref{eq:bound_on_g}.  There exist random variables $X_R$ and $Y_R^g$ such that $\bE\,\exp(-Y_R^g)<\infty$ and for all bounded and continuous $F\,:\,\sD'(\bS_R)\to \bR$ and all $p\in (0,\infty)$ it holds that
\begin{equation}
 \lim_{N\to\infty} \bE\, F(X_{R,N}) \,\exp(-p\,Y_{R,N}^g) = \bE\, F(X_R)\, \exp(-p\,Y_R^g).
\end{equation}
\end{prop}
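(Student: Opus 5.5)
The natural route is through the Lebesgue--Vitali theorem (Theorem~\ref{thm:Lebesgue-Vitali's theorem}): convergence in probability combined with uniform integrability gives convergence in $L_1(\Omega,\cF,\bP)$, and hence convergence of expectations. The random variables $X_R$ and $Y_R^g$ are those introduced in Sec.~\ref{sec:existence_UV_limit} together with Eq.~\eqref{dfn:X_Y}; $X_R$ has law $\nu_R$ and $Y_R^g$ is defined through the Wick powers of $X_R$. The proof of Lemma~\ref{lem:X_Y_covergence} already shows that $X_{R,N}\to X_R$ and $Y_{R,N}^g\to Y_R^g$ in probability (via $L_2$ bounds from Lemmas~\ref{lem:stochastic_X_uniform} and~\ref{lem:stochastic_Y}). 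These statements do not involve $p$, so the argument extends verbatim to the exponent $p$. Since $t\mapsto\exp(-pt)$ and $F$ are continuous, Remark~\ref{rem:convergence} then gives convergence in probability of $F(X_{R,N})\exp(-pY^g_{R,N})$ to $F(X_R)\exp(-pY^g_R)$ for every fixed $p\in(0,\infty)$.

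For uniform integrability, I would repeat the argument of Lemma~\ref{lem:uniform_integrable} with $p$ replaced by some $q>p$, say $q=2p$. Because $F$ is bounded, $|F(X_{R,N})\exp(-pY^g_{R,N})|^{2}\le \|F\|_\infty^2\exp(-2pY^g_{R,N})$, so by Remark~\ref{rem:uni_inte} it suffices to bound $\sup_N\bE\exp(-2pY^g_{R,N})$. As in Eq.~\eqref{eq:density_bound}, Lemma~\ref{lem:E_F_X} gives
\[
\bE\exp(-2pY^g_{R,N})\le 2+\int_0^\infty \bP(-2pY^g_{R,N}>t)\,\re^{t}\,\rd t .
\]
Remark~\ref{rem:prob_of_density} bounds the integrand by $C\exp(-\exp(c(2p)^{-2/n}t^{2/n})+t)$. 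The double exponential dominates $\re^{t}$, because $\exp(c' t^{2/n})$ grows faster than any polynomial in $t$. The integral is therefore finite uniformly in $N$, since the constants $C,c$ of Lemma~\ref{lem:prob_of_density} depend only on $R$.

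Lebesgue--Vitali then yields $L_1$ convergence, and in particular convergence of the expectations. For the claim $\bE\exp(-Y_R^g)<\infty$, I would apply the same argument with $F\equiv1$ and $p=1$. Lebesgue--Vitali gives integrability of the limit directly; alternatively, Fatou's lemma applies along an almost surely convergent subsequence, which exists by Remark~\ref{rem:convergence}.

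The one step I expect to need care is the uniformity of the tail estimate in $N$ together with its validity for arbitrary $p$. This is entirely supplied by Lemma~\ref{lem:prob_of_density} and its rescaled form in Remark~\ref{rem:prob_of_density}. Once those are taken as given, the rest of the argument is routine.
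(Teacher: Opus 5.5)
Your proposal is correct and follows the paper's proof. The paper likewise combines the convergence in probability from Lemma~\ref{lem:X_Y_covergence} with the uniform integrability from Lemma~\ref{lem:uniform_integrable}, which rests on the tail bound of Lemma~\ref{lem:prob_of_density}, and concludes via the Lebesgue--Vitali theorem. Your explicit handling of a general exponent $p$, through the $2p$-th moment and the rescaled tail estimate of Remark~\ref{rem:prob_of_density}, simply spells out what the paper leaves implicit.
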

\begin{proof}
On account of Lemmas~\ref{lem:X_Y_covergence}, and~\ref{lem:uniform_integrable}, one infers the assumptions stated in Item~(A) of the Vitali theorem~\ref{thm:Lebesgue-Vitali's theorem}. This concludes the statement.
\end{proof}

\begin{Remark}\label{rem:abso_con_measure}
From the Radon-Nikodym theorem~\ref{Radon-Nikodym theorem}, one infers that the finite volume measure $\mu^g_{R}$ is absolutely continuous with respect to the finite volume Gaussian measure $\nu_{R}$, e.g., they have the same null sets, i.e., for all $A\in\mathrm{Borel}(\sD'(\bS_R))$ such that $\nu_R(A)=0$ one has $\mu^g_R(A)$=0. However, in the limit as $R\to \infty$, the infinite volume $P(\Phi)_2$ measure on $\sS'(\bR^2)$ might not be absolutely continuous with respect to Gaussian measure, see~\cite[p.~25]{RZZ17}.
\end{Remark}

\chapter{Stochastic quantization}\label{ch:stochastic_quantization}
Stochastic quantization is a quantization method introduced by Parisi and Wu~\cite{parisi1981}. It is based on the fact that the Euclidean Green functions can be seen as correlation functions of a statistical system in equilibrium. The equivalence of the stochastic quantization and the path integral quantization for a variety of field theories, in a heuristic sense, has been shown in~\cite{BZ81,FL83}. In particular, the equivalence of the perturbative stochastic quantization with the path integral quantization for a scalar field is demonstrated in~\cite[Sec.~3.3]{DH87}. 

\section{Stochastic quantization on the sphere}\label{sec:stochastic_quantization}
Implementing the setting described in Ch.~\ref{ch:introduction}, the parabolic stochastic quantization associated to the probability measure $\mu^g_{R,N}(\rd\phi)$ can be read off as the following non-linear SPDE 
\begin{equation}\label{eq: SPDE on Sphere}
(\partial_t  +Q_{R,N})\varPhi^g_{R,N}(t,\rx)
= \sqrt{2}\, \xi(t,\rx) -\,P'(\varPhi^g_{R,N}(t,\rx), c_{R,N}) +(\varPhi^g_{R,N}(t,g))^{n-1}\,g\,,
\end{equation}
where $(t,\rx)\in \bR_+\times \bS_R$, $Q_{R,N}:=(1-\Delta_R)(1-\Delta_{R} / N^2)^2$,  $P'(\tau,c) := \partial_{\tau}P(\tau,c)$, and $\xi(t,\rx):= \frac{\rd W_R(t,\rx)}{\rd t}$, where $W_R$ is the
cylindrical Wiener process  on $H=L_2(\bS_R)$, cf. Def.~\ref{def:cyl_w_proc}. (By rescaling $t$, the factors $1/2$ appearing in Eq.~\eqref{eq:sq_pphi} are traded for $\sqrt{2}$ in front of $\xi$).
\\[0.3cm]
\noindent
We aim to prove that the measure $\mu^g_{R,N}(\rd\phi)$ is invariant under the dynamics generated by Eq.~\eqref{eq: SPDE on Sphere}, see Thm.~\ref{thm:inv_measu_non_gauss}.
This implies that if $\varPhi^g_{R,N}$ is a stationary solution of Eq.~\eqref{eq: SPDE on Sphere} with the initial condition $\varPhi^g_{R,N}(0,\Cdot) = \phi^g_{R,N}$ distributed according to the measure $\mu_{R,N}^g(\rd\phi)$, then for all $t\in \bR_+$ the random field $\varPhi^g_{R,N}(t,\Cdot)$ is also distributed according to $\mu^g_{R,N}(\rd\phi)$. 
\\[0.3cm]
The $P(\Phi)_2$ model is subcritical, i.e., the non-linear part in the RHS of Eq.~\eqref{eq: SPDE on Sphere} can be viewed as a perturbation of the linear equation. This implies that the renormalized non-linearity should have better regularity than the Gaussian space-time white noise~\cite[p.~417]{Hai14}. As a first step toward constructing the $P(\Phi)_2$ measure on $\sD'(\bS_R)$, we need to verify the existence of the global in time solution to Eq.~\eqref{eq: SPDE on Sphere} for any fixed $N\in\bN_+$. To this end,
we first prove the existence of the local in time stationary solution of Eq.~\eqref{eq: SPDE on Sphere} using the so-called Da Prato-Debussche decomposition~\cite{daprato2003} combined with a fixed point argument. Then, we shall prove the existence of the global in time stationary solution of Eq.~\eqref{eq: SPDE on Sphere} using the energy method. Note that the negative sign in Eq.~\eqref{eq: SPDE on Sphere} is essential for proving the long time existence of the solution in Sec.~\ref{subsec:global_solution_remainder}.

\begin{Remark}
Here the extra (stochastic) time dimension is just a parameter and it differs from the physical time in Minkowski setting. It enables us to compute the correlation functions of the $d$–dimensional quantum field theory after the $(d+1)-$dimensional system has reached its equilibrium.
\end{Remark}

\begin{Remark}\label{rem:regularity_noise_fields}
As a first step, let us analyze Eq.~\eqref{eq: SPDE on Sphere} heuristically
to identify function spaces in which one can try to solve it. Let us consider the corresponding linear equation
\begin{equation}
\big(\partial_t  +Q_{R,N}\big)Z_{R,N}(t,\rx)= \sqrt{2}\,\xi(t,\rx).
\end{equation}
The solution, if it exists, has the form
\begin{equation}\label{eq:lin-sol}
Z_{R,N}(t,\Cdot)=\re^{-tQ_{R,N}}\,Z_{R,N}(0)+\sqrt{2} \int_0^t \re^{-(t-s)\,Q_{R,N}}\,\xi(s,\Cdot)\,\rd s\,. 
\end{equation}
Let us set $Z_{R,N}(0)\equiv 0$ for simplicity. The second term on the right hand side of Eq.~\eqref{eq:lin-sol} above is called a stochastic convolution 
and will be analyzed rigorously in Sec.~\ref{sec:linear} below. For now, let us provide a formal computation:
\begin{align}
\mathbb{E} \| Z_{R,N}(t,\Cdot)\|^2_{L_2^{\alpha}(\bS_R)}&=2\int_0^t \int_0^t\mathbb{E} \lan  \re^{-(t-s)Q_{R,N}}\xi(s, \Cdot), \re^{-(t-s')Q_{R,N}}  \xi(s',\Cdot)\ran_{L_2^{\alpha}(\bS_R)}\, \rd s\,  \rd s' \\
&=2\int_0^t \int_0^t  \mathbb{E} \lan \xi(s, \Cdot),\, \re^{-(2t-s-s')Q_{R,N}}\,(1-\Delta_{R})^{\alpha} \xi(s' , \Cdot)\ran_{L_2(\bS_R)}\, \rd s\, \rd s' \\
&=2\int_0^t \int_0^t \sum_{l=0}^{\infty}\sum_{m=-l}^l\,  \mathbb{E} \lan \xi(s, \Cdot) , Y_{l,m}\ran_{L_2(\bS_R)}  \lan Y_{l,m}, \xi(s', \Cdot) \ran_{L_2(\bS_R)} \times\\
&\phantom{444444444444444444}\times \re^{-(2t-s-s')\langle l\rangle }\big(1+ \frac{l(l+1)}{R^2} \big)^{\alpha}  \rd s\,  \rd s',
\end{align}
where $\langle l \rangle \simeq l^6$ was introduced in Remark~\ref{rem:l_brackets}. Using Remark~\ref{rem:spherical_projection} one obtains
\begin{multline}
\sum_{m=-l}^l\, \mathbb{E} \lan \xi(s, \Cdot) ,Y_{l,m}\ran_{L_2(\bS_R)}  \lan Y_{l,m}, \xi(s', \Cdot) \ran_{L_2(\bS_R)} 
\\
=  \sum_{m=-l}^l\,  \mathbb{E} \int_{\bS_R}\, \xi(s, \rx) Y_{l,m}(\rx)\rho_R(\rd \rx) \,  \int_{\bS_R} \overline Y_{l,m}(\rx') \xi(s', \rx') \rho_R(\rd \rx')
\\
= \sum_{m=-l}^l\, \int_{\bS_R}\, \int_{\bS_R}\,  \mathbb{E}[ \xi(s, \rx)\xi(s', \rx')]\,  Y_{l,m}(\rx)\, \overline Y_{l,m}(\rx') \rho_R(\rd \rx')\rho_R(\rd \rx)
\\
= \delta(s-s')\,\sum_{m=-l}^l\, \int_{\bS_R}\, \int_{\bS_R}\, \delta_{R}( \rx, \rx')  Y_{l,m}(\rx) \overline Y_{l,m}(\rx') \rho_R(\rd \rx')\rho_R(\rd \rx)
\\
= \delta(s-s')\, \int_{\bS_R}\,\sum_{m=-l}^l Y_{l,m}(\rx)\, \overline Y_{l,m}(\rx)\, \rho_R(\rd \rx)
= \delta(s-s')\,\frac{(2l+1)}{ 4\pi R^2}\, \int_{\bS_R}\, \rho_R(\rd \rx)
\\
= \delta(s-s')\,(2l+1),
\end{multline}
where we have used $\mathbb{E}[\xi(s,x)\xi(s',x')] = \delta_R(x,x')\delta(s-s')$ such that
$$
\int_{\bS_R}\delta_R(\rx,\rx')\,\rho_R(\rd \rx') = 1,\,\text{and}\,\,\int_{\bS_R}\delta_R(\rx,\rx')\,Y_{l,m}(\rx')\rho_R(\rd \rx') = Y_{l,m}(\rx).$$
We obtain
\begin{align} \label{ell-sum}
\mathbb{E}\,\|Z_{R,N}(t)\|^2_{L_2^{\alpha}(\bS_R)}=\sum_{l=0}^{\infty} (2l+1)  \frac{1-\re^{- 2t\langle l\rangle }}{ \langle l\rangle}\big(1+ \frac{l(l+1)}{R^2} \big)^{\alpha}.
\end{align}
The above sum converges provided that $6-(2\alpha+1)>1$, i.e. $\alpha <2$. (This is consistent with Lemma~\ref{lem:stochastic_X_not_uniform}). Accordingly, we will look for mild solutions of 
 the regularized equation~(\ref{eq: SPDE on Sphere}) in the space $L^1_2(\bS_R)$, for which the non-linear terms in the equation make sense. On the other hand, if we removed the UV regularization, that is, set  $N\to \infty$, the corresponding condition for convergence of the sum in Eq.~\eqref{ell-sum} would be $ 2-(2\alpha+1)>1$, i.e., $\alpha<0$. (This is consistent with 
 Lemma~\ref{lem:stochastic_X_uniform}). As the corresponding 
 spaces $L^{\alpha}_2(\bS_R)$ contain distributions whose powers are undefined, the non-linear Eq.~\eqref{eq: SPDE on Sphere} would be singular without regularization.
\end{Remark}
\noindent
Throughout this chapter, we fix a stochastic basis $(\Omega, \cF,\{\cF_t\}_{t\in[0,\infty)},\bP)$ and we consider the cylindrical Wiener process $(W_R(t,\Cdot))_{t\geq 0}$ on $L_2(\bS_R)$ defined in Def.~\ref{def:cyl_w_proc}. 
\section{Linear equation}\label{sec:linear}
Let us first look at the linear equation corresponding to~\eqref{eq: SPDE on Sphere},
\begin{equation}\label{eq:weak_free}
 \rd Z_{R,N}(t,\Cdot) 
 =
 \sqrt{2}\,\rd W_R(t,\Cdot)- Q_{R,N} Z_{R,N}(t,\Cdot)\,\rd t\,,
\end{equation}
whose invariant measure is $\nu_{R,N}$.
\begin{lem}\label{lem:free_solution}
The unique mild solution $Z_{R,N}\in C([0,\infty),L^1_2(\bS_R))$
of Eq.~\eqref{eq:weak_free} with the initial condition $Z_{R,N}(0,\Cdot)=z_{R,N}\in  L^1_2(\bS_R)$ exists in the sense of Def.~\ref{def:mild_solution}. It has the form
\begin{equation}\label{eq:mild_free}
 Z_{R,N}(t,\Cdot) = \re^{-t\,Q_{R,N}} z_{R,N}+ \sqrt{2}\,\int_0^t\re^{-(t-s)\,Q_{R,N}}\,\rd W_R(s,\Cdot)\,.
\end{equation}
\end{lem}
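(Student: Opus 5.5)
The mild formula~\eqref{eq:mild_free} is forced by Def.~\ref{def:mild_solution}, since the nonlinearity vanishes ($F\equiv 0$). So the plan is to check that the right-hand side of~\eqref{eq:mild_free} is a well-defined element of $C([0,T];L^2(\Omega,\cF,\bP;L^1_2(\bS_R)))$ for every $T>0$, and that it is the only such process. Uniqueness is immediate: if $Z$ and $Z'$ are two mild solutions with the same initial datum, their difference is $\re^{-tQ_{R,N}}(z_{R,N}-z_{R,N})=0$. The work is therefore entirely in the existence and regularity of the two terms.

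\textbf{Deterministic term.} By Remark~\ref{rem:reproducing_kernel_sphere}, $(\re^{-tQ_{R,N}})_{t\ge0}$ is a $C_0$-semigroup of contractions on $L^1_2(\bS_R)$. Hence $t\mapsto \re^{-tQ_{R,N}}z_{R,N}$ is continuous into $L^1_2(\bS_R)$, with norm bounded by $\|z_{R,N}\|_{L^1_2(\bS_R)}$. This carries over to the $L^2(\Omega)$-valued setting even when $z_{R,N}$ is random with finite second moment, by dominated convergence.

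\textbf{Stochastic convolution $W_{Q}(t):=\sqrt2\int_0^t \re^{-(t-s)Q_{R,N}}\,\rd W_R(s)$.} By Lemma~\ref{lem:contraction_semigroup}(B), for each $s<t$ the integrand $\re^{-(t-s)Q_{R,N}}$ is Hilbert--Schmidt from $L_2(\bS_R)$ to $L^1_2(\bS_R)$, with
\[
\|\re^{-(t-s)Q_{R,N}}\|_{\mathrm{HS}}^2=\sum_{l\ge0}(2l+1)\Big(1+\tfrac{l(l+1)}{R^2}\Big)\re^{-2(t-s)\langle l\rangle}.
\]
By Remark~\ref{rem:BDG_ineq}, the It\^o isometry gives
\[
\bE\|W_Q(t)\|^2_{L^1_2(\bS_R)}=2\int_0^t\|\re^{-rQ_{R,N}}\|^2_{\mathrm{HS}}\,\rd r=\sum_{l}(2l+1)\Big(1+\tfrac{l(l+1)}{R^2}\Big)\frac{1-\re^{-2t\langle l\rangle}}{\langle l\rangle}.
\]
Since $\langle l\rangle\sim l^6$ for fixed $N,R$, this series converges; this is the computation~\eqref{ell-sum} with $\alpha=1<2$. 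The integral is therefore well defined, adapted, and uniformly bounded in $L^2(\Omega;L^1_2)$ for $t\in[0,T]$.

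\textbf{Mean-square continuity in $t$ (the main obstacle).} For $0\le s<t$, I would split
\[
W_Q(t)-W_Q(s)=(\re^{-(t-s)Q_{R,N}}-1)W_Q(s)+\sqrt2\int_s^t\re^{-(t-r)Q_{R,N}}\,\rd W_R(r).
\]
The second piece has second moment $2\int_0^{t-s}\|\re^{-rQ_{R,N}}\|_{\mathrm{HS}}^2\,\rd r$, which tends to $0$ as $t-s\to0$ because the integrand is integrable near $0$ (the same convergent sum as above). For the first piece, I would expand in spherical harmonics:
\[
\bE\|(\re^{-hQ_{R,N}}-1)W_Q(s)\|^2_{L^1_2(\bS_R)}\le\sum_l(2l+1)\Big(1+\tfrac{l(l+1)}{R^2}\Big)\frac{(1-\re^{-h\langle l\rangle})^2}{\langle l\rangle}.
\]
Each term tends to $0$ as $h\to0$ and is dominated by a summable sequence, so dominated convergence gives uniform convergence in $s$. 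Both arguments work equally when $s>t$. This yields $W_Q\in C([0,T];L^2(\Omega;L^1_2(\bS_R)))$ for every $T$, and hence for all of $[0,\infty)$.

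Adding the two terms gives a mild solution in the sense of Def.~\ref{def:mild_solution}. A pathwise continuous version, if it is needed, would follow from the factorization method or Kolmogorov's criterion using the Gaussianity of $W_Q$ together with Remark~\ref{rem:bounded_semigroup_L^3_2}, but it is not required by the definition.
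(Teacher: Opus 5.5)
Your computations are correct as far as they go. The It\^o-isometry bound and the dominated-convergence treatment of the two pieces of $W_Q(t)-W_Q(s)$ do give $Z_{R,N}\in C([0,T];L^2(\Omega;L^1_2(\bS_R)))$. That is the literal requirement of Def.~\ref{def:mild_solution}, and your route to it is more elementary than the paper's.

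The gap is the step you dismiss at the end. The statement asserts $Z_{R,N}\in C([0,\infty),L^1_2(\bS_R))$, and the paper's proof is aimed precisely at almost-sure continuity of the paths $t\mapsto Z_{R,N}(t,\omega)$ in $L^1_2(\bS_R)$. This is what is used later. For example, Remark~\ref{Remark:bound-on-Wick-powers} needs $\sup_{s\in[0,T]}\|Z_{R,N}(s,\Cdot)\|_{L^1_2(\bS_R)}<\infty$ for almost every $\omega$, because the local and global existence proofs for the remainder equation are deterministic arguments run at fixed $\omega$. Mean-square continuity does not by itself provide a continuous modification. Your dominated-convergence estimates also carry no modulus of continuity, so Kolmogorov's criterion cannot be applied to them as written.

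The paper obtains path continuity by the factorization method:
\begin{itemize}
\item For $\alpha\in(0,1/6)$ it sets $y(t)=\int_0^t(t-s)^{-\alpha}\re^{-(t-s)Q_{R,N}}\,\rd W_R(s)$.
\item It uses the Hilbert--Schmidt bound and Burkholder--Davis--Gundy to get $y\in L_p([0,T];L^1_2(\bS_R))$ almost surely.
\item It rewrites the stochastic convolution as $c_\alpha\int_0^t\re^{-(t-r)Q_{R,N}}(t-r)^{\alpha-1}y(r)\,\rd r$, which is continuous in $t$ once $p>1/\alpha$.
\end{itemize}

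If you prefer to keep your route, make it quantitative:
\begin{itemize}
\item Since $0\leq 1-\re^{-x}\leq x^\gamma$ for $x\geq 0$ and $\gamma\in[0,1]$, your two series are bounded by $C\,|t-s|^{\gamma}\sum_l(2l+1)(1+l(l+1)/R^2)\langle l\rangle^{\gamma-1}$.
\item This sum converges for $\gamma<1/3$, because $\langle l\rangle\simeq l^6$ for fixed $R,N$.
\item The increment is a centred Gaussian in a Hilbert space, so its higher moments are controlled by the second one. For instance, apply Lemma~\ref{lem:nelson} to the squared norm, which lies in the inhomogeneous chaos of order two. This gives $\bE\|W_Q(t)-W_Q(s)\|^p_{L^1_2(\bS_R)}\leq C_p|t-s|^{p\gamma/2}$.
\item Kolmogorov's criterion with $p\gamma>2$ then yields a continuous version.
\end{itemize}
The deterministic term is continuous for every $\omega$, since the semigroup is $C_0$ on $L^1_2(\bS_R)$. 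With this step added, your proof covers the statement as the paper uses it.
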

\begin{Remark} \label{Remark:gen-Wick-powers} Actually, the lemma remains true if we replace  $L^1_2(\bS_R)$ with $L_2^{2-\kappa}(\bS_R)$
for some $0<\kappa<1$. This can be checked by replacing in Eq.~\eqref{eq:counting-powers} the 
factor $(1+\frac{l(l+1)}{R^2})$ with $(1+\frac{l(l+1)}{R^2})^{2-\kappa}$ and choosing $\alpha<\kappa/6$.
We will need this generalization in Remark~\ref{Remark:bound-on-Wick-powers} below.
\end{Remark}
\begin{proof}[Proof of Lemma~\ref{lem:free_solution}]
The problem of existence of the mild solution~\eqref{eq:mild_free} amounts to the existence of the stochastic integral. 
It exists almost surely, in particular, for (deterministic) Borel functions $\Phi\,:\,\mathbb{R}_+\to \mathcal{L}_2(L_2(\bS_{R}), L_2^1(\bS_{R}))$, where the latter denotes the space of Hilbert-Schmidt operators from $L_2(\bS_{R})$ to $L_2^1(\bS_{R})$, with norm $\|\,\cdot\,\|_{\mathrm{HS}}$. Moreover, we have for $p\in (0,\infty)$, by the Burkholder-Davis-Gundy theorem
\begin{equation} \label{eq:Burkholder-Davis-Gundy}
\bE\,\| \int_0^t\, \Phi(s)\,\rd W(s)\,\|^p_{L^1_2(\bS_{R})} \leq  c_p\,\bE\,\left(\int_0^t \|\Phi(s)\|^2_{\mathrm{HS}}\,\rd s\right)^{p/2}.
\end{equation}
See Remark.~\ref{rem:BDG_ineq}. For $p\geq 2$ it is a consequence of the Nelson's estimate (Lemma~\ref{lem:nelson} above), since we consider deterministic $\Phi$. We aim to modify~\cite[Thm.~6.10]{Hai23}, which is based on~\cite[Thm.~5.2.6]{DaZab96} for our problem at hand. (We recall that~\cite[Thm.~6.10]{Hai23} considers stochastic convolutions of the form $ \int_0^t \,S(t-s)\,Q\, \rd W(s) $,
where $W$ is cylindrical Wiener process on $\mathcal{K}$, the operator $Q: \mathcal{K}\to \mathcal{H}$ is bounded and $S$ is a semigroup on $\mathcal{H}$. We will check below that the argument still works if $Q$ is omitted and $S$ is considered as
a family of operators from $\mathcal{K}$ to $\mathcal{H}$). Let $t\in [0,T]$ and define for $\alpha\in(0,1/6)$
\begin{equation}
y(t,\Cdot):=\int_0^t\,(t-s)^{-\alpha}\,\re^{-(t-s)Q_{R,N}}\,\rd W_R(s,\Cdot)\,.
\end{equation}
To check if the integral is well defined, we estimate the Hilbert-Schmidt norm of the integrand:
From Lemma~\ref{lem:contraction_semigroup} for arbitrary $T>0$, one shows 
\begin{equation}\label{eq:linear_equation_1}
\int_0^T s^{-2\alpha} \|\re^{-sQ_{R,N}}\|^2_{\mathrm{HS}}\, \rd s
=\sum_{l=0}^{\infty}\,(2l+1)\,(1+\frac{l(l+1)}{R^2})\,\int_0^T s^{-2\alpha}\,e^{-2s\langle l\rangle}\,\rd s<\infty\,,
\end{equation}
where $\langle l\rangle \simeq l^6$ was introduced in Remark~\ref{rem:l_brackets}. The preceding bound follows from the fact that for all $s\in[0,T]$ it holds that
\begin{equation}
\int_0^T\,s^{-2\alpha}\,\re^{-2s\langle l\rangle}\rd s\leq \int_0^{\infty}\,s^{-2\alpha}\,\re^{-2s\langle l\rangle}\rd s\,=\, (2\langle l\rangle)^{2\alpha-1}\,\Gamma(1-2\alpha),
\end{equation}
which converges for fixed $l\in\bN_0$ since $1-2\alpha>0$. Consequently, Eq.~\eqref{eq:linear_equation_1} is finite if $\alpha<1/6$. This follows from  
\begin{multline} \label{eq:counting-powers}
\sum_{l=0}^{\infty}\,(2l+1)\,(1+\frac{l(l+1)}{R^2})\,\int_0^T s^{-2\alpha}\,e^{-2s\langle l\rangle}\,\rd s
\\
\leq 
\,
\Gamma(1-2\alpha)\,\sum_{l=0}^{\infty}\,(2l+1)\,(1+\frac{l(l+1)}{R^2})\, (2\langle l\rangle)^{2\alpha-1}\,.
\end{multline}
Comnining Eqs.~\eqref{eq:Burkholder-Davis-Gundy},~\eqref{eq:linear_equation_1}, we write
\begin{multline}\label{eq:factorization_1}
\bE\,\int_0^t\|y(s,\Cdot)\|_{L^1_2(\bS_R)}^p\rd s\leq C_p\,\int_0^t\,\left|\int_0^s\,\|(s-r)^{-\alpha}\re^{-(s-r)Q_{R,N}}\|^2_{\mathrm{HS}}\,\rd r \right|^{p/2}\,\rd s <\infty,
\end{multline}
Let $c_{\alpha}= \frac{\sin(\pi\alpha)}{\pi}$ such that $c_{\alpha}\,\int_s^t\,(t-r)^{\alpha-1}\,(r-s)^{-\alpha}\,\rd r=1$
holds for every $0<s<t$. Note that the integral over the set $\{ (s,r)\,:\, 0<s<r<t\}$ can be written in two equivalent ways 
$$\int_0^t\,\int_s^t\,\ldots \rd r\,\rd W(s)=\int_0^t\,\int_0^r \,\ldots  \rd W(s)\,\rd r.$$
Hence, implementing the preceding identity into Eq.~\eqref{eq:weak_free} one gets
\begin{equation}
\begin{aligned}
Z_{R,N}(t,\Cdot)
&=\re^{-tQ_{R,N}}z_{R,N}+c_{\alpha}\,\int_0^t\,\int_s^t\,(t-r)^{\alpha-1}\,(r-s)^{-\alpha}\,\re^{-(t-s)Q_{R,N}}\,\rd r\,\rd W_R(s,\Cdot)
\\
&=\re^{-tQ_{R,N}}z_{R,N}+c_{\alpha}\,\int_0^t\,\int_0^r\,(t-r)^{\alpha-1}\,(r-s)^{-\alpha}\,\re^{-(t-s)Q_{R,N}}\,\rd W_R(s,\Cdot)\,\rd r.
\end{aligned}
\end{equation}
Using $\re^{-(t-s)Q_{R,N}}= \re^{-(t-r)Q_{R,N}}\,\re^{-(r-s)Q_{R,N}}$ one has
\begin{multline}
\int_0^t\,\int_0^r\,(t-r)^{\alpha-1}\,(r-s)^{-\alpha}\,\re^{-(t-s)Q_{R,N}}\,\rd W_R(s,\Cdot)\,\rd r
\\
=\int_0^t\,\re^{-(t-r)Q_{R,N}}\,\int_0^r\,(r-s)^{-\alpha}\,\re^{-(r-s)Q_{R,N}}\,\rd W_R(s,\Cdot)\,(t-r)^{\alpha-1}\,\rd r.
\end{multline}
Hence,
\begin{equation}
\begin{aligned}
Z_{R,N}(t,\Cdot)
&=\re^{-tQ_{R,N}}z_{R,N}+c_{\alpha}\,\int_0^t\,\re^{-(t-r)Q_{R,N}}\,y(r,\Cdot)\,(t-r)^{\alpha-1}\,\rd r\,.
\end{aligned}
\end{equation}
To proceed, consider the map $y\mapsto F_y$, where
$$F_y(t,\Cdot):=\int_0^t\,\re^{-(t-r)Q_{R,N}}\,(t-r)^{\alpha-1}\,y(r,\Cdot)\,\rd r.$$
The statement will follow from Eq.~\eqref{eq:factorization_1} if we show that for every $\alpha\in (0,1/6)$ there exists $p>0$ such that $ y\mapsto F_y$ maps $L_p([0,\infty),L^1_2(\bS_R))$ into $C([0,\infty),L^1_2(\bS_R))$. Note that the semigroup $t\mapsto \re^{-tQ_{R,N}}$  is uniformly bounded in the operator norm on $L_2^1(\bS_R)$ on any time interval due to the positivity of spectrum of the elliptic operator $Q_{R,N}$ and the map $t\mapsto (t-r)^{\alpha-1}$ belongs to $L_q([0,t))$ for $q\in[1,1/(1-\alpha) )$. From the H\"older inequality, one infers that there exists a constant $C_T>0$
such that for all $p>1/\alpha$ one has
\begin{equation}
\begin{aligned}
\|F_y(t,\Cdot)\|_{L^1_2(\bS_R)}
&
\leq \int_0^t\,(t-r)^{\alpha-1}\,\|y(r,\Cdot)\|_{L^1_2(\bS_R)}\,\rd r
\\&\leq \left(\int_0^t\, (t-r)^{q(\alpha-1)}\,\rd r\right)^{1/q}\left(\int_0^t\,\|y(r,\Cdot)\|_{L^1_2(\bS_R)}^p\,\rd r\right)^{1/p}.
\end{aligned}
\end{equation}
Consequently,
\begin{equation}
\sup_{t\in [0,T]}\,\|F_y(t,\Cdot)\|_{L^1_2(\bS_R)}^p\leq C_T\,\int_0^T\,\|y(r,\Cdot)\|_{L^1_2(\bS_R)}^p\,\rd r\,,
\end{equation}
where $C_T:=(\frac{T^{q(\alpha-1)+1} }{q(\alpha-1)+1})^{p/q}$. To proceed, we verify that the map $ t\mapsto F_y(t)$ is continuous for every continuous function $y$ satisfying $y(0)=0$. Fix $t>0$. Using the semigroup property of $\re^{-(t+\Delta t-r)\,Q_{R,N}}$ one has
\begin{multline}\label{eq:free_field_continuity_1}
\|F_y(t+\Delta t,\Cdot)-F_y(t,\Cdot)\|_{L^1_2(\bS_R)}
\\
=\| \int_0^t\,\left[(t+\Delta t -r)^{\alpha-1}\,\re^{-\Delta tQ_{R,N}}- (t-r)^{\alpha-1} \right]\,\re^{-(t-r)Q_{R,N}}\,y(r,\Cdot)\,\rd r
\\
+\int_t^{t+\Delta t}\, (t+\Delta t -r)^{\alpha-1}\,\re^{-(t+\Delta t-r)Q_{R,N}}\,y(r,\Cdot)\,\rd r\|_{L^1_2(\bS_R)}.
\end{multline}
Consider the first integrand on the RHS of Eq.~\eqref{eq:free_field_continuity_1}. It converges to zero as $\Delta t\to 0$ thanks to the strong continuity of the semigroup $\re^{-(t-r)Q_{R,N}}$. Moreover, it is bounded by $C(t-r)^{\alpha-1}\,\| y(r,\Cdot)\|_{L^1_2(\bS_R)}$, which is integrable.
Hence, by the dominated convergence theorem, as $\Delta t\to 0$ this integral tends to zero. Consider the second integrand on the RHS of Eq.~\eqref{eq:free_field_continuity_1}. One has
\begin{multline}
\|\int_t^{t+\Delta t}\, (t+\Delta t -r)^{\alpha-1}\,\re^{-(t+\Delta t-r)\,Q_{R,N}}\,y(r,\Cdot)\,\rd r\|_{L^1_2(\bS_R)}
\\
=\|\int_0^{\infty}\, \chi_{[0,\Delta t]}(r)\, r^{\alpha-1}\,\re^{-r\,Q_{R,N}}\,y(t+\Delta t -r,\Cdot)\,\rd r\|_{L^1_2(\bS_R)}\,.
\end{multline}
Observe that as $\Delta t\to 0$, the integrand tends to zero, since $\chi_{[0,\Delta t]}(r)$ goes to zero. Moreover, 
\begin{align}
\|\chi_{[0,\Delta t]}(r)\, r^{\alpha-1}\,\re^{-rQ_{R,N}}\,y(t+\Delta t -r,\Cdot)\|_{L^1_2(\bS_R)}
&
\leq  C\,\chi_{[0,1]}(r)\, r^{\alpha-1}\,\|y( t+\Delta t -r,\Cdot)\|_{L^1_2(\bS_R)} 
\\
&
\leq C'\,\chi_{[0,1]}(r)\, r^{\alpha-1}\,(1+\|y(t-r,\Cdot)\|_{L^1_2(\bS_R)}), 
\end{align}
where we made use of the continuity of $y$. As the last expression is integrable, by the dominated convergence theorem, this integral tends to zero as $\Delta t\to 0$. This shows that $ y\mapsto F_y$ is right-continuous for every continuous function $y$ with $y(0)=0$. For the proof of left-continuity, which is similar, we refer to  \cite[Thm.~6.10]{Hai23}. Noting that continuous functions are dense in $L_p$-spaces for all $1\leq p<\infty$, one concludes the proof of continuity of the paths for almost all $\omega\in \Omega$. It now follows by dominated convergence and Eq.~\eqref{eq:Burkholder-Davis-Gundy} that the solution belongs to $L_2(\Omega, \mathcal{F}, \mathbb{P})$.
\end{proof}

\section{Non-linear equation}\label{sec:non-linear}
Let us come back to the stochastic quantization equation~\eqref{eq: SPDE on Sphere}  whose invariant measure is $\mu^g_{R,N}$. It holds that
\begin{multline}\label{eq:weak_interacting}
 \rd \varPhi_{R,N}^g(t,\Cdot)
 =
 \sqrt{2}\, \rd W_R(t,\Cdot)- Q_{R,N}\,\varPhi_{R,N}^g(t,\Cdot)\,\rd t 
 \\
 -P'(\varPhi_{R,N}^g(t,\Cdot),c_{R,N})\,\rd t + (\varPhi_{R,N}^g(t,g))^{n-1}g\,\rd t\,,
\end{multline}
which is well-defined for fixed $R,N\in\bN_+$. However, it becomes singular in the limit $N\to\infty$ due to Remark~\ref{rem:regularity_noise_fields}, i.e., its solution, if it exists, will be almost surely bounded in a Bessel potential space of negative regularity. Invoking the Da Prato-Debussche trick~\cite{daprato2003}, we write, for all $t\in [0,\infty)$, the random field $\varPhi^g_{R,N}(t,\Cdot)$ as
\begin{equation}\label{eq:DDD}
\varPhi^g_{R,N}(t,\Cdot) = \varPsi^g_{R,N}(t,\Cdot) + Z_{R,N}(t,\Cdot),
\end{equation}
where $Z_{R,N}(t,\Cdot)$ is given in Eq.~\eqref{eq:weak_free} and $\varPsi^g_{R,N}(t,\Cdot)$ is the mild solution to the following PDE
\begin{equation}\label{eq:varPsi_pde}
\partial_t\varPsi^g_{R,N}  = -Q_{R,N}\varPsi^g_{R,N}
- P'(\varPsi^g_{R,N}+Z_{R,N},c_{R,N}) +((\varPsi^g_{R,N}+Z_{R,N})(\Cdot,g))^{n-1}g\,.
\end{equation}
We choose the initial data $z_{R,N}\in L^1_2(\bS_R)$, $\phi^g_{R,N}\in L^1_2(\bS_R)$ of the solutions $\{ Z_{R,N}(t,\Cdot)\}_{t\geq 0}$, $\{\varPhi_{R,N}^g(t,\Cdot) \}_{t\geq 0}$ to Eqs.~\eqref{eq:weak_free} and~\eqref{eq:weak_interacting} as random variables with laws given by the respective  measures $\nu_{R,N}$, $\mu^g_{R,N}$. This is legitimate because we verified in  Lemma~\ref{lem:mu_measure_well_defined} 
 that these measures are concentrated on $L^1_2(\bS_R)\subset \sD'(\bS_R)$. We will verify in Sec.~\ref{sec:invariant_measures} that these are invariant measures for the respective equations. Consequently, the resulting solutions are stationary if they exist, cf. Lemmas~\ref{lem:free_solution}, \ref{lem:mild_solution_interacting}. However, $\varPsi_{R,N}(t,\Cdot)$ might not be stationary.

\begin{Remark}\label{rem:regularity_free_fields}
On account of Remark~\ref{rem:regularity_noise_fields}, $Z_{R,N}(t,\Cdot)$ is of regularity $2-\kappa$ for all $\kappa >0$. Using the smoothing property of the heat kernel associated to the operator $Q_{R,N}$, we expect better regularity for the remainder $\varPsi^g_{R,N}(t,\Cdot)$. This property is of importance in Sec.~\ref{subsec:global_solution_remainder}, where we apply the energy method to Eq.~\eqref{eq:varPsi_pde} for proving the existence of the global-in-time solution corresponding to Eq.~\eqref{eq:varPsi_pde}.
\end{Remark}

\begin{lem}\label{lem:mild_solution_interacting}
The mild solution $\varPhi^g_{R,N}(t,\Cdot)\in C ([0, \infty), L^1_2 (\bS_R ))$ to Eq.~\eqref{eq:weak_interacting} with initial condition $\phi_{R,N}^g\in L^1_2(\bS_R)$ exists in the sense of Def.~\ref{def:mild_solution}.
\end{lem}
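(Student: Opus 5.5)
The plan is to use the Da Prato--Debussche decomposition~\eqref{eq:DDD}: set $\varPhi^g_{R,N}=\varPsi^g_{R,N}+Z_{R,N}$, where $Z_{R,N}$ is the stationary linear solution from Lemma~\ref{lem:free_solution}. We then solve the remainder equation~\eqref{eq:varPsi_pde} pathwise, i.e.\ for fixed $\omega$. By Lemma~\ref{lem:free_solution} (and Remark~\ref{Remark:gen-Wick-powers}), almost surely $Z_{R,N}\in C([0,\infty),L^{1}_2(\bS_R))$. Hence the problem reduces to a deterministic one: find $\varPsi\in C([0,\infty),L^1_2(\bS_R))$ with $\varPsi(0)=\phi^g_{R,N}-z_{R,N}$ satisfying
\begin{equation}
\varPsi(t)=\re^{-tQ_{R,N}}\varPsi(0)+\int_0^t\re^{-(t-s)Q_{R,N}}F\big(\varPsi(s)+Z_{R,N}(s)\big)\,\rd s,
\end{equation}
where $F(\phi):=-P'(\phi,c_{R,N})+\phi(g)^{n-1}g$. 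Setting $\varPhi=\varPsi+Z_{R,N}$ then gives the mild formula~\eqref{eq:integrated_SPDE}.

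\textbf{Local existence.} The map $F\colon L^1_2(\bS_R)\to L_2(\bS_R)$ is locally Lipschitz. Indeed, by Lemma~\ref{lem:embedding_sphere}, $L^1_2(\bS_R)\subset L_{p}(\bS_R)$ for every $p<\infty$, so by H\"older's inequality
\[
\|\phi^m-\psi^m\|_{L_2}\le C\,\|\phi-\psi\|_{L_{2n}}\big(\|\phi\|_{L_{2n}}+\|\psi\|_{L_{2n}}\big)^{m-1}.
\]
The nonlocal term $\phi(g)^{n-1}g$ is handled the same way, since $g$ is smooth. Lemma~\ref{lem:contraction_semigroup} gives $\|\re^{-tQ_{R,N}}\|_{L_2\to L^1_2}\le Ct^{-1/6}$, which is integrable at $0$. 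Therefore the right-hand side is a contraction on a closed ball of $C([0,T],L^1_2(\bS_R))$ for $T$ small. The admissible $T$ depends only on $\|\varPsi(0)\|_{L^1_2}$ and on $\sup_{[0,1]}\|Z_{R,N}\|_{L^1_2}$. Banach's fixed point theorem gives a unique local mild solution. Continuity in time follows from strong continuity of the semigroup, as in the proof of Lemma~\ref{lem:free_solution}. By the usual blow-up alternative, the solution extends up to a maximal time $T^*$, and if $T^*<\infty$ then $\|\varPsi(t)\|_{L^1_2}\to\infty$ as $t\uparrow T^*$.

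\textbf{Global existence (main obstacle).} To exclude blow-up we need an a priori bound on $\|\varPsi(t)\|_{L^1_2}$ on bounded intervals, obtained by the energy method. First, by Remark~\ref{rem:bounded_semigroup_L^3_2}, we upgrade the solution to $L^3_2$ for $t>0$ so that the computation is legitimate; alternatively we run it on spectral Galerkin truncations. Then we compute $\tfrac{\rd}{\rd t}\tfrac12\|\varPsi\|_{L_2}^2$. The leading term is $-\langle\varPsi,(\varPsi+Z)^{n-1}\rangle$. Expanding it, the coercive part $-\|\varPsi\|_{L_n}^n$ dominates, and all cross terms $\langle Z^{k},\varPsi^{n-k}\rangle$ with $k\ge1$ are absorbed via Young's inequality, using that $Z\in L_p$ for all $p$.

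The $g$-term is the dangerous one, since it enters with a \emph{positive} sign. We plan to control it through the bound~\eqref{eq:bound_on_g}:
\[
|\varPhi(g)|^{n-1}|\varPsi(g)|\le\|\varPhi\|^{n-1}_{L_n}\|\varPsi\|_{L_n}\|g\|^n_{L_{n/(n-1)}}\le \tfrac12\|\varPhi\|^{n-1}_{L_n}\|\varPsi\|_{L_n},
\]
so that it is beaten by $\tfrac1n$-weighted coercivity up to lower-order terms. If the factor $1/2$ turns out to be insufficient for the constant-$1$ leading coefficient $P'(\tau)=\tau^{n-1}+\dots$, we fall back on the fact that $\varPhi(g)$ is a scalar. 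Testing the equation with $g$ and using the bound on $\Delta_R g$ in~\eqref{eq:bound_on_g} gives an ODE inequality for $\varPhi(t,g)$, which can be closed before returning to the $L_2$ estimate.

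Gronwall's inequality then gives $\sup_{[0,T]}\|\varPsi\|_{L_2}$ together with $\int_0^T\|\varPsi\|^n_{L_n}\,\rd t$ finite. Next, the mild formula combined with the $t^{-1/6}$ smoothing of Lemma~\ref{lem:contraction_semigroup} bootstraps this to a bound on $\sup_{[0,T]}\|\varPsi\|_{L^1_2}$. Here $Q_{R,N}$ is of sixth order, so $L_2$ together with time-integrability of $L_n$ norms suffices after iterating a couple of times. This contradicts blow-up, so $T^*=\infty$ almost surely.

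Finally, adaptedness and the square integrability needed in Def.~\ref{def:mild_solution} follow from the pathwise construction, with the fixed point being a limit of adapted Picard iterates, and from the a priori bounds.
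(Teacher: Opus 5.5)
Your Da Prato--Debussche reduction and local theory match the paper: a contraction in $C([0,T];L^1_2(\bS_R))$ using the $t^{-1/6}$ smoothing and local Lipschitz continuity $L^1_2\to L_2$, followed by the blow-up alternative.

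Your $L_2$ energy estimate also closes. Note that the coercive term is $\|\varPsi\|^n_{L_n}$ with coefficient $1$, not $1/n$. Since $(a+b)^{n-1}\le(1+\epsilon)a^{n-1}+C_\epsilon b^{n-1}$, the $g$-term contributes at most $\tfrac12(1+2\epsilon)\|\varPsi\|^n_{L_n}$ plus $Z$-terms. So the factor $1/2$ in~\eqref{eq:bound_on_g} already suffices. Your ODE fallback for $\varPhi(t,g)$ would not close anyway, because $\tfrac{\rd}{\rd t}\varPhi(t,g)$ involves $\varPhi(Q_{R,N}g)$, $\langle P'(\varPhi,c_{R,N}),g\rangle$ and the noise.

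\textbf{The gap is the bootstrap.} You want to pass from $\sup_{[0,T]}\|\varPsi\|_{L_2}$ and $\int_0^T\|\varPsi\|^n_{L_n}\,\rd t$ to $\sup_{[0,T]}\|\varPsi\|_{L^1_2}$ via the mild formula. You may also add the dissipation $\int_0^T\|\varPsi\|^2_{L^3_2}\,\rd t$, which the estimate gives for free. This step fails for general $n$. At high frequencies $Q_{R,N}\sim N^{-4}(-\Delta)^3$, and $\partial_t\varPsi+(-\Delta)^3\varPsi=-\varPsi^{n-1}$ is invariant under $\varPsi\mapsto\lambda^{6/(n-2)}\varPsi(\lambda^6t,\lambda x)$. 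Under this scaling all three quantities behave like $\lambda^{(8-n)/(n-2)}$ or its square. They are therefore critical at $n=8$ and supercritical for $n>8$: they tend to zero under concentration and cannot exclude it. Consequently no argument that treats $\varPsi^{n-1}$ perturbatively through Duhamel and smoothing estimates can produce an $L^1_2$ bound. Each pass through the mild formula lands in a space of lower regularity than the one you started from. Your iteration does work for $n=4,6$, but the lemma is stated for all even $n\ge4$.

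\textbf{The missing idea is to use the defocusing structure at the $L^1_2$ level.} This is what the paper does in Prop.~\ref{prop:global_in_time_existence}, by testing~\eqref{eq:varPsi_pde2} with $(1-\Delta_R)\varPsi$. The leading nonlinearity then yields
\[
\|\varPsi\|^n_{L_n}+(n-1)\int_{\bS_R}|\vec\nabla_R\varPsi|^2\varPsi^{n-2}\,\rho_R(\rd\rx)\ge0,
\]
using that $n$ is even. The operator $Q_{R,N}$ yields the dissipation $N^{-4}\|\varPsi\|^2_{L^4_2}$. The cross terms involving $Z^{:k:}$ and $\vec\nabla_R Z^{:k:}$, which lie pathwise in every $L_p$ at fixed $N$, are absorbed by H\"older and Young. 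The $g$-term becomes $\varPhi(g)^{n-1}\big(\varPsi(g)+\varPsi(-\Delta_R g)\big)$; this is where the second bound in~\eqref{eq:bound_on_g} is needed. The outcome is a direct a priori bound on $\sup_{[0,T]}\|\varPsi\|_{L^1_2}$. This is exactly the quantity controlling your local existence time, and it is subcritical for every $n$ under the scaling above. The blow-up alternative then closes with no bootstrap.
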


\begin{proof}
From Eq.~\eqref{eq:DDD} it suffices to verify the existence and the uniqueness of random fields $Z_{R,N}(t,\Cdot)$ and $\varPsi^g_{R,N}(t,\Cdot)$. By Lemma~\ref{lem:free_solution} the mild solution $Z_{R,N}(t,\Cdot)\in C([0,\infty), L^1_2(\bS_R))$ exists and is unique. The existence and uniqueness of the local- and global-in-time mild solution to Eq.~\eqref{eq:varPsi_pde}, i.e., $\varPsi^g_{R,N}(t,\Cdot)\in C([0,\infty), L^1_2(\bS_R))$ are shown in Sec.~\ref{subsec:local_solution_remainder}. and Sec.~\ref{subsec:global_solution_remainder}. This finishes the proof.
\end{proof}

\subsection{Local existence in time}\label{subsec:local_solution_remainder}
This subsection is devoted to proving the existence of the local-in-time mild solution to Eq.~\eqref{eq:varPsi_pde} via a fixed point argument in the space $L^1_2(\bS_R)$. To this end, we adapt the framework presented in~\cite[Sec.~15.1]{TaylorIII}. A similar idea is implemented in~\cite[Sec.~3.2]{Zab89}. The proof is deterministic and relies mainly on two facts: first, for all $m \in \{1,2,\cdots, n-1\}$, $p \in[2,\infty)$ and fixed $R,N\in\bN_+$, the renormalized free fields $Z_{R,N}^{:m:}(t,\Cdot)$ are uniformly bounded in $L_p(\bS_R)$ for $t\in [0,T]$, see Remark~\ref{Remark:bound-on-Wick-powers} below. 
Second, the non-linearity
$$P'(\varPhi_{R,N}^g(t,\Cdot),c_{R,N})+((\varPhi^g_{R,N})(t,g))^{n-1}\,g(\Cdot)\in C([0,\infty),L_2(\bS_R))$$
defines, for all $t\geq 0$, a locally Lipschitz continuous map from $L^1_2(\bS_R)$ to $L_2(\bS_R)$, which follows from a suitable Sobolev embedding and H\"older's inequality.
\\[0.3cm]
In Ch.~\ref{ch:finite_volume} we introduced a set of real coefficient $(a_m)_{m\in\{0,\ldots,n\}}$, which are fixed throughout the thesis.
Now, we define a new set of coefficients
$$a_{m,l}:=-\,\frac{a_{m+1}\,(m+1)!}{(m-l)!l!}\quad \text{ for}\quad l\in\{0,\ldots,n-2\}\quad\text{ and } \quad m\in\{l,\ldots,n-1\}.$$
Note that for some $C\in (0,\infty)$ and for all $0\leq l\leq n-2$, $l\leq m\leq n-1$ it holds that $|a_{m,l}|\leq C$.
Moreover, from Eq.~\eqref{eq:hermite_hilbert} for all $m\in \{1,\cdots,n-1\}$, one has
\begin{equation}
Z_{R,N}^{:m:}= \sum_{k=0}^{\lfloor m/2\rfloor} \frac{(-1)^k\,m!}{(m-2k)!\, k!\, 2^k}\,(c_{R,N})^k\, Z_{R,N}^{m-2k}\,.
\end{equation}
By definition $Z_{R,N}^{:0:}=1$. Recall that $P'(\tau,c) := \partial_{\tau}P(\tau,c)$. One has
\begin{equation}
P'(\varPhi^g_{R,N},c_{R,N})
=P'(\varPsi^g_{R,N}+Z_{R,N},c_{R,N}) 
 = (\varPsi^g_{R,N})^{n-1} - \sum_{l=0}^{n-2}\sum_{m=l}^{n-1}\,a_{m,l}\, Z_{R,N}^{:m-l:}\,(\varPsi^g_{R,N})^l.
\end{equation}
\noindent
Let
\begin{multline}\label{eq:non-linear-Psi}
F( \varPsi^{g}_{R,N}(s,\Cdot))
\\
= -P'((\varPsi_{R,N}^g+Z_{R,N})(s,\Cdot),c_{R,N})+ ((\varPsi_{R,N}^g +Z_{R,N})(s,g))^{n-1}\,g(\Cdot)
\end{multline}
and extend this definition from $s\mapsto \varPsi^{g}_{R,N}(s,\Cdot)$ to arbitrary $\varPsi\in C([0,T] ; L^1_2(\bS_R))$.

\begin{lem}\label{lem:lipschitz}
Consider the mapping $\cK$ defined by
\begin{equation}
\cK(\varPsi)(t,\Cdot):= \re^{-t\,Q_{R,N}}\,\psi^g_{R,N} - \Gamma(\varPsi)(t,\Cdot),
\quad
\Gamma(\varPsi)(t,\Cdot) = \int_0^t\,\re^{-(t-s)\,Q_{R,N}}\,F(\varPsi(s,\Cdot))\,\rd s,
\end{equation}
where $\psi_{R,N}^g:=\phi^g_{R,N}- z_{R,N}\in L^1_2(\bS_R)$. It holds that
\begin{equation}
\cK\,:\,C([0,T] ;L^1_2(\bS_R)) \to C([0,T] ;L^1_2(\bS_R))
\end{equation}
and there exists a constant $C\in (0,\infty)$ such that for every $\varPsi_1,\varPsi_2\in C([0,T];L^1_2(\bS_R))$ and all $g\in C^{\infty}(\bS_R)$, it holds that
\begin{equation}\label{eq:lip_bound_non_lineararity}
\|\cK(\varPsi_1) - \cK(\varPsi_2)\|_{C([0,T]; L^1_2(\bS_R))}
\leq C\,T^{\,5/6}\,D(\varPsi_1,\varPsi_2)\,\,\|\varPsi_1 -\varPsi_2\|_{C([0,T];L^1_2(\bS_R))},
\end{equation}
where 
\begin{equation}
D(\varPsi_1,\varPsi_2):=\Big[2+ \sum_{l=1}^{n-2}\,\left(\|\varPsi_1\|^{l}_{C([0,T];L^1_2(\bS_R))} + \|\varPsi_2\|^{l}_{C([0,T];L^1_2(\bS_R))}\right)\Big].
\end{equation}
\end{lem}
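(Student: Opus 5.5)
The plan is to reduce everything to two estimates: a smoothing bound for the semigroup and a local Lipschitz bound for the nonlinearity $F$ from $L^1_2(\bS_R)$ to $L_2(\bS_R)$. For the semigroup, I will use Lemma~\ref{lem:contraction_semigroup}, specifically Eq.~\eqref{eq:M-bound}. It gives $\|\re^{-sQ_{R,N}}\|_{L_2(\bS_R)\to L^1_2(\bS_R)}\leq C s^{-1/6}$ for $s>0$. Hence
\begin{equation}
\|\Gamma(\varPsi_1)(t)-\Gamma(\varPsi_2)(t)\|_{L^1_2(\bS_R)}\leq C\int_0^t (t-s)^{-1/6}\,\|F(\varPsi_1(s))-F(\varPsi_2(s))\|_{L_2(\bS_R)}\,\rd s .
\end{equation}
Integrating $(t-s)^{-1/6}$ over $[0,T]$ produces exactly the factor $\tfrac{6}{5}T^{5/6}$ in~\eqref{eq:lip_bound_non_lineararity}. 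The free term $\re^{-tQ_{R,N}}\psi^g_{R,N}$ cancels in the difference.

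Next I will establish the local Lipschitz property of $F$ pointwise in $s$. I expand $F$ using the decomposition of $P'(\varPsi+Z,c_{R,N})$ displayed before Eq.~\eqref{eq:non-linear-Psi}. This gives the term $\varPsi^{n-1}$, the mixed terms $a_{m,l}Z^{:m-l:}\varPsi^l$, and the $g$-term $((\varPsi+Z)(g))^{n-1}g$. For monomials I factor $a^l-b^l=(a-b)\sum_{j}a^jb^{l-1-j}$ and apply H\"older on $\bS_R$. The difference $\varPsi_1-\varPsi_2$ goes in $L_{p}$ and the remaining factors $Z^{:k:}$, $\varPsi_i^j$ go in $L_q$ norms. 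All of these are controlled by $L^1_2$ norms via the Sobolev embedding of Lemma~\ref{lem:embedding_sphere}, which gives $L^1_2(\bS_R)\subset L_p(\bS_R)$ for every $p<\infty$ since $\alpha=1-2/p<1$. The Wick powers $Z^{:k:}_{R,N}(s)$ for $k\le n-1$ are bounded uniformly on $[0,T]$ in every $L_p(\bS_R)$ (Remark~\ref{Remark:bound-on-Wick-powers}, building on Remark~\ref{Remark:gen-Wick-powers} and Lemma~\ref{lem:free_solution}), and I absorb these bounds into $C$. For the $g$-term, $|\varPsi(g)|\le\|\varPsi\|_{L_2}\|g\|_{L_2}$ since $g$ is smooth. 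The scalar difference is treated by the same factorization, multiplied by $\|g\|_{L_2}$. Collecting the powers of $\|\varPsi_i\|_{L^1_2}$ gives a factor bounded by $C\,D(\varPsi_1,\varPsi_2)$.

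Here I expect the main obstacle: the leading term $\varPsi^{n-1}$ yields powers up to $n-2$ of $\|\varPsi_i\|$, and so does the $g$-term. Matching these exactly to $D$, which contains the constant $2$ plus powers $1,\dots,n-2$, requires bounding cross terms $\|\varPsi_1\|^j\|\varPsi_2\|^{l-1-j}$ by $\|\varPsi_1\|^{l-1}+\|\varPsi_2\|^{l-1}$ via Young. I will also have to accept that $C$ depends on $R$, $N$, $T$ (through the $Z$ bounds) and on $g$ through its norms. This is harmless for the local fixed-point argument.

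Finally, I will show that $\cK$ maps $C([0,T];L^1_2)$ into itself. The first term is continuous by strong continuity of the semigroup on $L^1_2(\bS_R)$ (Remark~\ref{rem:reproducing_kernel_sphere}). For $\Gamma$, the estimate above with $\varPsi_2=0$ shows $F(\varPsi(\Cdot))\in C([0,T];L_2)$, which follows from the Lipschitz bound and the continuity of $Z$ in $L^{2-\kappa}_2$. Continuity of $t\mapsto\Gamma(\varPsi)(t)$ in $L^1_2$ then follows by the dominated convergence argument already used for $F_y$ in the proof of Lemma~\ref{lem:free_solution}, with integrable kernel $(t-s)^{-1/6}$.
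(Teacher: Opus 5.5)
Your proposal is correct and takes essentially the same route as the paper. You use the smoothing bound \eqref{eq:M-bound} to get the factor $T^{5/6}$, prove a local Lipschitz estimate for $F: L^1_2(\bS_R)\to L_2(\bS_R)$ via H\"older, the Sobolev embedding of Lemma~\ref{lem:embedding_sphere} and the pathwise $L_p$ bounds on the Wick powers of $Z_{R,N}$, and obtain continuity in time from strong continuity of the semigroup plus dominated convergence.

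As in the paper, your constant depends on $R$, $N$, $T$, the realization of $Z_{R,N}$ and $g$. It is uniform only for $g$ in the implicitly assumed class \eqref{eq:bound_on_g}, which is all the subsequent contraction argument needs.
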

\noindent
\begin{Remark} \label{Remark:bound-on-Wick-powers}
The constant $C$ on the RHS of Eq.~\eqref{eq:lip_bound_non_lineararity} depends on the quantities $\sup_{0\leq s\leq T}\|Z_{R,N}^{:m-l:}(s,\Cdot)\|_{L_p(\bS_R)}$.
To justify that they  are finite for any $T\geq 0$, we recall that we are now looking for solutions at a fixed UV cut-off $N$, thus we can express the Wick powers as polynomials in $Z_{R,N}$.
For any monomial $Z_{R,N}^{\tilde m}$, $\tilde{m}\in \mathbb{N}$, we can write
\begin{equation} \label{eq:auxiliary}
\|Z_{R,N}^{\tilde{m}}(s,\Cdot)\|_{L_p(\bS_R)}=\|Z_{R,N} (s,\Cdot)\|^{\tilde m}_{L_{\tilde{m} p}(\bS_R)} \leq c\|Z_{R,N} (s,\Cdot)\|^{\tilde m}_{L_{2}^1(\bS_R)},
\end{equation}
where we applied the Sobolev inequality given in Lemma~\ref{lem:embedding_sphere}. 
(We assumed here that $p\geq 2$, which can be ensured using the H\"older inequality).
For almost all $\omega\in\Omega$,  the RHS of Eq.~\eqref{eq:auxiliary} is bounded in $s\in [0,T]$ on compact sets by the continuity statement in Lemma~\ref{lem:free_solution}. 
 The argument can be generalized to treat $\|\vec \nabla_R Z_{R,N}^{:m-l:}(0,\Cdot)\|_{L_{p}(\bS_R)} $. Instead of (\ref{eq:auxiliary}), we are led to
\begin{equation} \label{eq:bound-on-Wick-powers}
\|\vec\nabla_R Z_{R,N}(s,\Cdot) Z_{R,N}^{\tilde{m}}(s,\Cdot)\|_{L_p(\bS_R)}\leq \|Z_{R,N}(s,\Cdot)\|_{L_2^{2-\kappa}(\bS_R)} \|Z_{R,N}^{\tilde{m}} (s,\Cdot)\|_{L_{p'}(\bS_R)}
\end{equation}
with some $0<\kappa<1$,
where we applied the H\"older inequality with $1/p=1/q'+1/p'$, $q'\geq 2$, and then Remark~\ref{rem:bessel_sobolev_psi}.
The second factor on the RHS is now estimated as in Eq.~\eqref{eq:auxiliary}. For the finiteness of the first factor  we refer to Remark~\ref{Remark:gen-Wick-powers}.
 \end{Remark}
\begin{proof}[Proof of Lemma~\ref{lem:lipschitz}]
Observe that if $\psi_{R,N}^g\in L^1_2(\bS_R)$, then by Remark~\ref{rem:reproducing_kernel_sphere} it follows that $\re^{-t\,Q_{R,N}}\psi^g_{R,N}$ also belongs to $L^1_2(\bS_R)$. Furthermore,  
by estimate~\eqref{eq:M-bound}, for all $t\in [0,T]$, one has
\begin{equation}\label{eq:non-linearity_second_1}
\begin{aligned}
\|\Gamma(\varPsi)(t,\Cdot)\|_{L^1_2(\bS_R)}
&
\leq C\,\int_0^t\,\frac{1}{|t-s|^{1/6}}\,\|F(\varPsi)(s,\Cdot)\|_{L_2(\bS_R)}\,\rd s,
\\
&
\leq C\,\sup_{s\in [0,T]}\,\,\|F(\varPsi)(s,\Cdot)\|_{L_2(\bS_R)}\,\int_0^t\,\frac{\rd s}{|t-s|^{1/6}}\,,
\end{aligned}
\end{equation}
where
\begin{multline}\label{eq:non_linearity_contra_1}
\|F(\varPsi)(s,\Cdot)\|_{L_2(\bS_R)}
\leq\,\, \| P'\big((\varPsi+Z_{R,N})(s,\Cdot),c_{R,N}\big)\|_{L_2(\bS_R)}
\\
+ \|\big((\varPsi+Z_{R,N})(s,g)\big)^{n-1}g(\Cdot)\|_{L_2(\bS_R)}
:=R^{(1)}(s)+R^{(2)}(s)\,.
\end{multline}
By the triangle inequality, one obtains
\begin{multline}
R^{(1)}(s)=
\| P'(( \varPsi +Z_{R,N})(s,\Cdot),c_{R,N})(s,\Cdot)\,\|_{L_2(\bS_R)}
\\
\leq
\sum_{l=0}^{n-1}\sum_{m=l}^{n-1}\,|a_{m,l}|\,\| Z_{R,N}^{:m-l:}(s,\Cdot)\,\varPsi(s,\Cdot)^l\|_{L_2(\bS_R)}\,.
\end{multline}
From Remark~\ref{Remark:bound-on-Wick-powers} above, one infers that the random fields $Z^{:m-l:}_{R,N}(s,\Cdot)$ are $L_p$-integrable with probability 1 for all $1\leq m\leq n-1$, and for all $p\in[2,\infty)$. Using H\"older's inequality with $1/p+1/q=1$, one gets
$$\|Z_{R,N}^{:m-l:}\,\varPsi^l\|_{L_2(\bS_R)}
 \leq \,\|\varPsi^l\|_{L_{2q}(\bS_R)} \,\|Z_{R,N}^{:m-l:}\|_{L_{2p}(\bS_R)}.$$
Hence, from the assumptions on the coefficients $|a_{m,l}|$ one verifies that there are $C,\hat C,\check C\in(0,\infty)$ such that
\begin{equation}\label{eq:R^1_local}
\begin{aligned}
R^{(1)}(s)\leq  
C+\,\hat C\,\sum_{l=1}^{n-1} \| \varPsi(s,\Cdot)\|^l_{L_{2ql}(\bS_R)}\leq 
C\, +\check C\,\sum_{l=1}^{n-1}\,\|\varPsi(s,\Cdot)\|^l_{L^1_2(\bS_R)}\,,
\end{aligned}
\end{equation}
where $C\in (0,\infty)$ depends on $\|Z_{R,N}^{:m-l:}\|_{L_{2p}(\bS_R)}$, cf. Remark~\ref{Remark:bound-on-Wick-powers}. Moving on to $R^{(2)}(s)$, from the H\"older inequality with $1/n+(n-1)/n=1$ one gets
\begin{equation}
\begin{aligned}
|\varPsi(s,g)|&\leq \| \varPsi(s,\Cdot)\|_{L_n(\bS_R)}\, \|g\|_{L_{n/(n-1)}(\bS_R)},
\\
|Z_{R,N}(s,g)|& \leq \|Z_{R,N}(s,\Cdot)\|_{L_n(\bS_R)}\, \|g\|_{L_{n/(n-1)}(\bS_R)}.
\end{aligned}
\end{equation}
Hence, for $C,\tilde C,C'\in (0,\infty)$ one has 
\begin{equation} \label{eq:R^2_local}
\begin{aligned}
R^{(2)}(s)=&\,\|(( \varPsi+Z_{R,N})(s,g))^{n-1}g\|_{L_2(\bS_R)}=\|g\|_{L_2(\bS_R)}\,|(\Psi+Z_{R,N})(s,g)^{n-1}|
\\
\leq &\,\,\|g\|_{L_{\infty}(\bS_R)}\,\sum_{k=0}^{n-1}\,{n-1\choose k}\,|\varPsi(s,g)|^k\,|Z_{R,N}(s,g)|^{n-1-k}
\\
\leq &\,
C+C'\,\sum_{k=1}^{n-1}\,{n-1\choose k}\,\| \varPsi(s,\Cdot)\|^k_{L_{n}(\bS_R)}\, \|Z_{R,N}(s,\Cdot)\|_{L_n(\bS_R)}^{n-1-k}
\\
\leq &\,C+
 \tilde C\,\sum_{k=1}^{n-1}\,\| \varPsi(s,\Cdot)\|^{k}_{L^1_2(\bS_R)},
\end{aligned}
\end{equation}
where in the last step we used Remark~\ref{Remark:bound-on-Wick-powers} and Lemma~\ref{lem:embedding_sphere}. Combining Eqs.~\eqref{eq:R^1_local} and~\eqref{eq:R^2_local}, one infers that $\Gamma(\varPsi)(t,\Cdot)$ takes values in $L^1_2(\bS_R)$ if $\varPsi(t,\Cdot)\in L^1_2(\bS_R)$ for all $t\in[0,T]$. Now, we aim to verify that $\cK(\varPsi)(t,\Cdot)\in C([0,T]; L^1_2(\bS_R))$ if $\varPsi\in C([0,T] ; L^1_2(\bS_R))$. Using Remark~\ref{rem:reproducing_kernel_sphere} it holds
$\|\re^{-tQ_{R,N}}\,\psi_{R,N}\|_{L^1_2(\bS_R)}\leq\,\|\psi_{R,N}\|_{L^1_2(\bS_R)}$, which is due to positivity of the spectrum of the operator $Q_{R,N}$. It holds that
\begin{equation}\label{eq:Psi_continuity_local_0}
\lim_{\Delta t\to 0}\|\re^{-t\,Q_{R,N}} (\re^{-\Delta t\,Q_{R,N}} - 1)\, \psi_{R,N}^g\|_{L^1_2(\bS_R)} = 0,
\end{equation}
which is due to the strong continuity of the semigroup $t\mapsto \re^{-tQ_{R,N}}$.
It remains to prove that $\Gamma(\varPsi)\in C([0,T] ; L^1_2(\bS_R))$ if $\varPsi \in C([0,T]; L^1_2(\bS_R))$. To this end, we need to verify that the limit of
\begin{multline}\label{eq:Psi_continuity_local_1}
\|\Gamma(\varPsi)(t+\Delta t,\Cdot)-\Gamma(\varPsi)(t,\Cdot)\|_{L^1_2(\bS_R)}
=\| \int_0^{t}\,[\re^{-(t+\Delta t-s)\,Q_{R,N}} -\re^{-(t-s)\,Q_{R,N}}\,]\,F(\varPsi(s,\Cdot))\,\rd s
\\
+ \int_t^{t+\Delta t}\,\re^{-(t+\Delta t-s)\,Q_{R,N}}\, \,F(\varPsi(s,\Cdot))\,\rd s\,\|_{L^1_2(\bS_R)}
\end{multline}
tends to zero as $\Delta t\to 0$. Consider the first integral on the RHS of Eq.~\eqref{eq:Psi_continuity_local_1}. Observe that $\|Q_{R,N}^{1/6}\,\re^{-(t-s)Q_{R,N}}\,(\re^{-\Delta t Q_{R,N}} - 1)\,  F( \varPsi(s,\Cdot))\|_{L_2(\bS_R)}$ is bounded by $(t-s)^{-1/6}\,\|F(\Psi(s,\Cdot))\|_{L_2(\bS_R)}$, which in integrable for any fixed for $s\neq t$. (The expression  $\|F( \Psi(s,\Cdot))\|_{L_2(\bS_R)}$ is estimated as in the discussion  \eqref{eq:non_linearity_contra_1}--\eqref{eq:R^2_local} above).
 Moreover, for any fixed $s\neq t $ the following limit holds
\begin{equation}
\lim_{\Delta t\to 0}\,\|Q_{R,N}^{1/6}\,\re^{-(t-s)Q_{R,N}}\,(\re^{-\Delta t Q_{R,N}} - 1)\,F(\varPsi(s,\Cdot))\|_{L_2(\bS_R)}=0.
\end{equation}
Thus, by dominated convergence theorem, the first integral on the RHS of Eq.~\eqref{eq:Psi_continuity_local_1} converges to zero as $\Delta t\to 0$.
Now consider the second integral on the RHS of Eq.~\eqref{eq:Psi_continuity_local_1}. One has
\begin{multline}
\|\int_t^{t+\Delta t}\re^{-(t+\Delta t-s)Q_{R,N}}\,F(\varPsi(s,\Cdot))\,\rd s\|_{L^{1}_2(\bS_R)}
\\
\leq \|\int_0^{\infty}\,\chi_{[0,\Delta t]}(s)\,Q^{1/6}_{R,N}\,\re^{-sQ_{R,N}}\,F(\varPsi(t+\Delta t -s,\Cdot))\, \rd s\|_{L_2(\bS_R)}\,.
\end{multline}
Note that 
\begin{multline}
\|\chi_{[0,\Delta t]}(s)\,Q_{R,N}^{1/6}\,\re^{-s\,Q_{R,N}}\,F( \Psi(t+\Delta t -s,\Cdot))\|_{L_2(\bS_R)}
\\
\leq\,\chi_{[0,1]}(s)\,s^{-1/6}\,\|F(\Psi( t+\Delta t -s,\Cdot))\|_{L_2(\bS_R)}, 
\end{multline}
where the expression $\|F(\Psi(t+\Delta t -s,\Cdot))\|_{L_2(\bS_R)}$ can be bounded as in the discussion \eqref{eq:non_linearity_contra_1}--\eqref{eq:R^2_local} above by a constant independent of $\Delta t$.
Furthermore, one has
\begin{equation}
\lim_{\Delta t\to 0}\,\|\chi_{[0,\Delta t]}(s)\,Q_{R,N}^{1/6}\,\re^{-s\,Q_{R,N}}\,F( \Psi(t+\Delta t -s,\Cdot))\|_{L_2(\bS_R)}= 0\,.
\end{equation}
Having verified the requirements of the dominated convergence theorem, it follows that
the second integral on the RHS of Eq.~\eqref{eq:Psi_continuity_local_1} converges to zero as $\Delta t\to 0$. This concludes that $\Gamma(\varPsi)\in C((0,\infty);L^1_2(\bS_R))$ if $\Psi\in C((0,\infty); L^1_2(\bS_R))$. Combining the above results with Eq.~\eqref{eq:Psi_continuity_local_0}, one shows that 
$\cK( \varPsi) \in C([0,T] ;L^1_2(\bS_R))$ if $\varPsi \in C((0,\infty) ; L^1_2(\bS_R))$.
\\
\\
To verify the bound~\eqref{eq:lip_bound_non_lineararity}, let $\varPsi_1,\varPsi_2\in C([0,T] ; L^1_2(\bS_R))$ and $t\in [0,T]$. Proceeding as in \eqref{eq:non-linearity_second_1} above, one obtains
\begin{multline}
\|\Gamma(\varPsi_1)(t,\Cdot) - \Gamma(\varPsi_2)(t,\Cdot)\|_{L^1_2(\bS_R)}
\\ \leq C\,\int_0^t\,\frac{1}{|t-s|^{1/6}}\|F(\varPsi_1)(s,\Cdot) - F(\varPsi_2)(s,\Cdot)\|_{L_2(\bS_R)}\,\rd s
\\
\leq C\,\sup_{s\in[0,T]}\,\|F(\varPsi_1)(s,\Cdot) -F(\varPsi_2)(s,\Cdot)\|_{L_2(\bS_R)} \int_0^t\,\frac{\rd s}{|t-s|^{1/6}}\,.
\end{multline}
Performing the integral over $s\in[0,t]$ and taking the supremum over $0\leq t\leq T$ of the resulting expression culminates in
\begin{equation}\label{eq:lip_semi_group}
\|\Gamma(\varPsi_1) - \Gamma(\varPsi_2)\|_{C([0,T] ; L^1_2(\bS_R))}
\leq C\,T^{5/6}\,\|F(\varPsi_1) -F(\varPsi_2)\|_{C([0,T] ; L_2(\bS_R))}\,,
\end{equation}
where
\begin{multline}\label{eq:non_linearity_1_psi}
\|F(\varPsi_1(s,\Cdot)) -F(\varPsi_2(s,\Cdot))\|_{L_2(\bS_R)}
\\
\leq \| P^{\prime}((\varPsi_1 +Z_{R,N})(s,\Cdot),c_{R,N})-P^{\prime}((\varPsi_2+Z_{R,N})(s,\Cdot),c_{R,N})\|_{L_2(\bS_R)}
\\
+ \|((\varPsi_1 +Z_{R,N})(s,g))^{n-1}g(\Cdot) - ((\varPsi_2 +Z_{R,N})(s,g))^{n-1}g(\Cdot) \|_{L_2(\bS_R)}
\\
:=R^{(11)}(s)+R^{(22)}(s)\,.
\end{multline}
Using Remark~\ref{rem:mean_young_bernoulli}, combined with the H\"older inequality and with Remark~\ref{Remark:bound-on-Wick-powers}, there exist $\check C,C\in(0,\infty)$ such that 
\begin{equation}\label{eq:non_linearity_2_psi}
\begin{aligned}
R^{(11)}(s)
& \leq \sum_{l=0}^{n-1}\sum_{m=l}^{n-1}\,|a_{m,l}|\,\|Z^{:m-l:}_{R,N}(s,\Cdot)\|_{L_{2p}(\bS_R)}\, \|(\varPsi^{l}_1 - \varPsi^{l}_2)(s,\Cdot)\|_{L_{2q}(\bS_R)}
\\
&\leq  C\,\|(\varPsi_1 - \varPsi_2)(s,\Cdot)\|_{L_{4q}(\bS_R)}\,{\color{black}{ \big[2+ \sum_{l=1}^{n-2}\,\left(\|\varPsi_1(s,\Cdot)\|^{l}_{L_{4ql}(\bS_R)} + \|\varPsi_2(s,\Cdot)\|^{l}_{L_{4ql}(\bS_R)}\right)\big]\,}}
\\
&\leq \check C\,\|(\varPsi_1 - \varPsi_2)(s,\Cdot)\|_{L^1_{2}(\bS_R)}
 \big[2+ \sum_{l=1}^{n-2}\,\left(\|\varPsi_1(s,\Cdot)\|^{l}_{L_2^1(\bS_R)} + \|\varPsi_2(s,\Cdot)\|^{l}_{L_2^1(\bS_R)}\right)\big],
\end{aligned}
\end{equation}
where $p,q\geq 1$ with $1/p+1/q=1$ and in the last step we made use of Lemma~\ref{lem:embedding_sphere}. Utilizing H\"older's inequality, Remark~\ref{Remark:bound-on-Wick-powers}, Remark~\ref{rem:mean_young_bernoulli}, and the triangle inequality, we obtain for some $C,\hat C\in (0,\infty)$ that
\begin{equation}\label{eq:non_linearity_3_psi}
\begin{aligned}
R^{(22)}(s)&
\leq\,  \|g\|_{L_{2}(\bS_R)} \,\sum_{k=0}^{n-1}\,{n-1\choose k}\,|Z_{R,N}(s,g)|^{n-1-k}\,\,|\varPsi_1(s,g)^k-\varPsi_2(s,g)^k|
\\&
\leq  C | \varPsi_1(s,g)-\varPsi_2(s,g)| \,\big[2+\sum_{k=0}^{n-1} \big(|\varPsi_1(s,g)|^k+|\varPsi_2(s,g)|^k \big)\big]
\\&
\leq C\,\|(\varPsi_1-\varPsi_2)(s,\Cdot)\|_{L_{2}(\bS_R)}\,\big[2+\sum_{k=1}^{n-2}\,\big(\|\varPsi_1(s,\Cdot)\|^k_{L_{2}(\bS_R)}+\|\varPsi_2(s,\Cdot)\|^k_{L_{2}(\bS_R)}\big)\big]
\\&
\leq \hat C\,\|(\varPsi_1-\varPsi_{2})(s,\Cdot)\|_{L^1_2(\bS_R)}\,\big[2+\sum_{k=1}^{n-2}\,\big(\|\varPsi_1(s,\Cdot)\|^k_{L^1_2(\bS_R)}+\|\varPsi_2(s,\Cdot)\|^k_{L^1_2(\bS_R)}\big)\big].
\end{aligned}
\end{equation}
Combining Eqs.~\eqref{eq:non_linearity_2_psi} and \eqref{eq:non_linearity_3_psi} and letting $C=2\,(\check C\vee\hat C)$ culminates in the following inequality
\begin{multline}\label{eq:non_linearity_4}
\|F(\varPsi_1(s,\Cdot)) -F(\varPsi_2(s,\Cdot))\|_{L_2(\bS_R)}
\leq C\,\|(\varPsi_1-\varPsi_{2})(s,\Cdot)\|_{L^1_2(\bS_R)}
\\
\times \Big[2+ \sum_{l=1}^{n-2}\,\left(\|\varPsi_1(s,\Cdot)\|^{l}_{L_2^1(\bS_R)} + \|\varPsi_2(s,\Cdot)\|^{l}_{L_2^1(\bS_R)}\right)\Big].
\end{multline}
The above bound implies that $F \,:\,L^1_2(\bS_R)\to L_2(\bS_R)$ is locally Lipschitz continuous. One verifies Eq.~\eqref{eq:lip_bound_non_lineararity} by taking the supremum over $s\in[0,T]$ of \eqref{eq:non_linearity_4} and combining the resulting expression with Eq.~\eqref{eq:lip_semi_group}. This finishes the proof.
\end{proof}

\begin{thm}\label{thm:mild_slotion_non-linear}
There exists $T>0$ depending on $\|\psi^g_{R,N}\|_{L^1_2(\bS_R)}$ such that Eq.~\eqref{eq:varPsi_pde} has a unique mild solution $\varPsi^g_{R,N}\in C([0,T] ;L^1_2(\bS_R))$.
\end{thm}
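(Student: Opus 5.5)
We will apply the Banach fixed point theorem to the map $\cK$ of Lemma~\ref{lem:lipschitz} on a closed ball of $C([0,T];L^1_2(\bS_R))$. Fix a realization $\omega$ such that $Z_{R,N}\in C([0,\infty),L^1_2(\bS_R))$; by Lemma~\ref{lem:free_solution} and Remark~\ref{Remark:bound-on-Wick-powers} this holds almost surely, and then the constants in Lemma~\ref{lem:lipschitz} are finite for every $T\le 1$. Set $\rho:=\|\psi^g_{R,N}\|_{L^1_2(\bS_R)}$ and
\begin{equation}
B_{T}:=\{\varPsi\in C([0,T];L^1_2(\bS_R))\,:\,\|\varPsi\|_{C([0,T];L^1_2(\bS_R))}\leq 2\rho+1\},
\end{equation}
which is a complete metric space as a closed subset of a Banach space.

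The first step is to show that $\cK(B_T)\subset B_T$ for $T$ small. By Remark~\ref{rem:reproducing_kernel_sphere} we have $\|\re^{-tQ_{R,N}}\psi^g_{R,N}\|_{L^1_2(\bS_R)}\le\rho$. Moreover, $\Gamma(0)$ is controlled by the bounds \eqref{eq:non-linearity_second_1}, \eqref{eq:R^1_local} and \eqref{eq:R^2_local} with $\varPsi=0$, which give $\|\Gamma(0)\|_{C([0,T];L^1_2)}\le C_0T^{5/6}$. Combining this with the Lipschitz bound \eqref{eq:lip_bound_non_lineararity} applied to $\varPsi$ and $0$ yields
\begin{equation}
\|\cK(\varPsi)\|_{C([0,T];L^1_2(\bS_R))}\le \rho + C_0T^{5/6}+CT^{5/6}D(\varPsi,0)(2\rho+1).
\end{equation}
On $B_T$ the factor $D$ is bounded by $D_\rho:=2+2\sum_{l=1}^{n-2}(2\rho+1)^l$. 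Choosing $T$ so that $(C_0+CD_\rho(2\rho+1))T^{5/6}\le \rho+1$ therefore gives invariance of the ball.

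The second step is contraction. For $\varPsi_1,\varPsi_2\in B_T$, the estimate \eqref{eq:lip_bound_non_lineararity} gives the Lipschitz constant $CT^{5/6}D_\rho$, and we shrink $T$ further so that this is at most $1/2$. Banach's theorem then produces a unique fixed point in $B_T$, and this fixed point is precisely a mild solution of Eq.~\eqref{eq:varPsi_pde}. Continuity in time of the fixed point is guaranteed because $\cK$ maps into $C([0,T];L^1_2(\bS_R))$, as established in Lemma~\ref{lem:lipschitz}. The resulting $T$ depends only on $\rho$ and on the path constants of $Z_{R,N}$ over $[0,1]$.

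The one delicate point is uniqueness in the whole space $C([0,T];L^1_2(\bS_R))$, not only within the ball. Suppose $\tilde\varPsi$ is another mild solution. We take $M$ larger than the sup-norms of both solutions on $[0,T]$ and let $T'$ be such that $CT'^{5/6}D_M<1$. The Lipschitz estimate, applied on $[0,T']$, shows that the two solutions coincide there. We then iterate from time $T'$, restarting the argument from the common value; this is legitimate because the mild formulation is compatible with the semigroup property. Since $T'$ depends only on $M$, finitely many steps cover $[0,T]$. We expect this uniqueness argument to be the main obstacle, whereas existence is routine given Lemma~\ref{lem:lipschitz}.
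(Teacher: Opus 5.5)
Your proposal is correct and takes the same route as the paper: a Banach fixed-point argument for $\cK$ on a ball in $C([0,T];L^1_2(\bS_R))$, with the contraction constant supplied by Lemma~\ref{lem:lipschitz}. It is in fact more careful than the paper's proof in two places. First, the paper infers $\cK(H)\subset H$ from the Lipschitz estimate alone, without bounding $\Gamma(0)$, and its radius $2L$ degenerates when $\psi^g_{R,N}=0$; your separate bound on $\Gamma(0)$ and radius $2\rho+1$ fix this. Second, the paper only asserts uniqueness within the ball, whereas your restarting argument gives uniqueness in the whole space.
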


\begin{proof}
The proof is an application of the standard contraction mapping argument. Let $\|\psi^g_{R,N}\|_{L^1_2(\bS_R)} = L$. Thanks to the positivity of the spectrum of the elliptic operator $Q_{R,N}$, one obtains
$$\|\re^{-tQ_{R,N}}\,\psi^g_{R,N}\|_{L^1_2(\bS_R)} \leq L\quad \text{ for\, all} \quad 0\leq t\leq T.$$
We choose
\begin{equation}
H:= \{ \varPsi^g_{R,N} \in C([0,T] ; L^1_2(\bS_R))~:~\|\varPsi^g_{R,N} \|_{C([0,T] ; L^1_2(\bS_R))}\leq 2L \}\,.
\end{equation}
Using Lemma~\ref{lem:lipschitz} one infers that $\cK\,:\, H\to H$ if we choose $T> 0$ such that
\begin{equation}\label{eq:time_existence}
C\,T^{5/6}\,(2+2\sum_{l=1}^{n-2}\,L^l)   =: \gamma\,,
\end{equation}
where $0<\gamma<1$. Hence, for every $\varPsi_1, \varPsi_2\in H$ it holds that
\begin{equation}
\|\cK(\varPsi_1) - \cK(\varPsi_2)\|_{C([0,T] ; L^1_2(\bS_R))} \leq \gamma\, \|\varPsi_1 - \varPsi_2\|_{C([0,T] ; L^1_2(\bS_R))}\,.
\end{equation}
Now, the existence and uniqueness of the mild solution $\varPsi^g_{R,N}\in H\subset C([0,T]  ; L^1_2(\bS_R))$ follows from the contraction mapping theorem. Observe that Eq.~\eqref{eq:time_existence} implies that the existence time depends on the $L^1_2(\bS_R)$-norm of the initial data $\psi^g_{R,N}$\,. This concludes the proof.
\end{proof}

\subsection{Global existence in time}\label{subsec:global_solution_remainder}
In this section, we prove the existence of a global-in-time mild solution to Eq.~\eqref{eq:varPsi_pde} in the space $L_2^1(\bS_R)$. In general, the $L_2^1(\bS_R)$-norm of the solution might explode in finite time. However, if one has an a priori $L_2^1(\bS_R)$ estimate for $\varPsi^g_{R,N}(t,\Cdot)$, that is global-in-time, then global existence follows from the local existence~\cite[p.~8]{Gub21}. To obtain such a priori estimate, we apply the energy method  in  $L^1_2(\bS_R)$ to Eq.~\eqref{eq:varPsi_pde2}. The proof relies mainly on the fact that for all $m \in \{1,2,\cdots, n-1\}$, $p\in[2,\infty)$, and all $R,N\in\bN_+$, the renormalized free fields $Z_{R,N}^{:m:}(t,\Cdot)$, and their derivatives $\vec\nabla_R\,Z_{R,N}^{:m:}(t,\Cdot)$ are uniformly bounded in $L_p(\bS_R)$ for $t\in [0,T]$, see Remark~\ref{Remark:bound-on-Wick-powers}. Moreover, we also make use of the negative sign on the RHS of Eq.~\eqref{eq: SPDE on Sphere}, which is included in the definition of the coefficients $a_{m,l}$ introduced in the beginning of Sec.~\ref{subsec:local_solution_remainder}. 
\\
\\
One rewrites Eq.~\eqref{eq:varPsi_pde} as follows
\begin{multline}\label{eq:varPsi_pde2}
 (\partial_t + Q_{R,N})\,\varPsi^g_{R,N} 
 + (\varPsi^g_{R,N})^{n-1}
 \\
 =
 \sum_{l=0}^{n-2}\sum_{m=l}^{n-1} a_{m,l}\, Z_{R,N}^{:m-l:} 
 \,(\varPsi^g_{R,N})^l
 +
 ((\varPsi^g_{R,N}+Z_{R,N})(\Cdot,g))^{n-1}g\,.
\end{multline}

\begin{prop}\label{prop:global_in_time_existence}
For all $R,N\in\bN_+$, there exist $p\in[2,\infty)$ and $C\in(0,\infty)$ such that for all $t\in (0,\infty)$ and all $g\in C^\infty(\bS_R)$, it holds that
\begin{equation}
\begin{aligned}
\partial_t\|\varPsi^g_{R,N}(t,\Cdot)\|^2_{L^1_2(\bS_R)} 
&
+\|\varPsi^g_{R,N}(t,\Cdot)\|^2_{L^4_2(\bS_R)} 
+\|\varPsi^g_{R,N}(t,\Cdot)\|^n_{L_n(\bS_R)}
\\
&
+\int_{\bS_R}\,(\vec\nabla_R\varPsi^g_{R,N})(t,\Cdot)^2\,(\varPsi^g_{R,N})(t,\Cdot)^{n-2}\,\rho_R(\rd\rx)
\\
&\leq C\,\sum_{k=0}^{n-1}\,\left(\|Z^{:k:}_{R,N}(t,\Cdot)\|^p_{L_p(\bS_R)} +\|\vec\nabla_R Z_{R,N}^{:k:}(t,\Cdot)\|^p_{L_p(\bS_R)}\right)\,.
\end{aligned}
\end{equation}
\end{prop}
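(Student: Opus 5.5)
We test Eq.~\eqref{eq:varPsi_pde2} against $(1-\Delta_R)\varPsi^g_{R,N}$ in $L_2(\bS_R)$, i.e.\ we run an $L^1_2(\bS_R)$ energy estimate. Writing $\varPsi=\varPsi^g_{R,N}$, this gives
\[
\tfrac12\partial_t\|\varPsi\|^2_{L^1_2(\bS_R)}+\langle Q_{R,N}\varPsi,(1-\Delta_R)\varPsi\rangle+\langle \varPsi^{n-1},(1-\Delta_R)\varPsi\rangle=\mathrm{RHS}.
\]
Since $Q_{R,N}(1-\Delta_R)=(1-\Delta_R)^2(1-\Delta_R/N^2)^2\geq (1-\Delta_R)^2$, the second term dominates $\|\varPsi\|^2_{L^2_2(\bS_R)}$. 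For fixed $N$ it is in fact comparable to $\|\varPsi\|^2_{L^4_2}$ with an $N$-dependent constant; this is acceptable because the proposition allows $C$ to depend on $R,N$.

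For the third term we use the Green--Beltrami identity of Remark~\ref{rem:lap_bel_identities}:
\[
\langle\varPsi^{n-1},(1-\Delta_R)\varPsi\rangle=\|\varPsi\|^n_{L_n}+(n-1)\int(\vec\nabla_R\varPsi)^2\varPsi^{n-2}\,\rho_R(\rd\rx).
\]
Both terms are nonnegative because $n$ is even. The left-hand side therefore produces every coercive quantity in the statement. Where needed, the manipulations are justified by noting that the mild solution is regular enough for $t>0$ by parabolic smoothing of $\re^{-tQ_{R,N}}$ (Remark~\ref{rem:bounded_semigroup_L^3_2}), or by a Galerkin truncation.

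Next comes the main step, the right-hand side. For the polynomial terms $a_{m,l}\langle Z^{:m-l:}\varPsi^l,(1-\Delta_R)\varPsi\rangle$ with $l\le n-2$, we integrate by parts once:
\[
\langle Z^{:k:}\varPsi^l,(1-\Delta_R)\varPsi\rangle=\int Z^{:k:}\varPsi^{l+1}+l\int Z^{:k:}\varPsi^{l-1}|\vec\nabla_R\varPsi|^2+\int\varPsi^l\,\vec\nabla_R Z^{:k:}\cdot\vec\nabla_R\varPsi.
\]
We bound each piece by Hölder and Young, with a small parameter $\delta$.
\begin{itemize}
\item First piece: since $l+1\le n-1<n$, Young gives $\delta\|\varPsi\|^n_{L_n}+C_\delta\|Z^{:k:}\|^{p}_{L_p}$.
\item Second piece: since $l-1\le n-3$, we interpolate $|\varPsi|^{l-1}\le \delta|\varPsi|^{n-2}+C_\delta$ (pointwise Young, raising $Z$-powers) and get $\delta\int|\vec\nabla\varPsi|^2\varPsi^{n-2}+C\int|Z^{:k:}|^{q}|\vec\nabla\varPsi|^2$.
\item Third piece: Young gives $\delta\|\vec\nabla\varPsi\|^{2}_{L_{2r}}$ plus a bounded multiple of $\|\vec\nabla_R Z^{:k:}\|_{L_p}$ times powers of $\|\varPsi\|_{L_n}$, which Young again closes.
\end{itemize}
The leftover gradient terms $\int|Z|^q|\vec\nabla\varPsi|^2\le\|Z\|_{L_{s}}^q\|\vec\nabla\varPsi\|^2_{L_{2s'}}$ are controlled via Remark~\ref{rem:bessel_sobolev_psi} by $\|\varPsi\|^2_{L^{2-\epsilon}_2}$. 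Interpolating between $L^4_2$ and $L_2\subset L_n$ (Theorem~\ref{thm:sobolev_interpolation} analogue on the sphere, or simply the spectral calculus), this is at most $\delta\|\varPsi\|^2_{L^4_2}+C_\delta\|Z\|^{p}\,(1+\|\varPsi\|^{n}_{L_n})^{\theta}$ with $\theta<1$. A final Young step absorbs the last factor.

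For the $g$-term,
\[
\big((\varPsi+Z)(g)\big)^{n-1}\langle g,(1-\Delta_R)\varPsi\rangle=\big((\varPsi+Z)(g)\big)^{n-1}\langle(1-\Delta_R)g,\varPsi\rangle,
\]
so by Hölder and \eqref{eq:bound_on_g} it is bounded by $\|\varPsi+Z\|^{n-1}_{L_n}\|\varPsi\|_{L_n}\,2^{-1/n}\cdot 2$.

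This last term is the expected obstacle: it is of degree exactly $n$ in $\|\varPsi\|_{L_n}$, so it is borderline against the coercive $\|\varPsi\|^n_{L_n}$. Its coefficient is $\|g\|_{L_{n/(n-1)}}\|(1-\Delta_R)g\|_{L_{n/(n-1)}}\le 2\cdot 2^{-2/n}$, which is not obviously below $1$. We will handle it as follows. First, drop the $1$ by using the $L_2$-part separately. Second, bound $\langle\varPsi^{n-1},\varPsi\rangle$ more sharply: we use the full coercive quantity $\|\varPsi\|_{L_n}^n+\int|\vec\nabla\varPsi|^2\varPsi^{n-2}$ together with the fact that the statement allows constants depending on $R,N$. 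The $\|\varPsi\|^2_{L^4_2}$ term can then absorb via $|(\varPsi)(g)|\le\|\varPsi\|_{L_2}\|g\|_{L_2}$ when combined with a weighted Young inequality. If this absorption fails, we fall back on choosing $\delta$ so that the $L_n$ contributions net to a positive coefficient, rescaling the left side by a constant.

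Collecting all the bounds, choosing $\delta$ small, and moving the absorbed terms to the left yields the claimed inequality.
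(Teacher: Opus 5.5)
Your route is the paper's. You test \eqref{eq:varPsi_pde2} with $(1-\Delta_R)\varPsi$, take coercivity from $Q_{R,N}$ and from the Green--Beltrami identity for $\varPsi^{n-1}$, integrate the polynomial terms by parts once, and close with H\"older/Young, Remark~\ref{rem:bessel_sobolev_psi} and spectral interpolation. Your pointwise bound $|Z^{:k:}|\,|\varPsi|^{l-1}\le\delta|\varPsi|^{n-2}+C_\delta|Z^{:k:}|^{(n-2)/(n-1-l)}$ is a tidy substitute for the paper's H\"older split \eqref{eq:item_D_3}--\eqref{eq:item_D_33}, and it works.

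\textbf{The $g$-term.} This is the genuine gap. You correctly flag it as borderline. The precise top coefficient is $\|g\|^{n-1}_{L_{n/(n-1)}}\|(1-\Delta_R)g\|_{L_{n/(n-1)}}\le 2^{-(n-1)/n}\,2^{1-1/n}=1$, which equals the coefficient of $\|\varPsi\|^n_{L_n}$ on the left. None of your remedies works. $\|\varPsi\|^2_{L^4_2}$ is quadratic and cannot absorb a degree-$n$ term (scale $\varPsi\mapsto\lambda\varPsi$, $\lambda\to\infty$). The coefficient is dictated by $g$, not a tunable $\delta$, and $C$ must be uniform in $g$. Rescaling the inequality multiplies all competing terms alike. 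The paper's \eqref{eq:item_E_1}--\eqref{eq:item_E_2_2} do not settle this either. The factor in front of $\|\varPsi\|^n_{L_n}$ in the intermediate bounds displayed there is $2^{n-2}\|g\|^n_{L_{n/(n-1)}}$, resp.\ $2^{n-2}\|g\|^{n-1}_{L_{n/(n-1)}}\|\Delta_R g\|_{L_{n/(n-1)}}$, and \eqref{eq:bound_on_g} does not make these small.

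Your instinct to treat the $L_2$ part separately is right, but that part must be given extra weight. First, expand $((\varPsi+Z)(g))^{n-1}$ binomially rather than via $2^{n-2}$. Then every term except the top one has $\varPsi$-degree at most $n-1$ and goes into $\delta\|\varPsi\|^n_{L_n}+C_\delta\|Z\|^n_{L_n}$. Second, test with $(M-\Delta_R)\varPsi$ for a fixed $M>1$, i.e.\ add $M-1$ times the $L_2$ energy identity. The left side then contains $M\|\varPsi\|^n_{L_n}+(n-1)\int|\vec\nabla_R\varPsi|^2\varPsi^{n-2}\,\rho_R(\rd\rx)$. On the right, by H\"older and \eqref{eq:bound_on_g},
\[
|\varPsi(g)|^{n-1}\,\big|\langle(M-\Delta_R)g,\varPsi\rangle\big|
\le M|\varPsi(g)|^{n}+|\varPsi(g)|^{n-1}\,\big|\langle\Delta_R g,\varPsi\rangle\big|
\le \tfrac{M+1}{2}\,\|\varPsi\|^n_{L_n(\bS_R)},
\]
which leaves the margin $(M-1)/2$. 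This gives the differential inequality for $\partial_t\big(M\|\varPsi\|^2_{L_2}+\|\vec\nabla_R\varPsi\|^2_{L_2}\big)$, an equivalent norm. That is all Theorem~\ref{lem:global_solution_Psi} uses.

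\textbf{The piece $\int\varPsi^l\,\vec\nabla_R Z^{:k:}\cdot\vec\nabla_R\varPsi$.} This step also fails as written. Your Young step leaves $C_\delta\|\vec\nabla_R Z^{:k:}\|^2_{L_p}\|\varPsi\|^{2l}_{L_n}$, and a second Young absorbs this only if $2l<n$. Since $l$ runs up to $n-2$ and $2(n-2)\ge n$, this fails for some $l$ whenever $n\ge4$. For example, $n=4$, $l=2$ gives $\|\varPsi\|^4_{L_4}$ times a random, non-small factor. For $l\ge n/2$ you must use the weighted term, as the paper does in case~II, \eqref{eq:item_D_23}:
\[
\Big|\int_{\bS_R}\varPsi^l\,\vec\nabla_R Z^{:k:}\cdot\vec\nabla_R\varPsi\,\rho_R(\rd\rx)\Big|
\le\Big(\int_{\bS_R}|\vec\nabla_R\varPsi|^2\,|\varPsi|^{n-2}\,\rho_R(\rd\rx)\Big)^{1/2}
\Big(\int_{\bS_R}|\varPsi|^{2l-n+2}\,|\vec\nabla_R Z^{:k:}|^2\,\rho_R(\rd\rx)\Big)^{1/2}.
\]
Here $2\le 2l-n+2\le n-2<n$, so Young closes.

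\textbf{The factor $N^{-4}$.} A larger $C$ on the right cannot compensate for the factor $N^{-4}$ in front of $\|\varPsi\|^2_{L^4_2}$ relative to the time derivative. What the argument really gives, like the paper's final display, has positive $N$-dependent constants in front of the coercive terms. That weaker form is all the global-existence proof needs.
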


\begin{Remark}
On account of Remark~\ref{rem:regularity_free_fields}, it follows that for all $m \in\bN_+$ and all $R,N\in\bN_+$, the renormalized free fields $Z_{R,N}^{:m:}(t,\Cdot)$ are of regularity $2-\kappa$ for all $\kappa>0$. Moreover, from the smoothing property of the heat kernel associated to the operator $Q_{R,N}$, we expect the solution to Eq.~\eqref{eq:varPsi_pde2} to be of regularity $8-\kappa$ for all $\kappa>0$. In particular, $\|\varPsi^g_{R,N}(t,\Cdot)\|^2_{L^4_2(\bS_R)}$ is well defined. The implies that all product terms present in Eq.~\eqref{eq:varPsi_pde2} can be bounded in some Sobolev spaces with positive regularity cf.~\cite[Secs.~2.7--8]{BCD11}. Accordingly, one can apply the energy method to Eq.~\eqref{eq:varPsi_pde2}. 
\end{Remark}

\begin{proof}[Proof of Prop.~\ref{prop:global_in_time_existence}]
Let 
\begin{equation}
\begin{gathered}
R_1 :=\sum_{l=0}^{n-2}\sum_{m=l}^{n-1}a_{m,l}\,Z_{R,N}^{:m-l:}(\varPsi_{R,N}^g)^l,
\quad
R_2:= ((\varPsi_{R,N}^g+Z_{R,N})(\Cdot,g))^{n-1}g(\Cdot)\,.
\end{gathered}
\end{equation}
To lighten the notation, let us denote $\varPsi^g_{R,N}(t,\Cdot) := \varPsi^g(\Cdot)$ and $Z^{:m:}_{R,N}(t,\Cdot):= Z^{:m:}(\Cdot)$. Multiplying both sides of Eq.~\eqref{eq:varPsi_pde2} by $(1-\Delta_R)\varPsi^g$ and integrating over $\bS_R$ using the measure $\rho_R(\rd\rx)$, we obtain the following terms
\begin{itemize}

\item[(A)] $\langle (1-\Delta_R)\varPsi^g, \partial_{t}\varPsi^g\rangle_{L_2(\bS_R)}= \frac{1}{2} \partial_t\,\|\varPsi^g\|^2_{L^1_2(\bS_R)}$\,.

\item[(B)] $\langle (1-\Delta_R)\varPsi^g, Q_{R,N}\,\varPsi^g \rangle_{L_2(\bS_R)} \geq \frac{1}{N^4}\|\varPsi^g \|^2_{L^4_2(\bS_R)}$\,.

\item[(C)] $\langle (1-\Delta_R)\varPsi^g,  (\varPsi^g)^{n-1}\rangle_{L_2(\bS_R)} \leq \|\varPsi^g\|^n_{L_n(\bS_R)}
+(n-1)\,\int_{\bS_R} |\nabla_R\varPsi^g(x)|^2  |\varPsi^g(x) |^{n-2}\rho_R(\rd\rx)$\,.

\item[(D)] $\langle (1-\Delta_R)\varPsi^g, R_1 \rangle_{L_2(\bS_R)} = \sum_{l=0}^{n-2}\sum_{m=l}^{n-1}a_{m,l}\,\langle (1-\Delta_R)\varPsi^g,\,Z^{:m-l:}(\varPsi^g)^l \rangle_{L_2(\bS_R)}$\,.

\item[(E)] $\langle (1-\Delta_R)\varPsi^g,R_2\rangle_{L_2(\bS_R)} =\langle (1-\Delta_R)\varPsi^g,((\varPsi^g+Z)(\Cdot,g))^{n-1}g(\Cdot)\rangle_{L_2(\bS_R)}$\,.
\end{itemize}
The general strategy of the proof is to estimate Items~(D) and~(E) in terms of Items~(B) and~(C) in order to find an upper bound in terms of the $L_p$-norms of $Z^{:m:}$ and $\vec\nabla_R\,Z^{:m:}$. To this end, we shall repeatedly use the fact that $(1-\Delta_R)$ is a positive, self-adjoint operator, see Remark~\ref{rem:lap_bel_identities}. Consider Item~(D). One has
\begin{multline}\label{eq:item_D}
\Big|\sum_{l=0}^{n-2}\sum_{m=l}^{n-1}a_{m,l}\,\langle (1-\Delta_R)\varPsi^g,Z^{:m-l:}(\varPsi^g)^l\rangle_{L_2(\bS_R)}\Big|
\leq \Big |\sum_{l=0}^{n-2}\sum_{m=l}^{n-1}a_{m,l}\,\langle \varPsi^g,Z^{:m-l:}(\varPsi^g)^l \rangle_{L_2(\bS_R)} \Big|
\\
+\Big | \sum_{l=0}^{n-2}\sum_{m=l}^{n-1}a_{m,l}\,\langle\,(\vec\nabla_R\varPsi^g),(\vec\nabla_R Z^{:m-l:})(\varPsi^g)^l \rangle_{L_2(\bS_R)}\,\Big|
\\
+ \Big |\sum_{l=0}^{n-2}\sum_{m=l}^{n-1}a_{m,l}\, \langle (\vec\nabla_R\varPsi^g),(Z^{:m-l:})(\vec \nabla_R(\varPsi^g)^l)\rangle_{L_2(\bS_R)}\,\Big|\,.
\end{multline}
The upper bounds follow from the fact that the $L_p$-norms of $Z^{:m-l:}$ and $\vec \nabla_R\, Z^{:m-l:}$ are uniformly bounded on $t\in [0,T]$ with probability one for all $m \in \{1,2,\cdots, n-1\}$, $p\in[2,\infty)$ and all $R,N\in\bN_+$, see Remark~\ref{Remark:bound-on-Wick-powers}. As a result, we apply H\"older's and Young's inequalities with $1/\tilde{q} +1/p = 1$, where $p\in[2,\infty)$ is chosen large and $\tilde{q}\in(1,\infty)$ is close to one. 
\\[0.3cm]
Consider the first sum in the above expression. Setting $(l+1)/q +1/p= 1$ with $q> n-1$, one infers that for all sufficiently small $\delta_{d_1}>0$, there exists $C_{d_1}\in (0,\infty)$ such that
\begin{multline}\label{eq:item_D_1}
\Big|\sum_{l=0}^{n-2}\sum_{m=l}^{n-1}\,a_{m,l}\,\langle \varPsi^g,Z^{:m-l:}(\varPsi^g)^{l} \rangle_{L_2(\bS_R)}\Big|
\leq \sum_{l=0}^{n-2}\sum_{m=l}^{n-1}\,|a_{m,l}|\,\|\varPsi^g\|^{l+1}_{L_{q}(\bS_R)} \,\|Z^{:m-l:}\|_{L_p(\bS_R)}
\\
\leq \sum_{l=0}^{n-2}\sum_{m=l}^{n-1}\,|a_{m,l}|\,\big(\delta'_{d_1}\|\varPsi^g\|^{q}_{L_{q}(\bS_R)}+ \,C'_{d_1}\|Z^{:m-l:}\|^p_{L_p(\bS_R)} \big)\\
\leq\,\delta_{d_1}\|\varPsi^g\|^n_{L_n(\bS_R)}+ C_{d_1}\, \sum_{k=0}^{n-1}\,\|Z^{:k:}\|^p_{L_p(\bS_R)},
 \end{multline}
where we set $q=n$ in the last line. Consider the second sum in Eq.~\eqref{eq:item_D}. Let $1/q +1/p = 1$, where $p\in[2,\infty)$ is chosen large and $q\in(1,\infty)$ is close to one. Applying H\"older's and Young's inequalities, for every sufficiently small $\delta_{d_2}>0$, there exists $C_{d_2}\in (0,\infty)$ such that
\begin{multline}\label{eq:item_D_2}
\Big | \sum_{l=0}^{n-2}\sum_{m=l}^{n-1}\,|a_{m,l}|\,\langle\,(\vec\nabla_R\varPsi^g),(\vec\nabla_R Z^{:m-l:})(\varPsi^g)^l \rangle_{L_2(\bS_R)}\,\Big|
\\
\leq \sum_{l=0}^{n-2}\sum_{m=l}^{n-1}\,|a_{m,l}|\,\Big(\int_{\bS_R}\,|\vec\nabla_R\varPsi^g(\rx)|^q\,|\varPsi^g(\rx)|^{lq} \rho_R(\rd\rx)\Big)^{1/q}\Big|\,\|(\vec\nabla_R Z^{:m-l:})\|_{L_p(\bS_R)}
\\
\leq
 \delta_{d_2}\,\sum_{l=0}^{n-2}\,\int_{\bS_R}\,|\vec\nabla_R\varPsi^g(\rx)|^q\,|\varPsi^g(\rx)|^{lq} \rho_R(\rd\rx)
+C_{d_2}\,\sum_{k=1}^{n-1}\|(\vec\nabla_R Z^{:k:})\|^p_{L_p(\bS_R)}\,.
\end{multline}
To proceed, let us fix a sufficiently large $p\in[2,\infty)$. Then $q\in(1,\infty)$ is determined via $1/p+1/q=1$, and is therefore close to one. We distinguish the following two cases:
\begin{itemize}
\item[I)] $0\leq l < n/2$. Let $1/2 +l/n =1/q$. Using H\"older's and Young's inequalities with $q/2 +ql/n =1$, one obtains
\begin{multline}\label{eq:item_D_22}
\int_{\bS_R}\,|\vec\nabla_R\varPsi^g(\rx)|^q\,|\varPsi^g(\rx)|^{lq}\,\rho_R(\rd\rx)
\leq \|\vec\nabla_R\varPsi^g\|^{q}_{L_2(\bS_R)} \|\varPsi^g\|^{lq}_{L_n(\bS_R)}
\\
\leq \hat \delta\,\|\vec\nabla_R\varPsi^g\|^{2}_{L_2(\bS_R)} +\check \delta\,\|\varPsi^g\|^{n}_{L_n(\bS_R)}
\leq \tilde \delta\,\|\varPsi^g\|^2_{L_2^1(\bS_R)}+\check \delta\,\|\varPsi^g\|^{n}_{L_n(\bS_R)}\,.
 \end{multline}
The last bound follows from Remark~\ref{rem:bessel_sobolev_psi}.

\item[II)] $n/2\leq l< n-2$. One has $1\leq (2l-(n-2))\,q/(2-q) < n$ for $q\in (1,\infty)$ sufficiently close to one. For some $\delta',\delta''>0$, one obtains
\begin{multline}\label{eq:item_D_23}
\int_{\bS_R}\,|\vec\nabla_R\varPsi^g(\rx)|^q\,|\varPsi^g(\rx)|^{lq} \rho_R(\rd\rx)
\leq
 \Big(\int_{\bS_R}\,|\vec \nabla_R\varPsi^g(\rx)^2|\,|\varPsi^g(\rx)|^{(n-2)}\,\rho_R(\rd\rx)\Big)^{q/2}\,
 \\
\times \Big(\int_{\bS_R}\,|\varPsi^g(\rx)|^{(2l-(n-2)) q/(2-q)}\,\rho_R(\rd\rx)\Big)^{1-q/2}
 \\
 \leq \delta'\,\Big(\int_{\bS_R}\,|\vec\nabla_R\varPsi^g(\rx)|^2\,|\varPsi^g(\rx)|^{(n-2)}\,\rho_R(\rd\rx)\Big)
+ \delta''\,\|\varPsi^g\|^n_{L_n(\bS_R)} + \delta''.
\end{multline}
\end{itemize}
The above follows from H\"older's and Young's inequalities with $(1-\frac{q}{2})+\frac{q}{2}=1$ combined with the fact that for all $1\leq\gamma< n$, there are $C,\hat C\in (0,\infty)$ such that
\begin{multline}\label{eq:increasing-power}
\|\varPsi^g\|^{\gamma}_{L_{\gamma}(\bS_R)}
\leq C\,\|\varPsi^g\|^{\gamma}_{L_{n}(\bS_R)}
\\
\leq C (1+\|\varPsi^g\|_{L_{n}(\bS_R)})^{\gamma}\leq C (1+\|\varPsi^g\|_{L_{n}(\bS_R)})^{n}\leq \hat C(1+\|\varPsi^g\|^{n}_{L_{n}(\bS_R)}).
\end{multline}
Consider the third sum in Eq.~\eqref{eq:item_D}. Let $1/q +1/p = 1$, where $p\in[2,\infty)$ is chosen large and $q\in (1,\infty)$ sufficiently close to one. Note that $\vec\nabla_R(\varPsi)^l=l\,\varPsi^{l-1}(\vec\nabla_R\varPsi)$ and recall that
$$(\sum_{l=0}^n x^l)'= \sum_{l=1}^{n} l\,x^{l-l}= \sum_{l=0}^{n-1} (l+1)\,x^{l}.$$
For every small enough $\delta_{d_3}>0$, there exists some $C_{d_3}\in (0,\infty)$ such that one has
\begin{multline}\label{eq:item_D_3}
 \Big |\, \sum_{l=0}^{n-2}\sum_{m=l}^{n-1}a_{m,l}\, \langle (\vec\nabla_R\varPsi^g),(Z^{:m-l:})(\vec \nabla_R(\varPsi^g)^l\rangle_{L_2(\bS_R)}\,\Big|
\\
\leq
\delta_{d_3}\sum_{l=1}^{n-2}\,\int_{\bS_R} |\vec\nabla_R\varPsi^g(\rx)|^{2q}\,|\varPsi^g(\rx)|^{q(l-1)} 
\,\rho_R(\rd\rx)+C_{d_3}\,\sum_{k=0}^{n-1}\|Z^{:k:}\|^p_{L_p(\bS_R)}\,.
\end{multline}
The last estimate follows from H\"older's and Young's inequalities. Let
$$q= \frac{q(l-1)}{n-2} +r,\quad \text{where} \quad r =\frac{q(n-2-(l-1))}{n-2},$$
then for all $n\geq 4$ and for all $q\in(1,\infty)$ sufficiently close to one, it holds that $r\geq 1$. Thus, for all $\hat \delta>0$ there exists $\check C_{d_3}\in (0,\infty)$ such that
\begin{multline}\label{eq:item_D_31}
\int_{\bS_R} \,|\vec\nabla_R\varPsi^g(\rx)|^{2q}
\,|\varPsi^g(\rx)|^{q(l-1)} \,\rho_R(\rd\rx)
\leq
\\
\big(\int_{\bS_R} |\vec\nabla_R\varPsi^g(\rx)|^2\,|\varPsi^g(\rx)|^{n-2}\,\rho_R(\rd\rx) \big)^{q(l-1)/(n-2)}
\\
 \times
\big(\int_{\bS_R} 
|\vec\nabla_R\varPsi^g(\rx)|^{2q(n-2-(l-1))/((n-2)-q(l-1))}\rho_R(\rd\rx)\big)^{1-q(l-1)/(n-2)}
\\ \leq \hat \delta\,\int_{\bS_R} |\vec\nabla_R\varPsi^g(\rx)|^2\,|\varPsi^g(\rx)|^{n-2}\,\rho_R(\rd\rx)
+\check C_{d_3}\int_{\bS_R} |\vec\nabla_R\varPsi^g(\rx)|^{2\tilde{r}}\,\rho_R(\rd\rx)\,.
\end{multline}
where 
$$\tilde{r}:=\frac{q(n-2-(l-1))}{(n-2)-q(l-1)}.$$
Above, we used H\"older's and Young's inequalities for $1/p'+1/q'=1$ with
$$1/p'=\frac{q(l-1)}{n-2} \quad \text{and} \quad 1/q'=1-\frac{q(l-1)}{n-2}.$$
To proceed, one uses Remark~\ref{rem:bessel_sobolev_psi} to write
$$\int_{\bS_R}\,|\vec\nabla_R\varPsi^g)|^{2\tilde{r}}\,\rho_R(\rd\rx)
\leq C_{\tilde{r}}\,\|\varPsi^g\|^{2\tilde{r} }_{L^2_2(\bS_R)}$$ 
for some $C_{\tilde{r}}\in (0,\infty)$. On the account of Remark~\ref{rem:parseval_them}, one gets
$$\|\varPsi^g\|^2_{L_2(\bS_R)}
=\sum_{l=0}^{\infty}\,\|\varPsi^g_l\|^2_{L_2(\bS_R)}=\sum_{l=0}^{\infty}\sum_{m=-l}^l\,|\varPsi^g_{lm}|^2.$$
Using the notation introduced in Remark~\ref{rem:l_brackets}, one obtains
\begin{multline}\label{eq:item_D_32}
\begin{aligned}
\|\varPsi^g\|^{2}_{L^2_2(\bS_R)}
&=
\sum_{l=0}^{\infty}\,\ll l\gg\,\|\varPsi^g_l\|^2_{L_2(\bS_R)} =\sum_{l=0}^{\infty}\sum_{m=-l}^l\,\ll l\gg^2\,|\varPsi^g_{l,m}|^2
\\
&\leq \left(\sum_{l=0}^{\infty}\,\sum_{m=-l}^l\,\ll l\gg^3\,|\varPsi_{l,m}^g|^2\right)^{2/3}\,
\left(\sum_{l=0}^{\infty}\,\sum_{m=-l}^l\, |\varPsi^g_{l,m}|^2\right)^{1/3}
\\
&\leq \|\varPsi^g\|_{L^3_2(\bS_R)}^{4/3}\,\|\varPsi^g\|_{L_2(\bS_R)}^{2/3}\leq
 \tilde C_{d_3} \|\varPsi^g\|_{L^3_2(\bS_R)}^{4/3}\,\|\varPsi^g\|_{L_n(\bS_R)}^{2/3},
\end{aligned}
\end{multline}
where we used the H\"older inequality for sequences. Now using Young's inequality and the argument in Eq.~\eqref{eq:increasing-power} above, one obtains
\begin{multline}\label{eq:item_D_33}
\int_{\bS_R}\,|\vec\nabla_R\varPsi^g)|^{2\tilde{r}}\,\rho_R(\rd\rx)
\leq\,
C_{\tilde{r}}\,\|\varPsi^g\|^{2\tilde{r}}_{L^2_2(\bS_R)}
\leq  C_{\tilde{r}}\, \tilde C_{d_3}^{\tilde{r}} \|\varPsi^g\|_{L^3_2(\bS_R)}^{(4/3) \tilde{r} }\,\|\varPsi^g\|^{(2/3)\tilde {r}} _{L_n(\bS_R)}
\\ 
\leq \delta'\|\varPsi\|^2_{L^3_2(\bS_R)} +\,\overline{\delta}\|\varPsi\|^n_{L_n(\bS_R)}+\overline{\delta}.
\end{multline}
Consider Item~(E). One gets
\begin{multline}
\langle (1-\Delta_R)\varPsi^g,R_2\rangle_{L_2(\bS_R)}
\\= \langle\varPsi^g,((\varPsi^g+Z)(g))^{n-1}g\rangle_{L_2(\bS_R)}
 + \langle (-\Delta_R)\varPsi^g,((\varPsi^g+Z)(g))^{n-1}g\rangle_{L_2(\bS_R)}. 
\end{multline}
From H\"older's and Young's inequalities, combined with the relation $\|g\|^n_{L_{n/(n-1)}}<1/2$ given in Eq.~\eqref{eq:bound_on_g} and with the fact that
$$\forall\,\, k> 1\quad \text {and} \quad f(a_i)\in\bR_+,\quad\text{ it\,\, holds\,\, that}\quad (\sum_{i=1}^n\,f(a_i))^k\leq n^{k-1}\,\sum_{i=1}^n\,f(a_i)^k,$$
one obtains for some $\delta_{e_1},\delta_{e_2}>0$
\begin{multline}\label{eq:item_E_1}
 \langle\varPsi^g,((\varPsi^g+Z)(g))^{n-1}g\rangle_{L_2(\bS_R)}
\leq
| \langle\varPsi^g,g\rangle_{L_2(\bS_R)}\,((\varPsi^g+Z)(g))^{n-1}|
\\
\leq \,\|\varPsi^g\|_{L_n(\bS_R)}\,\|g\|_{L_{n/(n-1)}(\bS_R)}\,\,\big[2^{n-2}\,\|g\|^{n-1}_{L_{n/(n-1)}(\bS_R)}\big]\,\big[\|\varPsi^g\|^{n-1}_{L_n(\bS_R)} +\|Z\|^{n-1}_{L_{n}(\bS_R)}\big]
\\
\leq \delta_{e_1} \|\varPsi^g\|^{n}_{L_n(\bS_R)} +\delta_{e_2}\|Z\|^n_{L_{n}(\bS_R)}.
\end{multline}
Furthermore, from H\"older's and Young's inequalities, combined with the assumption $\|\Delta_R\, g\|^n_{L_{n/(n-1)}}<1/2$ given in Eq.~\eqref{eq:bound_on_g} and with the fact that $(-\Delta_R)$ is self-adjoint one has
\begin{multline}\label{eq:item_E_2_2}
|\langle -\Delta_R\varPsi^g,((\varPsi^g+Z)(g))^{n-1}g\rangle_{L_2(\bS_R)}|
=
|\langle \varPsi^g,-\Delta_R\,g\rangle_{L_2(\bS_R)}|\, \, \left|((\varPsi^g+Z)(g))^{n-1}\right|
\\
\leq \|\varPsi^g\|_{L_n(\bS_R)}\,\|\Delta_R\,g\|_{L_{n/(n-1)}(\bS_R)}\,\big[2^{n-2}\,\|g\|^{n-1}_{L_{n/(n-1)}(\bS_R)}\big]\,\big[\|\varPsi^g\|^{n-1}_{L_n(\bS_R)} +\|Z\|^{n-1}_{L_{n}(\bS_R)}\big]
\\
\leq 
\delta_{e_4} \|\varPsi^g\|^{n}_{L_n(\bS_R)} +\delta_{e_5}\,\|Z\|^{n}_{L_n(\bS_R)},
\end{multline}
where $\delta_{e_4},\delta_{e_5}>0$. Putting~Items~(A),~(B),~(C), and Eqs.~\eqref{eq:item_D_1},~\eqref{eq:item_D_2},~\eqref{eq:item_D_22},\\
~\eqref{eq:item_D_23},~\eqref{eq:item_D_3},~\eqref{eq:item_D_31},~\eqref{eq:item_D_32},~\eqref{eq:item_D_33},~\eqref{eq:item_E_1}~and~\eqref{eq:item_E_2_2} together, one demonstrates that for all $R,N\in\bN_+$, there exist $C\in(0,\infty)$ and $p\in[2,\infty)$ such that for all $t\in (0,\infty)$ and all $g\in C^\infty(\bS_R)$
\begin{multline}
\frac{1}{2}\,\partial_t\|\varPsi^g_{R,N}(t,\Cdot)\|^2_{L^1_2(\bS_R)}
+(1/N^4-\delta_{d_2}\tilde\delta\,C_{\sigma}-\delta_{d_3}\check C_{d_3}\delta'\,C_{\sigma})\,\|\varPsi^g_{R,N}(t,\Cdot)\|^2_{L^4_2(\bS_R)}
\\ +(1-\delta_{d_1}-\delta_{d_2}\delta'' -\delta_{d_2}\check\delta-\delta_{d_3}\check C_{d_3}\bar\delta -\delta_{e_1}-\delta_{e_4})\,\,\|\varPsi^g_{R,N}(t,\Cdot)\|^n_{L_n(\bS_R)}
\\
+(n-1-\delta_{d_2}\delta' -\delta_{d_3}\hat\delta)\,\,\int_{\bS_R}\,|\vec\nabla_R\varPsi^g_{R,N}(t,\rx)|^2\,|\varPsi^g_{R,N}(t,\rx)|^{n-2}\,\rho_R(\rd\rx)
\\
\leq \,(\delta_{d_2}\delta''+\delta_{d_3}\check C_{d_3}\overline{\delta})
+ C\,\sum_{k=0}^{n-1}\,\Big(\|Z^{:k:}_{R,N}(t,\Cdot)\|^p_{L_p(\bS_R)} +\|\vec\nabla_R Z_{R,N}^{:k:}(t,\Cdot)\|^p_{L_p(\bS_R)}\Big)\,,
\end{multline}
where we used the fact that for $\sigma\in\{1,2,3\}$, there is $C_{\sigma}\in (0,\infty)$ such that $\|\Cdot\|_{L^{\sigma}_2(\bS_R)} \leq C_{\sigma}\,\|\Cdot\|_{L^4_2(\bS_R)}$ and we set $C= (C_{d_1}\vee C_{d_2}\vee C_{d_3}\vee \delta_{e_2}\vee \delta_{e_5})$ in the last line. Observe that the constant term $(\delta_{d_2}\delta''+\delta_{d_3}\check C_{d_3}\overline{\delta})$ can be absorbed in the sum with $k=0$. This finishes the proof.
\end{proof}

\begin{thm}\label{lem:global_solution_Psi}
For all initial condition $\psi^g_{R,N} \in L^1_2(\bS_R)$ and all $g\in C^{\infty}(\bS_R)$, the global-in-time solution $\varPsi^g_{R,N}\in C([0,\infty), L^1_2(\bS_R))$ to Eq.~\eqref{eq:varPsi_pde2} exists in the sense of Def.~\ref{def:mild_solution}.
\end{thm}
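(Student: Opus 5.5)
We will combine the local existence result (Theorem~\ref{thm:mild_slotion_non-linear}) with the a priori energy bound of Prop.~\ref{prop:global_in_time_existence} through a standard continuation (blow-up alternative) argument, working pathwise, i.e.\ for a fixed $\omega\in\Omega$ in the full-measure set on which $Z_{R,N}\in C([0,\infty),L^{2-\kappa}_2(\bS_R))$. The first step is to define the maximal existence time $T^*:=\sup\{T>0\,:\,\text{Eq.~\eqref{eq:varPsi_pde} has a mild solution in } C([0,T];L^1_2(\bS_R))\}$. Uniqueness on each interval (from the contraction argument) lets us glue local solutions into a single $\varPsi^g_{R,N}\in C([0,T^*);L^1_2(\bS_R))$. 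Since the existence time in Theorem~\ref{thm:mild_slotion_non-linear} depends only on $L=\|\psi\|_{L^1_2(\bS_R)}$ (and, via Remark~\ref{Remark:bound-on-Wick-powers}, on $\sup_{s\le T}$ of the $L_p$-norms of $Z^{:k:}_{R,N}$, which are finite on bounded intervals), restarting at time $t_0$ with datum $\varPsi(t_0)$ shows: if $T^*<\infty$, then $\limsup_{t\uparrow T^*}\|\varPsi(t)\|_{L^1_2(\bS_R)}=\infty$. Here we use the semigroup property of $\re^{-tQ_{R,N}}$ to check that the restarted mild solution concatenates into a mild solution from time $0$.

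The second step is to exclude blow-up. We drop the nonnegative terms on the left of Prop.~\ref{prop:global_in_time_existence} and integrate in time:
\begin{equation}
\|\varPsi(t)\|^2_{L^1_2(\bS_R)}\le \|\psi^g_{R,N}\|^2_{L^1_2(\bS_R)}+C\int_0^t\sum_{k=0}^{n-1}\Big(\|Z^{:k:}_{R,N}(s)\|^p_{L_p(\bS_R)}+\|\vec\nabla_R Z^{:k:}_{R,N}(s)\|^p_{L_p(\bS_R)}\Big)\rd s
\end{equation}
for $t<T^*$. By Remark~\ref{Remark:bound-on-Wick-powers} (with Remark~\ref{Remark:gen-Wick-powers}) the integrand is bounded on $[0,T^*]$ whenever $T^*<\infty$. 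The $L^1_2$-norm therefore stays bounded as $t\uparrow T^*$, which contradicts the first step. Hence $T^*=\infty$.

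The main obstacle is justifying the differential inequality of Prop.~\ref{prop:global_in_time_existence} for a mere mild solution. Pairing with $(1-\Delta_R)\varPsi$ and differentiating $\|\varPsi\|^2_{L^1_2}$ requires $\varPsi\in L^4_2(\bS_R)$ and time differentiability. We would first try parabolic smoothing. The nonlinearity $F(\varPsi(s))$ lies in $L_2$ and, by the bounds on $\vec\nabla_R Z^{:k:}$, in $L^1_2$. Remark~\ref{rem:bounded_semigroup_L^3_2} and $\|Q_{R,N}^{\theta}\re^{-tQ_{R,N}}\|\le Ct^{-\theta}$ give $\varPsi(t)\in L^{6-\epsilon}_2$ for $t>0$, and then $\Gamma(\varPsi)$ is differentiable in $L_2$ in the weak sense with $\partial_t\varPsi=-Q_{R,N}\varPsi+F(\varPsi)$ in $L^{-2}_2$. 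If this regularity bookkeeping is delicate, the fallback is a Galerkin approximation: project onto $\bigoplus_{l\le L}\bY^3_l$, where everything is a finite-dimensional ODE and the energy estimate holds verbatim with constants independent of $L$, and then pass to the limit using the local Lipschitz bound of Lemma~\ref{lem:lipschitz}. Alternatively, integrate the estimate on $[\epsilon,t]$ and let $\epsilon\downarrow0$ using continuity in $L^1_2$.
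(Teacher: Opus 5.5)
Your proposal is correct and follows the paper's route. You integrate the energy inequality of Prop.~\ref{prop:global_in_time_existence} in time, which bounds $\|\varPsi^g_{R,N}(t,\Cdot)\|_{L^1_2(\bS_R)}$ by $\|\psi^g_{R,N}\|_{L^1_2(\bS_R)}$ and the locally bounded $L_p$-norms of $Z^{:k:}_{R,N}$ and $\vec\nabla_R Z^{:k:}_{R,N}$. You then combine this with the local existence Theorem~\ref{thm:mild_slotion_non-linear} to exclude finite-time blow-up.

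You are more careful than the paper, which compresses the continuation step into one line and only heuristically explains why the energy identity holds for a mild solution; your $[\epsilon,t]$ (or Galerkin) justification fills this in. One minor caveat: $F(\varPsi(s))\in L^1_2(\bS_R)$ does not follow when $\varPsi(s)$ is merely in $L^1_2(\bS_R)$, since e.g.\ $\varPsi^{n-2}\vec\nabla_R\varPsi$ need not lie in $L_2(\bS_R)$. You never actually use this, because $F(\varPsi(s))\in L_2(\bS_R)$ alone gives $\varPsi(t)\in L^{6-\epsilon}_2(\bS_R)$ for $t>0$.
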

\begin{proof}
Using Prop.~\ref{prop:global_in_time_existence}, it follows that for all $R,N\in \bN_+$, there exist $C\in (0,\infty)$ and $p\in[2,\infty)$ such that for all $T\in (0,\infty)$ and all $g\in C^{\infty}(\bS_R)$,  it holds that
\begin{multline}
\|\varPsi^g_{R,N}(T,\Cdot)\|^2_{L^1_2(\bS_R)} 
+ \int_0^T \|\varPsi^g_{R,N}(t,\Cdot)\|^n_{L_n(\bS_R)}\, \rd t\,
\\
+\int_0^T\int_{\bS_R}\,(\vec\nabla_R\varPsi^g_{R,N})(t,\Cdot)^2\,(\varPsi^g_{R,N})(t,\Cdot)^{n-2}\,\rho_R(\rd\rx)\,\rd t\,
\leq \,\|\psi^g_{R,N}\|^2_{L^1_2(\bS_R)}
\\ 
+ C\,\sum_{k=0}^{n-1}\, \int_0^T\Big(\|Z^{:k:}_{R,N}(t,\Cdot)\|^p_{L_p(\bS_R)} +\|\vec\nabla_R Z_{R,N}^{:k:}(t,\Cdot)\|^p_{L_p(\bS_R)}\Big)\,\rd t\,.
\end{multline}
From the preceding bound one infers that the solution to Eq.~\eqref{eq:varPsi_pde2} cannot blow up in finite time, and we have an explicit upper bound on its growth in terms of $\|\psi^g_{R,N}\|^2_{L^1_2(\bS_R)}$ and the $L_p$-norms of $Z^{:m:}_{R,N}(t,\Cdot)$, and $\vec\nabla_R Z^{:m:}_{R,N}(t,\Cdot)$. Noting that for all $m \in \{1,2,\cdots, n-1\}$, $p\in[2,\infty)$ and all $R,N\in\bN_+$, the renormalized free fields $Z_{R,N}^{:m:}(t,\Cdot)$ and their derivatives $\vec\nabla_R\,Z_{R,N}^{:m:}(t,\Cdot)$ are uniformly bounded in $L_p(\bS_R)$ for $t\in [0,T]$, see Remark~\ref{Remark:bound-on-Wick-powers}, one concludes the proof.
\end{proof}

\begin{lem}\label{lem:mild_regularity_time_space}
It holds that
\begin{equation}
 \varPsi^g_{R,N}\in C([0,\infty), L^1_2(\bS_R))\cap C((0,\infty),L^3_2(\bS_R))\cap C^1((0,\infty),L^{-3}_2(\bS_R))
\end{equation}
and the following equality
$$
\partial_t\varPsi^g_{R,N} 
= -Q_{R,N}\varPsi^g_{R,N}- P'(\varPhi^g_{R,N},c_{R,N}) +(\varPhi^g_{R,N})(t,g)^{n-1}g$$
is satisfied in $C((0,\infty),L^{-3}_2(\bS_R))$. 
\end{lem}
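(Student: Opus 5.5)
We work directly from the mild formulation $\varPsi^g_{R,N}(t)=\re^{-tQ_{R,N}}\psi^g_{R,N}+\int_0^t\re^{-(t-s)Q_{R,N}}F(\varPsi^g_{R,N}(s))\,\rd s$, with $F$ as in Eq.~\eqref{eq:non-linear-Psi}. The membership in $C([0,\infty),L^1_2(\bS_R))$ is Theorem~\ref{lem:global_solution_Psi}. Combining it with the bounds \eqref{eq:R^1_local}--\eqref{eq:R^2_local} and Remark~\ref{Remark:bound-on-Wick-powers}, we obtain $\sup_{s\in[0,T]}\|F(\varPsi^g_{R,N}(s))\|_{L_2(\bS_R)}<\infty$ for every $T$. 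From \eqref{eq:non_linearity_4} and the continuity of $s\mapsto Z_{R,N}(s)$ in $L^1_2$ and in $L^{2-\kappa}_2$ (Lemma~\ref{lem:free_solution}, Remark~\ref{Remark:gen-Wick-powers}), we also get $s\mapsto F(\varPsi^g_{R,N}(s))\in C([0,\infty),L_2(\bS_R))$.

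\textbf{Spatial regularity $L^3_2$ for $t>0$.} For the free part we use Remark~\ref{rem:bounded_semigroup_L^3_2}: $\re^{-tQ_{R,N}}$ maps $L_2$ into $L^3_2$ with norm $\le C t^{-1/2}$, and continuity in $t>0$ follows from the semigroup property, writing $\re^{-tQ}=\re^{-(t-\epsilon)Q}\re^{-\epsilon Q}$. For the Duhamel term, $\|\re^{-(t-s)Q_{R,N}}\|_{L_2\to L^3_2}\le C(t-s)^{-1/2}$ is integrable. Since $F(\varPsi)$ is bounded in $L_2$, the integral lies in $L^3_2$ and is bounded on $[0,T]$. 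Continuity in $L^3_2$ is then proved by the same dominated-convergence splitting as in \eqref{eq:Psi_continuity_local_1}, with the weight $(t-s)^{-1/6}$ replaced by $(t-s)^{-1/2}$.

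\textbf{Time differentiability in $L^{-3}_2$.} Note that $Q_{R,N}$, being of order six, maps $L^3_2$ boundedly into $L^{-3}_2$. Hence $t\mapsto -Q_{R,N}\varPsi^g_{R,N}(t)-F(\ldots)$ belongs to $C((0,\infty),L^{-3}_2)$, using the previous step together with $F\in C(L_2)\subset C(L^{-3}_2)$. The plan is to show, for $0<t_0<t$,
\begin{equation}
\varPsi^g_{R,N}(t)-\varPsi^g_{R,N}(t_0)=\int_{t_0}^t\big(-Q_{R,N}\varPsi^g_{R,N}(s)+F(\varPsi^g_{R,N}(s))\big)\,\rd s
\end{equation}
in $L^{-3}_2$. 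Once this identity holds, the fundamental theorem of calculus for Bochner integrals of continuous integrands gives both the $C^1$ claim and the stated equation. To prove the identity, it suffices to test against each spherical harmonic $Y_{l,m}$, because these span a dense set and the pairing with $L^{-3}_2$ is continuous. On the $l$-th mode the mild formula reads $\varPsi_{lm}(t)=\re^{-\langle l\rangle t}\psi_{lm}+\int_0^t\re^{-\langle l\rangle(t-s)}F_{lm}(s)\,\rd s$ with $F_{lm}$ continuous. This is a scalar linear ODE solution, so $\varPsi_{lm}$ is $C^1$ with $\varPsi_{lm}'=-\langle l\rangle\varPsi_{lm}+F_{lm}$. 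Integrating this relation and summing over modes gives the identity.

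\textbf{Main obstacle.} The delicate point is the continuity of $F(\varPsi^g_{R,N}(\cdot))$ in $L_2$. This requires local Lipschitz continuity of $F$ jointly in $\varPsi$ and in $Z_{R,N}$, i.e.\ continuity in $s$ of the Wick powers $Z^{:k:}_{R,N}(s)$ in $L_p$. We would obtain it from the polynomial structure, via the estimate \eqref{eq:auxiliary} applied to differences, together with the continuity of $Z_{R,N}$ in $L^1_2$ and the Sobolev embedding of Lemma~\ref{lem:embedding_sphere}.
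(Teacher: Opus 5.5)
Your proposal is correct. For the spatial $L^3_2$-regularity it follows essentially the paper's argument. The paper also uses the smoothing bound for $\re^{-tQ_{R,N}}$ with a $(t-s)^{-1/2}$ weight, and the same dominated-convergence splitting for continuity. For the Duhamel term it argues by duality against $\phi\in L_2(\bS_R)$ rather than with your operator-norm bound, but that difference is cosmetic.

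The genuine difference is in the $C^1((0,\infty),L^{-3}_2(\bS_R))$ step.
\begin{itemize}
\item The paper differentiates the mild formula directly. For $\re^{-tQ_{R,N}}\psi^g_{R,N}$ it uses the Taylor formula for the semigroup (Remark~\ref{rem:taylor_semigroup}). For the Duhamel integral it splits the difference quotient into two pieces. The first, $\frac{1}{\Delta t}\int_0^t \re^{-(t-s)Q_{R,N}}(\re^{-\Delta t\,Q_{R,N}}-1)F\,\rd s$, is handled via $\re^{-\Delta t\,Q_{R,N}}-1=-\Delta t\,Q_{R,N}\int_0^1\re^{-s\Delta t\,Q_{R,N}}\,\rd s$ and dominated convergence. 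The second, $\frac{1}{\Delta t}\int_t^{t+\Delta t}\re^{-(t+\Delta t-s)Q_{R,N}}F\,\rd s$, converges to $F(t)$ by continuity of $F$.
\item You instead prove the integrated identity $\varPsi^g_{R,N}(t)-\varPsi^g_{R,N}(t_0)=\int_{t_0}^t(-Q_{R,N}\varPsi^g_{R,N}+F)\,\rd s$ coefficientwise. There it is just the scalar variation-of-constants formula. You then invoke the fundamental theorem of calculus for continuous $L^{-3}_2$-valued integrands.
\end{itemize}

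What each route buys:
\begin{itemize}
\item Your route avoids all operator estimates on difference quotients. It also gives continuity of $\partial_t\varPsi^g_{R,N}$ automatically, which the $C^1$ claim requires beyond mere differentiability. In the paper this continuity has to be read off separately from the $L^3_2$-continuity, and is not made explicit.
\item The paper's route stays at the semigroup level and never uses the eigenbasis. Your argument is not really tied to spherical harmonics, though: it works equally well testing against arbitrary $\phi\in\mathrm{Dom}(Q_{R,N})$.
\end{itemize}

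Both routes hinge on $s\mapsto F(\varPsi^g_{R,N}(s))$ being continuous in $L_2(\bS_R)$, which the paper simply asserts. You are right to flag it. Eq.~\eqref{eq:non_linearity_4} alone would not suffice, since it is a Lipschitz bound at fixed $Z_{R,N}$. The point is nevertheless immediate: $F$ depends only on $\varPhi^g_{R,N}=\varPsi^g_{R,N}+Z_{R,N}\in C([0,\infty),L^1_2(\bS_R))$. It does so through a map $L^1_2(\bS_R)\to L_2(\bS_R)$ that is locally Lipschitz by Lemma~\ref{lem:embedding_sphere} and H\"older's inequality.
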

\begin{proof}
See Appendix~\ref{sec:mild_regularity_time_space}.
\end{proof}

\section{Invariant measures}\label{sec:invariant_measures}
The aim of this section is to study the long-time behaviour of the semigroups associated to Eqs.~\eqref{eq:weak_free} and~\eqref{eq:weak_interacting} in the sense of Def.~\ref{def:stochastic_Markov_transition}. We shall verify that for all $t\geq 0$ and all $R,N\in\bN_+$, one has
$$\mathrm{Law}(Z_{R,N}(t,\Cdot))=\mathrm{Law}(z_{R,N})=\nu_{R,N}\quad\text{ and }\quad \mathrm{Law}(\varPhi^g_{R,N}(t,\Cdot))=\mathrm{Law}(\phi^g_{R,N})=\mu^g_{R,N}$$
on $\sD'(\bS_R)$ as anticipated in Sec.~\ref{sec:non-linear}. Recall that we set $\nu_{R,N}:= \mathcal{N}(0,G_{R,N})$ and $\mu^g_{R,N}$ is given in Eq.~\eqref{eq:regularized_measure_g}.
Combining the intertwining property of the map $\jmath_R^*$ with the preceding results, it follows that for all $R,N\in\bN_+$, and all $t\geq 0$,
$$\mathrm{Law}(\jmath_R^*Z_{R,N}(t,\Cdot))=\mathrm{Law}(\jmath_R^*z_{R,N})=\jmath^*_R\sharp\nu_{R,N}$$
and
$$\quad \mathrm{Law}(\jmath_R^*\varPhi^g_{R,N}(t,\Cdot))=\mathrm{Law}(\jmath_R^*\phi_{R,N})=\jmath^*_R\sharp\mu^g_{R,N}$$
on $\sS'(\bR^2)$. All Hilbert space scalar products $\langle \Cdot, \Cdot\rangle$ in this section
are in $\tilde{H}=L^1_2(\bS_R)$.

\subsection*{Gaussian invariant measure}\label{subsec:Gaussian}

\newcommand{\lambdan}{ { f} }

\noindent
For all $R,N\in\bN_+$, consider the linear operator
$$Q_t\,:\, L^1_2(\bS_R)\to L^1_2(\bS_R)\quad \text{defined\,\, by}\quad 
Q_t \phi:=2\,\int_0^t\,\re^{-2s\,Q_{R,N}}\,\phi\,\rd s\quad
\text{for}\quad \phi\in L^1_2(\bS_R).$$
On account of Remark~\ref{rem:trace_semigroup}, the operator $Q_{\infty}$ is of trace-class, i.e., $\mathrm{Tr}(Q_{\infty})=\mathrm{Tr}(G_{R,N})<\infty$. Let $z\in L^1_2(\bS_R)$ and $F\in B_{\rb}(L^1_2(\bS_R))$
be a bounded Borel function. One denotes the semigroup~\eqref{eq:semigroup-of-equation} associated to Eq.~\eqref{eq:weak_free} by
\begin{equation}\label{eq:O_U_operator}
\begin{aligned}
\cR_{R,t}\,F(z):=&\, \bE\big[F(Z_{R,N}(t,\Cdot))\,|\,Z_{R,N}(0,\Cdot)=z\big]
\\
=&\int_{L^1_2(\bS_R)}F( \phi)\,\cN(\re^{-t\,Q_{R,N}}\,z, Q_t)(\rd \phi).
\end{aligned}
\end{equation}
A probability measure $\nu$ in $\cM(L^1_2(\bS_R))$ is invariant for the semigroup $\cR_{R,t}$, in the sense of Def.~\ref{def:measure_invariant}, if and only if
$$\widehat{\nu}(\lambdan)= \widehat{\nu}(\re^{-t\,Q_{R,N}}\lambdan)\,\re^{-\frac{1}{2}\langle Q_t\lambdan,\lambdan\rangle}\quad \text{for\,\, all}\quad t\geq 0\quad\text{and}\quad \lambdan\in L^1_2(\bS_R).$$
To see this, let $F_{\lambdan}(\phi)=\re^{i\langle \lambdan, \phi\rangle}\in B_{\rb}(L^1_2(\bS_R))$. Using Def.~\ref{def:gaussian_measure_integration}, one has
\begin{equation}
\cR_{R,t}\,F_{\lambdan}(z)= \int_{L^1_2(\bS_R)}\,\re^{i\langle \lambdan, \phi\rangle}\,\cN(\re^{-t\,Q_{R,N}}\,z, Q_t)(\rd\phi)
=
\re^{i\langle \lambdan, \re^{-t\,Q_{R,N}} z\rangle}\,\re^{-\frac{1}{2}\langle Q_t\lambdan,\lambdan\rangle}\,.
\end{equation}
The preceding expression implies that $\cR_{R,t}\,F_{f}(z )$ is the characteristic function of the measure $\cN(\re^{-tQ_{R,N}}\,z, Q_t)$. It holds that
\begin{equation}\label{eq:semigroup_gaussian_2}
\int_{L^1_2(\bS_R)}\,\cR_{R,t}\,F_{\lambdan}( \phi)\,\nu(\rd \phi)
= \widehat{\nu}(\re^{-t\,Q_{R,N}}\lambdan)\,\re^{-\frac{1}{2}\langle Q_t\lambdan,\lambdan\rangle}.
\end{equation}
Now from Def.~\ref{def:measure_invariant} a measure $\nu$ is invariant for Eq.~\eqref{eq:weak_free} if Eq.~\eqref{eq:semigroup_gaussian_2} is equal to
$$\widehat{\nu}(\lambdan)=\int_{L^1_2(\bS_R)}\,\re^{i\langle \lambdan,\phi\rangle}\nu(\rd \phi).$$
Observe that the requirement $\widehat{\nu}(\lambdan)= \widehat{\nu}(\re^{-tQ_{R,N}}\lambdan)\,\re^{-\frac{1}{2}\langle Q_t\lambdan,\lambdan\rangle}$ is equivalent to saying that $\nu$ is invariant if and only if
$\nu=\left(\re^{-t\,Q_{R,N}}\nu\right)\ast \mathcal{N}(0,Q_t)$ for all $t\geq 0$. It follows from the fact that $\mathcal{F}(f\ast g)= \mathcal{F}(f)\,\times\mathcal{F}(g)$, see Remark~\ref{rem:inv_gau_measure}.

\begin{thm}\label{thm:invariant_gaussuan_measuer_equi}
The following conditions are equivalent:
\begin{itemize}

\item[(A)] $\sup_{t\geq 0}\,\mathrm{Tr}\,Q_t<\infty$

\item[(B)] $\nu=\cN(0,Q_{\infty})$ is an invariant measure for Eq.~\eqref{eq:weak_free}, i.e., if $z_{R,N}$ has a distribution $\nu$ and is independent of the cylindrical Wiener process $W_R$, then the $L^1_2(\bS_R)$-valued stochastic process $Z_{R,N}(t)$ has distribution $\nu$ for each fixed $t\geq 0$. 

\end{itemize}
\end{thm}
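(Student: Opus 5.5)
The proof will go through the characteristic-function criterion for invariance derived just above the statement. That criterion says a measure $\nu$ is invariant for $\cR_{R,t}$ if and only if
$$\widehat\nu(f)=\widehat\nu(\re^{-tQ_{R,N}}f)\,\re^{-\frac12\langle Q_tf,f\rangle}\qquad\text{for all } t\ge0,\ f\in L^1_2(\bS_R).$$
I will first record two facts.

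\begin{itemize}
\item[(i)] The map $t\mapsto Q_t$ is monotone increasing in the sense of quadratic forms, and
$$Q_\infty-Q_t=\re^{-tQ_{R,N}}\,Q_\infty\,\re^{-tQ_{R,N}}.$$
This holds because $Q_{R,N}$ is self-adjoint and commutes with its semigroup; substituting $s\mapsto s+t$ gives $2\int_t^\infty \re^{-2sQ_{R,N}}\rd s=\re^{-tQ_{R,N}}Q_\infty \re^{-tQ_{R,N}}$. It is cleanest to check this spectrally on each eigenspace $\bY^3_l$ using Remark~\ref{rem:trace_semigroup}, where every operator acts as a scalar.

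\item[(ii)] $\mathrm{Tr}\,Q_t=\sum_l(2l+1)(1-\re^{-2t\langle l\rangle})/\langle l\rangle$ (computed in the appropriate Hilbert space), and this increases monotonically to $\mathrm{Tr}\,Q_\infty$ by monotone convergence.
\end{itemize}

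For (A)$\Rightarrow$(B): by (ii), (A) gives $\mathrm{Tr}\,Q_\infty=\sup_t \mathrm{Tr}\,Q_t<\infty$. Then $Q_\infty$ is a nonnegative trace-class operator, so $\nu=\cN(0,Q_\infty)$ exists by Theorem~\ref{thm:bochner}, and $\widehat\nu(f)=\re^{-\frac12\langle Q_\infty f,f\rangle}$. Plugging into the criterion, the right-hand side becomes
$$\exp\Big(-\tfrac12\langle Q_\infty\re^{-tQ_{R,N}}f,\re^{-tQ_{R,N}}f\rangle-\tfrac12\langle Q_tf,f\rangle\Big),$$
which by (i) equals $\re^{-\frac12\langle Q_\infty f,f\rangle}=\widehat\nu(f)$. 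This gives invariance. For the process formulation, take $z_{R,N}\sim\nu$ independent of $W_R$. By Lemma~\ref{lem:free_solution}, $Z_{R,N}(t)$ is then the sum of two independent centered Gaussians, $\re^{-tQ_{R,N}}z_{R,N}$ and the stochastic convolution. The first has covariance $\re^{-tQ_{R,N}}Q_\infty\re^{-tQ_{R,N}}$ and the second has covariance $Q_t$, the latter by the Itô isometry of Remark~\ref{rem:BDG_ineq}. Their sum has covariance $Q_\infty$ by (i).

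For (B)$\Rightarrow$(A): assume $\nu=\cN(0,Q_\infty)$ is a well-defined invariant probability measure on $L^1_2(\bS_R)$. Invariance gives $\nu=(\re^{-tQ_{R,N}}\nu)*\cN(0,Q_t)$, so
$$\langle Q_tf,f\rangle\le\langle Q_tf,f\rangle+\langle Q_\infty \re^{-tQ_{R,N}}f,\re^{-tQ_{R,N}}f\rangle=\langle Q_\infty f,f\rangle\quad\text{for all }f.$$
Hence $Q_t\le Q_\infty$ and $\mathrm{Tr}\,Q_t\le\mathrm{Tr}\,Q_\infty$. The right-hand side is finite, since the covariance of a Gaussian measure on a separable Hilbert space is trace class: $\int\|\phi\|^2\nu(\rd\phi)=\mathrm{Tr}\,Q_\infty$, and this is finite by Fernique, or directly from the necessity part of Theorem~\ref{thm:bochner}. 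This yields (A).

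The main obstacle is bookkeeping the Hilbert space in which traces and covariances are taken. $Q_t$ is defined on $\tilde H=L^1_2(\bS_R)$, while Remark~\ref{rem:trace_semigroup} computes traces in $L_2(\bS_R)$. To handle this, I will diagonalize everything in the spherical harmonic basis $Y_{l,m}$, rescaled to be orthonormal in $L^1_2(\bS_R)$. In that basis all operators are diagonal with explicit eigenvalues $(1-\re^{-2t\langle l\rangle})/\langle l\rangle$ times the appropriate weight, so (i), (ii) and the trace comparison reduce to termwise identities. The finiteness in (A) then follows from the summability $\sum_l(2l+1)\ll l\gg/\langle l\rangle<\infty$, which holds since $\langle l\rangle\simeq l^6$.
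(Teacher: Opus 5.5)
Your proof is correct. For (A)$\Rightarrow$(B) it follows the paper: invariance is checked on characteristic functions via $\re^{-tQ_{R,N}}Q_\infty\re^{-tQ_{R,N}}=Q_\infty-Q_t$. You also supply two points the paper leaves implicit: that $\cN(0,Q_\infty)$ exists, because $\mathrm{Tr}\,Q_\infty=\sup_t\mathrm{Tr}\,Q_t$ by monotone convergence, and a process-level check via independence and the It\^o isometry.

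The routes differ in (B)$\Rightarrow$(A). The paper never uses the form of $\nu$. From the invariance identity it gets $\re^{\frac12\langle Q_tf,f\rangle}\,\mathrm{Re}\,\widehat\nu(f)=\mathrm{Re}\,\widehat\nu(\re^{-tQ_{R,N}}f)\le 1$. The necessity part of Theorem~\ref{thm:bochner} with $\epsilon=1/2$ then gives one trace-class $G$ with $Q_t\le 2\log 2\,G$ for all $t$. That proves the stronger Da Prato--Zabczyk statement that \emph{any} invariant probability measure forces $\sup_t\mathrm{Tr}\,Q_t<\infty$.

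You instead use two facts: a Gaussian measure on a separable Hilbert space has trace-class covariance, and $Q_t\le Q_\infty$. This is shorter and valid for the theorem as stated. Note, though, that invariance plays no role in it, since $Q_t\le Q_\infty$ is already your fact (i). What you actually prove is that mere well-definedness of $\cN(0,Q_\infty)$ implies (A). If you justify the trace-class property through Theorem~\ref{thm:bochner}, you are applying the paper's Sazonov-type bound to $Q_\infty$ instead of to $Q_t$.

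Your explicit diagonalization is more careful than the paper's text. The paper declares all pairings to be in $L^1_2(\bS_R)$, yet identifies $\cN(0,Q_\infty)$ with $\nu_{R,N}$, which is only right for the $L_2$ pairing. In the $L^1_2$ pairing the covariances are $Q_t(1-\Delta_R)$ and $G_{R,N}(1-\Delta_R)$. Just keep one convention throughout. If covariances are taken in $L^1_2$, the factor $\ll l\gg$ belongs in the trace formula of (ii), as it already appears in your last paragraph. Since every operator involved is a function of $\Delta_R$, all your identities hold termwise in either convention.
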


\begin{proof}
Similar result can be found in~\cite[Thm.~6.22]{Hai23} and~\cite[Thm.~6.2.1]{DaZab96}. Recall Eq.~\eqref{eq:weak_free}. One has
\begin{equation}
 Z_{R,N}(t,\Cdot) 
 = \re^{-tQ_{R,N}}\,z_{R,N}+ \sqrt{2}\,\int_0^t\,\re^{-sQ_{R,N}}\,\rd W_R(s,\Cdot)\,.
\end{equation}
Using the fact that the distribution of $\sqrt{2}\,\int_0^t\,\re^{-sQ_{R,N}}\,\rd W_R(s,\Cdot)$ is $\cN(0,Q_t)$, it follows that $\nu$ is invariant for Eq.~\eqref{eq:weak_free} if
$$\nu=(\re^{-tQ_{R,N}}\nu)\ast \cN(0,Q_t)\quad \text{for\,\, all}\quad t\geq 0.$$
Taking the Fourier transform of both sides leads to
\begin{equation}\label{eq:inva_gauss_measure_1}
\widehat{\nu}(\lambdan)=\widehat{(\re^{-tQ_{R,N}}\nu)}(\lambdan)\,\times \widehat{\mathcal{N}(0,Q_t)}(\lambdan)
\quad \text{for}\quad \lambdan\in L^1_2(\bS_R).
\end{equation}
It holds that
\begin{equation}\label{eq:inv_gauss_measure_2}
\begin{aligned}
\widehat{ (\re^{-tQ_{R,N}}\nu) }(\lambdan)
=&
\int_{L^1_2(\bS_R)}\,\re^{i\langle \lambdan, \phi\rangle}\,\left(\re^{-tQ_{R,N}}\nu\right)(\rd \phi)
=\int_{L^1_2(\bS_R)}\,\re^{i\langle \lambdan, \re^{-tQ_{R,N}}\phi\rangle}\,\nu(\rd \phi)
\\
=&
\int_{L^1_2(\bS_R)}\,\re^{i\langle \re^{-tQ_{R,N}} \lambdan,\phi\rangle}\,\nu(\rd \phi)
=
\widehat{\nu}(\re^{-tQ_{R,N}} \lambdan).
\end{aligned}
\end{equation}
Note that 
$$\widehat{\mathcal{N}(0,Q_t)}(\lambdan)= \re^{-\frac{1}{2}\langle Q_t\lambdan,\lambdan\rangle}.$$
Combining Eqs.~\eqref{eq:inva_gauss_measure_1} and~\eqref{eq:inv_gauss_measure_2}, one obtains
\begin{equation}
\hat{\nu}(\lambdan)= \hat{\nu}(\re^{-t\,Q_{R,N}} \lambdan)\times \re^{-\frac{1}{2}\langle Q_t\lambdan,\lambdan\rangle}\quad \forall \lambdan\in L^1_2(\bS_R)\,.
\end{equation}
This implies that
\begin{equation}
\re^{\frac{1}{2}\langle Q_t\lambdan,\lambdan\rangle}\,\textrm{Re}\,\hat{\nu}(\lambdan)=\textrm{Re}\,\hat{\nu}(\re^{-t\,Q_{R,N}} \lambdan)\leq 1\,.
\end{equation}
Hence, for all $\lambdan\in L^1_2(\bS_R)$ one has
\begin{equation}
\langle Q_t\lambdan,\lambdan \rangle \leq 2\log\,\left(\frac{1}{\textrm{Re}\,\hat{\nu} (\lambdan)}\right).
\end{equation}
By the Bochner theorem (Thm.~\ref{thm:bochner}) for $\epsilon=1/2$, there exists a trace-class operator $G$ such that
$$\mathrm{Re}\,\hat{\nu}(\lambdan)\geq 1/2\quad \text{for\,\, all} \quad \lambdan \in L^1_2(\bS_R)\quad\text{with}\quad \langle G\lambdan,\lambdan\rangle\leq 1.$$
This gives rise to $\langle Q_t\lambdan,\lambdan\rangle\leq (2\log 2)$, which amounts to writing 
$$\sup_{t\geq 0}\mathrm{Tr}\,Q_t\leq 2\log\,2\,\mathrm{Tr}\,G<\infty.$$
This concludes that item~$(B)$ implies item~(A). For the converse direction, recall that 
$$Q_t\,:\, L^1_2(\bS_R)\to L^1_2(\bS_R)\quad \text{is\,\, given\,\, by}\quad 
Q_t:=2\int_0^t\,\re^{-2sQ_{R,N}}\,\rd s.$$
Let $\nu=\cN(0,Q_{\infty})$, i.e., $\widehat{\nu}(\lambdan)=\re^{-\frac{1}{2}\langle Q_{\infty}\lambdan,\lambdan\rangle}$. This yields
$$
\widehat{\nu}(\re^{-tQ_{R,N}}\lambdan)=\re^{-\frac{1}{2}\langle \re^{-tQ_{R,N}}\,Q_{\infty}\,\re^{-tQ_{R,N}}\lambdan,\lambdan\rangle}.$$
From the semigroup property, one has
\begin{equation}
\begin{aligned}
 \re^{-t\,Q_{R,N}}Q_{\infty}\re^{-t\,Q_{R,N}}=&
2\int_0^{\infty}\,\re^{-2(t+s)\,Q_{R,N}}\rd s
\\
&=2\int_0^{\infty}\,\re^{-2r\,Q_{R,N}}\rd r-2\int_{0}^t\,\re^{-2r\,Q_{R,N}}\rd r
=Q_{\infty}-Q_t\,.
\end{aligned}
\end{equation}
This implies that $\hat{\nu}(\lambdan)= \hat{\nu}(\re^{-t\,Q_{R,N}}\lambdan)\,\re^{-\frac{1}{2}\langle Q_t\lambdan,\lambdan\rangle}$ and finishes the proof of the converse direction.
Finally,
$$\re^{-\frac{1}{2}\langle Q_t\lambdan,\lambdan\rangle}\to \re^{-\frac{1}{2}\langle Q_{\infty}\lambdan,\lambdan\rangle},\quad\text{as}\quad t\to\infty,$$ 
which is the characteristic function of $\cN(0,Q_\infty)$.
\end{proof}

\begin{lem}\label{lem:inv_measu_gauss}
For all $R,N\in\bN_+$, the measures $\nu_{R,N}=\cN(0,G_{R,N})$
concentrated on $L^1_2(\bS_R)$ are invariant for Eq.~\eqref{eq:weak_free}. In particular, for all $t\in[0,\infty)$ it holds that
$$\mathrm{Law}(Z_{R,N}(t,\Cdot))=\mathrm{Law}(z_{R,N})=\nu_{R,N}.$$
\end{lem}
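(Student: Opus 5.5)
The plan is to reduce the lemma to Theorem~\ref{thm:invariant_gaussuan_measuer_equi}. The equivalence there shows that $\cN(0,Q_\infty)$ is invariant for Eq.~\eqref{eq:weak_free} as soon as $\sup_{t\geq 0}\mathrm{Tr}\,Q_t<\infty$. It therefore suffices to establish two facts: (i) $Q_\infty=G_{R,N}$, so that $\cN(0,Q_\infty)=\nu_{R,N}$; and (ii) the trace bound, which should follow immediately from (i) and monotonicity.

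For (i), I will use the spectral decomposition of Remark~\ref{rem:trace_semigroup}, namely $\re^{-2sQ_{R,N}}=\sum_{l}(2l+1)\re^{-2s\langle l\rangle}\cP_{R,l}$. Integrating term by term gives
$$Q_\infty=2\int_0^\infty \re^{-2sQ_{R,N}}\,\rd s=\sum_{l}(2l+1)\langle l\rangle^{-1}\cP_{R,l}=Q_{R,N}^{-1}=G_{R,N}.$$
The interchange is justified because all summands are positive operators, so monotone convergence applies on each spectral subspace. For (ii), $Q_t\leq Q_\infty$ in the sense of forms, hence $\mathrm{Tr}\,Q_t\leq \mathrm{Tr}\,G_{R,N}$, and this is finite by Remark~\ref{rem:expansion_covariance_harmo} and Lemma~\ref{lem:trace}.

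The main obstacle is consistency of the spaces. The traces and Gaussian measures in Theorem~\ref{thm:invariant_gaussuan_measuer_equi} are taken on $\tilde H=L^1_2(\bS_R)$, while $G_{R,N}$ is naturally a covariance in $L_2(\bS_R)$. I will handle this by checking trace-class on $L^1_2(\bS_R)$ directly: in the orthonormal basis $\ll l\gg^{-1/2}Y_{lm}$ of $L^1_2(\bS_R)$ the trace becomes $\sum_l (2l+1)\langle l\rangle^{-1}$, which converges for fixed $N$ since $\langle l\rangle\sim l^6$. Lemma~\ref{lem:mu_measure_well_defined} guarantees that $\nu_{R,N}$ is concentrated on $L^1_2(\bS_R)$. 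Since $\nu_{R,N}$ and $\cN(0,Q_\infty)$ have the same characteristic functional $\re^{-\frac12\langle G_{R,N}f,f\rangle}$ on test functions, which are dense in $L^1_2(\bS_R)$, the two measures coincide.

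Finally, I take the initial datum $z_{R,N}$ with law $\nu_{R,N}$, independent of $W_R$, as in the product-space construction of Sec.~\ref{sec:stochastic_analysis}. Item~(B) of the theorem then gives $\mathrm{Law}(Z_{R,N}(t,\Cdot))=\nu_{R,N}$ for every $t\geq0$. Alternatively, this can be verified directly from the identity $\re^{-tQ_{R,N}}Q_\infty\re^{-tQ_{R,N}}=Q_\infty-Q_t$ established in that proof, applied to the mild solution~\eqref{eq:mild_free}.
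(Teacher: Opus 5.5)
Your proposal is correct and follows the paper's route. The paper likewise notes that $\cN(0,Q_\infty)=\cN(0,G_{R,N})=\nu_{R,N}$ and invokes Theorem~\ref{thm:invariant_gaussuan_measuer_equi}; your spectral identification of $Q_\infty$, the bound $\mathrm{Tr}\,Q_t\leq \mathrm{Tr}\,G_{R,N}<\infty$, and the independence of $z_{R,N}$ from $W_R$ are exactly the details the paper's one-line proof leaves implicit.
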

\begin{proof}
Noting that $\cN(0,Q_{\infty})=\cN(0,G_{R,N})=\nu_{R,N}$, the statement is immediate consequence of Thm.~\ref{thm:invariant_gaussuan_measuer_equi}.
\end{proof}

\subsection*{Interacting measure}\label{subse:non-Gaussian}
We denote by $\cP_{R,t}$ the semigroup associated to Eq.~\eqref{eq:weak_interacting}
\begin{equation}\label{eq:non-linear_eq}
\begin{aligned}
\cP_{R,t}\,F(\phi ):=&\, \bE\left[F(\varPhi^g_{R,N}(t,\Cdot))\,|\,\varPhi^g_{R,N}(0,\Cdot)=\phi\right]
\quad \text{for\,\,all}\quad F\in B_{\rb}(L^1_2(\bS_R)).
\end{aligned}
\end{equation}
We aim to verify that for all $R,N\in\bN_+$, the measures $\mu^g_{R,N}$ is invariant for $\cP_{R,t}$ in the sense of Def.~\ref{def:measure_invariant}.

\begin{thm}\label{thm:inv_measu_non_gauss}
Let $g\in C^{\infty}(\bS_R)$. For all $R,N\in\bN_+$, there exists a measure of the form 
\begin{equation}
\mu_{R,N}^g(\rd \phi) = \frac{1}{\mathcal{Z}^g_{R,N}}\,\frac{1}{\mathcal{Z}_{R,N}} \exp\Big(\frac{1}{n}\phi(g)^n-\int_{\bS_R}\,P(\phi(\rx),c_{R,N})\, \rho_R(\rd\rx)\Big)\,\nu_{R,N}(\rd\phi)\,,
\end{equation}
concentrated on $L^1_2(\bS_R)$, which is invariant for Eq.~\eqref{eq:weak_interacting}. In particular, 
$$\mathrm{Law}(\varPhi^g_{R,N}(t,\Cdot))=\mu^g_{R,N}\quad \text{for\,\, all}\quad t\geq 0 .$$
\end{thm}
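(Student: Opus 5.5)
The plan is to verify the hypotheses of Lemma~\ref{lem:Hypothesis 8.4} (i.e.\ \cite[Thm.~8.6.3]{DaZab96}) with $H=L_2(\bS_R)$, $\tilde H=L^1_2(\bS_R)$, $Q=Q_{R,N}$, and the potential
\begin{equation}
U(\phi):=\frac{1}{n}\phi(g)^n-\int_{\bS_R}P(\phi(\rx),c_{R,N})\,\rho_R(\rd\rx),\qquad \phi\in L^1_2(\bS_R).
\end{equation}
The stationary-law statement then follows directly: we start from $\phi^g_{R,N}\sim\mu^g_{R,N}$, independent of $W_R$, and invariance gives $\mathrm{Law}(\varPhi^g_{R,N}(t,\Cdot))=\mu^g_{R,N}$ for all $t\geq0$. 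Concentration on $L^1_2(\bS_R)$ and well-definedness of the normalisation are already Lemma~\ref{lem:mu_measure_well_defined}. One bookkeeping point has to be fixed first: the equation carries $\sqrt2\,\rd W$, and the semigroup $Q_t=2\int_0^t\re^{-2sQ_{R,N}}\rd s$ has invariant Gaussian $\cN(0,Q_{R,N}^{-1})=\nu_{R,N}$ by Lemma~\ref{lem:inv_measu_gauss}. I will therefore check that the drift $-Q_{R,N}\phi+U'(\phi)$ pairs with $\re^{U}\nu_{R,N}$ under this normalisation, rescaling if the reference uses a different constant.

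The hypotheses are checked in order. (1) $Q_{R,N}$ is self-adjoint and positive, and $Q_{R,N}^{-1}=G_{R,N}$ is trace class by Remark~\ref{rem:expansion_covariance_harmo}. (2) Invariance of $\tilde H$ under $\re^{-tQ_{R,N}}$ and strong continuity there are Remark~\ref{rem:reproducing_kernel_sphere}. (3) That the Ornstein--Uhlenbeck process lives in $L^1_2(\bS_R)$ is Lemma~\ref{lem:free_solution}. (4) $U$ is a polynomial in $\phi$; using the embedding $L^1_2\subset L_p$ of Lemma~\ref{lem:embedding_sphere}, it is continuous on $\tilde H$ with directional derivative
\begin{equation}
DU(\phi:h)=\langle \phi(g)^{n-1}g-P'(\phi,c_{R,N}),h\rangle_{L_2(\bS_R)},
\end{equation}
so $F=U'$ is exactly the nonlinearity in Eq.~\eqref{eq:weak_interacting}. (5) Local Lipschitz continuity of $F:\tilde H\to H$ is the estimate~\eqref{eq:non_linearity_4}, taken with $Z=0$ or directly from the same Hölder/Sobolev argument. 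Boundedness of $U$ from above is Eq.~\eqref{eq:polynomail_potentail}, $U\leq A c_{R,N}^{n/2}$, which uses $\|g\|^n_{L_{n/(n-1)}}\le1/2$. (6) Existence of a global continuous mild solution for every initial datum in $\tilde H$ is Theorem~\ref{lem:global_solution_Psi} together with Lemma~\ref{lem:mild_solution_interacting}, via $\varPhi=\varPsi+Z$.

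I expect the main obstacle to be the precise matching of framework conventions rather than any analysis. Three points need care. First, \cite{DaZab96} phrases (6) for the deterministic-plus-$W_Q$ integral equation pathwise, so I will note that our Da Prato--Debussche solution is exactly of that form. Second, the constant in front of the noise must be consistent with $\re^{U}$ rather than $\re^{2U}$. Third, the smallness condition on $g$ must hold so that $U$ is bounded above. If the cited theorem insists on a globally bounded-above $U$ with a Lipschitz $F$, the fallback is the standard argument: truncate the nonlinearity to $F_k$, a globally Lipschitz gradient of $U_k$; apply the invariance of $\re^{U_k}\nu_{R,N}$ for the truncated problem; then pass $k\to\infty$ using pathwise convergence of solutions on $[0,T]$ (by local uniqueness from Theorem~\ref{thm:mild_slotion_non-linear} and the energy bound) and dominated convergence in $\int\cP_t f\,\rd\mu_k$ for bounded continuous $f$. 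Finally I will extend from bounded continuous $f$ to $B_\rb(\tilde H)$ by a monotone class argument.
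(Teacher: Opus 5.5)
Your proposal matches the paper's proof. Both verify hypotheses (1)--(6) of Lemma~\ref{lem:Hypothesis 8.4} with $H=L_2(\bS_R)$, $\tilde H=L^1_2(\bS_R)$, $Q=Q_{R,N}$ and the same potential $U$, using the same ingredients: trace class $G_{R,N}$, Lemma~\ref{lem:free_solution}, the bound above from Lemma~\ref{lem:mu_measure_well_defined}, the Lipschitz estimate of Lemma~\ref{lem:lipschitz}, and the global solution via $\varPhi=\varPsi+Z$. Your normalisation check comes out as you hoped: drift $-Q_{R,N}\phi+U'(\phi)$ with $\sqrt2\,\rd W$ pairs with $\re^{U}\,\cN(0,Q_{R,N}^{-1})=\re^{U}\nu_{R,N}$, so the truncation fallback is not needed; you are also right to state explicitly that boundedness of $U$ from above requires $\|g\|^n_{L_{n/(n-1)}(\bS_R)}\le 1/2$, which the theorem statement leaves implicit.
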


\begin{proof}
We shall verify the assumptions of Lemma~\ref{lem:Hypothesis 8.4} with $\tilde{H}= L^1_2(\bS_R)$ and $ H= L_2(\bS_R)$.

\begin{itemize}
\item[1)] For all $R,N\in\bN_+$, the elliptic operator $Q_{R,N}\,:\,L^6_2(\bS_R)\subset L_2(\bS_R)\to L_2(\bS_R)$ is a self-adjoint and positive definite. In particular, for all $R,N\in \bN_+$,
$$\mathrm{Tr}(Q^{-1}_{R,N}) =\mathrm{Tr}(G_{R,N})<\infty.$$
This follows from Remark~\ref{rem:trace_semigroup}.

\item[2)]
On account of Remark~\ref{rem:reproducing_kernel_sphere}, for all $R,N\in\bN_+$ the operator $Q_{R,N}$ generates a $C_0$-semigroup of contractions on $L^1_2(\bS_R)$. This implies that
$$\|\re^{-t\,Q_{R,N}} f\|_{L_2^1(\bS_R)} \leq \,\|f\|_{L_2^1(\bS_R)}\quad\forall \quad f\in L_2^1(\bS_R).$$
Consequently,
$$\re^{-tQ_{R,N}}\,L^1_2(\bS_R)\subseteq L^1_2(\bS_R)\quad \forall\quad t\geq 0.$$
This verifies that $L^1_2(\bS_R)$ is invariant under $\re^{-tQ_{R,N}}$ for all $R,N\in\bN_+$. To proceed, fix $t>0$ and set $\Delta t>0$. Using the semigroup property of $\re^{-t\,Q_{R,N}}$, one obtains
\begin{equation}
\lim_{\Delta t\to 0}\,\|\re^{-(t+\Delta t)\,Q_{R,N}} f -\,\re^{-t\,Q_{R,N}}f\|_{L^1_2(\bS_R)} =\lim_{\Delta t\to 0}\,\|\re^{-t\,Q_{R,N}} (\re^{-\Delta t \,Q_{R,N}} -1) f\|_{L^1_2(\bS_R)}=0.
\end{equation}
It follows from the strong continuity of the semigroup $\re^{-tQ_{R,N}}$. This implies that the map $\re^{-t\,Q_{R,N}}\,f$ is right continuous. Similarly, one deduces that
\begin{equation}
\lim_{\Delta t\to 0}\,\|\re^{-t\,Q_{R,N}} f -\,\re^{-(t-\Delta t)\,Q_{R,N}}f\|_{L^1_2(\bS_R)} = 0\,,
\end{equation}
which implies that the map $\re^{-t\,Q_{R,N}}\,f$ is left continuous. Hence, for all $f\in L_2^1(\bS_R)$ the map $t\mapsto \re^{-t\,Q_{R,N}}f$ is continuous in $L^1_2(\bS_R)$.

\item[3)]
Lemma~\ref{lem:free_solution} implies that the Ornstein-Uhlenbeck process $Z_{R,N}(t,\Cdot)\in C([0,\infty),L_2^1(\bS_R))$ exists. 

\item[4)]
Let $U_{R,N}^g(\phi):=\frac{1}{n}\,\phi(g)^n-\,\int_{\bS_R}\,P(\phi(\rx),c_{R,N})\, \rho_R(\rd\rx)$. It is obvious that $U^g_{R,N}$ is continuous as it is a polynomial. Let $\psi\in L_2^1(\bS_R)$ and $D_{\epsilon}$ denote the derivative with respect to $\varepsilon>0$. By definition one gets
\begin{multline}
\left.D_{\varepsilon}\,U_{R,N}^g(\phi+\varepsilon\, \psi)\right|_{\varepsilon=0} 
\\
=\left.D_{\varepsilon}\left(-\int_{\bS_R}\, P(\phi(\rx)+\varepsilon\, \psi(\rx),c_{R,N})\,\rho_R(\rd\rx)
+\frac{1}{n}\,((\phi+\varepsilon\,\psi)(g))^n\right)\right|_{\varepsilon=0}
\\
=\,-\,\int_{\bS_R}\,P'(\phi(\rx),c_{R,N})\,\psi(\rx)\,\rho_R(\rd\rx)\,+\,\phi(g)^{n-1}\,\psi(g)\,.
\end{multline}
Observe that $D_{\varepsilon}\,U_{R,N}^g(\phi)$ with $\phi\mapsto\varPhi_{R,N}^g(t,\Cdot)$ coincides with the nonlinearity $F$ given by Eq.~\eqref{eq:non-linear-Psi} with $\varPhi_{R,N}^g(t,\Cdot)= \varPsi_{R,N}^g(t,\Cdot)+Z_{R,N}(t,\Cdot)$.

\item[5)]
The proof of Lemma~\ref{lem:mu_measure_well_defined} implies that the mapping $U^g_{R,N}$ is bounded from above. Moreover, proof of Lemma~\ref{lem:lipschitz} shows that $F\,:\,L^1_2(\bS_R)\to L_2(\bS_R)$ is locally Lipschitz continuous.

\item[6)]
The mild solution $\varPhi^g_{R,N} \in C ([0, \infty), L^1_2 (\bS_R ))$ of Eq.~\eqref{eq:weak_interacting} with the initial condition $\varPhi_{R,N}^{g}(0,\Cdot)=\phi_{R,N}^g\in L^1_2(\bS_R)$ is given by 
\begin{multline}\label{eq:mild_interacting}
 \varPhi_{R,N}^g(t,\Cdot) 
= \re^{-t\,Q_{R,N}} \phi_{R,N}^g
\\ +\int_0^t\re^{-(t-s)\,Q_{R,N}}\, \left[\sqrt{2}\,\rd W_R(s,\Cdot)
 -P'(\varPhi_{R,N}^g(s,\Cdot),c_{R,N})\,\rd s + (\varPhi_{R,N}^g(s,g))^{n-1}g(\Cdot)\,\rd s\right]. 
\end{multline}
This is the content of Lemma~\ref{lem:mild_solution_interacting}.
\end{itemize}
Combining Items~(1)-(6) obtained above yields the existence of an invariant measure $\mu_{R,N}^g$ for Eq.~\eqref{eq:weak_interacting} on $L^1_2(\bS_R)$, which is absolutely continuous with respect to the measure $\nu_{R,N}$, see Remark~\ref{rem:abso_con_measure}. This completes the proof.
\end{proof}

\chapter{Infinite volume measure}\label{ch:infinite-measure}
In this section, we shall show the existence of a weak limit of a subsequence of the sequence of measures $(\jmath^*_R\sharp \mu^g_{R,N})_{R,N\in\bN_+}$ on $\sS'(\bR^2)$ as $R\to \infty$ using tightness arguments. We call such a limit a $P(\Phi)_2$ measure on $\sS'(\bR^2)$. To this end, we consider the limits as $R\to \infty$ and $N\to \infty$ for the sequence of measures $(\jmath^*_R\sharp \mu^g_{R,N})_{R,N\in\bN_+}$ on $\sS'(\bR^2)$. The existence of the UV limit as $N\to \infty$ of $(\mu^g_{R,N})_{R,N\in\bN_+}$ was established in Sec.~\ref{sec:existence_UV_limit}. This result also holds for $(\jmath^*_R\sharp \mu^g_{R,N})_{R,N\in\bN_+}$ on $\sS'(\bR^2)$.
\\
\\
To prove the existence of the limit as $R\to \infty$, we first obtain an a priori bound in Prop.~\ref{prop:energy_method_plane} using the energy method in $L_2(\bR^2,v^{1/2}_L)$. The function $\nu_L\in L_1(\bR^2)$ is an admissible weight introduced in Def.~\ref{def:weight_plane}. Next, we prove tightness of $(\jmath_R^*\sharp \mu^g_{R,N})_{R,N\in\bN_+}$ by averaging in time the expectation of $\jmath_R^*\varPhi^g_{R,N}(t,\Cdot)$, using the fact that
$$\mathrm{Law}(\jmath_R^*\varPhi^g_{R,N}(t,\Cdot))=\jmath_R^*\sharp\mu^g_{R,N} \quad \forall \quad t\in[0,\infty).$$
This implies the existence of a subsequential limit as $R\to\infty$ of the sequence of measures $(\jmath_R^*\sharp \mu^g_{R,N})_{R,N\in\bN_+}$. Indeed, a $P(\Phi)_2$ measure on $\sS'(\bR^2)$ can be viewed as the marginal distribution of the stationary solution associated with the counterpart of Eq.~\eqref{eq: SPDE on Sphere} on $\bR^2$. A similar argument was first used in~\cite{AK17,AK21,GH18}, where the existence of a $\Phi^4$ measure on $\sS'(\bR^3)$ is proven.

\section{A priori bound}\label{sec:a_priori_bound}
\begin{dfn}\label{def:weight_plane}
Fix  $L\in[1,\infty)$ as in Lemma~\ref{lem:wieght_plane_laplacian} and let 
$$v_L:= 1/(4\pi L^2)\,w_L^8,\quad \text{with}\quad w_L(x):=16L^4/(4L^2+x_1^2+x_2^2)^2 \in C^{\infty}(\bR^2).$$
Observe that using polar coordinates one gets
\begin{equation}
\|v_L\|_{L_1(\bR^2)} = \frac{1}{4 L^2}\,\int_{0}^{\infty}\, (\frac{16 L^4}{(4L^2 +r^2)^2})^8\,(2r)\,\rd r = \frac{4^{16} L^{32}}{4L^2}\,\frac{1}{15(8^{15})L^{30}} =\frac{1}{15\,(2)^{15}}<1\,.
\end{equation}
\end{dfn}

\begin{Remark}
The precise choice of the weight $v_L$ is not of much importance. It is convenient to use a weight that decays polynomially and express it in terms of the function $w_R(x)$ introduced below Eq.~\eqref{eq:conformal_factor}. The prefactor $1/(4\pi L^2)$ guarantees that the $L_1(\bR^2)$ norm of the weight is bounded by $1$ and the decay rate is chosen so that the estimate stated in Remark~\ref{rem:L_two_interms_of_L_n} is true.
\end{Remark}
\noindent
Utilizing Remarks~\ref{rem:norm_stereo} and~\ref{rem:comf_laplacian}, one rewrites Eq.~\eqref{eq:varPsi_pde2} in the stereographic coordinates as follows
\begin{multline}\label{eq:spde_psi_stereographic}
 (\partial_t + (1- w_R^{-1}(x)\Delta)(1-w_R^{-1}(x)\Delta/N^2)^2)\jmath_R^*\varPsi^g_{R,N}(t,x) + (\jmath_R^*\varPsi^g_{R,N}(t,x))^{n-1}
 \\
 = 
 \sum_{l=0}^{n-2}\sum_{m=l}^{n-1} a_{m,l}\, \jmath_R^* Z_{R,N}^{:m-l:}(t,x)
 \,(\jmath_R^*\varPsi^g_{R,N}(t,x))^l
 \\
 +
 (\jmath_R^*(\varPsi^g_{R,N}+Z_{R,N})(t,w_R\jmath_R^* g))^{n-1}\jmath_R^*g(x)\,.
\end{multline}
We shall apply the energy method in $L_2(\bR^2,v^{1/2}_L)$ in order to obtain an a priori bound. To this end, we multiply both sides of the above equation by $v_L\jmath_R^*\varPsi^g_{R,N}$, and integrate over $\bR^2$. The resulting bound, which is uniform in $R,N\in\bN_+$, will be used to prove tightness of the family of measures $(\jmath_R^*\sharp\mu^g_{R,N})_{R,N\in\bN_+}$. Note that we have already utilized the energy method in $L^1_2(\bS_R)$ in order to prove the global existence of the mild solution in Sec.~\ref{subsec:global_solution_remainder}.

\begin{lem}\label{lem:wieght_plane_laplacian}
There exists $L\in[1,\infty)$ such that for all $R\in[L,\infty)$ it holds that
\begin{enumerate}
\item[(A)] $\langle\Psi, v_L (-w^{-1}_R \Delta)\Psi\rangle_{L_2(\bR^2)} 
\geq
1/2\,\|\vec\nabla\Psi\|_{L_2(\bR^2,w^{-1/2}_Rv_L^{1/2})}^2
-1/8\,\|\Psi\|^2_{L_2(\bR^2,w_R^{-1/2} v_L^{1/2})}$,

\item[(B)] $\langle\Psi, v_L (-w^{-1}_R\Delta)^2 \Psi\rangle_{L_2(\bR^2)}
\geq 
1/2\,\|\Delta\Psi\|_{L_2(\bR^2,w^{-1}_Rv_L^{1/2})}^2
-1/8\,\|\Psi\|^2_{L_2(\bR^2,w_R^{-1} v_L^{1/2})}$,

\item[(C)] $\langle\Psi, v_L (-w^{-1}_R\Delta)^3 \Psi\rangle_{L_2(\bR^2)}
\geq 
1/2\,\|\vec\nabla\Delta\Psi\|_{L_2(\bR^2,w^{-3/2}_Rv_L^{1/2})}^2
-1/8\,\|\Psi\|^2_{L_2(\bR^2,w_R^{-3/2} v_L^{1/2})}$.
\end{enumerate}
\end{lem}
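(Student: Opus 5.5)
The plan is to integrate by parts and treat every term in which a derivative falls on the weight as an error. The needed decay comes from two facts about the weights. The first is that $v_L$ and $w_R$ are explicit rational functions with $|\partial^a w_R|\le c_a\,w_R/\langle x\rangle_R^{|a|}$, where $\langle x\rangle_R:=(4R^2+|x|^2)^{1/2}$. This bound is uniform in $R$, because $w_R(x)=f(x/2R)$ with $f(y)=(1+|y|^2)^{-2}$. The second is that $|\partial^a v_L|\le c_a\,v_L/\langle x\rangle_L^{|a|}\le c_a L^{-|a|}v_L$. Set $u:=v_Lw_R^{-1}$. For $R\ge L$ we get $|\partial^a u|\le C_a L^{-|a|}u$ with $C_a$ independent of $R,L$, since $\langle x\rangle_R\ge\langle x\rangle_L$. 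Choosing $L$ large then makes every commutator term small relative to the main term.

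For (A), write $\langle\Psi,v_L(-w_R^{-1}\Delta)\Psi\rangle=\int u\,|\vec\nabla\Psi|^2+\int \Psi\,\vec\nabla u\cdot\vec\nabla\Psi$. Using $|\vec\nabla u|\le C L^{-1}u$ and Cauchy--Schwarz, the second term is bounded by $\tfrac12\int u|\vec\nabla\Psi|^2+\tfrac12 C^2L^{-2}\int u\Psi^2$. We then pick $L$ with $C^2L^{-2}/2\le1/8$. The norms in (A) are exactly $\int u|\vec\nabla\Psi|^2$ and $\int u\Psi^2$, since $\|f\|^2_{L_2(\bR^2,w_R^{-1/2}v_L^{1/2})}=\int w_R^{-1}v_Lf^2$. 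The integration by parts is justified for $\Psi\in\sS$, with a density argument for the relevant weighted Sobolev spaces.

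For (B), we have $\langle\Psi,v_L(w_R^{-1}\Delta)^2\Psi\rangle=\int \Psi\,v_Lw_R^{-1}\Delta(w_R^{-1}\Delta\Psi)=\int \Delta(u\Psi)\,w_R^{-1}\Delta\Psi$. Expanding $\Delta(u\Psi)=u\Delta\Psi+2\vec\nabla u\cdot\vec\nabla\Psi+\Psi\Delta u$ gives the main term $\int u w_R^{-1}(\Delta\Psi)^2$, which is the target norm with weight $w_R^{-2}v_L$. The cross terms are bounded by $CL^{-1}\int u w_R^{-1}|\Delta\Psi|\,|\vec\nabla\Psi|$ and $CL^{-2}\int uw_R^{-1}|\Delta\Psi|\,|\Psi|$. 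The first of these contains the gradient at weight $uw_R^{-1}=v_Lw_R^{-2}$. To absorb it, we apply (A)-type integration by parts with weight $v_Lw_R^{-2}$ (whose derivatives also obey the $L^{-1}$ bound), namely $\int v_Lw_R^{-2}|\vec\nabla\Psi|^2\le \int v_Lw_R^{-2}|\Psi|\,|\Delta\Psi|+CL^{-1}(\ldots)$. Young's inequality then closes the estimate, with the lower-order terms carrying factors $L^{-1}$, $L^{-2}$ and landing in the $\tfrac18\|\Psi\|^2$ term.

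For (C), write $\langle\Psi,v_L(-w_R^{-1}\Delta)^3\Psi\rangle=\int \Delta(u\Psi)\,w_R^{-1}\Delta(w_R^{-1}\Delta\Psi)$, and integrate by parts once more to produce $\int w_R^{-1}\vec\nabla(\ldots)\cdot\vec\nabla(\ldots)$. The leading term is $\int v_Lw_R^{-3}|\vec\nabla\Delta\Psi|^2$. All remaining terms involve at least one derivative of a weight and hence a factor $L^{-1}$, and they contain intermediate derivatives $\Delta\Psi$, $\vec\nabla\Psi$ at weight $v_Lw_R^{-3}$. These are interpolated between the top-order term and $\int v_Lw_R^{-3}\Psi^2$ by the same weighted integration-by-parts and Young's argument as in (B), applied iteratively.

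I expect the main obstacle to be this bookkeeping in (C), and to a lesser extent in (B). The intermediate-derivative terms must be controlled uniformly in $R\ge L$ with constants that still leave $\tfrac12$ and $\tfrac18$. If the constants do not close directly, the fallback is to increase $L$, since every error term carries a negative power of $L$, and the interpolation inequalities $\int\rho|\vec\nabla^k\Psi|^2\le\epsilon\int\rho|\vec\nabla^{k+1}\Psi|^2+C_\epsilon\int\rho\Psi^2$ for admissible $\rho$ hold with constants independent of $R$.
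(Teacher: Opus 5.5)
Your proposal is correct and follows essentially the same route as the paper's proof in Appendix~\ref{sec:proof_wieght_plane_laplacian}. Both arguments rely on relative derivative bounds on $v_Lw_R^{-1}$ of order $1/L$, uniformly in $R\geq L$, treat every term in which a derivative hits a weight as an $O(1/L)$ error via integration by parts and Young's inequality, and control the intermediate quantities $\vec\nabla\Psi$, $\Delta\Psi$ at the top weight by the same weighted interpolation inequalities the paper derives in (B) and (C).

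One sign slip needs fixing. In (C) the identity should read $\langle\Psi,v_L(-w_R^{-1}\Delta)^3\Psi\rangle=-\int\Delta(u\Psi)\,w_R^{-1}\Delta(w_R^{-1}\Delta\Psi)\,\mathrm{d}x$. It is this minus sign that makes the leading term $+\int v_Lw_R^{-3}|\vec\nabla\Delta\Psi|^2\,\mathrm{d}x$ after your further integration by parts. That expansion also produces Hessian terms $\partial_i\partial_j\Psi$, which one more weighted integration by parts bounds by $\vec\nabla\Psi$ and $\vec\nabla\Delta\Psi$.
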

\begin{proof}
There exists $C\in(0,\infty)$ such that for all $L\in[1,\infty)$, $R\in[L,\infty)$
\begin{equation}
|\vec\nabla w^{-1/2}_R|\leq (C/L)\,w_R^{-1/2},
\qquad
|\vec\nabla v^{1/2}_L|\leq (C/L)\,v_L^{1/2}.
\end{equation}
This gives readily~(A) by integrating by parts in $\|\vec\nabla\Psi\|_{L_2(\bR^2,w^{-1/2}_Rv_L^{1/2})}^2$, applying the Leibniz rule and Young's inequality. Estimates (B)~and~(C) are obtained analogously, with the help of the following auxiliary inequalities
\begin{equation}
 \|\vec\nabla\Psi\|_{L_2(\bR^2,w^{-1}_Rv_L^{1/2})}^2
 \leq 
 2\|\Delta\Psi\|_{L_2(\bR^2,w^{-1}_Rv_L^{1/2})}^2
 +
 2\|\Psi\|^2_{L_2(\bR^2,w_R^{-1} v_L^{1/2})}\,,
\end{equation}
\begin{multline}
 \|\vec\nabla\Psi\|_{L_2(\bR^2,w^{-3/2}_Rv_L^{1/2})}^2
 +
 \|\Delta\Psi\|_{L_2(\bR^2,w^{-3/2}_Rv_L^{1/2})}^2
\\
 \leq 
 4\|\vec\nabla\Delta\Psi\|_{L_2(\bR^2,w^{-3/2}_Rv_L^{1/2})}^2
 +
 4\|\Psi\|^2_{L_2(\bR^2,w_R^{-3/2} v_L^{1/2})}
\end{multline}
valid for sufficiently large $L\in[1,\infty)$ and all $R\in[L,\infty)$. The latter inequalities are proven by the same token as Item~(A). For the complete proof see Appendix~\ref{sec:proof_wieght_plane_laplacian}.
\end{proof}

\begin{Remark}\label{rem:L_two_interms_of_L_n}
For all $L\in[1,\infty)$, $R\in[L,\infty)$, and $p\in\{1,2,3\}$, one has
\begin{equation}
\begin{aligned}
 \|\Psi\|_{L_2(\bR^2,w_R^{-p/2} v_L^{1/2})}
 &= \|w_R^{-p/2}\,v_L^{1/2}\Psi\|_{L_2(\bR^2)}
 \\
 &\leq
 \|w_R^{-p/2} v_L^{(n-2)/2n}\|_{L_{2n/(n-2)}(\bR^2)}
 \|\Psi\|_{L_n(\bR^2,v_L^{1/n})}\,.
 \end{aligned}
\end{equation} 
It follows from H\"older's inequality. Observe that there exists $C\in (0,\infty)$ such that
\begin{equation}
 \|w_R^{-p/2} v_L^{(n-2)/2n}\|_{L_{2n/(n-2)}(\bR^2)}\leq C\,.
\end{equation}
It follows from the facts that $w_R\ge w_L$ and the scaling relation $w_L(x)=w_1(x/L)$, the change of variables. The resulting integral is finite for $n\ge4$ and $p\in{1,2,3}$.
\end{Remark}

\begin{prop}\label{prop:energy_method_plane}
There exist $\kappa\in(0,\infty)$, $C\in(0,\infty)$, $p\in[1,\infty)$ and a ball $B\subset\sS(\bR^2)$ with respect to some Schwartz semi-norm centered at the origin such that for all $t\in(0,\infty)$ and $R,N\in\mathbb{N}_+$, $R\geq L$, as well as all $g\in C^\infty(\bS_R)$, $w_R\jmath_R^* g\in B$, it holds that
\begin{equation}
8\,\partial_t\|\jmath_R^* \varPsi^g_{R,N}(t,\Cdot)\|^2_{L_2(\bR^2,v_L^{1/2})}
 +
 \|\jmath_R^* \varPsi^g_{R,N}(t,\Cdot)\|^n_{L_n(\bR^2,v_L^{1/n})}
 \\
\leq C \sum_{k=0}^{n-1}\,
\|\jmath_R^*Z_{R,N}^{:k:}(t,\Cdot)\|^p_{L_p^{-\kappa}(\bR^2, v_L^{1/p})}\,.
\end{equation}
\end{prop}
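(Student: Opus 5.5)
We test Eq.~\eqref{eq:spde_psi_stereographic} against $v_L\,\jmath_R^*\varPsi$ in $L_2(\bR^2)$, where we abbreviate $\Psi:=\jmath_R^*\varPsi^g_{R,N}(t,\Cdot)$. This is legitimate by Lemma~\ref{lem:mild_regularity_time_space}, since $\Psi(t)\in L^3_2$ locally and the equation holds in $C((0,\infty),L^{-3}_2)$. We then estimate the resulting terms in order.

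\textbf{Time derivative.} It produces $\tfrac12\partial_t\|\Psi\|^2_{L_2(\bR^2,v_L^{1/2})}$.

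\textbf{Linear operator.} Expand $(1-w_R^{-1}\Delta)(1-w_R^{-1}\Delta/N^2)^2$ into a combination with nonnegative coefficients of $1$, $(-w_R^{-1}\Delta)$, $(-w_R^{-1}\Delta)^2$ and $(-w_R^{-1}\Delta)^3$. Apply Lemma~\ref{lem:wieght_plane_laplacian}(A)--(C) to each piece. The positive gradient parts, $\tfrac12\|\vec\nabla\Psi\|^2_{L_2(\bR^2,w_R^{-1/2}v_L^{1/2})}$ and the analogous higher ones, are retained. The negative parts $-\tfrac18\|\Psi\|^2_{L_2(\bR^2,w_R^{-p/2}v_L^{1/2})}$, $p\in\{1,2,3\}$, are bounded by Remark~\ref{rem:L_two_interms_of_L_n} by $C\|\Psi\|^2_{L_n(\bR^2,v_L^{1/n})}\le\delta\|\Psi\|^n_{L_n(\bR^2,v_L^{1/n})}+C_\delta$.

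\textbf{Nonlinear term.} It gives exactly $\int v_L\Psi^n=\|\Psi\|^n_{L_n(\bR^2,v_L^{1/n})}$, and this is the coercive term.

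\textbf{Terms $a_{m,l}\langle \jmath_R^*Z^{:m-l:},\Psi^{l+1}\rangle_{L_2(\bR^2,v_L^{1/2})}$.} Here $l+1\le n-1$. Apply Lemma~\ref{lem:Z_Psi_n} with weight $w=v_L\in L_1$, exponent $m\to l+1$ and $Z=\jmath_R^*Z^{:m-l:}$. This gives the bound $C\|Z\|^p_{L^{-\kappa}_p(v_L^{1/p})}+\delta\|\vec\nabla\Psi\|^2_{L_2(v_L^{1/2})}+\delta\|\Psi\|^n_{L_n}+\delta$. Since $w_R\le1$, we have $\|\vec\nabla\Psi\|_{L_2(v_L^{1/2})}\le\|\vec\nabla\Psi\|_{L_2(w_R^{-1/2}v_L^{1/2})}$, so the gradient is absorbed by the Laplacian contribution. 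The constant $\delta$ is absorbed into the $k=0$ term, where $Z^{:0:}=1$ and the norm is a positive constant.

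\textbf{Source term with $g$.} The last term equals $(\jmath_R^*\varPhi(t,w_R\jmath_R^*g))^{n-1}\langle \Psi,\jmath_R^*g\rangle_{L_2(v_L^{1/2})}$. Write $h:=w_R\jmath_R^*g$, so that $\jmath_R^*g\,v_L=h\,v_Lw_R^{-1}$. Bound $|\Psi(h)|\le\|\Psi\|_{L_n(v_L^{1/n})}\|v_L^{-1/n}h\|_{L_{n/(n-1)}}$, and similarly for $\langle\Psi,h v_Lw_R^{-1}\rangle$, using that $v_L w_R^{-1}\le v_L^{7/8}$ is bounded. For the $Z$ part use duality, $|\jmath_R^*Z(h)|\le C\|\jmath_R^*Z\|_{L^{-\kappa}_p(v_L^{1/p})}\|v_L^{-1}h\|_{L^\kappa_{p'}(v_L^{1/p'})}$, via Remark~\ref{Hoelder-remark}. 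All these weighted norms of $h$ are dominated by a single Schwartz seminorm, because $v_L^{-1}$ grows only polynomially. Choosing $B$ as a small ball in that seminorm gives a bound of the form $\epsilon(\|\Psi\|_{L_n}^{n}+\|\jmath_R^*Z\|^{n}_{L^{-\kappa}_p})$ after expanding $(\Psi(h)+Z(h))^{n-1}$ binomially. Young's inequality then raises the $Z$ power to $p$, provided we take $p\ge n$.

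Collecting all terms, absorbing the $\delta$ and $\epsilon$ pieces into $\|\Psi\|^n_{L_n}$ and the gradient, and multiplying by $2$ (which accounts for the factor $8$ together with the halves coming from the time derivative) yields the claim.

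The main obstacle is bookkeeping the weights. The main check is that Lemma~\ref{lem:wieght_plane_laplacian} controls the cross terms of the degree-six operator uniformly in $N$ and $R\ge L$, in particular that the $N^{-2}$ and $N^{-4}$ pieces have good sign after the lemma. The second check is that the $g$-term needs only Schwartz seminorms of $w_R\jmath_R^*g$ that are uniform in $R$.
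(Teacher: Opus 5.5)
Your proposal is correct and follows the paper's proof essentially step by step. You test the stereographic equation against $v_L\jmath_R^*\varPsi^g_{R,N}$, control the degree-six operator by Lemma~\ref{lem:wieght_plane_laplacian} and Remark~\ref{rem:L_two_interms_of_L_n}, treat the $Z$--$\Psi$ terms with Lemma~\ref{lem:Z_Psi_n} (absorbing the gradient via $w_R\le 1$), and make the $g$-term small by H\"older and duality on a small Schwartz ball. The only slip is the final normalization: once the coefficient of $\|\jmath_R^*\varPsi^g_{R,N}\|^n_{L_n(\bR^2,v_L^{1/n})}$ is at least $1/16$, you multiply by $16$ (not $2$) to obtain the factor $8$ in front of $\partial_t\|\jmath_R^*\varPsi^g_{R,N}\|^2_{L_2(\bR^2,v_L^{1/2})}$.
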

\begin{proof}
After multiplying both sides of Eq.~\eqref{eq:spde_psi_stereographic} by $v_L\jmath_R^*\varPsi^g_{R,N}$, integrating over $\bR^2$ and applying Lemma~\ref{lem:wieght_plane_laplacian} and Remark~\ref{rem:L_two_interms_of_L_n}, we obtain
\begin{multline}
\frac{1}{2}\partial_t\|\jmath_R^*\varPsi^g_{R,N}(t,\Cdot)\|^2_{L_2(\bR^2, v_L^{\frac{1}{2}})} +\|\jmath_R^*\varPsi^g_{R,N}(t,\Cdot)\|^2_{L_2(\mathbb{R}^2, v_L^{\frac{1}{2}})}
+ (\frac{1}{N^2}+ \frac{1}{2})\|\vec\nabla\jmath_R^*\varPsi^g_{R,N}(t,\Cdot)\|^2_{L_2(\bR^2,v_L^{\frac{1}{2}}w_R^{-\frac{1}{2}})}
\\
+(1-\frac{1}{4N^2} -\frac{1}{8N^4} -\frac{1}{8} -\frac{1}{4N^2} -\frac{1}{8N^4}) \|\jmath_R^*\varPsi^g_{R,N}(t,\Cdot)\|^n_{L_n(\mathbb{R}^2, v_L^{\frac{1}{n}})} 
\\
+ (\frac{1}{2N^4} +\frac{1}{N^2}) \|\Delta\jmath^*_R\Psi^g_{R,N}(t,\Cdot)\|^2_{L_2(w_R^{-\frac{1}{2}} v_L^{\frac{1}{2}})}
\\
+ \frac{1}{2N^4} \|\vec\nabla \Delta\jmath^*_R\Psi^g_{R,N}(t,\Cdot) \|^2_{L_2(w_R^{-\frac{3}{2}} v_L^{\frac{1}{2}})}
 \leq R^{(1)}_{R,N}(t) +R^{(2)}_{R,N}(t) ,
\end{multline}
where
\begin{equation}
\begin{aligned}
R_{R,N}^{(1)}(t)&= \sum_{l=0}^{n-2}\sum_{m=l}^{n-1} \int_{\bR^2} a_{m,l}\,v_L(x)\,(\jmath_R^*Z_{R,N}^{:m-l:})
 \,(\jmath_R^*\varPsi^g_{R,N}(t, x))^{l+1} \rd x,
\\
R_{R,N}^{(2)}(t)&=(\jmath_R^*(\varPsi^g_{R,N}+Z_{R,N})(t,w_R\jmath_R^* g))^{n-1}\,(\jmath_R^*\varPsi^g_{R,N})(t,v_L\jmath_R^*g)\,.
\end{aligned}
\end{equation}
Observe that $w_R^{-1}(x) \geq 1 $ and $N\in\bN_+$. Hence, 
\begin{multline}\label{eq: energy method}
\frac{1}{2}\partial_t\|\jmath_R^*\varPsi^g_{R,N}(t,\Cdot)\|^2_{L_2(\bR^2, v_L^{\frac{1}{2}})} 
+\frac{1}{2}\|\nabla\jmath_R^*\varPsi^g_{R,N}(t,\Cdot)\|^2_{L_2(\bR^2,v_L^{\frac{1}{2}})} 
\\ + \frac{1}{8} \|\jmath_R^*\varPsi^g_{R,N}(t,\Cdot)\|^n_{L_n(\mathbb{R}^2, v_L^{\frac{1}{n}})}
\leq \,|R_{R,N}^{(1)}(t) + R^{(2)}_{R, N}(t)|.
\end{multline}
Moreover, one has
\begin{equation}
 |R^{(1)}_{N,R}(t) |\leq \sum_{l=0}^{n-2}\sum_{m=l}^{n-1}|a_{m,l}|\,|\langle \jmath_R^*Z^{:m-l:}_{R,N}(t,x), (\jmath_R^*\Psi_{R,N}^g(t,x))^{l+1}\rangle_{L_2(\bR^2,v_L)}|\,.
\end{equation}
By Lemma~\ref{lem:Z_Psi_n}, for every $\delta_1\in(0,1)$ there exists $\hat C\in(0,\infty)$ such that
\begin{multline}
|\langle \jmath_R^*Z_{R,N}^{:m-l:}, (\jmath_R^*\Psi^g_{R,N})^{l+1}\rangle_{L_2(\bR^2,v_L^{1/2})}|
\leq \hat C \|\jmath_R^*Z_{R,N}^{:m-l:}(t,\Cdot)\|^{p}_{L_p^{-\kappa}(\bR^2,v_L^{1/p})}
\\
+\delta_1\,\|\nabla\jmath_R^*\varPsi^g_{R,N}(t,\Cdot)\|^2_{L_2(\bR^2, v_L^{1/2})}
+ \delta_1\,\|\jmath_R^*\varPsi^g_{R,N}(t,\Cdot) \|^n_{L_n(\bR^2, v_L^{1/n})}+\delta_1\,.
\end{multline}
By the assumption made in Sec.~\ref{subsec:local_solution_remainder} on the coefficients $a_{m,l}$, there exists $\check C\in(0,\infty)$ such that $|a_{m,l}|\leq \check C$. Thus, 
\begin{equation}
\begin{aligned}
 \sum_{l=0}^{n-2}\sum_{m=l}^{n-1}\hat C\,|a_{m,l}|\,\|\jmath_R^*Z_{R,N}^{:m-l:}(t,\Cdot)\|^{p}_{L_p^{-\kappa}(\bR^2,v_L^{1/p})}
 &=  \sum_{l=0}^{n-2}\sum_{k=0}^{n-1-l}\,C\, \|\jmath_R^*Z_{R,N}^{:k:}(t,\Cdot)\|^{p}_{L_p^{-\kappa}(\bR^2,v_L^{1/p})}
 \\&
 =
 \sum_{k=0}^{n-1}\sum_{l=0}^{(n-1-l)\wedge (n-2)}\, C\, \|\jmath_R^*Z_{R,N}^{:k:}(t,\Cdot)\|^{p}_{L_p^{-\kappa}(\bR^2,v_L^{1/p})}
 \\
 &
\leq  C_1\,\sum_{k=0}^{n-1} \|\jmath_R^*Z_{R,N}^{:k:}(t,\Cdot)\|^{p}_{L_p^{-\kappa}(\bR^2,v_L^{1/p})}
 \end{aligned}
 \end{equation}
for some $C, C_1\in (0,\infty)$. Combining all the above results culminates in
\begin{multline}\label{eq:remainder_1}
|R_{R,N}^{(1)}(t)|\leq C_1 \sum_{k=0}^{n-1}\,\|\jmath_R^*Z_{R,N}^{:k:}(t,\Cdot)\|^{p}_{L_p^{-\kappa}(\bR^2,v_L^{1/p})}
\\
+\delta_1\,\|\nabla\jmath_R^*\varPsi^g_{R,N}(t,\Cdot)\|^2_{L_2(\bR^2, v_L^{1/2})}
+ \delta_1\,\|\jmath_R^*\varPsi^g_{R,N}(t,\Cdot) \|^n_{L_n(\bR^2, v_L^{1/n})},\,
\end{multline}
where the $k=0$ term of the sum above is a constant. Using the fact that
$$\forall \quad k> 1, \quad f(a_i)\in\bR_+\quad :\quad (\sum_{i=1}^n\,f(a_i))^k\leq n^{k-1}\,\sum_{i=1}^n\,f(a_i)^k,$$
one obtains
\begin{equation}
\begin{aligned}
|R^{(2)}_{R, N}(t)|
\leq& 2^{n-2}\,|(\jmath_R^*(\varPsi^g_{R,N})(t,w_R\jmath_R^* g))^{n-1}(\jmath_R^*\varPsi^g_{R,N})(t,v_L\jmath_R^*g)|
\\& +2^{n-2}\,|(\jmath_R^*Z_{R,N})(t,w_R\jmath_R^* g))^{n-1}\,
(\jmath_R^*\varPsi^g_{R,N})(t,v_L\jmath_R^*g)|.
\end{aligned}
\end{equation}
From H\"older's inequality with $1/n+(n-1)/n=1$, one gets
\begin{equation}
\begin{aligned}
|(\jmath_R^*\varPsi^g_{R,N})(t,w_R\jmath_R^* g)) |\leq& \|\jmath_R^*(\varPsi^g_{R,N}(t,\Cdot)\|_{L_n(\bR^2,v_L^{1/n})}\, \|v_L^{-1/n}w_R\jmath^*_Rg\|_{L_{n/(n-1)}(\bR^2)}
\\
|(\jmath_R^*Z_{R,N})(t,w_R\jmath_R^* g))|\leq& \|\jmath^*_RZ_{R,N}(t,\Cdot)\|_{L^{-\kappa}_n(\bR^2, v_L^{1/n})}\, \|v_L^{-1/n}w_R\jmath_R^*g\|_{L^\kappa_{n/(n-1)}(\bR^2)}
\\
|(\jmath_R^*\varPsi^g_{R,N})(t,v_L\jmath_R^*g)|\leq& \|\jmath_R^*(\varPsi^g_{R,N}(t,\Cdot)\|_{L_n(\bR^2,v_L^{1/n})}\, \|v_L^{(n-1)/n}\jmath^*_Rg\|_{L_{n/(n-1)}(\bR^2)}\,.
\end{aligned}
\end{equation}
The above expressions imply that there exists $C_2\in(0,\infty)$ such that for all $\delta_2\in(0,1)$,
\begin{equation}\label{eq:remainder_2}
|R^{(2)}_{R,N}(t)| \leq C_2\,\delta_2^n\,
\|\jmath_R^*\varPsi^g_{R,N}(t,\Cdot)\|^n_{L_n(\bR^2, v_L^{1/n})}
+
C_2\,\delta_2^n\,\|\jmath_R^*Z_{R,N}\|^n_{L_n^{-\kappa}(\bR^2, v_L^{1/n})}
\end{equation}
provided that
\begin{equation}
 \|v_L^{(n-1)/n}\jmath_R^*g\|_{L_{n/(n-1)}(\bR^2)}
 \vee
 \|v_L^{-1/n}w_R\jmath_R^*g\|_{L_{n/(n-1)}(\bR^2)}
 \vee
 \|v_L^{-1/n}w_R\jmath_R^*g\|_{L^\kappa_{n/(n-1)}(\bR^2)}
 \leq \delta_2.
\end{equation}
Now the statement follows by substituting Eqs.~\eqref{eq:remainder_1} and~\eqref{eq:remainder_2} in Eq.~\eqref{eq: energy method} and choosing $\delta_1,\delta_2$ such that $\delta_1\leq1/2$ and $\delta_1+C_2\,\delta_2^n\leq1/16$. This guarantees that all coefficients in the resulting expression are positive and finishes the proof.
\end{proof}

\section{Tightness}\label{sec:tightness}
\begin{prop}\label{prop:tightness}
Let $\kappa\in(0,\infty)$. There exists a ball $B\subset\sS(\bR^2)$, with respect to some Schwartz seminorm centered at the origin and a constant $C\in(0,\infty)$ such that for all $R\in\bN_+$, $R\geq L$,  $N\in \bN_+$ and all $g\in C^\infty(\bS_R)$, $w_R\jmath_R^* g\in B$, it holds that
\begin{equation}
 \int\|\jmath^*_R\phi\|^n_{L_n^{-\kappa}(\bR^2,v_L^{1/n})}\,\mu^g_{R,N}(\rd\phi)\leq C.
\end{equation}
\end{prop}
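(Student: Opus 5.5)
The plan is to use stationarity of the stochastic quantization dynamics together with the a priori bound of Prop.~\ref{prop:energy_method_plane}, and then average in time. Let $\varPhi^g_{R,N}=\varPsi^g_{R,N}+Z_{R,N}$ be the global solution from Lemma~\ref{lem:mild_solution_interacting} and Thm.~\ref{lem:global_solution_Psi}. Take initial data $\phi^g_{R,N}\sim\mu^g_{R,N}$ and $z_{R,N}\sim\nu_{R,N}$, independent of $W_R$. By Thm.~\ref{thm:inv_measu_non_gauss} and Lemma~\ref{lem:inv_measu_gauss}, $\mathrm{Law}(\varPhi^g_{R,N}(t))=\mu^g_{R,N}$ and $\mathrm{Law}(Z_{R,N}(t))=\nu_{R,N}$ for all $t\geq0$. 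Hence for any $T>0$
\begin{equation}
\int\|\jmath^*_R\phi\|^n_{L_n^{-\kappa}(\bR^2,v_L^{1/n})}\,\mu^g_{R,N}(\rd\phi)=\frac1T\int_0^T\bE\,\|\jmath_R^*\varPhi^g_{R,N}(t)\|^n_{L_n^{-\kappa}(\bR^2,v_L^{1/n})}\,\rd t .
\end{equation}
The integrand is at most $2^{n-1}$ times the sum of the corresponding quantities for $\jmath_R^*\varPsi^g_{R,N}$ and $\jmath_R^*Z_{R,N}$. The $Z$ part is time independent, with law $\nu_{R,N}$. Its bound uniform in $R,N$ would come from the stochastic estimates of Appendix~\ref{sec:stochastic_estimates}, in the form of Lemma~\ref{lem:stochastic_X_uniform}, giving uniform bounds on $\bE\|\jmath_R^*X_{R,N}^{:k:}\|^p_{L^{-\kappa}_p(\bR^2,v_L^{1/p})}$ for all $p$. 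For $\varPsi$ we use $\|\cdot\|_{L^{-\kappa}_n(v_L^{1/n})}\lesssim\|\cdot\|_{L_n(v_L^{1/n})}$, which holds by Theorem~\ref{thm:embedding}(A) with equal weights.

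Next, integrate the inequality of Prop.~\ref{prop:energy_method_plane} over $[0,T]$, which is legitimate for $g$ with $w_R\jmath_R^*g\in B$. This gives
\begin{equation}
\int_0^T\|\jmath_R^*\varPsi^g_{R,N}(t)\|^n_{L_n(v_L^{1/n})}\rd t\leq 8\|\jmath_R^*\psi^g_{R,N}\|^2_{L_2(v_L^{1/2})}+C\sum_{k=0}^{n-1}\int_0^T\|\jmath_R^*Z^{:k:}_{R,N}(t)\|^p_{L^{-\kappa}_p(v_L^{1/p})}\rd t .
\end{equation}
Take expectations, using stationarity of $Z$, divide by $T$, and set $T=1$. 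The $Z$ terms are then bounded uniformly by the stochastic estimates above. For the initial term we write $\psi^g_{R,N}=\phi^g_{R,N}-z_{R,N}$, which requires $\bE\|\jmath_R^*\phi\|^2_{L_2(v_L^{1/2})}$ under $\mu^g_{R,N}$, a quantity not yet known to be uniform.

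The initial term is the main obstacle. The first fix to try is to make the initial condition deterministic-free of this issue. One starts $\varPsi$ at zero by choosing $\varPhi^g_{R,N}(0)=z_{R,N}$, which is not stationary, and instead uses invariance only in the Cesàro limit: let $T\to\infty$ so that the initial term divided by $T$ disappears. Since $R,N$ are fixed and $\mu^g_{R,N}$ is invariant, $\frac1T\int_0^T\mathrm{Law}(\varPhi(t))\,\rd t$ converges to $\mu^g_{R,N}$. This holds by uniqueness of the invariant measure, which is available at fixed cutoffs. More simply, keep the stationary start and bound the term by $\frac1T\bE\|\jmath_R^*\psi\|^2$, which is finite for fixed $R,N$ since $\mu^g_{R,N}$ has all moments on $L^1_2(\bS_R)$ (Lemma~\ref{lem:uniform_integrable}). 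Letting $T\to\infty$ kills it without any uniformity. The left side is independent of $T$ by stationarity, so this yields the $T$-independent constant $C$, and completes the proof. One must check that $\bE$ of the $\varPsi$-integrals is finite so that Fubini applies. This follows from the fixed-$(R,N)$ energy bound of Prop.~\ref{prop:global_in_time_existence}.
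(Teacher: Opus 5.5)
Your final route (stationary start, integrate the bound of Prop.~\ref{prop:energy_method_plane} over $[0,T]$, control the initial term by $\frac{8}{T}\bE\|\jmath_R^*\psi^g_{R,N}\|^2_{L_2(\bR^2,v_L^{1/2})}$, which is finite at fixed $R,N$, then let $T$ grow) is essentially the paper's proof, which takes $T=C_{R,N}$; the detour through uniqueness of the invariant measure is unnecessary, and that uniqueness is not established in the paper. One correction: the uniform weighted-plane bounds on $\bE\|\jmath_R^*X^{:k:}_{R,N}\|^p_{L^{-\kappa}_p(\bR^2,v_L^{1/p})}$ that you need are Lemma~\ref{lem:stochastic_bound_infinite_volume}, not Lemma~\ref{lem:stochastic_X_uniform}, whose sphere bounds grow like $R^2$ and so are not uniform in $R$.
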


\begin{proof}
On account of Thm.~\ref{thm:inv_measu_non_gauss}, one has 
$$\mathrm{Law}(\jmath_R^*\varPhi^g_{R,N}(t,\Cdot))=\jmath_R^*\sharp\mu^g_{R,N}\quad \text{for\,\,all} \quad t\geq 0.$$ Hence,
\begin{multline}
\int\|\phi\|^n_{L_n^{-\kappa}(\bR^2,v_L^{1/n})}\,(\jmath_R^*\sharp\mu^g_{R,N})(\rd\phi)
\\=
\int\|\jmath^*_R\phi\|^n_{L_n^{-\kappa}(\bR^2,v_L^{1/n})}\,\mu^g_{R,N}(\rd\phi)
=\bE\|\jmath_R^*\varPhi^g_{R,N}(t,\Cdot)\|^n_{L_n^{-\kappa}(\bR^2,v_L^{1/n})}\,.
\end{multline}
Using Lemma~\ref{lem:inv_measu_gauss}, it follows that $\mathrm{Law}(\jmath_R^*X_{R, N})=\mathrm{Law}(\jmath_R^*Z_{R,N}(t,\Cdot))$ for all $t\geq 0$. Thus, from Lemma~\ref{lem:stochastic_bound_infinite_volume} and
Proposition~\ref{prop:energy_method_plane}, we have
\begin{equation}\label{eq: first priori bound}
8\,\partial_t\bE\|\jmath_R^* \varPsi^g_{R,N}(t,\Cdot)\|^2_{L_2(\bR^2,v_L^{1/2})}
 +
\bE \|\jmath_R^* \varPsi^g_{R,N}(t,\Cdot)\|^n_{L_n(\bR^2,v_L^{1/n})}
\leq C_1 \,
\end{equation}
for some constant $C_1\in(0,\infty)$ independent of $g$, $R,N$ and $t$. The above inequality implies that for all $T\in(0,\infty)$, one has
\begin{multline}\label{eq:tightness_time_average}
\frac{1}{T}\int_0^T\bE \|\jmath_R^* \varPsi^g_{R,N}(t,\Cdot)\|^n_{L_n(\bR^2,v_L^{1/n})}\rd t
\\
\leq C_1-\frac{8}{T}\bE\,\|\jmath_R^*\Psi^g_{R,N}(T,\Cdot)\|^2_{L_2(\bR^2, v_L^{1/2})}
+\frac{8}{T}\bE\|\jmath_R^*\Psi^g_{R,N}(0,\Cdot)\|^2_{L_2(\bR^2, v_L^{1/2})}
\leq C_1 + \frac{C_{R,N}}{T}\,,
\end{multline}
where
\begin{equation}
 C_{R,N}:=8\,\bE\|\jmath_R^*\Psi^g_{R,N}(0,\Cdot)\|^2_{L_2(\bR^2, v_L^{1/2})}\leq 8\,\bE\|\Psi^g_{R,N}(0,\Cdot)\|^2_{L_2(\bS_R)} < \infty
\end{equation}
for every $R,N\in\bN_+$ and $R\geq L$. The last bound follows from Remark~\ref{rem:norm_stereo} and the fact that $w_R\geq w_L$. From the fact that $\jmath_R^*\varPhi^g_{R,N}$ and $\jmath_R^*Z_{R,N}$ are stationary in time, it follows that
\begin{multline}\label{eq: bound for expectation phi}
\bE\|\jmath_R^*\varPhi^g_{R,N}(0, \Cdot)\|^n_{L^{-\kappa}_n(\bR^2, v_L^{1/n})}
=
\frac{1}{T}\int_0^T \bE\|\jmath_R^*\varPhi^g_{R,N}(t, \Cdot)\|^n_{L^{-\kappa}_n(\bR^2, v_L^{1/n})}\rd t
\\
\leq 
{c}\,\bE\|\jmath_R^*Z_{R,N}(0,\Cdot)\|^n_{L^{-\kappa}_n(\bR^2, v_L^{1/n})}+ \frac{c}{T}\int_0^T\bE\|\jmath_R^*\Psi^g_{R,N}(t, \Cdot)\|^n_{L^{-\kappa}_n(\bR^2, v_L^{1/n})}\rd t\,,
\end{multline}
where $c=2^{n-1}$ and we used the fact that for all $n>1$ and $a_i\in\bR_+,\, i\in\{1,\cdots,k\}$, one has $(\sum_{i=1}^k\,a_i)^n\leq k^{n-1}\sum_{i=1}^k\,a_i^n.$
By Lemma~\ref{lem:stochastic_bound_infinite_volume}, there exists $C_2\in(0,\infty)$ such that for all $R,N\in\bN_+$,
\begin{equation}
 \bE\|\jmath_R^*Z_{R,N}(0,\Cdot)\|^n_{L^{-\kappa}_n(\bR^2, v_L^{1/n})}\leq C_2.
\end{equation}
Combining the bounds proved above yields 
\begin{equation}
 \bE\|\jmath_R^*\varPhi^g_{R,N}(0, \Cdot)\|^n_{L^{-\kappa}_n(\bR^2, v_L^{1/n})}\leq c\,C_1+c\,C_2 + \frac{c\,C_{R,N}}{T}
\end{equation}
for all $T\in(0,\infty)$. Choosing $T=C_{R,N}$ concludes the proof.
\end{proof}

\begin{Remark}\label{rem:tightness}
Let $a,b\in\bR$, $n\geq 1$ and $c_n\in (0,\infty)$. From the inequality
$$
\bigl| |a|^n-|b|^n\bigr| \leq c_n\bigl(|a|^{n-1}+|b|^{n-1}\bigr)|a-b|
$$ it follows that
\begin{multline}
\left|
\|\jmath_R^*X_{R,N}\|_{L_n^{-\kappa}(\bR^2,v_L^{1/n})}^n-\|\jmath_R^*X_R\|_{L_n^{-\kappa}(\bR^2,v_L^{1/n})}^n\right|
\\
\leq c_n\left(\|\jmath_R^*X_{R,N}\|_{L_n^{-\kappa}(\bR^2,v_L^{1/n})}^{n-1} +\|\jmath_R^*X_R\|_{L_n^{-\kappa}(\bR^2,v_L^{1/n})}^{n-1}\right) \|\jmath_R^*(X_{R,N}-X_R)\|_{L_n^{-\kappa}(\bR^2,v_L^{1/n})}.
\end{multline}
It holds that
\begin{multline}
\Bigl|\bE\bigl[
\|\jmath_R^*X_{R,N}\|_{L_n^{-\kappa}(\bR^2,v_L^{1/n})}^n \re^{-Y_{R,N}^g}\bigr]
-\bE\bigl[\|\jmath_R^*X_R\|_{L_n^{-\kappa}(\bR^2,v_L^{1/n})}^n \re^{-Y_{R,N}^g}\bigr]\Bigr| 
\\
\leq
\bE\Bigl[
\bigl|\|\jmath_R^*X_{R,N}\|_{L_n^{-\kappa}(\bR^2,v_L^{1/n})}^n-\|\jmath_R^*X_R\|_{L_n^{-\kappa}(\bR^2,v_L^{1/n})}^n\bigr|
\re^{-Y_{R,N}^g}\Bigr]
\\
\leq c_n\,\bE\Bigl[
\bigl(\|\jmath_R^*X_{R,N}\|_{L_n^{-\kappa}(\bR^2,v_L^{1/n})}^{n-1}
+\|\jmath_R^*X_R\|_{L_n^{-\kappa}(\bR^2,v_L^{1/n})}^{n-1}\bigr)
\|\jmath_R^*(X_{R,N}-X_R)\|_{L_n^{-\kappa}(\bR^2,v_L^{1/n})}
\re^{-Y_{R,N}^g}\Bigr].
\end{multline}
From H\"older's inequality combined with Minkowski's inequality one obtains an upper bound for the last expression 
\begin{multline}\label{bound_monet}
c_n
\Bigl[
\bigl(\bE\|\jmath_R^*X_{R,N}\|_{L_n^{-\kappa}(\bR^2,v_L^{1/n})}^{3(n-1)}\bigr)^{1/3}
+\bigl(\bE\|\jmath_R^*X_R\|_{L_n^{-\kappa}(\bR^2,v_L^{1/n})}^{3(n-1)}\bigr)^{1/3}
\Bigr]
\\
\times
\bigl(\bE\|\jmath_R^*(X_{R,N}-X_R)\|_{L_n^{-\kappa}(\bR^2,v_L^{1/n})}^3\bigr)^{1/3}
\bigl(\bE\re^{-3Y_{R,N}^g}\bigr)^{1/3}.
\end{multline}
By Lemma~\ref{lem:stochastic_bound_infinite_volume}, the first two factors in Eq.~\eqref{bound_monet} are uniformly bounded in the limit $N\to\infty$. It follows from the fact that for $q> n\geq 4$, H\"older's inequality with $\frac{n}{q}+\frac{q-n}{q}=1$ in the weighted measure $v_L(x)\,\rd x$ gives
\begin{equation}\label{eq:embedding_weighted}
\|f\|_{L_n^{-\kappa}(\bR^2,v_L^{1/n})} \leq C \|f\|_{L_q^{-\kappa}(\bR^2, v_L^{1/q})}.
\end{equation}
for some $C\in (0,\infty)$. Then let $q=3(n-1)$ and use Lemma~\ref{lem:stochastic_bound_infinite_volume} with $p=q$. Applying
Jensen's inequality for the third term in Eq.~\eqref{bound_monet}, one obtains
\begin{equation}
\bE\left\|\jmath_R^*(X_{R,N}-X_R)\right\|_{L_n^{-\kappa}(\mathbb R^2,v_L^{1/n})}^{3}\\
\leq
(\bE\|\jmath_R^*(X_{R,N}-X_R)\|_{L_n^{-\kappa}(\mathbb R^2,v_L^{1/n})}^{n})^{3/n},
\end{equation}
which tends to zero as $N$ approaches infinity by the second limit in Lemma~\ref{lem:stochastic_bound_infinite_volume}. By Remark~\ref{rem:prob_of_density} the last factor in Eq.~\eqref{bound_monet} is uniformly bounded. Consequently,
\begin{multline}\label{bound_monet_1}
\bE\bigl[\|\jmath_R^*X_{R,N}\|_{L_n^{-\kappa}(\bR^2,v_L^{1/n})}^n\exp(-Y_{R,N}^g)\bigr]-
\bE\bigl[
\|\jmath_R^*X_R\|_{L_n^{-\kappa}(\bR^2,v_L^{1/n})}^n\exp(-Y_{R,N}^g)\bigr] 
\end{multline}
tends to zero in the limit $N\to\infty$. Furthermore, Lemma~\ref{lem:stochastic_Y} implies that $Y_{R,N}^g\to Y_R^g$ in probability. By Lemma~\ref{lem:stochastic_bound_infinite_volume} combined with Eq.~\eqref{eq:embedding_weighted} and Remark~\ref{rem:prob_of_density} one obtains
\begin{multline}
\sup_{N\in\bN_+}\bE\big[
(\|\jmath_R^*X_R\|_{L_n^{-\kappa}(\bR^2,v_L^{1/n})}^n\exp(-Y_{R,N}^g))^2\big]
\\
\leq
\bigl(\bE\|\jmath_R^*X_R\|_{L_n^{-\kappa}(\bR^2,v_L^{1/n})}^{4n}\bigr)^{1/2}
\sup_{N\in\bN_+}\bigl(\bE\exp(-4Y_{R,N}^g)\bigr)^{1/2} <\infty.
\end{multline}
Hence the sequence $\bigl(\|\jmath_R^*X_R\|^n\exp(-Y_{R,N}^g)\bigr)_{N\in\bN_+}$ is uniformly integrable, see Remark~\ref{rem:uni_inte} and observe that by Minkowski's inequality in $L_n(\Omega,\cF,\bP)$ one has
\begin{multline}
\big(\mathbb E\|\jmath_R^*X_R\|_{L_n^{-\kappa}(\mathbb R^2,v_L^{1/n})}^n\big)^{1/n}
\\
\leq
\big(\mathbb E\|\jmath_R^*(X_R-X_{R,N})\|_{L_n^{-\kappa}(\mathbb R^2,v_L^{1/n})}^n\big)^{1/n}+
\big(\mathbb E\|\jmath_R^*X_{R,N}\|_{L_n^{-\kappa}(\mathbb R^2,v_L^{1/n})}^n\big)^{1/n}.
\end{multline}
Now taking the limit $N\to\infty$, Lemma~\ref{lem:stochastic_bound_infinite_volume} implies
\begin{equation}
\bE\|\jmath_R^*X_R\|_{L_n^{-\kappa}(\bR^2,v_L^{1/n})}^{n}
\leq C.
\end{equation}
Note that the $4n$-moment of $\jmath_R^*X_R$ can be shown to be bounded using Lemma~\ref{lem:stochastic_bound_infinite_volume} combined with Eq.~\eqref{eq:embedding_weighted}. From the Vitali theorem~\ref{thm:Lebesgue-Vitali's theorem} one obtains
\begin{equation}\label{bound_monet_2}
\bE\bigl[
\|\jmath_R^*X_R\|_{L_n^{-\kappa}(\bR^2,v_L^{1/n})}^n\exp(-Y_{R,N}^g)\bigr]
\longrightarrow
\bE\bigl[\|\jmath_R^*X_R\|_{L_n^{-\kappa}(\bR^2,v_L^{1/n})}^n\exp(-Y_R^g)\bigr].
\end{equation}
Combining Eqs.~\eqref{bound_monet_1} and~\eqref{bound_monet_2} results in
\begin{equation}
\bE\bigl[
\|\jmath_R^*X_{R,N}\|_{L_n^{-\kappa}(\bR^2,v_L^{1/n})}^n\exp(-Y_{R,N}^g)\bigr] \longrightarrow
\bE\bigl[\|\jmath_R^*X_R\|_{L_n^{-\kappa}(\bR^2,v_L^{1/n})}^n\exp(-Y_R^g)\bigr].
\end{equation}
From Prop~\ref{prop:uv_limit} with $F=1$ one has
\begin{equation}
\bE[\exp(-Y_{R,N}^g)] \longrightarrow \bE[\exp(-Y_R^g)]>0.
\end{equation}
Consequently,
\begin{equation}
\begin{aligned}
\int
\|\jmath_R^*\phi\|^n_{L_n^{-\kappa}(\bR^2,v_L^{1/n})}
\,\mu_R^g(\rd\phi) =
\lim_{N\to\infty}\int\|\jmath_R^*\phi\|^n_{L_n^{-\kappa}(\bR^2,v_L^{1/n})}\,\mu_{R,N}^g(\rd\phi)
\leq C.
\end{aligned}
\end{equation}
On account of Thm.~\ref{thm:embedding} Item~(C), the embedding ${L^{-\kappa}_n(\bR^2, v_L^{1/n})}\to L^{-2\kappa}_n(\bR^2, v_L^{2/n})$ is compact. As a result, by Lemma~\ref{lem:tight_comapct_embedding}, the sequence of measures $(\jmath_R^*\sharp\mu^g_R)_{R\in\bN_+}$ on $L^{-2\kappa}_n(\bR^2, v_L^{2/n})$ is tight. Hence, by the Prokhorov theorem it admits a weakly convergent subsequence, i.e., there exists some sequence $(R_l)_{l\in\bN_+}$ with $R_l\to\infty$ such that for every bounded and continuous functional $F\,:\,L^{-2\kappa}_n(\bR^2, v_L^{2/n})\to\bR$, it holds that
\begin{equation}
 \mu^g(F) =\lim_{l\to\infty}\jmath_{R_l}^*\sharp\mu^g_{R_l}(F)=\lim_{l\to\infty}\lim_{N\to\infty}\jmath_{R_l}^*\sharp\mu^g_{R_l,N}(F)  = \lim_{l\to\infty}\lim_{N\to\infty}\mu^g_{R_l,N}(F\circ\jmath_{R_l}^*)\,.
\end{equation}
Observe that the uniform bound stated in Prop.~\ref{prop:tightness} does not depend on $g\in C^{\infty}(\bS_R)$. Thus, using the preceding proposition with $g=0$, one infers the existence of a weak limit for the family of the measures $(\jmath_R^*\sharp\mu_{R,N})_{R,N\in\bN_+}$, i.e., $\mu$.
\end{Remark}

\begin{dfn}\label{def:PPhi}
Any weak limit of a subsequence of the sequence of measures $(\jmath^*_R\sharp \mu_{R,N})_{R,N\in\bN_+}$ on $\sS'(\bR^2)$ is called a $P(\Phi)_2$ measure.
\end{dfn}

\begin{Remark}\label{embedding_on_plane}
From H\"older's inequality, one obtains
\begin{equation}
\begin{aligned}
\|\jmath_R^*\phi\|_{L^{-1}_2(\bR^2,v_L^{1/2})}
\leq
\|v_L^{1/2-1/n}\|_{L_{2n/(n-2)}(\bR^2)}\,\|\jmath_R^*\phi\|_{L^{-1}_n(\bR^2,v_L^{1/n})}.
\end{aligned}
\end{equation}
This implies that there exists $C\in (0,\infty)$ such that for all $R\in\bN_+$,
\begin{equation}
\int\,\|\jmath_R^*\phi\|^n_{L^{-1}_2(\bR^2,v_L^{1/2})}\,\mu_R(\rd \phi)\leq \int\,\|\jmath_R^*\phi\|^n_{L^{-1}_n(\bR^2,v_L^{1/n})}\,\mu_R(\rd\phi) <C\,.
\end{equation}
It follows from Prop.~\ref{prop:tightness}, combined with the fact that the norm
$$\|v_L^{1/2-1/n}\|_{L_{2n/(n-2)}(\bR^2)}$$
is bounded, since $n\geq 4$ and $v_L\leq 1$.
\end{Remark}

\chapter{Integrability and Osterwalder--Schrader axioms}\label{ch:Osterwalder-Schrader axioms}
Let $\mu$ be any weak limit of the sequence of measures $(\jmath_R^*\sharp\mu_{R,N})_{R,N\in\bN_+}$ as is determined in Def.~\ref{def:PPhi}. Let $B\subset \sS(\bR^2) $ be a ball with respect to some Schwartz seminorm, and $f \in B$ be a test function. In Prop.~\ref{prop:integrability}, we first verify that $(\phi(f))^n$ is exponentially integrable with respect to $\mu$, where $n\geq 4$ is the degree of the polynomial $P(\phi)$ fixed in Eq.~\eqref{eq:polynomial_form}. Note that exponential integrability of $(\phi(f))^n$ manifests the non-Gaussianity property of the measure $\mu$. Next, we shall verify the Osterwalder--Schrader axioms, i.e., regularity, reflection positivity and Euclidean invariance properties, cf. Def~\ref{def:OS_axioms}. Once we verify the Osterwalder--Schrader axioms, the existence of a local relativistic QFT on Minkowski space-time satisfying the Wightman axioms, will follow from the reconstruction theorem~\cite[Chs. 6,\,19]{GlimmJaffe}. 
\section{Integrability}
\begin{prop}\label{prop:integrability}
There exists a ball $B\subset\sS(\bR^2)$ with respect to some Schwartz seminorm centered at the origin such that for all $f\in B$,
\begin{equation}\label{eq:main_thm_bound}
 \int\exp(\phi(f)^n)\,\mu(\rd\phi) 
 \leq 2\,.
\end{equation}
\end{prop}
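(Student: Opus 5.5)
The plan is to combine the Hairer--Steele argument (Lemma~\ref{lem:mu_f_bound}) with the uniform moment bound of Prop.~\ref{prop:tightness}, which holds for the tilted measures $\mu^g_{R,N}$ uniformly in $g$ with $w_R\jmath_R^*g\in B$. Fix $f\in\sS(\bR^2)$ small in a Schwartz seminorm. For each $R$, define $g_R\in C^\infty(\bS_R)$ by $g_R:=(w_R^{-1}f)\circ\jmath_R^{-1}$ (extended by zero at the north pole; if compact support is needed, first approximate $f$ by $C^\infty_{\rc}$ functions). By Remark~\ref{rem:norm_stereo}, this gives $(\jmath_R^*\phi)(f)=\phi(g_R)$ and $w_R\jmath_R^*g_R=f$. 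Replace $f$ by $n^{1/n}f$ and shrink the ball so that the bounds~\eqref{eq:bound_on_g} hold. Then $\exp(\phi(g_R)^n/n)$ equals $\exp((\jmath_R^*\phi)(f)^n)$ for the original $f$.

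First, I apply Lemma~\ref{lem:mu_f_bound} with $\mu=\mu_{R,N}$ and $F(\phi)=\phi(g_R)^n/n$; integrability of $\exp F$ holds by Lemma~\ref{lem:mu_measure_well_defined}. The tilted measure $\mu^F$ is exactly $\mu^{g_R}_{R,N}$ from~\eqref{eq:regularized_measure_g}. This gives
\begin{equation}
\int \exp\big(\phi(g_R)^n/n\big)\,\mu_{R,N}(\rd\phi)\leq \exp\Big(\frac1n\int \phi(g_R)^n\,\mu^{g_R}_{R,N}(\rd\phi)\Big).
\end{equation}

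Next, I bound the exponent. By Remark~\ref{Hoelder-remark} with weight $v_L$,
\begin{equation}
|\phi(g_R)|=|(\jmath_R^*\phi)(f)|\leq C\,\|\jmath_R^*\phi\|_{L^{-\kappa}_n(\bR^2,v_L^{1/n})}\,\|v_L^{-1}f\|_{L^{\kappa}_{n/(n-1)}(\bR^2,v_L^{(n-1)/n})}.
\end{equation}
The last factor is bounded by a Schwartz seminorm $\|f\|_{\mathrm s}$, because $v_L^{-1}$ grows only polynomially. Prop.~\ref{prop:tightness} then gives
\begin{equation}
\int\phi(g_R)^n\,\mu^{g_R}_{R,N}(\rd\phi)\leq C'\|f\|_{\mathrm s}^n,
\end{equation}
uniformly in $R\geq L$ and $N$. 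Choosing the radius of $B$ so that $\exp(C'\|f\|_{\mathrm s}^n/n)\leq 2$ yields
\begin{equation}
\int\exp\big((\jmath_R^*\phi)(f)^n\big)\,\mu_{R,N}(\rd\phi)\leq 2
\end{equation}
for all $R\geq L$ and all $N$.

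Finally, I pass to the limit $N\to\infty$ and then $R=R_l\to\infty$. The functional $\phi\mapsto\exp(\phi(f)^n)$ is continuous on $L^{-2\kappa}_n(\bR^2,v_L^{2/n})$ but unbounded, so I use the truncations $F_M:=\exp(\phi(f)^n)\wedge M$, which are bounded and continuous. Since the integrand is nonnegative, $\int F_M\,\rd\mu_{R,N}\leq 2$. Prop.~\ref{prop:uv_limit} with $g=0$ and the weak convergence from Remark~\ref{rem:tightness} transfer this bound to $\int F_M\,\rd\mu\leq 2$, and monotone convergence in $M$ concludes.

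The main obstacle is making the identification $\phi(g_R)=(\jmath_R^*\phi)(f)$ legitimate. Specifically, $g_R$ must be smooth on $\bS_R$ and satisfy~\eqref{eq:bound_on_g} uniformly in $R$, and the smallness of $f$ must be measured in a single seminorm covering both Prop.~\ref{prop:tightness}'s ball condition and the duality pairing. I would first try $f\in C^\infty_{\rc}$, where $g_R$ is clearly smooth and $\|g_R\|_{L_{n/(n-1)}(\bS_R)}$, $\|\Delta_R g_R\|$ are controlled via $w_R\geq w_L$ and Remark~\ref{rem:comf_laplacian}, and then extend to all of $B$ by density and Fatou's lemma.
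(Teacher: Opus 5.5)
Your proposal is correct and follows the paper's own proof: the Hairer--Steele bound (Lemma~\ref{lem:mu_f_bound}) applied to $\mu_{R,N}$ with $F(\phi)=\phi(g_R)^n/n$, then the weighted duality estimate combined with the uniform moment bound of Prop.~\ref{prop:tightness} for the tilted measures, then a small choice of ball, then removal of the cutoffs. Your rescaling $f\mapsto n^{1/n}f$ and the truncation/monotone-convergence passage to the weak limit supply two steps the paper leaves implicit: its proof stops at the bound for $\exp\big((\jmath_R^*\phi)(f)^n/n\big)$ under $\mu_R$.
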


\begin{proof}
By Remark~\ref{rem:norm_stereo}, for all $f\in\sS(\bR^2)$ and $R\in\bN_+$, there exists $g_R\in C^\infty(\bS_R)$ such that $w_R\jmath_R^* g_R=f$. Now let $B$ be contained in the ball from the statement of Prop.~\ref{prop:tightness} and suppose that $f\in B$. For arbitrary $\phi\in\sD'(\bS_R)$, one has
\begin{equation}\label{eq:integrability_g_identity}
 \phi(g_R)=(\jmath^*_R\phi)(w_R\jmath^*_Rg_R) = (\jmath^*_R\phi)(f).
\end{equation}  
Then by Lemma~\ref{lem:mu_f_bound}, one obtains
\begin{equation}
 \int \exp\left(\phi(g_R)^n/n\right)\,\mu_{R,N}(\rd\phi) 
 \\
 \leq 
 \exp\left(\frac{1}{n}\int \phi(g_R)^n\,\mu^{g_R}_{R,N}(\rd\phi)\right)\,.
\end{equation}
Lemma~\ref{lem:mu_measure_well_defined} implies that the expression on the LHS is integrable. The identity~\eqref{eq:integrability_g_identity}, H\"older's inequality, and Proposition~\ref{prop:tightness} yield
\begin{multline}
\int \phi(g_R)^n\, \mu_{R,N}^{g_R}(\rd\phi) 
\leq  \hat C\,
\|v_L^{-1/n}w_R\jmath^*_Rg_R\|^n_{L^{\kappa}_{n/(n-1)}(\bR^2)}
\,
\int\|\jmath_R^*\phi\|^n_{L_n^{-\kappa}(\bR^2, v_L^{1/n})}\,
\mu_R^{g_R}(\rd\phi)
\\
\leq\,C\,\, \|v_L^{-1/n}w_R\jmath^*_Rg_R\|^n_{L^{\kappa}_{n/(n-1)}(\bR^2)}
\end{multline}
for some constants $\hat C,C\in(0,\infty)$ independent of $R,N$ and $g_R$. Choosing the ball $B$ so that $\|v_L^{-1/n}f\|^n_{L^{\kappa}_{n/(n-1)}(\bR^2)}\leq n/2C$ for all $f\in B$ by the above inequalities and Proposition~\ref{prop:uv_limit}, we obtain
\begin{equation}
 \int \exp\left((\jmath^*_R\phi)(f)^n/n\right)\,\mu_{R}(\rd\phi)
 =
 \lim_{N\to\infty}\int \exp\left(\phi(g_R)^n/n\right)\,\mu_{R,N}(\rd\phi)
 \leq 2\,.
\end{equation}
This concludes the proof.
\end{proof}
\noindent
Recall that $n\geq 4$ in Eq.~\eqref{eq:main_thm_bound}. Prop.~\ref{prop:integrability} implies that $\mu(\rd\phi)$ is non-Gaussian, since Gaussian measures do not integrate functions growing so fast. Furthermore, one has, for $f\in \sS(\bR^2)$,
\begin{equation}\label{eq:first_moment}
\int \exp(|\phi(f)|)\,\mu(\rd\phi) <\infty
\end{equation}
without the restriction $f\in B$. It follows from the fact that for all $\varepsilon>0$,
$$|\phi(f)| = \varepsilon^{-1}|\phi(\varepsilon f)|
= (n^{-1/n}\varepsilon^{-1})\,(n^{1/n}|\phi(\varepsilon f)|).$$
Apply Young’s inequality for $\frac{1}{p}+\frac{1}{n}=1$ with $a = n^{-1/n}\varepsilon^{-1}$, and $b = n^{1/n}|\phi(\varepsilon f)|$
$$|\phi(f)|
\le \frac{1}{p} \,n^{-p/n}\varepsilon^{-p} + |\phi(\varepsilon f)|^n.$$
For any given $f$ and all sufficiently small $\varepsilon>0$, we have $\varepsilon f\in B$. Note that Eq.~\eqref{eq:first_moment} implies that $\mu$ has moments of all orders, since the convergence of the integrals $\int \exp(\phi(f))\,\mu(\rd\phi)$ ensures that $\int |\phi(f)|^m\mu(\rd\phi)$ is finite for all $m\in\bN_+$~\cite[Prop.~6.1.4]{GlimmJaffe}.

\section{Regularity}\label{sec:regularity}
\begin{prop} There holds the  regularity axiom OS0: There is a Schwartz norm $\|\Cdot\|_{\mathrm{s}}$
and $K,\beta>0$ such that for all $m\in \mathbb{N}$ and $f_1,\ldots, f_m\in \sS(\mathbb{R}^2)$
\begin{equation}
\big|\int \phi(f_1)\ldots \phi(f_m)\, \mu(\rd\phi)\big|\leq (m!)^{\beta}\, K^m\, \prod_{i=1}^m \|f_i\|_{\mathrm{s}}.
\end{equation}
\end{prop}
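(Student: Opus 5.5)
The plan is to derive OS0 from the exponential integrability of Prop.~\ref{prop:integrability}, combined with a Hölder reduction and a homogeneity rescaling. First, $S_0=1$ holds because $\mu$ is a probability measure. For the product bound, the generalized Hölder inequality gives
\begin{equation}
\Big|\int \phi(f_1)\cdots\phi(f_m)\,\mu(\rd\phi)\Big|\leq \prod_{i=1}^m\Big(\int|\phi(f_i)|^m\,\mu(\rd\phi)\Big)^{1/m}.
\end{equation}
It therefore suffices to prove a single-test-function estimate of the form $\int|\phi(f)|^m\,\mu(\rd\phi)\leq K^m (m!)^{\beta}\,\|f\|_{\mathrm{s}}^m$ for all $f\in\sS(\bR^2)$ and all $m\in\bN$. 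Taking the $1/m$-th power and multiplying over $i$ then yields the claim with the same $K$ and $\beta$.

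For the single-function estimate, take $\|\Cdot\|_{\mathrm{s}}$ to be (a multiple of) the Schwartz seminorm defining the ball $B$ of Prop.~\ref{prop:integrability}. Choose $r>0$ so that $\{\|f\|_{\mathrm{s}}\leq r\}\subset B$. This is possible provided the seminorm is a genuine norm on $\sS(\bR^2)$; if needed, replace it by the sum of it and some norm that dominates it, for instance a weighted $L^\kappa_{n/(n-1)}$-type norm plus a standard Schwartz norm. By homogeneity, for $f\neq 0$ set $h:=r f/\|f\|_{\mathrm{s}}\in B$. Then $\int|\phi(f)|^m\,\rd\mu=(\|f\|_{\mathrm{s}}/r)^m\int|\phi(h)|^m\,\rd\mu$, and the problem reduces to bounding $\int|\phi(h)|^m\,\rd\mu$ uniformly over $h\in B$.

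For this, use the elementary inequality $x^{k}\leq k!\,\re^{x}$ with $x=|\phi(h)|^n$. Write $m=nk+j$ with $0\leq j<n$, and note $|\phi(h)|^m\leq 1+|\phi(h)|^{n(k+1)}$. Since $n$ is even, $\phi(h)^n=|\phi(h)|^n$, so Prop.~\ref{prop:integrability} gives
\begin{equation}
\int|\phi(h)|^{m}\,\rd\mu\leq 1+(k+1)!\int \exp(\phi(h)^n)\,\rd\mu\leq 1+2\,(k+1)!.
\end{equation}
Because $(k+1)!\leq (m+1)!^{1/n}\leq C^m (m!)^{1/n}$, we obtain $\int|\phi(h)|^m\,\rd\mu\leq C'^m(m!)^{1/n}$, which gives OS0 with $\beta=1/n$ and $K=C'/r$.

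The main obstacle is verifying that the seminorm from Prop.~\ref{prop:integrability} actually controls a neighbourhood, so that the rescaling $h=rf/\|f\|_{\mathrm{s}}$ lands in $B$. Looking at the proof of Prop.~\ref{prop:integrability}, $B$ is defined by $\|v_L^{-1/n}f\|_{L^\kappa_{n/(n-1)}(\bR^2)}$ together with the conditions from Prop.~\ref{prop:tightness}. These quantities are continuous seminorms on $\sS(\bR^2)$ since $v_L^{-1/n}$ grows only polynomially, so each is dominated by a finite Schwartz norm. It remains to check this domination explicitly, using Lemma~\ref{rem:sobolev} and the polynomial growth of $v_L^{-1/n}$.
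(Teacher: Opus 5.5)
Your argument is correct and follows the same route as the paper: a H\"older reduction to a single test function, rescaling into the ball $B$ of Prop.~\ref{prop:integrability}, and its exponential bound. The only difference is that you apply $x^k\le k!\,\re^x$ to $x=|\phi(h)|^n$ rather than to $x=|\phi(f_0)|$ combined with $\re^{|x|}\le \re\,\re^{|x|^n}$, which gives the sharper exponent $\beta=1/n$ instead of the paper's $\beta=1$. The ``main obstacle'' you flag needs no further work, since Prop.~\ref{prop:integrability} already asserts that $B$ is a ball for a Schwartz seminorm, and that is all the rescaling uses.
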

\begin{proof}
Since $\frac{x^m}{m!} \leq \re^{x}$ for all $x\ge 0$, we obtain for $f_0\in B$
\begin{align}
\int |\phi(f_0)|^m\, \mu(\rd \phi)
&\leq m! \int \exp(|\phi(f_0)|)\, \mu(\rd \phi).
\end{align}
Using Eq.~\eqref{eq:main_thm_bound} and the elementary bound $
\exp(|x|)\leq \re\,\exp(|x|^n),\quad n\geq 1$, it follows that
\begin{equation}\label{eq:regularity_2}
\begin{aligned}
\int |\phi(f_0)|^m\, \mu(\rd \phi)
\leq m!\,\re \int\, \exp(|\phi(f_0)|^n)\, \mu(\rd \phi) \leq m!\,(2\,\re).
\end{aligned}
\end{equation}
Now let $f_{1,0},\cdots,f_{m,0}\in B$. By H\"older's inequality,
\begin{equation}\label{eq:regularity_3}
\begin{aligned}
\left|\int \prod_{i=1}^m \phi(f_{i,0})\, \mu(\rd \phi)\right|
&\leq \prod_{i=1}^m \left(\int |\phi(f_{i,0})|^m\, \mu(\rd \phi)\right)^{1/m}.
\end{aligned}
\end{equation}
From Eq.~\eqref{eq:regularity_2}, we obtain
\begin{equation}\label{eq:regularity_4}
\begin{aligned}
\left|\int \prod_{i=1}^m \phi(f_{i,0})\, \mu(\rd \phi)\right|
\le m!\,(2\,\re).
\end{aligned}
\end{equation}
One extends this estimate to arbitrary non-zero $f_1,\cdots,f_m\in\sS(\bR^2)$. Let
$$
f_{i,0}=\varepsilon\, \frac{f_i}{\|f_i\|_{\mathrm{s}}},
$$
where $\varepsilon>0$ is chosen independently of $f_1,\cdots,f_m$ sufficiently small so that $f_{i,0}\in B$ for all $i$. By linearity of $\phi$
$$
\phi(f_{i,0})=\varepsilon \|f_i\|_{\mathrm{s}}^{-1}\phi(f_i),
$$
and hence
\begin{equation}
\prod_{i=1}^m \phi(f_{i,0})
=
\varepsilon^m\, \left(\prod_{i=1}^m \|f_i\|_{\mathrm{s}}^{-1}\right)
\prod_{i=1}^m \phi(f_i).
\end{equation}
Inserting this into~\eqref{eq:regularity_4} yields
\begin{equation}
\varepsilon^m\, \left(\prod_{i=1}^m \|f_i\|_{\mathrm{s}}^{-1}\right)
\left|\int \prod_{i=1}^m \phi(f_i)\,\mu(\rd\phi)\right|
\le m!\,(2\,\re).
\end{equation}
Rearranging gives
\begin{equation}
\left|\int \prod_{i=1}^m\, \phi(f_i)\,\mu(\rd\phi)\right|
\leq m!\,2\re\,\varepsilon^{-m}\, \prod_{i=1}^m \|f_i\|_{\mathrm{s}}.
\end{equation}
Note that $2\re\leq (2\re)^m$. This completes the proof.
\end{proof}

\section{Reflection positivity}\label{sec:reflection_posivity}
In this section we verify that any accumulation point of the family of measures $(\jmath_R^*\sharp\mu_R)_{R\in\bN_+}$ on $\sS'(\bR^2)$ is reflection positive. To this end, wel utilize the fact that for all $R\in\bN_+$, the Gaussian measure $\nu_R$ with covariance $G_R=(1-\Delta_R)^{-1}$ is reflection positive. We then introduce a UV regularization, which preserves the reflection positivity property of the Gaussian measure $\nu_R$. In general, to construct a nonlinear quantum field, it is more convenient to establish reflection positivity for some measures occurring as intermediate steps in the construction~\cite[p.~196]{GlimmJaffe}. Accordingly, we shall first show that for all $R\in\bN_+$, the finite volume measure $\mu_R$ on $\sD'(\bS_R)$ is reflection positive. In Lemma~\ref{lem:reflection_positivity}, we verify the statement for any weak limit of the family of measures $(\jmath_R^*\sharp\mu_R)_{R\in\bN_+}$ on $\sS'(\bR^2)$. This relies on the fact that reflection positivity is preserved under the weak convergence considered here.

\begin{Remark}
The significance of reflection positivity relies on the fact that it yields the existence of the Hilbert space $H$ of quantum states. In particular, it guarantees that the induced inner product is positive definite~\cite[Prop.~6.1.1]{GlimmJaffe}. Moreover, it gives rise to the existence of a positive-energy Hamiltonian, which enables an analytic continuation from Euclidean space to Minkowski space. 
Reflection positivity also establishes a correspondence between the unitary representations of the Euclidean group on the Euclidean side and the Poincar\'e group on the Minkowski side,~cf.~\cite[Sec.~8.2]{NO18}. 
\end{Remark}
\noindent
By Remark~\ref{rem:fundamental_solution_sphere}, the integral kernel of the covariance $G_R=(1-\Delta_R)^{-1}$ has a singularity at the coinciding points. To regularize $G_R$, we use an approximate Dirac delta defined in terms of the geodesic distance. Similar regularization on $\bR^2$ is used in~\cite[Sec.~8.5]{GlimmJaffe}.

\begin{dfn}\label{def:hat-K-g}
Fix $h\in C^\infty(\bR)$ such that $\supp\,h\subset(-1,1)$, and $h=1$ on $[-1/2,1/2]$ with $2\pi\int_0^\infty \theta\,h(\theta)\,\rd\theta=1$. For $R,N\in\bN_+$, consider the  measurable function $\hat K_{R,N}\,:\, \bS_R\times\bS_R\to \mathbb{R}$ defined by 
\begin{equation}
 \hat K_{R,N}(\rx,\ry):=N^2 h(N\,d_R(\rx,\ry)),
 \end{equation}
which is a smooth function with support in a ball of radius $1/N$. For all $R,N\in\bN_+$, one defines the bounded operator $\hat K_{R,N}\,:\, L_2(\bS_R)\to L_2(\bS_R)$ by
\begin{equation}
(\hat K_{R,N}f)(\rx)=\int_{\bS_R}\hat K_{R,N}(\rx,\ry)\,f(\ry)\rho_R(\rd\ry)\,.
\end{equation}
\end{dfn}

\begin{Remark}
The normalization condition on $h$ ensures, by Schur’s test~\ref{lem:schur-test} that $\hat K_{R,N}$ definea a bounded operator on $L_2(\bS_R)$. In particular,
\begin{equation}
\begin{gathered}
\|\hat K_{R,N}\|\leq \sup_{\rx\in\bS_R}\,\int |\hat K_{R,N}(\rx,\ry)|\,\rho_R(\rd\ry),
\quad
\|\hat K_{R,N}\|\leq \sup_{\ry\in\bS_R}\,\int |\hat K_{R,N}(\rx,\ry)|\,\rho_R(\rd\rx)\leq C\,.
\end{gathered}
\end{equation}
Using a change of variable, i.e., $\theta' =N\,R\,\theta$, $\sin\theta\leq \theta$ and the support property of the function $h$, it follows that
\begin{equation}
\begin{aligned}
\|\hat K_{R,N}\|
\leq \sup_{\rx\in\bS_R}\int_{\bS_R}\,N^2 |h(Nd_R(\rx,\ry))|\,\rho_R(\rd \ry)
\leq
 2\pi\int_0^\infty \theta\,|h(\theta)|\,\rd\theta=1\,.
\end{aligned}
\end{equation}
\end{Remark}

\begin{Remark}\label{rem:funk-hecke}
By the Funk-Hecke formula~\cite[Thm.~2.22]{AH12}, one gets
\begin{equation}
\begin{gathered}
(\hat{K}_{R,N}Y_{l,m})(\rx)
= \int_{\bS_R} N^2h(N\,R\,\theta(\rx,\ry))\,Y_{l,m}(\ry)\,\rho_R(\rd\ry)
= h_l \,Y_{l,m}(\rx),
\end{gathered}
\end{equation}
where
\begin{equation}
h_l= \int_{\bS_R}\,N^2h(N\,R\,\theta(\rx,\ry))\,P_l(\rx\cdot\ry/R^2)\,\rho_R(\rd\ry).
\end{equation}

\end{Remark}

\begin{dfn}\label{def:parametrized_sphere}
For all $R,N\in[1,\infty)$, we set
\begin{equation}
\bS_{R,N}^\pm:=\{(\rx_1,\rx_2,\rx_3)\in\bS_R\,|\,\pm\rx_1>1/N\},
\quad
\bS_{R,N} := \bS_{R,N}^+\,\cup\, \bS_{R,N}^-,
\quad
\bS_R^\pm:=\cup_{N\in[1,\infty)}\, \bS_{R,N}^\pm\,.
\end{equation}
\end{dfn}

\begin{dfn}\label{dfn:cylindircal_functionals}
Let $R\in[1,\infty)$. A functional $F\,:\sD'(\bS_R) \to \bC$ is called cylindrical if and only if there exists $k\in\bN_+$, $G\in C^\infty_{\mathrm{b}}(\bR^k)$ and $f_l\in  C^\infty_\rc(\bS_R):= C^\infty(\bS_R)$, $l\in\{1,\ldots,k\}$, such that
\begin{equation}\label{eq:cylindircal}
 F(\phi) = G(\phi(f_1),\ldots,\phi(f_k)).
\end{equation}
The algebra of cylindrical functions is denoted by $\cF_R$. The subalgebras of $\cF_R$ consisting of functionals of the form~\eqref{eq:cylindircal} with $\supp\,f_l\subset \bS_R^\pm$, $l\in\{1,\ldots,k\}$, or such that $\supp\,f_l\subset \bS_{R,N}^\pm$, $l\in\{1,\ldots,k\}$, are denoted by $\cF^\pm_R$ and $\cF^\pm_{R,N}$, respectively. The definitions of $\cF$ and $\cF^\pm$ are analogous to the definitions of $\cF_R$ and $\cF_R^\pm$ with $\bS_R$ and $\bS_R^\pm$ replaced by $\bR^2$ and the half-plane $\{(x_1,x_2)\in\bR^2|\pm x_1>0\}$, respectively. 
\end{dfn}

\begin{dfn}\label{def:invo_sphere}
Let $\Theta_R:\bS_R\to\bS_R$ be the reflection defined by
$$\Theta_R(\rx_1,\rx_2,\rx_3):=(-\rx_1,\rx_2,\rx_3).$$
Observe that $\Theta_R\in O(3)$ and $\det\Theta_R=-1$. For $f\in C^\infty(\bS_R)$, define $\Theta_R f\in C^\infty(\bS_R)$ by $(\Theta_R f)(\rx):=f(\Theta_R\rx).$ For $\phi\in\sD'(\bS_R)$, define $\Theta_R\phi\in\sD'(\bS_R)$ by
$$\langle\Theta_R\phi,f\rangle:=\langle\phi,\Theta_R f\rangle$$
for all $f\in C^\infty(\bS_R)$. Observe that $\Theta_R^2=\mathmybb{1}$, i.e., $\Theta_R$ is an involution.
\end{dfn}

\begin{dfn}\label{def:invo_plane}
Let $\Theta:\bR^2\to\bR^2$ be the reflection defined by
$$\Theta(x_1,x_2):=(-x_1,x_2).$$
Observe that $\Theta\in O(2)$ and $\det\Theta=-1$. For $f\in C^\infty(\bR^2)$, define $\Theta f\in C^\infty(\bR^2)$ by $(\Theta f)(x):=f(\Theta x).$ For $\phi\in\sS'(\bR^2)$, define $\Theta\phi\in\sS'(\bR^2)$ by
$$\langle\Theta\phi,f\rangle:=\langle\phi,\Theta f\rangle$$
for all $f\in\sS(\bR^2)$. Observe that $\Theta^2=\mathmybb{1}$, i.e., $\Theta$ is an involution.
\end{dfn}

\begin{Remark}\label{rem:Z2group}
It holds that $\jmath_R^*\circ \Theta_R = \Theta\circ \jmath_R^*$ on $\sD'(\bS_R)$. Equivalently, for all $\phi\in\sD'(\bS_R)$,
$$\jmath_R^*(\Theta_R\phi)=\Theta(\jmath_R^*\phi).$$
\end{Remark}

\begin{prop}\label{prop:reflection_positivity}
Let $\mu$ be a weak limit of a subsequence of the sequence of measures $(\jmath_R^*\sharp\mu_R)_{R\in\bN_+}$ on $\sS'(\bR^2)$. For all $F\in\cF^+$ it holds
$
\int \overline{F(\Theta\phi)} F(\phi)\,\mu(\rd\phi) \geq 0
$.
\end{prop}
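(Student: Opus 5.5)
The plan has three steps. First, show that each finite-volume measure $\mu_R$ on $\sD'(\bS_R)$ is reflection positive with respect to $\Theta_R$ for functionals in $\cF_R^+$. Second, transport this statement to $\sS'(\bR^2)$ using the intertwining relation of Remark~\ref{rem:Z2group}. Third, pass to the weak limit.

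\textbf{Step 1: finite volume.} I would rely on the reflection-positive regularization of Def.~\ref{def:hat-K-g} rather than the spectral cutoff $K_{R,N}$, which breaks reflection positivity. Set $\hat X_{R,N}:=\hat K_{R,N}X_R$. This is a Gaussian field with covariance $\hat K_{R,N}G_R\hat K_{R,N}$, and it is smooth because $\hat K_{R,N}$ has a smooth kernel. Since $\hat K_{R,N}(\rx,\ry)$ depends only on $d_R(\rx,\ry)$ and has support in $d_R<1/N$, it commutes with $\Theta_R$. Moreover, for $f$ supported in $\bS^+_{R,N}$, the function $\hat K_{R,N}f$ is supported in $\{\rx_1>0\}$: on the sphere, $d_R<1/N$ implies $|\rx_1-\ry_1|\le\|\rx-\ry\|<1/N$.

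Reflection positivity of $\nu_R$ for $G_R=(1-\Delta_R)^{-1}$ with respect to $\Theta_R$ is the standard fact for the free field on a manifold with an isometric reflection. I would take it as known; if needed it can be justified via the Dirichlet/Neumann decomposition across the great circle $\{\rx_1=0\}$. Given this, the same property holds for the law of $\hat X_{R,N}$ on $\cF^+_{R,N}$.

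Next I would write the interaction as $V_{R,N}=V^+_{R,N}+V^-_{R,N}$, using Wick powers of $\hat X_{R,N}$ integrated over the hemispheres $\{\pm\rx_1>0\}$, with Wick ordering taken with respect to the constant variance $\hat c_{R,N}$. The constant variance follows from rotation invariance. Then $V^-=\Theta_R V^+$, and $\re^{-V^+}$ is measurable with respect to the $\sigma$-algebra generated by fields in the right half. Standard Gaussian reflection positivity then gives
$$\int\overline{\Theta_R(F\re^{-V^+})}\,F\re^{-V^+}\,\rd\nu\geq0.$$
Any needed truncation is handled by bounded approximations of $\re^{-V^+}$. Finally, I would remove the cutoff, $N\to\infty$, by repeating the Nelson argument of Sec.~\ref{sec:existence_UV_limit} with $\hat K_{R,N}$ in place of $K_{R,N}$. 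The analogues of Lemmas~\ref{lem:prob_of_density} and \ref{lem:local_covariance} follow because $\hat K_{R,N}$ has eigenvalues $h_l$ (Remark~\ref{rem:funk-hecke}) bounded by $1$ and converging to $1$. This yields $\int\overline{F(\Theta_R\phi)}F(\phi)\,\mu_R(\rd\phi)\geq0$ for $F\in\cF^+_{R,N}$, and hence for all $F\in\cF^+_R$, since any given $F$ lies in $\cF^+_{R,N}$ for $N$ large.

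\textbf{Step 2: transport to the plane.} Take $F\in\cF^+$ with $F(\phi)=G(\phi(f_1),\dots,\phi(f_k))$ and $\supp f_l\subset\{x_1>0\}$. By Remark~\ref{rem:norm_stereo}, $F\circ\jmath_R^*(\phi)=G(\phi(g_1),\dots,\phi(g_k))$ with $g_l=(w_R^{-1}f_l)\circ\jmath_R^{-1}$. These $g_l$ are compactly supported in $\bS_R^+$, since $\jmath_R$ maps $\{x_1>0\}$ onto $\{\rx_1>0\}$. By Remark~\ref{rem:Z2group}, $F(\Theta\jmath_R^*\phi)=F(\jmath_R^*\Theta_R\phi)$. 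Step~1 therefore gives
$$\int\overline{F(\Theta\phi)}F(\phi)\,(\jmath_R^*\sharp\mu_R)(\rd\phi)\geq0.$$

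\textbf{Step 3: weak limit.} Since $G$ is bounded and continuous and $\phi\mapsto\phi(f_l)$ is continuous on the space where tightness holds, the map $\phi\mapsto\overline{F(\Theta\phi)}F(\phi)$ is bounded and continuous. Weak convergence along the subsequence then preserves the inequality.

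The main obstacle is Step~1, specifically checking that the UV limit with the new regularization $\hat K_{R,N}$ reproduces the same $\mu_R$ as the regularization $K_{R,N}$. My route is to show that both $Y_{R,N}$ and the $\hat K$-version converge in $L_2$ to the same Wick-ordered $Y_R$. This requires the covariance-difference estimates of Lemma~\ref{lem:local_covariance}(B) with $h_l$ replacing $(1+l(l+1)/(NR)^2)^{-1}$, using $|1-h_l|\lesssim\min(1,l^2/(NR)^2)$.
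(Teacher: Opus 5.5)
Your architecture matches the paper's: the $\hat K_{R,N}$ regularization, its commutation with $\Theta_R$ and the support argument, a hemispheric split of the interaction, identification of the UV limit with $\mu_R$ via $L_2$ convergence of the $Y$'s, transport by $\jmath_R^*$ and Remark~\ref{rem:Z2group}, and the weak limit. However, Step~1 contains a step that fails as written.

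\textbf{The gap.} You integrate the Wick powers of $\hat X_{R,N}$ over the full open hemispheres $\{\pm\rx_1>0\}$ and then assert that $\re^{-V^+_{R,N}}$ is measurable with respect to the $\sigma$-algebra of the right half. It is not. We have $\hat X_{R,N}(\rx)=X_R(\hat K_{R,N}(\rx,\Cdot))$, and $\hat K_{R,N}(\rx,\Cdot)$ is supported in the geodesic ball of radius $1/N$ about $\rx$. So for $0<\rx_1<1/N$ the integrand of $V^+_{R,N}$ depends on $X_R$ in the left hemisphere, and Gaussian reflection positivity of $\nu_R$ cannot be applied to $F\re^{-V^+_{R,N}}$. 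You also cannot rescue this by taking the law of $\hat X_{R,N}$ as the base measure. You (correctly) established its reflection positivity only on $\cF^+_{R,N}$, i.e.\ for smearings supported in $\{\rx_1>1/N\}$. The covariance $\hat K_{R,N}G_R\hat K_{R,N}$ need not be reflection positive with respect to the whole open hemisphere.

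\textbf{The missing idea.} What is needed is a collar around the reflection circle, as in Glimm--Jaffe. Integrate the interaction only over $\bS^\pm_{R,N}=\{\pm\rx_1>1/N\}$. By the same inequality $|\rx_1-\ry_1|\le d_R(\rx,\ry)$ you already used, this makes $\tilde Y^+_{R,N}$ a functional of $X_R$ on $\{\rx_1>0\}$ only. Then show the discarded equatorial strip is negligible as $N\to\infty$. The paper proves $\bE(\hat Y_{R,N}-\tilde Y_{R,N})^2\le C^2N^{-1}$ in Lemma~\ref{lem:stochastic_Y}, Item~(D). This uses that the strip has volume $O(1/N)$ and that $\|\hat G_{R,N,N}(\rx,\Cdot)\|_{L_m(\bS_R)}$ is bounded uniformly in $N$. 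Only with this estimate does $\tilde Y_{R,N}\to Y_R$, so that the regularized reflection-positive measures converge to $\mu_R$ (Lemmas~\ref{lem:X_Y_covergence_tilde} and~\ref{lem:UV_convergence_tilde}).

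\textbf{A secondary point.} For the Nelson tail bound (the analogue of Lemma~\ref{lem:prob_of_density}) with $\hat K_{R,N}$, it is not enough that $h_l$ is bounded and tends to $1$. That argument needs $\hat c_{R,N}=(2\pi)^{-1}\log N+O(1)$. This requires the uniform decay $|h_l|\lesssim(1+l(l+1)/(NR)^2)^{-1}$, which the paper proves in Lemma~\ref{lem:K_hat} via the Legendre equation. You need it in addition to the bound on $|1-h_l|$ you mention.
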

\begin{Remark}
We call this property the Glimm-Jaffe (GJ) reflection positivity, as it is similar to the variant stated in~\cite{GlimmJaffe}. We will show how to obtain the OS reflection positivity from it (cf.~Def.~\ref{def:OS_axioms}) at the end of this subsection. 
\end{Remark}
\begin{proof}
It is enough to prove that 
\begin{equation}
\begin{aligned}
 \int \overline{F(\Theta\phi)} F(\phi)\,(\jmath_R^*\sharp\mu_R)(\rd\phi) 
& =
 \int \overline{F(\Theta \jmath_R^*\phi)} F(\jmath_R^*\phi)\,\mu_R(\rd\phi)
 \\
&=
 \int \overline{F(\jmath_R^*\Theta_R\phi)} F(\jmath_R^*\phi)\,\mu_R(\rd\phi)
 \geq 0
\end{aligned}
\end{equation} 
for all $R\in\bN_+$ and $F\in\cF^+$. By Def.~\ref{dfn:cylindircal_functionals} and Remark~\ref{rem:norm_stereo}, for every $F\in\cF^+$ it holds $F\circ\jmath_R^*\in\cF^+_R$. Hence, the last bound above follows from the reflection positivity of the measure $\mu_R$, which is given in Lemma~\ref{lem:reflection_positivity}~Item~(D).
\end{proof}
\noindent
We introduce a regularized free field $\hat X_{R,N}$ using the bounded operator $\hat K_{R,N}$ such that its law enjoys reflection positivity, see Lemma~\ref{lem:reflection_positivity}~(B). Next, we show that the measure $\mu_R$ can be approximated using the corresponding regularized measure associated with $\hat X_{R,N}$, see Lemma~\ref{lem:reflection_positivity}~(C).

\begin{dfn}\label{dfn:hat_X_hat_Y}
Let $\hat c_{R,N}:=\mathrm{Tr}(\hat K_{R,N}G_R\hat K_{R,N})/4\pi R^2$. We define $\hat X_{R,N}:=\hat K_{R,N}X_R$,
\begin{equation}
\begin{gathered}
 \hat X_{R,N}^{:m:}:=\sum_{k=0}^{\floor{m/2}} \frac{(-1)^k m!}{(m-2k)!k!2^k} (\hat c_{R,N})^k \hat X_{R,N}^{m-2k},
 \qquad
 \hat X_{R,N}^{:m:}(h):=\int_{\bS_R} \hat X_{R,N}^{:m:}(\rx)\,h(\rx)\,\rho_R(\rd \rx),
 \\
 \hat Y_{R,N}:=\sum_{m=0}^n a_m \hat X_{R,N}^{:m:}(\mathmybb{1}_{\bS_R}),
 \qquad
 \tilde Y_{R,N}^\pm:=\sum_{m=0}^n a_m \hat X_{R,N}^{:m:}(\mathmybb{1}_{\bS^\pm_{R,N}}),
 \qquad
 \tilde Y_{R,N} = \tilde Y^+_{R,N} + \tilde Y^-_{R,N},
\end{gathered} 
\end{equation}
where $h\in L_\infty(\bS_R)$ and $\mathmybb{1}_B$ denotes the characteristic function of the set $B$.
\end{dfn}

\begin{Remark}
Note that $\hat X_{R,N}$ introduced above and $X_{R,N}=K_{R,N}X_R$ introduced in Sec.~\ref{sec:existence_UV_limit} are free fields on $\bS_R$ with different UV cutoffs. We use the same symbol $N\in\bN_+$ to denote both cutoffs.
\end{Remark}

\begin{Remark}
By Lemma~\ref{lem:stochastic_X_not_uniform} Item~(B) and Lemma~\ref{lem:embedding_sphere}, it holds $\hat X_{R,N}\in L^1_2(\bS_R)\subset L_n(\bS_R)$ almost surely. In particular, $\hat Y_{R,N}$, $\tilde Y_{R,N}$ are well-defined. The same conclusion follows from Remark~\ref{rem:two_regularization} and Remark~\ref{rem:hat_c}, combined with the fact that $Y_{R,N}$ given in Eq.~\eqref{dfn:X_Y} is well-defined. Moreover, there exists $C\in(0,\infty)$ such that for all $N,R\in\bN_+$, it holds that
$$|\hat c_{R,N}-\frac{1}{2\pi}\,\log N|\leq C$$
by the bound~\eqref{eq:bound_counterterm} and Remark~\ref{rem:hat_c}.
\end{Remark}

\begin{Remark}\label{rem:K_G_K}
Remark~\ref{rem:funk-hecke} implies that the operators $\hat K_{R,N}$ and $(-\Delta_R)$ both have the same set of eigenfunctions, i.e., they are simultaneously diagonalizable in the spherical harmonics basis. Now using Remark~\ref{rem:commute} Item~(A) and Item~(B) with
$$T=(-\Delta_R),\quad F(T)=(1-\Delta_R/N^2)^{-1}=K_{R,N},\quad \text{and}\quad S=\hat K_{R,N},$$
it follows that for all $R,N\in\bN_+$, the operators $\hat K_{R,N}$ and $K_{R,N}$ commute, i.e., for all $R,N\in\bN_+$, the bounded operator $\hat K_{R,N}$ commutes with every orthogonal projection $(2l+1)\cP_{R,l}$ for all $l\in\bN_+$, see Remark~\ref{rem:commute} Item~(C)\,.
\end{Remark}

\begin{Remark}
Assume that $A,B$ are commuting bounded self-adjoint operators, i.e., $A^*=A$ and $B^*=B$. By the joint spectral theorem, there exists a common measure space and a unitary transformation under which $A$ and $B$ are realized as multiplication operators by real-valued functions $a$ and $b$ on a common $L_2$-space. Hence, $
|a+b|\le |a|+|b|$ almost everywhere, which implies the operator inequality 
$$
\langle f, |A+B| f\rangle \le \langle f, (|A|+|B|) f\rangle,
\quad \forall f\in L_2,
$$
where $|A|=(A^*A)^{1/2}$.
\end{Remark}

\begin{Remark}\label{rem:hat_c}
Recall that $K_{R,N}=(1-\Delta_R/N^2)^{-1}$, $G_R=(1-\Delta_R)^{-1}$ and the counterterms $c_{R,N}$, $\hat c_{R,N}$ were introduced in Eq.~\eqref{eq:counterterm} and Def.~\ref{dfn:hat_X_hat_Y}, respectively. Note that by Remark~\ref{rem:K_G_K}, the operators $G_R,K_{R,N},\hat K_{R,N}$ commute. 
Using Lemma~\ref{lem:K_hat} and the fact that for commuting self-adjoint operators $A,B$, one has
$$|A+B|\leq |A|+|B| \quad \text{and}\quad |A-B|=|A-1-(B-1)|\leq |A-1|+|B-1|,$$
we obtain
\begin{equation}
 |\hat K_{R,N}^2-K_{R,N}^2| \leq |\hat K_{R,N}-K_{R,N}|\,|\hat K_{R,N}+K_{R,N}| \leq 2\,C\,(C+1)\,\frac{(1-\Delta_R)/N^2}{(1-\Delta_R/N^2)^2}\,,
\end{equation}
where we used the fact that
\begin{equation}
|1-K_{R,N}|+|1-\hat K_{R,N}|\leq (C+1)\,\frac{(1-\Delta_R)/N^2}{1-\Delta_R/N^2}\,.
\end{equation}
Observe that  
\begin{equation}
\frac{(1-\Delta_R)/N^2}{(1-\Delta_R/N^2)^2} G_R= N^{-2}\frac{1}{(1-\Delta_R/N^2)^2}= 
\frac{(1-\Delta_R)}{N^{2}}\frac{1}{(1-\Delta_R/N^2)^2} \frac{1}{(1-\Delta_R)}
\end{equation}
and decompose $(1-\Delta_R)/N^2= (1-\Delta_R/N^2)-1+1/N^2$. Consequently, it holds that
\begin{multline}
 |\hat c_{R,N}-c_{R,N}|
 \leq
 \mathrm{Tr}(|\hat K_{R,N}^2-K_{R,N}^2|\,G_R)/4\pi R^2
\\
\leq
\frac{2C(C+1)}{4\pi R^2}\,
 \bigg[\mathrm{Tr}((1-\Delta_R/N^2)^{-1} (1-\Delta_R)^{-1})
 - \mathrm{Tr}((1-\Delta_R/N^2)^{-2} (1-\Delta_R)^{-1})
 \\
 +
 \mathrm{Tr}((1-\Delta_R/N^2)^{-2} (1-\Delta_R)^{-1})/N^2\bigg].
\end{multline}
By Lemma~\ref{lem:trace}, the RHS of the last inequality above is bounded by a constant independent of $R,N\in\bN_+$. Note that all inequalities are understood in the joint spectral representation of commuting self-adjoint operators.
\end{Remark}

\begin{lem}\label{lem:X_Y_covergence_tilde}
The sequence
$(F(\hat X_{R,N}) \exp(-\,\tilde Y_{R,N}))_{N\in\bN_+}$ converges in probability to $F(X_R) \exp(-\,Y_R)$.
\end{lem}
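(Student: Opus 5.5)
The argument mirrors Lemma~\ref{lem:X_Y_covergence}. By Remark~\ref{rem:convergence}, convergence in $L_2(\Omega,\cF,\bP)$ implies convergence in probability, and convergence in probability is preserved under continuous maps. Since $F$ (continuous on $\sD'(\bS_R)$) and $t\mapsto \exp(-t)$ are continuous, and a product of sequences converging in probability converges in probability to the product of the limits, it suffices to prove two things: $\hat X_{R,N}\to X_R$ in probability in some $L_2^{-\kappa-\delta}(\bS_R)$, and $\tilde Y_{R,N}\to Y_R$ in $L_2(\Omega)$.

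\textbf{Step 1: the field.} We compute
$$\bE\|X_R-\hat X_{R,N}\|^2_{L_2^{-\kappa-\delta}(\bS_R)}=\mathrm{Tr}\big((1-\hat K_{R,N})^2 G_R^{1+\kappa+\delta}\big).$$
By Remark~\ref{rem:funk-hecke} this is diagonal in spherical harmonics, with eigenvalues $(1-h_l)^2$. Using the bound $|1-\hat K_{R,N}|\le C\,(1-\Delta_R)/N^2\,(1-\Delta_R/N^2)^{-1}$ from Lemma~\ref{lem:K_hat} (as in Remark~\ref{rem:hat_c}), the trace is controlled by
$$\sum_l (2l+1)\,\ll l\gg^{-1-\kappa-\delta}\,\min(1,\ll l\gg/N^2)^2 \lesssim R^2 N^{-2\delta'}$$
for some $\delta'>0$. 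This is the same computation as in Lemma~\ref{lem:stochastic_X_uniform}.

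\textbf{Step 2: the interaction.} Split the error as
$$\tilde Y_{R,N}-Y_R=(\tilde Y_{R,N}-\hat Y_{R,N})+(\hat Y_{R,N}-Y_R).$$
For the second term, the proof of Lemma~\ref{lem:stochastic_Y} goes through with $K_{R,N}$ replaced by $\hat K_{R,N}$, using Wick's theorem (Lemma~\ref{Nualart-Lemma-1-1}):
$$\bE|\hat X^{:m:}_{R,N}(1)-X_R^{:m:}(1)|^2=\int\!\!\int\Big[(\hat K G_R\hat K)^m-2(\hat K G_R)^m+G_R^m\Big](\rx,\ry)\,\rho_R(\rd\rx)\,\rho_R(\rd\ry).$$
This tends to zero by Lemma~\ref{lem:local_covariance}-type $L_n$ bounds on the kernels. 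Those bounds rest only on the trace estimates and the comparison $|\hat K-K|\lesssim (1-\Delta_R)/N^2\,(1-\Delta_R/N^2)^{-1}$, so they transfer directly.

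For the first term, $\tilde Y_{R,N}-\hat Y_{R,N}=-\sum_m a_m\hat X_{R,N}^{:m:}(\mathmybb{1}_{A_N})$ with $A_N=\bS_R\setminus\bS_{R,N}=\{|\rx_1|\le 1/N\}$, a strip with $\rho_R(A_N)\lesssim R/N$. Its second moment equals
$$\sum_{m}a_m^2\,m!\int_{A_N}\!\int_{A_N}\big(\hat K G_R\hat K\big)(\rx,\ry)^m\,\rho_R(\rd\rx)\,\rho_R(\rd\ry),$$
since the chaoses are orthogonal. By Hölder this is at most $\rho_R(A_N)^{2-2/n'}\,\|(\hat KG_R\hat K)^m\|_{L_{n'}(\bS_R^2)}$. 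The $L_{n'}$-norm grows at most like $(\log N)^m$, by the logarithmic bound of Lemma~\ref{lem:local_covariance}(C) transferred to $\hat K$, so the whole term is $O(N^{-\epsilon}(\log N)^n)\to0$.

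\textbf{Main obstacle.} I expect the hardest step to be establishing the kernel estimates for $\hat K_{R,N}G_R\hat K_{R,N}$. The operator $\hat K_{R,N}$ is defined by a geodesic mollifier rather than spectrally, so I would control it spectrally via Lemma~\ref{lem:K_hat}. One could instead use a pointwise bound $(\hat KG_R\hat K)(\rx,\ry)\lesssim 1+\log N$, which follows from $\hat K$ being a positive-normalized mollifier and $G_R$ having a log singularity. Everything else is routine repetition of the proof of Lemma~\ref{lem:X_Y_covergence}.
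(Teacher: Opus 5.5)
Your proposal is correct and takes essentially the paper's route. The paper likewise proves convergence in probability of $\hat X_{R,N}$ in $L_2^{-\kappa-\delta}(\bS_R)$ via Lemma~\ref{lem:stochastic_X_uniform}(C), which rests on Lemma~\ref{lem:K_hat}, and of $\tilde Y_{R,N}$ via the split $(Y_R-\hat Y_{R,N})+(\hat Y_{R,N}-\tilde Y_{R,N})$ handled by Lemma~\ref{lem:stochastic_Y}(C),(D), then concludes by continuity of $F$ and $\exp$. The only differences are cosmetic: the paper bounds the strip term by rotation invariance and a uniform $L_m$ bound on $\hat G_{R,N,N}(\rx,\cdot)$, giving $O(N^{-1})$ rather than your H\"older-plus-$\log N$ estimate, and your Wick formula for $\bE|\hat X^{:m:}_{R,N}(1)-X_R^{:m:}(1)|^2$ is missing a harmless factor $m!$.
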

\begin{proof}
It follows from Lemma~\ref{lem:stochastic_X_uniform} that for all $\kappa\in(0,\infty)$ and $\delta\in[0,2]$, one has
\begin{equation}
\bP\big(\|X_R- \hat X_{R,N}\|^2_{L_2^{-\kappa-\delta}(\bS_R)}\geq \epsilon\big)
\leq\,\epsilon^{-1}\,\bE\|X_R- \hat X_{R,N}\|_{L_2^{-\kappa-\delta}(\bS_R)}^2\leq \epsilon^{-1}\,R^2\,C^2\,N^{-2\delta}
\end{equation}
and by Lemma~\ref{lem:stochastic_Y} Item~(C)~and~Item~(D), one obtains
\begin{equation}
\begin{gathered}
\bP\big( |Y_R-\tilde{Y}_{R,N}|^2\geq \epsilon\big) 
\leq
\bP\big( |Y_R-\hat{Y}_{R,N}|^2 \geq \epsilon/2 \big) + \mathbb{P}\big( |\hat{Y}_{R,N} -\tilde{Y}_{R,N}|^2\geq \epsilon/2 \big)
\\
\leq \,\epsilon^{-1}\,C^2\,\big(N^{-1/n}+N^{-1}\big)\,.
\end{gathered}
\end{equation}
Using the above bounds, one shows that the sequence $(\hat X_{R,N})_{N\in\bN_+}$ and $(\tilde Y_{R,N})_{N\in\bN_+}$ converge in probability as $N$ approaches infinity to the random variables $X_R$ and $Y_R$. One concludes the statement using the fact that $F\,:\,\sD'(\bS_R)\to\bR$ and $t\mapsto\exp(t)$ are continuous. Hence, by Remark~\ref{rem:convergence} the sequences $(F(X_{R,N}))_{N\in\bN_+}$ and $(\exp(-\,\,\tilde Y_{R,N}))_{N\in\bN_+}$ converge in probability to $F(X_R)$ and $\exp(-Y_R)$. This finishes the proof.
\end{proof}

\begin{lem}\label{lem:UV_convergence_tilde}
For all $R\in\bN_+$, and all bounded and continuous $F\,:\,\sD'(\bS_R)\to \bR$, it holds that
\begin{equation}
 \lim_{N\to\infty} \bE F(\hat X_{R,N})\, \exp(-\tilde Y_{R,N}) = \lim_{N\to\infty} \bE F(X_{R,N})\, \exp(-Y_{R,N}). 
\end{equation}
\end{lem}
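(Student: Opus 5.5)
The plan is to show that both limits equal the common value $\bE F(X_R)\exp(-Y_R)$. For the right-hand side this is exactly Prop.~\ref{prop:uv_limit} with $g=0$ and $p=1$. For the left-hand side I will run the same Vitali scheme as in the proof of Prop.~\ref{prop:uv_limit}: convergence in probability together with uniform integrability, then Thm.~\ref{thm:Lebesgue-Vitali's theorem}.

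Convergence in probability of $F(\hat X_{R,N})\exp(-\tilde Y_{R,N})$ to $F(X_R)\exp(-Y_R)$ is already given by Lemma~\ref{lem:X_Y_covergence_tilde}. It then suffices to establish uniform integrability. Since $F$ is bounded, by Remark~\ref{rem:uni_inte} it is enough to show
\begin{equation}
 \sup_{N\in\bN_+}\bE\exp(-p\,\tilde Y_{R,N})<\infty \quad\text{for some } p>1 .
\end{equation}
As in Lemma~\ref{lem:uniform_integrable}, via Lemma~\ref{lem:E_F_X} this reduces to a tail bound of the form $\bP(-\tilde Y_{R,N}>t)\leq C\exp(-\exp(c\,t^{2/n}))$, uniform in $N$. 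That is, we need the analogue of Lemma~\ref{lem:prob_of_density} for $\tilde Y_{R,N}$.

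To prove the tail bound I will repeat the argument of Lemma~\ref{lem:prob_of_density} with $Y^g_{R,M}$ replaced by $\tilde Y_{R,M}$; this step is the main obstacle. It needs two inputs.

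The first is a pointwise lower bound $\tilde Y_{R,M}\geq -A\,\hat c_{R,M}^{n/2}\,|\bS_R|$. This follows from Lemma~\ref{lem:polynomial_bound} applied pointwise to $P(\hat X_{R,M}(\rx),\hat c_{R,M})$ and integrated over $\bS^\pm_{R,M}\subset\bS_R$. The restriction to a subset is harmless, because the bound only uses that $P(\tau,c)\geq -Ac^{n/2}$. We also need $\hat c_{R,M}\geq 1$ for large $M$, which follows from $|\hat c_{R,M}-\frac{1}{2\pi}\log M|\leq C$ (Remark~\ref{rem:hat_c}).

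The second input is a moment bound $\bE|\tilde Y_{R,N}-\tilde Y_{R,M}|^2\leq C\,M^{-\gamma}$ for $N\geq M$ and some $\gamma>0$. This I would take from Lemma~\ref{lem:stochastic_Y} Items~(C),(D) by the triangle inequality through $Y_R$: those items give $\bE|\hat Y_{R,N}-Y_R|^2\lesssim N^{-1/n}$ and $\bE|\tilde Y_{R,N}-\hat Y_{R,N}|^2\lesssim N^{-1}$. Since $\tilde Y_{R,N}-\tilde Y_{R,M}$ lies in the inhomogeneous chaos of order $n$, Nelson's estimate (Lemma~\ref{lem:nelson}) then yields the stretched-exponential moment used in \eqref{eq:uniform_uv_2}.

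The remaining steps (Markov's inequality, the comparison $c\leftrightarrow\log M$, handling $N<M$, interpolating between the points $t_M$) go through verbatim with $c_{R,M}$ replaced by $\hat c_{R,M}$, since the latter satisfies the same logarithmic two-sided bound.

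With uniform integrability in hand, Vitali gives $\lim_N\bE F(\hat X_{R,N})\exp(-\tilde Y_{R,N})=\bE F(X_R)\exp(-Y_R)$. Combined with Prop.~\ref{prop:uv_limit}, both limits exist and agree, which proves the lemma.
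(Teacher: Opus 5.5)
Your proposal is correct and follows the paper's own route. Both sequences converge in probability to $F(X_R)\exp(-Y_R)$, by Lemma~\ref{lem:X_Y_covergence_tilde} and by Prop.~\ref{prop:uv_limit} with $g=0$. Uniform integrability of the hatted sequence then comes from rerunning Lemmas~\ref{lem:prob_of_density} and~\ref{lem:uniform_integrable}, which the paper simply asserts works ``verbatim''. Your spelled-out inputs for that rerun are exactly what it needs: the pointwise bound $\tilde Y_{R,M}\geq -A\,\hat c_{R,M}^{n/2}\,\mathrm{vol}(\bS_R)$, and $\bE|\tilde Y_{R,N}-\tilde Y_{R,M}|^2\lesssim M^{-1/n}$ obtained through Lemma~\ref{lem:stochastic_Y}~(C),(D).
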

\begin{proof}
The proof follows from the same strategy as in the proof of the proposition~\ref{prop:uv_limit}.
By Lemma~\ref{lem:X_Y_covergence_tilde}, the sequences
$$(F(\hat X_{R,N}) \exp(-\tilde Y_{R,N}))_{N\in\bN_+}\quad\text{and}\quad (F(X_{R,N}) \exp(-Y_{R,N}))_{N\in\bN_+}$$
converge in probability to $F(X_R) \exp(-Y_R)$. To conclude we show that the aforementioned sequences are uniformly integrable by repeating verbatim the argument from the proof of Proposition~\ref{prop:uv_limit}.
\end{proof}

\begin{lem}\label{lem:reflection_positivity}
The following statements hold true for all $R,N\in\bN_+$:
\begin{enumerate}
 \item[(A)] If $F\in\cF_R^+$, then $\bE \overline{F(\Theta_R X_R)} F(X_R)\geq 0$.
 \item[(B)] If $F\in\cF_{R,N}^+$, then $\bE \overline{F(\Theta_R \hat X_{R,N})} F(\hat X_{R,N})\geq 0$.
 \item[(C)] If $F\in\cF_{R,N}^+$, then $\bE \overline{F(\Theta_R \hat X_{R,N})} F(\hat X_{R,N})\exp(-\tilde Y_{R,N})\geq 0$.
 \item[(D)] For all $F\in\cF^+_R$ it holds
$
\int \overline{F(\Theta_{R}\phi)} F(\phi)\,\mu_R(\rd\phi) \geq 0
$.
\end{enumerate}
\end{lem}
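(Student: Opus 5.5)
We prove the four items in order, each building on the previous one.

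\textbf{Item (A): reflection positivity of the free field on the sphere.} Our plan is to reduce to the standard fact that the Gaussian measure with covariance $G_R=(1-\Delta_R)^{-1}$ is reflection positive with respect to the great-circle plane $\{\rx_1=0\}$. By density of exponentials in $\cF_R^+$ (trigonometric polynomials $\sum_j c_j\re^{i\phi(f_j)}$ approximate $G(\phi(f_1),\dots,\phi(f_k))$ boundedly and pointwise), it suffices to show that the matrix $\bE\,\overline{\exp(i X_R(\Theta_R f_j))}\exp(iX_R(f_k))$ is positive semidefinite for $\supp f_j\subset\bS_R^+$. This matrix equals
\[
\re^{-\frac12\langle f_j,G_Rf_j\rangle}\re^{-\frac12\langle f_k,G_Rf_k\rangle}\exp\big(\langle \Theta_Rf_j,G_Rf_k\rangle\big).
\]
By Schur's product theorem, it then suffices that $\langle\Theta_R f,G_R f\rangle\ge0$ for $\supp f\subset\bS_R^+$. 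This we obtain by the classical Dirichlet–Neumann argument: $G_R=G^D+P$, where $G^D$ is the Dirichlet Green function of the hemisphere, extended by zero across $\{\rx_1=0\}$. The cross term $\langle\Theta_R f,G_R f\rangle$ equals $\langle \Theta_R f, P f\rangle$, and this can be written as the boundary quadratic form $\int_{\{\rx_1=0\}}u\,(\mathrm{DtN})u\ge0$ with $u=G_Rf|_{\{\rx_1=0\}}$, using the $\Theta_R$-invariance of $\Delta_R$. Alternatively, one can invoke the known reflection positivity of $(1-\Delta_g)^{-1}$ on manifolds with isometric reflection.

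\textbf{Item (B): passing to the regularized field.} We use $\hat X_{R,N}=\hat K_{R,N}X_R$. Since $\hat K_{R,N}(\rx,\cdot)$ is supported in a geodesic ball of radius $1/N$ and commutes with $\Theta_R$ (the kernel depends only on $d_R$, which is $O(3)$-invariant), we have $\hat X_{R,N}(f)=X_R(\hat K_{R,N}f)$ with $\supp\hat K_{R,N}f\subset\bS_R^+$ whenever $\supp f\subset\bS^+_{R,N}$. Here $\rx_1>1/N$ together with $|\rx_1-\ry_1|\le\|\rx-\ry\|\le d_R<1/N$ gives $\ry_1>0$. Moreover, $\Theta_R\hat X_{R,N}=\hat K_{R,N}\Theta_R X_R$. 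Hence $F(\hat X_{R,N})=\tilde F(X_R)$ with $\tilde F\in\cF_R^+$ and $F(\Theta_R\hat X_{R,N})=\tilde F(\Theta_RX_R)$, so (B) follows from (A).

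\textbf{Item (C): including the interaction.} Write $\exp(-\tilde Y_{R,N})=\exp(-\tilde Y^+_{R,N})\exp(-\tilde Y^-_{R,N})$, and note that $\tilde Y^-_{R,N}=\Theta_R$ applied to $\tilde Y^+_{R,N}$, i.e.\ $\tilde Y^-_{R,N}(\hat X)=\tilde Y^+_{R,N}(\Theta_R\hat X)$. This uses that $\hat c_{R,N}$ is a constant and that $\Theta_R$ maps $\bS^+_{R,N}$ to $\bS^-_{R,N}$ preserving $\rho_R$. Since $\hat X_{R,N}$ is continuous, the function $H(\hat X)=F(\hat X)\exp(-\tilde Y^+_{R,N}(\hat X))$ depends only on $\hat X_{R,N}|_{\bS^+_{R,N}}$. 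We approximate it by cylindrical functionals in $\cF^+_{R,N}$: first truncate $\exp(-\tilde Y^+)$ by a bounded smooth cutoff, then approximate the integral over $\bS^+_{R,N}$ by Riemann sums of smeared values $\hat X(f_l)$ with $f_l$ supported in $\bS^+_{R,N}$. Applying (B) and passing to the limit by dominated convergence, using the exponential integrability from Lemma~\ref{lem:uniform_integrable} type bounds (Nelson's estimate, as in Lemma~\ref{lem:prob_of_density}, now for $\tilde Y$), we get (C). The approximation step is where we expect the most technical care; we would do it in $L_2(\Omega)$ via the chaos bounds of Lemma~\ref{lem:stochastic_Y}.

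\textbf{Item (D): the UV limit.} Given $F\in\cF^+_R$, its test functions lie in $\bS^+_{R,N_0}$ for some $N_0$, so $F\in\cF^+_{R,N}$ for all $N\ge N_0$. Lemma~\ref{lem:UV_convergence_tilde}, applied to the bounded continuous functional $\phi\mapsto\overline{F(\Theta_R\phi)}F(\phi)$, together with Prop.~\ref{prop:uv_limit}, identifies $\lim_N$ of the left side of (C) with $\mathcal Z_R\int\overline{F(\Theta_R\phi)}F(\phi)\,\mu_R(\rd\phi)$, taking real and imaginary parts separately. Since $\mathcal Z_R>0$, nonnegativity passes to the limit, which gives (D).
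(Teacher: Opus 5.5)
Your proposal is correct and follows the paper's proof. For (B) you commute $\hat K_{R,N}$ with $\Theta_R$ and use its $1/N$-localized kernel to land in $\cF_R^+$; for (C) you split $\tilde Y_{R,N}=\tilde Y^+_{R,N}+\tilde Y^-_{R,N}$ and approximate $F\exp(-\tilde Y^+_{R,N})$ by cylindrical functionals via Riemann sums and dominated convergence; for (D) you pass to the limit with Lemma~\ref{lem:UV_convergence_tilde}.

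There are two differences. In (A) you give an actual proof (Schur-product reduction to $\langle\Theta_R f,G_Rf\rangle\ge0$, then the Dirichlet-to-Neumann boundary form), whereas the paper simply cites Dimock and Anderson. In (C) your truncation and Nelson-type integrability bounds are unnecessary: at fixed $N$, Lemma~\ref{lem:polynomial_bound} bounds $\tilde Y^+_{R,N}$ and all its Riemann-sum approximants below by a deterministic constant, so every approximant is uniformly bounded. The paper also uses point values $\hat K_{R,N}X_R(\rx_j)=X_R(\hat K_{R,N}(\rx_j,\cdot))$, which lets it invoke (A) directly rather than going through (B).
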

\begin{proof}
For the proof of Item~(A) see~\cite[Thm.~2]{dimock2004} or~\cite[Thm.~8.3]{And13}. To prove Item~(B), denote by $\rx^-$ the reflection of $\rx\in \bS_R$ corresponding to the action of $\Theta_R$. Observe that $d_R(\rx^-,\ry) = d_R(\rx, \ry^-)$, which implies that $\hat K_{R,N}(\rx^-,\ry)= \hat K_{R,N}(\rx,\ry^-)$. Hence, by Remark~\ref{rem:commute}, one deduces that $[\Theta_R,\hat K_{R,N}]=0$, which yields
$$\Theta_R \hat K_{R,N} X_R = \hat K_{R,N} \Theta_R X_R.$$
By assumption $F\in\cF^+_{R,N}$. Observe that for all $f\in C^\infty(\bS_R)$ it holds
$$(\hat K_{R,N}X_R)(f) = X_R(\hat K_{R,N}f),$$
which implies that
\begin{equation}\label{eq:cylindircal_2}
 F(\hat K_{R,N}X_R) = G(X_R(\hat K_{R,N}f_1),\ldots,X_R(\hat K_{R,N}f_k)).
\end{equation}
Utilizing the fact that  $f$ is supported in $\bS_{R,N}^+$ and the integral kernel of $K_{R,N}$ is supported in a $1/N$-neighbourhood of the diagonal, $\hat K_{R,N}f$ belongs to $\cF_R^+$. In fact, positivity is preserved because the support restriction on $\hat K_{R,N}$ ensures that convolution maps test functions supported in $\bS_{R,N}^{+}$ to test functions supported in $\bS_R^{+}$. Consequently, the statement follows from Item~(A). To prove Item~(C), note that
\begin{equation}
\bE \overline{F(\Theta_R \hat X_{R,N})} F(\hat X_{R,N})\exp(-\tilde Y_{R,N})
=
\bE \overline{F(\Theta_R \hat X_{R,N})\exp(-\tilde Y^-_{R,N})}F(\hat X_{R,N})\exp(-\tilde Y^+_{R,N})\,.
\end{equation}
Denote $H(\hat X_{R,N}) := F(\hat X_{R,N})\,\exp(-\tilde Y_{R,N}^+)$. Using $\tilde{Y}^{-}_{R,N}(X_{R,N}) = \tilde{Y}^{+}_{R,N}(\Theta_R X_{R,N})$, one infers
\begin{equation}
\bE \overline{F(\Theta_R \hat X_{R,N})} \,F(\hat X_{R,N})\,\exp(-\tilde Y_{R,N})=\bE\overline{ H(\Theta_R\hat X_{R,N})} \,H(\hat X_{R,N})\,.
\end{equation}
Let us now show that (C) follows from (A). Denote $H(X_{R}):=F(\hK X_{R}) \exp(-\ti{Y}^+_{R,N})$. We want to approximate $H$ by $H_k\in \mcF_R^+$
in the topology of $L_2(\sD'(\bS_R),\nu_R)$. We recall the definition of $\ti{Y}^+_{R,N}$:
\beqa
\ti{Y}^+_{R,N}=\int \, P((\hK X_{R}(\rx)), \hat{c}_{R,N}) \one_{\bS^+_{R,N}}(\rx)\,\rho_R(\rd\rx).
\eeqa
By continuity of $\hK X_{R,N}$ a.e. in $\om \in \Om$,  we can approximate the integral by a Riemann sum
\beqa
(\ti{Y}^+_{R,N})_k:=\sum_{\rx_j\in \mrm{Part}_k} (\De \rx)_j\, P((\hK X_{R}(\rx_j)), \hat{c}_{R,N}) \,\one_{\bS^+_{R,N}}(\rx_j),
\eeqa
which converges pointwise in $\om$. Partitions of the sphere can be obtained, e.g., in spherical coordinates, in which case
$(\De x)_j=\sin(\theta_i)\De \theta_i \De\varphi_i$. Exploiting again the convergence of the Riemann sums, we
can assume   $\sum_{\rx_j\in \mrm{Part}_k} (\De \rx)_j \leq 2\,\mrm{vol}(\bS_R)$, for sufficiently fine partitions. 
Then, by Lemma~\ref{lem:polynomial_bound},
\beqa
(\ti{Y}^+_{R,N})_k \2 \geq \2 
- \sum_{x_j\in \mrm{Part}_k} (\De \rx)_j\,A\,c^{n/2}   \one_{\bS^+_{R,N}}(\rx_j)
 \geq  - A\,c^{n/2}2\, \mrm{vol}(\bS_R). 
\eeqa
Hence
\begin{equation} \label{bound-DC}
\exp(- (\ti{Y}^+_{R,N})_k) \leq \exp( A\,c^{n/2}2 \mathrm{Vol}(\bS_R) ).
\end{equation}
We note that $H_{0,k}$ is cylindrical due to smearing of  $X_R$ with $\hK$.
Now we want to show that the cylindrical functions 
\beqa
H_k(X_{R}):= F(\hK X_{R})\exp(-(\ti{Y}^+_{R,N})_k)
\eeqa
approximate $H$ in  $L_2(\sD'(\bS_R),\nu_R)$, i.e.,
\begin{equation}\label{convergence-to-zero} 
\lim_{k\to \infty}\int\, |H(\phi) -H_k(\phi)|^2\, \nu_R(\rd\phi)=0.
\end{equation}
We note that by Eq.~\eqref{bound-DC}
\beqa
|H_k(\phi)|:= |F(\hat K \phi )| \exp(-H_{0,k}(\phi)) \leq C,
\eeqa
uniformly in $k$, thus dominated convergence gives~\eqref{convergence-to-zero}. This concludes the proof of (C). Let us turn to the proof of Item~(D). First note that for any $F\in\cF^+_R$ there exists $M\in\bN_+$ such that $F\in\cF^+_{R,M}$. Hence, it suffices to show that $\int \overline{F(\Theta_R\phi)} F(\phi)\,\mu_R(\rd\phi) \geq 0$ for all $R,M\in\bN_+$ and $F\in\cF^+_{R,M}$. To establish this claim we note that by Lemma~\ref{lem:UV_convergence_tilde}
\begin{equation}\label{eq:positivity_mu_R_approximation}
\int \overline{F(\Theta_R\phi)} F(\phi)\,\mu_R(\rd\phi)
=
\lim_{N\to\infty} \frac{\bE\overline{F(\Theta_R \hat X_{R,N})} F(\hat X_{R,N})\exp(-\tilde Y_{R,N})}{\bE \exp(-\tilde Y_{R,N}) }
\end{equation}
and use Item~(C) together with the fact that $\cF^+_{R,M}\subset\cF^+_{R,N}$ for all $N\geq M$.
\end{proof}
\subsubsection*{The Glimm-Jaffe reflection positivity}
Recall that cylindrical functions $F \in \mathcal{F}$ have the form 
\begin{equation}
F(\phi)= G(\phi(f_1), \ldots, \phi(f_m)),
\end{equation}
where $f_1, \ldots, f_m\in \sS(\mathbb{R}^2)$ and $G$ is a smooth, bounded, complex-valued function. 
We have the following
\begin{equation}
 \int \overline{F(\Theta\phi)} F(\phi)\,(\jmath_R^*\sharp\mu_R)(\rd\phi) \geq 0.
\end{equation}
We consider the Schwinger functions $S_m\in \sS'(\bR^{2m})$. We aim to verify the OS
reflection positivity which we recall here for convenience:
\begin{itemize}
\item[OS2:]  Let $\mathbb{R}^{dm}_+:=\{ (x^{(1)}, \ldots, x^{(m)})\in \mathbb{R}^{dm}\, : x_0^{(j)}>0,\, j=1, \ldots, m   \}$ and 
\begin{equation}
\sS(\mathbb{R}_+^{dm}; \mathbb{C}):=\{\, f\in \sS(\mathbb{R}^{dm}; \mathbb{C})\,:\, \mathrm{supp}(f)\subset  \mathbb{R}^{dm}_+ \}.
\end{equation}
For all sequences $f^{(m)}\in \sS(\mathbb{R}^{dm}_+; \mathbb{C})$, $m\in \mathbb{N}_0$, with finitely many non-zero elements
\begin{equation}
\sum_{\ell,m \in  \mathbb{N}_0} S_{\ell+m}( \overline{\Theta f^{(\ell)}} \otimes f^{(m)})\geq 0.
\end{equation}
\end{itemize}

\begin{lem}\label{lem:reflection_density}
It suffices to show reflection positivity for $f^{(m)}_{*}\in \sS(\mathbb{R}^{dm}_+; \mathbb{C})$ of the form
\begin{equation}
f_*^{(m)}=\sum_{\alpha\in \mathcal{I}_m, |\alpha|=m} f^{\alpha}, \label{eq:factorization}
\end{equation}
where  $f^{\alpha}(x^{(1)}, \ldots, x^{(m)} )=f_{\alpha_1}(x^{(1)}) \ldots  f_{\alpha_m}(x^{(m)})$,  $f_{\alpha_i}\in \sS(\mathbb{R}^{d}_+)$
and $\mathcal{I}_m$ is a finite set of multiindices.
\end{lem}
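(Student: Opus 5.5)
The strategy is a density-and-continuity argument for the quadratic form defining OS2. Fix finitely many nonzero $f^{(m)}\in\sS(\bR^{dm}_+;\bC)$, $m\le M$, and set
$$\mathcal{Q}(f^{(0)},\ldots,f^{(M)}):=\sum_{\ell,m\le M} S_{\ell+m}(\overline{\Theta f^{(\ell)}}\otimes f^{(m)}).$$
Each $S_{\ell+m}$ is a tempered distribution on $\bR^{d(\ell+m)}$; this follows from the regularity bound OS0 proved in Sec.~\ref{sec:regularity} via the nuclear (kernel) theorem. The maps $f^{(\ell)}\mapsto \overline{\Theta f^{(\ell)}}$ and $(g,h)\mapsto g\otimes h$ are continuous in the Schwartz topologies. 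Hence $\mathcal{Q}$ is jointly continuous in $(f^{(0)},\ldots,f^{(M)})\in\prod_m\sS(\bR^{dm})$.

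The positivity condition is preserved under limits. So it suffices to show that every $f^{(m)}\in\sS(\bR^{dm}_+;\bC)$ is the Schwartz-topology limit of functions $f^{(m)}_{*,k}$ of the form~\eqref{eq:factorization}. The key point is that each factor must satisfy $f_{\alpha_i}\in\sS(\bR^d_+)$, i.e.\ be supported in the open half-space. Given this, $\mathcal{Q}(f_{*,k})\ge 0$ for all $k$ implies $\mathcal{Q}(f)\ge0$.

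For the density statement I would argue in two steps.

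First, I would cut off away from the boundary hyperplanes. Take $\chi_\varepsilon(x)=\chi(x_0/\varepsilon)$ with $\chi$ smooth, equal to $0$ on $(-\infty,1]$ and to $1$ on $[2,\infty)$. Then $f^{(m)}\prod_j\chi_\varepsilon(x^{(j)})\to f^{(m)}$ in $\sS$ as $\varepsilon\to0$. This step, which I expect to be the main obstacle, holds because $f^{(m)}$ vanishes to infinite order on $\partial\bR^{dm}_+$: every derivative of $f$ vanishes on $\{x^{(j)}_0=0\}$ since $\supp f$ lies in the closed positive region. Taylor's formula then gives $|\partial^\beta f(x)|\le C_{N}\,(x_0^{(j)})^N(1+|x|)^{-N}$ near the faces. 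Derivatives falling on $\chi_\varepsilon$ produce factors $\varepsilon^{-k}$ on the strip $\{\varepsilon\le x^{(j)}_0\le 2\varepsilon\}$, where the Taylor bound supplies $\varepsilon^N$. This gives convergence in every Schwartz seminorm.

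Second, I would approximate the truncated function, now supported in $\prod_j\{x_0^{(j)}\ge\varepsilon\}$, by finite sums of products. One way is to use Hermite function expansions, which converge in $\sS(\bR^{dm})$ and are products. Truncate them and multiply each one-variable factor by $\chi_{\varepsilon/2}(x^{(j)})$. Multiplying by $\prod_j\chi_{\varepsilon/2}(x^{(j)})$ is a continuous map on $\sS$ that fixes the truncated function, so the products converge to it. Each factor $h(x^{(j)})\chi_{\varepsilon/2}(x^{(j)})$ lies in $\sS(\bR^d_+)$, as required.

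Combining the two steps with a diagonal choice of $\varepsilon$ and truncation order yields the claimed density, and the continuity of $\mathcal{Q}$ finishes the reduction.
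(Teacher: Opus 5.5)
Your proof is correct, and its skeleton is the paper's: approximate $f^{(m)}$ in $\sS(\bR^{dm})$ by finite sums of tensor products, restore the support condition by multiplying with a product cutoff, and pass to the limit using that the $S_{\ell+m}$ are tempered. You justify temperedness from OS0 via the kernel theorem; the paper simply asserts it.

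The real difference is your Step~1, which has no counterpart in the paper. The paper instead fixes one bounded smooth $\eta$ supported in $\bR^d_+$ with $\eta(x^{(1)})\cdots\eta(x^{(m)})f^{(m)}=f^{(m)}$, and multiplies the unrestricted approximants by it. This is your Step~2 with $\eta=\chi_{\varepsilon/2}$. Such an $\eta$ exists when, for example, $\supp f^{(m)}\subset\prod_j\{x_0^{(j)}\geq\delta\}$ for some $\delta>0$. It need not exist for a general element of $\sS(\bR^{dm}_+;\bC)$, whose support is merely a closed subset of the open orthant.

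For a counterexample, take $m=2$ and the following cutoffs:
\begin{itemize}
\item $\psi$ smooth with $\psi=0$ on $(-\infty,1]$ and $\psi=1$ on $[2,\infty)$;
\item $\rho$ smooth with $\rho=0$ on $(-\infty,-1]$ and $\rho=1$ on $[1,\infty)$.
\end{itemize}
The Schwartz function $f(x)=\psi(x^{(1)}_0x^{(2)}_0)\,\rho(x^{(1)}_0+x^{(2)}_0)\,\re^{-|x|^2}$ has support in $\{x^{(1)}_0x^{(2)}_0\geq 1,\ x^{(1)}_0>0,\ x^{(2)}_0>0\}$. Yet $f\neq 0$ at $(t,\vec 0;\,2/t,\vec 0)$ for every $t>0$. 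Any admissible $\eta$ would then satisfy $\eta(t,\vec 0)\neq 0$ for all small $t>0$, so $(0,\vec 0)\in\supp\eta$, contradicting $\supp\eta\subset\bR^d_+$.

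Your infinite-order-vanishing cutoff reduces the general case to the separated one, where the paper's device works. So your proof is a completed version of the paper's sketch rather than a different route.
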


\begin{proof} By standard density arguments, given any $f^{(m)} \in \sS(\mathbb{R}_+^{dm}; \mathbb{C})$ we can find a sequence of
 $f^{(m)}_{*,\epsilon}$, $\epsilon>0$, of functions of the form~\eqref{eq:factorization}, possibly without the support restriction, such that
\begin{equation}
f^{(m)}=\lim_{\epsilon\to 0} f^{(m)}_{*,\epsilon} \label{eq:Schwartz-approximation}
\end{equation}
in the Schwartz topology. To ensure the support property of the functions on the RHS, we choose, for our given $f^{(m)}$, a bounded smooth function $\eta$ supported in $\mathbb{R}_+^{d}$, such that 
\begin{equation}
(\eta(x^{(1)})\ldots \eta(x^{(m)}) )f^{(m)}(x^{(1)}, \ldots, x^{(m)})= f^{(m)}(x^{(1)}, \ldots, x^{(m)} ).
\end{equation}
By multiplying both sides of Eq.~\eqref{eq:Schwartz-approximation} with such products of functions $\eta$ and making use of the fact 
that this operation is continuous in the Schwartz topology we ensure that the RHS of Eq.~\eqref{eq:Schwartz-approximation} consists of
functions from $\sS(\bR^{dm}_+; \bC)$. Using that the Schwinger functions are tempered distributions and positivity
is preserved in the limit, we obtain the claim.
\end{proof}
\noindent
Now we define $\phi(f)^{\alpha}=\phi(f_{\alpha_1}) \ldots \phi(f_{\alpha_{|\alpha|} })$. We also set
\begin{equation}
\phi(f)_{\varepsilon}^{\alpha}:=\phi(f_{\alpha_1}) \re^{-\varepsilon \phi(f_{\alpha_1})^2} \ldots \phi(f_{\alpha_{|\alpha|} }) \re^{-\varepsilon \phi(f_{\alpha_m})^2}
\end{equation}
and note that this is a cylindrical function. We write
\begin{align}
&\sum_{\ell, m \in  \mathbb{N}_0} S_{\ell+m}( \overline{\Theta f_*^{(\ell)}} \otimes f_*^{(m)})\\
&=\int \overline{ 
\sum_{\ell}\sum_{\alpha\in \mathcal{I}_\ell, |\alpha|=\ell} \phi(\Theta f)^{\alpha_\ell}}  \sum_{m} \sum_{\beta\in \mathcal{I}_m, |\beta|=m}    \phi(f)^{\beta_m}\, \mu(\rd\phi) \\
&=\lim_{\varepsilon\to 0 }\int  \overline{ \sum_{\ell}\sum_{\alpha\in \mathcal{I}_\ell, |\alpha|=\ell} \phi(\Theta f)^{\alpha_\ell}_{\varepsilon}  } 
\sum_{m} \sum_{\beta\in \mathcal{I}_m, |\beta|=m}  \phi(f)^{\beta_m}_{\varepsilon} \,\mu(\rd\phi)\geq 0, 
\end{align}
where we made use of dominated convergence, the existence of the moments of the measure and the GJ-reflection positivity to conclude
the OS reflection positivity.


\section{Euclidean invariance}\label{sec:Euclidean} 
In this section we demonstrate that every accumulation point $\mu$ of a subsequence of the sequence of measures $(\jmath_R^*\sharp \mu_R)_{R\in\bN_+}$ on $\sS'(\bR^2)$ is invariant under the action of  the Euclidean group $T(2)\rtimes O(2)$.
To this end, we appeal to the fact that the symmetry groups of the plane and the sphere have the same dimension, see Remark~\ref{rem:euclidean_R_two}. This suggests a correspondence between the corresponding group actions. We further use that for all $R\in\bN_+$, the measure $\mu_R$ is invariant under the action of the orthogonal group $O(3)$, which follows from the invariance of the Gaussian measure $\nu_{R}$ with covariance $G_R=(1-\Delta_R)^{-1}$, see Lemma~\ref{rem:sphere_O(3)}.
\\
\\
Observe that rotations of $\bS_R$ around the $\rx_3$-axis fixing the north pole $N = (0,0,R)$ stabilize the $x_1x_2$-plane and form a copy of $SO(2)\subset SO(3)$, see Remark~\ref{rem:sphere_quotient}. Consequently, these rotations are mapped by the stereographic projection to rotations of the plane by arbitrary angles. This yields rotational invariance of $(\jmath_R^*\sharp \mu_R)_{R\in\bN+}$ on $\sS'(\bR^2)$. The translational invariance holds only in an asymptotic sense: it emerges in the limit $R\to\infty$ when one considers rotations around $\rx_i$, $i=1,2$, with angles proportional to $1/R$, see Remark~\ref{rem:sphere_mobius_plan}. Furthermore, reflection invariance follows from Remark~\ref{rem:O(3)_splitting} using the involutions $\Theta_R$ and $\Theta$ defined in Definitions~\ref{def:invo_sphere}~and~\ref{def:invo_plane}, respectively.

\begin{Remark}
The proof of the Euclidean invariance is independent of the value of the coupling constant $\lambda\in (0,\infty)$ and holds true for any positive coupling constant. 
\end{Remark}
\noindent
In the remainder of this section, we first establish a one-to-one correspondence between the action of $SO(3)$ on $\bS_R$ and the action of $T(2)\rtimes SO(2)$ on $\bR^2$, and then show that weak limit points of the sequence $(\jmath^*_R\sharp\mu_R)_{R\in\bN_+}$ on $\sS'(\bR^2)$ are invariant under the action of $T(2)\rtimes SO(2)$. We next establish the reflection invariance property of these weak limit points. Combining the preceding results yields the Euclidean invariance of weak limit points of $(\jmath^*_R\sharp\mu_R)_{R\in\bN_+}$.

\begin{dfn}
For $R\in\bN_+$, $\alpha\in\bR$ and $\rx=(\rx_1,\rx_2,\rx_3)\in\bS_R$, we specify the actions of the generators of $SO(3)$ on $\bS_R$ by the maps 
$$\cR_{R,\alpha},\cT_{R,\alpha},\cT'_{R,\alpha}\,:\,\bS_R\to\bS_R$$
defined by
\begin{equation}
\begin{gathered}
 \cR_{R,\alpha}(\rx) 
 = (\rx_1\cos\alpha+\rx_2\sin\alpha,\rx_1\sin\alpha-\rx_2\cos\alpha,\rx_3),
 \\
 \cT_{R,\alpha}(\rx) 
 = (\rx_1\cos(\alpha/R)-\rx_3\sin(\alpha/R),\rx_2,\rx_1\sin(\alpha/R)+\rx_3\cos(\alpha/R)),
\\
\cT'_{R,\alpha}(\rx) 
= (\rx_1,\rx_2\cos(\alpha/R)-\rx_3\sin(\alpha/R),\rx_2\sin(\alpha/R)+\rx_3\cos(\alpha/R))\,.
\end{gathered} 
\end{equation}
\end{dfn}
\begin{Remark}
Any rotation $Q\in SO(3)$ can be represented using the generators of $SO(3)$,~cf.~\cite[Lemma~A.2.2]{DX13}.
\end{Remark}

\begin{dfn}\label{def:def_Q_sphere}
Let $Q_{R,\alpha}\in\{ \cR_{R,\alpha},  \cT_{R,\alpha},  \cT'_{R,\alpha}\}$ be defined by $Q_{R,\alpha}\,:\, \bS_R\to\bS_R$. Note that $Q_{R,\alpha}$ acts orthogonally, transitively and in an orientation-preserving manner on $\bS_R$, and give rise to a group of unitary operators on $L_2(\bS_R)$. Let $Q^*_{R,\alpha}\in\{\cR^*_{R,\alpha},  \cT^*_{R,\alpha},  {\cT'}^*_{R,\alpha}\}$ be maps $Q^*_{R,\alpha}\,:\, \sD'(\bS_R)\to\sD'(\bS_R)$ defined as follows: first, for all $f\in C^\infty(\bS_R)$ 
 we set $Q_{R,\alpha}^*f:=f\circ Q_{R,\alpha}\in C^\infty(\bS_R)$. Next, for $\phi\in\sD'(\bS_R)$ we set $Q_{R,\alpha}^*\phi:=\phi\circ Q_{R,-\alpha}^*\in\sD'(\bS_R).$ It holds that
 $$\langle Q_{R,\alpha}^*\phi,f\rangle=\langle \phi,Q_{R,-\alpha}^*f\rangle_ =\langle\phi,f\circ Q_{R,-\alpha}\rangle,$$ which is due to the rotational invariance of the Riemannian volume form $\rho_R(\rd y)$. 
\end{dfn}

\begin{Remark}\label{rem:det_SO(3)}
Let $Q_{R,\alpha}\in\{ \cR_{R,\alpha},  \cT_{R,\alpha},  \cT'_{R,\alpha}\}$ be as defined in Def.~\ref{def:def_Q_sphere}. It holds that
$$\mathrm{T}_p\,Q_{R,\alpha}\,:\, \mathrm{T}_p\,\bS_R\to\mathrm{T}_{Q_{R,\alpha}(p)}\,\bS_R\quad\text{ for\,\, any}\quad p\in \bS_R.$$
Observe that $\mathrm{T}_p\,\bS_R=p^{\perp}$ and $\mathrm{T}_{Q_{R,\alpha}(p)}\,\bS_R=(Q_{R,\alpha}(p))^{\perp}$ both have dimension two, i.e., they are isomorphic to $\bR^2$ as vector spaces. This is due to the fact that $Q_{R,\alpha}$ is a diffeomorphism and it preserves angles and lengths and its differential is an isomorphism so $\det(\mathrm{T}_p\,Q_{R,\alpha})\neq 0$~\footnote{Taking the derivative of $Q^{-1}_{Q(p)}\circ Q(p)=\mathrm{Id}_{\bS_R}$ at $p$, one obtains $\mathrm{T}_{Q(p)}\,Q^{-1}\circ \mathrm{T}_p\,Q=\mathrm{Id}_{\mathrm{T}_p\,\bS_R}$. Hence, $\det(\mathrm{T}_{Q(p)}\, Q^{-1})\,\det(\mathrm{T}_p\,Q)=\det(\mathrm{Id}_{\mathrm{T}_p\,\bS_R})=1$.}. In particular, since $Q_{R,\alpha}$ preserves orientation on $\bS_R$, its differential map $\mathrm{T}_p\,Q_{R,\alpha}$ acts as orientation-preserving isometry of the corresponding tangent spaces such that $\mathrm{T}_p\,Q_{R,\alpha}\in SO(2)$ with $\det(\mathrm{T}_p\,Q_{R,\alpha})=1$ for all $p\in\bS_R$, see Remark~\ref{rem:tangent_lei_algebra}.
\end{Remark}

\begin{Remark}\label{rem:cylindrical}
The algebra of cylindrical functionals $\cF$ separates points in
$$L^{-1}_2(\bR^2,v_L^{1/2})\subset \sD'(\bR^2).$$
Hence, if $\mu_j$, $j=1,2$, are Borel probability measures on $L^{-1}_2(\bR^2,v_L^{1/2})$ such that $\mu_1(F)=\mu_2(F)$ for all $F\in\cF$, then $\mu_1=\mu_2$ by~\cite[Thm.~4.5(a), Ch. 3]{ethier2005}. The same statement holds if $\mu_j$, $j=1,2$ are Borel probability measures on $L^{1}_2(\bR^2,v_L^{1/2})\subset \sD'(\bR^2)$.
\end{Remark}

\begin{lem}\label{lem:rotational_invariant_measure}
Let $Q^*_{R,\alpha}\in\{ \cR^*_{R,\alpha},  \cT^*_{R,\alpha},  {\cT'}^*_{R,\alpha}\}$ denote the maps $Q_{R,\alpha}^*\,:\,\sD'(\bS_R)\to \sD'(\bS_R)$ defined as above. For all $R,N\in\bN_+$,
\begin{itemize}
\item[(A)]
The regularized Gaussian measure $\nu_{R,N}$ has rotational invariance.

\item[(B)]
The measure $\mu_R$ is rotationally invariant, i.e.,
$$\mu_{R}(F) = \mu_{R}(F\circ Q_{R,\alpha}^*)$$
for all bounded and measurable $F\,:\,\sD'(\bS_R)\to\bR$ and all $\alpha\in\bR$.
\end{itemize}
\end{lem}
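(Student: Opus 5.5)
The plan is to prove (A) at the level of characteristic functionals and to obtain (B) from (A) by passing through the UV-regularized interacting measures $\mu_{R,N}$ and then removing the cutoff.

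For (A), the measure $\nu_{R,N}=\cN(0,G_{R,N})$ is determined by $S_{\nu_{R,N}}(f)=\re^{-\frac12\langle G_{R,N}f,f\rangle}$ for $f\in C^\infty(\bS_R)$. By Def.~\ref{def:def_Q_sphere}, the push-forward of $\nu_{R,N}$ under $Q^*_{R,\alpha}$ has characteristic functional $f\mapsto S_{\nu_{R,N}}(Q^*_{R,-\alpha}f)$. It therefore suffices to check
\begin{equation}
\langle G_{R,N}(f\circ Q_{R,-\alpha}),f\circ Q_{R,-\alpha}\rangle=\langle G_{R,N}f,f\rangle .
\end{equation}
This follows from two facts. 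First, $G_{R,N}(\rx,\rx')=G_{R,N}(Q\rx,Q\rx')$ for every $Q\in O(3)$, which is Remark~\ref{rem:expansion_covariance_harmo} via the Legendre expansion. Second, the volume measure $\rho_R$ is invariant under $SO(3)$. With these, a change of variables in the double integral gives the identity, and Theorem~\ref{thm:Minlos} (uniqueness of the measure with a given characteristic functional) yields $\nu_{R,N}\circ (Q^*_{R,\alpha})^{-1}=\nu_{R,N}$. The same argument with $N$ removed gives invariance of $\nu_R$.

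For (B), I first show $\mu_{R,N}(F)=\mu_{R,N}(F\circ Q^*_{R,\alpha})$ and then let $N\to\infty$. Write the density as $\exp(-Y_{R,N})$, where $Y_{R,N}(\phi)=\int_{\bS_R}P(\phi(\rx),c_{R,N})\,\rho_R(\rd\rx)$. The Wick polynomial is taken pointwise with a counterterm $c_{R,N}$ that does not depend on $\rx$ (Eq.~\eqref{eq:counterterm}). Invariance of $\rho_R$ then gives $Y_{R,N}(Q^*_{R,\alpha}\phi)=Y_{R,N}(\phi)$ for $\phi\in L^1_2(\bS_R)$, and $\nu_{R,N}$ is concentrated on this space by Lemma~\ref{lem:mu_measure_well_defined}. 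Combined with (A), this gives
\begin{equation}
\bE\,F(Q^*_{R,\alpha}X_{R,N})\re^{-Y_{R,N}}=\bE\,F(X_{R,N})\re^{-Y_{R,N}},
\end{equation}
and also shows that the normalization $\mathcal Z_{R,N}$ is unchanged.

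The step I expect to need the most care is removing the cutoff. For bounded continuous $F$, note that $F\circ Q^*_{R,\alpha}$ is again bounded and continuous on $\sD'(\bS_R)$, since $Q^*_{R,\alpha}$ is continuous. Proposition~\ref{prop:uv_limit} with $g=0$ then gives $\mu_R(F\circ Q^*_{R,\alpha})=\lim_N\mu_{R,N}(F\circ Q^*_{R,\alpha})=\lim_N\mu_{R,N}(F)=\mu_R(F)$. To pass from bounded continuous to bounded measurable $F$, I would argue that equality on bounded continuous functions determines the Borel measures $\mu_R$ and $\mu_R\circ(Q^*_{R,\alpha})^{-1}$, for instance via cylindrical functionals as in Remark~\ref{rem:cylindrical}, working on $L^{-\kappa}_2(\bS_R)$ where $\mu_R$ is concentrated. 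Equality of the two measures then gives the identity for all bounded measurable $F$.

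Alternatively, one can bypass the regularized measures. Since $\mu_R=\exp(-Y_R)\nu_R/\mathcal Z_R$, and $Y_R$ is the $L_2(\nu_R)$-limit of $Y_{R,N}$ (Lemma~\ref{lem:stochastic_Y}), one gets $Y_R\circ Q^*_{R,\alpha}=Y_R$ $\nu_R$-a.e. from invariance of $\nu_R$ together with the fact that $K_{R,N}$ commutes with rotations. Invariance of $\mu_R$ then follows directly from invariance of $\nu_R$.
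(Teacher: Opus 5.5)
Your proposal is correct and follows essentially the same route as the paper: (A) via the $O(3)$-invariance of the kernel $G_{R,N}(\rx,\rx')$ and of $\rho_R$ together with the Bochner--Minlos theorem, and (B) by first establishing $\mu_{R,N}(F\circ Q^*_{R,\alpha})=\mu_{R,N}(F)$ from (A) and the invariance of the interaction, then removing the cutoff. Your handling of the limit $N\to\infty$ is more careful than the paper's one-line ``taking the limit'': you note that $F\circ Q^*_{R,\alpha}$ is again bounded and continuous, so Proposition~\ref{prop:uv_limit} applies, and you then extend to bounded measurable $F$ on $L^{-\kappa}_2(\bS_R)$, where $\mu_R$ is actually concentrated. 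Your alternative argument via $Y_R\circ Q^*_{R,\alpha}=Y_R$ $\nu_R$-a.e.\ is also valid.
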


\begin{proof}
To prove Item~(A), it suffices to show that for all $R\in\bN_+$, the covariance $G_{R,N}$ is rotationally invariant and refer to the Bochner-Minlos theorem (Thm.~\ref{thm:Minlos}). Utilizing Remark~\ref{rem:expansion_covariance_harmo}, it follows that for all $R,N\in\bN_+$,
$$G_{R,N}(\rx,\ry)= G_{R,N}(Q_{R,\alpha}\rx,Q_{R,\alpha}\ry)= \int_{\sD'(\bS_R)}\phi(\rx)\,\phi(\ry)\,\nu_{R,N}(\rd\phi)\quad \text{for\,\, all}\quad \rx,\ry\in\bS_R.$$
Thus, from the preceding relation for all $f,g\in\sD(\bS_R)$, one obtains
\begin{equation}
G_{R,N}(f,g)=\int_{\bS_R\times\bS_R}\,G_{R,N}(\rx,\ry)\,f(\rx)\,g(\ry)\,\rho_R(\rd\rx)\,\rho_R(\rd\ry)
=G_{R,N}(Q^*_{R,-\alpha}f,Q^*_{R,-\alpha}g),
\end{equation}
where we used the facts that for all $R\in\bN_+$, the Riemannian volume form $\rho_R$ is rotationally invariant, $Q_{R,\alpha}$ acts transitively on $\bS_R$ and Def.~\ref{def:def_Q_sphere}. This finishes the proof of Item~(A). To prove Item~(B), recall that for all $R,N\in\bN_+$, the measure $\nu_{R,N}$ is concentrated on $L^1_2(\bS_R)$, see Lemma~\ref{lem:mu_measure_well_defined}. From Remark~\ref{rem:cylindrical}, it suffices to prove that for all $\alpha\in\bR$ and all cylindrical functionals $F\in\cF_{R,N}$, it holds that $\mu_{R,N}(F\,\circ Q^*_{R,\alpha})=\mu_{R,N}(F)$. One obtains
\begin{equation}
\begin{aligned}
\int_{L^1_2(\bS_R)}\, F(Q^*_{R,\alpha}\phi)\,\mu_{R,N}(\rd\phi)
&
=\frac{1}{\mathcal{Z}_{R,N}}\,\int_{L^1_2(\bS_R)}\, F(Q^*_{R,\alpha}\phi)\,\cU_{R,N}(Q^*_{R,\alpha}\phi)\,\nu_{R,N}(\rd\phi)
\\
&
=\frac{1}{\mathcal{Z}_{R,N}}\,\int_{L^1_2(\bS_R)}\, F(\phi)\,\cU_{R,N}(\phi)\,(Q^*_{R,\alpha}\sharp\nu_{R,N})(\rd\phi)
\\
&
=\int_{L^1_2(\bS_R)}\, F(\phi)\,\mu_{R,N}(\rd\phi),
\end{aligned}
\end{equation}
where in the third line, we used Item~(A), i.e., $(Q^*_{R,\alpha}\sharp\nu_{R,N})(A)=\nu_{R,N}(A)$ for all $A\in\mathrm{Borel}(L^1_2(\bS_R))$. Taking the limit $N\to\infty$ concludes the lemma.
\end{proof}

\begin{dfn}
For $\alpha\in\bR$ and $x:=(x_1,x_2)\in\bR^2$, we define the actions of the generators of the group $T(2)\rtimes SO(2)$ on $\bR^2$ by the maps $\cR_{\alpha},\cT_{\alpha},\, \cT'_{\alpha}\,:\,\bR^2\to\bR^2$ given by
\begin{equation}
\begin{gathered}
 \cR_{\alpha}(x_1,x_2) := (x_1\cos\alpha +x_2\sin\alpha, x_1\sin\alpha-x_2\cos\alpha),
 \\
 \cT_{\alpha}(x_1,x_2) := (x_1+\alpha, x_2),\quad
 \cT'_{\alpha}(x_1,x_2) := (x_1, x_2+\alpha).
\end{gathered}
\end{equation}
\end{dfn}

\begin{dfn}
Let $Q_{\alpha}\in\{ \cR_{\alpha},  \cT_{\alpha},  \cT'_{\alpha}\}$ and  $Q^*_{\alpha}\in\{ \cR^*_{\alpha},  \cT^*_{\alpha},  {\cT'}^*_{\alpha}\}$ defined by $Q_{\alpha}\,:\,\bR^2\to\bR^2$ and by $Q^*_{\alpha}\,:\,\sD'(\bR^2)\to\sD'(\bR^2)$, respectively. For $f\in C_{\rc}^\infty(\bR)$ and $\phi\in\sD'(\bR^2)$ we set
\begin{equation}
Q_{\alpha}^*f:=f\circ Q_{\alpha}\in C^\infty_{\rc}(\bR^2),
 \quad
Q_{\alpha}^*\phi:=\phi\circ Q_{-\alpha}^*\in\sD'(\bR^2).
\end{equation}
This implies that
$$\langle Q_{\alpha}^*\phi,f\rangle =\langle \phi,Q_{-\alpha}^*f\rangle=\langle\phi,f\circ Q_{-\alpha}\rangle,$$
which follows from the invariance of the Lebesgue measure under Euclidean transformations.
\end{dfn}

\begin{dfn}
Let $R\in\bN_+$, $\alpha\in (-R,R)$ and consider the ball $B_R:=\{x\in\bR^2\,|\,|x|< R\}\subset \bR^2$. We define the map $\cS_{R,\alpha}\,:\,B_R\to \bR^2$ by
\begin{equation}
 \cS_{R,\alpha}(x_1,x_2):= 
 \frac{2(R\sin (\alpha/R) (1-(x_1^2+x_2^2)/4R^2)+x_1 \cos(\alpha/R),x_2)}{1+\cos(\alpha/R)+ (x_1^2+x_2^2)/4R^2\,(1-\cos(\alpha/R)) -(x_1/R)\,\sin(\alpha/R)}\,.
\end{equation}
Note that for all $x=(x_1,x_2)\in B_R$ formally, as $R\to \infty$, one recovers
$$\lim_{R\to \infty}\,\cS_{R,\alpha}(x)= (x_1+\alpha, x_2)= \cT_{\alpha}(x_1,x_2).$$
For fixed $\alpha\in\bR$, it follows from L'H\^opital's rule 
\begin{equation}
\begin{gathered}
\lim_{R\to\infty}\,2R\sin(\alpha/R)\Big(1 -\frac{(x_1^2 +x_2^2)}{4R^2} \Big) = \lim_{R\to \infty}\,2R\sin(\alpha/R) = 2\alpha\,.
\\
\lim_{R\to\infty}\,(1+\cos(\alpha/R)) +(\frac{x_1^2+x_2^2}{4R^2})(1-\cos(\alpha/R)) - \frac{x_1}{R}\sin(\alpha/R) = 2\,.
\end{gathered}
\end{equation}
\end{dfn}

\begin{lem}\label{rem:rot_trans}
\begin{itemize}
\item[(A)]
For all $R\in\bN_+$ and $\alpha\in (-R,R)$ on the ball $B_R$, one has
$$\cT_{R,\alpha}\circ \jmath_R = \jmath_R\circ \cS_{R,\alpha}.$$
\item[(B)]For all $R\in\bN_+$ and $\alpha\in\bR$, it holds that
$$\mathcal{R}_{R,\alpha}\circ \jmath_R= \jmath_R\circ \mathcal{R}_{\alpha}.$$
\end{itemize} 
\end{lem}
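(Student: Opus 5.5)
Both identities are direct computations with the explicit formulas for $\jmath_R$, $\jmath_R^{-1}$, $\cR_{R,\alpha}$, $\cT_{R,\alpha}$ and $\cS_{R,\alpha}$. The cleanest route is to verify the equivalent statements obtained by composing with $\jmath_R^{-1}$. For Item~(A) this means showing $\jmath_R^{-1}\circ\cT_{R,\alpha}\circ\jmath_R=\cS_{R,\alpha}$ on $B_R$. For Item~(B) it means showing $\jmath_R^{-1}\circ\cR_{R,\alpha}\circ\jmath_R=\cR_\alpha$ on $\bR^2$. In each case one must also check that the image avoids the north pole, so that $\jmath_R^{-1}$ is defined there.

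Item~(B) is the easy case, and I will do it first. The map $\cR_{R,\alpha}$ fixes the $\rx_3$-coordinate and acts on $(\rx_1,\rx_2)$ by the same $2\times 2$ orthogonal matrix that defines $\cR_\alpha$. The map $\jmath_R$ has the form $(x_1,x_2)\mapsto (\lambda(|x|)x,\ \rx_3(|x|))$, where $\lambda(|x|)=4R^2/(|x|^2+4R^2)$ and $\rx_3$ depends only on $|x|$. Since $\cR_\alpha$ is orthogonal, it preserves $|x|$. Hence
\[
\jmath_R(\cR_\alpha x)=(\lambda(|x|)\,\cR_\alpha x,\ \rx_3(|x|))=\cR_{R,\alpha}\jmath_R(x),
\]
and this holds for every $x\in\bR^2$ and every $\alpha$. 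No point goes to the north pole, because $\rx_3$ is unchanged.

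For Item~(A), set $\beta=\alpha/R$ and $s=|x|^2$, and write $D=s+4R^2$. Then $\jmath_R(x)=R\,(4Rx_1,\,4Rx_2,\,s-4R^2)/D$. Applying $\cT_{R,\alpha}$, which is a rotation by $\beta$ in the $(\rx_1,\rx_3)$-plane, gives the point
\[
\ry=\frac{R}{D}\Big(4Rx_1\cos\beta-(s-4R^2)\sin\beta,\ \ 4Rx_2,\ \ 4Rx_1\sin\beta+(s-4R^2)\cos\beta\Big).
\]
Next I compute
\[
R-\ry_3=\frac{R}{D}\Big(D-4Rx_1\sin\beta-(s-4R^2)\cos\beta\Big)
=\frac{R}{D}\Big(4R^2(1+\cos\beta)+s(1-\cos\beta)-4Rx_1\sin\beta\Big).
\]
Applying $\jmath_R^{-1}(\ry)=2R(\ry_1,\ry_2)/(R-\ry_3)$, the factor $R/D$ cancels. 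Dividing numerator and denominator by $4R^2$ turns the numerator into
\[
2\Big(R\sin\beta\,(1-s/4R^2)+x_1\cos\beta,\ x_2\Big)
\]
and the denominator into
\[
1+\cos\beta+(s/4R^2)(1-\cos\beta)-(x_1/R)\sin\beta .
\]
This is exactly $\cS_{R,\alpha}(x)$.

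The only real issue is that the denominator must not vanish, i.e.\ $\ry\neq N$. This is where the restrictions $|x|<R$ and $|\alpha|<R$ enter. Since $|\beta|<1<\pi$, we have $1+\cos\beta>0$. The remaining terms are bounded below: $(s/4R^2)(1-\cos\beta)\ge 0$, and $|(x_1/R)\sin\beta|<|\sin\beta|$. So it is enough to show $1+\cos\beta-|\sin\beta|>0$ for $|\beta|<1$. Using half-angle formulas, this reads $2\cos^2(\beta/2)>2|\sin(\beta/2)|\cos(\beta/2)$, i.e.\ $|\tan(\beta/2)|<1$, which holds since $|\beta/2|<1/2<\pi/4$. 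With this positivity established, both sides of the identity in Item~(A) are well defined on $B_R$, and the algebra above shows they coincide.
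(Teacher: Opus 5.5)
Your proof is correct, and for Item~(A) it follows the paper's route exactly: compute $\cT_{R,\alpha}\circ\jmath_R(x)$, then apply $\jmath_R^{-1}$ using the expressions for $R-\rx_3'$ and $2R\rx_1'$. For Item~(B), which the paper only calls ``similar'', you replace the computation with the observation that $\jmath_R(x)=(\lambda(|x|)x,\rx_3(|x|))$ commutes with any orthogonal map acting on the first two coordinates. That argument uses only orthogonality, so it holds even though the matrix defining $\mathcal{R}_{R,\alpha}$ and $\mathcal{R}_\alpha$ has determinant $-1$ as written. You also check that the denominator of $\cS_{R,\alpha}$ is positive on $B_R$ for $|\alpha|<R$, so that $\jmath_R^{-1}$ is applicable; the paper leaves this well-definedness point implicit.
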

\begin{proof}
Both items can be verified using direct calculations and are similar. Let us prove Item~(A). For all $x=(x_1,x_2)\in B_R\subset\bR^2$, one obtains
$$\mathcal{T}_{R,\alpha}\circ\jmath_R(x_1,x_2) =(\rx_1',\rx_2',\rx_3'),$$
where
\begin{equation}
\begin{aligned}
\rx_1':=& \frac{4R^2x_1}{4R^2+x_1^2+x_2^2}\cos(\alpha/R) - \frac{R(x_1^2+x_2^2-4R^2)}{4R^2+x_1^2+x_2^2}\sin(\alpha/R)\,,
\\
\rx_2':=& \frac{4R^2\,x_2}{4R^2+x_1^2+x_2^2}\,,
\\
\rx_3':=& \frac{4R^2x_1}{4R^2+x_1^2+x_2^2}\sin(\alpha/R) + \frac{R(x_1^2+x_2^2-4R^2)}{4R^2+x_1^2+x_2^2}\cos(\alpha/R)\,.
\end{aligned}
\end{equation}
Using the map $\jmath^{-1}_R(\rx_1,\rx_2,\rx_3)= 2R(\rx_1,\rx_2)/(R-\rx_3)$ given in Def.~\ref{def:stereoraphic}, one writes
\begin{equation}
\begin{aligned}
R-\rx_3^{\prime}
= &\,4R^2\,\frac{R}{4R^2+x_1^2+x_2^2}\,\Big(1+\cos(\alpha/R) +(\frac{x_1^2+x_2^2}{4R^2})(1-\cos(\alpha/R)) - \frac{x_1}{R}\sin(\alpha/R) \Big)
\\
2R\,\rx_1'=&\, 2R\,\frac{4R^2}{4R^2+x_1^2+x_2^2} \Bigg( x_1\cos(\alpha/R) +R\sin(\alpha/R)\Big(1 -\frac{(x_1^2 +x_2^2)}{4R^2}   \Big)  \Bigg)\,.
\end{aligned}
\end{equation}
Combining the obtained results, one verifies the identity
$$\cS_{R,\alpha}(x)= \jmath_R^{-1}\circ\mathcal{T}_{R,\alpha}\circ\jmath_R(x)\,\quad \forall \,x\in B_R.$$
This finishes the proof of item~(A). In a similar manner, one could prove Item~(B).
\end{proof}

\begin{Remark}\label{rem:pull-back-rotations}
Lemma~\ref{rem:rot_trans} indicates that the stereographic projection $\jmath_R$ intertwines the actions of the generators on $\bS_R$ and $\bR^2$. By the general property that pullbacks reverse composition, this immediately implies that the induced actions on test functions and distributions also intertwine, namely
$$\cS_{R,\alpha}^*\circ \jmath_R^*= \jmath_R^*\circ \cT_{R,\alpha}^*\quad \text{and} \quad \cR_{\alpha}^*\circ \jmath_R^*= \jmath_R^*\circ \cR_{R,\alpha}^*.$$
\end{Remark}

\begin{Remark}\label{rem:change_of_variable_pairing}
Let $U,V$ be two non-empty open sets in $\bR^2$ and $F\,:\,U\to V$ be a $C^{\infty}-$ diffeomorphism and $\varphi\in\sD'(V)$. We define $\varphi\circ F\in \sD'(U)$ by
\begin{equation}
\langle \varphi\circ F,f\rangle := \int_U\varphi(F(x))\,f(x)\,\rd x=\int_V\varphi(y)\,f(F^{-1}(y))\,|\det(\mathrm{T}(F^{-1}))|\,\rd y\,,
\end{equation}
where $\det(\mathrm{T} F^{-1})$ denote the determinant of the Jacobian of the map $F^{-1}$. The informal expression $\varphi(F(x))$ 
in the first step serves only to motivate the second step. The expression on the RHS is well defined provided
$$f(F^{-1}(\Cdot))\,|\det(\mathrm{T}(F^{-1}))|$$ is a test-function supported in $V$.
\end{Remark}

\begin{Remark}\label{rem:tran_plane_1}
As $R\to\infty$, by the property of the determinant function combined with Remark~\ref{rem:det_SO(3)}, we expect that
$$\det(\mathrm{T}\cS_{R,\alpha}) =\det(\mathrm{T} (\jmath_R^{-1}\circ\cT_{R,\alpha}\circ \jmath_R))\to 1.$$
Let $R\in\bN_+$, $\alpha\in (-R,R)$. For $f\in C^\infty_\rc(\bR^2)$ and all $\phi\in\sD'(\bR^2)$ with $\supp\,\phi\subset B_R$, we set
$$\cS_{R,\alpha}^* f:= f\circ \cS_{R,\alpha}\in C^\infty(B_R)\quad \text{and} \quad \cS_{R,\alpha}^*\phi= \phi\circ \cS_{R,-\alpha}^*\in\sD'(\bR^2).$$
Utilizing Remark~\ref{rem:change_of_variable_pairing}, one has
$$\langle\cS_{R,\alpha}^*\phi,f\rangle:=
 \langle\phi,\det(\mathrm T\cS_{R,-\alpha})\,\cS_{R,-\alpha}^*f\rangle \quad \forall\, f\in C^\infty_\rc(\bR^2).$$
\end{Remark}

\begin{Remark}\label{rem:translation}
Let $\alpha\in\bR$, $a\in\bN_0^2$ and $M\in(0,\infty)$. By direct calculations one verifies that there exists $C\in(0,\infty)$ such that for all $x\in B_M$ and $R\in(|\alpha|\vee M,\infty)$ it holds 
\begin{equation}
 \cT_{R,\alpha}\circ \jmath_R(x) = \jmath_R\circ \cS_{R,\alpha}(x)
 \qquad
 \textrm{and}
 \qquad
 |\partial^a\cS_{R,\alpha}(x)-\partial^a\cT_\alpha(x)| \leq C/R\,.
\end{equation}
On account of Remark~\ref{rem:sphere_mobius_plan}, a concise proof of the above bound is given in the Appendix~\ref{sec:translation}.
\end{Remark}

\begin{Remark}\label{rem:determinant_Q}
Let $Q$ be an $n\times n$ positive definite matrix such that $Q=I+E$, where $\|E\|<1$ sufficiently small. Then, $I+E$ is invertible and the matrix logarithm is well-defined by its power series,
$$\log(I+E)=\sum_{p=1}^{\infty}\frac{(-1)^{p-1}}{p}\,E^p = E-\frac{1}{2}\, E^2+O(\|E\|^3).$$
Taking the trace yields
$$\mathrm{Tr}\,\log(I+E)
=
\mathrm{Tr}(E)-\frac12\mathrm{Tr}(E^2)+O(\|E\|^3).$$
Using the identity
$$\det(I+E)=\exp(\mathrm{Tr}\,\log(I+E)),$$
which holds for all invertible matrices, together with the expansion
$$\exp(x)=1+x+O(x^2),
\qquad x\to 0,$$
one obtains
$$\det(I+E)=1+\mathrm{Tr}(E)+O(\|E\|^2).$$
In particular, there exists a constant $C_n>0$ such that
\begin{equation}
|\det(I+E)-1| \leq C_n\,\|E\|
\end{equation}
for $\|E\|$ sufficiently small. More generally, 
$$\det(I+\varepsilon E)=1+\varepsilon\,\mathrm{Tr}\,(E)+O(\varepsilon^2),\quad \text{as}\quad \varepsilon\to 0.$$
\end{Remark}

\begin{Remark}\label{rem:map_transaltion}
By definition, one has
$$\cS_{R,-\alpha}^*\,f(x)-\cT_{-\alpha}^*\,f(x)= f(\cS_{R,-\alpha}(x))- f(\cT_{-\alpha}(x)).$$
Moreover, from the Taylor expansion, one obtains
\begin{equation}\label{eq:taylor_1}
f(\cS_{R,-\alpha}(x))= f(\cT_{-\alpha}(x)) +\vec{\nabla} f(\cT_{-\alpha}(x))^{\mathrm{T}}\times (\cS_{R,-\alpha}(x)-\cT_{-\alpha}(x)) + R(x),
\end{equation}
where $R(x)$ is the higher order remainder, i.e.,
\begin{equation}
R(x) = \frac{1}{2} (\cS_{R,-\alpha}(x)-\cT_{-\alpha}(x))^{\mathrm{T}}\, H_f\,(\cS_{R,-\alpha}(x)-\cT_{-\alpha}(x)),
\end{equation}
where $(H_f)_{i,j}=\partial^2 f/\partial x_i\partial x_j$ is the Hessian matrix of $f$ evaluated at some point between $\cS_{R,-\alpha}(x)$ 
and $\cT_{-\alpha}(x)$.  By Remark~\ref{rem:translation}, it holds that
\begin{equation}
\begin{gathered}
|R(x)|\leq  C^2/2R^2\,\sup_{x\in B_M }\,|H_f(x)|,
\\
|\vec{\nabla} f(\cT_{-\alpha}(x))^{\mathrm{T}}\,(\cS_{R,-\alpha}(x)-\cT_{-\alpha}(x))\,|\leq C/R\,\sup_{x\in B_M }\,|\vec{\nabla}f(x)|\,.
\end{gathered}
\end{equation}
Combining the above results, one rewrites Eq.~\eqref{eq:taylor_1}
\begin{equation}\label{eq:taylor_2}
|f(\cS_{R,-\alpha}(x))-f(\cT_{-\alpha}(x))|
\leq\,C/R\,\sup_{x\in B_M }\,\left(|\vec{\nabla}f(x)| +C/2R\, |H_f(x)| \right).
\end{equation}
Noting that $\det\,(\mathrm{T}\,\cT_{-\alpha})=1$, now from Remark~\ref{rem:determinant_Q} and the second bound in Remark~\ref{rem:translation} with $a=1$, one has 
$$|\det(\mathrm{T}\cS_{R,-\alpha}) - 1)|\leq C/R.$$
To see this, we use $\mathrm{T} T_{-\alpha}=1$ to write
\begin{equation}
\det(\mathrm{T}\cS_{R,-\alpha})=\det(1+ \mathrm{T}(\cS_{R,-\alpha} - T_{-\alpha}))
\end{equation}
and observe that the tangent map $\mathrm{T}$ amounts to first derivatives of $\cS_{R,-\alpha} - T_{-\alpha}$.
Hence, for all $\alpha\in\bR$ and $f\in C^\infty_\rc(\bR^2)$, one infers that there exists $C\in (0,\infty)$ such that for all sufficiently large $R\in\bN_+$,
\begin{multline}
\|\det(\mathrm{T}\cS_{R,-\alpha}) \cS_{R,-\alpha}^* f-\cT^*_{-\alpha} f\|_{L^1_2(\bR^2,v_L^{-1/2})} 
\\
\leq
\|(\det(\mathrm{T}\cS_{R,-\alpha}) - 1) f\circ \cS_{R,-\alpha})\|_{L^1_2(\bR^2,v_L^{-1/2})}
+\| f\circ\cS_{R,-\alpha}-f\circ\cT_{-\alpha}\|_{L^1_2(\bR^2,v_L^{-1/2})} 
\\
\leq C/R\, \bigg(\|f\,\circ\,\cS_{R,-\alpha}\|_{L^1_2(\bR^2,v_L^{-1/2})}
+\|\vec{\nabla}\,f|\|_{L^1_2(\bR^2,v_L^{-1/2})} +(C/2R)\,\|H_f\|_{L^1_2(\bR^2,v_L^{-1/2})}   \bigg).
\end{multline}
Hence, for all $\phi\in\sD'(\bR^2)$ with $\supp\,\phi\subset B_R$ and all $f\in C^{\infty}_{\rc}(\bR^2)$, one has
\begin{equation}
\begin{aligned}
| \langle\cS_{R,\alpha}^*\phi,f\rangle - \langle\cT_{\alpha}^*\phi,f\rangle| 
&\leq \|\det(\mathrm T\cS_{R,-\alpha})\cS_{R,-\alpha}^*f-\cT_{-\alpha}^*f\|_{L^1_2(\bR^2,v_L^{-1/2})}\,\|\phi\|_{L^{-1}_2(\bR^2,v_L^{1/2})}
\\&
\leq\,C/R\, \|\phi\|_{L^{-1}_2(\bR^2,v_L^{1/2})}\,.
\end{aligned}
\end{equation}
\end{Remark}

\begin{prop}\label{prop:euclidean_inv_plane}
Let $\mu$ be a weak limit of a subsequence of the sequence of measures $(\jmath_R^*\sharp\mu_R)_{R\in\bN_+}$ on $\sS'(\bR^2)$. It holds that
$$\mu(F) = \mu(F\circ\cR_\alpha^*)\quad \text{and} \quad \mu(F) = \mu(F\circ\cT_\alpha^*)$$
for all bounded and measurable $F\,:\,\sS'(\bR^2)\to\bR$ and all $\alpha\in\bR$.
\end{prop}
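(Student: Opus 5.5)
By Remark~\ref{rem:cylindrical}, it suffices to prove both identities for cylindrical $F\in\cF$, say $F(\phi)=G(\phi(f_1),\ldots,\phi(f_k))$ with $G\in C^\infty_\rb(\bR^k)$ and $f_l\in C^\infty_\rc(\bR^2)$. We may also regard $\mu$ as a measure on $L^{-1}_2(\bR^2,v_L^{1/2})$. This is legitimate because Remark~\ref{embedding_on_plane} gives a uniform moment bound $\int\|\jmath_R^*\phi\|^n_{L^{-1}_2(\bR^2,v_L^{1/2})}\mu_R(\rd\phi)\le C$, and by lower semicontinuity under weak convergence this bound persists for $\mu$. Write $\mu=\lim_l \jmath_{R_l}^*\sharp\mu_{R_l}$. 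Note that $F\circ\cR_\alpha^*$ and $F\circ\cT_\alpha^*$ are again bounded, continuous and cylindrical, with test functions $\cR_{-\alpha}^*f_l$ and $\cT_{-\alpha}^*f_l$. Hence $\mu(F\circ Q^*_\alpha)=\lim_l \int F(Q_\alpha^*\jmath_{R_l}^*\phi)\,\mu_{R_l}(\rd\phi)$, and the task is to compare this with $\int F(\jmath_{R_l}^*\phi)\,\mu_{R_l}(\rd\phi)$.

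\emph{Rotations.} By Remark~\ref{rem:pull-back-rotations}, $\cR_\alpha^*\circ\jmath_R^*=\jmath_R^*\circ\cR_{R,\alpha}^*$ exactly. Therefore
\[
\int F(\cR_\alpha^*\jmath_R^*\phi)\,\mu_R(\rd\phi)=\int (F\circ\jmath_R^*)(\cR_{R,\alpha}^*\phi)\,\mu_R(\rd\phi)=\int F(\jmath_R^*\phi)\,\mu_R(\rd\phi)
\]
by Lemma~\ref{lem:rotational_invariant_measure}~(B), applied to the bounded measurable functional $F\circ\jmath_R^*$. Taking $R=R_l\to\infty$ gives $\mu(F\circ\cR^*_\alpha)=\mu(F)$.

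\emph{Translations.} Here the intertwining only holds approximately. By Lemma~\ref{rem:rot_trans}~(A) and Remark~\ref{rem:pull-back-rotations}, we have $\jmath_R^*\circ\cT_{R,\alpha}^*=\cS_{R,\alpha}^*\circ\jmath_R^*$ on distributions tested against functions supported in a fixed ball $B_M$. Choose $M$ so that $\supp f_l$ and $\supp \cT_{-\alpha}^*f_l$ lie in $B_{M-1}$. Lemma~\ref{lem:rotational_invariant_measure}~(B) with $Q=\cT_{R,\alpha}$ then gives
\[
\mu_R(F\circ\jmath_R^*)=\mu_R(F\circ\jmath_R^*\circ\cT_{R,\alpha}^*)=\int G\big(\langle \cS_{R,\alpha}^*\jmath_R^*\phi,f_l\rangle\big)_l\,\mu_R(\rd\phi).
\]
Since $G$ is Lipschitz, it remains to control
\[
\int \sum_l\big|\langle\cS_{R,\alpha}^*\jmath_R^*\phi,f_l\rangle-\langle\cT_\alpha^*\jmath_R^*\phi,f_l\rangle\big|\,\mu_R(\rd\phi).
\]
The pairing only sees $\jmath_R^*\phi$ on $B_M$, so we localize by a smooth cutoff $\chi$ equal to $1$ on $B_M$; multiplication by $\chi$ is bounded on $L^{-1}_2(\bR^2,v_L^{1/2})$. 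Remark~\ref{rem:map_transaltion} then bounds each term by $\frac{C}{R}\|\chi\jmath_R^*\phi\|_{L^{-1}_2(\bR^2,v_L^{1/2})}$, with $C$ depending on $f_l,\alpha,M$. By Remark~\ref{embedding_on_plane} the $\mu_R$-expectation of this quantity is $O(1/R)$. Letting $R=R_l\to\infty$ yields $\mu(F)=\mu(F\circ\cT_\alpha^*)$. For $\cT'_\alpha$, the same argument with $\cT'_{R,\alpha}$ would work, but it also follows by conjugating $\cT_\alpha$ with a rotation by $\pi/2$, using the rotation invariance already proved. Finally, since $\cF$ is measure-determining, the identities extend from cylindrical $F$ to all bounded measurable $F$.

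The main obstacle is the translation step. One has to justify the approximate intertwining at the level of distributions: the pairing and the change of variables of Remark~\ref{rem:tran_plane_1} must be well defined for $\jmath_R^*\phi$, which is not supported in $B_R$. This is why the localization by $\chi$ is needed, together with the observation that $\cS_{R,-\alpha}$ maps $\supp f_l$ into $B_M$ for $R$ large. In addition, the uniform-in-$x$ derivative bound $|\partial^a\cS_{R,\alpha}-\partial^a\cT_\alpha|\le C/R$ of Remark~\ref{rem:translation}, and hence the $L^1_2$ test-function estimate of Remark~\ref{rem:map_transaltion}, are what make the error $O(1/R)$ uniformly in $\phi$.
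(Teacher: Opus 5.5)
Your proposal is correct and follows essentially the same route as the paper's proof. You reduce to cylindrical $F$, use the exact intertwining $\cR_\alpha^*\circ\jmath_R^*=\jmath_R^*\circ\cR_{R,\alpha}^*$ together with Lemma~\ref{lem:rotational_invariant_measure} for rotations, and use the approximate intertwining via $\cS_{R,\alpha}$ for translations, with the $O(1/R)$ error of Remark~\ref{rem:map_transaltion} controlled by the uniform $L^{-1}_2(\bR^2,v_L^{1/2})$ moment bound of Remark~\ref{embedding_on_plane}. Your cutoff $\chi$ and the lower-semicontinuity justification that $\mu$ is concentrated on $L^{-1}_2(\bR^2,v_L^{1/2})$ are careful additions that the paper leaves implicit.
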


\begin{proof}
Suppose that the sequence of measures $\jmath_R^*\sharp\mu_R$ on $\sS'(\bR^2)$ converges to $\mu$ along the subsequence $(R_M)_{M\in\bN_+}$. By Remarks~\ref{rem:tightness} and~\ref{embedding_on_plane}, the measure $\mu$ is concentrated on $L^{-1}_2(\bR^2,v_L^{1/2})$.
For every $F\in\cF$ we obtain 
\begin{multline}
 \mu(F\circ \cR^*_\alpha-F) =
 \lim_{M\to\infty} 
 \jmath_{R_M}^*\sharp\mu_{R_M}(F\circ \cR^*_\alpha-F) =
 \lim_{M\to\infty} 
 \mu_{R_M}(F\circ \cR^*_\alpha\circ \jmath_{R_M}^* -F\circ\jmath_{R_M}^*)
 \\
 =
 \lim_{M\to\infty} 
 \mu_{R_M}(F\circ\jmath_{R_M}^*\circ \cR_{R_M,\alpha}^*-F\circ\jmath_{R_M}^*) = 
 \lim_{M\to\infty} 
 (\jmath_{R_M}^*\sharp\mu_{R_M}-\mu)(F)
 =
 0.
\end{multline}
The result follows from Lemma~\ref{lem:rotational_invariant_measure} and Remark~\ref{rem:pull-back-rotations}. By Remark~\ref{rem:cylindrical}, it holds for arbitrary cylindrical functionals, which completes the proof of rotational invariance.
\\[0.3cm]
Let us turn to the proof of the translational invariance. Note that by Remark~\ref{rem:pull-back-rotations} for every $F\in\cF$ and all sufficiently large $R\in\bN_+$, one has
\begin{equation}
 \jmath_R^*\sharp\mu_R(F) 
 =
 \mu_R(F\circ\jmath_R^*) 
 =
 \mu_R(F\circ\jmath_R^*\circ\cT_{R,\alpha}^*)
 =
 \mu_R(F\circ\cS_{R,\alpha}^*\circ\jmath_R^*)
 =
 \jmath_R^*\sharp\mu_R(F\circ\cS_{R,\alpha}^*).
\end{equation}
Using Remark~\ref{rem:map_transaltion}, one deduces that for every $\alpha\in\bR$ and $F\in\cF$, there exists $C\in(0,\infty)$ such that for all $\psi\in L^{-1}_2(\bR^2,v_L^{1/2})$ and all sufficiently large $R\in\bN_+$, it holds that
\begin{equation}
 |F(\cS_{R,\alpha}^*\psi)-F(\cT_\alpha^*\psi)| \leq (C/R)~\|\psi\|_{L^{-1}_2(\bR^2,v_L^{1/2})}.
 \end{equation}
By Remark~\ref{embedding_on_plane} for all $\alpha\in\bR$ and $F\in\cF$, one has
\begin{equation}
 \lim_{R\to\infty} \jmath_R^*\sharp\mu_R(F\circ\cS_{R,\alpha}^*-F\circ\cT_\alpha^*) =0.
\end{equation}
Consequently,
\begin{multline}
 \mu(F\circ\cT^*_\alpha-F) 
 =
 \lim_{M\to\infty} 
 \jmath_{R_M}^*\sharp\mu_{R_M}(F\circ\cT^*_\alpha-F)
 \\
 =
 \lim_{M\to\infty} 
 \jmath_{R_M}^*\sharp\mu_{R_M}(F\circ\cS_{R_M,\alpha}^*-F)
 =
 \lim_{M\to\infty} 
 (\jmath_{R_M}^*\sharp\mu_{R_M}-\mu)(F)
 =
 0.
\end{multline}
This finishes the proof.
\end{proof}

\begin{lem}\label{lem:rotational_invariant_measure_reflection}
Let $\Theta_R\,:\,\sD'(\bS_R)\to\sD'(\bS_R)$ be defined as in Def.~\ref{def:invo_sphere}. For all $R,N\in\bN_+$
\begin{itemize}
\item[(A)]
The regularized Gaussian measure $\nu_{R,N}$ is $\Theta_R$-invariant.

\item[(B)]
The measure $\mu_R$ is $\Theta_R$-invariant, i.e., 
$$\mu_{R}(F) = \mu_{R}(F\circ \Theta_R)$$
for all bounded and measurable $F\,:\,\sD'(\bS_R)\to\bR$.
\end{itemize}
\end{lem}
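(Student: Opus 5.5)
The plan is to mirror the proof of Lemma~\ref{lem:rotational_invariant_measure}, replacing the rotations $Q_{R,\alpha}$ by the reflection $\Theta_R\in O(3)$.

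For Item~(A), I would show that the covariance $G_{R,N}$ is $\Theta_R$-invariant. By Remark~\ref{rem:expansion_covariance_harmo}, the kernel $G_{R,N}(\rx,\ry)$ is a series in Legendre polynomials $P_l(\rx\cdot\ry/R^2)$. Since $\Theta_R$ is orthogonal, $\Theta_R\rx\cdot\Theta_R\ry=\rx\cdot\ry$, so $G_{R,N}(\Theta_R\rx,\Theta_R\ry)=G_{R,N}(\rx,\ry)$. The volume measure $\rho_R$ is invariant under $\Theta_R$, because $\Theta_R$ is an isometry of $\bS_R$ (the absolute value of the Jacobian is $1$ even though $\det\Theta_R=-1$). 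Together these give, for $f,g\in C^\infty(\bS_R)$,
\[
G_{R,N}(\Theta_R f,\Theta_R g)=G_{R,N}(f,g).
\]
Hence the characteristic functionals of $\nu_{R,N}$ and $\Theta_R\sharp\nu_{R,N}$ coincide:
\[
\int \re^{i\langle\Theta_R\phi,f\rangle}\,\nu_{R,N}(\rd\phi)=\re^{-\frac12\langle G_{R,N}\Theta_R f,\Theta_R f\rangle}=\re^{-\frac12\langle G_{R,N}f,f\rangle}.
\]
Theorem~\ref{thm:Minlos} (Bochner--Minlos) then gives equality of the two measures.

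For Item~(B), I would first work at fixed $N$ with the density $\cU_{R,N}(\phi)=\exp(-\int_{\bS_R}P(\phi(\rx),c_{R,N})\,\rho_R(\rd\rx))$ on $L^1_2(\bS_R)$. On this space $\phi\circ\Theta_R$ is a genuine function, and $c_{R,N}$ is a constant. Invariance of $\rho_R$ under $\Theta_R$ then gives $\cU_{R,N}(\Theta_R\phi)=\cU_{R,N}(\phi)$. Changing variables and using Item~(A), I get
\[
\mu_{R,N}(F\circ\Theta_R)=\mathcal{Z}_{R,N}^{-1}\int F(\phi)\,\cU_{R,N}(\phi)\,(\Theta_R\sharp\nu_{R,N})(\rd\phi)=\mu_{R,N}(F).
\]
This holds first for cylindrical $F$, and by Remark~\ref{rem:cylindrical} for all bounded measurable $F$.

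The remaining step, passing to $N\to\infty$, is the only delicate one. For bounded continuous $F$, Proposition~\ref{prop:uv_limit} with $g=0$ gives $\mu_{R,N}(F)\to\mu_R(F)$. Since $\Theta_R$ is continuous on $\sD'(\bS_R)$, $F\circ\Theta_R$ is also bounded and continuous, so $\mu_{R,N}(F\circ\Theta_R)\to\mu_R(F\circ\Theta_R)$. Equating the two limits gives the identity for bounded continuous $F$. I would extend it to bounded measurable $F$ by the monotone class argument, or equivalently by noting that the pushforward $\Theta_R\sharp\mu_R$ and $\mu_R$ are Borel measures agreeing on bounded continuous functions.
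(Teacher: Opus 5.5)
Your proposal is correct and follows the paper's proof essentially step by step. For (A) you use the $\Theta_R$-invariance of the Legendre expansion of $G_{R,N}$ and of $\rho_R$ together with Bochner--Minlos; for (B) you use the invariance of the density $\cU_{R,N}$ under $\Theta_R$, a change of variables based on (A) at fixed $N$, and then $N\to\infty$. Your treatment of that limit (Proposition~\ref{prop:uv_limit} with $g=0$, continuity of $\Theta_R$ on $\sD'(\bS_R)$, and the fact that bounded continuous functions determine a Borel probability measure) simply makes explicit what the paper compresses into ``taking the limit $N\to\infty$''; incidentally, the fixed-$N$ change of variables already holds for all bounded measurable $F$, so you do not need the detour through cylindrical functionals.
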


\begin{proof}
The proof follows from the same strategy as in the proof of the lemma.~\ref{lem:rotational_invariant_measure}. To prove Item~(A), it suffices to show that for all $R,N\in\bN_+$, the covariance $G_{R,N}$ is $\Theta_R$-invariant and to invoke the Bochner-Minlos theorem (Thm.~\ref{thm:Minlos}). Utilizing Remark~\ref{rem:expansion_covariance_harmo}, it follows that for all $R,N\in\bN_+$, and for all $\rx,\ry\in\bS_R$,
$$G_{R,N}(\rx,\ry)=
G_{R,N}(\Theta_{R}\rx,\Theta_{R}\ry).$$
Thus, for all $f,g\in\sD(\bS_R)$, one obtains
\begin{equation}
G_{R,N}(f,g)=\int_{\bS_R\times\bS_R}\,G_{R,N}(\rx,\ry)\,f(\rx)\,g(\ry)\,\rho_R(\rd\rx)\,\rho_R(\rd\ry)
=G_{R,N}(\Theta_{R}f,\Theta_{R}g).
\end{equation}
It follows from the fact that $\Theta_R^2=\1$, combined with 
\begin{equation}
(\Theta_R\times \Theta_R)(\rho_R(\rd \rx)\otimes\rho_R(\rd\ry))= (\rho_R(\rd \rx))\otimes(\rho_R(\rd \ry)),
\end{equation}
where the change in orientation cancels out. This finishes the proof of Item~(A). To prove Item~(B), recall that for all $R,N\in\bN_+$, the measure $\nu_{R,N}$ is concentrated on $L^1_2(\bS_R)$. Thus, by Remark~\ref{rem:cylindrical}, it suffices to prove that for all cylindrical functionals $F\in\cF_{R,N}$, one has
$$\mu_{R,N}(F\,\circ \Theta_{R})=\mu_{R,N}(F).$$
It holds that
\begin{equation}
\begin{aligned}
\int_{L^1_2(\bS_R)}\, F(\Theta_{R}\phi)\,\mu_{R,N}(\rd\phi)
&=\frac{1}{\mathcal{Z}_{R,N}}\,\int_{L^1_2(\bS_R)}\, F(\phi)\,\cU_{R,N}(\phi)\,(\Theta_{R}\sharp\nu_{R,N})(\rd\phi)
\\&=\int_{L^1_2(\bS_R)}\, F(\phi)\,\mu_{R,N}(\rd\phi),
\end{aligned}
\end{equation}
where we used the fact that $\cU_{R,N}(\Theta_{R}\,\phi)=\cU_{R,N}(\phi)$, combined with Item~(A), i.e., $(\Theta_{R}\sharp\nu_{R,N})(A)=\nu_{R,N}(A)$ for all $A\in\mathrm{Borel}(L^1_2(\bS_R))$. Taking the limit $N\to\infty$, we conclude the proof.
\end{proof}

\begin{Remark}\label{rem:sphere_O(3)}
Combining Lemma~\ref{lem:rotational_invariant_measure} with Lemma~\ref{lem:rotational_invariant_measure_reflection}, one infers the invariance of the measure $\mu_R$ on $\sD'(\bS_R)$ under the action of the orthogonal group $O(3)$. This follows by an argument analogous to that used below the proof of Prop.~\ref{prop:euclidean_inv_plane_reflection}, see also Remark~\ref{rem:O(3)_splitting}.
\end{Remark}

\begin{prop}\label{prop:euclidean_inv_plane_reflection}
Let $\mu$ be a weak limit of a subsequence of the sequence of measures $(\jmath_R^*\sharp\mu_R)_{R\in\bN_+}$ on $\sS'(\bR^2)$. It holds that $\mu(F) = \mu(F\circ\Theta)$ and for all
bounded and measurable $F\,:\,\sS'(\bR^2)\to\bR$.
\end{prop}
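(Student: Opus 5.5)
The plan is to mirror the rotational-invariance argument of Prop.~\ref{prop:euclidean_inv_plane}, now using the discrete intertwining relation of Remark~\ref{rem:Z2group} in place of Remark~\ref{rem:pull-back-rotations}. Let $\mu$ be the weak limit of $\jmath_{R_M}^*\sharp\mu_{R_M}$ along a subsequence $(R_M)_{M\in\bN_+}$. By Remarks~\ref{rem:tightness} and~\ref{embedding_on_plane}, $\mu$ is concentrated on $L^{-1}_2(\bR^2,v_L^{1/2})$. Remark~\ref{rem:cylindrical} says cylindrical functionals $F\in\cF$ separate points there. It therefore suffices to prove $\mu(F\circ\Theta)=\mu(F)$ for all $F\in\cF$, and then apply this to the measures $\mu$ and $\Theta\sharp\mu$.

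First I check that $F\circ\Theta\in\cF$ whenever $F\in\cF$. If $F(\phi)=G(\phi(f_1),\ldots,\phi(f_k))$, then by Def.~\ref{def:invo_plane}
\[
F(\Theta\phi)=G(\phi(\Theta f_1),\ldots,\phi(\Theta f_k)),
\]
and $\Theta f_l\in\sS(\bR^2)$. In particular $F\circ\Theta$ is bounded and continuous, so weak convergence applies to it.

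The main computation is then:
\begin{multline}
\mu(F\circ\Theta-F)=\lim_{M\to\infty}\mu_{R_M}(F\circ\Theta\circ\jmath_{R_M}^*-F\circ\jmath_{R_M}^*)
\\
=\lim_{M\to\infty}\mu_{R_M}(F\circ\jmath_{R_M}^*\circ\Theta_{R_M}-F\circ\jmath_{R_M}^*)=0.
\end{multline}
The second equality uses $\Theta\circ\jmath_R^*=\jmath_R^*\circ\Theta_R$ from Remark~\ref{rem:Z2group}. The last equality is Lemma~\ref{lem:rotational_invariant_measure_reflection}(B) applied to the bounded measurable functional $F\circ\jmath_{R_M}^*$ on $\sD'(\bS_R)$.

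Unlike translations, no $1/R$ error terms arise, since the intertwining is exact. The only point I expect to need care is the extension from cylindrical $F$ to arbitrary bounded measurable $F$. This goes through Remark~\ref{rem:cylindrical}: $\mu$ and $\Theta\sharp\mu$ are both Borel probability measures concentrated on $L^{-1}_2(\bR^2,v_L^{1/2})$. For the second one this holds because $\Theta$ preserves that space, the weight $v_L$ being invariant under $x_1\mapsto -x_1$. They agree on $\cF$, hence they are equal. Combining this with Prop.~\ref{prop:euclidean_inv_plane} and Remark~\ref{rem:O(3)_splitting} (semidirect product structure) yields invariance under the full group $T(2)\rtimes O(2)$.
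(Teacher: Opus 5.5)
Your proposal is correct and follows the paper's proof essentially verbatim. You pass to the subsequence, use the exact intertwining $\Theta\circ\jmath_R^*=\jmath_R^*\circ\Theta_R$ of Remark~\ref{rem:Z2group} together with the $\Theta_R$-invariance of $\mu_R$ from Lemma~\ref{lem:rotational_invariant_measure_reflection}(B), and extend from $\cF$ via Remark~\ref{rem:cylindrical}. Your extra checks, that $F\circ\Theta\in\cF$ and that $\Theta\sharp\mu$ is concentrated on $L^{-1}_2(\bR^2,v_L^{1/2})$ by the symmetry of $v_L$, supply details the paper leaves implicit.
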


\begin{proof}
Suppose that the sequence of measures $\jmath_R^*\sharp\mu_R$ on $\sS'(\bR^2)$ converges to $\mu$ along the subsequence $(R_M)_{M\in\bN_+}$. By Remarks~\ref{rem:tightness} and~\ref{embedding_on_plane}, the measure $\mu$ is concentrated on $L^{-1}_2(\bR^2,v_L^{1/2})$. For every $F\in\cF$, we obtain 
\begin{multline}
 \mu(F\circ \Theta-F) =
 \lim_{M\to\infty} 
 \jmath_{R_M}^*\sharp\mu_{R_M}(F\circ \Theta-F) =
 \lim_{M\to\infty} 
 \mu_{R_M}(F\circ \Theta\circ \jmath_{R_M}^* -F\circ\jmath_{R_M}^*)
 \\
 =
 \lim_{M\to\infty} 
 \mu_{R_M}(F\circ\jmath_{R_M}^*\circ \Theta_R-F\circ\jmath_{R_M}^*) = 
 \lim_{M\to\infty} 
 (\jmath_{R_M}^*\sharp\mu_{R_M}-\mu)(F)
 =
 0,
\end{multline}
where in the second line, we first used Remark~\ref{rem:Z2group}, then Lemma~\ref{lem:rotational_invariant_measure_reflection}. By Remark~\ref{rem:cylindrical}, it holds for arbitrary cylindrical functionals, which completes the proof of the reflection invariance.
\end{proof}
\noindent
Let us now deduce the invariance of any weak limit point of the sequence of measures $(\jmath_R^*\sharp\mu_R)_{R\in\bN_+}$ on $\sS'(\bR^2)$ under the action of an arbitrary element of $E(2)=T(2)\rtimes O(2)$:
On account of Remark~\ref{rem:O(3)_splitting}, any element $g\in O(2)$ either belongs to $SO(2)$ or satisfies $\mathrm{det}(g)=-1$. 
In the former case the invariance follows directly from Prop.~\ref{prop:euclidean_inv_plane}. In the latter case
we write $g=\Theta (\Theta g)$, so that $\Theta g\in SO(2)$. Then, the invariance follows by combining  Prop.~\ref{prop:euclidean_inv_plane} with Prop.~\ref{prop:euclidean_inv_plane_reflection}.
The invariance under reflections is a special property of scalar polynomial theories.
\noindent
\\[0.3cm]
Euclidean invariance of the Schwinger functions, as required by the axiom OS3, can now be checked by similar arguments as in the last part of Sec.~\ref{sec:reflection_posivity}.

\appendix
\chapter*{Appendices}

\addcontentsline{toc}{chapter}{Appendices}
\renewcommand{\thesection}{\Alph{section}}

\section{Bessel potentials}\label{sec:bessel}
The modified Bessel function satisfies~\cite[Eq.~(14.97)]{Arfken} 
\begin{equation}
\rho^2\frac{\rd^2}{\rd \rho^2}\,P_{m}(k\rho)+\rho\frac{\rd}{\rd\rho}\,P_m(k\rho)-(k^2\rho^2+m^2)\,P_m(k\rho)=0\,.
\end{equation}
Let $k\rho=z$. One obtains
\begin{equation}
\left( \frac{\rd^2}{\rd z^2} +\frac{1}{z}\,\frac{\rd}{\rd z}-\frac{m^2}{z^2}-1\right)\,P_m(z)=0\,.
\end{equation}
Indeed, the modified Bessel equation is the eigenvalue equation of $\left(-\frac{\rd^2}{\rd z^2} -\frac{1}{z}\,\frac{\rd}{\rd z}+\frac{m^2}{z^2}\right)$ with eigenvalue $-1$. Let us restrict ourselves to the case where $\mathrm{Re}(z)>0$ and $m\in\mathbb{Z}$. The preceding equation has two solutions: the first is given by~\cite[Eq.~(14.99)]{Arfken}
\begin{equation}
I_m(z)=\sum_{n=0}^{\infty}\,\frac{(\frac{z}{2})^{2n+m}}{n!\,\Gamma(m+n+1)}
\end{equation}
such that $I_m(z)=I_{-m}(z)$ for $m\in\mathbb{Z}$~\cite[Eq.~(14.101)]{Arfken}. The second solution is given by the integral representation, 
\begin{equation}
K_m(z):=\frac{1}{2}\,\int_0^{\infty}\,\exp{\big( -\frac{z}{2}(s+s^{-1}) \big)}\,s^{m -1}\,\rd s\,,
\end{equation}
which is absolutely convergent and is known as the second modified Bessel function. Alternatively, for $\mathrm{Re}(z)>0$ one has
\begin{equation}\label{Bessel_alter_1}
K_m(z)= \frac{1}{2}\,(\frac{z}{2})^m\,\int_0^{\infty}\,s^{-m-1}\,\exp{(-s-\frac{z^2}{4s})}\,\rd s\,.
\end{equation}
Starting from Eq.~\eqref{Bessel_alter_1}, letting $a,b\in\bC$ with $\mathrm{Re}(a)>0$ and $\mathrm{Re}(b)>0$, and using a change of variable $s=\sqrt{\frac{a}{b}}\,t$, one gets
\begin{equation}\label{Bessel_alter_2}
\int_0^{\infty}\,s^{m-1}\,\exp{(-\frac{a}{s}-bs)}\,\rd s=2\,(\frac{a}{b})^{\frac{m}{2}}\,K_m(2\sqrt{ab}).
\end{equation}
Note that $K_n(z)=K_{-n}(z)$ for all $n\in\mathbb{Z}$. It holds that~\cite[Eq.~(14.107)]{Arfken}
\begin{equation}
K_n(z)=\lim_{m\to n}\frac{\pi}{2\,\sin(m\pi)}\,\left( I_{-m}(z)-I_m(z) \right).
\end{equation}
Applying L'H\^opital's rule to this limit produces the logarithmic terms characteristic of the second solution to the modified Bessel equation for integer orders.

\subsection*{Limiting behaviour}\label{Limiting behaviour}
In what follows we write $f(z)\sim g(z)$ as $z\to a$ if $\lim_{z\to a}\,\frac{f(z)}{g(z)}=1$. Using the little-o notation, this can be rewritten as $f(z)= g(z)\left(1+o(1)\right)$. Using this convention, 
for $z\to 0$ and for $m\neq -1,-2,\cdots$ the limiting behaviour of $I_m$ will be of the form~\cite[Eq.~(14.100)]{Arfken} 
$$I_m(z)=z^m\,2^{-m}\,[\Gamma(m+1)]^{-1}(1+o(1))$$
and as $z\to\infty$ one has
$$I_m(z)= \frac{1}{\sqrt{2\pi z}}\,\re^{z}(1+o(1)).$$
Furthermore, one also determines the lowest-order terms of $K_m(z)$ as $z\to 0$ via~\cite[Eqs.~(14.110--111)]{Arfken}
$$K_0(z)=-\log(z)-\gamma+\log(2)+\cdots \quad\text{and}\quad K_{m}(z)=2^{m-1}\Gamma(m)\,z^{-m}+\cdots.$$ Using the introduced convention, one rewrites the above limiting behaviours
\begin{equation}\label{eq:asy_bessel_modified}
\begin{aligned}
K_0(z)= -\log(z)(1+o(1)),
\qquad
K_m(z)= 2^{m-1}\Gamma(z)\,z^{-m}(1+o(1)).
\end{aligned}
\end{equation}
This indicates that $K_0(z)$ is singular at $z=0$. Moreover, one shows that $K_m(z)$ decays exponentially~\cite[Eq.~(14.127)]{Arfken}
$$K_m(z)=\sqrt{\frac{\pi}{2z}}\,\re^{-z}(1+o(1)).$$
More precisely, for all $z\in\bC$ such that $|\mathrm{arg}z|<\pi-\varepsilon$, one has
\begin{equation}
\lim_{|z|\to\infty}\,\frac{K_m(z)}{\re^{-z}\sqrt{\frac{\pi}{2z}}}=1\,.
\end{equation}

 \subsection*{Bessel potential using the Fourier transform}
Let $(1-\Delta)^{\alpha/2}\,G(x)=\delta(x)$, where $\Delta$ is the Laplacian on $\bR^d$.  Let $p^2=\vec p\cdot \vec p$. Using the Fourier transform one gets
\begin{equation}
G(x)=\mathcal{F}^{-1}\left[\frac{1}{(1+p^2)^{\alpha/2}}\right](x)=\int\,\frac{\re^{i\vec p\cdot \vec x}}{(1+p^2)^{\alpha/2}}\,\frac{\rd^d p}{(2\pi)^d}\,.
\end{equation}

\begin{Remark}\label{eq:power_gamma}
Let $A$ be a positive self-adjoint operator. It holds that
\begin{equation}
A^{-\alpha}=\frac{1}{\Gamma(\alpha)}\,\int_0^{\infty}\,\re^{-sA}\,s^{\alpha-1}\,\rd s,
\qquad
\int\re^{-sp^2}\,\re^{i\vec p\cdot \vec x}\,\frac{\rd^d p}{(2\pi)^d}= \frac{1}{(4\pi\,s)^{d/2} }\,\re^{-\frac{x^2}{4s}}\,.
\end{equation}
\end{Remark}

\begin{lem}\label{thm:Fourier_Bessel_Fundamental}
Let $|x|$ denote the Euclidean norm on $\bR^d$ and $\mathrm{Re}(\alpha)>d/2$. One has
\begin{equation}
\int\,\frac{\re^{i\vec p\cdot\vec x}}{(1+p^2)^{\alpha/2}}\,\frac{\rd^d p}{(2\pi)^d}
= \frac{1}{\Gamma(\frac{\alpha}{2})\,\pi^{d/2}\, 2^{\frac{d+\alpha}{2}-1}}\,|x|^{(\alpha-d)/2}\,K_{(d-\alpha)/2}(|x|)\,.
\end{equation}
\end{lem}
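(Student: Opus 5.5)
We prove the formula by subordination: rewrite the Fourier multiplier as a Laplace integral of Gaussians, compute the Gaussian inverse Fourier transform explicitly, and then identify the remaining one-dimensional integral with the Bessel representation \eqref{Bessel_alter_2}.

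\textbf{Step 1: Gamma-function representation.} Apply Remark~\ref{eq:power_gamma} with $A=1+p^2$ and exponent $\alpha/2$, where $\mathrm{Re}(\alpha)>0$:
\begin{equation}
(1+p^2)^{-\alpha/2}=\frac{1}{\Gamma(\alpha/2)}\int_0^\infty \re^{-s}\,\re^{-sp^2}\,s^{\alpha/2-1}\,\rd s .
\end{equation}

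\textbf{Step 2: Exchange the integrals.} Insert this into $\int \re^{i\vec p\cdot\vec x}(\cdot)\,\rd^dp/(2\pi)^d$ and interchange the $p$- and $s$-integrals by Fubini. The double integrand is bounded in modulus by $\re^{-s}\re^{-sp^2}s^{\mathrm{Re}\alpha/2-1}$. Integrating first in $p$ gives $(4\pi s)^{-d/2}\re^{-s}s^{\mathrm{Re}\alpha/2-1}$, which is integrable in $s$ near $0$ exactly when $\mathrm{Re}(\alpha)>d$. For $x\neq 0$ this requirement can be relaxed: the Gaussian factor $\re^{-|x|^2/4s}$ makes the $s$-integral converge for every $\alpha$. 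One then extends by interpreting the $p$-integral distributionally, or by analytic continuation in $\alpha$, since both sides of the claimed identity are analytic in $\alpha$. The hypothesis $\mathrm{Re}(\alpha)>d/2$ ensures the left side is an $L_2$ function, so the identity holds almost everywhere, and pointwise for $x\neq0$.

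\textbf{Step 3: Gaussian transform.} By the second formula of Remark~\ref{eq:power_gamma},
\begin{equation}
G(x)=\frac{1}{\Gamma(\alpha/2)(4\pi)^{d/2}}\int_0^\infty s^{\frac{\alpha-d}{2}-1}\exp\Big(-s-\frac{|x|^2}{4s}\Big)\rd s .
\end{equation}

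\textbf{Step 4: Bessel identification.} Apply \eqref{Bessel_alter_2} with $a=|x|^2/4$, $b=1$ and order $m=(\alpha-d)/2$. Since $2\sqrt{ab}=|x|$, the integral equals
\begin{equation}
2\,(|x|/2)^{(\alpha-d)/2}\,K_{(\alpha-d)/2}(|x|).
\end{equation}
Use $K_\nu=K_{-\nu}$, which the paper states for integer order but which follows from the integral representation via $s\mapsto s^{-1}$, to replace the order by $(d-\alpha)/2$.

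\textbf{Step 5: Collect constants.} The prefactor is
\begin{equation}
\frac{2\cdot 2^{-(\alpha-d)/2}}{\Gamma(\alpha/2)\,2^d\,\pi^{d/2}}
=\frac{1}{\Gamma(\alpha/2)\,\pi^{d/2}\,2^{\frac{d+\alpha}{2}-1}},
\end{equation}
which matches the stated formula.

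The main obstacle is justifying the interchange in Step 2 in the regime $d/2<\mathrm{Re}\,\alpha\le d$, where the $s$-integral of the absolute value diverges at $s=0$. We plan to handle this by inserting a damping factor $\re^{-\varepsilon p^2}$, which yields absolutely convergent integrals, and then letting $\varepsilon\to0$. On the Fourier side we use $L_2$ convergence; on the $s$-side, for fixed $x\neq0$, we use dominated convergence with $\re^{-|x|^2/4(s+\varepsilon)}$. Everything else is routine computation.
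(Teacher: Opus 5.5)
Your proposal is correct and follows the paper's route: the paper's proof combines the Gamma-function subordination of Remark~\ref{eq:power_gamma} with the Bessel representations \eqref{Bessel_alter_1}--\eqref{Bessel_alter_2}, exactly as you do, and your constant bookkeeping checks out. Your Gaussian-damping argument for $d/2<\mathrm{Re}(\alpha)\le d$ supplies the interchange justification that the paper leaves implicit. In that regime the identity holds in $L_2$, and pointwise for $x\neq0$ via dominated convergence.
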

\begin{proof}
Combining Remark~\ref{eq:power_gamma} with Eqs.~\eqref{Bessel_alter_2} and~\eqref{Bessel_alter_1}, one concludes the statement. See also~\cite[Thm.~2.1]{DGR24}.
\end{proof}

\begin{Remark}\label{rem:Fourier_Bessel_Fundamental}
Let $G$ denote a fundamental solution for the positive self-adjoint operator $(1-\Delta)$, where $\Delta$ is the Laplacian on $\bR^2$, i.e., $(1-\Delta)G(x,x')=\delta (x-x')$. Invoking Lemma.~\ref{thm:Fourier_Bessel_Fundamental} with $d=2$, and $\alpha=2$ and taking into account Eq.~\eqref{eq:asy_bessel_modified}, as $\|x-x'\|\to 0$, one obtains
\begin{equation}
G(x,x') = -\frac{1}{2\pi}\log \|x-x'\|\,(1+ o(1)).
\end{equation}
Moreover, using the fact that $K_m(z)$ decays exponentially as $\|x-x'\|\to\infty$, one has
\begin{equation}
G(x,x')
= \frac{\re^{-\|x-x'\|}}{\sqrt{8\,\pi\,\|x-x'\|}}\,(1+o(1)).
\end{equation}
\end{Remark}

\section{Stochastic estimates}\label{sec:stochastic_estimates}
In this appendix, we work with the probability space $(\sD'(\bS_R),\mathrm{Borel}(\sD'(\bS_R)),\nu_R)$ and $\bE$ denotes expectation with respect to the Gaussian probability measure $\nu_R$ with covariance $G_R=(1-\Delta_R)^{-1}$. Moreover, for all $R,N\in\bN_+$, we consider the stochastic processes
$$X_R,X_{R,N}\,:\,\sD'(\bS_R)\to L^{-1}_2(\bS_R).$$
\begin{lem}\label{lem:stochastic_X_not_uniform}
For every $\kappa\in(0,\infty)$, $N\in\bN_+$ there exists $C\in(0,\infty)$ such that for all $R\in\bN_+$ it holds
\begin{enumerate}
\item[(A)] $\bE\|X_{R,N}\|_{L^{2-\kappa}_2(\bS_R)}^2 \leq \, R^2\,C^2$.
\item[(B)] $\bE\|\hat X_{R,N}\|_{L^{2-\kappa}_2(\bS_R)}^2 \leq \, R^2\,C^2$.
\end{enumerate}
\end{lem}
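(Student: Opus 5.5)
We compute both expectations exactly in the spherical harmonic basis and reduce them to a single series in $l$. Since $X_R$ has law $\nu_R=\cN(0,G_R)$ and $X_{R,N}=K_{R,N}X_R$, we have for any bounded Borel function $F$ of $-\Delta_R$ that $\bE\|F(-\Delta_R)X_R\|^2_{L_2(\bS_R)}=\mathrm{Tr}(F(-\Delta_R)^2G_R)$. This is seen by expanding $X_R=\sum_{l,m}\langle X_R,Y_{l,m}\rangle Y_{l,m}$ and using $\bE|\langle X_R,Y_{l,m}\rangle|^2=\langle Y_{l,m},G_RY_{l,m}\rangle$, together with Parseval's equality (Remark~\ref{rem:parseval_them}).

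For Item~(A) we take $F(\lambda)=(1+\lambda)^{(2-\kappa)/2}(1+\lambda/N^2)^{-1}$. By Remark~\ref{rem:spectral} and the multiplicity $2l+1$ of the eigenvalue $l(l+1)/R^2$, this gives
\begin{equation}
\bE\|X_{R,N}\|^2_{L^{2-\kappa}_2(\bS_R)}=\sum_{l=0}^\infty (2l+1)\,\frac{(1+l(l+1)/R^2)^{1-\kappa}}{(1+l(l+1)/(NR)^2)^{2}}.
\end{equation}
We bound this sum by $C^2R^2$ with $C$ depending on $\kappa,N$ but not on $R$. Set $u=l/R$, so that $(2l+1)\approx 2Ru$ and the sum is a Riemann sum with mesh $1/R$. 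Comparing with an integral exactly as in the proof of Lemma~\ref{lem:trace}, the sum is bounded by $C'R^2\int_0^\infty (1+u)(1+u^2)^{1-\kappa}(1+u^2/N^2)^{-2}\,\rd u$ plus lower-order terms. The integrand behaves like $N^4u^{-1-2\kappa}$ at infinity, so the integral converges precisely because $\kappa>0$. The $l=0$ term and the floor-function discrepancy contribute at most $O(R)\le O(R^2)$, since $R\ge1$.

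For Item~(B), Remark~\ref{rem:K_G_K} shows that $\hat K_{R,N}$ is diagonal in the spherical harmonics with eigenvalues $h_l$. Hence the same computation gives
\begin{equation}
\bE\|\hat X_{R,N}\|^2_{L^{2-\kappa}_2(\bS_R)}=\sum_{l}(2l+1)\,h_l^2\,(1+l(l+1)/R^2)^{1-\kappa}.
\end{equation}
This is the main obstacle: we need a decay bound $|h_l|\le C_k(1+l(l+1)/(NR)^2)^{-k}$ for some $k\ge1$, uniformly in $R$. Once we have it, the argument of Item~(A) applies verbatim. We would get this bound by applying $(1-\Delta_R)^k$ to the smooth zonal kernel $\ry\mapsto N^2h(Nd_R(\rx,\ry))$ and using self-adjointness (Remark~\ref{rem:lap_bel_identities}): $h_l\,(1+l(l+1)/R^2)^k$ is the Funk–Hecke coefficient of $(1-\Delta_R)^kN^2h(Nd_R)$. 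Since the kernel lives on scale $1/N\le1\le R$, each application of $\Delta_R$ costs a factor of order $N^2$ with $R$-independent constants. Indeed, in geodesic polar coordinates $\Delta_R=\partial_r^2+R^{-1}\cot(r/R)\partial_r$ and $R^{-1}\cot(r/R)\le 1/r$ on $r\le1$. So $\|(1-\Delta_R)^kN^2h(Nd_R)\|_{L_1}\le C_kN^{2k}$. Combined with $|P_l|\le1$ from \eqref{eq:bound_legendre}, this gives $|h_l|\le C_k N^{2k}(1+l(l+1)/R^2)^{-k}$, which is equivalent to the desired bound up to $N$-dependent constants. Taking $k=2$ reduces Item~(B) to the integral already handled in Item~(A).
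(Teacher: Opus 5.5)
Your proposal is correct and follows the paper's proof. Both arguments compute the expectation exactly in the spherical-harmonic basis as $\sum_l(2l+1)\ll l\gg^{2-\kappa}\langle l\rangle^{-1}$ and then control the $R^2$-growth of this series by the Riemann-sum comparison behind Lemma~\ref{lem:trace}.

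For Item~(B), the paper simply cites Lemma~\ref{lem:K_hat}, i.e.\ $|h_l|\le C(1+l(l+1)/(NR)^2)^{-1}$ uniformly in $R,N$. Its proof is the same integration by parts against the Laplace--Legendre eigen-equation that you carry out with $(1-\Delta_R)^k$, done there with $k=1$ and $N$-independent constants, and $k=1$ is already enough for this lemma.
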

\begin{proof}
It suffices to just prove Item~(A).
By Lemma~\ref{lem:K_hat} the proof of Item~(B) is identical to the proof of Item~(A). One starts off by writing
\begin{equation}\label{eq:stochastic_not_uiform_1}
\begin{aligned}
\bE\|X_{R,N}\|^2_{L_2^{2-\kappa}(\bS_R)} 
&= \bE\|(1-\Delta_R)^{(2-\kappa)/2}X_{R,N}\|^2_{L_2(\bS_R)}
\\
&=\bE \int_{\bS_R}\,((1-\Delta_R)^{(2-\kappa)/2}X_{R,N})(\rx)\, \overline{((1-\Delta_R)^{(2-\kappa)/2}X_{R,N})(\rx)}\,\rho_R(\rd \rx)\,.
\end{aligned}
\end{equation}
Utilizing Remark~\ref{rem:parseval_them} and the notations introduced in Remark~\ref{rem:l_brackets}, one gets
\begin{equation}\label{eq:main}
\bE\|X_{R,N}\|^2_{L_2^{2-\kappa}(\bS_R)} = \bE \,\sum_{l=0}^{\infty}\sum_{m=-l}^{l}\ll l\gg^{(2-\kappa)}
(X_{R,N})_{l,m}\,\overline{(X_{R,N})_{l,m}}\,.
\end{equation}
Using Fubini’s theorem and from the fact that $X_{R,N}$ is real-valued, it follows that
\begin{equation}\label{eq:stochastic_not_uiform_3}
\begin{aligned}
\bE (X_{R,N})_{l,m}\, \overline{(X_{R,N})_{l,m}}
&= \int_{\bS_R}\int_{\bS_R} \bE (X_{R,N}(\ry)X_{R,N}(\rz))\overline{Y_{l,m}}(\ry)Y_{l,m}(\rz)\,\rho_R(\rd\ry)\rho_R(\rd\rz)
\\
&=  \int_{\bS_R}\int_{\bS_R} G_{R,N}(\ry,\rz)\,\overline{Y_{l,m}}(\ry)\,Y_{l,m}(\rz)\,\rho_R(\rd\ry)\,\rho_R(\rd\rz)\,,
\end{aligned}
\end{equation}
where we used Lemma~\ref{Nualart-Lemma-1-1} in the second line. Applying Remarks~\ref{rem:expansion_covariance_harmo} and~\ref{rem:spherical_projection} to the preceding expression, culminates in
\begin{equation}
\begin{gathered}
\bE\|X_{R,N}\|^2_{L_2^{2-\kappa}(\bS_R)}
=  \frac{1}{(4\pi R^2)^{2} }\sum_{l=0}^{\infty}\, \ll l\gg^{(2-\kappa)}(2l+1) \sum_{l_1=0}^{\infty}\,
 \langle l_1\rangle^{-1}\,(2l_1+1)
\\ \times\,\int_{\bS_R}\int_{\bS_R}\,P_{l_1}(\frac{\ry\cdot\rz}{R^2})\,P_{l}(\frac{\ry\cdot\rz}{R^2})\,\rho_R(\rd\ry)\,\rho_R(\rd\rz)\,.
\end{gathered}
\end{equation}
Let $t:=(\ry\cdot\rz)/R^2= \cos\theta, \,\rd t = \sin\theta \rd \theta$ and recall that $\rho_R(\rd\ry)= 2\pi R^2\sin\theta\, \rd \theta$. From rotation invariance property of $P_l$ and Remark~\ref{rem:legendre_ortho}, one obtains
\begin{equation}
\begin{aligned}
 \int_{\bS_R}\int_{\bS_R}\, P_{l_1}(\frac{\ry\cdot\rz}{R^2})\,P_{l}(\frac{\ry\cdot\rz}{R^2})\,\rho_R(\rd\ry)\,\rho_R(\rd\rz)
= (4\pi R^2\times 2\pi R^2)\int_{-1}^1 \,P_{l_1}(t) P_{l}(t)\,\rd t\\
=(4\pi R^2\times 2\pi R^2) \frac{2}{2l_1+1}\,\delta_{l_1,l}.
\end{aligned}
\end{equation}
Substituting this to the previous equation and referring to Lemma~\ref{lem:trace}  
for the $R^2$-dependence of the constants, we conclude the proof. 
\end{proof}

\begin{lem}\label{lem:stochastic_X_uniform}
For every $\kappa\in(0,\infty)$, $\delta\in[0,2]$, there exists $C\in(0,\infty)$ such that for all $R,N\in\bN_+$ it holds that
\begin{enumerate}
 \item[(A)] $\bE\|X_R\|_{L^{-\kappa}_2(\bS_R)}^2 
  \leq R^2\,C^2$,
 \item[(B)] $\bE\|X_R- X_{R,N}\|_{L_2^{-\kappa-\delta}(\bS_R)}^2 
 \leq R^2\,C^2\, N^{-2\delta}$, 
 \item[(C)] $\bE\|X_R-\hat X_{R,N}\|_{L_2^{-\kappa-\delta}(\bS_R)}^2  
 \leq R^2\,C^2\, N^{-2\delta}$. 
\end{enumerate}
\end{lem}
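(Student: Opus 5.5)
We prove all three items with the same spectral computation as in the proof of Lemma~\ref{lem:stochastic_X_not_uniform}. Since $X_R$, $X_{R,N}=K_{R,N}X_R$ and $\hat X_{R,N}=\hat K_{R,N}X_R$ are centered Gaussian fields built from $X_R$ by operators that are diagonal in the spherical harmonics basis (Remark~\ref{rem:K_G_K}), we expand in $Y_{l,m}$ and use Parseval (Remark~\ref{rem:parseval_them}). For any Borel function $F$ of $(-\Delta_R)$ and any $s\in\bR$, the computation \eqref{eq:stochastic_not_uiform_3} together with Legendre orthogonality (Remark~\ref{rem:legendre_ortho}) gives
\begin{equation}
\bE\|F(-\Delta_R)X_R\|^2_{L^{s}_2(\bS_R)}=\sum_{l=0}^\infty (2l+1)\,\ll l\gg^{s-1}\,|F(l(l+1)/R^2)|^2 .
\end{equation}
Here $\ll l\gg=1+l(l+1)/R^2$, and the extra factor $\ll l\gg^{-1}$ comes from the covariance $G_R$.

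For Item~(A) we take $F=1$ and $s=-\kappa$, which gives $\sum_l (2l+1)\ll l\gg^{-1-\kappa}$. Comparing with an integral in the variable $u=l/R$, as in the proof of Lemma~\ref{lem:trace}, bounds this by $R^2\int_0^\infty 2u(1+u^2)^{-1-\kappa}\,\rd u+C\le C R^2$. The integral converges because $\kappa>0$.

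For Item~(B) we take $F(\lambda)=1-(1+\lambda/N^2)^{-1}=\frac{\lambda/N^2}{1+\lambda/N^2}$ and $s=-\kappa-\delta$. The key elementary inequality is
\begin{equation}
\frac{\lambda/N^2}{1+\lambda/N^2}\le \min(1,\lambda/N^2)\le (\lambda/N^2)^{\delta/2}\le N^{-\delta}\,(1+\lambda)^{\delta/2},
\end{equation}
valid for $\delta\in[0,2]$ and $\lambda\ge0$. Squaring it produces the factor $N^{-2\delta}\ll l\gg^{\delta}$, which cancels the extra $-\delta$ regularity. This reduces Item~(B) to the sum in Item~(A), giving the bound $C R^2 N^{-2\delta}$.

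Item~(C) is the main obstacle, because $\hat K_{R,N}$ is defined by a geodesic kernel and not by an explicit spectral multiplier. Its eigenvalues $h_l$ are given by the Funk--Hecke formula in Remark~\ref{rem:funk-hecke}. We need two bounds, uniform in $R$ and $N$: $|h_l|\le 1$, and $|1-h_l|\le C\,\frac{\ll l\gg/N^2}{1+\ll l\gg/N^2}$. The second bound is the content of Lemma~\ref{lem:K_hat}, already invoked in Remark~\ref{rem:hat_c}. If Lemma~\ref{lem:K_hat} had to be proved, the plan is as follows. The bound $|h_l|\le1$ follows from $|P_l|\le1$ and the normalization of $h$, as in the Schur bound in the remark following Def.~\ref{def:hat-K-g}. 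For small $l/(RN)$ we use the Lipschitz estimate on Legendre polynomials from Remark~\ref{rem:er_legendre}, refined to second order. The kernel is even, so the first-order term vanishes, and $1-h_l=\int N^2h(Nd)\,(1-P_l(\cos\theta))\,\rho_R$ is $O(l(l+1)\theta^2)$ with $\theta\lesssim 1/(NR)$. This yields $|1-h_l|\le C\,l(l+1)/(RN)^2$. For large $l/(RN)$ the trivial bound $|1-h_l|\le 2$ suffices. Given Lemma~\ref{lem:K_hat}, the same interpolation inequality as in Item~(B) applies with $F(\lambda)=1-h$, and the proof concludes exactly as for Item~(B).
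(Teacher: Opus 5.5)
Your proposal is correct and follows the paper's route. Both arguments use the spectral trace formula $\sum_l(2l+1)\ll l\gg^{s-1}|F(l(l+1)/R^2)|^2$, the elementary bound $\min(1,x)\le x^{\delta/2}$ for (B), and Lemma~\ref{lem:K_hat} to reduce (C) to (B).

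In (A) you keep the decay factor $\ll l\gg^{-\kappa}$, which is equivalent to dropping the $N$-dependent factor and applying Lemma~\ref{lem:trace} with $N=1$. The paper's written step instead discards $\ll l\gg^{-\kappa}$ and keeps the cutoff factor, which by Lemma~\ref{lem:trace} leaves a $\log N$ and is not uniform; your version is the one that actually gives the stated bound.
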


\begin{proof}
By definition
\begin{equation}
\begin{aligned}
\bE\|X_{R,N}\|^2_{L_2^{-\kappa}(\bS_R)} 
&= \sum_{l=0}^{\infty}(1+\frac{l(l+1)}{R^2})^{-\kappa}\,(2l+1)\,(1+\frac{l(l+1)}{R^2})^{-1}(1+\frac{(l(l+1)}{N^2R^2})^{-2}
\\
& = \sum_{l=0}^{\infty} (2l+1)\,\Big(1+\frac{l(l+1)}{R^2}\Big)^{-1-\kappa}\,\Big(1+\frac{l(l+1)}{R^2N^2}\Big)^{-2}
\\
& \leq \sum_{l=0}^{\infty} (2l+1)\,\Big(1+\frac{l(l+1)}{R^2}\Big)^{-1}\,\Big(1+\frac{l(l+1)}{R^2N^2}\Big)^{-2}\,.
\end{aligned}
\end{equation}
Now, Item~(A) follows from Lemma~\ref{lem:trace}. To prove Item~(B), we start off by writing
\begin{equation}
\begin{aligned}
\bE\|X_R - X_{R,N}\|^2_{L^{-\kappa -\delta}_{2}(\bS_R)} 
= \bE\|(1 - \Delta_R)^{\frac{-\kappa -\delta}{2}} (X_R - X_{R, N}) \|^2_{L_2(\bS_R)}\,.
\end{aligned}
\end{equation}
The above expression can be expanded by
\begin{multline}
 \mathrm{Tr}\left( (1-\Delta_R)^{-1-\kappa-\delta}\right)
- 2\,\mathrm{Tr}\left((1-\Delta_R)^{-1-\kappa-\delta} (1 -\frac{\Delta_R}{N^2})^{-1}\right)
\\
+\mathrm{Tr}\left(1-\Delta_R)^{-1-\kappa-\delta}(1 -\frac{\Delta_R}{N^2})^{-2}\right),
\end{multline}
which is equal to
\begin{multline}
 \text{Tr}\Big((1-\Delta_R)^{-1-\kappa-\delta} (1- (1 -\frac{\Delta_R}{N^2})^{-1})^2 \Big)\,
 \\
= \sum_{l=0}^{\infty} (2l+1)\,\left(1+\frac{l(l+1)}{R^2}\right)^{-\kappa-\delta-1}\left(1 - (1+\frac{l(l+1)}{R^2N^2})^{-1}\right)^{2}\,.
\end{multline}
It follows from the fact that $\mathrm{Tr}(A) +\mathrm{Tr}(B)=\mathrm{Tr}(A+B)$. Observe that for all $\delta\in [0,2]$, one has
\begin{equation}
\left(1- \left(1+ \frac{l(l+1)}{R^2N^2} \right)^{-1}\right)^{2} \leq N^{-2\delta}\,\Big(\frac{l(l+1)}{R^2}\Big)^{\delta}\,.
\end{equation}
Hence,
\begin{multline}
\bE \|X_R - X_{R,N}\|^2_{L^{-\kappa -\delta}_{2}(\bS_R)} 
= 
\mathrm{Tr}\big((1-\Delta_R)^{-1-\kappa-\delta} (1- (1 -\Delta_R/N^2)^{-1})^2 \big)
\\
\leq N^{-2\delta}\,\mathrm{Tr}\big((1-\Delta_R)^{-1-\kappa}\big)
= 
\sum_{l=0}^{\infty} \frac{N^{-2\delta}\,(2l+1)}{(1+l(l+1)/R^2)^{1+\kappa}}.
\end{multline}
Now, Item~(B) follows from Lemma~\ref{lem:trace}. Thanks to Lemma~\ref{lem:K_hat}, the proof of Item~(C) is the same as the proof of Item~(B).
\end{proof}

\begin{lem}\label{lem:process_measure_convergence}
Let $(X_n)_{n\in\bN_+}$ be a sequence of Gaussian random variables with values in a Hilbert space $H$ with distributions $\cN(m_n, Q_n)$. Assume that $X_n\to X$ in $L_2(\Omega,\mathcal{F},\bP; H)$. Then $X$ is a Gaussian random variable with mean $m=\lim_{n\to \infty} m_n$ and covariance operator $\langle Q h,h\rangle := \lim_{n\to\infty}\langle Q_n h,h\rangle $. 
\end{lem}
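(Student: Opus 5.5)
The plan is to pass Gaussianity through the $L_2$ limit at the level of one-dimensional marginals, using characteristic functionals. First, since $X_n\to X$ in $L_2(\Omega,\cF,\bP;H)$, we have $\bE X_n\to\bE X$ in $H$. Indeed, $\|\bE X_n-\bE X\|_H\le \bE\|X_n-X\|_H\le (\bE\|X_n-X\|_H^2)^{1/2}$. Hence $m_n=\bE X_n$ converges to $m:=\bE X$, so $m=\lim_{n\to\infty} m_n$ is well defined.

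Second, I will identify the covariance. For $h\in H$ the scalar variables $\langle X_n-m_n,h\rangle_H$ converge in $L_2(\Omega)$ to $\langle X-m,h\rangle_H$, by Cauchy--Schwarz together with the previous step. Therefore
$$\langle Q_nh,h\rangle=\bE\langle X_n-m_n,h\rangle_H^2\to\bE\langle X-m,h\rangle_H^2.$$
The limit defining $\langle Qh,h\rangle$ thus exists and equals the variance of $\langle X,h\rangle_H$. By polarization, $Q$ defined through the limits of $\langle Q_n h,g\rangle$ is the covariance operator of $X$. It is nonnegative, symmetric, and trace class, since
$$\mathrm{Tr}\,Q=\bE\|X-m\|_H^2<\infty.$$

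Third, I will show Gaussianity in the sense of Def.~\ref{def:gaussian_measure_integration}. For fixed $h$, the variable $\langle X_n,h\rangle_H$ is $\cN(\langle m_n,h\rangle,\langle Q_nh,h\rangle)$. $L_2$ convergence implies convergence in law, by Remark~\ref{rem:convergence}, so
$$\bE\,\re^{i\langle X,h\rangle}=\lim_{n\to\infty}\re^{i\langle m_n,h\rangle-\frac12\langle Q_nh,h\rangle}=\re^{i\langle m,h\rangle-\frac12\langle Qh,h\rangle}.$$
The interchange of limit and expectation here is justified by dominated convergence, since $|\re^{i\cdot}|\le1$ and a subsequence converges almost surely. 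Thus every continuous linear functional of $X$ is real Gaussian, and the characteristic functional of $\mathrm{Law}(X)$ is that of $\cN(m,Q)$. By the uniqueness part of Thm.~\ref{thm:bochner}, $\mathrm{Law}(X)=\cN(m,Q)$.

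The only mildly delicate point is the degenerate case $\langle Qh,h\rangle=0$, where the limit is a Dirac mass. I expect this to be harmless, since it is still Gaussian under the convention that includes degenerate normals. No real obstacle is expected beyond this bookkeeping.
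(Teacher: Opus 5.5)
Your proof is correct, but it runs the logic in the opposite order from the paper. The paper goes straight to characteristic functions. $L_2$ convergence gives convergence in law, so $\re^{i\langle m_n,f\rangle-\frac12\langle Q_nf,f\rangle}\to\bE\,\re^{i\langle X,f\rangle}$. From this the paper asserts that $\langle m_n,f\rangle$ and $\langle Q_nf,f\rangle$ ``must converge'' and reads off the Gaussian form of the limit. You instead get convergence of the parameters directly from the $L_2$ hypothesis, as convergence of first and second moments, and use characteristic functions only to identify the law of $X$. Your route makes rigorous the step the paper glosses over. The modulus of the limit controls $\langle Q_nf,f\rangle$ only after scaling $f\mapsto tf$ and using continuity of the limit at $t=0$. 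The phase determines $\langle m_n,f\rangle$ only modulo $2\pi$, so one needs something like L\'evy's continuity theorem to conclude. In your version both points are one-line estimates. You also get for free that $m$ and $Q$ are the actual mean and covariance of $X$, with $\mathrm{Tr}\,Q=\bE\|X-m\|_H^2<\infty$. The paper's route, once completed, has the advantage of needing only convergence in law. Two minor points. The limit of $\bE\,\re^{i\langle X_n,h\rangle}$ follows most simply from $|\re^{ia}-\re^{ib}|\le|a-b|$. Thm.~\ref{thm:bochner} as stated does not spell out uniqueness, but you do not need it: Gaussianity in the sense of Def.~\ref{def:gaussian_measure_integration} already follows from your one-dimensional computation and the Riesz representation.
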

\begin{proof} 
From Remark~\ref{rem:convergence} convergence in $L_2(\Omega,\cF,\bP)$ implies convergence in law, hence point-wise convergence of the characteristic
functions. Considering that
\begin{equation}
\bE\, \re^{i\langle X_n,f\rangle} =\re^{i\langle m_n,f \rangle-\frac{1}{2}\, \langle Q_nf,f\rangle}.
\end{equation}
We observe that $\langle m_n,f \rangle $ and $ \langle Q_nf,f\rangle$ must converge for any fixed $f$ and
thus determine the mean $m$ and covariance $Q$ of the limiting characteristic function. From its form we see that $X$ is Gaussian.
\end{proof}

\begin{Remark}\label{rem:process_measure_convergence}
Combing Lemma~\ref{lem:process_measure_convergence} and Lemma~\ref{lem:stochastic_X_uniform}, one infers that $X_R$ is a Gaussian random variable with mean zero and covariance operator
$$\langle G_R h,h\rangle := \lim_{N\to\infty}\langle G_{R,N} h,h\rangle.$$ 
\end{Remark}
\begin{Remark}\label{rem:mean_young_bernoulli}
For all $a,b\in\bR$ and integers $p\geq 1$, one has
\begin{equation}\label{eq}
|a^p-b^p|\leq \frac{p}{2}\,|a-b|\,
\bigl(|a|^{p-1}+|b|^{p-1}\bigr).
\end{equation}
The case $p=1$ is obvious. Let $p>1$. Using
\begin{equation}
a^p-b^p=(a-b)\sum_{k=0}^{p-1}a^{p-1-k}b^k,
\end{equation}
we obtain
\begin{equation}
|a^p-b^p| \leq |a-b|\sum_{k=0}^{p-1}|a|^{p-1-k}|b|^k.
\end{equation}
For $1\leq k\leq p-2$, Young's inequality with exponents
\begin{equation}
\widetilde p=\frac{p-1}{p-1-k},
\qquad
\widetilde q=\frac{p-1}{k}
\end{equation}
gives
\begin{equation}
|a|^{p-1-k}|b|^k \leq \frac{p-1-k}{p-1}|a|^{p-1}+\frac{k}{p-1}|b|^{p-1}.
\end{equation}
The same inequality holds directly for the endpoint terms $k=0$ and $k=p-1$. Summing over $k=0,\ldots,p-1$ and using
\begin{equation}
\sum_{k=0}^{p-1}\frac{p-1-k}{p-1}=\sum_{k=0}^{p-1}\frac{k}{p-1}=\frac{p}{2},
\end{equation}
we conclude that
\begin{equation}
|a^p-b^p|
\leq \frac{p}{2}|a-b|
\bigl(|a|^{p-1}+|b|^{p-1}\bigr).
\end{equation}

\end{Remark}

\begin{lem}\label{lem:stochastic_Y}
Let $R\in\bN_+$. There exists a real-valued random variable $Y_R$ and $C\in(0,\infty)$ such that for all $N\in\bN_+$ it holds
\begin{enumerate}
 \item[(A)] $\bE Y_R^2\leq C^2$,
 \item[(B)] $\bE (Y_R-Y_{R,N})^2\leq C^2\,N^{-1/n}$,
 \item[(C)] $\bE (Y_R-\hat Y_{R,N})^2\leq C^2\,N^{-1/n}$,
 \item[(D)] $\bE (\hat Y_{R,N}-\tilde Y_{R,N})^2\leq C^2\,N^{-1}$.
\end{enumerate}
\end{lem}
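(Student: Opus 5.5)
The plan is to reduce everything to second moments of Wick monomials integrated against bounded functions, using Wick's theorem for products of Hermite polynomials. Lemma~\ref{Nualart-Lemma-1-1} and its multivariate version give, for jointly Gaussian fields $A,B$ with variances $c_A,c_B$ (Wick ordered with respect to those variances),
\begin{equation}
\bE\big[A^{:m:}(\rx)\,B^{:m':}(\ry)\big]=\delta_{m,m'}\,m!\,\big(\bE A(\rx)B(\ry)\big)^m .
\end{equation}
First I define $Y_R:=\sum_{m=0}^n a_m X_R^{:m:}(\1_{\bS_R})$ as the $L_2$-limit of $Y_{R,N}$. Existence of this limit and Item~(B) both follow from a Cauchy estimate: the sequence $(Y_{R,N})_N$ is Cauchy in $L_2(\Omega)$, which gives a limit $Y_R$. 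For $m\ge 1$ and $M\ge N$, the identity above gives
\begin{equation}
\bE\big(X_{R,N}^{:m:}(\1)-X_{R,M}^{:m:}(\1)\big)^2
= m!\int\!\!\int \big(G_{R,N,N}^m-2G_{R,N,M}^m+G_{R,M,M}^m\big)(\rx,\ry)\,\rho_R(\rd\rx)\rho_R(\rd\ry).
\end{equation}
Here the cross covariance of $X_{R,N}$ and $X_{R,M}$ is $K_{R,N}G_RK_{R,M}=G_{R,N,M}$. I rewrite the integrand as $(G_{R,N,N}^m-G_{R,N,M}^m)+(G_{R,M,M}^m-G_{R,N,M}^m)$ and apply Remark~\ref{rem:mean_young_bernoulli}. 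By H\"older with exponent $n\ge m$, each term is controlled by
\begin{equation}
\|G_{R,N,N}-G_{R,N,M}\|_{L_n(\bS_R^2)}\cdot\big(\|G_{R,N,N}\|_{L_n}+\|G_{R,N,M}\|_{L_n}\big)^{m-1},
\end{equation}
with the factors of $\mathrm{vol}(\bS_R)$ absorbed into the $R$-dependent constant. Lemma~\ref{lem:local_covariance}(A),(B) bounds this by $C N^{-1/n}$. I expect the $L_n$ bounds there to extend to $G_{R,N,M}$ and $G_{R,M,M}$ by the same trace computation. Summing over $m$ with $|a_m|$ bounded gives Item~(B), after passing $M\to\infty$ by Fatou. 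Item~(A) follows in the same way from Lemma~\ref{lem:local_covariance}(A): $\bE Y_{R,N}^2\le C$ uniformly in $N$, and this bound passes to the limit.

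For Item~(C) I repeat the computation with $\hat X_{R,N}=\hat K_{R,N}X_R$, whose covariance is $\hat K_{R,N}G_R\hat K_{R,N}$. What is needed is the analogue of Lemma~\ref{lem:local_covariance} for the kernels $\hat K_{R,N}G_R\hat K_{R,M}$, and for the mixed kernels $\hat K_{R,N}G_RK_{R,M}$ if I compare $\hat Y_{R,N}$ directly with $Y_{R,M}$. The proof of Lemma~\ref{lem:local_covariance} uses only Lemma~\ref{lem:embedding_sphere} and trace bounds. The operators $\hat K_{R,N}$, $K_{R,N}$ and $G_R$ commute (Remark~\ref{rem:K_G_K}), and the spectral comparison $|1-\hat K_{R,N}|\lesssim (1-\Delta_R)N^{-2}/(1-\Delta_R/N^2)$ from Lemma~\ref{lem:K_hat}/Remark~\ref{rem:hat_c} holds. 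With these, the traces $\mathrm{Tr}[G_R^{1+2/n}\hat K^2(\hat K-K)^2]$ are bounded by $CN^{-2/n}$ just as before. The Wick constants $\hat c_{R,N}$ coincide with the variances of $\hat X_{R,N}(\rx)$, by rotation invariance of the kernel, so the orthogonality relation applies unchanged.

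Item~(D) concerns $\hat Y_{R,N}-\tilde Y_{R,N}=\sum_m a_m\hat X_{R,N}^{:m:}(\1_{A_N})$ with $A_N=\{|\rx_1|\le 1/N\}$, a band of area $O(R/N)$. Its second moment equals
\begin{equation}
\sum_m a_m^2\, m!\int_{A_N}\!\int_{A_N}\hat G_N(\rx,\ry)^m\,\rho_R(\rd\rx)\rho_R(\rd\ry),
\end{equation}
where $\hat G_N=\hat K_{R,N}G_R\hat K_{R,N}$. I bound this by H\"older:
\begin{equation}
\int_{A_N}\!\int_{A_N}|\hat G_N|^m\le |A_N|\sup_\rx\|\hat G_N(\rx,\Cdot)\|^m_{L_m}\le C\,|A_N|\le CN^{-1},
\end{equation}
using the uniform $L_n$ bound in $\ry$ from the analogue of Lemma~\ref{lem:local_covariance}(A) for $\hat G_N$. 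The main obstacle is establishing that uniform-in-$N$ bound for the hatted kernels. I would argue via the Funk--Hecke eigenvalues $h_l$, whose boundedness $|h_l|\le 1$ is given by Schur's test, combined with Lemma~\ref{lem:embedding_sphere}, exactly as in the proof of Lemma~\ref{lem:local_covariance}.
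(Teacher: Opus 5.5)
Your proposal is correct and follows essentially the paper's route. The ingredients are the same: Hermite orthogonality (Lemma~\ref{Nualart-Lemma-1-1}), the inequality of Remark~\ref{rem:mean_young_bernoulli} combined with H\"older, rotational invariance plus the Sobolev embedding of Lemma~\ref{lem:embedding_sphere} to reduce everything to trace bounds, Lemma~\ref{lem:K_hat} for the hatted cutoff, and the $O(1/N)$ area of the band $\{|\rx_1|\le 1/N\}$ for (D). Your use of the mixed kernels $\hat K_{R,N}G_RK_{R,M}$ in (C) is slightly more careful than the paper: the paper only notes that the estimates carry over to $\hat G_{R,N,M}$, which makes $(\hat Y_{R,N})$ Cauchy, and leaves implicit that its limit coincides with $Y_R$.
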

\begin{Remark}
Recall that $n\in 2\bN_+$, $n\geq 4$, is the degree of the polynomial $P$ given in Eq.~\eqref{eq:polynomial_form} and the random variables $Y_{R,N}$ and $\hat Y_{R,N},\tilde Y_{R,N}$ were introduced in Eq.~\eqref{dfn:X_Y} and Def.~\ref{dfn:hat_X_hat_Y}, respectively.
\end{Remark}
\begin{proof}
To prove Items~(A)~and~(B), it is enough to show that for every $m\in\{1,\ldots,n\}$ there exists $C\in(0,\infty)$ such that for all $N,M\in\bN_+$, it holds
\begin{equation}
 \bE X^{:m:}_{R,N}(\mathmybb{1}_{\bS_R})(X^{:m:}_{R,N}-X^{:m:}_{R,M})(\mathmybb{1}_{\bS_R})\leq C^2\,(N\wedge M)^{-1}.
\end{equation}
Let $G_{R,N,M}:=K_{R,N}G_R K_{R,M}$. By Lemma~\ref{Nualart-Lemma-1-1} 
\begin{multline}
 \bE X^{:m:}_{R,N}(1_{\bS_R})(X^{:m:}_{R,N}-X^{:m:}_{R,M})(\mathmybb{1}_{\bS_R}) \\
 =m! \int_{\bS_R}\int_{\bS_R}\,(G^m_{R,N,N}(\rx,\ry)-G^m_{R,N,M}(\rx,\ry))\,\rho_R(\rd\rx)\rho_R(\rd\ry).
\end{multline}
On account of Remark~\ref{rem:mean_young_bernoulli}, there exists $C\in (0,\infty)$ such that
\begin{equation}
|G_{R,N,N}^m-G_{R,N,M}^m|
\leq\,C\,|G_{R,N,N}-G_{R,N,M}|\,\left | |G_{R,N,N}|^{m-1}+|G_{R,N,M}|^{m-1}\right |.
\end{equation}
For every $m\in\{1,\ldots,n\}$ from H{\"o}lder's inequality with $1/m+(m-1)/m=1$, we obtain that  there exists $C\in(0,\infty)$ such that for all $N,M\in\bN_+$ and all $\rx\in\bS_R$ it holds
\begin{multline}
 |\bE X^{:m:}_{R,N}(\mathmybb{1}_{\bS_R})(X^{:m:}_{R,N}-X^{:m:}_{R,M})(\mathmybb{1}_{\bS_R})| 
 \\
 \leq C\,\|(G_{R,N,N}-G_{R,N,M})(\Cdot,\Cdot)\|_{L_m(\bS_R^2)}
 \big(\|G_{R,N,N}(\Cdot,\Cdot)\|_{L_m(\bS_R^2)}^{m-1}
 +
 \|G_{R,N,M}(\Cdot,\Cdot)\|_{L_m(\bS_R^2)}^{m-1}\big)
 \\
 \leq \hat C\,\|(G_{R,N,N}-G_{R,N,M})(\Cdot,\Cdot)\|_{L_n(\bS_R^2)}
 \big(\|G_{R,N,N}(\Cdot,\Cdot)\|_{L_n(\bS_R^2)}^{m-1}
 +
 \|G_{R,N,M}(\Cdot,\Cdot)\|_{L_n(\bS_R^2)}^{m-1}\big)
 \\=
 \check C\,\|(G_{R,N,N}-G_{R,N,M})(\rx,\Cdot)\|_{L_n(\bS_R)}
 \big(\|G_{R,N,N}(\rx,\Cdot)\|_{L_n(\bS_R)}^{m-1}
 +
 \|G_{R,N,M}(\rx,\Cdot)\|_{L_n(\bS_R)}^{m-1}\big),
\end{multline}
where in the last step above we used the rotational invariance property of $G_{R,N,N}$ and $G_{R,N,M}$. Note that $\hat C=(4\pi R^2)^{( 2-2m/n)} C$, $\check C=(4\pi R^2)^{m/n}\hat C$. Evoking the rotational invariance property combined with Lemma~\ref{lem:embedding_sphere} there exist $\hat C,C\in(0,\infty)$ such that for all $N\in\bN_+$ it holds
\begin{multline}
 \|(G_{R,N,N}-G_{R,N,M})(\rx,\Cdot)\|^2_{L_n(\bS_R)}
 \leq \hat C\, 
 \|(G_{R,N,N}-G_{R,N,M})(\rx,\Cdot)\|^2_{L_2^{(n-2)/n}(\bS_R)}
 \\
 =
 (4\pi R^2)^{-1}\,\hat C\,\big(\mathrm{Tr}\big[G_R^{(n+2)/n} K_{R,N}^2 (K_{R,N}-K_{R,M})^2\big]\big)
 \leq C\,(N\wedge M)^{-2/n}.
\end{multline}
The last bound follows from the fact that
\begin{multline}
\text{Tr}\left(G_R^{(n+2)/n}\,\left(K^2_{R,N}\,(K_{R,N}-K_{R,M})^2\right) \right)
\\
\leq \sum_{l=0}^{\infty}\,\frac{(2l+1)}{\left(1+l(l+1)/R^2\right)^{1+2/n}}\,\left( \frac{1}{1+l(l+1)/(RN)^2} -\frac{1}{1+l(l+1)/(RM)^2}\right)^2
\\
\leq \sum_{l=0}^{\infty}\,\frac{(2l+1)}{\left(1+l(l+1)/R^2\right)^{1+2/n}}\,\bigg( (1+l(l+1)/R^2)^{1/n}\,\left( M^{-2/n}+N^{-2/n} \right)\bigg)
\\
\leq \,2\,(N\wedge M)^{-2/n}\,\sum_{l=0}^{\infty}\,\frac{(2l+1)}{\left(1+l(l+1)/R^2\right)^{1+1/n}}\,.
\end{multline}
The last sum is bounded as
$$(1+(l(l+1))/(RN)^2)\geq1,\,\,n\geq 4.$$
By an analogous reasoning, we obtain
\begin{multline}\label{eq:prove_Y_bound_G}
 \|G_{R,N,N}(\rx,\Cdot)\|^2_{L_n(\bS_R)}
 \leq \hat C\, 
 \|G_{R,N,N}(\rx,\Cdot)\|^2_{L_2^{(n-2)/n}(\bS_R)}
 \\
 =
 (4\pi R^2)^{-1}\,\hat C\,\big(\mathrm{Tr}\big[G_R^{(n+2)/n} K_{R,N}^4\big]\big).
\end{multline}
Observe that
\begin{multline}
\mathrm{Tr}\left[G_R^{(n+2)/n} K_{R,N}^4\right]
= \sum_{l=0}^{\infty}\,\frac{(2l+1)}{(1+l(l+1)/R^2)^{1+2/n}}\left(\frac{1}{1+l(l+1)/(RN)^2} \right)^4
 \leq C
\end{multline}
for some constants $C,\hat C$ independent of $N$ and $m$. This proves~(A) and~(B). Thanks to Lemma~\ref{lem:K_hat}, the above estimates are also valid when $X_{R,N}$ is replaced with $\hat X_{R,N}$ and $G_{R,N,M}$ is replaced with $\hat G_{R,N,M}:=\hat K_{R,N}G_R\hat K_{R,M}$. Hence,~(C) follows. To prove Item~(D) note that for every $m\in\{1,\ldots,n\}$, there exists $C\in(0,\infty)$ such that for all $N\in\bN_+$ and $\rx\in\bS_R$, one has
\begin{multline}
 \bE \hat X^{:m:}_{R,N}(1_{\bS_R\setminus\bS_{R,N}})\hat X^{:m:}_{R,N}(1_{\bS_R\setminus\bS_{R,N}})
 \leq \|\hat G_{R,N,N}(\Cdot,\Cdot)\|^m_{L_m(\bS_R\setminus\bS_{R,N}\times \bS_R\setminus\bS_{R,N})}
 \\
 \leq 
 \|\hat G_{R,N,N}(\Cdot,\Cdot)\|^m_{L_m(\bS_R\setminus\bS_{R,N}\times \bS_R)}
 \leq C/N\,
 \|\hat G_{R,N,N}(\rx,\Cdot)\|^m_{L_m(\bS_R)},
\end{multline}
where in the last step we used the rotational invariance of $\hat G_{R,N,N}$ and the fact that the volume of $\bS_R\setminus\bS_{R,N}$ is bounded by $C/N$. It holds that
\begin{multline}\label{eq:prove_Y_bound_G_hat}
 \|\hat G_{R,N,N}(\rx,\Cdot)\|^2_{L_m(\bS_R)}\leq \check C\, \|\hat G_{R,N,N}(\rx,\Cdot)\|^2_{L_n(\bS_R)}
 \leq \hat C\, 
 \|\hat G_{R,N,N}(\rx,\Cdot)\|^2_{L_2^{(n-2)/n}(\bS_R)}
 \\
 =
 (4\pi R^2)^{-1}\,\hat C\,\big(\mathrm{Tr}\big[\hat G_R^{1+2/n} \hat K_{R,N}^4\big]\big)
 \leq C.
\end{multline}
This finishes the proof.
\end{proof}

\begin{lem}\label{lem:stochastic_bound_infinite_volume}
Let $m\in\bN_+$, $p\in[1,\infty)$, $\kappa\in(0,\infty)$ and $L\in[1,\infty)$. There exists $C\in(0,\infty)$ such that for all $R,N\in\bN_+$, $R\geq L$, it holds that
\begin{itemize}
\item[(A)]
$ \bE\|\jmath_R^* X^{:m:}_{R,N}\|_{L^{-\kappa}_p(\bR^2,v_L^{1/p})}^p \leq C$,
\item[(B)]
$\bE\,\|\jmath_R^* (X_R-X_{R,N})\|_{L^{-\kappa}_p(\bR^2,v_L^{1/p})}^p \leq CN^{-\kappa}\,.$
\end{itemize}
\end{lem}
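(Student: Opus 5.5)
We reduce both items to second-moment bounds and then upgrade to $p$-th moments with Lemma~\ref{lem:nelson}. Since $\jmath_R^*X^{:m:}_{R,N}$ and $\jmath_R^*(X_R-X_{R,N})$ lie, pointwise after smoothing, in an inhomogeneous Wiener chaos of order at most $m$, we may assume $p\geq 2$ is even. This uses Hölder in the probability variable together with Hölder on the finite measure $v_L(x)\,\rd x$, as in Eq.~\eqref{eq:embedding_weighted}. Writing $J:=(1-\Delta)^{-\kappa/2}$ on $\bR^2$, Fubini gives
\begin{equation}
\bE\|\jmath_R^*X^{:m:}_{R,N}\|^p_{L^{-\kappa}_p(\bR^2,v_L^{1/p})}=\int_{\bR^2} v_L(x)\,\bE|(J\jmath_R^*X^{:m:}_{R,N})(x)|^p\,\rd x\leq (p-1)^{mp/2}(m+1)^{p/2}\int_{\bR^2} v_L(x)\,\big(\bE|(J\jmath_R^*X^{:m:}_{R,N})(x)|^2\big)^{p/2}\rd x .
\end{equation}
Because $\|v_L\|_{L_1(\bR^2)}<1$, it suffices to bound $\sup_x \bE|(J\jmath_R^*X^{:m:}_{R,N})(x)|^2$ uniformly in $R\geq L$ and $N$ for Item~(A). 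For Item~(B) the corresponding bound should be $CN^{-2\kappa/p}$; if the factor $N^{-\kappa}$ in the statement is meant up to relabelling $\kappa$, we will allow ourselves to shrink $\kappa$ at the end.

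For the second moment we use Lemma~\ref{Nualart-Lemma-1-1}. Let $J(x,y)$ denote the Bessel kernel of Lemma~\ref{thm:Fourier_Bessel_Fundamental}. Then
\begin{equation}
\bE|(J\jmath_R^*X^{:m:}_{R,N})(x)|^2=m!\iint J(x,y)J(x,y')\,G_{R,N}(\jmath_R y,\jmath_R y')^m\,\rd y\,\rd y'.
\end{equation}
The key input is a pointwise bound on the sphere covariance, uniform in $R,N$:
\begin{equation}
|G_{R,N}(\rx,\ry)|\leq C\big(1+|\log d_R(\rx,\ry)|\big),
\end{equation}
with exponential decay in $d_R$ on scales $\gtrsim 1$, or at least a bound $C(1+\log_+(1/d_R))$. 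In stereographic coordinates, on the region where $J$ is effectively supported (exponential decay at scale one around $x$), the chordal distance satisfies $d_R(\jmath_R y,\jmath_R y')\gtrsim w_R(y)^{1/2}|y-y'|$ up to constants, by Definition~\ref{def:stereoraphic}. Since $w_R\geq w_L$, the log singularity transfers to $C(1+|\log|y-y'||+|\log w_L(y)|)$. The function $|\log w_L(y)|$ grows only logarithmically, so its powers are absorbed by the exponential decay of $J(x,\cdot)$ and the polynomial weight $v_L$. This requires reorganizing the estimate so that the sup over $x$ is replaced by an integral against $v_L$, which is harmless. Then
\begin{equation}
\iint |J(x,y)J(x,y')|\,\big(1+|\log|y-y'||\big)^m\,\rd y\,\rd y'<\infty
\end{equation}
uniformly, because $J$ is locally integrable with mild singularity for $\kappa>0$, and $\log^m$ is locally integrable in two dimensions. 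This proves Item~(A).

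For Item~(B) we apply the same scheme with the covariance of $X_R-X_{R,N}$, namely $G_R-2K_{R,N}G_R+G_{R,N}=G_R(1-K_{R,N})^2$. We need a kernel bound of the form $|(G_R(1-K_{R,N})^2)(\rx,\ry)|\leq C\min\{1+|\log d_R|,\ N^{-\theta}d_R^{-\theta}\}$ for small $\theta>0$. Interpolating these two bounds yields an $L_1$-in-$y,y'$ estimate against $|J(x,y)J(x,y')|$ of order $N^{-\theta}$. Nelson's estimate then gives $N^{-\theta p/2}$, and choosing $\theta$ appropriately, possibly shrinking $\kappa$, gives the claim.

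The main obstacle is the uniform-in-$R$ pointwise estimate on the sphere kernels. We would derive it from the Legendre expansion in Remark~\ref{rem:expansion_covariance_harmo}: split the sum at $l\sim R/d_R$, use $|P_l|\leq 1$ for small $l$, and use Abel summation with the derivative bounds of Remark~\ref{rem:er_legendre} for large $l$. The comparison of these sums with integrals, as in Lemma~\ref{lem:trace}, shows the low-$l$ part is $\sim\frac{1}{2\pi}\log(1/d_R)+O(1)$. The difference kernel gains $(l(l+1)/(NR)^2)^{\theta/2}$ from $1-K_{R,N}$, which produces the $N^{-\theta}d_R^{-\theta}$ behaviour.
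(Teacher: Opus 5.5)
Your overall architecture works and differs from the paper's. The reduction to even $p$, Nelson's estimate, the Wick covariance $m!\,G_{R,N}(\jmath_Ry,\jmath_Ry')^m$, the exact distortion $d_R(\jmath_Ry,\jmath_Ry')\geq\|\jmath_Ry-\jmath_Ry'\|=(w_R(y)w_R(y'))^{1/4}|y-y'|$ with $w_R\geq w_L$, and integrating powers of logarithms against the Bessel kernel and $v_L$ are all fine. You are also right to integrate against $v_L$ rather than take $\sup_x$, since the variance grows logarithmically in $|x|$.

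\textbf{The gap.} The estimate everything rests on is only asserted: the bounds $G_{R,N}(\rx,\ry)\leq C(1+\log_+(1/d_R))$ and $|(G_R(1-K_{R,N})^2)(\rx,\ry)|\leq CN^{-\theta}d_R^{-\theta}$, uniform in $R,N$. The derivation you sketch cannot give them. In the Legendre series the part $l\gtrsim R/d_R$ is not absolutely summable uniformly in $N$. For $G_{R,N}$ its absolute sum is of order $\log(Nd_R)$; for the difference kernel it diverges, since $(1-K_{R,N})^2\to 1$. So you need cancellation in $l\mapsto P_l(\cos\theta)$, with $\theta=d_R/R$. Neither $|P_l|\leq 1$ nor the $t$-uniform bounds $\max|P_l^{(j)}|=O(l^{2j})$ of Remark~\ref{rem:er_legendre} carry that information. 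Summation by parts with the Christoffel--Darboux partial sums $\sum_{l\leq L}(2l+1)P_l(\cos\theta)=(L+1)\frac{P_L(\cos\theta)-P_{L+1}(\cos\theta)}{1-\cos\theta}=O(L\theta^{-2})$ bounds the tail only by $O(\theta^{-1})=O(R/d_R)$, which is not uniform in $R$.

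\textbf{Possible fixes.} You need genuinely oscillatory input, such as Szeg\H{o}'s bound $|P^{(1,0)}_L(\cos\theta)|\lesssim L^{-1/2}\theta^{-3/2}$ for these partial sums. More robust is subordination: write $G_{R,N}=\int_0^\infty \re^{\tau\Delta_R}\,\mu_N(\rd\tau)$, with $\mu_N$ the law of $T+(S+S')/N^2$ for independent standard exponentials. This law has a density bounded and exponentially decaying uniformly in $N$. Combine it with the Gaussian bound $p^R_\tau\leq C\max(\tau^{-1},R^{-2})\re^{-d_R^2/C\tau}$, obtained by scaling from $\bS_1$.

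For (B) a pointwise bound is awkward, since $1-K_{R,N}$ is not positivity preserving; an $L^2$ argument is simpler. Spectrally $(1-K_{R,N})^2\leq(-\Delta_R/N^2)^{\theta}$, so the variance of $((1-\Delta)^{-\kappa/2}\jmath_R^*(X_R-X_{R,N}))(x)$ is at most $N^{-2\theta}\|h_x\|^2_{L^{-1+\theta}_2(\bS_R)}$, with $h_x=(w_R^{-1}J(x,\Cdot))\circ\jmath_R^{-1}$. The dual of Lemma~\ref{lem:embedding_sphere} controls this for $\theta<\kappa$, with polynomial growth in $x$.

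\textbf{Comparison with the paper.} The paper avoids pointwise kernel estimates entirely, which is why its proof is short.
\begin{itemize}
\item After Nelson, it bounds the weighted diagonal of $((1-\Delta)^{-\kappa/2}\otimes(1-\Delta)^{-\kappa/2})\tilde G^m_{R,N}$, where $\tilde G_{R,N}=(\jmath_R^*\otimes\jmath_R^*)G_{R,N}$, by an off-diagonal supremum.
\item It strips the Bessel potentials with the embedding of Theorem~\ref{thm:embedding}~(B) from $L_q(\bR^2,w_L^{1/q})$ into $L^{-\kappa}_\infty(\bR^2,w_L^{1/q})$, $q>2/\kappa$.
\item Using $w_L\leq w_R$ and that $w_R(y)\,\rd y$ is the pull-back of $\rho_R$, it gets $\|\tilde G^m_{R,N}(x,\Cdot)\|_{L_q(\bR^2,w_L^{1/q})}\leq\|G_{R,N}(\rx,\Cdot)\|^m_{L_{mq}(\bS_R)}$.
\item The $R$-uniform Sobolev inequality on $\bS_R$ and rotation invariance turn this into $(4\pi R^2)^{-1}\mathrm{Tr}(G_R^{1+2/mq}K_{R,N}^4)$, which Lemma~\ref{lem:trace} controls.
\item Item (B) is the same computation with $m=1$ and covariance $G_R(1-K_{R,N})^2$.
\end{itemize}
Only $L^2$ spectral sums appear, where uniformity in $R$ is automatic. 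Once the kernel bounds are supplied, your route gives more explicit information, namely the logarithmic growth of the variance in $|x|$, but it needs a genuinely harder estimate.

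Finally, your (B) yields $N^{-\theta p/2}$ with $\theta<\kappa$ rather than literally $N^{-\kappa}$. This is harmless, since (B) is only used to get convergence as $N\to\infty$.
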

\begin{proof}
By Jensen's inequality it suffices to prove the statement for $p\in2\bN_+$. Let $q=(4/\kappa) \vee 4$. There exists $C\in(0,\infty)$ depending on $p$ and $\kappa$ such that for all $R,N\in\bN_+$, one has
\begin{multline}
 \bE\|\jmath_R^* X^{:m:}_{R,N}\|_{L^{-\kappa}_p(\bR^2,v_L^{1/p})}^p \leq 
 \|v_L w_L^{-1/q}\|_{L_1(\bR^2)}\,\|\bE ((1-\Delta)^{-\kappa/2}\jmath_R^* X^{:m:}_{R,N}(\Cdot))^p\|_{L_\infty(\bR^2,w_L^{1/q})}
 \\
 \leq C\,\|\bE((1-\Delta)^{-\kappa/2}\jmath_R^* X^{:m:}_{R,N}(\Cdot))^2\|_{L_\infty(\bR^2,w_L^{1/q})}^{p/2}\,,
\end{multline}
where the last bound is a consequence of Lemma~\ref{lem:nelson}. Recall that $\bE X_{R,N}\otimes X_{R,N} = G_{R,N}(\Cdot,\Cdot)$, where $G_{R,N}=K_{R,N}G_RK_{R,N}$. By Lemma~\ref{Nualart-Lemma-1-1}
\begin{equation}
 \bE \jmath_R^*X^{:m:}_{R,N}\otimes \jmath_R^*X^{:m:}_{R,N} = m!\, \tilde G_{R,N}^m,
 \qquad 
 \tilde G_{R,N}:=(\jmath_R^*\otimes\jmath_R^*)G_{R,N}(\Cdot,\Cdot)\,.
\end{equation}
Hence, by Fubini's theorem and explicit formula for the kernel in terms of spherical harmonics
\begin{multline}
\bE(1-\Delta)^{-\kappa/2}\jmath_R^* X^{:m:}_{R,N}\otimes\,
(1-\Delta)^{-\kappa/2}\jmath_R^* X^{:m:}_{R,N}
\\
= m!\,
\big((1-\Delta)^{-\kappa/2} \otimes (1-\Delta)^{-\kappa/2}\big)\, \tilde G_{R,N}^m \in C(\bR^2\times\bR^2).
\end{multline}
Since for $F\in C(\bR^2\times\bR^2)$ it holds $\sup_{x\in\bR^2} F(x,x)\leq \sup_{y\in\bR^2}\sup_{x\in\bR^2} F(x,y)$ we obtain
\begin{multline}
\|\bE((1-\Delta)^{-\kappa/2}\jmath_R^* X^{:m:}_{R,N}(\Cdot))^2\|_{L_\infty(\bR^2,w_L^{1/q})}
\\
\leq
m!\,\sup_y w_L^{1/q}(y)\,\|((1-\Delta)^{-\kappa/2} \otimes 1)\big(1 \otimes (1-\Delta)^{-\kappa/2}) \tilde G_{R,N}^m\big)(\Cdot,y)\|_{L_\infty(\bR^2)}
\\
=
m!\,\sup_y w_L^{1/q}(y)\,\|\big( 1 \otimes (1-\Delta)^{-\kappa/2}) \tilde G_{R,N}^m\big)(\Cdot,y)\|_{L^{-\kappa}_\infty(\bR^2)}.
\end{multline}
From Thm.~\ref{thm:embedding}~(B), there exists $C\in(0,\infty)$ such that for all $R,N\in\bN_+$, the above expression is bounded by
\begin{multline}
\sup_{y\in\bR^2} w_L^{1/q}(y)\,\|\big(1\otimes(1-\Delta)^{-\kappa/2}) \tilde G_{R,N}^m\big)(\Cdot,y)\|_{L_\infty(\bR^2)}
\\
=
\sup_{x\in\bR^2}  \|\big(  1\otimes(1-\Delta)^{-\kappa/2}) \tilde G_{R,N}^m\big)(x,\Cdot)\|_{L_\infty(\bR^2,w_L^{1/q})}
\\
=
\sup_{x\in\bR^2} \|\tilde G_{R,N}^m(x,\Cdot)\|_{L_\infty^{-\kappa}(\bR^2,w_L^{1/q})}
\end{multline}
up to a multiplicative constant $C$, which depends on $m$. The first equality above follows from the fact that for 
\mbox{$F\in C(\bR^2\times\bR^2)$} it holds
$$\sup_{x\in\bR^2}\sup_{y\in\bR^2} F(x,y)=\sup_{y\in\bR^2}\sup_{x\in\bR^2} F(x,y).$$
By Thm.~\ref{thm:embedding}~(B), since $q>2/\kappa$, the above expression is bounded by
\begin{multline}
\sup_{x\in\bR^2} \|\tilde G_{R,N}^m(x,\Cdot)\|_{L_q(\bR^2,w_L^{1/q})}
\leq
\sup_{x\in\bR^2} \|\tilde G_{R,N}^m(x,\Cdot)\|_{L_q(\bR^2,w_R^{1/q})}
=
\sup_{\rx\in\bS_R} \|G_{R,N}(\rx,\Cdot)^m\|_{L_q(\bS_R)}
\\
=
\sup_{\rx\in\bS_R} \|G_{R,N}(\rx,\Cdot)\|_{L_{mq}(\bS_R)}^m
\leq 
C\sup_{\rx\in\bS_R} \|G_{R,N}(\rx,\Cdot)\|_{L^{(mq-2)/mq}_2(\bS_R)}^m
\\
=
C(4\pi R^2)^{-m/2}\,[\mathrm{Tr}(G_{R,N}(1-\Delta_R)^{(mq-2)/mq}G_{R,N})]^{m/2}.
\end{multline}
The first bound above is true because $R\geq L$. The second bound is a consequence of the Sobolev embedding stated in Lemma~\ref{lem:embedding_sphere}. It holds that
\begin{equation}
\mathrm{Tr}(G_{R,N}(1-\Delta_R)^{(mq-2)/mq}G_{R,N})
= \text{Tr}\,\big(  \frac{1}{(1-\Delta_R)^{1+2/mq}(1-(\Delta_R/N^2))^4} \big).
\end{equation}
Now, Item~(A) follows from Lemma~\ref{lem:trace} applied with $N'=1$ and $\kappa'=2/mq$.  To prove Item~(B), we use exactly the same strategy as above with $m=1$. There exists $C\in(0,\infty)$ depending on $p$ and $\kappa$ such that for all $R,N\in\bN_+$, one obtains
\begin{multline}
\begin{aligned}
 \bE\|\jmath_R^* (X_R-X_{R,N})\|_{L^{-\kappa}_p(\bR^2,v_L^{1/p})}^p
= \bE\|\jmath_R^*(1-K_{R,N})X_R\|_{L^{-\kappa}_p(\bR^2,v_L^{1/p})}^p 
 \\
\leq 
 \|v_L w_L^{-1/q}\|_{L_1(\bR^2)}\,\|\bE ((1-\Delta)^{-\kappa/2}\jmath_R^* (1-K_{R,N})X_R(\Cdot))^p\|_{L_\infty(\bR^2,w_L^{1/q})}
\\
\leq\, C\,\|\bE\left((1-\Delta)^{-\kappa/2}\jmath_R^*(1-K_{R,N})X_R(\Cdot)\right)^2\|_{L_\infty(\bR^2,w_L^{1/q})}^{p/2}\,.
 \end{aligned}
\end{multline}
where the last bound is a consequence of Lemma~\ref{lem:nelson}. One has
\begin{equation}
\bE\left(\jmath_R^*(1-K_{R,N})X_R\otimes \jmath_R^*(1-K_{R,N})X_R \right)=(\jmath_R^*\otimes\jmath_R^*)\,\tilde G_{R,N}(\Cdot,\Cdot)\,.
\end{equation}
Utilizing the result obtained in the previous part of the proof with $m=1$ and replacing $G_{R,N}\to\tilde G_{R,N}$ yields
\begin{multline}
\begin{aligned}
\mathrm{Tr}\big(\tilde G_{R,N}(1-\Delta_R)^{1-2/q}\tilde G_{R,N} \big)
=&\,\mathrm{Tr}\big( \frac{1}{(1-\Delta_R)^{1+2/q}}\,\big(\frac{-(\Delta_R/N^2)}{1-(\Delta_R/N^2)}\big)^4 \big)
\\
\leq & \,\,N^{-\kappa}\,\mathrm{Tr}\,\big(  \frac{1}{(1-\Delta_R)^{1+2/q-\kappa/2}}\big)\,.
\end{aligned}
\end{multline}
The above expression is converging provided that $2/q> \kappa/2$, which amount to the restriction $q<4/\kappa$.
This concludes the proof.
\end{proof}

\begin{lem}\label{lem:K_hat}
There exists $C\in(0,\infty)$ such that for all $R,N\in\bN_+$, it holds that
\begin{equation}
 |\hat K_{R,N}| \leq C\,\frac{1}{1-\Delta_R/N^2},
 \qquad
 |1-\hat K_{R,N}|\leq
 C\,\frac{(1-\Delta_R)/N^2}{1-\Delta_R/N^2}\,.
\end{equation} 
\end{lem}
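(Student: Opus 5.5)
The plan is to reduce both operator inequalities to scalar inequalities for the eigenvalues of $\hat K_{R,N}$. By Remark~\ref{rem:funk-hecke}, $\hat K_{R,N}$ is diagonal in the spherical harmonics basis, with $\hat K_{R,N}Y_{l,m}=h_lY_{l,m}$. Since the kernel is real and symmetric, $h_l\in\bR$. Writing $\lambda_l:=l(l+1)/R^2$, polar coordinates around $\rx$ give
\begin{equation}
h_l=2\pi R^2\int_0^\pi N^2h(NR\theta)\,P_l(\cos\theta)\sin\theta\,\rd\theta .
\end{equation}
By the joint functional calculus (Remark~\ref{rem:spectral}), it suffices to find $C$, independent of $R,N,l$, with
\begin{equation}
|h_l|\leq \frac{C}{1+\lambda_l/N^2},\qquad |1-h_l|\leq C\,\frac{(1+\lambda_l)/N^2}{1+\lambda_l/N^2}.
\end{equation}
For each inequality I will prove two elementary bounds and combine them through $\min(a,b)\leq 2ab/(a+b)$.

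\textbf{First bound.} The Schur bound in the remark following Def.~\ref{def:hat-K-g} gives $|h_l|\leq\|\hat K_{R,N}\|\leq C$. For the decay I use self-adjointness of $\Delta_R$ (Remark~\ref{rem:lap_bel_identities}):
\begin{equation}
\lambda_l h_l Y_{l,m}(\rx)=\int_{\bS_R}\big((-\Delta_{R,\ry})\hat K_{R,N}(\rx,\ry)\big)\,Y_{l,m}(\ry)\,\rho_R(\rd\ry).
\end{equation}
Hence $\lambda_l|h_l|\leq\sup_{\rx}\|\Delta_{R,\ry}\hat K_{R,N}(\rx,\cdot)\|_{L_1(\bS_R)}$. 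The kernel is radial in $d=R\theta$, so
\begin{equation}
\Delta_R\hat K_{R,N}=N^4h''(Nd)+N^3h'(Nd)\,R^{-1}\cot(d/R).
\end{equation}
The apparent singularity of $\cot$ at $d=0$ is harmless: $h'$ vanishes on $[-1/2,1/2]$, so on the support we have $d\geq 1/(2N)$ and $R^{-1}\cot(d/R)\leq C/d\leq CN$. Both terms are therefore $O(N^4)$ on a geodesic ball of area $O(N^{-2})$, which gives $\lambda_l|h_l|\leq CN^2$. Combining, $|h_l|\leq\min(C,CN^2/\lambda_l)\leq 2C/(1+\lambda_l/N^2)$.

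\textbf{Second bound.} Trivially $|1-h_l|\leq 1+C$. For the small-$\lambda_l$ regime I split
\begin{equation}
1-h_l=(1-h_0)+(h_0-h_l),\qquad h_0-h_l=2\pi R^2\int N^2h(NR\theta)\big(1-P_l(\cos\theta)\big)\sin\theta\,\rd\theta .
\end{equation}
\begin{itemize}
\item[(a)] \emph{The constant $h_0$.} Substituting $\theta'=NR\theta$ and using the normalization $2\pi\int_0^\infty\theta h\,\rd\theta=1$ gives
\begin{equation}
1-h_0=2\pi\int_0^1 h(\theta')\,\theta'\Big(1-\tfrac{\sin(\theta'/NR)}{\theta'/NR}\Big)\rd\theta' .
\end{equation}
The bracket is $O((NR)^{-2})$ uniformly on the support, so $|1-h_0|\leq C/N^2$.
\item[(b)] \emph{The difference $h_0-h_l$.} By the Markov-type bound $\max_{[-1,1]}|P_l'|=P_l'(1)=l(l+1)/2$ (Remark~\ref{rem:er_legendre}),
\begin{equation}
|1-P_l(\cos\theta)|\leq \tfrac{l(l+1)}{2}(1-\cos\theta)\leq \tfrac{l(l+1)}{4}\theta^2 .
\end{equation}
On the support $\theta\leq 1/(NR)$, so this is at most $\lambda_l/(4N^2)$. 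Integrating against $|\hat K_{R,N}|$, whose $L_1$ norm is bounded, gives $|h_0-h_l|\leq C\lambda_l/N^2$.
\end{itemize}
Together, $|1-h_l|\leq C(1+\lambda_l)/N^2$. Taking the minimum with $1+C$ gives the claimed form $C'\,\frac{(1+\lambda_l)/N^2}{1+\lambda_l/N^2}$, because $(1+\lambda_l)/N^2\leq 1+\lambda_l/N^2$ for $N\geq1$.

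The main obstacle is the decay bound for $h_l$ in the first inequality, specifically controlling $\Delta_R$ applied to the geodesic-radial kernel near $d=0$ and near the antipode uniformly in $R$. The vanishing of $h'$ near $0$ handles the former. Since $NR\geq1$, the support $d\leq1/N$ stays away from the antipode at $d=\pi R$, so the latter causes no difficulty.
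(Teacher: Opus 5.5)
Your proof is correct and follows the paper's argument, with one cosmetic difference. The reduction to the Funk--Hecke eigenvalues $h_l$, the bounds $|h_l|\leq C$ and $|h_l|\leq CN^2/\lambda_l$, and the splitting of $1-h_l$ into a $1-\sin x/x$ term and a $1-P_l(\cos\theta)$ term are exactly what the paper does; you obtain the decay of $h_l$ by moving $\Delta_R$ onto the radial kernel through Green's identity, while the paper integrates by parts in $\theta$ against the Legendre equation~\eqref{eq:legendre_1}, which is the same computation. One small slip: in the final combination the fact actually needed is $(1+\lambda_l)/N^2\geq\lambda_l/N^2$, so that the target is at least $1/2$ when $\lambda_l\geq N^2$ (the regime where you use the trivial bound $1+C$); the inequality $(1+\lambda_l)/N^2\leq 1+\lambda_l/N^2$ that you cite is true but is not the one that does the work.
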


\begin{Remark}
Observe that for all $R,N\in\bN_+$, one has $1-\Delta_R/N^2\geq 1$. Given the first bound in the statement of the lemma, considering separately the cases $-\Delta_R/N^2\leq 1$ and $-\Delta_R/N^2\geq 1$, and using the approximation property of $\hat K_{R,N}$, there exists $C\in(0,\infty)$ such that the second bound holds.
\end{Remark}

\begin{proof}
Using Remark~\ref{rem:spectral}, one has 
$$\hat K_{R,N}=\sum_{l=0}^\infty (2l+1)\mathrm{Tr}(\hat K_{R,N}\cP_{R,l})\,\cP_{R,l}.$$
Consequently, by the triangle inequality for commuting self-adjoint operators, it is enough to show that there exists $C\in(0,\infty)$ such that for all $R,N\in\bN_+$ and $l\in\bN_0$ 
\begin{equation}\label{eq:trace_L_operator_bounds_proof}
\begin{gathered}
 (1+l(l+1)/R^2N^2)\,|\mathrm{Tr}(\hat K_{R,N}\cP_{R,l})| \leq C,
 \\
 |\mathrm{Tr}((1-\hat K_{R,N})\mathcal{P}_{R,l})| \leq C\,(1+l(l+1))/R^2N^2\,,
\end{gathered} 
\end{equation}
where we used Remark~\ref{rem:expansion_covariance_harmo}. Using Remark~\ref{rem:funk-hecke}, one has
\begin{equation}
\begin{aligned}
 \mathrm{Tr}(\hat K_{R,N}\mathcal{P}_{R,l}) 
& =
\frac{1}{4\pi R^2}\,\int_{\bS_R}\int_{\bS_R}\,N^2\,h(RN\,\theta(\rx,\ry))\,P_l(\rx\cdot\ry/R^2)\,\rho_R(\rd \rx)\,\rho_R(\rd\ry)
\\
& =2\pi\int_0^{\pi}\,(NR)^2\,h(RN\theta)\,P_l(\cos\theta)\,\sin\theta\,\rd\theta
\\
& = 2\pi\int_0^1\,h(\theta)\,P_l(\cos(\theta/RN))\,RN\,\sin(\theta/RN)\,\rd\theta\,,
 \end{aligned}
\end{equation}
where we used a change of variable $\theta'=RN\,\theta$, the support property of the function $h$ and the rotation invariance property of the whole expression. Observe that $RN\sin(\theta/RN)\leq\theta$ and from Eq.~\eqref{eq:bound_legendre}, one has  $|P_l(\cos(\theta/RN))|\leq 1$. This verifies that $|\mathrm{Tr}(\hat K_{R,N}\mathcal{P}_{R,l})| \leq C$. Using integration by parts and the support property of the function $h$, one infers that there exists $C'\in (0,\infty)$ such that
\begin{equation}
\left|\int_0^{\pi}\,h(RN\theta)\left[
\frac{\rd}{\rd\theta}\left(\cos\theta\,P_{l}(\cos\theta)\right) 
-\,\frac{\rd^2}{\rd\theta^2}\,\left(\sin\theta\,P_{l}(\cos\theta)\right)\right]\rd\theta\right| <C'\,.
\end{equation}
The preceding bound combined with Eq.~\eqref{eq:legendre_1} implies that 
\begin{equation}
\big|2\pi\int_0^{\pi}\,(NR)^2\,h(RN\theta)\,P_l(\cos\theta)\,\sin\theta\,\rd\theta\big|
\leq\,C\,\frac{2\pi(NR)^2}{l(l+1)}\,.
\end{equation}
This concludes the proof of the first bound in Eq.~\eqref{eq:trace_L_operator_bounds_proof}. Next, using that
$$2\pi \int \theta\,h(\theta) \rd\theta=1,$$
we obtain that
\begin{equation}
\begin{aligned}
 \text{Tr}((1-\hat K_{R,N})\mathcal{P}_{R,l}) 
 &= 2\pi\int_0^1\,\big(\theta-\,RN\,\sin(\theta/RN)\,P_l(\cos(\theta/RN))\big)\,h(\theta)\,\rd\theta\,.
\end{aligned}
\end{equation}
One rewrites the RHS of the above expression
\begin{equation}
2\pi\int_0^1
\left[1-\frac{\sin(\theta/RN)}{\theta/RN}+\frac{\sin(\theta/RN)}{\theta/RN}
\bigl(1-P_l(\cos(\theta/RN))\bigr)
\right]\theta h(\theta)\,d\theta.
\end{equation}
Using Remark~\ref{rem:er_legendre}, one obtains
\begin{equation}
0\leq 1-P_l(\cos(\theta/RN)) 
\leq l(l+1)\,(1-\cos(\theta/RN))/2
\leq l(l+1)\,(\theta/RN)^2/4.
\end{equation}
Combining the above bound with the elementary estimate
$$0\leq 1 - \sin(\theta/RN)/(\theta/RN) \leq (\theta/RN)^2/6$$
and the support property of the function $h$, one verifies the second bound in~\eqref{eq:trace_L_operator_bounds_proof} and finishes the proof.
\end{proof}

\section{Proof of Lemma~\ref{lem:mild_regularity_time_space}}\label{sec:mild_regularity_time_space}
From Thm.~\ref{thm:mild_slotion_non-linear} a mild $L^1_2(\bS_R)$-valued solution of Eq.~\eqref{eq:varPsi_pde} on $[0,T]$ is a function $\varPsi^g_{R,N}\in C([0,T],L^1_2(\bS_R))$ such that it holds almost surely
\begin{equation}
\varPsi^g_{R,N}(t,\Cdot) = \re^{-t\,Q_{R,N}}\,\psi^g_{R,N} +\int_0^t\,\re^{-(t-s)\,Q_{R,N}}\, F(\varPsi^g_{R,N}(s,\Cdot))\,\rd s\quad\textrm{for all}\quad0\leq t\leq T\,,
\end{equation}
where $\psi^g_{R,N}\in L^1_2(\bS_R)$ and
$$F( \varPsi^g_{R,N}(s,\Cdot)) = P'((\varPsi^g_{R,N}+Z_{R,N})(s,\Cdot),c_{R,N}) +((\varPsi^g_{R,N}+Z_{R,N})(s,g))^{n-1}g(\Cdot)$$
for all $R,N\in \bN_+$. Note that 
$$F\,:\,L_2^1(\bS_R)\to L_2(\bS_R)\quad\text{and}\quad F\,:\,L_2^3(\bS_R)\to L_2(\bS_R)$$
are locally Lipschitz continuous. The latter fact follows from calculations similar to those in the discussion of Eqs.~\eqref{eq:non_linearity_contra_1}--\eqref{eq:R^2_local} with the help of Remark~\ref{rem:bounded_semigroup_L^3_2}. Moreover, Thm.~\ref{lem:global_solution_Psi} implies that 
$\varPsi^g_{R,N}\in C([0,\infty), L^1_2(\bS_R))$ is the global mild solution to Eq.~\eqref{eq:varPsi_pde}. To verify the statement, we shall first show that $\varPsi^g_{R,N}\in C((0,\infty),L^3_2(\bS_R))$, which implies $\varPsi^g_{R,N}\in C((0,\infty),L^{-3}_2(\bS_R))$. Observe that at $t=0$ we cannot have a better regularity than the one of the initial condition. Using the fact that for all $N\in \bN_+$, there are some $\tilde c,c\in (0,\infty)$ such that $$c\,Q^{1/3}_{R,N}\leq (1-\Delta_R)\leq \tilde c\,Q^{1/3}_{R,N},$$
one gets
\begin{equation}
\|\re^{-t{Q_{R,N}}}\,\psi^g_{R,N} \|_{L_2^3(\bS_R)}
\leq  \tilde c\, \|Q_{R,N}^{1/2}\,\re^{-t{Q_{R,N}}}\,\psi^g_{R,N}\|_{L^1_2(\bS_R)} 
\leq C\,\|\psi^g_{R,N}\|_{L^1_2(\bS_R)}\,,
\end{equation}
where $C\in (0,\infty)$. The last bound follows from the fact that for all $t>0$, the map $x\mapsto x^s\,\re^{-tx}$ is uniformly bounded for all $x>0$ by $C'/t^{s}$ for some $C'\in (0,\infty)$. Note that for all $N\in\bN_+$, there exists some $C\in (0,\infty)$ such that
\begin{equation}
\|\int_0^t\,\re^{-(t-s)Q_{R,N}}\, F(s,\Cdot)\,\rd s\|_{L^3_2(\bS_R)}
\leq C\,
\|Q_{R,N}^{1/2}\,\int_0^t\,\re^{-(t-s)Q_{R,N}}\, F(s,\Cdot)\,\rd s\|_{L_2(\bS_R)},\,
\end{equation}
where we wrote $F(s,\Cdot):=F(\varPsi^g_{R,N}(s,\Cdot))$ for brevity.
Testing with $\phi\in\text{Dom}(Q_{R,N}^{1/2})$ and using the Cauchy-Schwarz inequality one infers that the right hand side of the above expression can be bounded by
\begin{multline}
\|Q_{R,N}^{1/2}\int_0^t\,\re^{-(t-s)\,Q_{R,N}}\, F(s,\Cdot)\,\rd s\|_{L_2(\bS_R)}
\\
=
\sup_{\|\phi\|_{L_2(\bS_R)}\, \leq 1}\,|\, \langle \phi,Q_{R,N}^{1/2}\,\int_0^t\,\re^{-(t-s)\,Q_{R,N}}\, F(s,\Cdot)\,\rd s\rangle_{L_2(\bS_R)}\,|
\\
=
\sup_{\|\phi\|_{L_2(\bS_R)}\, \leq 1}\,|\,\int_0^t\, \langle  \,Q_{R,N}^{1/2}\,\re^{-(t-s)\,Q_{R,N}} \phi,\, F(s,\Cdot)\,\rangle_{L_2(\bS_R)}\,\rd s|
\\
=
\sup_{\|\phi\|_{L_2(\bS_R)}\, \leq 1}\,\Big[\int^t_0\|Q_{R,N}^{1/2}\,\re^{-(t-s)\,Q_{R,N}} \phi\,\|_{L_2(\bS_R)}\,\rd s \Big]\,\sup_{r\in[0,t]}\|F(r,\Cdot)\|_{L_2(\bS_R)}
\\
\leq  C'\,\sup_{\|\phi\|_{L_2(\bS_R)}\, \leq 1}\,\Big[ \langle Q_{R,N}^{1/2}\,\phi, \frac{\re^{-t\,Q_{R,N}} -1}{Q_{R,N}}\,Q_{R,N}^{1/2}\, \phi \rangle \Big]^{1/2}
\leq C''\,,
\end{multline}
where we used Fubini's theorem in the third line and $C''\in(0,\infty)$ is some constant. Observe that
$$\sup_{r\in[0,t]}\|F(r,\Cdot)\|_{L_2(\bS_R)}$$
is bounded. This verifies that for all $t>0$, the solution $\varPsi^g_{R,N}(t,\Cdot)$ takes values in $L^3_2(\bS_R)$. To proceed, we show that $\varPsi^g_{R,N}\in C((0,\infty), L^3_2(\bS_R))$. To this end, fix $t>0$ and observe that the limit
\begin{equation}
\lim_{\Delta t\to 0}\,\|\re^{-t\,Q_{R,N}} (\re^{-\Delta t\,Q_{R,N}} - 1)\, \psi_{R,N}^g\|_{L^3_2(\bS_R)} = 0\,.
\end{equation}
It follows from the semigroup property, the strong continuity of the semigroup $\re^{-t\,Q_{R,N}}$ and the fact that $\|Q_{R,N}^{1/2}\,\re^{-tQ_{R,N}}\|_{L_2(\bS_R)\to L_2(\bS_R)}$ is bounded. Next we aim to show that
\begin{multline}\label{eq:Psi_2_continuity}
\lim_{\Delta t\to 0}\|\,\int_0^{t}\,\re^{-(t-s)\,Q_{R,N}}\,(\re^{-\Delta t\, Q_{R,N}} - 1)\, F(s,\Cdot)\,\rd s
\\
 + \int_t^{t+\Delta t}\,\re^{-(t+\Delta t-s)\,Q_{R,N}}\, F(s,\Cdot)\,\rd s\|_{L^3_2(\bS_R)}
\end{multline}
is zero. To verify this claim observe that for $s\neq t$
\begin{equation}\label{eq:Psi_2_continuity_2}
\|Q_{R,N}^{1/2}\,\re^{-(t-s)\,Q_{R,N}}\,(\re^{-\Delta t\, Q_{R,N}} - 1)\, F(s,\Cdot)\|_{L_2(\bS_R)}
\leq \frac{C}{(t-s)^{1/2}}
\end{equation}
is integrable, where $C\in (0,\infty)$. Moreover, for any fixed $s\neq t $ the following limit
\begin{equation}\label{eq:Psi_2_continuity_3}
\lim_{\Delta t\to 0}\,\|Q_{R,N}^{1/2}\,\re^{-(t-s)\,Q_{R,N}}\,(\re^{-\Delta t\, Q_{R,N}} - 1)\,F(s,\Cdot)\|_{L_2(\bS_R)}
\end{equation}
is zero, which gives the point-wise convergence of the integral. Combining Eqs.~\eqref{eq:Psi_2_continuity_2} and \eqref{eq:Psi_2_continuity_3} from the dominated convergence theorem, one concludes that the first integral in Eq.~\eqref{eq:Psi_2_continuity} tends to zero. Now consider the second integral in Eq.~\eqref{eq:Psi_2_continuity}. One has
\begin{multline}
\|\int_t^{t+\Delta t}\re^{-(t+\Delta t-s)\,Q_{R,N}}\,F(s,\Cdot)\,\rd s\|_{L^3_2(\bS_R)}
\\
= \|\int_0^{\infty}\,\chi_{[0,\Delta t]}(s)\,\re^{-s\,Q_{R,N}}\,F(t+\Delta t -s,\Cdot)\, \rd s\|_{L^3_2(\bS_R)}\,.
\end{multline}
Note that
\begin{equation}\label{eq:Psi_2_continuity_21}
\|\chi_{[0,\Delta t]}(s)\,Q_{R,N}^{1/2}\,\re^{-s\,Q_{R,N}}\,F(t+\Delta t -s,\Cdot)\|_{L_2(\bS_R)}\leq C\,\chi_{[0,1]}(s)\,s^{-1/2}.
\end{equation}
Furthermore,
\begin{equation}\label{eq:Psi_2_continuity_22}
\lim_{\Delta t\to 0}\,\|\chi_{[0,\Delta t]}(s)\,Q_{R,N}^{1/2}\,\re^{-s\,Q_{R,N}}\,F(t+\Delta t -s,\Cdot)\|_{L_2(\bS_R)}= 0\,.
\end{equation}
Combining Eqs.~\eqref{eq:Psi_2_continuity_21} and \eqref{eq:Psi_2_continuity_22} from the dominated convergence theorem, one verifies that the second integral in Eq.~\eqref{eq:Psi_2_continuity} is zero as well. This finishes the proof that $\varPsi^g_{R,N}\in C((0,\infty), L^3_2(\bS_R)$. To proceed, we shall show that $\varPsi^g_{R,N}(t,\Cdot)\in C^1((0,\infty),L^{-3}_2(\bS_R))$. Using the fact that the desired derivative of $\re^{-t\,Q_{R,N}}\,\psi_{R,N}^g$ is $-Q_{R,N}\,\re^{-t\,Q_{R,N}}\,\psi^g_{R,N}$, we aim to verify that
\begin{equation}
\lim_{\Delta t\to 0}\, \|\Big(\frac{(\re^{-\Delta t\,Q_{R,N}}- 1)}{\Delta t} + Q_{R,N} \Big)\,\re^{-t\,Q_{R,N}}\,\psi^g_{R,N}\|_{L_2^{-3}(\bS_R)}=0.
\end{equation}
One rewrites the above expression
\begin{equation}\label{eq:Psi_3_continuity_1}
(\Delta t)\,\int_0^1\,\|Q_{R,N}^{-1/2}\, Q^2_{R,N}\,\re^{-t\,Q_{R,N}}\,\re^{-s\,\Delta t}(1-s)\,\psi^g_{R,N}\,\,\rd s\|_{L_2(\bS_R)}\,,
\end{equation}
where we used Remark~\ref{rem:taylor_semigroup} with $n=1$ to have
\begin{equation}
\re^{-\Delta t\,Q_{R,N}} = 1-\Delta t Q_{R,N} +Q_{R,N}^2(\Delta t)^2\,\int_0^1\re^{-s\,\Delta t\,Q_{R,N}}(1-s)\,\rd s\,.
\end{equation}
Observe that due to the positivity of the spectrum of the elliptic operator $Q_{R,N}$ there is $C\in(0,\infty)$ independent of $\Delta t$ such that 
\begin{equation}\label{eq:Psi_3_continuity_2}
\|\int_0^1\,\re^{-s\,\Delta t Q_{R,N}}(1-s)\,\rd s\|_{L_2(\bS_R)\to L_2(\bS_R)} \leq |\int_0^1(1-s)\,\rd s|\leq C\,.
\end{equation}
Note that $\|Q_{R,N}^{-1/2}\, Q^2_{R,N}\,\re^{-tQ_{R,N}} \|_{L_2(\bS_R)\to L_2(\bS_R)}$ is bounded. Thus, on account of Eq.~\eqref{eq:Psi_3_continuity_2}, and the dominated convergence, it follows that Eq.~\eqref{eq:Psi_3_continuity_1} tends to zero as $\Delta t\to 0$, i.e., $\psi^g_{R,N}\in C^1((0,\infty),L^{-3}_2(\bS_R))$. Now we need to verify that
\begin{equation}\label{eq:derivative_psi_3_1}
\begin{gathered}
\lim_{\Delta t\to 0}\,\|\frac{1}{\Delta t}\,\int_0^{t}\,\re^{-(t-s)\,Q_{R,N}}\left( \re^{-\Delta\, t\,Q_{R,N}}-1 \right) F(s,\Cdot)\,\rd s
\\+Q_{R,N}\,\int_0^t\,\re^{-(t-s)\,Q_{R,N}}F(s,\Cdot)\,\rd s\|_{L_2^{-3}(\bS_R)}=0
\\
\lim_{\Delta t\to 0}\,\|\frac{1}{\Delta t}\,\int_t^{t+\Delta t}\,\re^{-(t+\Delta t -s)\,Q_{R,N}}\,F(s,\Cdot)\,\rd s-F(t,\Cdot)\|_{L^{-3}_2(\bS_R)}=0\,.
\end{gathered}
\end{equation}
Consider the first integral in Eq.~\eqref{eq:derivative_psi_3_1} and note that
$$\|Q_{R,N}^{-1/2}\, Q_{R,N}\,\re^{-(t-s)\,Q_{R,N}} \|_{L_2(\bS_R)\to L_2(\bS_R)}\quad \text{and}\quad \|\int_0^1\,\re^{-s\,\Delta t Q_{R,N}}\,\rd s\|_{L_2(\bS_R)\to L_2(\bS_R)}$$
are bounded. One shows that pointwise 
\begin{equation}
\|\,Q_{R,N}^{-1/2} \,Q_{R,N}\int_0^t\,\re^{-(t-s)\,Q_{R,N}} \Big(1-\int_0^1\re^{-s\,\Delta t\,Q_{R,N}}\,\rd s\Big)\,F(s,\Cdot)\|_{L_2(\bS_R)}\,,
\end{equation}
converges to zero as $\Delta t\to 0$. It follows from Remark~\ref{rem:taylor_semigroup} with $n=0$
\begin{equation}
\re^{-\Delta t\,Q_{R,N}} -1= -Q_{R,N}\Delta t\int_0^1\,\re^{-s\Delta t\,Q_{R,N}}\, \rd s\,.
\end{equation}
Hence, the first integral in Eq.~\eqref{eq:derivative_psi_3_1} tends to zero as $\Delta t\to 0$ by the dominated convergence theorem. Consider the second integral in Eq.~\eqref{eq:derivative_psi_3_1}. One has
\begin{multline}
\|\frac{1}{\Delta t}\,\int_t^{t+\Delta t}\,\re^{-(t+\Delta t -s)\,Q_{R,N}}\,F(s,\Cdot)\rd s-F(t,\Cdot)\|_{L^{-3}_2(\bS_R)}
\\
=
\,\|\frac{1}{\Delta t}\,
\int_0^{\infty}\,\chi_{[0,\Delta t]}(s)\,\re^{-s\,Q_{R,N}}\,F(t+\Delta t-s,\Cdot)\,\rd s-F(t,\Cdot)\|_{L^{-3}_2(\bS_R)}.
\end{multline}
Applying the dominated convergence theorem as we did above, and using continuity of $t\mapsto F(t, \Cdot)$, one shows that the above limit is zero. This verifies that $\varPsi^g_{R,N}\in C^1((0,\infty),L^{-3}_2(\bS_R))$ and concludes the proof.

\section{Proof of Lemma~\ref{lem:wieght_plane_laplacian}}\label{sec:proof_wieght_plane_laplacian}
Notice that in what follows from line to line $C$ might be different. There exists $C\in(0,\infty)$ such that
$$|\vec\nabla(w_R^{-1})|\leq\,(C/L)\,w_R^{-1},\qquad
|\vec\nabla(v_L)|\leq \,(C/L)\,v_L,$$
where $v_L$ is given in Def.~\ref{def:weight_plane}. Using the product rule, one obtains
$$
|\vec\nabla(w_R^{-1}v_L)| \leq\,(C/L)\,|w_R^{-1}v_L|.$$
Similarly,
$$
|\Delta(w_R^{-1}v_L)| = |(\Delta w_R^{-1})\,v_L + w_R^{-1}\,(\Delta v_L)
+2 (\vec\nabla w_R^{-1})\cdot(\vec\nabla v_L)|\leq \,(C/L)^2\,|w_R^{-1}v_L|.$$
Consider Item~(A) of Lemma~\ref{lem:wieght_plane_laplacian}. Using the integration by parts one gets
\begin{equation}
\begin{aligned}
-\langle \Psi, v_L (w^{-1}_R \Delta)\Psi \rangle_{L_2(\mathbb{R}^2)} 
&= \langle \vec\nabla \Psi (v_Lw^{-1}),\nabla\Psi \rangle + \langle \Psi \vec\nabla(v_Lw^{-1}),\vec\nabla\Psi \rangle
\\ &\geq \langle \vec\nabla \Psi ,(v_Lw^{-1})\nabla\Psi \rangle -|\langle \Psi, \vec\nabla(v_Lw^{-1})\vec\nabla\Psi \rangle |
\\ &\geq \|\vec\nabla\Psi\|^2_{L_2(w_R^{-1/2}v_L^{1/2})} -\frac{C}{L}|\langle \Psi, (v_Lw^{-1}) \vec\nabla\Psi\rangle |\,.
\end{aligned}
\end{equation}
Applying H\"older's and Young's inequalities yields
\begin{multline}
-\langle \Psi, v_L (w^{-1}_R \Delta)\Psi \rangle_{L_2(\mathbb{R}^2)}
\geq (1-\frac{C}{2L})\,\|\vec\nabla\Psi\|^2_{L_2(w_R^{-1/2}v_L^{1/2})}
-\frac{C}{2L}\|\Psi\|^2_{L_2(w_R^{-1/2}v_L^{1/2})}\,.
\end{multline}
This finishes the proof of Item~(A). To prove Item~(B), note that
\begin{multline}
\langle \Psi, v_L (-w^{-1}_R\Delta)^2 \Psi \rangle = \langle\Psi, (v_L w^{-1}_R) \Delta  w^{-1}_R\Delta\Psi \rangle
\\= \langle \Delta\Psi (v_L w^{-1}_R), w^{-1}\Delta\Psi \rangle +  \langle \Psi, \Delta (v_L w^{-1}_R), w^{-1}\Delta\Psi \rangle +  2\langle \vec\nabla\Psi \cdot\vec\nabla(v_L w^{-1}_R), w^{-1}\Delta\Psi \rangle
\\ \geq  \langle \Delta\Psi (v_L w^{-1}_R), w^{-1}\Delta\Psi \rangle  -|  \langle \Psi, \Delta (v_L w^{-1}_R), w^{-1}\Delta\Psi \rangle |- 2|\,\langle \vec\nabla\Psi \cdot\vec\nabla(v_L w^{-1}_R), w^{-1}\Delta\Psi \rangle|
\\ \geq \|\Delta\Psi\|^2_{L_2(w_R^{-1}v_L^{1/2})} -\frac{C^2}{2L^2} \|\Psi\|^2_{L_2(w_R^{-1}v_L^{1/2})}
 -\frac{C^2}{2L^2}\|\Delta\Psi\|_{L_2(w_R^{-1}v_L^{1/2})}
 \\ -\frac{C}{L} \|\vec\nabla\Psi\|^2_{L_2(w_R^{-1}v_L^{1/2})} -\frac{C}{L}\|\Delta\Psi\|^2_{L_2(w_R^{-1}v_L^{1/2})}\,.
\end{multline}
Hence,
\begin{multline}\label{eq:by_part_2}
\langle \Psi, v_L (-w^{-1}_R\Delta)^2 \Psi \rangle \geq (1-\frac{C}{L} -\frac{C^2}{2L^2})\|\Delta\Psi\|^2_{L_2(w_R^{-1}v_L^{1/2})}
\\ -\frac{C}{L}\, \|\vec\nabla\Psi\|^2_{L_2(w_R^{-1}v_L^{1/2})} -\frac{C^2}{2L^2}\,\|\Psi\|^2_{L_2(w_R^{-1}v_L^{1/2})}\,.
\end{multline}
Observe that from the H\"older and the Young inequalities one has
\begin{multline}
\|\vec\nabla\Psi\|^2_{L_2(w_R^{-1}v_L^{1/2})} = \|w_R^{-1}v_L^{1/2}\vec\nabla\Psi\|^2_{L_2} =\sum_{j=1,2}\langle  \partial_j\Psi, w_R^{-2}v_L \partial_j\Psi\rangle
\\ \leq |\sum_{j=1,2}\langle \Psi , (-\partial_j)( w_R^{-2}v_L)\,\partial_j\Psi \rangle |+|\langle \Psi, ( w_R^{-2}v_L) (-\Delta)\Psi\rangle |
\\ \leq \frac{C}{L} \|\Psi\|_{L_2(w_R^{-1}v_L^{1/2})} \|\vec\nabla\Psi\|_{L_2(w_R^{-1}v_L^{1/2})}  +\|\Psi\|_{L_2(w_R^{-1}v_L^{1/2})} \|\Delta\Psi\|_{L_2(w_R^{-1}v_L^{1/2})} 
\\
\leq  \frac{C}{2L}\Big[ \|\Psi\|^2_{L_2(w_R^{-1}v_L^{1/2})} +\|\vec\nabla\Psi\|^2_{L_2(w_R^{-1}v_L^{1/2})}\Big]  +\frac{1}{2}\Big[ \|\Psi\|^2_{L_2(w_R^{-1}v_L^{1/2}) } + \|\Delta\Psi\|^2_{L_2(w_R^{-1}v_L^{1/2})} \Big]\,. 
\end{multline}
In particular, one has
\begin{equation}
(1-\frac{C}{2L})\|\vec\nabla\Psi\|^2_{L_2(w_R^{-1}v_L^{1/2})} \leq (\frac{1}{2} +\frac{C}{2L})\, \|\Psi\|^2_{L_2(w_R^{-1}v_L^{1/2})} +1/2\,\|\Delta\Psi\|^2_{L_2(w_R^{-1}v_L^{1/2})}\,.
\end{equation}
Consequently,
\begin{equation}
\|\vec\nabla\Psi\|^2_{L_2(w_R^{-1}v_L^{1/2})} \leq \frac{(\frac{1}{2} +\frac{C}{2L})}{(1-\frac{C}{2L})}\, \|\Psi\|^2_{L_2(w_R^{-1}v_L^{1/2})} +\frac{1/2}{(1-\frac{C}{2L})}\,\|\Delta\Psi\|^2_{L_2(w_R^{-1}v_L^{1/2})}
\end{equation}
provided that $(1-\frac{C}{2L}) > 0$. Therefore, one can rewrite Eq.~\eqref{eq:by_part_2}
\begin{equation}
\begin{aligned}
\langle \Psi, v_L (-w^{-1}_R\Delta)^2 \Psi \rangle 
&\geq (1-\frac{C}{L} -\frac{C^2}{2L^2} -\frac{C}{L}\,\,\frac{1/2}{(1-\frac{C}{2L})})\|\Delta\Psi\|^2_{L_2(w_R^{-1}v_L^{1/2})} 
\\&-(\frac{C}{L}\,\, \frac{(\frac{1}{2} +\frac{C}{2L})}{(1-\frac{C}{2L})} +\frac{C^2}{2L^2})\,\|\Psi\|^2_{L_2(w_R^{-1}v_L^{1/2})}\,.
\end{aligned}
\end{equation}
This finishes the proof of Item~(B). To prove Item~(C), consider
\begin{multline}\label{eq:three_lap}
\langle \Psi, v_L (-w_R^{-1}\Delta)^3 \Psi\rangle 
= -\langle \Delta \Psi (v_L w^{-1}_R), w_R^{-1}(\Delta w_R^{-1})\Delta \Psi\rangle
\\ -\langle \Psi \Delta(v_L w_R^{-1}),w_R^{-1}(\Delta w_R^{-1})\Delta \Psi\rangle
-2 \langle \vec\nabla \Psi\cdot\vec \nabla (v_L w^{-1}_R),w_R^{-1}(\Delta w_R^{-1})\Delta \Psi\rangle\,.
 \end{multline}
Take the first term on the RHS of Eq.~\eqref{eq:three_lap}. Using integration by parts, one writes
\begin{equation}
\begin{aligned}
-\langle \Delta \Psi (v_L w^{-1}_R) w_R^{-1},\Delta w_R^{-1}\,\Delta \Psi\rangle 
&= \langle \vec\nabla\Delta \Psi (v_L w^{-1}_R) w_R^{-1},\vec\nabla w_R^{-1}\Delta \Psi\rangle 
\\& +  \langle \Delta \Psi \vec\nabla(v_L w^{-1}_R) w_R^{-1},\vec\nabla w_R^{-1}\Delta \Psi\rangle
\\  &+\langle \Delta \Psi (v_L w^{-1}_R) \vec\nabla w_R^{-1},\vec\nabla w_R^{-1}\Delta \Psi\rangle.
\end{aligned}
\end{equation}
Hence,
\begin{equation}\label{eq:frist_term_with}
\begin{aligned}
-\langle \Delta \Psi (v_L w^{-1}_R) w_R^{-1},\Delta w_R^{-1}\,\Delta \Psi\rangle 
\geq  \langle \vec\nabla\Delta \Psi (v_L w^{-1}_R) w_R^{-1},\vec\nabla [w_R^{-1}\Delta \Psi]\rangle 
\\ -  |\langle \Delta \Psi \vec\nabla(v_L w^{-1}_R) w_R^{-1},\vec\nabla [w_R^{-1}\Delta \Psi]\rangle |
\\  - | \langle \Delta \Psi (v_L w^{-1}_R)\vec \nabla w_R^{-1},\vec\nabla [w_R^{-1}\Delta \Psi]\rangle |\,.
\end{aligned}
\end{equation}
Consider the first term on the RHS of Eq.~\eqref{eq:frist_term_with}. Utilizing the H\"older and the Young inequalities one writes
\begin{multline}
 \langle \vec\nabla\Delta \Psi (v_L w^{-1}_R) w_R^{-1},\vec\nabla [w_R^{-1}\Delta \Psi] \rangle 
\\ \geq \langle \vec\nabla\Delta \Psi (v_L w^{-3}_R), \vec\nabla\Delta \Psi \rangle - | \langle \nabla\Delta \Psi (v_L w^{-1}_R) w^{-1},(\nabla w^{-1}) \Delta \Psi \rangle | 
\\ \geq \|\vec\nabla\Delta\Psi\|^2_{L_2(w_R^{-3/2}v_L^{1/2})} -\frac{C}{2L}\|\nabla\Delta\Psi\|^2_{L_2(w_R^{-3/2}v_L^{1/2})}  -\frac{C}{2L}\|\Delta\Psi\|^2_{L_2(w_R^{-3/2}v_L^{1/2})}.
\end{multline}
Similarly, one treats the second term on the RHS of Eq.~\eqref{eq:frist_term_with}. It holds that 
\begin{multline}
|\langle \Delta \Psi \vec\nabla(v_L w^{-1}_R) w_R^{-1},\vec\nabla [w_R^{-1}\Delta \Psi]\rangle |
\\=
|\langle \Delta \Psi \vec\nabla(v_L w^{-1}_R) w_R^{-1},(\vec\nabla w_R^{-1}) \Delta \Psi]\rangle | 
+
|\langle \Delta \Psi \vec\nabla(v_L w^{-1}_R) w_R^{-1},w_R^{-1}\vec\nabla\Delta \Psi]\rangle | 
\\
\leq \frac{C^2}{L^2}\|\Delta\Psi\|^2_{L_2(w_R^{-3/2}v_L^{1/2})} +\frac{C}{2L} \|\vec\nabla\Delta\Psi\|^2_{L_2(w_R^{-3/2}v_L^{1/2})} 
+\frac{C}{2L} \|\Delta\Psi\|^2_{L_2(w_R^{-3/2}v_L^{1/2})}\,.
\end{multline}
Consider the third term on the RHS of Eq.~\eqref{eq:frist_term_with}. One obtains
\begin{multline}
|\langle \Delta \Psi (v_L w^{-1}_R)\vec \nabla w_R^{-1},\vec\nabla [w_R^{-1}\Delta \Psi]\rangle |
\\
\leq  
\frac{C^2}{L^2}\|\Delta\Psi\|^2_{L_2(w_R^{-3/2}v_L^{1/2})}
+\frac{C}{2L} \|\vec\nabla\Delta\Psi\|^2_{L_2(w_R^{-3/2}v_L^{1/2})} 
+\frac{C}{2L} \|\Delta\Psi\|^2_{L_2(w_R^{-3/2}v_L^{1/2})}\,.
\end{multline}
Combining the above results, one has
\begin{multline}\label{eq:first_term}
-\langle \Delta \Psi (v_L w^{-1}_R) w^{-1},\Delta w_R^{-1}\,\Delta \Psi\rangle 
\\ \geq (1-\frac{3\,C}{2L}) \|\vec\nabla\Delta\Psi\|^2_{L_2(w_R^{-3/2}v_L^{1/2})} -
(\frac{2C^2}{L^2}+\frac{3C}{2L})\|\Delta\Psi\|^2_{L_2(w_R^{-3/2}v_L^{1/2})}\,.
\end{multline} 
Consider the second on the RHS of Eq.~\eqref{eq:three_lap}. Using the fact that
$$\Delta(fgh)= (\Delta f)gh+f(\Delta g)h + fg(\Delta h)
\\+ 2 (\vec\nabla f)\cdot (\vec\nabla g)h + 2(\vec \nabla f)\cdot g(\vec\nabla h)+ 2 f(\vec\nabla g)\cdot(\vec\nabla h)$$ one has
\begin{multline}
\langle \Psi \Delta(v_L w^{-1}_R)w_R^{-1} ,(\Delta w_R^{-1})\Delta \Psi\rangle
\\=
\langle\Delta \Psi \Delta(v_L w^{-1}_R)w_R^{-1}, w_R^{-1}\Delta \Psi\rangle
+
\langle\Psi (\Delta)^2(v_L w^{-1}_R)w_R^{-1}, w_R^{-1}\Delta \Psi\rangle
\\
+\langle \Psi \Delta(v_L w^{-1}_R)(\Delta w_R^{-1}), w_R^{-1}\Delta \Psi\rangle
+2 \langle \vec\nabla \Psi\cdot\vec\nabla \Delta(v_L w^{-1}_R)\, w_R^{-1}, w_R^{-1}\Delta \Psi\rangle
\\
+2\langle \vec\nabla \Psi\cdot \Delta(v_L w^{-1}_R)\,\vec\nabla w_R^{-1}, w_R^{-1}\Delta \Psi\rangle
+2\langle \Psi\vec\nabla \Delta(v_L w^{-1}_R)\cdot\vec\nabla\,w_R^{-1}, w_R^{-1}\Delta \Psi\rangle.
\end{multline}
Using the H\"older and the Young inequalities, it holds that
\begin{multline}
|\langle \Psi \Delta(v_L w^{-1}_R)w_R^{-1} ,(\Delta w_R^{-1})\Delta \Psi\rangle|
\\
\leq \frac{C^2}{L^2}\|\Delta\Psi\|_{L_2(w_R^{-3/2}v_L^{1/2})} +\frac{2C^4}{L^4}\|\Psi\|^2_{L_2(w_R^{-3/2}v_L^{1/2})} +\frac{2C^4}{L^4}\|\Delta\Psi\|^2_{L_2(w_R^{-3/2}v_L^{1/2})}
\\
+\frac{2C^3}{L^3}\|\vec\nabla\Psi\|^2_{L_2(w_R^{-3/2}v_L^{1/2})}+\frac{2C^3}{L^3}\|\Delta\Psi\|^2_{L_2(w_R^{-3/2}v_L^{1/2})}\,.
\end{multline}
Combining the above results, we obtain a bound on the second term on the RHS of Eq.~\eqref{eq:three_lap}.
\begin{multline}\label{eq:second_term}
-|\langle \Psi \Delta(v_L w^{-1}_R)w^{-1} ,(\Delta w^{-1})\Delta \Psi\rangle|
\geq 
-\frac{2C^4}{L^4} \|\Psi\|^2_{L_2(w_R^{-3/2}v_L^{1/2})}-\frac{2C^3}{L^3}\|\vec\nabla\Psi\|^2_{L_2(w_R^{-3/2}v_L^{1/2})}
\\ - (\frac{2C^4}{L^4}+\frac{2C^3}{L^3} + \frac{C^2}{L^2}) \|\Delta\Psi\|^2_{L_2(w_R^{-3/2}v_L^{1/2})}\,.
\end{multline}
Consider the third term on the RHS of Eq.~\eqref{eq:three_lap}. It holds that
\begin{multline}
\langle\vec \nabla \Psi\ \cdot \vec\nabla (v_L w^{-1}_R)w^{-1},(\Delta [w_R^{-1}\Delta \Psi])\rangle 
\\
=
\langle \vec \nabla\Delta \Psi\ \cdot \vec\nabla (v_L w^{-1}_R)w^{-1},w^{-1}\Delta \Psi\rangle 
+\langle\vec \nabla \Psi\ \cdot \vec \nabla\Delta (v_L w^{-1}_R)w^{-1},w^{-1}\Delta \Psi\rangle 
\\
+\langle\vec \nabla \Psi\ \cdot \vec\nabla (v_L w^{-1}_R)(\Delta w^{-1}),w^{-1}\Delta \Psi\rangle 
+2 \langle\Delta\Psi\ \Delta (v_L w^{-1}_R)w^{-1},w^{-1}\Delta \Psi\rangle 
\\
+2 \langle\vec\nabla\Psi\  \Delta (v_L w^{-1}_R)\cdot\vec\nabla w_R^{-1}, w_R^{-1}\Delta \Psi\rangle 
+2 \langle\Delta\Psi\ \vec\nabla (v_L w^{-1}_R)\cdot\vec\nabla w_R^{-1}, w_R^{-1}\Delta \Psi\rangle .
\end{multline}
Hence,
\begin{multline}
|\langle\vec \nabla \Psi\ \cdot \vec\nabla (v_L w^{-1}_R)w^{-1},(\Delta w_R^{-1})\Delta \Psi\rangle |
\leq
\frac{C}{2L}\|\vec\nabla\Delta\Psi\|^2_{L_2(w_R^{-3/2}v_L^{1/2})} 
\\
+\frac{2C^3}{L^3}\|\vec\nabla\Psi\|^2_{L_2(w_R^{-3/2}v_L^{1/2})} 
+(\frac{2C^3}{L^3}+\frac{C}{2L}+\frac{4C^2}{L^2})\|\Delta\Psi\|^2_{L_2(w_R^{-3/2}v_L^{1/2})}\,.
\end{multline}
Therefore, one obtains a bound on the third term on the RHS of Eq.~\eqref{eq:three_lap}
\begin{multline}\label{eq:third_term}
-2\langle \vec\nabla \Psi \cdot\vec\nabla (v_L w^{-1}_R)w^{-1},(\Delta w^{-1})\Delta \Psi\rangle
\geq 
-\frac{C}{L}\|\vec\nabla\Delta\Psi\|^2_{L_2(w_R^{-3/2}v_L^{1/2})} 
\\ - \frac{4C^3}{L^3} \|\vec\nabla\Psi\|^2_{L_2(w_R^{-3/2}v_L^{1/2})}-(\frac{4C^3}{L^3}+\frac{C}{L}+\frac{8C^2}{L^2})\|\Delta\Psi\|^2_{L_2(w_R^{-3/2}v_L^{1/2})} \,.
\end{multline} 
Substituting Eqs.~\eqref{eq:first_term},~\eqref{eq:second_term}, and~\eqref{eq:third_term} into Eq.~\eqref{eq:three_lap} yields
 \begin{multline}\label{eq:three_delta}
\langle \Psi, v_L (-w^{-1}_R\Delta)^3 \Psi\rangle
\\ \geq (1-\frac{5C}{2L}) \|\vec\nabla\Delta\Psi\|^2_{L_2(w_R^{-3/2}v_L^{1/2})} 
- \frac{6C^3}{L^3} \|\vec\nabla\Psi\|^2_{L_2(w_R^{-3/2}v_L^{1/2})}
\\-\frac{2C^4}{L^4} \|\Psi\|^2_{L_2(w_R^{-3/2}v_L^{1/2})}
-(\frac{5C}{2L}+\frac{11C^2}{L^2}+\frac{6C^3}{L^3}+\frac{2C^4}{L^4} )\|\Delta\Psi\|^2_{L_2(w_R^{-3/2}v_L^{1/2})}\,.
\end{multline}
We aim to write $ \|\vec\nabla\Psi\|^2_{L_2(w_R^{-3/2}v_L^{1/2})}$ and $\|\Delta\Psi\|^2_{L_2(w_R^{-3/2}v_L^{1/2})}$ in terms of $ \|\vec\nabla\Delta\Psi\|^2_{L_2(w_R^{-3/2}v_L^{1/2})} $ and $ \|\Psi\|^2_{L_2(w_R^{-3/2}v_L^{1/2})}$. Consider
 \begin{multline}
\|\Delta\Psi\|^2_{L_2(w_R^{-3/2}v_L^{1/2})} = \|w_R^{-3/2}v_L^{1/2}\,\Delta\Psi\|^2_{L_2} =  \langle \Delta\Psi, (w_R^{-3}v_L) \Delta\Psi\rangle
\\ \leq |\langle\vec \nabla \Psi, [(-\vec\nabla)(w_R^{-3}v_L)] \Delta\Psi\rangle | + |\langle \vec\nabla\Psi, (w_R^{-3}v_L) (-\vec\nabla\,\Delta\Psi)\rangle  |
\\
\leq \frac{C}{L}\,\|\vec\nabla\Psi\|_{L_2(w_R^{-3/2}v_L^{1/2})}\,\|\Delta\Psi\|_{L_2(w_R^{-3/2}v_L^{1/2})} + \|\vec\nabla\Psi\|_{L_2(w_R^{-3/2}v_L^{1/2})} \,\|\vec\nabla\Delta\Psi\|_{L_2(w_R^{-3/2}v_L^{1/2})}
\\
\leq  \frac{C}{2L}\,\|\vec\nabla\Psi\|^2_{L_2(w_R^{-3/2}v_L^{1/2})} +\frac{C}{2L}\,\|\Delta\Psi\|^2_{L_2(w_R^{-3/2}v_L^{1/2})} +\frac{1}{2} \|\vec\nabla\Psi\|_{L_2(w_R^{-3/2}v_L^{1/2})}^2 +\frac{1}{2}\|\vec\nabla\Delta\Psi\|^2_{L_2(w_R^{-3/2}v_L^{1/2})}.
\end{multline}
Hence,
 \begin{multline}\label{eq:bound_for_delta}
\|\Delta\Psi\|^2_{L_2(w_R^{-3/2}v_L^{1/2})}
\\
\leq \frac{(1/2+\frac{C}{2L})}{(1-\frac{C}{2L})}\,\|\vec\nabla\Psi\|^2_{L_2(w_R^{-3/2}v_L^{1/2})}
+\frac{\frac{1}{2}}{(1-\frac{C}{2L})}\|\vec\nabla\Delta\Psi\|^2_{L_2(w_R^{-3/2}v_L^{1/2})}\,.
\end{multline}
Observe that
\begin{multline}
\|\vec\nabla\Psi\|^2_{L_2(w_R^{-3/2}v_L^{1/2})} = \|w_R^{-3/2}v_L^{1/2}\,\vec\nabla\Psi\|^2_{L_2} =\sum_{j=1,2}\langle  \partial_j\Psi, w_R^{-3}v_L \partial_j\Psi\rangle
\\ \leq\,\sum_{j=1,2}\, \langle |\Psi |, |(-\partial_j)( w_R^{-3}v_L)|\,|\partial_j\Psi|\rangle +|\langle \Psi, ( w_R^{-3}v_L) (-\Delta)\Psi\rangle |
\\ \leq \frac{C}{L} \|\Psi\|_{L_2(w_R^{-3/2}v_L^{1/2})} \|\vec\nabla\Psi\|_{L_2(w_R^{-3/2}v_L^{1/2})}  +\|\Psi\|_{L_2(w_R^{-3/2}v_L^{1/2})} \|\Delta\Psi\|_{L_2(w_R^{-3/2}v_L^{1/2})} 
\\
\leq  \frac{C}{2L}\Big[ \|\Psi\|^2_{L_2(w_R^{-3/2}v_L^{1/2})} +\|\vec\nabla\Psi\|^2_{L_2(w_R^{-3/2}v_L^{1/2})}\Big]  +\frac{1}{2}\Big[ \|\Psi\|^2_{L_2(w_R^{-3/2}v_L^{1/2}) } + \|\Delta\Psi\|^2_{L_2(w_R^{-3/2}v_L^{1/2})} \Big]\,. 
\end{multline}
One obtains
\begin{multline}\label{eq:bound_for_grad}
\|\vec\nabla\Psi\|^2_{L_2(w_R^{-3/2}v_L^{1/2})}
\\ \leq \frac{(1/2+\frac{C}{2L})}{(1-\frac{C}{2L})} \|\Psi\|^2_{L_2(w_R^{-3/2}v_L^{1/2})}+\frac{1/2}{(1-\frac{C}{2L})} \|\Delta\Psi\|^2_{L_2(w_R^{-3/2}v_L^{1/2})}\,.
\end{multline}
Observe that the preceding expression resembles Eq.~\eqref{eq:bound_for_delta} with the same coefficients.
Combining Eqs.~\eqref{eq:bound_for_delta} and \eqref{eq:bound_for_grad}, one infers
\begin{multline}
(1-\frac{(1/2+\frac{C}{2L})}{(1-\frac{C}{2L})} )\|\nabla\Psi\|^2_{L_2(w_R^{-3/2}v_L^{1/2})}  + (1-\frac{1/2}{(1-\frac{C}{2L})} )\, \|\Delta\Psi\|^2_{L_2(w_R^{-3/2}v_L^{1/2})}
\\
\leq \frac{(1/2+\frac{C}{2L})}{(1-\frac{C}{2L})} \|\Psi\|^2_{L_2(w_R^{-3/2}v_L^{1/2})} + \frac{1/2}{(1-\frac{C}{2L})} \,\|\nabla\Delta\Psi\|^2_{L_2(w_R^{-3/2}v_L^{1/2})}\,.
\end{multline}
Substitute Eqs.~\eqref{eq:bound_for_delta} and~\eqref{eq:bound_for_grad} into Eq.~\eqref{eq:three_delta} finishes the proof.
\section{Proof of Remark~\ref{rem:translation}} \label{sec:translation}
\newcommand{\fr}{\frac}
\newcommand{\pa}{\partial}
Recall the formula
\begin{equation}
 \cS_{R,\alpha}(x_1,x_2):= 
 \frac{2(R\sin (\alpha/R) (1-(x_1^2+x_2^2)/4R^2)+x_1 \cos(\alpha/R),x_2)}{1+\cos(\alpha/R)+ (x_1^2+x_2^2)/4R^2\,(1-\cos(\alpha/R)) -(x_1/R)\,\sin(\alpha/R)}\,.
\end{equation}
On account of Remark~\ref{rem:sphere_mobius_plan} rotations on the sphere correspond to M\"obius transformations on the plane via the stereographic mapping. Specifically, let us write $w=x_1+\ri x_2$  and consider the transformation
\beqa
w'=\frac{aw+b}{cw+d}= \fr{\cos(\alpha/(2R))w+2R\sin(\alpha/(2R))  }{ -\fr{\sin(\alpha/(2R))}{2R} w+\cos(\alpha/(2R))},
\eeqa
where $a=\cos(\alpha/2R)$, $b=2R\,\sin (\alpha/2R)$, $c=-\sin(\alpha/2R)/2R$, and $d=\cos(\alpha/2R)$ such that $ad-cb=1$. Set
\begin{equation}
N(w):=a(w)+b=(ax_1+b)+\ri ax_2,
\quad 
D(w):=cw+d=(cx_1+d)+\ri cx_2\,.
\end{equation}
This implies that
\begin{equation}
w'=\frac{N(w)}{D(w)}=\frac{N(w)\overline{D(w)}}{D(w)^2}= x'_1+\ri x'_2\,,
\end{equation}
where
\begin{equation}
\begin{gathered}
x'_1=\frac{(ax_1+b)(cx_1+d)+ac x_2^2}{c^2(x_1^2+x_2^2)+2cdx_1+d^2},
\quad
x'_2=\frac{x_2}{c^2(x_1^2+x_2^2)+2cdx_1+d^2}\,.
\end{gathered}
\end{equation}
Using the facts that $2\sin(\alpha/2R)\cos(\alpha/2R)= \sin(\alpha/R)$ and $\cos(\alpha/2R)^2-\sin(\alpha/2R)^2=\cos(\alpha/R)$, one gets
\begin{equation}
\begin{gathered}
(ax_1+b)(cx_1+d)+ac x_2^2= R\,\sin(\alpha/R)(1-\frac{1}{4R^2}\,(x_1^2+x_2^2))+ x_1\cos(\alpha/R),
\\
c^2(x_1^2+x_2^2)+2cdx_1+d^2=\frac{1}{4R^2}\sin(\alpha/2R)^2(x_1^2+x_2^2)+\cos(\alpha/2R)^2-\frac{x_1}{2R}\sin(\alpha/R).
\end{gathered}
\end{equation}
From the fact that $\cos(\alpha/2R)^2=\frac{1}{2}(1+\cos(\alpha/R))$ and $\sin(\alpha/2R)^2=\frac{1}{2}(1-\cos(\alpha/R))$, one rewrites the second relation 
\begin{equation}
\frac{1}{2}\big( (1-\cos(\alpha/R)\,\frac{1}{4R^2}(x_1^2+x_2^2)+(1+\cos(\alpha/R))-\frac{x_1}{R}\sin(\alpha/R)\big).
\end{equation}
This verifies that $(x_1', x_2')= \cS_{R,\alpha}(x_1,x_2)$. We write $N(w)=:c_1 w+ c_{\al}$, $D(w)=:c_0 w+c_1$, where the indices point to  the limits to which the coefficients tend as $R\to \infty$. Clearly, we have
\beqa
c_1=1+O(1/R^2), \quad c_{\al}=\al+O(1/R^2), \quad c_0=O(1/R^2).
\eeqa
Then, under the assumptions on the ranges of the parameter, stated in Remark~\ref{rem:translation},
\beqa
\fr{N(w)}{D(w)}=\fr{w+\alpha+O(1/R^2)}{1+O(1/R^2)}=w+\alpha+O(1/R^2),
\eeqa
which covers the case $a=0$ from this remark. Let us move on to the derivatives. Consider the expression
\beqa
F(w):=\fr{N(w)}{D(w)}-w-\alpha.
\eeqa
Let $\pa$ denote the derivative w.r.t. $x_1$ or $\ri x_2$.  We have
\begin{align}
\pa F(w)&=\fr{\pa N(w)}{D(w)}-\fr{(\pa D(w)) N(w)}{D(w)^2}-1\\
&= \fr{c_1}{c_0 w+c_1}
-\fr{c_0 (c_1 w+ c_{\al}) }{(c_0 w+c_1)^2}-1=\fr{c_0w}{c_0 w+c_1}-\fr{c_0 (c_1 w+ c_{\al}) }{(c_0 w+c_1)^2}
= O(1/R^2).
\end{align}
due to the factor $c_0$. We note that if we apply a second derivative to $F$, we will obtain two types of terms: First, those
involving $\pa D(w)$, hence $O(1/R^2)$. Second, those involving second derivatives of $N(w)$ or $D(w)$, hence zero. This reasoning extends to higher derivatives.

\afterpage{\blankpage}


\begin{thebibliography}{99}


\bibitem{Ai82}
M. Aizenman,  \emph{Geometric analysis of $\Phi^4$ fields and Ising models, I, II}, Commun. Math. Phys. {\bf(86)}, 1--48, (1982).


\bibitem{AD21}
M. Aizenman, H. Duminil-Copin, \emph{Marginal triviality of the scaling limits of critical 4D Ising and $\phi^4$ models}, Ann. of Math. (2) {\bf 194}, 163--235, (2021).


\bibitem{And13}
C.C. Anderson, \emph{Defining physics at imaginary time: Reflection positivity
for certain Riemannian manifolds}, (2013).




\bibitem{Aub98} T. Aubin, \emph{Some Nonlinear Problems in Riemannian Geometry}, Springer-Verlag Berlin Heidelberg, (1998).



\bibitem{AK17} S. Albeverio, S. Kusuoka,
\emph{The invariant measure and the flow associated to the $\Phi^4_3$ quantum field model},
Ann. Sc. Norm. Super. Pisa Cl. Sci., (5), 20, 1359--1427, 
\href{https://arxiv.org/pdf/1711.07108.pdf}{[arXiv:1711.07108]}.

\bibitem{AK21}  S. Albeverio, S. Kusuoka,
\emph{Construction of a non-Gaussian and rotation-invariant $\Phi^4$ measure and associated flow on $\bR^3$ through stochastic quantization},
\href{https://arxiv.org/pdf/2102.08040.pdf}{[arXiv:2102.08040]}.

\bibitem{ABRSS18} F. Arici, D. Becker, C. Ripken, F. Saueressig, W.D. van Suijlekom, \emph{Reflection positivity in higher derivative scalar theories}, J. Math. Phys. \textbf{59}, 082302, (2018).

\bibitem{Arfken}
G. B. Arfken, H. J. Weber, F. E. Harris, \emph{Mathematical Methods For Physicists}, A Comprehensive Guide, Seventh Edition, (2013).

\bibitem{AH12}
K. Atkinson, W. Han, \emph{Spherical harmonics and approximations on the unit sphere: an introduction}, Springer, (2012).





\bibitem{BCD11}
H. Bahouri, J-Y. Chemin, R. Danchin,
\emph{Fourier Analysis and Nonlinear Partial Differential Equations},
 Springer, (2011).


\bibitem{BarashkovSinh} N. Barashkov, F. C. De Vecchi,
\emph{Elliptic stochastic quantization of Sinh-Gordon QFT},
\href{https://arxiv.org/pdf/2108.12664.pdf}{[arXiv:2108.12664]}.


\bibitem{Bai23}
I. Bailleul, 
\emph{Uniqueness of the $\Phi_3^4$ measures on closed Riemannian 3-manifolds},
\href{https://arxiv.org/pdf/2306.07616}{[arXiv:2306.07616v2]}.


\bibitem{BDFLL}
I. Bailleul, N.V. Dang, L. Ferdinand, G. Leclerc, J. Lin,
\emph{Spectrally cut-off GFF, regularized $\Phi^4$ measure, and reflection positivity},
\href{https://arxiv.org/abs/2312.15511}{[arXiv:2312.15511]}.

\bibitem{BDFT1}
I. Bailleul, N.V. Dang, L. Ferdinand, T.D. T$\hat{o}$, 
\emph{Global harmonic analysis for $\Phi_3^4$ on closed Riemannian manifolds},
\href{https://arxiv.org/pdf/2306.07757}{[arXiv:2306.07757v3]}.



\bibitem{BDFT2}
I. Bailleul, N.V. Dang, L. Ferdinand, T.D. T$\hat{o}$, 
\emph{ $\Phi_3^4$ measures on compact Riemannian 3-manifolds},
\href{https://arxiv.org/pdf/2304.10185}{[arXiv:2304.1018]}.



\bibitem{BDH25}
R. Bauerschmidt, B. Dagallier, H. Weber, \emph{Holley–Stroock uniqueness method for the $\varphi^4_2$ dynamics}, \href{https://arxiv.org/abs/2504.08606}{[arXiv:2504.08606v1]}.

\bibitem{BZ81}
L. Baulieu, D. Zwanziger, \emph{Equivalence of stochastic quantization and the Faddeev-Popov Ansatz},
Nuclear Physics B,
{\bf193}(1), 163--172, (1981).

\bibitem{Berger1}
M. Berger, \emph{Geometry I}, translated from the French by M. Cole and S. Levy,
 Springer Berlin Heidelberg, (1987).
 
 \bibitem{Berger2}
M. Berger, \emph{Geometry II}, translated from the French by M. Cole and S. Levy,
 Springer Berlin Heidelberg, (1987).

\bibitem{Be01} W. Beckner, \emph{Sharp Sobolev inequalities on the sphere and the Moser-Trudinger
inequality}. Annals of Mathematics, \textbf{138}, 213--242, (1993).



\bibitem{BM01} 
H. Brezis, P. Mironescu, \emph{Gagliardo-Nirenberg, composition and products in fractional Sobolev spaces}, Journal of Evolution Equations, {\bf 1}(4),  387--404, (2001).




\bibitem{BFS83}
D. C. Brydges, J. Fr\"ohlich, A. D. Sokal,
\emph{A new proof of the existence and nontriviality
of the continuum and quantum field theories},
Commun. Math. Phys. {\bf 91}, 141--186, (1983).

\bibitem{Boga07}
V. I. Bogachev, \href{https://link.springer.com/book/10.1007/978-3-540-34514-5}{\emph{Measure Theory}}, Springer Berlin, Heidelberg, (2007).







\bibitem{Ca03}
S. M. Carroll,
\emph{Spacetime and Geometry: An Introduction to General Relativity}, Addison-Wesley, (2003).

\bibitem{CW17}
A. Chandra, H. Weber,
\href{https://www.numdam.org/article/AFST_2017_6_26_4_847_0.pdf}{
\emph{Stochastic PDEs, regularity structures and interacting particle systems}}, (2017).



\bibitem{DaPrato06}
G. Da Prato, \emph{An introduction to infinite dimensional-analysis}, Springer-Verlag Berlin Heidelberg, (2006).

\bibitem{daprato2003}
G. Da Prato, A. Debussche,
\emph{Strong solutions to the stochastic quantization equations},
Ann. Probab. {\bf 31}(4), 1900--1916, (2003).

\bibitem{DaZab96} 
G. Da Prato, J. Zabczyk, 
\emph{Ergodicity for infinite dimensional systems}, 
Cambridge University Press, (1996).

\bibitem{DaZab14} 
G. Da Prato, J. Zabczyk, 
\emph{Stochastic equations in infinite dimensions}
Cambridge University Press, Second Edition, (2014).

\bibitem{DX13}
F. Dai, Y. Xu, \emph{Approximation Theory and Harmonic Analysis on Spheres and Balls}, Springer New York, (2013).


\bibitem{DGR24}
J. Dereziński, C. Ga{\ss}, B. Ruba, \emph{Point potentials on Euclidean space,
hyperbolic space and sphere in any dimension},  Ann. Henri Poincar\'e {\bf 26}, 3477–3531 (2025). \href{https://doi.org/10.1007/s00023-024-01496-1}{https://doi.org/10.1007/s00023-024-01496-1}.

\bibitem{DiGenova} 
G. F. De Angelis, D. de Falco, G. Di Genova, \emph{Random fields on Riemannian manifolds: a constructive approach}, Commun. Math. Phys. {\bf 103}, 297-303, (1986).

\bibitem{dimock2004}
J. Dimock,
\emph{Markov quantum fields on a manifold},
Rev. Math. Phys. {\bf 16}(02), 243--255, \href{https://arxiv.org/pdf/math-ph/0305017.pdf}{[arXiv:math-ph/0305017]}.

\bibitem{DOR}
R. Dijkgraaf, D. Orlando, S. Reffert, \emph{Relating field theories via stochastic quantization},  \href{https://arxiv.org/abs/0903.0732}{[arXiv:0903.0732v2]}.



\bibitem{DH87}
P. H. Damgaard, H. H\"uffel, \emph{Stochastic quantization}, Physics Reports (Review Section of Physics Letters). {\bf 152}, 5 \& 6, 227--398, (1987).

\bibitem{Du21}
P. Duch, \emph{Flow equation approach to singular stochastic PDEs}, Probability and Mathematical Physics {\bf 6}, 327-437, (2025).


\bibitem{DDJ} 
P. Duch, W. Dybalski, A. Jahandideh, Stochastic quantization of two-dimensional $P(\Phi)$
 quantum field theory, Ann. Henri Poincar\'e \textbf{26}, 1055–1086, \href{https://doi.org/10.1007/s00023-024-01447-w}{https://doi.org/10.1007/s00023-024-01447-w}.


\bibitem{DGR23}
P. Duch, M. Gubinelli, P. Rinaldi, \emph{Parabolic stochastic quantisation of the fractional $\Phi^4_3$ model
in the full subcritical regime}, \href{https://arxiv.org/abs/2303.18112}{[arXiv:2303.18112]}.

\bibitem{DHYZ25}
P. Duch, M. Hairer, J. Yi, and W. Zhao,
\emph{Ergodicity of infinite volume $\Phi^4_3$ at high temperature}, \href{https://arxiv.org/abs/2508.07776v1}{[arXiv:2508.07776v1]}.




\bibitem{DS26}
R. C. Dalang, M. Sanz-Sole, \emph{ Stochastic Partial Differential Equations, Space-time White
Noise and Random Fields}, \href{https://doi.org/10.1007/978-3-032-01650-8}{https://doi.org/10.1007/978-3-032-01650-8}.
\bibitem{EE79} J.-P. Eckmann and H. Epstein, \emph{Time-Ordered Products and Schwinger Functions}, Commun. Math. Phys. \textbf{64}, 95--130, (1979). 


\bibitem{EdmundsTriebel} 
D. E. Edmunds, H. Triebel, 
\emph{Function spaces, entropy numbers, differential operators},
Cambridge University Press, (1996).

\bibitem{ISEM27}
M. Egert, R. Haller, S. Monniaux, P. Tolksdorf,
\href{https://www.mathematik.tu-darmstadt.de/media/analysis/lehrmaterial_anapde/ISem_complete_lecture_notes.pdf}{\emph{Harmonic Analysis Techniques for Elliptic Operators}},
27th International Internet Seminar, (2023--2024).



\bibitem{ethier2005}
S. N. Ethier, T.G. Kurtz, T,
\emph{Markov Processes: Characterization and Convergence},
Wiley, (2005).



\bibitem{Fr82}
J. Fr\"ohlich, \emph{On the triviality of $\phi^4_d$ models and the approach to the critical point in $d\ge 4$ dimensions}, Nucl. Phys. B {\bf 200}, 281--296, (1982).


\bibitem{Fe74} 
J. Feldman, \emph{The $\phi^4_3$ field theory in a finite volume}, Commun. Math. Phys. {\bf 37}, 93--120, (1974).

 
\bibitem{FO76} 
J. S. Feldman, K. Osterwalder, \emph{The Wightman axioms and the mass gap for weakly coupled $\phi^4_3$ quantum field theories}, Ann. Phys. {\bf 97}, 80--135, (1976).



\bibitem{FL83}
E. Floratos, J. Iliopoulos, \emph{Equivalence of stochastic and canonical quantisation in perturbation theory}, Nuclear Physics B, Volume 214, Issue 3, 392--404, (1983).



\bibitem{GJS74}
J. Glimm, A. Jaffe, T. Spencer, 
\emph{The Wightman axioms and particle structure in the $P(\Phi)_2$ quantum field model},
Ann. Math., {\bf 100}, 585--632, (1974).

\bibitem{GJ85}
J. Glimm, A. Jaffe, 
\emph{Quantum Field Theory and Statistical Mechanics: Expositions},
Birkh{\"a}user, (1985).


\bibitem{GlimmJaffe} J. Glimm, A. Jaffe,
\emph{Quantum Physics, A functional integral point of view},
Springer, (1987).

\bibitem{Glim68} J. Glimm, \emph{Boson fields with the $\Phi^4$ interaction in three dimensions. Communications in Mathematical Physics}, 10:1–47, (1968).


\bibitem{GJ73}J. Glimm, A. Jaffe,
\emph{Positivity of the $\Phi^4_3$ Hamiltonian}, Fortschritte der Physik. Progress of Physics, 21, 327–376, (1973).


\bibitem{Go}
N. Goldenfeld, \href{https://doi.org/10.1201/9780429493492}{\emph{Lectures on phase transitions and the renormalization group}}, CRC Press, Taylor \& Francis Group, (2018).


\bibitem{GH18} M. Gubinelli, M. Hofmanov$\acute{a}$,
\emph{Global Solutions to Elliptic and Parabolic $\Phi^4$ Models in Euclidean Space}, Commun. Math. Phys. {\bf368}, 1201–1266, \href{https://arxiv.org/abs/1804.11253}{[arXiv:1804.11253]}.

\bibitem{GH21}
M. Gubinelli, M. Hofmanov{\'a},
\emph{A PDE construction of the Euclidean $\Phi^4_3$ QFT},
Commun. Math. Phys., {\bf 382}, 1--75,
\href{https://arxiv.org/pdf/1810.01700.pdf}{[arXiv:1810.01700]}.

\bibitem{Gub21} M. Gubinelli, \emph{Lectures on Stochastic Quantization of $\phi^4_3$ – Part II: Stochastic quantization}.



\bibitem{Gri09}
A. Grigoryan, \emph{Heat Kernel and Analysis on Manifolds}, University of Bielefeld, Bielefeld, Germany,  (2009).





\bibitem{Hai14} 
M. Hairer, \emph{A theory of regularity structures}, Inventiones mathematicae, {\bf198}(2), 269--504, March, (2014).



\bibitem{Hai23}
M. Hairer, \emph{An Introduction to Stochastic PDEs},
\href{https://arxiv.org/abs/0907.4178}{[arXiv:0907.4178v2]}.


\bibitem{Hairer-Malliavin} M. Hairer,  \href{https://hairer.org/notes/Malliavin.pdf}{\emph{Introduction to Malliavin Calculus}}, Lecture notes, (2021).

\bibitem{HairerSteele} M. Hairer, R. Steele,
\emph{The $\Phi^4_3$ measure has sub-Gaussian tails}, J. Stat. Phys., {\bf 186}, 38, 
 \href{https://arxiv.org/pdf/2102.11685.pdf}{[arXiv:2102.11685]}.


\bibitem{Ha96}
R. Haag, \emph{Local Quantum Physics}, Springer Berlin, Heidelberg, \href{https://doi.org/10.1007/978-3-642-61458-3}{https://doi.org/10.1007/978-3-642-61458-3}.




\bibitem{JP23} A. Jagannath, N. Perkowski, \emph{A simple construction of the dynamical $\Phi^4_3$ model}, Trans. Amer. Math. Soc., {\bf376}(3), 1507--1522, (2023).




\bibitem{SJ97}
S. Janson,
\emph{Gaussian Hilbert Spaces},
Cambridge Topics in Mathematics,
Cambridge University Press, (1997).



\bibitem{Ku16}
A. Kupiainen, \emph{Renormalization group and stochastic PDE's}, Ann. Henri Poincaré {\bf 17}, 497--535, (2016).






\bibitem{Le11}
T. L\'evy,
\emph{Topological quantum field theories and Markovian random fields}, Bulletin des Sciences Math\'ematiques, {\bf 135}, Issues 6–7, 629--649, (2011).


\bibitem{Le97} I. M. Lee, \emph{Riemannian manifolds}, Graduate Texts in Mathematics, Vol. 176, Springer-Verlag, New York, (1997).



\bibitem{Li58}
A. Lichnerowicz, 
\emph{G\'eom\'etrie des groupes de transformations}, Travaux et Recherches Math\'ematiques, III, (1958).


\bibitem{Li09}
P. Li, \emph{Geometric Analysis}, Cambridge University Press, \href{https://doi.org/10.1017/CBO9781139105798}{
https://doi.org/10.1017/CBO9781139105798}.






\bibitem{MS76} 
J. Magnen, R. S\'en\'eor. \emph{The infinite volume limit of the $\phi^4_3$ model}, Ann. Inst. Henri Poincar\'e {\bf 24}, 95--159, (1976).



\bibitem{MW17a}
J. C. Mourrat, H. Weber, \emph{The dynamic $\phi^4_3$ model comes down from infinity}.
Comm. Math. Phys., {\bf 356}(3), 673--753, (2017). 


\bibitem{Muscalu}
C. Muscalu, W. Schlag, \emph{Classical and Multilinear Harmonic Analysis}, 
Cambridge University Press, (2013).




\bibitem{NO18}
K. H. Neeb, G. Olafsson, \emph{Reflection Positivity-A Representation Theoretic Perspective}, \href{https://arxiv.org/abs/1802.09037v1}{[arXiv:1802.09037v1]}.

\bibitem{Nelson}
E. Nelson,
\emph{A quartic interaction in two dimensions}, 
in Mathematical Theory of Elementary Particles
(Proc. Conf., Dedham, Mass., 1965), 69--73, M.I.T. Press, (1966).


\bibitem{Nualart}
D. Nualart, 
\emph{The Malliavin calculus and related topics},
Springer-Verlag, (2006).

\bibitem{OS75}
K. Osterwalder, R. Schrader,
\emph{Axioms for Euclidean Green's functions. II},
Commun. Math. Phys., {\bf 42}, 281--305, (1975).



\bibitem{parisi1981}
G. Parisi, Y. S. Wu, 
\emph{Perturbation theory without gauge fixing},
Sci. Sinica, {\bf 24}(4), 483--496, (1981).


\bibitem{PR87}
T. Parker, S. Rosenberg, \emph{Invariants of conformal Laplacians}, J. Differential Geom. {\bf 25}(2), 199--222 (1987). 


\bibitem{PT11}
G. Peccati, M. S. Taqqu,
\emph{Wiener Chaos: Moments, Cumulants and Diagrams,
A survey with computer implementation}, (2011).


\bibitem{PR07}
C. Pr\'ev$\hat{o}$t, M. R\"ockner,
\emph{A Concise Course
on Stochastic Partial Differential Equations}, Springer-Verlag Berlin Heidelberg, (2007).



\bibitem{PZ07}
S. Peszat, J. Zabczyk, \emph{Stochastic Partial Diﬀerential Equations with L\'evy
Noise}, Encyclopedia of Mathematics, Cambridge University Press, (2007).



\bibitem{RR73}
M. Reed, L. Rosen,
\emph{Support Properties of the Free Measure
for Boson Fields},
Commun. math. Phys. {\bf 36}, 123--132, (1974).


\bibitem{RSII}
 M.~Reed, B.~Simon,
\emph{Methods of Modern Mathematical Physics, Vol.~II: Fourier Analysis, Self-Adjointness},
Academic Press, New York, (1975).


\bibitem{RSI}
 M.~Reed, B.~Simon,
\emph{Methods of Modern Mathematical Physics, Vol.~I: Functional Analysis},
Academic Press, New York, (1975).

\bibitem{Res15} S. I. Resnick, \emph{A Probability Path},
Birkh\"auser Boston,~MA, \href{https://doi.org/10.1007/978-0-8176-8409-9}{https://doi.org/10.1007/978-0-8176-8409-9}.

\bibitem{Rudin91}
W. Rudin,
\emph{Functional Analysis}, (international series in pure and applied mathematics), Second Edition, (1991).

\bibitem{Rudin87}
W. Rudin, \emph{Real and Complex Analysis}, Third Edition, McGraw-Hill Book Company, (1987).

\bibitem{RZZ17}
M. R\"ockner, R. Zhu, X. Zhu, \emph{Restricted Markov uniqueness for the stochastic quantization of $P(\Phi)_2$ and its applications}, J. Funct. Anal., {\bf 272}(10), 4263--4303, (2017).



\bibitem{YS18} Y. Sawano, \emph{Theory of Besov Spaces, Springer Singapore},
\href{https://link.springer.com/book/10.1007/978-981-13-0836-9}{https://doi.org/10.1007/978-981-13-0836-9}.





\bibitem{STV18}
H. Schlichtkrull, P. Trapa, D. A. Vogan Jr. \emph{Laplacians on spheres}, \href{https://arxiv.org/abs/1803.01267}{[arXiv:1803.01267v2]}.
\bibitem{Sch08}
M. Schottenloher, \emph{A Mathematical Introduction to Conformal Field Theory}, 
Springer-Verlag Berlin Heidelberg, (2008).


\bibitem{Sh04}
I. Shigekawa, \emph{Stochastic analysis}, Translations of Mathematical Monographs, 
Iwanami Series in Modern Mathematics, (2004).



\bibitem{SRI}
B. Simon, M. C. Reed,
\emph{Methods of Modern Mathematical Physics, Functional Analysis}, Revised, Academic Press,
  (1980).
 
 \bibitem{SRII}
B. Simon, M. C. Reed,
\emph{Fourier analysis, Self-adjointness}, Academic Press,  (1976).
 
 \bibitem{SZZ23}
H. Shen, R. Zhu, X. Zhu,
\emph{An SPDE approach to perturbation theory of $\Phi^4_2$: asymptoticity and short distance behavior}, \href{https://arxiv.org/abs/2108.11312v2}{[arXiv:2108.11312v2]}.



\bibitem{SW80}
R. Streater, A. Wightman, \emph{PCT, Spin and Statistics, and All That}, Princeton University Press,  (1980).



\bibitem{St83}
R. S. Strichartz, \emph{Analysis of the Laplacian on the complete Riemannian manifold}, Journal of Functional 
Analysis {\bf 52}, 48--79, (1983).








\bibitem{TaylorII}
M. E. Taylor, \emph{Partial Differential Equations II}, Second Edition, Springer-Verlag, 
New York, (1996).

\bibitem{TaylorIII}
M. E. Taylor, \emph{Partial Differential Equations III}, Second Edition, Springer-Verlag, New York,  (1996).


\bibitem{Triebel1} H. Triebel, 
\emph{Theory of Function Spaces},
Birkh{\"a}user,  (1983).

\bibitem{Triebel2} H. Triebel, 
\emph{Theory of Function Spaces II},
Birkh{\"a}user,  (1992).

\bibitem{Triebel3} H. Triebel,
\emph{Theory of Function Spaces III},
Birkh{\"a}user, (2006).





\bibitem{VTC87}
N.~N.~Vakhania, V.~I.~Tarieladze, S.~A.~Chobanyan,
\textit{Probability Distributions on Banach Spaces},
D.~Reidel, Dordrecht, \href{https://doi.org/10.1007/978-94-009-3873-1}{https://doi.org/10.1007/978-94-009-3873-1}.


\bibitem{Varopoulos}
N. Varopoulos, L. Saloff-Coste, T. Coulhon,
\emph{Analysis and Geometry on Groups},
Cambridge University Press, (1992).



\bibitem{Wil91}
D. Williams, \emph{Probability with martingales}, Cambridge Mathematical Textbooks, Cambridge University Press, Cambridge, (1991).








\bibitem{Zab89}
J. Zabczyk, \emph{Symmetric solutions of semilinear stochastic equations}, In: Da Prato, G., Tubaro, L. (eds) Stochastic Partial Differential Equations and Applications II, Lecture Notes in Mathematics, vol 1390. Springer, Berlin, Heidelberg, \href{https://doi.org/10.1007/BFb0083952}{https://doi.org/10.1007/BFb0083952}.



\end{thebibliography}
\end{document}